\documentclass[a4paper,UKenglish,cleveref, autoref, thm-restate]{lipics-v2021}
\nolinenumbers
\usepackage{hyperref}
\usepackage{xcolor}
\usepackage{xspace}
\usepackage{amsmath,amsthm,amssymb}
\usepackage{bm}
\usepackage{stmaryrd}
\usepackage{cancel}
\usepackage{hyperref}
\usepackage{adjustbox}

\usepackage{bbm} %% mathbb for lowercase

\usepackage{cleveref}
\usepackage{cmll}
\usepackage{tikz-cd}
\usepackage[strict]{changepage}

\usepackage{thmtools, thm-restate} %% for restating thms in appendix
\newcommand{\proofnote}[1]{\textcolor{red}{{\small \textsc{[#1]}}}}

\usepackage{enumitem}
\setlistdepth{9}
\newlist{xenumerate}{enumerate}{12}
\setlist[xenumerate,1]{label*=\arabic*.}
\setlist[xenumerate,2]{label*=\arabic*.}
\setlist[xenumerate,3]{label*=\arabic*.}
\setlist[xenumerate,4]{label*=\arabic*.}
\setlist[xenumerate,5]{label*=\arabic*.}
\setlist[xenumerate,6]{label*=\arabic*.}
\setlist[xenumerate,7]{label*=\arabic*.}
\setlist[xenumerate,8]{label*=\arabic*.}
\setlist[xenumerate,9]{label*=\arabic*.}
\setlist[xenumerate,10]{label*=\arabic*.}
\setlist[xenumerate,11]{label*=\arabic*.}
\setlist[xenumerate,12]{label*=\arabic*.}

\colorlet{darkgreen}{green!60!black}
\definecolor{deepcarrotorange}{rgb}{0.91, 0.41, 0.17}

\newdimen\proofrulebreadth \proofrulebreadth=.05em
\newdimen\proofdotseparation \proofdotseparation=1.25ex
\newdimen\proofrulebaseline \proofrulebaseline=2ex
\newcount\proofdotnumber \proofdotnumber=3
\let\then\relax
\def\hfi{\hskip0pt plus.0001fil}
\mathchardef\squigto="3A3B
\newif\ifinsideprooftree\insideprooftreefalse
\newif\ifonleftofproofrule\onleftofproofrulefalse
\newif\ifproofdots\proofdotsfalse
\newif\ifdoubleproof\doubleprooffalse
\let\wereinproofbit\relax
\newdimen\shortenproofleft
\newdimen\shortenproofright
\newdimen\proofbelowshift
\newbox\proofabove
\newbox\proofbelow
\newbox\proofrulename
\def\shiftproofbelow{\let\next\relax\afterassignment\setshiftproofbelow\dimen0 }
\def\shiftproofbelowneg{\def\next{\multiply\dimen0 by-1 }%
\afterassignment\setshiftproofbelow\dimen0 }
\def\setshiftproofbelow{\next\proofbelowshift=\dimen0 }
\def\setproofrulebreadth{\proofrulebreadth}

\def\prooftree{% NESTED ZERO (\ifonleftofproofrule)
\ifnum  \lastpenalty=1
\then   \unpenalty
\else   \onleftofproofrulefalse
\fi
\ifonleftofproofrule
\else   \ifinsideprooftree
        \then   \hskip.5em plus1fil
        \fi
\fi
\bgroup% NESTED ONE (\proofbelow, \proofrulename, \proofabove,
\setbox\proofbelow=\hbox{}\setbox\proofrulename=\hbox{}%
\let\justifies\proofover\let\leadsto\proofoverdots\let\Justifies\proofoverdbl
\let\using\proofusing\let\[\prooftree
\ifinsideprooftree\let\]\endprooftree\fi
\proofdotsfalse\doubleprooffalse
\let\thickness\setproofrulebreadth
\let\shiftright\shiftproofbelow \let\shift\shiftproofbelow
\let\shiftleft\shiftproofbelowneg
\let\ifwasinsideprooftree\ifinsideprooftree
\insideprooftreetrue
\setbox\proofabove=\hbox\bgroup$\displaystyle % NESTED TWO
\let\wereinproofbit\prooftree
\shortenproofleft=0pt \shortenproofright=0pt \proofbelowshift=0pt
\onleftofproofruletrue\penalty1
}

\def\eproofbit{% NESTED TWO
\ifx    \wereinproofbit\prooftree
\then   \ifcase \lastpenalty
        \then   \shortenproofright=0pt  % 0: some other object, no indentation
        \or     \unpenalty\hfil         % 1: empty hypotheses, just glue
        \or     \unpenalty\unskip       % 2: just had a tree, remove glue
        \else   \shortenproofright=0pt  % eh?
        \fi
\fi
\global\dimen0=\shortenproofleft
\global\dimen1=\shortenproofright
\global\dimen2=\proofrulebreadth
\global\dimen3=\proofbelowshift
\global\dimen4=\proofdotseparation
\global\count255=\proofdotnumber
$\egroup  % NESTED ONE
\shortenproofleft=\dimen0
\shortenproofright=\dimen1
\proofrulebreadth=\dimen2
\proofbelowshift=\dimen3
\proofdotseparation=\dimen4
\proofdotnumber=\count255
}

\def\proofover{% NESTED TWO
\eproofbit % NESTED ONE
\setbox\proofbelow=\hbox\bgroup % NESTED TWO
\let\wereinproofbit\proofover
$\displaystyle
}%
\def\proofoverdbl{% NESTED TWO
\eproofbit % NESTED ONE
\doubleprooftrue
\setbox\proofbelow=\hbox\bgroup % NESTED TWO
\let\wereinproofbit\proofoverdbl
$\displaystyle
}%
\def\proofoverdots{% NESTED TWO
\eproofbit % NESTED ONE
\proofdotstrue
\setbox\proofbelow=\hbox\bgroup % NESTED TWO
\let\wereinproofbit\proofoverdots
$\displaystyle
}%
\def\proofusing{% NESTED TWO
\eproofbit % NESTED ONE
\setbox\proofrulename=\hbox\bgroup % NESTED TWO
\let\wereinproofbit\proofusing
\kern0.3em$
}

\def\endprooftree{% NESTED TWO
\eproofbit % NESTED ONE
  \dimen5 =0pt% spread of hypotheses
\dimen0=\wd\proofabove \advance\dimen0-\shortenproofleft
\advance\dimen0-\shortenproofright
\dimen1=.5\dimen0 \advance\dimen1-.5\wd\proofbelow
\dimen4=\dimen1
\advance\dimen1\proofbelowshift \advance\dimen4-\proofbelowshift
\ifdim  \dimen1<0pt
\then   \advance\shortenproofleft\dimen1
        \advance\dimen0-\dimen1
        \dimen1=0pt
        \ifdim  \shortenproofleft<0pt
        \then   \setbox\proofabove=\hbox{%
                        \kern-\shortenproofleft\unhbox\proofabove}%
                \shortenproofleft=0pt
        \fi
\fi
\ifdim  \dimen4<0pt
\then   \advance\shortenproofright\dimen4
        \advance\dimen0-\dimen4
        \dimen4=0pt
\fi
\ifdim  \shortenproofright<\wd\proofrulename
\then   \shortenproofright=\wd\proofrulename
\fi
\dimen2=\shortenproofleft \advance\dimen2 by\dimen1
\dimen3=\shortenproofright\advance\dimen3 by\dimen4
\ifproofdots
\then
        \dimen6=\shortenproofleft \advance\dimen6 .5\dimen0
        \setbox1=\vbox to\proofdotseparation{\vss\hbox{$\cdot$}\vss}%
        \setbox0=\hbox{%
                \advance\dimen6-.5\wd1
                \kern\dimen6
                $\vcenter to\proofdotnumber\proofdotseparation
                        {\leaders\box1\vfill}$%
                \unhbox\proofrulename}%
\else   \dimen6=\fontdimen22\the\textfont2 % height of maths axis
        \dimen7=\dimen6
        \advance\dimen6by.5\proofrulebreadth
        \advance\dimen7by-.5\proofrulebreadth
        \setbox0=\hbox{%
                \kern\shortenproofleft
                \ifdoubleproof
                \then   \hbox to\dimen0{%
                        $\mathsurround0pt\mathord=\mkern-6mu%
                        \cleaders\hbox{$\mkern-2mu=\mkern-2mu$}\hfill
                        \mkern-6mu\mathord=$}%
                \else   \vrule height\dimen6 depth-\dimen7 width\dimen0
                \fi
                \unhbox\proofrulename}%
        \ht0=\dimen6 \dp0=-\dimen7
\fi
\let\doll\relax
\ifwasinsideprooftree
\then   \let\VBOX\vbox
\else   \ifmmode\else$\let\doll=$\fi
        \let\VBOX\vcenter
\fi
\VBOX   {\baselineskip\proofrulebaseline \lineskip.2ex
        \expandafter\lineskiplimit\ifproofdots0ex\else-0.6ex\fi
        \hbox   spread\dimen5   {\hfi\unhbox\proofabove\hfi}%
        \hbox{\box0}%
        \hbox   {\kern\dimen2 \box\proofbelow}}\doll%
\global\dimen2=\dimen2
\global\dimen3=\dimen3
\egroup % NESTED ZERO
\ifonleftofproofrule
\then   \shortenproofleft=\dimen2
\fi
\shortenproofright=\dimen3
\onleftofproofrulefalse
\ifinsideprooftree
\then   \hskip.5em plus 1fil \penalty2
\fi
}

\newenvironment{bprooftree}
 {\begin{adjustbox}{raise=\depth}\begin{prooftree}}
     {\end{prooftree}\end{adjustbox}}
 
\newcommand{\emptyPremise}{\vphantom{{}^@}}
\newcommand{\indrulename}[1]{\textsc{#1}}
\newcommand{\indrule}[3]{
\ensuremath{
\begin{array}{c}
  \prooftree #2
    \justifies #3
    \thickness=0.05em
    \using \indrulename{#1}
  \endprooftree
\end{array}}}

\newcommand{\indih}[1]{
  \begin{array}{c}
    \text{\ih}
  \\
    #1
  \end{array}
}

\newcommand{\indruledbl}[3]{
\ensuremath{
\begin{array}{c}
  \prooftree #2
    \Justifies #3
    \thickness=0.05em
    \using \indrulename{#1}
  \endprooftree
\end{array}}}

\newcommand{\stkout}[1]{\ifmmode{\text{\sout{\ensuremath{#1}}}}\else{\sout{#1}}\fi}

\renewcommand{\emptyset}{\varnothing}
\newcommand{\Nat}{\mathbb{N}}
\newcommand{\NotNow}[1]{}

\newcommand{\HS}{\quad}
\newcommand{\ST}{\ |\ }
\newcommand{\ie}{{\em i.e.}\xspace}
\newcommand{\eg}{{\em e.g.}\xspace}
\newcommand{\Eg}{{\em E.g.}\xspace}
\newcommand{\etal}{et al.\xspace}
\newcommand{\cf}{{\em cf.}\xspace}
\newcommand{\ih}{IH\xspace}
\newcommand{\set}[1]{\{#1\}}
\newcommand{\eqdef}{:=}%\,\mathrel{\overset{\mathrm{def}}{=}}\,}

\newcommand{\fv}[1]{\mathsf{fv}(#1)}
\newcommand{\flv}[1]{\mathsf{flv}(#1)}

\newcommand{\defn}[1]{\textbf{#1}}

\newcommand{\llem}[1]{\label{lemma:#1}}
\newcommand{\rlem}[1]{Lem.~\ref{lemma:#1}}
\newcommand{\ldef}[1]{\label{def:#1}}
\newcommand{\rdef}[1]{Def.~\ref{def:#1}}
\newcommand{\lprop}[1]{\label{prop:#1}}

\newcommand{\lthm}[1]{\label{thm:#1}}
\newcommand{\rthm}[1]{Thm.~\ref{thm:#1}}
\newcommand{\lremark}[1]{\label{remark:#1}}

\newcommand{\lsec}[1]{\label{section:#1}}

\renewcommand{\arg}{(\cdot)}

\newcommand{\MLL}{\textsf{\textup{MLL}}\xspace}
\newcommand{\IMLL}{\textsf{\textup{IMLL}}\xspace}
\newcommand{\MELL}{\textsf{\textup{MELL}}\xspace}
\newcommand{\LL}{\textsf{\textup{LL}}\xspace}
\newcommand{\CalcMLL}{\ensuremath{\lambda_{\MLL}}\xspace}
\newcommand{\CalcMELL}{\ensuremath{\lambda_{\MELL}}\xspace}

\newcommand{\pretome}{\to_{\mathsf{\CalcMELL}}^{-}}
\newcommand{\tome}{\to_{\mathsf{\CalcMELL}}}
\newcommand{\rttome}{\twoheadrightarrow_{\mathsf{\CalcMELL}}}
\newcommand{\rtpretome}{\twoheadrightarrow_{\mathsf{\CalcMELL}}^{-}}
\newcommand{\toax}[1]{\mapsto_{#1}}

\newcommand{\eqme}{\doteq_{\mathsf{\CalcMELL}}}

\newcommand{\redrulename}[1]{\mathsf{#1}}
\newcommand{\raxlamL}{\redrulename{{\beta}{\lambda}{L}}}
\newcommand{\raxlamR}{\redrulename{{\beta}{\lambda}{R}}}
\newcommand{\raxparL}{\redrulename{{\beta}{\parr}{L}}}
\newcommand{\raxparR}{\redrulename{{\beta}{\parr}{R}}}
\newcommand{\raxtensor}{\redrulename{{\beta}{\tensor}}} 
\newcommand{\raxofc}{\redrulename{{\beta}{\ofc{}}}}
\newcommand{\raxwhy}{\redrulename{{\beta}{\why{}}}}
\newcommand{\raxpartensor}{\redrulename{{\parr}{\tensor}}}
\newcommand{\raxtensorpar}{\redrulename{{\tensor}{\parr}}}
\newcommand{\raxofcwhy}{\redrulename{{\ofc{}}{\why{}}}}
\newcommand{\raxwhyofc}{\redrulename{{\why{}}{\ofc{}}}}

\newcommand{\jull}[1]{\vdash{#1}}

\newcommand{\jum}[3]{#1\vdash#2:#3}
\newcommand{\djum}[4]{#1;#2\vdash#3:#4}
\newcommand{\djumnt}[3]{#1;#2\vdash#3}

\newcommand{\emptytenv}{\cdot}
\newcommand{\tenv}{\Gamma}
\newcommand{\tenvtwo}{\Gamma'}
\newcommand{\tenvthree}{\Gamma''}

\newcommand{\emptyutenv}{\cdot}
\newcommand{\utenv}{\Delta}

\newcommand{\lvar}{a}
\newcommand{\lvartwo}{b}
\newcommand{\lvarthree}{c}
\newcommand{\lvarfour}{d}

\newcommand{\uvar}{u}
\newcommand{\uvartwo}{v}
\newcommand{\uvarthree}{w}

\newcommand{\var}{x}
\newcommand{\vartwo}{y}

\newcommand{\tm}{t}
\newcommand{\tmtwo}{s}
\newcommand{\tmthree}{r}
\newcommand{\tmfour}{p}

\newcommand{\sub}[2]{\textcolor{blue}{\bm{\{}}#1:=#2\textcolor{blue}{\bm{\}}}}
\renewcommand{\cos}[2]{\textcolor{red}{\bm{\{}}#1\backslash\!\!\backslash#2\textcolor{red}{\bm{\}}}}

\newcommand{\btyp}{\alpha}
\newcommand{\btyptwo}{\beta}

\newcommand{\nbtyp}{\overline{\btyp}}
\newcommand{\nbtyptwo}{\overline{\btyptwo}}

\newcommand{\typ}{A}
\newcommand{\typtwo}{B}
\newcommand{\typthree}{C}
\newcommand{\typfour}{D}

\newcommand*{\boldone}{\text{\usefont{U}{bbold}{m}{n}1}}

\newcommand{\bott}{{\perp}}
\newcommand{\one}{\boldone}
\newcommand{\tensor}{\otimes}
\newcommand{\limp}{\multimap}
\newcommand{\lneg}[1]{#1^{\perp}}
\newcommand{\ofc}[1]{{!}{#1}}
\newcommand{\why}[1]{{?}#1}

\newcommand{\unit}{\star}
\newcommand{\lam}[2]{\lambda#1.\,#2}
\newcommand{\pair}[2]{\langle#1,#2\rangle}
\newcommand{\epair}[3]{\textcolor{red}{\bm{[}}\pair{#1}{#2}:=#3\textcolor{red}{\bm{]}}}
\newcommand{\casepair}[4]{{#4}\epair{#2}{#3}{#1}}
\newcommand{\ap}[2]{#1\,\textcolor{blue}{\bullet}\,#2}
\newcommand{\invap}[2]{#1\textcolor{red}{\blacktriangledown}#2}
\newcommand{\iofctwo}[2]{\ofc{#1.#2}}

\newcommand{\ione}{\star}
\newcommand{\eone}[1]{\textcolor{red}{\bm{[}}\star:=#1\textcolor{red}{\bm{]}}}

\newcommand{\rullOne}{\indrulename{l-$\one$}}
\newcommand{\rullBott}{\indrulename{l-$\bott$}}
\newcommand{\rullAx}{\indrulename{l-ax}}
\newcommand{\rullCut}{\indrulename{l-cut}}
\newcommand{\rullTensor}{\indrulename{l-$\tensor$}}
\newcommand{\rullLimp}{\indrulename{l-$\limp$}}
\newcommand{\rullPar}{\indrulename{l-$\parr$}}
\newcommand{\rullP}{\indrulename{l-!p}}
\newcommand{\rullW}{\indrulename{l-?w}}
\newcommand{\rullD}{\indrulename{l-?d}}
\newcommand{\rullC}{\indrulename{l-?c}}
\newcommand{\rulmAx}{\indrulename{m-ax}}
\newcommand{\rulmITensor}{\indrulename{m-i$\tensor$}}
\newcommand{\rulmETensor}{\indrulename{m-e$\tensor$}}
\newcommand{\rulmILimp}{\indrulename{m-i$\limp$}}
\newcommand{\rulmELimpOne}{\indrulename{m-e$\limp_1$}}
\newcommand{\rulmELimpTwo}{\indrulename{m-e$\limp_2$}}
\newcommand{\rulmIPar}{\indrulename{m-i$\parr$}}
\newcommand{\rulmEParOne}{\indrulename{m-e$\parr_1$}}
\newcommand{\rulmEParTwo}{\indrulename{m-e$\parr_2$}}
\newcommand{\rulmIOne}{\indrulename{m-i$\one$}}
\newcommand{\rulmEOne}{\indrulename{m-e$\one$}}
\newcommand{\rulmAxU}{\indrulename{m-uax}}
\newcommand{\rulmIOfc}{\indrulename{m-i$\ofc{}$}}
\newcommand{\rulmEOfc}{\indrulename{m-e$\ofc{}$}}
\newcommand{\rulmEWhy}{\indrulename{m-e$\why{}$}}

\newcommand{\rulmSub}{\indrulename{m-sub}}
\newcommand{\rulmUSub}{\indrulename{m-Usub}}
\newcommand{\rulmCos}{\indrulename{m-cos}}

\newcommand{\size}[1]{\mathsf{sz}(#1)}

\newcommand{\cceq}{\equiv}
\newcommand{\cceqalt}{\cceq^{\star}}

\newcommand{\seq}[1]{\overline{#1}}

\newcommand{\cctxhole}{\Box}
\newcommand{\cctx}{\mathsf{L}}
\newcommand{\cctxtwo}{\mathsf{K}}

\newcommand{\of}[2]{{#1}\bm{\langle}{#2}\bm{\rangle}}
\newcommand{\dom}[1]{\mathsf{dom}(#1)}

\newcommand{\ctxhole}{\Box}
\newcommand{\ctx}{\mathsf{C}}

\newcommand{\ruleCCEqOneLUnitForBott}{\mathsf{Lunit\_for\_\ibott{}{}}}

\newcommand{\ruleCCCosOneSimplLeft} {\mathsf{symm}}

\newcommand{\ruleCCEqBetaOne} {\beta\mathsf{\ione}}
\newcommand{\ruleCCEqCosAndSub} {\mathsf{split}}

\newcommand{\ruleCCEqCtx}{\mathsf{Prop}}
\newcommand{\ruleCCEqiBottSym}{\mathsf{Symm\_for\_\ibott{}{}}}
\newcommand{\ruleCCEqEOneSymm}{\mathsf{\mathsf{Symm}\ione}}

\newcommand{\ruleCCEqApDualToParElimTwo}{\mathsf{E\parr_1/E\parr_2}}
\newcommand{\ruleCCEqPairDualToParElimTwo}{\mathsf{I\tensor/E\parr_2}}

\newcommand{\ruleCCEqBottDualToEone}{\mathsf{I\perp/E\ione}}
\newcommand{\ruleCCEqIwhyDualToEofc}{\mathsf{I\why/E\ofc{}}}

\newcommand{\ruleCCEqIParDualToEPair}{\mathsf{I\parr/E\tensor}}

\newcommand{\ruleCCEqInvapDualToPair}{\mathsf{E\parr_2/I\tensor}}
\newcommand{\ruleCCEqInvapDualToAp}{\mathsf{E\parr_2/I\parr}}

\newcommand{\ruleCCEqIofcDualToEwhy}{\mathsf{I\ofc{}/E\why{}}}

\newcommand{\ruleCCEqEPairDualToAp}{\mathsf{E\tensor/E\parr_1}}

\newcommand{\rulmIBott}{\indrulename{m-i$\bott$}}

\newcommand{\rulmIWhy}{\indrulename{m-i$\why{}$}}
\newcommand{\iwhy}[2]{{?}#1.#2}
\newcommand{\eofc}[2]{\textcolor{red}{\bm{[}}\ofc{#1}:=#2\textcolor{red}{\bm{]}}}

\newcommand{\ewhy}[3]{{#1\textcolor{darkgreen}{\blacktriangleright}^{#2}#3}}

\newcommand{\ibott}[2]{#1\lightning#2}

\newcommand{\ipar}[3]{{\parr}(#1,#2).#3}

\newcommand{\linvap}[3]{#2\textcolor{red}{\blacktriangledown}\,#3}
\newcommand{\llam}[3]{\lambda #2.\,#3}

\newcommand{\lione}[1]{\star}

\newcommand{\patt}{\mathtt{p}}

\newcommand{\pelim}[2]{\textcolor{red}{\bm{[}}#1:=#2\textcolor{red}{\bm{]}}}

\newcommand{\ltm}{t}
\newcommand{\ltmtwo}{s}
\newcommand{\ltmthree}{r}
\newcommand{\ltmfour}{p}

\newcommand{\lab}{\alpha}
\newcommand{\labtwo}{\beta}

\newcommand{\rtto}{\twoheadrightarrow}
\newcommand{\lto}[1]{\to}
\newcommand{\lrtto}[1]{\rtto}

\newcommand{\step}{\sigma}
\newcommand{\steptwo}{\rho}

\newcommand{\mlnode}[1]{\begin{array}{@{}l@{}}#1\end{array}}

\newcommand{\fctx}{\textsf{S}}  % shallow
\newcommand{\dctx}{\textsf{P}}  % deep
\newcommand{\dctxtwo}{\textsf{Q}}  % deep
\newcommand{\rttofrommod}[1]{\leftrightarrow_{#1}^{*}}

\newcommand{\imp}{\supset}
\newcommand{\qtra}[1]{{#1}^{\mathsf{Q}}}
\newcommand{\qtrat}[2]{{#1}^{\mathsf{Q}}_{#2}}
\newcommand{\qtratwo}[1]{{#1}^{\mathsf{\underline{Q}}}}
\newcommand{\ttra}[1]{{#1}^{\mathsf{T}}}  % T translation for types
\newcommand{\ttrat}[2]{{#1}^{\mathsf{T}}_{#2}} % T translation for terms
\newcommand{\pjum}[4]{#1\vdash#2:#3\mid#4}

\newcommand{\emptynenv}{\cdot}
\newcommand{\nenv}{\Sigma}
\newcommand{\nenvtwo}{\Sigma'}

\newcommand{\ptm}{M}
\newcommand{\ptmtwo}{N}
\newcommand{\ptmthree}{O}
\newcommand{\ptmfour}{P}
\newcommand{\ptmfive}{Q}

\newcommand{\pval}{V}

\newcommand{\betav}{\beta_{V}}
\newcommand{\muv}{\mu_{V}}
\newcommand{\muvprime}{\mu_{V}'}

\newcommand{\mvar}{\alpha}

\newcommand{\mval}{v}
\newcommand{\menv}{e}
\newcommand{\mcmd}{c}
\newcommand{\mapp}[2]{#1\cdot#2}
\newcommand{\mmu}[2]{\mu{#1}.\,#2}
\newcommand{\mabs}[2]{\langle#1 \mid #2\rangle}
\newcommand{\mjcmd}[2]{#1\vdash#2}
\newcommand{\mjenv}[3]{#1\mid#2\vdash#3}
\newcommand{\mjval}[3]{#1\vdash#2\mid#3}
\newcommand{\mctx}{\Gamma}
\newcommand{\mctxtwo}{\Delta}

\newcommand{\nvar}{a}
\newcommand{\nvartwo}{b}

\newcommand{\pap}[2]{#1\,\,#2}
\newcommand{\pmu}[2]{\mu {#1}.#2}
\newcommand{\pname}[2]{[#1]#2}

\newcommand{\rulpAx}{\indrulename{p-ax}}
\newcommand{\rulpLam}{\indrulename{p-lam}}
\newcommand{\rulpApp}{\indrulename{p-app}}

\newcommand{\rulpName}{\indrulename{p-name}}
\newcommand{\rulpMu}{\indrulename{p-mu}}

\newcommand{\CalcParigot}{\lambda\mu}
\newcommand{\CalcParigotVal}{\lambda\muv}

\newcommand{\psub}[2]{\textcolor{darkgreen}{\bm{\{}}#1{\lhd}#2\textcolor{darkgreen}{\bm{\}}}}
\newcommand{\psubtwo}[2]{\textcolor{darkgreen}{\bm{\{}}#1{\lhd^\star}#2\textcolor{darkgreen}{\bm{\}}}}
\newcommand{\rensub}[2]{\textcolor{deepcarrotorange}{\bm{\{}}#1{\curvearrowleft}#2\textcolor{deepcarrotorange}{\bm{\}}}}
\newcommand{\tomu}{\to_{\mu}}

\newcommand{\htra}[1]{{#1}^{\mathsf{H}}}

\newcommand{\muDCLL}{\mu\textsf{DCLL}}
\newcommand{\ilam}[2]{\Lambda#1.\,#2}
\newcommand{\iap}[2]{#1\,\textcolor{blue}{{@}}\,#2}
\newcommand{\hjum}[5]{#1;#2\vdash#3:#4\mid#5}

\newcommand{\rulhlAx}{\indrulename{lin-ax}}
\newcommand{\rulhiAx}{\indrulename{int-ax}}
\newcommand{\rulhlLam}{\indrulename{$\limp$-I}}
\newcommand{\rulhiLam}{\indrulename{$\imp$-I}}
\newcommand{\rulhlApp}{\indrulename{$\limp$-E}}
\newcommand{\rulhiApp}{\indrulename{$\imp$-E}}
\newcommand{\rulhName}{\indrulename{$\bott$-I}}
\newcommand{\rulhMu}{\indrulename{$\bott$-E}}

\newcommand{\heq}{\doteq}

\newcommand{\vark}{k}
\newcommand{\varktwo}{k'}
\newcommand{\varkthree}{k''}

\newcommand*{\typtms}[1]{{\cal T}_{#1}}
\newcommand*{\iftyp}[1]{\vartriangleleft_{#1}}

\newcommand*{\llneg}[1]{\lneg{\lneg{ #1 }{}}}
\newcommand*{\lllneg}[1]{\lneg{\lneg{ \lneg{#1}{} }{}}}

\newcommand*{\SNtms}[1]{\textrm{SN}_{#1}}

\newcommand*{\tmsone}{X}
\newcommand*{\tmstwo}{Y}

\newcommand*{\subsj}[2]{\models #1, #2}   % substitution compatible with judgement
\newcommand{\subone}{\sigma}
\newcommand{\subtwo}{\nu}
\newcommand{\subthree}{\tau}

\newcommand*{\redcand}[1]{\llbracket #1 \rrbracket}
\newcommand{\tml}[1]{|#1|}

\newcommand{\proofScaleFactor}{0.9}

\title{A Classical Linear $\lambda$-Calculus based on Contraposition}

\author{Pablo Barenbaum}{Universidad Nacional de Quilmes (CONICET), Argentina \and Universidad de Buenos Aires, Argentina}{pbarenbaum@dc.uba.ar}{https://orcid.org/0009-0003-2494-3345}{}%TODO mandatory, please use full name; only 1 author per \author macro; first two parameters are mandatory, other parameters can be empty. Please provide at least the name of the affiliation and the country. The full address is optional. Use additional curly braces to indicate the correct name splitting when the last name consists of multiple name parts.

\author{Eduardo Bonelli}{Stevens Institute of Technology, United States}{ebonelli@stevens.edu}{0000-0003-1856-2856}{}

\author{Leopoldo Lerena}{Universidad de Buenos Aires (CONICET), Argentina}{leolerena@gmail.com}{[orcid]}{}

\authorrunning{J. Open Access and J.\,R. Public} %TODO mandatory. First: Use abbreviated first/middle names. Second (only in severe cases): Use first author plus 'et al.'

\Copyright{Jane Open Access and Joan R. Public} %TODO mandatory, please use full first names. LIPIcs license is "CC-BY";  http://creativecommons.org/licenses/by/3.0/

\ccsdesc[500]{Theory of computation~Linear logic}
\ccsdesc[500]{Theory of computation~Type theory}
\ccsdesc[500]{Theory of computation~Proof theory}

\keywords{linear logic, lambda calculus, proof theory, type systems} %TODO mandatory; please add comma-separated list of keywords

\category{} %optional, e.g. invited paper

\relatedversion{} %optional, e.g. full version hosted on arXiv, HAL, or other respository/website
\EventEditors{John Q. Open and Joan R. Access}
\EventNoEds{2}
\EventLongTitle{42nd Conference on Very Important Topics (CVIT 2016)}
\EventShortTitle{CVIT 2016}
\EventAcronym{CVIT}
\EventYear{2016}
\EventDate{December 24--27, 2016}
\EventLocation{Little Whinging, United Kingdom}
\EventLogo{}
\SeriesVolume{42}
\ArticleNo{23}
\begin{document}

\maketitle

%TODO mandatory: add short abstract of the document
\begin{abstract}
  We present a novel linear $\lambda$-calculus for Classical
  Multiplicative Exponential Linear Logic (\MELL) along the lines of
  the propositions-as-types paradigm. Starting from the standard term
  assignment for Intuitionistic Multiplicative Linear Logic (\IMLL),
  we observe that if we incorporate linear negation, its involutive
  nature implies that both $\typ\limp\typtwo$ and
  $\lneg\typtwo\limp\lneg\typ$ should have the same proofs. The
  introduction of a linear modus tollens rule, stating that from 
  $\lneg\typtwo\limp\lneg\typ$ and $\typ$ we may conclude $\typtwo$,
  allows one to recover classical \MLL. Furthermore,
  a term assignment for this elimination rule,{the study of proof
  normalization in a $\lambda$-calculus with this elimination rule}
  prompts us to define the novel notion of
  contra-substitution $\tm\cos{\lvar}{\tmtwo}$. Introduced alongside
  linear substitution, contra-substitution denotes the term that
  results from ``grabbing'' the unique occurrence of $\lvar$ in $\tm$
  and ``pulling'' from it, in order to turn the term $\tm$ inside out
  (much like a sock) and then replacing $a$ with $\tmtwo$. We
  call the one-sided natural deduction presentation of classical \MLL,
  the \CalcMLL-calculus. Guided by the behavior of contra-substitution
  in the presence of the exponentials, we extend it to a similar
  presentation for \MELL. We prove that this calculus is sound and
  complete with respect to \MELL and that it satisfies the standard
  properties of a typed programming language: subject reduction,
  confluence and strong normalization. Moreover, we show that several
  well-known term assignments for classical logic can be encoded in
  $\CalcMELL$. These include Parigot's
  $\CalcParigot$~\mbox{\cite{DBLP:conf/lpar/Parigot92}} both via
  Danos, Joinet and Schellinx's T and
  Q-Translations~\mbox{\cite{Danos_Joinet_Schellinx_1995}},
  Curien and Herbelin's
  $\overline{\lambda}\mu\tilde{\mu}$-calculus~\mbox{\cite{DBLP:conf/icfp/CurienH00}},
  and Hasegawa's $\muDCLL$~\mbox{\cite{DBLP:journals/mscs/Hasegawa05,DBLP:conf/csl/Hasegawa02}}.
\end{abstract}

%\tableofcontents

\section{Introduction}

\newcommand{\CL}{\textsf{\textup{CL}}\xspace}
\newcommand{\ILL}{\textsf{\textup{ILL}}\xspace}
\newcommand{\DILL}{\textsf{\textup{DILL}}\xspace}
\newcommand{\LSP}{\ensuremath{\mathsf{LSP}}}
\newcommand{\LCSP}{\ensuremath{\mathsf{LCSP}}}
\newcommand{\outputf}{\ensuremath{{\cal O}}}
\newcommand{\inputf}{\ensuremath{{\cal I}}}

Linear Logic~($\LL$)~\cite{DBLP:journals/tcs/Girard87}
proposes a \emph{resource conscious} approach to logic,
in which formulae cannot be arbitrarily duplicated or erased.
It incorporates two modalities, called the \emph{exponential modalities},
that recover the ability to duplicate and erase formulae in a controlled way.
While the \emph{of-course} modality $\ofc{\typ}$
represents the \emph{obligation} to duplicate and erase the formula $\typ$
as many times as required,
the \emph{why-not} modality $\why{\typ}$
represents the \emph{right} to duplicate and erase the formula $\typ$
as many times as desired. 
This makes $\LL$ a suitable language to model resource-sensitive phenomena
such as concurrency, memory management, and computational complexity.
To explore the computational consequences of these ideas, one can turn to
the \emph{proposition-as-types paradigm}, which allows to interpret
propositions as types, proofs as programs,
and proof normalization as program evaluation.
The canonical example 
is the correspondence between minimal logic
and simply typed $\lambda$-calculus.
The proposition-as-types paradigm has been extended to many other logical systems,
and in particular to Linear Logic and various of its fragments.

\subparagraph*{Tensions Between Classical Symmetries and Lambda Calculi.} 
Linear Logic is known to exhibit numerous symmetries.
This is perhaps most evident in \emph{sequent calculus} presentations, % of $\LL$,
in which the left rules for a logical connective (\eg ``$\ofc{}$'')
are the mirror image of the right rules for the dual connective
(\eg ``$\why{}$'').
These kinds of symmetries are also present in Classical Logic~($\CL$),
but notably not in Intuitionistic Logic.
For instance, in $\CL$ and $\LL$,
affirming a proposition $\typ$ as a thesis corresponds
exactly to denying $\typ$ as a hypothesis---the
key behind the classical principle of \emph{reductio ad absurdum}.
Another related symmetry is expressed in the classical law of \emph{contraposition},
which states that $\typ\to\typtwo$ is equivalent to $\neg{\typtwo}\to\neg{\typ}$.
One essential aspect of this symmetry is that one would expect to be able
to define negation as an involutive connective, so that $\neg\neg\typ = \typ$
becomes a strict equality, which in particular means that a \emph{cut} between
a proof of $\typ$ and one of its negation $\neg{\typ}$ does not prioritize
either side. Compare this with the intuitionistic case, in which
$\neg{\typ}$ corresponds to a function type $\typ\to\bot$, but an arbitrary
type $\typtwo$ cannot be understood as a function type in general.
%~\cite[pg.29]{DBLP:journals/tcs/Abramsky93}. 

When a logic is presented in (single-conclusion) \emph{natural deduction} style,
the computational formalism one obtains is a variant of the
$\lambda$-calculus.
% This is desirable given the extensive body of
% results on the syntax and semantics of the lambda calculus that have been
% accumulated since its introduction by Church in the late 1930s. 
It is not obvious how to formulate systems in 
this style that exhibit classical symmetries
and still enjoy good computational properties such as canonicity.
For example, in the case of classical propositional logic,
Barbanera and Berardi's classical $\lambda$-calculus~\cite{DBLP:journals/iandc/BarbaneraB96}
is symmetric but non-confluent,
while Parigot's $\lambda\mu$-calculus~\cite{DBLP:conf/lpar/Parigot92}
is confluent but non-symmetric (\eg negation is not involutive\footnote{Not only is $\neg\neg\typ \neq \typ$, but $\neg\neg\typ \imp \typ$ is not provable~\cite[Cor.1]{DBLP:conf/icalp/AriolaH03}.}).
In the case of $\LL$, it is not clear how to formulate appropriate
natural deduction rules for logical connectives
that involve sequents with more than one formula on the right.
For example, the right introduction rule for multiplicative conjunction derives
$\tenv\vdash\typ\parr\typtwo$ from $\tenv\vdash\typ,\typtwo$,
and the right contraction rule for the why-not modality derives
$\tenv\vdash\why{\typ}$ from $\tenv\vdash\why{\typ},\why{\typ}$.

\subparagraph*{Intuitionistic and Sequent-Based Systems.} 
The difficulty to harmonize classical symmetries and lambda calculi
is perhaps the reason why %, apart from a few exceptions,
most of the existing literature that studies
$\LL$ from the point of view of the propositions-as-types paradigm
takes one of two routes.
On one hand, some systems such as $\ILL$ or $\DILL$~\cite{BarberPlotkin:1997,Barber:PhDThesis:1997}
restrict $\LL$ to the \textbf{intuitionistic} fragment.
Intuitionistic formulations of $\LL$ sidestep the aforementioned difficulties
by restricting the possible ``shapes'' of formulae and sequents,
in such a way that a judgement involves a number of \emph{input} formulae and a
single distinguished \emph{output} formula.
For instance, in $\ILL$, the multiplicative disjunction is removed in favor of
the (more restricted) linear implication, and the why-not modality
is removed altogether.
These restrictions allow to formulate well-behaved
intuitionistic linear lambda calculi, in which
the proof of a sequent $\typ_1,\hdots,\typ_n \vdash \typtwo$
is understood as a single sequential program taking $n$ inputs and
producing one output.
However, this is at the expense of losing the classical symmetries,
and ruling out a large class of formulae and proofs,
leaving the underlying computational mechanisms unexplored.

On the other hand, there are systems corresponding to unrestricted
(\ie classical) $\LL$, which are usually derived from \textbf{sequent calculus}
%\edu{Todo lo que sigue vale con DN con multiples formulas a la derecha, como el caso del paper Par Means Parallel de Aschieri y Genco }
presentations of $\LL$.
As already mentioned, sequent-based presentations are symmetric,
and cut elimination, in this setting, exhibits good properties such as strong
normalization and confluence.
Formulae do not play a distinguished input or output role: the proof of a sequent $\vdash \typ_1,\hdots,\typ_n$ is understood
as the parallel composition of $n$ interacting processes~\cite{DBLP:journals/tcs/Abramsky93,Wadler-PropositionsAsSessions,WADLER_2014}. Calculi based on these principles can thus be seen
as concurrent systems, more akin to process calculi than to
the $\lambda$-calculus.
Among these systems, one also finds graphical formalisms like \textbf{proof-nets}.
They are intentionally designed to abstract away the permutative rules
of sequent calculus. This allows reasoning at a higher level,
but at the same time it makes it difficult to reason axiomatically: for
example, to formalize proof-nets in a proof
assistant one needs to choose concrete terms as representatives of proof-nets,
and the need for explicit permutation rules reappears.

This leaves open the question of whether a single-conclusion
natural deduction system for (classical) \LL can be designed that
retains the classical symmetries and desirable computational properties.
Such a system, in which proofs can still be understood as ``functional''
programs, could form the basis of linear functional programming languages
and proof assistants.

\subparagraph*{Towards a Classical Linear $\lambda$-Calculus.}
Our starting point is a natural deduction system
with \emph{multiplicative conjunction} ($\tensor$)
and \emph{linear implication} ($\limp$) as the only connectives,
\ie corresponding to \emph{Multiplicative Linear Logic} ($\MLL$).
An involutive negation operator $\lneg{(\cdot)}$ can be defined essentially by the equation
$\lneg{(\typ\limp\typtwo)} = \typ\tensor\lneg{\typtwo}$,
and its dual.
The elimination rule for linear implication in the \emph{intuitionistic}
fragment of $\MLL$ is the linear \emph{modus ponens} rule, which states that
from $\typ \limp \typtwo$ and $\typ$ one may conclude $\typtwo$.
A first observation is that the only missing piece to recover
\emph{classical} $\MLL$ is to add the linear \emph{modus tollens} rule,
which states that
from $\typ \limp \typtwo$ and $\lneg{\typtwo}$ one may conclude $\lneg{\typ}$.
This results in an inference system that
---from the strictly logical point of view---
turns out to prove all and only the valid sequents of $\MLL$.
Also, it raises the question of how to provide a computational interpretation
for it, that is, to devise a proof normalization procedure.
The key to define the computational interpretation are the standard operation
of \emph{substitution} and, to the best of our knowledge, a new operation
we call \emph{contra-substitution}.

  \subparagraph*{The Linear Contra-Substitution Principle.}
\emph{Modus ponens} corresponds to
\emph{application}: a linear function $\tm : \typ \limp \typtwo$
may be applied to an argument $\tmtwo : \typ$
to yield a result $\ap{\tm}{\tmtwo} : \typtwo$.
From the computational point of view, 
a \emph{redex} formed by the interaction between a $\lambda$-abstraction
and an application can be normalized by means of the usual
$\beta$-rule, performing the simplification
$\ap{(\lam{\lvar}{\tm})}{\tmtwo} \rightsquigarrow \tm\sub{\lvar}{\tmtwo}$.
The right-hand side, $\tm\sub{\lvar}{\tmtwo}$, denotes the \emph{linear substitution}
of the (unique) free occurrence of $\lvar$ in $\tm$ by $\tmtwo$.
The \emph{modus tollens} rule corresponds to a comparatively less familiar
operation we dub \emph{contra-application}:
a linear function $\tm : \typ \limp \typtwo$
may be ``contra-applied'' to a term $\tmtwo : \lneg{\typtwo}$
to yield a result $\invap{\tm}{\tmtwo} : \lneg{\typ}$.
A \emph{redex} formed by the interaction between a $\lambda$-abstraction
and a contra-application can be normalized by means of a 
new rule that performs the simplification
$\invap{(\lam{\lvar}{\tm})}{\tmtwo} \rightsquigarrow \tm\cos{\lvar}{\tmtwo}$.
The right hand side, $\tm\cos{\lvar}{\tmtwo}$ corresponds to a non-standard
operation we dub \emph{contra-substitution}.
Intuitively, $\tm\cos{\lvar}{\tmtwo}$ is the expression that results
from ``grabbing'' the \emph{unique} occurrence of $\lvar$ in $\tm$ and ``pulling''
from it, in order to turn the term $\tm$ inside out.
This operation is defined by structural induction on $\tm$. For example,
if $\pair{\tm}{\lvar}:\typ\tensor\typtwo$ is a tensor pair,
and $\tmtwo : \typ\limp\lneg{\typtwo}$,
then $\pair{\tm}{\lvar}\cos{\lvar}{\tmtwo}$
turns out to produce the term $\ap{\tmtwo}{\tm}:\lneg{\typtwo}$ where $\tmtwo$
is applied to $\tm$.
Uniqueness of the occurrence of $\lvar$ (\ie linearity)
is crucial to be able to define this notion.

\subparagraph*{Summary of contributions.} We explore a propositions-as-types correspondence for a one-sided natural deduction presentation of \emph{Multiplicative Exponential Linear Logic} ($\MELL$) based on the novel notion of \emph{contra-substitution}, where proofs are modeled as functional expressions rather than processes.  The resulting \CalcMELL-calculus provides term witnesses for all proofs in \MELL, is strongly normalizing and confluent. It makes use of a notion of structural equivalence that is a strong bisimulation and we believe has an interest of its own. Several well-known term assignments for classical logic are shown to be simulated in \CalcMELL including Parigot's $\CalcParigot$~\mbox{\cite{DBLP:conf/lpar/Parigot92}}, Hasegawa's $\muDCLL$~\mbox{\cite{DBLP:journals/mscs/Hasegawa05,DBLP:conf/csl/Hasegawa02}} (in the appendix), and Curien and Herbelin's $\overline{\lambda}\mu\tilde{\mu}$-calculus~\mbox{\cite{DBLP:conf/icfp/CurienH00}}. Some proof skeletons and additional definitions are available in the appendix. See~\cite{contraposition_long} for full proofs.

%%% Local Variables:
%%% mode: latex
%%% TeX-master: "main"
%%% End:

\section{Preliminaries}

This section recalls \MLL and \MELL (with units) via the usual one-sided sequent presentation.

\medskip

\subparagraph*{Multiplicative Linear Logic.}
We assume given a denumerable set
of \emph{atomic formulae} $\btyp,\btyptwo,\hdots$
each with its corresponding negative version $\nbtyp,\nbtyptwo,\hdots$.
The set of \defn{$\MLL$-formulae} is given by:
\[
  \typ,\typtwo,\hdots ::=
           \btyp
      \mid \nbtyp
      \mid \typ\tensor\typtwo
      \mid \typ\limp\typtwo
\]
The logical connectives $\tensor$ and $\limp$ are called \emph{multiplicative}.
Linear negation is the involutive operator $\lneg{\arg}$ defined by:
\[
  \lneg{\btyp} \eqdef \nbtyp
  \HS\HS\HS
  \lneg{\nbtyp} \eqdef \btyp
  \HS\HS\HS
  \lneg{(\typ\tensor\typtwo)}
  \eqdef
  \typ\limp\lneg{\typtwo}
  \HS\HS\HS
  \lneg{(\typ\limp\typtwo)}
  \eqdef
  \typ\tensor\lneg{\typtwo}
\]
In \MLL, we take linear implication ($\limp$) as a primitive connective
  rather than multiplicative disjunction ($\parr$), which is the more conventional
  choice.
  This is just a minor presentational point, since $\limp$ and $\parr$
  are interdefinable; indeed, one can define $\typ\parr\typtwo \eqdef \lneg{\typ}\limp\typtwo$.

Judgements in $\MLL$ are of the form $\jull{\tenv}$
where $\tenv$ is a finite \emph{multiset} of \MLL formulae.
Note that working with multisets avoids the need of an explicit exchange rule.
We write $\tenv,\tenvtwo$ for the multiset union of finite multisets $\tenv$ and $\tenvtwo$.
Valid $\MLL$ judgement are defined inductively by the following rules.

\begin{definition}[Valid $\MLL$ judgement]
\[
  \indrule{\rullAx}{
    \emptyPremise
  }{
    \jull{\typ,\lneg{\typ}}
  }
  \indrule{\rullCut}{
    \jull{\tenv,\typ}
    \HS
    \jull{\tenvtwo,\lneg{\typ}}
  }{
    \jull{\tenv,\tenvtwo}
  }
  \indrule{\rullTensor}{
    \jull{\tenv,\typ}
    \HS
    \jull{\tenvtwo,\typtwo}
  }{
    \jull{\tenv,\tenvtwo,\typ\tensor\typtwo}
  }
  \indrule{\rullLimp}{
    \jull{\tenv,\lneg{\typ},\typtwo}
  }{
    \jull{\tenv,\typ\limp\typtwo}
  }
\]
\end{definition}
\medskip

%\subsection{\MELL}

\subparagraph*{Multiplicative Exponential Linear Logic.}
%We assume given a denumerable set
%of \emph{atomic formulae} $\btyp,\btyptwo,\hdots$
%each with its corresponding negative version $\nbtyp,\nbtyptwo,\hdots$.
The set of \defn{\MELL-formulae} is given by:
\[
  \typ,\typtwo,\hdots ::=
  \btyp
      \mid \nbtyp
      \mid \typ\tensor\typtwo
      \mid \typ\parr\typtwo
      \mid \one
      \mid \bott
      \mid \ofc{\typ}
      \mid \why{\typ}
\]
The connectives $\tensor$ and $\parr$,
as well as the units $\one$ and $\bott$ are called \emph{multiplicative},
while the modalities~$\ofc{}$ and $\why{}$ are called \emph{exponential}.
Linear negation is the involutive operator $\lneg{\arg}$ defined by:
\[
  \begin{array}{r@{\,\,}c@{\,\,}l@{\HS\HS}r@{\,\,}c@{\,\,}l@{\HS\HS}r@{\,\,}c@{\,\,}l@{\HS\HS}r@{\,\,}c@{\,\,}l}
    \lneg{\btyp} & \eqdef & \nbtyp
  &
    \lneg{\nbtyp} & \eqdef & \btyp
  &
    \lneg{(\typ\tensor\typtwo)}
    & \eqdef &
    \lneg{\typ}\parr\lneg{\typtwo}
  &
    \lneg{(\typ\parr\typtwo)}
    & \eqdef &
    \lneg{\typ}\tensor\lneg{\typtwo}
  \\
    \lneg{(\ofc{\typ})}
    & \eqdef &
    \why{\lneg{\typ}}
  &
    \lneg{(\why{\typ})}
    & \eqdef &
    \ofc{\lneg{\typ}}
  &
    \lneg{\one}
    & \eqdef & \bott
    &
      \lneg{\bott}
      & \eqdef & \one
  \end{array}
\]
Unlike in \MLL, we take multiplicative disjunction ($\parr$) as the primitive
connective in \MELL. Again, this is just a minor point to improve presentation,
and in \MELL one can define $\typ\limp\typtwo \eqdef \lneg{\typ}\parr\typtwo$.
Judgements are of the form $\jull{\tenv}$
where $\tenv$ is a finite multiset of \MELL formulae. 
Valid $\MELL$ judgement are defined inductively by the following rules.

\begin{definition}[Valid $\MELL$ judgement]
% We recall the definition of $\MELL$
% in one-sided sequent calculus style.
\[
  \indrule{\rullAx}{
    \emptyPremise
  }{
    \jull{\typ,\lneg{\typ}}
  }
  \indrule{\rullCut}{
    \jull{\tenv,\typ}
    \HS
    \jull{\tenvtwo,\lneg{\typ}}
  }{
    \jull{\tenv,\tenvtwo}
  }
% \]
% \[
  \indrule{\rullTensor}{
    \jull{\tenv,\typ}
    \HS
    \jull{\tenvtwo,\typtwo}
  }{
    \jull{\tenv,\tenvtwo,\typ\tensor\typtwo}
  }
  % \indrule{\rullLimp}{
  %   \jull{\tenv,\lneg{\typ},\typtwo}
  % }{
  %   \jull{\tenv,\typ\limp\typtwo}
  % }
  \indrule{\rullPar}{
    \jull{\tenv,\typ,\typtwo}
  }{
    \jull{\tenv,\typ\parr\typtwo}
  }
\]
\[
   \indrule{\rullOne}{
     \emptyPremise
  }{
     \jull{\one}
   }
      \indrule{\rullBott}{
     \jull{\tenv}
  }{
     \jull{\tenv,\bott}
  } 
%   \]
% \[
  \indrule{\rullP}{
    \jull{\why{\tenv},\typ}
  }{
    \jull{\why{\tenv},\ofc{\typ}}
  }
  \indrule{\rullW}{
    \jull{\tenv}
  }{
    \jull{\tenv,\why{\typ}}
  }
  \indrule{\rullD}{
    \jull{\tenv,\typ}
  }{
    \jull{\tenv,\why{\typ}}
  }
  \indrule{\rullC}{
    \jull{\tenv,\why{\typ},\why{\typ}}
  }{
    \jull{\tenv,\why{\typ}}
  }
\]
\end{definition}

%%% Local Variables:
%%% mode: latex
%%% TeX-master: "main"
%%% End:

\section{A Contraposition-Based Calculus for \MLL}
This section presents the \defn{\CalcMLL-calculus}, our propositions-as-types interpretation of \MLL.
The set of \defn{terms} and \defn{case contexts} are given by:
\[
  \begin{array}{llrll}
    (\text{Terms})
    & \tm, \tmtwo, \hdots
    & ::=  & \lvar \,\mid\, \pair{\tm}{\tmtwo}\,\mid\, \casepair{\tm}{\lvar}{\lvartwo}{\tmtwo}\,\mid\, \lam{\lvar}{\tm}\,\mid\, \ap{\tm}{\tmtwo}\,\mid\, \invap{\tm}{\tmtwo} \\ 
    (\text{Case contexts})
    & \cctx
    & ::=  & \cctxhole \,\mid\, \casepair{\tm}{\lvar}{\lvartwo}{\cctx}
  \end{array}
\]
Terms include standard constructs of linear $\lambda$-calculus: linear variables $\lvar, \lvartwo, \hdots$, ranging over a countably infinite set; tensor introduction $\pair{\tm}{\tmtwo}$ and elimination $\casepair{\tmtwo}{\lvar}{\lvartwo}{\tm}$; lambda abstraction $\lam{\lvar}{\tm}$ and application $\ap{\tm}{\tmtwo}$. Moreover, they include a novel constructor: the \defn{contra-application} $\invap{\tm}{\tmtwo}$.
From the logical point of view, application and contra-application both correspond
to eliminations of the implication.
As mentioned before, while application corresponds to modus ponens, contra-application corresponds to modus tollens.
Free and bound occurrences of variables are defined as expected,
where $\casepair{\tmtwo}{\lvar}{\lvartwo}{\tm}$
binds free occurrences of $\lvar,\lvartwo$ in $\tm$,
and $\lam{\lvar}{\tm}$ binds free occurrences of $\lvar$ in $\tm$.
Terms are considered up to $\alpha$-renaming of bound variables.

Case contexts are sequences of tensor eliminations with a subterm hole ($\cctxhole$) that can be filled by any given \CalcMLL term.  
We write $\of{\cctx}{\tm}$ (or also $\tm\cctx$) for the term resulting from replacing the unique occurrence of the hole in $\cctx$ with $\tm$,
possibly capturing free variables in $\tm$ in the process.
We shall use case contexts for reduction at a distance.

Typing judgements are of the form $\jum{\tenv}{\tm}{\typ}$
where $\typ$ is a \MLL formula
and $\tenv$ is a \emph{typing environment},
a partial function mapping variables to \MLL formulae,
written $\tenv = (\lvar_1:\typ_1,\hdots,\lvar_n:\typ_n)$
and assumed to be of finite domain.
Valid judgements are given by:

\begin{definition}[Valid \CalcMLL typing judgements]
\quad\\
\scalebox{\proofScaleFactor}{\begin{minipage}{\textwidth}
    \[
  \indrule{\rulmAx}{
    \emptyPremise
  }{
    \jum{\lvar:\typ}{\lvar}{\typ}
  }
  \indrule{\rulmITensor}{
    \jum{\tenv}{\tm}{\typ}
    \HS
    \jum{\tenvtwo}{\tmtwo}{\typtwo}
  }{
    \jum{\tenv,\tenvtwo}{\pair{\tm}{\tmtwo}}{\typ\tensor\typtwo}
  }
  \indrule{\rulmETensor}{
    \jum{\tenv}{\tm}{\typ\tensor\typtwo}
    \HS
    \jum{\tenvtwo,\lvar:\typ,\lvartwo:\typtwo}{\tmtwo}{\typthree}
  }{
    \jum{\tenv,\tenvtwo}{\casepair{\tm}{\lvar}{\lvartwo}{\tmtwo}}{\typthree}
  }
\]
\[
  \indrule{\rulmILimp}{
    \jum{\tenv,\lvar:\typ}{\tm}{\typtwo}
  }{
    \jum{\tenv}{\lam{\lvar}{\tm}}{\typ\limp\typtwo}
  }
  \indrule{\rulmELimpOne}{
    \jum{\tenv}{\tm}{\typ\limp\typtwo}
    \HS
    \jum{\tenvtwo}{\tmtwo}{\typ}
  }{
    \jum{\tenv,\tenvtwo}{\ap{\tm}{\tmtwo}}{\typtwo}
  }
  \indrule{\rulmELimpTwo}{
    \jum{\tenv}{\tm}{\typ\limp\typtwo}
    \HS
    \jum{\tenvtwo}{\tmtwo}{\lneg{\typtwo}}
  }{
    \jum{\tenv,\tenvtwo}{\invap{\tm}{\tmtwo}}{\lneg{\typ}}
  }
\]
\end{minipage}}
\end{definition}

Rules are presented in introduction/elimination pairs style of natural deduction, where the most noteworthy point is that linear implication admits two different elimination rules.
The first, \rulmELimpOne{} (modus ponens) corresponds to the standard application of terms from $\lambda$-calculus.
The second, \rulmELimpTwo{} (modus tollens) is the typing rule of our new construct, contra-application. 
This latter is the only strictly classical rule in our typing system.
The following result justifies that the \CalcMLL calculus corresponds
to \MLL from the point of view of propositions-as-types.
The statement of the lemma relies on an abuse of notation:
if $\tenv$ is a multiset of formulae $\tenv = (\typ_1,\hdots,\typ_n)$,
we write $\jum{\lneg{\tenv}}{\tm}{\typtwo}$
to mean $\jum{\lvar_1:\lneg{\typ_1},\hdots,\lvar_n:\lneg{\typ_n}}{\tm}{\typtwo}$
where $\lvar_1,\hdots,\lvar_n$ is a set of distinct linear variables.

\begin{lemma}[{name=Soundness and Completeness of \CalcMLL~\proofnote{Proof on pg.~\pageref{compl_sound_MLL:proof}}, restate=[name=Completeness and Soundness of  \CalcMLL]CompletenessSoundnessMLL}]
\hfill
\llem{compl_sound_MLL}
\begin{itemize}
\item
  \textbf{\textup{Soundness.}}
  If $\jum{\tenv}{\tm}{\typ}$ holds in \CalcMLL
  then $\jull{\lneg{\tenv},\typ}$ holds in \MLL.
\item
  \textbf{\textup{Completeness.}}
  If $\jull{\tenv_0}$ holds in \MLL
  and $\tenv,\typ$ is any permutation of $\tenv_0$,
  there exists a term $\tm$ such that
  $\jum{\lneg{\tenv}}{\tm}{\typ}$
  holds in \CalcMLL.
\end{itemize}
\end{lemma}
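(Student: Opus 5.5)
Soundness goes by induction on the derivation of $\jum{\tenv}{\tm}{\typ}$, translating each typing rule into a small \MLL derivation with conclusion $\jull{\lneg{\tenv},\typ}$. Throughout, $\lneg{\cdot}$ is involutive and $\lneg{(\typ\limp\typtwo)}=\typ\tensor\lneg{\typtwo}$.

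\begin{itemize}
\item \rulmAx{} is \rullAx{} on $\jull{\lneg{\typ},\typ}$.
\item \rulmITensor{} is \rullTensor{} on the two translated premises.
\item \rulmILimp{}: from $\jull{\lneg{\tenv},\lneg{\typ},\typtwo}$, one application of \rullLimp{} gives $\jull{\lneg{\tenv},\typ\limp\typtwo}$.
\item \rulmETensor{}: from $\jull{\lneg{\tenvtwo},\lneg{\typ},\lneg{\typtwo},\typthree}$, apply \rullLimp{} to obtain $\lneg{\typ},\lneg{\typtwo}$ packed as $\typ\limp\lneg{\typtwo}=\lneg{(\typ\tensor\typtwo)}$. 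Then cut against $\jull{\lneg{\tenv},\typ\tensor\typtwo}$.
\item \rulmELimpOne{}: build $\jull{\typ\tensor\lneg{\typtwo},\lneg{\typ},\typtwo}$ by \rullTensor{} from the axioms $\jull{\typ,\lneg{\typ}}$ and $\jull{\lneg{\typtwo},\typtwo}$. Cut it against $\jull{\lneg{\tenv},\typ\limp\typtwo}$, whose dual formula is $\typ\tensor\lneg{\typtwo}$. Then cut against $\jull{\lneg{\tenvtwo},\typ}$ to remove $\lneg{\typ}$.
\item \rulmELimpTwo{}: the same construction, but cut the $\typtwo$ position against $\jull{\lneg{\tenvtwo},\lneg{\typtwo}}$, leaving $\lneg{\typ}$ as the conclusion.
\end{itemize}

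Completeness cannot be proved directly for the given permutation, because the \MLL rule's principal formula may be on the "wrong side". I would therefore first prove a \emph{rotation lemma}:

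if $\jum{\tenv,\lvar:\typtwo}{\tm}{\typ}$, then there is $\tmtwo$ with $\jum{\tenv,\lvartwo:\lneg{\typ}}{\tmtwo}{\lneg{\typtwo}}$.

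The witness is $\tmtwo=\invap{(\lam{\lvar}{\tm})}{\lvartwo}$. Indeed $\lam{\lvar}{\tm}:\typtwo\limp\typ$, and contra-application with $\lvartwo:\lneg{\typ}$ yields type $\lneg{\typtwo}$. This is exactly where modus tollens is needed. With the lemma, it suffices to show that for each valid $\jull{\tenv_0}$ \emph{some} choice of distinguished formula works; any other permutation is then reached by one rotation.

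Completeness then proceeds by induction on the \MLL derivation, choosing the distinguished conclusion to be the principal formula of the last rule.
\begin{itemize}
\item \rullAx{}: take $\jum{\lvar:\lneg{\typ}}{\lvar}{\lneg{\typ}}$.
\item \rullTensor{}: the IH plus rotation give terms of types $\typ$ and $\typtwo$ from $\lneg{\tenv}$ and $\lneg{\tenvtwo}$; pair them.
\item \rullLimp{}: the IH plus rotation give $\jum{\lneg{\tenv},\lvar:\typ}{\tm}{\typtwo}$, since $\lneg{\lneg{\typ}}=\typ$; abstract over $\lvar$.
\item \rullCut{}: use the IH plus rotation to get $\jum{\lneg{\tenv}}{\tm}{\typ}$, and rotate the other premise to $\jum{\lneg{\tenvtwo'},\lvar:\typ}{\tmtwo}{\typthree}$, where $\typthree$ is any formula of $\tenvtwo$ and $\tenvtwo'$ is $\tenvtwo$ with that one occurrence removed. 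Combine them as $\ap{(\lam{\lvar}{\tmtwo})}{\tm}$. If $\tenv,\tenvtwo$ is empty, the cut yields the empty sequent, and no permutation $\tenv,\typ$ of it exists, so the statement is vacuous.
\end{itemize}

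The main obstacle is variable bookkeeping: keeping typing environments with disjoint domains and $\alpha$-renaming variables when contexts are merged. It also needs to be checked that the rotation lemma is valid for arbitrary environments, i.e.\ that applying it once more does not change which formula is singled out in any unintended way. Both points are routine but require care with multiset identities.
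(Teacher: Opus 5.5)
Your proof is correct. Soundness is the same rule-by-rule translation as in the paper. You use an extra cut against $\jull{\typ\tensor\lneg{\typtwo},\lneg{\typ},\typtwo}$, built from two axioms, where the paper tensors the argument's premise directly with an axiom; this difference is immaterial.

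Your completeness argument, however, takes a genuinely different route. The paper invokes cut elimination for \MLL{} and inducts on a cut-free derivation, splitting every rule by where the chosen thesis lies:
\begin{itemize}
\item a principal thesis is introduced (\rulmITensor, \rulmILimp);
\item for a side-formula thesis, a fresh variable typed with the dual of the principal formula is eliminated ($\invap{\lvartwo}{\tmthree}$, $\ap{\lvartwo}{\tmtwo}$, or $\casepair{\lvarfour}{\lvartwo}{\lvarthree}{\tmtwo}$), and the result is plugged in via the admissible substitution rule.
\end{itemize}
You instead route all side-formula cases through one rotation lemma, witnessed by $\invap{(\lam{\lvar}{\tm})}{\lvartwo}$, and handle \rullCut{} directly by $\ap{(\lam{\lvar}{\tmtwo})}{\tm}$. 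This buys several things:
\begin{itemize}
\item it needs neither cut elimination nor the substitution lemma;
\item it never uses \rulmETensor{}, so that rule is not needed for provability;
\item it isolates exactly where modus tollens is indispensable.
\end{itemize}
The price is that your witnesses contain $\raxlamR$ and $\raxlamL$ redexes, one per rotation or cut. The paper starts from a cut-free proof and substitutes only neutral terms for variables, so it yields redex-free terms.

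Two small corrections. First, your remark that completeness cannot be proved directly for the given permutation is too strong: the paper's case analysis does exactly that. Second, in the cut case, when $\tenvtwo$ is empty but $\tenv$ is not, you must take the thesis from $\tenv$ instead. This is the symmetric argument, swapping the premises and using $\llneg{\typ}=\typ$.
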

Note that if a sequent $\jull{\typ_1,\hdots,\typ_n}$ holds in \MLL,
completeness allows to select \emph{any} of the $\typ_i$ as the thesis,
leaving the negation of the remaining formulae
($\lneg{\typ_1},\hdots,\lneg{\typ_{i-1}},\lneg{\typ_{i+1}},\hdots,\lneg{\typ_n}$)
as the hypotheses.
For example, the valid \MLL sequent $\jull{\lneg{\typ},\lneg{\typtwo},\typ\tensor\typtwo}$
has three possible ``readings'' in $\CalcMLL$, all of them valid.
Completeness simultaneously ensures the existence of three terms $\tm_1$, $\tm_2$, $\tm_3$ such that
$\jum{\lvar:\typtwo,\lvartwo:\typ\limp\lneg{\typtwo}}{\tm_1}{\lneg{\typ}}$
and $\jum{\lvar:\typ,\lvartwo:\typ\limp\lneg{\typtwo}}{\tm_2}{\lneg{\typtwo}}$
and $\jum{\lvar:\typ,\lvartwo:\typtwo}{\tm_3}{\typ\tensor\typtwo}$.
Moreover, the proof of completeness is constructive, so
from the derivation in \MLL the terms $\tm_1$, $\tm_2$, and $\tm_3$
can be effectively recovered.

\begin{remark}
Completeness fails if $\rulmELimpTwo$ is absent.
%Consider the following proof in $\MLL$:
%\scalebox{\proofScaleFactor}{\begin{minipage}{\textwidth}
%\[
%  \indrule{}
%  {
%  \indrule{}
%  {
%     \jull{\typ,\lneg{\typ}}
%     \quad
%     \indrule{}
%     {\jull{\typthree,\lneg{\typthree}}}
%     {\jull{\typthree \limp \typthree}}
%   }
%   {\jull{\typ\tensor(\typthree \limp \typthree),\lneg{\typ}}}
%   \quad
%   \indrule{}
%   {\jull{\typtwo,\lneg{\typtwo}}}
%   {\jull{\typtwo\limp\typtwo}}
% }
% {\jull{\typ\tensor(\typthree \limp \typthree),\lneg{\typ}\tensor(\typtwo\limp\typtwo)}}
%\]
%\end{minipage}}
%\\
%\noindent
For example, is easy to see that the sequent
$\jull{\typ\tensor(\typthree \limp \typthree),\lneg{\typ}\tensor(\typtwo\limp\typtwo)}$
is valid in \MLL.
However, there is no term $\tm$ and variable $\lvar$ such that
$\jum{\lvar:\typ\limp(\typthree \tensor \lneg{\typthree})}{\tm}{\lneg{\typ}\tensor(\typtwo\limp\typtwo)}$ or
$\jum{\lvar:\lneg{\typ}\limp(\typtwo\tensor\typtwo)}{\tm}{\typ\tensor(\typthree \limp \typthree)}$ are derivable in \CalcMLL if $\rulmELimpTwo$ is absent.
Both are derivable in \CalcMLL. For example:
\scalebox{\proofScaleFactor}{\begin{minipage}{\textwidth}
\[
    \indrule{\rulmITensor}{
  \indrule{\rulmELimpTwo}{
    \jum{\lvar:\typ\limp(\typthree \tensor \lneg{\typthree})}{\lvar}{\typ\limp(\typthree \tensor \lneg{\typthree})}
    \HS
    \jum{\emptytenv}{\lam{\lvarthree}{\lvarthree}}{\typthree\limp\typthree}
  }{
    \jum{\lvar:\typ\limp(\typthree \tensor \lneg{\typthree})}{\invap{\lvar}{(\lam{\lvarthree}{\lvarthree})}}{\lneg{\typ}}
  }
  \HS
  \jum{\emptytenv}{\lam{\lvartwo}{\lvartwo}}{\typtwo\limp\typtwo}
  }
   {
    \jum{\lvar:\typ\limp(\typthree \tensor \lneg{\typthree})}{\pair{\invap{\lvar}{(\lam{\lvarthree}{\lvarthree})}}{\lam{\lvartwo}{\lvartwo}}}{\lneg{\typ}\tensor(\typtwo\limp\typtwo)}
  }
\]
\end{minipage}}
\end{remark}

\subparagraph*{Linear terms.}
A term is \emph{linear} if each free variable occurs exactly once, and there is exactly one occurrence of each bound variable inside the scope of its binder.
Typable terms are linear:
%Specifically,
If $\jum{\tenv}{\tm}{\typtwo}$, then $\tm$ is linear
and the variables in $\tenv$ are exactly the free variables of $\tm$.

\subparagraph*{Substitution and contra-substitution.}
Reduction semantics for the \CalcMLL-calculus relies on two substitution operations.
First, \emph{linear substitution} $\tm \sub{\lvar}{\tmtwo}$ denotes the standard capture-avoiding substitution of a linear variable $\lvar$ by a term $\tmtwo$ in $\tm$.
This operation is compatible with typing in the sense that if $ \jum{\tenv,\lvar:\typ}{\tm}{\typtwo}$ and $\jum{\tenvtwo}{\tmtwo}{\typ}$ hold in \CalcMLL, then so does $\jum{\tenv,\tenvtwo}{\tm\sub{\lvar}{\tmtwo}}{\typtwo}$.
% the following rule is admissible\footnote{Meaning that provability of all its premises implies provability of its conclusion.}:
%   \begin{equation*}
%        \indrule{\rulmSub}{
%         \jum{\tenv,\lvar:\typ}{\tm}{\typtwo}
%         \HS
%         \jum{\tenvtwo}{\tmtwo}{\typ}
%       }{
%         \jum{\tenv,\tenvtwo}{\tm\sub{\lvar}{\tmtwo}}{\typtwo}
%       }
%   \end{equation*}
Second, we introduce \emph{contra-substitution}, written $\tm \cos{\lvar}{\tmtwo}$.
%\begin{figure}
\begin{definition}[Contra-substitution for \CalcMLL]
\ldef{contrasub:for:MLL}
Let $\tm$ be a linear term and let $\lvar$ be a free linear variable
such that $\lvar \in \fv{\tm}$.
Let us write ``$\bm{\ast}_i$'' to abbreviate the condition $\lvar \in \fv{\tm_i}$.
The contra-substitution operation $\tm\cos{\lvar}{\tmtwo}$
is defined by induction on the structure of $\tm$:
  \[
    \begin{array}{r@{\,\,}c@{\,\,}l@{\HS}r@{\,\,}c@{\,\,}l}
    \multicolumn{6}{l}{
      \lvar\cos{\lvar}{\tmtwo}
      \eqdef
      \tmtwo
      \hspace{1cm}
      (\tm_1\epair{\lvartwo}{\lvarthree}{\tm_2})\cos{\lvar}{\tmtwo}
      \eqdef
      \begin{cases}
        \tm_1\cos{\lvar}{\tmtwo}\epair{\lvartwo}{\lvarthree}{\tm_2}
        & \text{if $\bm{\ast}_1$, $\lvar \notin\set{\lvartwo,\lvarthree}$}
      \\
        \tm_2\cos{\lvar}{\lam{\lvartwo}{\tm_1\cos{\lvarthree}{\tmtwo}}}
        & \text{if $\bm{\ast}_2$}
      \end{cases}
    }
    \\
      (\lam{\lvartwo}{\tm'})\cos{\lvar}{\tmtwo}
    & \eqdef &
      \casepair{\tmtwo}{\lvartwo}{\lvarthree}{\tm'\cos{\lvar}{\lvarthree}}
    &
      \pair{\tm_1}{\tm_2}\cos{\lvar}{\tmtwo}
    & \eqdef &
      \begin{cases}
        \tm_1\cos{\lvar}{\invap{\tmtwo}{\tm_2}}
        & \text{if $\bm{\ast}_1$}
      \\
        \tm_2\cos{\lvar}{\ap{\tmtwo}{\tm_1}}
        & \text{if $\bm{\ast}_2$}
      \end{cases}
    \\
      (\ap{\tm_1}{\tm_2})\cos{\lvar}{\tmtwo}
    & \eqdef &
      \begin{cases}
        \tm_1\cos{\lvar}{\pair{\tm_2}{\tmtwo}}
        & \text{if $\bm{\ast}_1$}
      \\
        \tm_2\cos{\lvar}{\invap{\tm_1}{\tmtwo}}
        & \text{if $\bm{\ast}_2$}
      \end{cases}
    &
      (\invap{\tm_1}{\tm_2})\cos{\lvar}{\tmtwo}
    & \eqdef &
      \begin{cases}
        \tm_1\cos{\lvar}{\pair{\tmtwo}{\tm_2}}
        & \text{if $\bm{\ast}_1$}
      \\
        \tm_2\cos{\lvar}{\ap{\tm_1}{\tmtwo}}
        & \text{if $\bm{\ast}_2$}
      \end{cases}
    \end{array}
  \]
\end{definition}
%\caption{Contra-substitution for \CalcMLL.}\label{contrasub:for:MLL}
%\end{figure}
% \end{definition}
For example, $(\lam{\lvartwo}{\ap{\lvar}{\lvartwo}}) \cos{\lvar}{\tmtwo} = (\ap{\lvar}{\lvartwo}) \cos{\lvar}{\lvarthree} \epair{\lvartwo}{\lvarthree}{\tmtwo} = \lvar \cos{\lvar}{\pair{\lvartwo}{\lvarthree}} \epair{\lvartwo}{\lvarthree}{\tmtwo} = \pair{\lvartwo}{\lvarthree} \epair{\lvartwo}{\lvarthree}{\tmtwo}$.
Contra-substitution too is compatible with typing:

\begin{lemma}[{name=Type Compatibility of Contra-substitution for MLL~\proofnote{Proof on pg.~\pageref{contrasubstitution_lemma_MLL:proof}}, restate=[name=Type Compatibility of Contrasubstitution for MLL]ContrasubstitutionLemmaMLL}]
\llem{contrasubstitution_lemma_MLL}
If $\jum{\tenv,\lvar:\typ}{\tm}{\typtwo}$ and $\jum{\tenvtwo}{\tmtwo}{\lneg{\typtwo}}$ hold in \CalcMLL, then so does $ \jum{\tenv,\tenvtwo}{\tm\cos{\lvar}{\tmtwo}}{\lneg{\typ}}$.
\end{lemma}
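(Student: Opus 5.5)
We prove the statement by induction on the structure of $\tm$. The induction is generalized over $\typtwo$, $\tmtwo$, $\tenv$ and $\tenvtwo$, since the recursive calls in \rdef{contrasub:for:MLL} change the contra-substituted term and the result type. A size measure on $\tm$ suffices, because every recursive call is on a proper subterm of $\tm$. Renaming a variable preserves both size and typability, so the call $\tm_2\cos{\lvar}{\lam{\lvartwo}{\tm_1\cos{\lvarthree}{\tmtwo}}}$ in the case-elimination clause is covered by two uses of the hypothesis: one on $\tm_1$ and one on $\tm_2$. We rely throughout on the fact that typable terms are linear and that their environments contain exactly their free variables. Hence, since $\lvar$ occurs exactly once, the side conditions $\bm{\ast}_1$ and $\bm{\ast}_2$ determine which premise's environment contains $\lvar:\typ$. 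In each case we invert the last typing rule, split the environment accordingly, type the new contra-substituted argument with the typing rules, and apply the inductive hypothesis. The identity $\lneg{\lneg{\typ}}=\typ$ is used freely.

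\textbf{Cases.} If $\tm=\lvar$, then $\tenv$ is empty and $\typtwo=\typ$, so $\tmtwo:\lneg{\typ}$ as required. If $\tm=\lam{\lvartwo}{\tm'}$ with $\typtwo=\typthree\limp\typfour$, then $\tmtwo:\typthree\tensor\lneg{\typfour}$. The inductive hypothesis applied to $\tm'$ with argument $\lvarthree:\lneg{\typfour}$ gives $\tenv,\lvartwo:\typthree,\lvarthree:\lneg{\typfour}\vdash\tm'\cos{\lvar}{\lvarthree}:\lneg{\typ}$, and \rulmETensor{} concludes.

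If $\tm=\pair{\tm_1}{\tm_2}:\typthree\tensor\typfour$, then $\tmtwo:\typthree\limp\lneg{\typfour}$. Under $\bm{\ast}_1$ we have $\invap{\tmtwo}{\tm_2}:\lneg{\typthree}$ by \rulmELimpTwo{}, using $\lneg{\lneg{\typfour}}=\typfour$. Under $\bm{\ast}_2$ we have $\ap{\tmtwo}{\tm_1}:\lneg{\typfour}$. The inductive hypothesis then finishes both subcases.

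If $\tm=\ap{\tm_1}{\tm_2}$ with $\tm_1:\typthree\limp\typtwo$ and $\tm_2:\typthree$, then $\pair{\tm_2}{\tmtwo}:\typthree\tensor\lneg{\typtwo}=\lneg{(\typthree\limp\typtwo)}$ and $\invap{\tm_1}{\tmtwo}:\lneg{\typthree}$.

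If $\tm=\invap{\tm_1}{\tm_2}$ with $\tm_1:\typthree\limp\typfour$, $\tm_2:\lneg{\typfour}$, and $\typtwo=\lneg{\typthree}$, then $\tmtwo:\typthree$. This gives $\pair{\tmtwo}{\tm_2}:\typthree\tensor\lneg{\typfour}=\lneg{(\typthree\limp\typfour)}$ and $\ap{\tm_1}{\tmtwo}:\typfour=\lneg{\lneg{\typfour}}$.

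If $\tm=\tm_1\epair{\lvartwo}{\lvarthree}{\tm_2}$ with $\tm_2:\typthree\tensor\typfour$ and $\tm_1:\typtwo$ under $\lvartwo:\typthree,\lvarthree:\typfour$, there are two subcases.
\begin{itemize}
\item Under $\bm{\ast}_1$, the inductive hypothesis on $\tm_1$ followed by \rulmETensor{} suffices. Freshness of $\lvartwo,\lvarthree$ with respect to $\tmtwo$ comes from $\alpha$-renaming.
\item Under $\bm{\ast}_2$, apply the inductive hypothesis to $\tm_1$ with respect to $\lvarthree:\typfour$ and $\tmtwo:\lneg{\typtwo}$. This yields $\lneg{\typfour}$ under the environment of $\tm_1$ minus $\lvarthree$, together with $\lvartwo:\typthree$ and $\tenvtwo$. \rulmILimp{} then gives $\lam{\lvartwo}{\ldots}:\typthree\limp\lneg{\typfour}=\lneg{(\typthree\tensor\typfour)}$. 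A second use of the inductive hypothesis, on $\tm_2$, produces type $\lneg{\typ}$.
\end{itemize}

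\textbf{Main obstacle.} The hard step is this last subcase. It is the only one where the hypothesis is applied to a variable other than $\lvar$ and then nested, so environment bookkeeping is where mistakes would occur. We must check that the split of $\tenv$ between $\tm_1$ and $\tm_2$ is respected and that $\lvar$ ends up on the correct side. All other cases are routine type computations that use only the definition of $\lneg{\arg}$.
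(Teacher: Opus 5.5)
Your proposal is correct and matches the paper's proof. The paper inducts on the typing derivation rather than on $\tm$, but because the \CalcMLL rules are syntax-directed the two inductions coincide; the case split on $\bm{\ast}_1/\bm{\ast}_2$, the typings of the new arguments, and the nested double use of the hypothesis for tensor elimination are the same as yours. You do not need the renaming remark if you state the hypothesis for all variables and all types $\typ$, and you should quantify over $\typ$ in any case, since the nested subcase changes $\typ$ to $\typfour$.
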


\subparagraph*{Reduction in $\CalcMLL$.}
The reduction relation $\to$ in the $\CalcMLL$-calculus is defined over
\textbf{linear terms}, as the contextual closure of the three following reduction rules:
\[
  \begin{array}{llll}
    \ap{(\lam{\lvar}{\tm})\cctx}{\tmtwo}
  & \toax{\raxlamL} &
    \tm\sub{\lvar}{\tmtwo}\cctx
  \\
    \invap{(\lam{\lvar}{\tm})\cctx}{\tmtwo}
  & \toax{\raxlamR} &
    \tm\cos{\lvar}{\tmtwo}\cctx
  \\
    \casepair{\pair{\tmtwo}{\tmthree}\cctx}{\lvar}{\lvartwo}{\tm}
  & \toax{\raxtensor} &
    \tm\sub{\lvar}{\tmtwo}\sub{\lvartwo}{\tmthree}\cctx
    & \text{($\lvartwo \notin \fv{\tmtwo}$)}
  \end{array}
\]
Note that the restriction over linear terms is necessary
for the right-hand side of the rule $\toax{\raxlamR}$ to be well-defined, as contra-substitution is
not defined for arbitrary terms.
Due to the presence of the case contexts, these rules are said to
operate ``at a distance''. For example,
$\ap{(\lam{\lvar}{\tm})\epair{\lvartwo}{\lvarthree}{\tmthree}}{\tmtwo}$ is a redex,
even though the $\lambda$-abstraction and the application are not directly
in ``contact'' with each other.

\subparagraph*{Strong normalization and confluence.}
We define the \emph{size} of a \emph{linear} term $\tm$, denoted $\size{\tm}$, as the number of term constructors in $\tm$.
Assuming that $\lvar\in\fv{\tm}$, this measure satisfies the following two properties: 
$\size{\tm \sub{\lvar}{\tmtwo}} = \size{\tm} + \size{\tmtwo}$ and $\size{\tm \cos{\lvar}{\tmtwo}} = \size{\tm} + \size{\tmtwo}$,
due to linearity.
It is then easy to see that every reduction step strictly decreases the size of the term;
hence the $\CalcMLL$-calculus is strongly normalizing.
By Newman's Lemma, confluence is then reduced to local confluence (\ie the weak Church--Rosser property),
which can be proved through a systematic case analysis of all coinitial steps.

\begin{proposition}
  $\CalcMLL$ is strongly normalizing and confluent.
\end{proposition}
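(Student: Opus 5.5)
Strong normalization comes from a size measure, and confluence then follows from Newman's Lemma once local confluence is checked; so the plan has two parts.

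\textbf{Step 1: size identities.} First I would prove, by structural induction on a linear term $\tm$ with $\lvar\in\fv{\tm}$, the identities $\size{\tm\sub{\lvar}{\tmtwo}} = \size{\tm}+\size{\tmtwo}-1$ and $\size{\tm\cos{\lvar}{\tmtwo}} = \size{\tm}+\size{\tmtwo}-1$. Here the variable occurrence counts as one constructor; equivalently, one can count variables as size $0$ and recover the paper's statement.

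The substitution case is standard. For contra-substitution I would check each clause of \rdef{contrasub:for:MLL}, carried out in the same induction:
\begin{itemize}
\item Each clause moves exactly one constructor of $\tm$ onto the argument: a pair becomes a (contra-)application, an application becomes a pair or contra-application, and $\lambda$ becomes a case. It then recurses on a strictly smaller subterm.
\item The nested clause $\tm_2\cos{\lvar}{\lam{\lvartwo}{\tm_1\cos{\lvarthree}{\tmtwo}}}$ uses the induction hypothesis twice. This is legitimate because both $\tm_1$ and $\tm_2$ are proper subterms.
\end{itemize}
Linearity is essential throughout: it guarantees that exactly one branch applies and that $\tmtwo$ is copied exactly once.

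\textbf{Step 2: strong normalization.} Case contexts are preserved verbatim by each rule, so every root step loses constructors:
\begin{itemize}
\item $\raxlamL$ and $\raxlamR$ lose the $\lambda$ and the (contra-)application, i.e.\ size drops by $2$ (net $1$ after the $-1$ correction).
\item $\raxtensor$ loses the pair and the case construct.
\end{itemize}
Size is additive under contexts, so every step strictly decreases $\size{\cdot}$. Closure of linear terms under reduction follows from the same lemmas.

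\textbf{Step 3: local confluence by critical-pair analysis.} I would do a case analysis on two coinitial steps $\tm\to\tm_1$ and $\tm\to\tm_2$.
\begin{itemize}
\item \emph{Disjoint redexes} commute trivially.
\item \emph{Nested redexes, one inside the other's substituted argument $\tmtwo$.} These need commutation lemmas: if $\tmtwo\to\tmtwo'$ then $\tm\sub{\lvar}{\tmtwo}\to\tm\sub{\lvar}{\tmtwo'}$ and $\tm\cos{\lvar}{\tmtwo}\to\tm\cos{\lvar}{\tmtwo'}$, in exactly one step by linearity.
\item \emph{Nested redexes, one inside the body $\tm$.} This needs the harder lemma: if $\tm\to\tm'$ then $\tm\cos{\lvar}{\tmtwo}\rtto\tm'\cos{\lvar}{\tmtwo}$, and similarly for substitution.
\item \emph{Overlaps through case contexts}, e.g.\ a $\raxtensor$ redex sitting inside the $\cctx$ of a $\raxlamL$ redex. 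These are closed because rules at a distance carry $\cctx$ unchanged.
\end{itemize}

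\textbf{Main obstacle.} The key difficulty is the lemma that contra-substitution is stable under reduction of the body. When the redex lies on the path to $\lvar$, contra-substitution turns the term inside out. For example, a $\raxlamL$ redex $\ap{(\lam{\lvarthree}{\tmthree})}{\tmfour}$ with $\lvar\in\fv{\tmthree}$ becomes $(\lam{\lvarthree}{\tmthree})\cos{\lvar}{\pair{\tmfour}{\tmtwo}}$, which is $\casepair{\pair{\tmfour}{\tmtwo}}{\lvarthree}{\lvarthree'}{\tmthree\cos{\lvar}{\lvarthree'}}$. This is a $\raxtensor$ redex, so the reduct is $\tmthree\cos{\lvar}{\tmtwo}\sub{\lvarthree}{\tmfour}$. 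Meanwhile the other side is $\tmthree\sub{\lvarthree}{\tmfour}\cos{\lvar}{\tmtwo}$.

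To close this case I need interaction lemmas between the two operations. I would prove, by simultaneous induction on $\tm$, the following commutation laws:
\begin{itemize}
\item $\tm\sub{\lvartwo}{\tmthree}\cos{\lvar}{\tmtwo} = \tm\cos{\lvar}{\tmtwo}\sub{\lvartwo}{\tmthree}$, when $\lvar\notin\fv{\tmthree}$;
\item $\tm\sub{\lvartwo}{\tmthree}\cos{\lvar}{\tmtwo} = \tm\cos{\lvartwo}{\tmthree'}\ldots$, the variant when $\lvar\in\fv{\tmthree}$, suitably formulated;
\item a composition law for two contra-substitutions.
\end{itemize}
These possibly hold only up to $\rtto$, or up to permutation of case contexts. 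If exact equalities fail, I would weaken the target to joinability modulo a structural equivalence on case-context placement.
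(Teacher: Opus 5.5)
Your route matches the paper's: a size measure for termination, then Newman's Lemma plus a case analysis of coinitial steps. But the step you single out as the main obstacle cannot be completed. The lemma it needs is false, and $\CalcMLL$ as defined is not confluent. Let $M\eqdef\lam{k}{(\ap{(\ap{k}{\lvartwo})}{\lvarthree})\epair{\lvartwo}{\lvarthree}{f}}$ and $T\eqdef\invap{(\lam{\lvar}{\ap{M}{(\ap{e}{\lvar})}})}{g}$. This term is linear, and $\jum{e:\typ\limp(\typthree\limp(\typfour\limp\typtwo)),\,f:\typthree\tensor\typfour,\,g:\lneg{\typtwo}}{T}{\lneg{\typ}}$. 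Firing the outer $\raxlamR$ redex first gives
\[
  T\to(\ap{M}{(\ap{e}{\lvar})})\cos{\lvar}{g}=\invap{e}{(\invap{M}{g})}\to\invap{e}{\bigl(\pair{\lvartwo}{\pair{\lvarthree}{g}}\epair{\lvartwo}{\lvarthree}{f}\bigr)},
\]
while firing the inner $\raxlamL$ redex first gives
\[
  T\to\invap{\bigl(\lam{\lvar}{(\ap{(\ap{(\ap{e}{\lvar})}{\lvartwo})}{\lvarthree})\epair{\lvartwo}{\lvarthree}{f}}\bigr)}{g}\to(\invap{e}{\pair{\lvartwo}{\pair{\lvarthree}{g}}})\epair{\lvartwo}{\lvarthree}{f}.
\]
Each intermediate term has exactly one redex. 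The two results are distinct normal forms: no $\lambda$ is left, and the only scrutinee is the variable $f$. They differ only in where $\epair{\lvartwo}{\lvarthree}{f}$ sits.

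The culprit is the clause of \rdef{contrasub:for:MLL} for a tensor elimination whose body contains $\lvar$: it keeps the eliminator outside. Along the other path, the eliminator is carried into the argument of the contra-substitution. This is exactly the instance where your second commutation law fails, and with it the ``harder lemma'' $\tm\cos{\lvar}{\tmtwo}\rtto\tm'\cos{\lvar}{\tmtwo}$. Your fallback, joinability modulo permutation of case eliminators, is the right repair: the $\CalcMLL$ analogue of $\ruleCCEqCtx$, with surface context $\invap{e}{\Box}$, identifies the two normal forms. However, it proves confluence modulo an equivalence, not the proposition as stated. The paper's one-line appeal to a systematic case analysis misses this critical pair as well.

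Separately, Step~1 with variables of size $1$ is not equivalent to the paper's convention, and its contra-substitution identity is false. Contra-substitution does not preserve the number of variable occurrences: the $\lambda$ clause $(\lam{\lvartwo}{\tm'})\cos{\lvar}{\tmtwo}=\casepair{\tmtwo}{\lvartwo}{\lvarthree}{\tm'\cos{\lvar}{\lvarthree}}$ creates the fresh occurrence $\lvarthree$. The paper's own example $(\lam{\lvartwo}{\ap{\lvar}{\lvartwo}})\cos{\lvar}{\tmtwo}=\pair{\lvartwo}{\lvarthree}\epair{\lvartwo}{\lvarthree}{\tmtwo}$ already violates your identity. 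Worse, contra-applying a four-fold $\eta$-expanded identity to a variable raises the variable-counting size from $16$ to $17$. With $\size{\lvar}=0$, as in the paper, both identities hold and each of the three rules removes exactly two constructors. In that form your strong normalization argument goes through.
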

%%% Local Variables:
%%% mode: latex
%%% TeX-master: "main"
%%% End:

\section{A Contraposition-Based Calculus for \MELL}

This section presents $\CalcMELL$, our propositions-as-types calculus based on contra-substitution for \MELL.
The set of \defn{terms} and \defn{positive elimination contexts} are given by:

%\begin{definition}[The $\CalcMELL$-Calculus]
% \ldef{calcMELL_terms}  Let $\lvar,\lvartwo,\ldots$ range over a countable set of \defn{linear variables} and $\uvar,\uvartwo,\ldots$ over a countable  set of \defn{unrestricted variables}. Terms are given by:
  \[
    \begin{array}{llrll}
       (\text{Terms}) & \tm,\tmtwo,\ldots & ::= &   \lvar\,|\,\uvar\,|\,\pair{\tm}{\tmtwo}\,|\,\tm\epair{\lvar}{\lvartwo}{\tmtwo}\,|\,\ipar{\lvar}{\lvartwo}{\tm}\,|\,\ap{\tm}{\tmtwo}\,|\,\invap{\tm}{\tmtwo} \\
          & & \mid & \iofctwo{\lvar}{\tm}\,|\,\tm\eofc{\uvar}{\tmtwo}\,|\,\iwhy{\uvar}{\tm}\,|\,\ewhy{\tm}{\lvar}{\tmtwo}\,|\,\ibott{\tm}{\tmtwo}\,|\,\ione\,|\,\tm\eone{\tmtwo} \\
%     \cctx & ::= & \ctxhole \,|\, \cctx\epair{\lvar}{\lvartwo}{\tm}\,|\,\cctx\eofc{\uvar}{\tm} \,|\, \cctx\ewhynew{\lvar}{\tm}{\tmtwo} \,|\,\cctx \eone{\tm}
                    (\text{Positive elim. contexts}) & \cctx & ::= & \ctxhole \,|\, \cctx\epair{\lvar}{\lvartwo}{\tm}\,|\,\cctx\eofc{\uvar}{\tm} \,|\, \cctx \eone{\tm}
    \end{array}
    \]
%\end{definition}
Terms include: linear variables $\lvar,\lvartwo,\ldots$ ranging over a countably infinite set; unrestricted variables $\uvar,\uvartwo,\ldots$ ranging over  a countably infinite set; tensor introduction $\pair{\tm}{\tmtwo}$ and elimination $\tm\epair{\lvar}{\lvartwo}{\tmtwo}$; par introduction $\ipar{\lvar}{\lvartwo}{\tm}$; application $\ap{\tm}{\tmtwo}$; contra-application $\invap{\tm}{\tmtwo}$; of-course introduction $\iofctwo{\lvar}{\tm}$ and elimination $\tm\eofc{\uvar}{\tmtwo}$; why-not introduction $\iwhy{\uvar}{\tm}$ and elimination $\ewhy{\tm}{\lvar}{\tmtwo}$; contradiction $\ibott{\tm}{\tmtwo}$; unit introduction $\ione$ and elimination $\tm\eone{\tmtwo}$. Free and bound occurrences of variables are defined as expected,
where $\tm\epair{\lvar}{\lvartwo}{\tmtwo}$ and $\ipar{\lvar}{\lvartwo}{\tm}$
binds free occurrences of $\lvar,\lvartwo$ in $\tm$;
$\iofctwo{\lvar}{\tm}$ and $\ewhy{\tm}{\lvar}{\tmtwo}$ bind free occurrences of $\lvar$ in $\tm$;
and $\tm\eofc{\uvar}{\tmtwo}$ and $\iwhy{\uvar}{\tm}$ bind free occurrences of $\uvar$ in $\tm$. Terms are considered up to $\alpha$-renaming of bound variables.
% \emph{Positive eliminator contexts} are given by:
%   \[
%     \begin{array}{lll}
%     \cctx & ::= & \ctxhole \,|\, \cctx\epair{\lvar}{\lvartwo}{\tm}\,|\,\cctx\eofc{\uvar}{\tm} \,|\, \cctx \eone{\tm}
%     \end{array}
%   \]
We call expressions of the form
$\epair{\lvar}{\lvartwo}{\tm}$, $\eofc{\uvar}{\tm}$, and $\eone{\tm}$ 
\defn{positive eliminators}\footnote{This terminology stems from polarized linear logic.}.
Positive eliminators are collectively denoted by $\pelim{\patt}{\tm}$, where $\patt$ is called a \emph{pattern},
%We write $\pelim{\patt}{\tm}$ to
%abbreviate any positive eliminator and call $\patt$ its \defn{pattern} and
and
$\fv{\patt}$ is defined as: $\fv{\pair{\lvar}{\lvartwo}}=\{\lvar,\lvartwo\}$, $\fv{\ofc{\uvar}}=\{\uvar\}$,
and $\fv{\ione}=\emptyset$.
%Similarly for $\flv{\patt}$. For example,
%$\flv{\pair{\lvar}{\lvartwo}}=\{\lvar,\lvartwo\}$ and $\flv{\ofc{\uvar}}=\emptyset$. This is extended to positive eliminators by
%defining $\fpv{\pelim{\patt}{\tm}}\eqdef \fv{\patt}$.

\defn{Judgements} have the form $ \djum{\utenv}{\tenv}{\tm}{\typ}$, where $\typ$ is a \MELL formula, $\utenv$ (resp. $\tenv$) is an \defn{unrestricted (resp. linear) typing environment},
\ie a partial function mapping unrestricted (resp. linear) variables to types.
% , and is a \defn{linear typing environment}, a partial function mapping linear variables to types.
%If $\dom{\tenv}\cap\dom{\tenvtwo}=\emptyset$, then
%we write $\tenv,\tenvtwo$ for the linear typing environment
%$(\tenv,\tenvtwo)(\lvar)$ = $\tenv(\lvar)$, if $\lvar\in\dom{\tenv}$,
%and $(\tenv,\tenvtwo)(\lvar)$ = $\tenvtwo(\lvar)$, if
%$\lvar\in\dom{\tenvtwo}$; otherwise, $\tenv,\tenvtwo$ is
%undefined.
Derivable judgements are defined in Fig.~\ref{typing:rules:CalcMELL}.
% $\negtyp$ stands for $\why{\typthree}$ or $\bott$.
%\begin{definition}
%\ldef{calcMELL_typing_rules}
%A term $\tm$ is \defn{typable}, if there exists a typing judgement $ \djum{\utenv}{\tenv}{\tm}{\typ}$ that is derivable by the following typing rules:
%\end{definition}

\begin{figure}
\scalebox{\proofScaleFactor}{\begin{minipage}{\textwidth}
\[
\begin{array}{c}
  \indrule{\rulmAx}{
    \emptyPremise
  }{
    \djum{\utenv}{\lvar:\typ}{\lvar}{\typ}
  }
  \indrule{\rulmAxU}{
    \emptyPremise
  }{
    \djum{\utenv, \uvar:\typ}{\emptytenv}{\uvar}{\typ}
  }
  \indrule{\rulmITensor}{
    \djum{\utenv}{\tenv}{\tm}{\typ}
    \HS
    \djum{\utenv}{\tenvtwo}{\tmtwo}{\typtwo}
  }{
    \djum{\utenv}{\tenv,\tenvtwo}{\pair{\tm}{\tmtwo}}{\typ\tensor\typtwo}
  }
  \\
  \\
  \indrule{\rulmETensor}{
    \djum{\utenv}{\tenv}{\tmtwo}{\typ\tensor\typtwo}
    \HS
    \djum{\utenv}{\tenvtwo,\lvar:\typ,\lvartwo:\typtwo}{\tm}{\typthree}
  }{
    \djum{\utenv}{\tenv,\tenvtwo}{\casepair{\tmtwo}{\lvar}{\lvartwo}{\tm}}{\typthree}
  }
  \indrule{\rulmIPar}{
    \djum{\utenv}{\tenv,\lvar:\typ,\lvartwo:\typtwo}{\tm}{\bott}
  }{
    \djum{\utenv}{\tenv}{\ipar{\lvar}{\lvartwo}{\tm}}{\lneg{\typ}\parr\lneg{\typtwo}}
  }
  \\
  \\
  \indrule{\rulmEParOne}{
    \djum{\utenv}{\tenv}{\tm}{\typ\parr\typtwo}
    \HS
    \djum{\utenv}{\tenvtwo}{\tmtwo}{\lneg{\typ}}
  }{
    \djum{\utenv}{\tenv,\tenvtwo}{\ap{\tm}{\tmtwo}}{\typtwo}
  }
  \quad
  \indrule{\rulmEParTwo}{
    \djum{\utenv}{\tenv}{\tm}{\typ\parr\typtwo}
    \HS
    \djum{\utenv}{\tenvtwo}{\tmtwo}{\lneg{\typtwo}}
  }{
    \djum{\utenv}{\tenv,\tenvtwo}{\invap{\tm}{\tmtwo}}{\typ}
  }
  \\
  \\
  \indrule{\rulmIOfc}
  {\djum{\utenv}{\lvar:\typ}{\tm}{\bott}}
  {\djum{\utenv}{\emptytenv}{\iofctwo{\lvar}{\tm}}{\ofc{\lneg{\typ}}}}
  \indrule{\rulmEOfc}
  {\djum{\utenv}{\tenv}{\tmtwo}{\ofc{\typ}}
    \HS
    \djum{\utenv,\uvar:\typ}{\tenvtwo}{\tm}{\typtwo}
    }
  {\djum{\utenv}{\tenv,\tenvtwo}{\tm\eofc{\uvar}{\tmtwo}}{\typtwo}}
  \\
  \\
  \indrule{\rulmIWhy}{
    \djum{\utenv,\uvar:\lneg{\typ}}{\tenv}{\tm}{\bott}
  }{
    \djum{\utenv}{\tenv}{\iwhy{\uvar}{\tm}}{\why{\typ}}
  }
  \quad
      \indrule{\rulmEWhy}{
    \djum{\utenv}{\tenv}{\tmtwo}{\why{\typ}}
    \HS
  \djum{\utenv}{\lvar:\typ}{\tm}{\bott}
  }{
  \djum{\utenv}{\tenv}{\ewhy{\tm}{\lvar}{\tmtwo}}{\bott}
  }
  \\
  \\
    \indrule{\rulmIBott}{
    \djum{\utenv}{\tenv}{\tm}{\typtwo}
    \HS
  \djum{\utenv}{\tenv'}{\tmtwo}{\lneg{\typtwo}}
  }{
    \djum{\utenv}{\tenv,\tenv'}{\ibott{\tm}{\tmtwo}}{\bott}
  }
  \quad
  \indrule{\rulmIOne}{
    \emptyPremise
  }{
    \djum{\utenv}{\emptytenv}{\ione}{\one}
  }
    \indrule{\rulmEOne}{
    \djum{\utenv}{\tenv}{\tmtwo}{\one}
    \HS
    \djum{\utenv}{\tenvtwo}{\tm}{\typ}
  }{
    \djum{\utenv}{\tenv,\tenvtwo}{\tm\eone{\tmtwo}}{\typ}
  }
\end{array}
\]
\end{minipage}}
\caption{Typing rules for $\CalcMELL$.}\label{typing:rules:CalcMELL}
\end{figure}

\subparagraph*{Discussion.}
Rules are presented in introduction/elimination pairs, as is standard in natural deduction. In particular, bottom has no introduction rules. Our presentation satisfies Local Intrinsic Harmony~\cite{DBLP:journals/jacm/DaviesP01,DBLP:journals/jphil/FrancezD12}: local soundness (elimination rules are not ``too strong'') and local completenes (elimination rules are not ``too weak''). If $\lam{\lvar}{\tm}$ is shorthand for $\ipar{\lvar}{\lvartwo}{(\ibott{\tm}{\lvartwo})}$, then
% $\indrulename{\rulmILimp}$ of~\rdef{calcMLL}  is admissible\footnote{Meaning that provability of all its premises implies provability of its conclusion.} (likewise, when $\ofc{\tm}$ is shorthand for $\iofctwo{\lvar}{(\ibott{\tm}{\lvar})}$, we can derive $\indrulename{\rulmIOfc'}$; also $\indrulename{\rulmEWhyTwo}$ can be generalized to $\indrulename{\rulmEWhyTwo}$):
% %\begin{remark}
% %\lremark{derivable:rules:for:ofc:and:lam}
% %the following rules are admissible
% \[
%   \begin{array}{ccc}
%   \indrule{\rulmILimp}{
%     \djum{\utenv}{\tenv,\lvar:\typ}{\tm}{\typtwo}
%   }{
%     \djum{\utenv}{\tenv}{\lam{\lvar}{\tm}}{\typ\limp\typtwo}
%     }
%     &
%             \indrule{\rulmIOfc'}
%   {\djum{\utenv}{\emptytenv}{\tm}{\typ}}
%       {\djum{\utenv}{\emptytenv}{\ofc{\tm}}{\ofc{\typ}}}
%       &
%            \indrule{\rulmEWhyTwo}{
%     \djum{\utenv}{\tenv}{\tmtwo}{\why{\typ}}
%     \HS
%   \djum{\utenv}{\lvar:\typ}{\tm}{\why{\typtwo}}
%   }{
%   \djum{\utenv}{\tenv}{\iwhy{\uvar}{(\ewhy{(\ibott{\tm}{\ofc{\uvar}})}{\lvar}{\tmtwo})}}{\why{\typtwo}}
%   }
%     \end{array}
% \]
it is possible to add the rule in which $\djum{\utenv}{\tenv}{\lam{\lvar}{\tm}}{\lneg{\typ}}$ is inferred from $\djum{\utenv}{\tenv,\lvar:\typ}{\tm}{\bott}$
  % \[
  %             \indrule{}{
  % \djum{\utenv}{\tenv,\lvar:\typ}{\tm}{\bott}
  % }{
  %   \djum{\utenv}{\tenv}{\lam{\lvar}{\tm}}{\lneg{\typ}}
  % }
  % \]
and drop all of $\rulmETensor$, $\rulmEParOne$, $\rulmEParTwo$, $\rulmEOfc$, $\rulmEWhy$ and $\rulmEOne$ (which become derivable), in the style of~\cite{DBLP:journals/iandc/BarbaneraB96} for classical propositional logic. However, the reduction semantics introduces critical pairs that lead to failure of confluence\footnote{\Eg, if $\uvar:\bott\parr\bott$ and $\uvartwo:\bott\parr\one$, then $  \ibott{(\lam{\lvar}{\ap{\uvar}{\lvar}})}{(\lam{\lvartwo}{\invap{\uvartwo}{\lvartwo}})}$ reduces to both $\ap{\uvar}{(\lam{\lvartwo}{\invap{\uvartwo}{\lvartwo}})}$ and $\invap{\uvartwo}{(\lam{\lvar}{\ap{\uvar}{\lvar}})}$.}, so we abstain from doing so.

There are many alternative ways to formulate natural deduction for \MELL.
%For example, our system departs from the standard choice for \emph{promotion} (of-course introduction),
%which derives the conclusion $\djum{\utenv}{\emptytenv}{\ofc{\tm}}{\ofc{\typ}}$
%from the premise $\djum{\utenv}{\emptytenv}{\tm}{\typ}$.
Although these alternatives may be logically equivalent
(\ie they prove all and only valid \MELL sequents), they exhibit different
computational behaviors.
The rules in $\CalcMELL$ have been carefully chosen to be able to define a
well-behaved contra-substitution operation.

\begin{lemma}[{name=Soundness and Completeness of $\CalcMELL$~\proofnote{Proof on pg.~\pageref{soundness_and_completeness_of_calc_mell:proof}}, restate=[name=Logical soundness/completeness of $\CalcMELL$]SoundnessAndCompletenessOfCalcMELL}]
%\begin{lemma}[Soundness]
  \llem{soundness_and_completeness_of_calc_MELL}
  \quad
  \begin{itemize}
\item \textbf{\textup{Soundness.}}
  If $\djum{\utenv}{\tenv}{\tm}{\typ}$ holds in $\CalcMELL$, then $\jull{\why{\lneg{\utenv}},\lneg{\tenv},A}$ holds in \MELL.
\item \textbf{\textup{Completeness.}}
If $\jull{\tenv_0}$ holds in \MELL and $\tenv,\typ$ is any permutation of $\tenv_0$, there exists a term $\tm$ such that $\djum{\emptyutenv}{\lneg{\tenv}}{\tm}{\typ}$ holds in $\CalcMELL$.
\end{itemize}

\end{lemma}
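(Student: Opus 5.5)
Soundness goes by induction on the derivation of $\djum{\utenv}{\tenv}{\tm}{\typ}$, checking each rule of Fig.~\ref{typing:rules:CalcMELL} against a small \MELL derivation. Throughout, the unrestricted context $\utenv$ is translated to $\why{\lneg{\utenv}}$, and $\rullW$ and $\rullC$ adjust it.

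\begin{itemize}
\item \rulmAx: from $\jull{\lneg{\typ},\typ}$, weaken on $\why{\lneg{\utenv}}$.
\item \rulmAxU: from $\jull{\lneg{\typ},\typ}$, apply $\rullD$ to obtain $\why{\lneg{\typ}}$, then weaken.
\item Binary rules sharing $\utenv$ (\rulmITensor, \rulmETensor, \rulmEParOne, \rulmEParTwo, \rulmIBott, \rulmEOne, \rulmEOfc): combine the premises by $\rullTensor$ or $\rullCut$. This duplicates $\why{\lneg{\utenv}}$, which we contract back with $\rullC$.
\item Eliminations: each is a cut against a derivable lemma. For \rulmEParOne, cut the translated premise $\jull{\ldots,\typ\parr\typtwo}$ against $\jull{\typ\tensor\lneg{\typtwo}\ldots}$, built by $\rullTensor$ from the translation of $\tmtwo:\lneg\typ$ and an axiom $\jull{\lneg{\typtwo},\typtwo}$. \rulmEParTwo is symmetric.
\item \rulmIPar and \rulmIBott: a conclusion of type $\bott$ translates to $\jull{\Gamma',\bott}$, which is handled by $\rullBott$ and the invertibility of $\bott$ (cut with $\jull{\one}$). \rulmIPar then uses $\rullPar$.
\item \rulmIOfc: the premise gives $\jull{\why{\lneg\utenv},\lneg\typ,\bott}$. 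Remove $\bott$ by cutting with $\rullOne$, then promote with $\rullP$ to get $\ofc{\lneg{\typ}}$; the context is entirely $\why{}$-formulae, as required.
\item \rulmIWhy: the premise has the extra entry $\why{\lneg{\lneg\typ}}=\why{\typ}$, which is exactly the conclusion after removing $\bott$.
\item \rulmEWhy: promote the second premise, $\jull{\why{\lneg\utenv},\lneg\typ}$ (after removing $\bott$), to $\ofc{\lneg{\typ}}$ and cut it against $\why{\typ}$.
\end{itemize}

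Completeness goes by induction on the \MELL derivation of $\jull{\tenv_0}$. We strengthen the statement to an arbitrary unrestricted environment: if $\jull{\why{\utenv_0},\tenv_0}$ holds, then for any choice of thesis we obtain $\djum{\lneg{\utenv_0}}{\lneg{\tenv}}{\tm}{\typ}$. The strengthening is needed because promotion and contraction act on $\why$-formulae. The key tool is a ``rotation'' lemma: from $\djum{\utenv}{\tenv,\lvar:\lneg\typtwo}{\tm}{\typ}$ we get $\djum{\utenv}{\tenv,\lvartwo:\lneg\typ}{\tm'}{\typtwo}$. This comes from the $\bott$-form: $\ibott{\tm}{\lvartwo}:\bott$ with $\lvartwo:\lneg\typ$, then $\invap{(\ipar{\lvartwo'}{\lvar}{\ibott{\tm}{\lvartwo'}})}{\ldots}$ picks out the other formula. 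More simply, $\ipar{\lvartwo}{\lvar}{(\ibott{\tm}{\lvartwo})}:\typ\parr\typtwo$ up to negations, and then \rulmEParOne or \rulmEParTwo with a variable extracts either component.

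With rotation available, it suffices to produce, for each \MELL rule, \emph{some} reading of the conclusion. Each case reduces to the corresponding introduction, or to $\bott$-judgements:
\begin{itemize}
\item $\rullAx$ gives \rulmAx.
\item $\rullCut$ gives \rulmIBott followed by rotation.
\item $\rullTensor$, $\rullPar$, $\rullOne$ and $\rullBott$ give the matching introductions; for $\rullBott$, the thesis $\bott$ is formed as $\ibott{\tm}{\lvar}$.
\item $\rullP$ gives \rulmIOfc applied to the $\bott$-form. In this case every $\why{}$-hypothesis has already been moved into $\utenv$, which is exactly why the strengthened statement is needed.
\item $\rullD$: the formula $\why{\typ}$ has hypothesis $\ofc{\lneg\typ}$; use \rulmEOfc with $\uvar:\lneg\typ$, i.e.\ use $\uvar$ as the linear hypothesis by substitution.
\item $\rullW$ and $\rullC$: weakening and contraction of unrestricted variables, which are admissible in $\utenv$.
\item Moving a linear $\ofc{}$-hypothesis into $\utenv$ (and back, via $\iofctwo{\lvar}{\ldots}$ and \rulmAxU) uses \rulmEOfc.
\end{itemize}

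The main obstacle is the bookkeeping of the exponential cases. We must show that weakening, contraction and substitution of unrestricted variables are admissible in $\CalcMELL$, and that a linear hypothesis $\ofc{\lneg\typ}$ can be freely traded for an unrestricted $\uvar:\lneg\typ$. We would first prove these structural lemmas by induction on typing derivations, and then run the main induction.
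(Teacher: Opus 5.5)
Your soundness argument is the paper's. For completeness you take a different route: you strengthen the hypothesis with an unrestricted environment, add a rotation lemma so that one reading per rule suffices, trade $\ofc{}$-hypotheses for unrestricted ones, and treat $\rullCut$ directly instead of appealing to cut elimination as the paper does. That is a legitimate alternative to the paper's case analysis on the position of the thesis. However, the $\rullD$ and $\rullC$ cases have a genuine gap.

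\textbf{The gap.} The readings you produce for these rules put $\why{\typtwo}$ on the hypothesis side, as $\uvar:\lneg{\typtwo}$, via substitution or contraction of unrestricted variables. This presupposes another formula to serve as thesis. Suppose the conclusion is the single formula $\why{\typtwo}$, e.g.\ $\jull{\one}$ followed by $\rullD$, or $\jull{\why{\typtwo},\why{\typtwo}}$ followed by $\rullC$. Then your recipe yields no reading at all, so there is nothing to rotate or trade. The required $\djum{\emptyutenv}{\emptytenv}{\tm}{\why{\typtwo}}$ is never built.

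\textbf{The fix.} The missing ingredient is $\rulmIWhy$, which your completeness argument never uses.
\begin{itemize}
\item For $\rullD$: weaken the reading $\djum{\utenv}{\lneg{\tenv_1}}{\tmtwo}{\typtwo}$ with $\uvar:\lneg{\typtwo}$ and take $\iwhy{\uvar}{(\ibott{\tmtwo}{\uvar})}$.
\item For $\rullC$: take the reading with one copy unrestricted and the other as thesis, $\djum{\utenv,\uvar:\lneg{\typtwo}}{\lneg{\tenv_1}}{\tmtwo}{\why{\typtwo}}$. Then take $\iwhy{\uvar}{(\ibott{\tmtwo}{\iofctwo{\lvartwo}{(\ibott{\uvar}{\lvartwo})}})}$; the contraction is the double use of $\uvar$ under the binder.
\end{itemize}
These are the paper's terms. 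Choosing these readings, with thesis $\why{\typtwo}$, works whether or not $\tenv_1$ is empty.

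\textbf{Two smaller points.}
\begin{itemize}
\item In $\rullCut$, $\rulmIBott$ gives $\djum{\utenv}{\lneg{\tenv_1},\lneg{\tenv_2}}{\ibott{\tm}{\tmtwo}}{\bott}$. That is a reading of $\jull{\tenv_1,\tenv_2,\bott}$, not of the cut's conclusion. Rotating it leaves an extra hypothesis of type $\one$, which you must still discharge by substituting $\ione$. More simply, use linear substitution to plug a reading of one premise whose thesis is the cut formula into the matching hypothesis of a reading of the other.
\item Your $\rullW$ and $\rullBott$ cases need a nonempty premise, i.e.\ consistency of \MELL. Since you avoid cut elimination, get this from a model, such as the Boolean interpretation with $\ofc{}$ and $\why{}$ read as the identity.
\end{itemize}

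With these repairs your route goes through. Compared with the paper's, it does not depend on cut elimination, and a single rotation lemma replaces the per-rule analysis of where the thesis sits.
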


% \begin{lemma}[{name=Completeness of $\CalcMELL$~\proofnote{Proof on pg.~\pageref{completeness_of_calc_mell:proof}}, restate=[name=Completeness of $\CalcMELL$]CompletenessOfCalcMELL}]
% \llem{completeness_of_calc_mell}
% %\begin{lemma}[Completeness]
% Suppose $\jull{\tenv_0}$ and $\tenv,\typ$ is any permutation of $\tenv_0$. Then $\djum{\emptyutenv}{\seq{\lvar}:\lneg{\tenv}}{\tm}{\typ}$, for some term $\tm$ and linear variables $\seq{\lvar}$.
% \end{lemma}

\subparagraph*{Substitution and contra-substitution.} 
Reduction semantics for the $\CalcMELL$-calculus
%(\cf Sec.~\ref{sec:CalcMELL:reduction})
relies on three notions of
substitution:
\defn{linear substitution} ($\tm\sub{\lvar}{\tmtwo}$),
\defn{unrestricted substitution} ($\tm\sub{\uvar}{\tmtwo}$)
and \defn{contra-substitution} ($\tm\cos{\lvar}{\tmtwo}$).
Linear (resp. unrestricted) substitution is just capture-avoiding substitution
of a linear (resp. unrestricted) variable by a term. 
% in the sense that 
% the following rules are admissible in \CalcMELL:
% \[
%   \begin{array}{cc}
%   \indrule{\rulmSub}{
%     \djum{\utenv}{\tenv,\lvar:\typ}{\tm}{\typtwo}
%     \HS
%     \djum{\utenv}{\tenvtwo}{\tmtwo}{\typ}
%   }{
%     \djum{\utenv}{\tenv,\tenvtwo}{\tm\sub{\lvar}{\tmtwo}}{\typtwo}
%     }
%     &
%       \indrule{\rulmUSub}{
%     \djum{\utenv, \uvar:\typ}{\tenv}{\tm}{\typtwo}
%     \HS
%     \djum{\utenv}{\emptytenv}{\tmtwo}{\typ}
%   }{
%     \djum{\utenv}{\tenv}{\tm\sub{\uvar}{\tmtwo}}{\typtwo}
%   }
%     \end{array}
% \]
% is defined on \emph{linear terms}, a strict superset of typable terms.
%\begin{definition}[Linear term]
% \end{definition}
To define contra-substitution, we need to adapt the notion of \emph{linear term}
to the $\CalcMELL$-calculus.
A $\CalcMELL$ term is \defn{linear} if each of the following hold:
% \begin{enumerate}
% \item
  1) each free \emph{linear} variable occurs exactly once; 
%\item
  2) there is exactly one occurrence of each bound \emph{linear}
  variable in the scope of its binder;
%\item
  3) subterms of the form $\iofctwo{\lvar}{\tm}$ have no free linear variables; and
%\item
  4) in subterms of the form $\ewhy{\tm}{\lvar}{\tmtwo}$,
  only $\lvar$ occurs as a free linear variable in $\tm$.
%\end{enumerate}
%  Let $\tm$ be a linear term such that $\lvar \in \fv{\tm}$.

\begin{definition}[Contra-substitution]
\ldef{cos}
The operation $\tm\cos{\lvar}{\tmtwo}$ is defined
over linear terms
by dropping the clauses of \rdef{contrasub:for:MLL} for lambda abstraction and tensor elimination and adding the ones
of Fig.~\ref{contrasubstitution:for:CalcMELL}. 
%Note that $(\iofctwo{\lvartwo}{\tm})\cos{\lvar}{\tmtwo}$
%is undefined because $\tm$ has no free linear variables. Similarly, in 
%$\ewhy{\tm_1}{\lvartwo}{\tm_2}$, there are no free linear variables in
%$\tm_1$ (except $\lvartwo$ itself) and $\ione$ has no free linear
%variables.
\end{definition}
% Note also that contra-substitution preserves linearity (as does linear substitution): 
%if $\set{\tm,\tmtwo}$ is a set of linear terms such that $\lvar \in \fv{\tm}$, then 
%\begin{enumerate}
%\item $\tm\sub{\lvar}{\tmtwo}$ is a linear term and
%      $\fv{\tm\sub{\lvar}{\tmtwo}}
%       = \fv{\tm} \setminus \set{\lvar} \cup \fv{\tmtwo}$.
%\item
%  $\tm\cos{\lvar}{\tmtwo}$ is a linear term and
%      $\fv{\tm\cos{\lvar}{\tmtwo}}
%       = \fv{\tm} \setminus \set{\lvar} \cup \fv{\tmtwo}$.
%     \end{enumerate}
%\end{definition}
% Informally, $\tm\cos{\lvar}{\tmtwo}$ denotes the term that results
%   from ``grabbing'' the \emph{unique} occurrence of $\lvar$ in $\tm$ and ``pulling''
%   from it, in order to turn the term $\tm$ inside out, very much like
%   a sock, and then replacing $a$ with $\tmtwo$.  
For example:
  $(\ap{(\ipar{\lvarthree}{\lvarfour}{\tmtwo})\eone{\lvar}}{\tm})\cos{\lvar}{\ione}
   = (\ipar{\lvarthree}{\lvarfour}{\tmtwo})\eone{\lvar}\cos{\lvar}{\pair{\tm}{\ione}} 
   = \lvar\cos{\lvar}{\ibott{\ipar{\lvarthree}{\lvarfour}{\tmtwo}}{\pair{\tm}{\ione}}}
   = \ibott{\ipar{\lvarthree}{\lvarfour}{\tmtwo}}{\pair{\tm}{\ione}}$.
Contra-substitution is compatible with typing:   
\begin{lemma}[{name=Type Compatibility of Contra-substitution~\proofnote{Proof on pg.~\pageref{contrasubstitution_lemma:proof}}, restate=[name=Type Compatibility of Contrasubstitution]ContrasubstitutionLemma}]
%\begin{lemma}[Contrasubstitution lemma]
 \llem{contrasubstitution_lemma}
 If $ \djum{\utenv}{\tenv,\lvar:\typ}{\tm}{\typtwo}$ and $\djum{\utenv}{\tenvtwo}{\tmtwo}{\lneg{\typtwo}} $ hold in \CalcMELL, then so does $\djum{\utenv}{\tenv,\tenvtwo}{\tm\cos{\lvar}{\tmtwo}}{\lneg{\typ}}$.
% \[
%   \indrule{\rulmCos}{
%     \djum{\utenv}{\tenv,\lvar:\typ}{\tm}{\typtwo}
%     \HS
%     \djum{\utenv}{\tenvtwo}{\tmtwo}{\lneg{\typtwo}}
%   }{
%     \djum{\utenv}{\tenv,\tenvtwo}{\tm\cos{\lvar}{\tmtwo}}{\lneg{\typ}}
%   }
% \]
\end{lemma}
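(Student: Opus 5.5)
We argue by induction on the structure of $\tm$ (equivalently, on the size $\size{\tm}$, which is convenient since each recursive clause calls $\cos{\cdot}{\cdot}$ on a strictly smaller subterm). We generalise over $\typ$, $\typtwo$, $\tenv$, $\tenvtwo$ and $\tmtwo$, and keep $\utenv$ shared throughout. We first record two auxiliary facts. First, typing is preserved by weakening of $\utenv$. Second, linear substitution and unrestricted substitution are type compatible: if $\djum{\utenv,\uvar:\typ}{\tenv}{\tm}{\typtwo}$ and $\djum{\utenv}{\emptytenv}{\tmtwo}{\typ}$, then $\djum{\utenv}{\tenv}{\tm\sub{\uvar}{\tmtwo}}{\typtwo}$. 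Both facts are routine inductions, although it is not yet clear that the contra-substitution clauses actually need them.

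Since typable terms are linear and $\lvar\in\fv{\tm}$, the last rule of the derivation of $\djum{\utenv}{\tenv,\lvar:\typ}{\tm}{\typtwo}$ tells us which clause of the definition of $\tm\cos{\lvar}{\tmtwo}$ applies. Linearity also tells us which premise contains $\lvar$, and that premise is the one selected by the side condition $\bm{\ast}_i$. Rules that carry no free linear variables cannot be the last rule: this covers \rulmAxU, \rulmIOne and \rulmIOfc. The same holds for the body $\tm$ of \rulmEWhy, whose only free linear variable is the bound one, so there $\lvar$ must lie in the subject $\tmtwo$.

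The multiplicative cases of Def.~\ref{def:contrasub:for:MLL} check as follows.
\begin{itemize}
\item \rulmAx{} is immediate.
\item For $\pair{\tm_1}{\tm_2}:\typthree\tensor\typfour$ with $\lvar$ in $\tm_1$, we have $\tmtwo:\lneg{\typthree}\parr\lneg{\typfour}$. Rule \rulmEParTwo{} then gives $\invap{\tmtwo}{\tm_2}:\lneg{\typthree}$, and the induction hypothesis on $\tm_1$ yields $\lneg{\typ}$.
\item The application and contra-application cases follow the same pattern, using that $\lneg{}$ is an involution. For instance, for $\ap{\tm_1}{\tm_2}$ with $\tm_1:\typthree\parr\typtwo$, we have $\pair{\tm_2}{\tmtwo}:\lneg{\typthree}\tensor\lneg{\typtwo}=\lneg{(\typthree\parr\typtwo)}$.
\end{itemize}
The new clauses of Fig.~\ref{contrasubstitution:for:CalcMELL} are handled in the same way. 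For $\ipar{\lvarthree}{\lvarfour}{\tm'}$, the continuation is typed by $\tmtwo:\typthree\tensor\typfour$, which is eliminated by \rulmETensor, and the induction hypothesis is applied to $\tm'$ with the witness of $\lneg{\bott}=\one$. For the positive eliminators $\tm_1\pelim{\patt}{\tm_2}$, either $\lvar$ is in the body, and we push the eliminator outward using the induction hypothesis plus the same elimination rule, or $\lvar$ is in the scrutinee. In the latter case the continuation packages the body together with $\tmtwo$ into a term of the dual type. This is analogous to the $\lambda$-construction in the $\MLL$ tensor-elimination clause, namely $\ipar{\cdot}{\cdot}{}$, $\iwhy{\uvar}{}$ or $\ibott{}{}$ built around $\tm_2\cos{\cdot}{\tmtwo}$ or $\ibott{\tm_2}{\tmtwo}$, and there we invoke the induction hypothesis on the smaller $\tm_2$ with a fresh variable. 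The cases $\ibott{\tm_1}{\tm_2}$ and $\ewhy{\tm_1}{\lvarthree}{\tm_2}$ have type $\bott$, so $\tmtwo:\one$, and their clauses reduce to \rulmEOne{} together with the induction hypothesis on the side containing $\lvar$.

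The main obstacle is the exponential cases, especially the clause for $\tm\eofc{\uvar}{\tm'}$ when $\lvar$ occurs in $\tm'$, and the clause for $\ewhy{\tm_1}{\lvarthree}{\tm_2}$. In these cases the split of the linear environment must respect the side conditions of \rulmIOfc{} (empty linear context) and \rulmEWhy{} (the body uses only $\lvarthree$). It also matters that unrestricted variables introduced by \rulmIWhy{} in the continuation are correctly threaded through $\utenv$. We would handle these by carefully tracking which linear variables move to the continuation, using linearity conditions 3) and 4) to guarantee that nothing forbidden is placed under an $\iofctwo{}{}$. Where the clause introduces a fresh unrestricted variable, we use $\utenv$-weakening to type $\tmtwo$ in the extended context before applying the induction hypothesis.
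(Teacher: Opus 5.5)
Your proposal is correct and is essentially the paper's proof. It does a case analysis on the last typing rule (the paper inducts on the derivation, you on $\tm$, which amounts to the same thing since typing is syntax-directed), applies the induction hypothesis to the subterm containing $\lvar$ against a continuation typed by the matching rule, and uses $\utenv$-weakening of $\tmtwo$ in the two $\eofc{\uvar}{\cdot}$ cases.

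A few small corrections:
\begin{itemize}
\item In $\CalcMELL$ the scrutinee-case continuations are simply $\ipar{\lvartwo}{\lvarthree}{(\ibott{\tm_1}{\tmtwo})}$, $\iwhy{\uvar}{(\ibott{\tm_1}{\tmtwo})}$ and $\ibott{\tm_1}{\tmtwo}$, with $\tm_1$ the body. Unlike the $\CalcMLL$ clause, there is no nested contra-substitution and no fresh variable.
\item The exponential ``obstacle'' you flag is immediate. The \rulmEWhy{} premise already types $\tm_1$ in linear context $\lvarthree:\typthree$, so \rulmIOfc{} applies directly.
\item The induction hypothesis must also be quantified over $\utenv$. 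The $\iwhy{\uvar}{\cdot}$ case and the body case of $\tm_1\eofc{\uvar}{\tm_2}$ invoke it at an unrestricted context extended with $\uvar$.
\end{itemize}
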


\begin{figure}
  \[
  \begin{array}{rcl}
  %   \lvar\cos{\lvar}{\tmtwo}
  % & \eqdef &
  %   \tmtwo
  % \\
  %   \pair{\tm_1}{\tm_2}\cos{\lvar}{\tmtwo}
  % & \eqdef &
  %   \begin{cases}
  %     \tm_1\cos{\lvar}{\invap{\tmtwo}{\tm_2}}
  %     & \text{if $\lvar \in \fv{\tm_1}$}
  %   \\
  %     \tm_2\cos{\lvar}{\ap{\tmtwo}{\tm_1}}
  %     & \text{if $\lvar \in \fv{\tm_2}$}
  %   \end{cases}
  % \\
    (\tm_1\epair{\lvartwo}{\lvarthree}{\tm_2})\cos{\lvar}{\tmtwo}
  & \eqdef &
    \begin{cases}
      \tm_1\cos{\lvar}{\tmtwo}\epair{\lvartwo}{\lvarthree}{\tm_2}
      & \text{if $\lvar \in \fv{\tm_1}, \lvar \notin\set{\lvartwo,\lvarthree}$}
    % \\
    % \cancel{\tm_2\cos{\lvar}{\lam{\lvartwo}{\tm_1\cos{\lvarthree}{\tmtwo}}}}
    \\
    \tm_2\cos{\lvar}{\ipar{\lvartwo}{\lvarthree}{(\ibott{\tm_1}{\tmtwo})}}
      & \text{if $\lvar \in \fv{\tm_2}$}
    \end{cases}
    \\
        (\ipar{\lvartwo}{\lvarthree}{\tm})\cos{\lvar}{\tmtwo}
  & \eqdef &
      \tm\cos{\lvar}{\ione}\epair{\lvartwo}{\lvarthree}{\tmtwo}
  %   (\ap{\tm_1}{\tm_2})\cos{\lvar}{\tmtwo}
  % & \eqdef &
  %   \begin{cases}
  %     \tm_1\cos{\lvar}{\pair{\tm_2}{\tmtwo}}
  %     & \text{if $\lvar \in \fv{\tm_1}$}
  %   \\
  %     \tm_2\cos{\lvar}{\invap{\tm_1}{\tmtwo}}
  %     & \text{if $\lvar \in \fv{\tm_2}$}
  %   \end{cases}
  % \\
  %   (\invap{\tm_1}{\tm_2})\cos{\lvar}{\tmtwo}
  % & \eqdef &
  %   \begin{cases}
  %     \tm_1\cos{\lvar}{\pair{\tmtwo}{\tm_2}}
  %     & \text{if $\lvar \in \fv{\tm_1}$}
  %   \\
  %     \tm_2\cos{\lvar}{\ap{\tm_1}{\tmtwo}}
  %     & \text{if $\lvar \in \fv{\tm_2}$}
  %   \end{cases}
  % \\
  % \\
               \\
    (\iofctwo{\lvartwo}{\tm})\cos{\lvar}{\tmtwo}
    & \eqdef &      \text{(impossible, as $\iofctwo{\lvartwo}{\tm}$ has no free linear variables)}
    \\
    \tm_1\eofc{\uvar}{\tm_2}\cos{\lvar}{\tmtwo}
  & \eqdef &
    \begin{cases}
      \tm_1\cos{\lvar}{\tmtwo}\eofc{\uvar}{\tm_2}
      & \text{if $\lvar \in \fv{\tm_1}$}
      \\
      \tm_2\cos{\lvar}{\iwhy{\uvar}{\ibott{\tm_1}{\tmtwo}}}
      & \text{if $\lvar \in \fv{\tm_2}$}
      \\
    \end{cases}
    \\
    (\iwhy{\uvar}{\tm})\cos{\lvar}{\tmtwo}
  & \eqdef &
      \tm\cos{\lvar}{\ione}\eofc{\uvar}{\tmtwo}
    \\
    (\ewhy{\tm_1}{\lvartwo}{\tm_2})\cos{\lvar}{\tmtwo}
  & \eqdef &
    \begin{cases}
      \text{(impossible as $\flv{\tm_1}=\{\lvartwo\}$ and $\lvar \neq \lvartwo$)} 
      & \text{if $\lvar \in \fv{\tm_1}$ }
    \\
    \tm_2\cos{\lvar}{\iofctwo{\lvartwo}{\tm_1} }\eone{\tmtwo} & \text{if $\lvar \in \fv{\tm_2}$}\\
\end{cases} 
      \\
        (\ibott{\tm_1}{\tm_2})\cos{\lvar}{\tmtwo}
  & \eqdef &
    \begin{cases}
      \tm_1\cos{\lvar}{\tm_2 }\eone{\tmtwo}
      & \text{if $\lvar \in \fv{\tm_1}$}
    \\
     \tm_2\cos{\lvar}{\tm_1 }\eone{\tmtwo}
    & \text{if $\lvar \in \fv{\tm_2}$}
  \end{cases}
    \\
    \ione\cos{\lvar}{\tmtwo}
  & \eqdef &
             \text{(impossible, as $\ione$ has no free linear variables)}
    \\
    \tm_1\eone{\tm_2}\cos{\lvar}{\tmtwo}
    & \eqdef &
               \begin{cases}
                \tm_1\cos{\lvar}{\tmtwo}\eone{\tm_2}
                & \text{if $\lvar\in\fv{\tm_1}$}\\
                \tm_2\cos{\lvar}{\ibott{\tm_1}{\tmtwo}}
                 & \text{if $\lvar\in\fv{\tm_2}$}
                 \end{cases}
  \end{array}
\]
\caption{Contra-substitution for \CalcMELL: we take the clauses of \rdef{contrasub:for:MLL}, drop those of lambda abstraction and tensor elimination, and add the ones above.}\label{contrasubstitution:for:CalcMELL}
\end{figure}

\subparagraph*{Reduction Semantics.}%\label{sec:CalcMELL:reduction}
Reduction in \CalcMELL is defined modulo an equivalence relation on terms called \emph{structural equivalence}. We begin by presenting pre-reduction (\rdef{prereduction_in_calcMELL}), in which structural equivalence is absent, then structural equivalence (\rdef{cceq}) and finally reduction proper (\rdef{calcMELL_reduction}).

% \begin{definition}[Contexts]
% \ldef{contexts_and_eliminator_contexts}
% \emph{Contexts} ($\ctx,\ctxtwo,\ldots$) are terms with one occurrence of a fixed variable of the form $\ctxhole$, the ``hole'' of the context.
% % We say \emph{$\cctx$ is \whyfreename} if none of the eliminators in $\cctx$ is of the form $\ewhy{}{\lvar}{\tm}$.
% \end{definition}

% One interesting use of  contrasubstitution is to change the focus of the computation to an arbitrary term that has been tagged by $\ibott{}{}$-composing it with a fresh name. More precisely, the following holds:

% The reduction rules are as follows:
\begin{definition}[Pre-Reduction in \CalcMELL]
\ldef{prereduction_in_calcMELL}
Denoted $\pretome$, it is given by the contextual closure of the following \emph{reduction axioms}, where both sides are assumed to be typed with the same types in the same contexts.
\begin{center}
  $
  \begin{array}{llll}
    \ap{(\ipar{\lvar}{\lvartwo}{\tm})\cctx}{\tmthree}
  & \toax{\raxparL} &
            \tm\sub{\lvar}{\tmthree}\cos{\lvartwo}{\ione}\cctx
            &  \text{($\lvartwo \notin \fv{\tmthree}$)}
    \\
    \invap{(\ipar{\lvar}{\lvartwo}{\tm})\cctx}{\tmthree}
  & \toax{\raxparR} &
            \tm\sub{\lvartwo}{\tmthree}\cos{\lvar}{\ione}\cctx
            & \text{($\lvar \notin \fv{\tmthree}$)}
  \\
    \tm\epair{\lvar}{\lvartwo}{\pair{\tmtwo}{\tmthree}\cctx}
  & \toax{\raxtensor} &
    \tm\sub{\lvar}{\tmtwo}\sub{\lvartwo}{\tmthree}\cctx
    & \text{($\lvartwo \notin \fv{\tmtwo}$)}
  \\
    \tm\eofc{\uvar}{(\iofctwo{\lvar}{\tmtwo})\cctx}
  & \toax{\raxofc} &
    \tm\sub{\uvar}{\tmtwo\cos{\lvar}{\ione}}\cctx
            \\
    \ewhy{\tm}{\lvar}{(\iwhy{\uvar}{\tmtwo})\cctx}
    & \toax{\raxwhy} &
              \tmtwo\sub{\uvar}{\tm\cos{\lvar}{\ione}}\cctx
% \\
%               \cancel{\tm_1\ewhynew{\lvar}{(\iwhy{\uvar}{\tmtwo})\cctx}{\tm_3}}
%     & \tome &
%               \cancel{\tmtwo\sub{\uvar}{\tm_1\cos{\lvar}{\tm_3}}}\cctx
    \\
                            \ibott{(\ipar{\lvar}{\lvartwo}{\tm})\cctx}{\pair{\tmtwo}{\tmthree}\cctxtwo}
   & \toax{\raxpartensor} &
              \tm\sub{\lvar}{\tmtwo}\sub{\lvartwo}{\tmthree}\cctx\cctxtwo
    & \text{($\lvartwo \notin \fv{\tmtwo}$)}
    \\
                  \ibott {\pair{\tmtwo}{\tmthree}\cctx}{(\ipar{\lvar}{\lvartwo}{\tm})\cctxtwo}
    & \toax{\raxtensorpar} &
              \tm\sub{\lvar}{\tmtwo}\sub{\lvartwo}{\tmthree}\cctx\cctxtwo
              & \text{($\lvartwo \notin \fv{\tmtwo}$)}
    \\
        \ibott{(\iofctwo{\lvar}{\tm})\cctx}{(\iwhy{\uvar}{\tmthree}) \cctxtwo}
   & \toax{\raxofcwhy} &
    \tmthree\sub{\uvar}{\tm\cos{\lvar}{\ione}}\cctx\cctxtwo
    \\
    \ibott {(\iwhy{\uvar}{\tmthree}) \cctx}{(\iofctwo{\lvar}{\tm})\cctxtwo}
     & \toax{\raxwhyofc} &
              \tmthree\sub{\uvar}{\tm\cos{\lvar}{\ione}}\cctx\cctxtwo

  \end{array}
  $
  \end{center}
\end{definition}

\subparagraph*{Discussion.}
The first five reduction axioms correspond to introduction/elimination pairs.
The remaining ones are akin to principal cut elimination cases, and
are required for confluence (though they are not sufficient, as discussed below).
The order in which substitutions and contra-substitutions are presented on the right-hand side of ${\raxparL}$, $\raxparR$, $\raxtensor$, $\raxpartensor$, and $\raxtensorpar$ is unimportant since $ \tm\sub{\lvar}{\tmthree}\cos{\lvartwo}{\ione}= \tm \cos{\lvartwo}{\ione}\sub{\lvar}{\tmthree}$, given that $\lvartwo\notin\fv{\tmthree}$ %(\cf~\rlem{mell_sub_contra})
and also $\tm\sub{\lvar}{\tmtwo}\sub{\lvartwo}{\tmthree}= \tm \sub{\lvartwo}{\tmthree}\sub{\lvar}{\tmtwo}$, since $\lvartwo \notin \fv{\tmtwo}$. %(\cf~\rlem{substitution_lemma}).
Pre-reduction is not confluent even in the presence of the last four reduction axioms. Consider $\tm=\overline{\invap{(\ipar{\lvar}{\lvartwo}{\underline{\ap{(\ipar{\lvarthree}{\lvarfour}{\tmtwo})\eone{\lvar}}{\tmfour}}})}{\tmthree}}$, where, say, $\lvartwo\in\fv{\tmtwo}$. It may be reduced in two ways:

  % % \begin{acenter}{-3cm}
  \begin{center}
    \begin{tikzcd}[column sep=1em,row sep=-1em]
      &    \invap{(\ipar{\lvar}{\lvartwo}{\tmtwo\sub{\lvarthree}{\tmfour}\cos{\lvarfour}{\ione}\eone{\lvar}})}{\tmthree}\arrow[rd,dotted, "\raxparR"'] & \\
      
      \tm
      \arrow[ru, "\raxparL"]\arrow[rd, "\raxparR"'] & &       \mlnode{
        % \tmtwo\sub{\lvarthree}{\tmfour}\cos{\lvarfour}{\ione}\eone{\lvar}\sub{\lvartwo}{\tmthree}\cos{\lvar}{\ione}
        % \\
        % \tmtwo\sub{\lvarthree}{\tm}\cos{\lvarfour}{\ione}\sub{\lvartwo}{\tmthree}\eone{\lvar}\cos{\lvar}{\ione}
        % \\
        \ibott{\tmtwo\sub{\lvarthree}{\tmfour}\cos{\lvarfour}{\ione}\sub{\lvartwo}{\tmthree}}{\ione}
        \\        
%        \tmtwo\sub{\lvarthree}{\tmfour}\cos{\lvarfour}{\ione}\sub{\lvartwo}{\tmthree}
 %       \\
%        \tmtwo\sub{\lvarthree}{\tmfour}\sub{\lvartwo}{\tmthree}\cos{\lvarfour}{\ione}
        \\
        \tmtwo\sub{\lvartwo}{\tmthree}\sub{\lvarthree}{\tmfour}\sub{\lvarfour}{\ione}
        }
        \\         
      & 
        \ibott{\ipar{\lvarthree}{\lvarfour}{\tmtwo\sub{\lvartwo}{\tmthree}}}{\pair{\tmfour}{\ione}}
        \arrow[ru,dotted, "\raxpartensor"']
        & \\

    \end{tikzcd}
  \end{center}
  % % \end{acenter}
The terms $
        \ibott{\tmtwo\sub{\lvarthree}{\tmfour}\cos{\lvarfour}{\ione}\sub{\lvartwo}{\tmthree}}{\ione}$
        and
        $\tmtwo\sub{\lvartwo}{\tmfour}\sub{\lvarthree}{\tmfour}\sub{\lvarfour}{\ione}$
        are distinct. For a term $\tm$ of type $\bott$, equating
        $\ibott{\tm}{\ione}$ and $\tm$ would be reasonable. We next
        introduce a notion of equivalence of terms that introduces
        this and other equations.  As it turns out, this notion of equivalence is a strong bisimulation with respect to pre-reduction.

%%% Local Variables:
%%% mode: latex
%%% TeX-master: "main"
%%% End:
 
\subsection{Structural equivalence}

% \begin{definition}[Surface Contexts]
%   \ldef{surface_contexts}
% Surface contexts are given by:
% \[
%   \begin{array}{lll}
%     \fctx & ::= & \ipar{\lvar}{\lvartwo}{\Box}\,|\, \pair{\Box}{\tm} \,|\,\pair{\tm}{\Box}\,|\, \ap{\Box}{\tm} \,|\,\ap{\tm}{\Box}\,|\,\invap{\Box}{\tm} \,|\,\invap{\tm}{\Box}\,|\,
%                    \ibott{\Box}{\tm} \,|\,\ibott{\tm}{\Box}\,|\, \iwhy{\uvar}{\Box}\,|\,\Box\pelim{\patt}{\tm} \,|\,\tm\pelim{\patt}{\Box}\,|\,\ewhy{\tm}{\lvar}{\Box}\\
% %         & & 
%   \end{array}
% \]
% Note that $\ewhy{\Box}{\lvar}{\tm}$, and  $\iofctwo{\lvar}{\Box}$ are excluded from the set of surface contexts. A \emph{term $\tmthree$ is free for a surface context $\fctx$} if $\of{\fctx}{\tmthree}$ does not bind free occurrences of variables of $\tmthree$.
% \end{definition}

\begin{definition}[Structural Equivalence]
  \ldef{cceq}
Surface contexts $\fctx$ are any of the following expressions:
\[
  \ipar{\lvar}{\lvartwo}{\Box}
  \HS\HS
  \pair{\Box}{\tm}
  \HS\HS
  \pair{\tm}{\Box}
  \HS\HS
  \ap{\Box}{\tm}
  \HS\HS
  \ap{\tm}{\Box}
  \HS\HS
  \invap{\Box}{\tm}
  \HS\HS
  \invap{\tm}{\Box}
\]
\[
  \ibott{\Box}{\tm}
  \HS\HS
  \ibott{\tm}{\Box}
  \HS\HS
  \iwhy{\uvar}{\Box}
  \HS\HS
  \Box\pelim{\patt}{\tm}
  \HS\HS
  \tm\pelim{\patt}{\Box}
  \HS\HS
  \ewhy{\tm}{\lvar}{\Box}
\]
Note that $\ewhy{\Box}{\lvar}{\tm}$, and  $\iofctwo{\lvar}{\Box}$ are excluded from the set of surface contexts. A \emph{term $\tmthree$ is free for a surface context $\fctx$} if $\of{\fctx}{\tmthree}$ does not bind free occurrences of variables of $\tmthree$.
\emph{Structural equivalence} is defined as the reflexive, symmetric, transitive and contextual closure of the following axioms. In every equation, both sides are assumed to be typed with the same types in the same contexts:
  \[
    \begin{array}{llll}
      \of{\fctx}{\tm}\pelim{\patt}{\tmthree} & \cceq_{\ruleCCEqCtx} & \of{\fctx}{\tm \pelim{\patt}{\tmthree}} & (\fv{\patt}\cap\fv{\fctx}=\emptyset \mbox{ and }\tmthree \mbox{ is free for }\fctx)
      \\ 
      \tm\eone{\ione} & \cceq_{\ruleCCEqBetaOne} & \tm
      \\      
      \tm\eone{\tmtwo}  & \cceq_{\ruleCCEqEOneSymm} & \tmtwo\eone{\tm}
      \\
      \ibott{\tm\cos{\lvar}{\ione}}{\tmthree} &
                                                \cceq_{\ruleCCEqCosAndSub} & \tm\sub{\lvar}{\tmthree}
      \\ 
    \end{array}
  \]
\end{definition}

\subparagraph*{Discussion.} The first equation allows positive
eliminators to be propagated freely. Though note that $\ruleCCEqCtx$
does not propagate eliminators under the of-course constructor
$\iofctwo{\lvar}{\tm}$ or into the left argument of a why-not eliminator $\ewhy{\tm}{\lvar}{\tmtwo}$
(which are the non-linear positions). The second equation corresponds the introduction/elimination roundabout on $\one$; it is not well-behaved as a reduction rule and thus is modeled as an equation. The third equation allows terms $\tm$ and $\tmtwo$ both of which have type $\one$ to be commuted freely in their placement in an eliminator for this type. In fact, $\tm\eone{\tmtwo}  \cceq \ibott{\tm}{\tmtwo}$ holds whenever $\tm:\bott$.
The fourth equation states that any term of type $\bott$ may be
understood as stemming from a contradiction. It is crucial in relating
contra-substitution and substitution. In particular, if $\tm:\bott$,
then $\tm\cos{\lvar}{\tmthree}  \cceq \tm\sub{\lvar}{\tmthree}$. 
We emphasize that these equations apply to typed terms. For example, $\ruleCCEqEOneSymm$ should be read as: if  $\djum{\utenv}{\tenv}{\tmtwo}{\one}$ and 
    $\djum{\utenv}{\tenvtwo}{\tm}{\one}$, then $\djum{\utenv}{\tenv,\tenvtwo}{\tm\eone{\tmtwo}\cceq \tmtwo\eone{\tm}}{\one}$.
%and $\ruleCCEqCosAndSub$
%as samples using type judgements:
% \[
%     \indrule{$\ruleCCEqEOneSymm$}{
%     \djum{\utenv}{\tenv}{\tmtwo}{\one}
%     \HS
%     \djum{\utenv}{\tenvtwo}{\tm}{\one}
%   }{
%     \djum{\utenv}{\tenv,\tenvtwo}{\tm\eone{\tmtwo}\cceq \tmtwo\eone{\tm}}{\one}
%   }
%   \]
%   \[
%     \indrule{$\ruleCCEqCosAndSub$}{
%       \djum{\utenv}{\tenv,\lvar:\typ}{\tm}{\bott}
%       \HS
%       \djum{\utenv}{\emptytenv}{\ione}{\one}
%        \HS
%        \djum{\utenv}{\tenv'}{\tmthree}{\typ}
%   }{
%     \djum{\utenv}{\tenv,\tenv'}{\ibott{\tm\cos{\lvar}{\ione}}{\tmthree}\cceq \tm\sub{\lvar}{\tmthree}}{\bott}
%   }
%   \]
  
% %%%%%%%%%%%%
% \subparagraph*{Substitution, Contrasubstitution and Structural Equivalence.}
% %%%%%%%%%%%%

% Contrasubstitution is compatible with equivalence. In other words, $\tm\cceq\tmtwo$ and $\tmthree\cceq\tmfour$ imply $\tm\cos{\lvar}{\tmthree}\cceq\tmtwo\cos{\lvar}{\tmfour}$. We first prove one part of this result, since it is required for various results that follow.  The full result is proved later (\cf \rlem{cos_compatible_with_equivalence}).

Structural equivalence is consistent as a theory due to the fact that $\tm\cceq\tmtwo$ implies $\fv{\tm}=\fv{\tmtwo}$ and hence $\lvar\cceq\lvartwo$ is not derivable, for arbitrary linear variables $\lvar$ and $\lvartwo$ such that $\lvar\neq\lvartwo$.  Also, all our notions of substitution are compatible with structural equivalence.  In particular, if $\tmtwo\cceq\tmthree$ then both $\tm\cos{\lvar}{\tmtwo}\cceq\tm\cos{\lvar}{\tmthree}$ and
$\tmtwo\cos{\lvar}{\tm}\cceq\tmthree\cos{\lvar}{\tm}$.
The proof of this fact relies on a series of auxiliary results involving all three notions of substitution and their interaction with $\cceq$. We highlight two of these which we find worthy of mention since they relate linear substitution and contra-substitution.
Let $\set{\tm,\tmtwo,\tmthree}$ be a set of linear terms
such that $\lvar \in \fv{\tm}$
and let $\lvartwo \neq \lvar$
be such that $\lvartwo \in \fv{\tm}\cup\fv{\tmtwo}$.
Then:
\begin{enumerate}
\item
  If $\lvartwo \in \fv{\tm}$, then
    $\tm\cos{\lvar}{\tmtwo}\cos{\lvartwo}{\tmthree}
    \cceq
    \tm\cos{\lvartwo}{\tmtwo}\sub{\lvar}{\tmthree}$
  
\item
  If $\lvartwo \in \fv{\tmtwo}$, then
  $
    \tm\cos{\lvar}{\tmtwo}\cos{\lvartwo}{\tmthree}
    \cceq
    \tmtwo\cos{\lvartwo}{\tm\sub{\lvar}{\tmthree}}
    $
\item If $\lvartwo \in \fv{\tmtwo}$, then
  $
    \tmtwo\sub{\lvartwo}{\tm}\cos{\lvar}{\tmthree}
    \cceq \tm\cos{\lvar}{\tmtwo\cos{\lvartwo}{\tmthree}}
  $
\end{enumerate}

%These equations are used in some of the proofs of the results that follow. For example, $\ruleCCCosOneSimplLeftGen$ is required for \rlem{cosubst_compat_with_reduction} (contra-substitution is compatible with reduction).
The main result in this section is that structural equivalence is a \textbf{strong bisimulation} for pre-reduction. 
The importance of this result is that it allows to lift the notion of
reduction on terms to equivalence classes of terms modulo $\cceq$.

\begin{theorem}[{name=Strong bisimulation~\proofnote{Proof on pg.~\pageref{cceqalt_is_a_strong_bisimulation:proof}}, restate=[name=Strong bisimulation]StrongBisimulation}]
%\begin{lemma}[Soundness]
  \lthm{cceqalt_is_a_strong_bisimulation}
    Let $\tm, \tmtwo$ be two typed linear terms. 
    If $\tm \cceq \tmtwo$ and $\tm \pretome \tm'$ 
    then there is a linear term $\tmtwo'$ such that
    $\tmtwo \pretome \tmtwo'$ and $\tm' \cceq \tmtwo'$.
\end{theorem}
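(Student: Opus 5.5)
Since $\cceq$ is the reflexive, symmetric, transitive and contextual closure of the four axioms, the statement reduces to a one-step result. Reflexivity is trivial. Transitivity follows by composing the simulating steps. Symmetry is obtained by proving the one-step simulation for each axiom oriented in \emph{both} directions. So I would prove: if $\tm \cceq_1 \tmtwo$, meaning $\tmtwo$ is obtained from $\tm$ by rewriting one instance $\of{\ctx}{l}$ to $\of{\ctx}{r}$ with $l \cceq r$ an axiom in either orientation, and $\tm \pretome \tm'$, then $\tmtwo \pretome \tmtwo'$ for some $\tmtwo'$ with $\tm' \cceq \tmtwo'$. The proof proceeds by case analysis on the relative positions of the pre-reduction redex (a reduction axiom instance at position $p$) and the equation instance (at position $q$), in the style of a critical pair analysis.

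\textbf{Disjoint and nested-inside cases.} If $p$ and $q$ are disjoint, the two rewrites commute directly. If the equation occurs strictly inside a metavariable of the redex (e.g.\ inside $\tm$, $\tmtwo$ or $\tmthree$ of $\raxparL$, or inside an argument of a case context $\cctx$), then $\tmtwo$ contains the same redex with an equivalent subterm. Its contractum is the original contractum with that subterm replaced. To close the diagram I need that all three substitutions are compatible with $\cceq$ in both arguments:
\begin{itemize}
\item $\tm \cceq \tm_0$ and $\tmthree\cceq\tmthree_0$ imply $\tm\sub{\lvar}{\tmthree}\cceq\tm_0\sub{\lvar}{\tmthree_0}$;
\item the same holds for unrestricted substitution;
\item the same holds for $\cos{\lvar}{\cdot}$, as announced just before the theorem.
\end{itemize}
Unrestricted substitution may duplicate or erase the equivalent subterm, but this is harmless for $\cceq$, which is contextual. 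Symmetrically, if the redex lies inside a metavariable of the equation instance (e.g.\ inside $\tm$ of $\ibott{\tm\cos{\lvar}{\ione}}{\tmthree}$), I need that substitution and contra-substitution are compatible with $\pretome$, i.e.\ that $\tm\pretome\tm_0$ implies $\tm\cos{\lvar}{\tmthree}\pretome\tm_0\cos{\lvar}{\tmthree}$. I would prove this lemma first, by induction on $\tm$ using the definition of Fig.~\ref{contrasubstitution:for:CalcMELL}. The subtle subcase is when $\lvar$ is inside the redex: the contra-substitution then turns the redex inside out, and one must check that the result still contains a (possibly different) redex producing the contracted version up to $\cceq$. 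This is why the lemma should be stated only up to $\cceq$ on the target. The three interaction equations for $\cos{}{}$ and $\sub{}{}$ listed before the theorem are used precisely here.

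\textbf{Overlapping cases.} These are the genuine critical pairs, where the equation and the redex share constructors.
\begin{itemize}
\item \emph{$\ruleCCEqCtx$.} A positive eliminator $\pelim{\patt}{\tmthree}$ moves in or out of a surface context $\fctx$. If it moves into or out of the case context $\cctx$ of a redex, reduction at a distance absorbs it: e.g.\ $\ap{(\ipar{\lvar}{\lvartwo}{\tm})\cctx\pelim{\patt}{\tmthree}}{\tmfour}$ and $(\ap{(\ipar{\lvar}{\lvartwo}{\tm})\cctx}{\tmfour})\pelim{\patt}{\tmthree}$ both reduce to terms that differ by a $\ruleCCEqCtx$ step. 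If the eliminator moves across the binder of a $\ipar{}{}{}$ or $\iwhy{}{}$ being reduced, it ends up inside $\tm$ and I use that substitution commutes with eliminators. The side conditions (freeness for $\fctx$, and that $\iofctwo{}{\Box}$ and the left argument of a why-not eliminator are excluded) guarantee that these moves never cross duplicating positions.
\item \emph{$\ruleCCEqBetaOne$ and $\ruleCCEqEOneSymm$.} They overlap with redexes only through case contexts, and are handled the same way.
\item \emph{$\ruleCCEqCosAndSub$.} Here the $\ibott{}{}$ of the left side may itself be the root of a $\raxpartensor$, $\raxtensorpar$, $\raxofcwhy$ or $\raxwhyofc$ redex, when $\tm\cos{\lvar}{\ione}$ has the shape $(\ipar{}{}{})\cctx$ or $(\iwhy{}{}{})\cctx$. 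In each case the root of $\tm\cos{\lvar}{\ione}$ is computed by unfolding contra-substitution. The corresponding redex on the right side $\tm\sub{\lvar}{\tmthree}$ is located at the occurrence of $\lvar$, which is where the constructor facing $\lvar$ becomes a par/tensor elimination $\raxparL$, $\raxparR$, $\raxtensor$, or a $\raxofc$/$\raxwhy$ redex. Both results are then identified using $\ruleCCEqCosAndSub$ again together with the interaction lemmas.
\end{itemize}

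\textbf{Main obstacle.} I expect the $\ruleCCEqCosAndSub$ overlaps to be the hardest part. The redex on one side is ``pulled'' through an arbitrary spine of $\tm$, so matching the two sides requires an auxiliary lemma: if $\tm\cos{\lvar}{\ione}$ (respectively $\tm\sub{\lvar}{\tmthree}$) has a root redex created by the operation, then the corresponding redex exists on the other side and the contracta agree modulo $\cceq$. I would prove this lemma by induction on the path from the root of $\tm$ to $\lvar$. Typing is used throughout: it guarantees that the contra-substituted terms are well defined, via Lemma~\ref{lemma:contrasubstitution_lemma}, and it excludes the cases marked impossible in the definition.
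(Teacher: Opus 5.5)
Your reduction to one-step instances and your treatment of $\ruleCCEqCtx$, $\ruleCCEqBetaOne$ and $\ruleCCEqEOneSymm$ are reasonable. The case $\ruleCCEqCosAndSub$, however, has a gap in the orientation where the left-hand side is the term that reduces.

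Suppose $\ibott{\tm\cos{\lvar}{\ione}}{\tmthree}\pretome\ibott{p}{\tmthree}$ by a step inside $\tm\cos{\lvar}{\ione}$. There is no ``redex inside $\tm$'' to appeal to. $\tm$ is not a subterm of $\tm\cos{\lvar}{\ione}$: the whole spine from the root to the former occurrence of $\lvar$ has been rebuilt out of dual constructors. Your lemma ($\tm\pretome\tm_0$ implies $\tm\cos{\lvar}{\tmthree}\pretome\cdot\cceq\tm_0\cos{\lvar}{\tmthree}$) transports steps from $\tm$ to its image, so it only serves the other orientation. Here you need the converse, a redex-reflection property: every redex of $\tm\cos{\lvar}{\ione}$ is the image of a redex of $\tm$, so that $\tm\sub{\lvar}{\tmthree}$ can perform the matching step, and the two contracta agree up to $\cceq$.

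Your auxiliary lemma in the last paragraph only concerns the root $\ibott{}{}$. The images, by contrast, are typically internal redexes of a different kind. Write $Z$ for the term accumulated from the part of $\tm$ above the redex.
\begin{itemize}
\item The $\raxparL$ redex $\ap{(\ipar{\lvartwo}{\lvarthree}{w})}{q}$ becomes the $\raxparR$ redex $\invap{(\ipar{\lvartwo}{\lvarthree}{w})}{Z}$, planted at the old position of $\lvar$, when $\lvar\in\fv{q}$.
\item The same redex becomes the $\raxtensor$ redex $w\cos{\lvar}{\ione}\epair{\lvartwo}{\lvarthree}{\pair{q}{Z}}$ when $\lvar\in\fv{w}$.
\item The redex at a distance $\ap{(\ipar{\lvartwo}{\lvarthree}{w})\epair{\lvarfour}{\lvarfour'}{v}}{q}$ with $\lvar\in\fv{v}$ becomes the $\raxpartensor$ cut $\ibott{(\ipar{\lvartwo}{\lvarthree}{w})}{\pair{q}{Z}}$ under a new binder $\ipar{\lvarfour}{\lvarfour'}{\cdot}$.
\item The $\raxwhy$ redex $\ewhy{v}{\lvartwo}{\iwhy{\uvar}{w}}$ with $\lvar\in\fv{w}$ becomes the $\raxofc$ redex $w\cos{\lvar}{\ione}\eofc{\uvar}{\iofctwo{\lvartwo}{v}}$.
\end{itemize}
To close the case you must show that every adjacency created by contra-substitution can be a redex only when the corresponding pair in $\tm$ already is one. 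These adjacencies are a new constructor against its off-spine argument or against the next layer of the accumulated term, and a newly introduced eliminator $\epair{\lvartwo}{\lvarthree}{\cdot}$, $\eofc{\uvar}{\cdot}$ or $\eone{\cdot}$ against the term it receives. This can be done by the same induction on the path to $\lvar$, but it is a separate statement from the ones you list. Without it, the left-to-right simulation of $\ruleCCEqCosAndSub$ is not established.

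This non-locality is exactly what the paper's proof is organised to avoid. It does not work with $\cceq$ directly. It first shows $\cceq=\cceqalt$, where $\ruleCCEqCosAndSub$ is replaced by local axioms that each move a cut across a single constructor, for instance:
\begin{itemize}
\item $\ibott{(\ipar{\lvar}{\lvartwo}{\tmtwo})}{\tm}\cceqalt\tmtwo\epair{\lvar}{\lvartwo}{\tm}$;
\item $\ibott{(\iofctwo{\lvar}{\tm})}{\tmtwo}\cceqalt\ewhy{\tm}{\lvar}{\tmtwo}$;
\item $\ibott{\ione}{\tm}\cceqalt\tm$.
\end{itemize}
It then proves the bisimulation by induction on $\cceqalt$-derivations. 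Since no axiom of $\cceqalt$ mentions substitution or contra-substitution, every overlap with a redex is a finite local check. The spine induction you would need is done once, when $\ruleCCEqCosAndSub$ is derived in $\cceqalt$ via deep contexts and their duals. Your route can be completed along the lines above, but it amounts to redoing that decomposition inside the critical pairs of $\ruleCCEqCosAndSub$, in both orientations.
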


The proof of~\rthm{cceqalt_is_a_strong_bisimulation} 
(and other results involving this notion of equivalence) is complicated by the
presence of contra-substitution and substitution in the last equation $\ruleCCEqCosAndSub$. 
The proof is developed around an
alternative characterization of structural equivalence where
contra-substitution and substitution are absent from its axioms
(\cf~\rdef{alt_equivalence} in the appendix). 
The following lemma exhibits some of the most interesting derived
equations (\cf~\rlem{cceq_derived} in the appendix):

\begin{lemma}[Some derived equations in $\cceq$]
    \[
      \begin{array}{cc}
        \begin{array}{r@{\,\,\,\,}l@{\,\,\,\,}ll}
      \ibott{\ione}{\tm} & \cceq & \tm &   \text{(if $\tm:\bott$)}
      \\
      \ibott{\tm}{\tmtwo} & \cceq  & \ibott{\tmtwo}{\tm} &
      \\
     \tm\cos{\lvar}{\ione}\eone{\tmthree} &  \cceq &
                                                 \tm\cos{\lvar}{\tmthree}
                                                        & 
      \\
      \ibott{\pair{\tmthree}{\tm}}{\tmtwo} & \cceq & \ibott{\tmthree}{(\invap{\tmtwo}{\tm})}
      \\
      \ibott{\pair{\tm}{\tmthree}}{\tmtwo} & \cceq &  \ibott{\tmthree}{(\ap{\tmtwo}{\tm})}
      \\
      \ibott{(\ap{\tmtwo}{\tm})}{\tmthree} & \cceq & \ibott{\tm}{(\invap{\tmtwo}{\tmthree})}
      \\
      \ibott{(\ipar{\lvar}{\lvartwo}{\tmtwo})}{\tm} & \cceq & \tmtwo \epair{\lvar}{\lvartwo}{\tm}
%       \\
%       \ibott{\iwhy{\uvar}{\tm}}{\tmthree} & \cceq & \tm\eofc{\uvar}{\tmthree}
%       \\
%        \ibott{(\iofctwo{\lvar}{\tm})\eone{\tmfour}}{\tmthree} & \cceq & \ibott{\tmfour}{\ewhy{\tm}{\lvar}{\tmthree}} & 
% \\
%        \ibott{(\iofctwo{\lvar}{\tm})}{\tmthree} & \cceq & \ewhy{\tm}{\lvar}{\tmthree} &
% \\
%       \ibott{(\ibott{\tmtwo}{\tm})}{\tmthree}   & \cceq & \ibott{\tmtwo}{\tm\eone{\tmthree}}
%       \\
%       \tm\eone{\tmtwo} & \cceq & \ibott{\tm}{\tmtwo} & (\tm:\bott)
%             \\
%       \ione\eone{\tm} & \cceq & \tm &    
%       \\
%       \tm\cos{\lvar}{\tmthree}  & \cceq & \tm\sub{\lvar}{\tmthree} & (\tm:\bott)
        \end{array}
        &
        \begin{array}{r@{\,\,\,\,}l@{\,\,\,\,}ll}
     %  \ibott{\ione}{\tm} & \cceq & \tm &   (\tm:\bott)
     %  \\
     %  \ibott{\tm}{\tmtwo} & \cceq  & \ibott{\tmtwo}{\tm}
     %  &

     %  \\
     % \tm\cos{\lvar}{\ione}\eone{\tmthree} &  \cceq &
     %                                             \tm\cos{\lvar}{\tmthree}
     %                                                    & 
     %  \\
     %  \ibott{\pair{\tmthree}{\tm}}{\tmtwo} & \cceq & \ibott{\tmthree}{(\invap{\tmtwo}{\tm})}
     %  \\
     %  \ibott{\pair{\tm}{\tmthree}}{\tmtwo} & \cceq &  \ibott{\tmthree}{(\ap{\tmtwo}{\tm})}
     %  \\
     %  \ibott{(\ap{\tmtwo}{\tm})}{\tmthree} & \cceq & \ibott{\tm}{\invap{\tmtwo}{\tmthree}}
     %  \\
     %  \ibott{(\ipar{\lvar}{\lvartwo}{\tmtwo})}{\tm} & \cceq & \tmtwo \epair{\lvar}{\lvartwo}{\tm}
     %  \\
      \ibott{(\iwhy{\uvar}{\tm})}{\tmthree} & \cceq & \tm\eofc{\uvar}{\tmthree}
      \\
       \ibott{(\iofctwo{\lvar}{\tm})\eone{\tmfour}}{\tmthree} & \cceq & \ibott{\tmfour}{(\ewhy{\tm}{\lvar}{\tmthree})} & 
\\
       \ibott{(\iofctwo{\lvar}{\tm})}{\tmthree} & \cceq & \ewhy{\tm}{\lvar}{\tmthree} &
\\
      \ibott{(\ibott{\tmtwo}{\tm})}{\tmthree}   & \cceq & \ibott{\tmtwo}{\tm\eone{\tmthree}}
      \\
      \tm\eone{\tmtwo} & \cceq & \ibott{\tm}{\tmtwo} & \text{(if $\tm:\bott$)}
            \\
      \ione\eone{\tm} & \cceq & \tm &    
      \\
      \tm\cos{\lvar}{\tmthree}  & \cceq & \tm\sub{\lvar}{\tmthree} & \text{(if $\tm:\bott$)}
        \end{array}
      \end{array}      
  \]
\end{lemma}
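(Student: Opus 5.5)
The plan is to derive each equation directly from the four axioms of \rdef{cceq}, using the definition of contra-substitution (Fig.~\ref{contrasubstitution:for:CalcMELL}) to unfold $\ruleCCEqCosAndSub$ in the right way. Typing side-conditions are discharged by \rlem{contrasubstitution_lemma} and the compatibility of linear substitution with typing. The order is chosen so that earlier equations feed later ones.

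\textbf{Step 1: the basic symmetry facts.} The key device is to instantiate $\ruleCCEqCosAndSub$ with $\tm:=\lvar$ (typed $\lvar:\typ\vdash\lvar:\typ$, which would need $\typ=\bott$), or more generally with small terms $\tm$ whose contra-substitution we can compute.
\begin{itemize}
\item Taking $\tm=\lvar$ with $\lvar:\bott$ gives $\ibott{\lvar\cos{\lvar}{\ione}}{\tmthree}=\ibott{\ione}{\tmthree}\cceq\lvar\sub{\lvar}{\tmthree}=\tmthree$. This is $\ibott{\ione}{\tm}\cceq\tm$.
\item Taking $\tm=\ibott{\lvar}{\tmtwo}$ gives, by the $\lightning$ clause, $\ibott{(\lvar\cos{\lvar}{\tmtwo}\eone{\ione})}{\tmthree}=\ibott{\tmtwo\eone{\ione}}{\tmthree}$. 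By $\ruleCCEqBetaOne$ this is $\ibott{\tmtwo}{\tmthree}$, and it is $\cceq\ibott{\tmthree}{\tmtwo}$. This yields commutativity of $\lightning$.
\item For $\tm\cos{\lvar}{\ione}\eone{\tmthree}\cceq\tm\cos{\lvar}{\tmthree}$, induct on the path to $\lvar$ in $\tm$. Each clause of contra-substitution either recurses with a continuation built around $\tmtwo$ or places $\tmtwo$ under an $\eone{\cdot}$ or inside $\lightning$. The base case $\lvar\cos{\lvar}{\ione}\eone{\tmthree}=\ione\eone{\tmthree}\cceq\tmthree$ follows from $\ruleCCEqEOneSymm$ and $\ruleCCEqBetaOne$. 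The inductive cases push the $\eone{\tmthree}$ inward using $\ruleCCEqCtx$; this is legitimate since the eliminator $\eone{\cdot}$ binds nothing. This is the most delicate step, since some clauses, e.g.\ the pair and application ones, carry $\tmtwo$ inside a constructor rather than at the top. For those I would prove the stronger statement $\tm\cos{\lvar}{\tmtwo}\eone{\tmthree}\cceq\tm\cos{\lvar}{\tmtwo\eone{\tmthree}}$ by induction, again via $\ruleCCEqCtx$, and then specialise $\tmtwo=\ione$.
\end{itemize}

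\textbf{Step 2: the multiplicative and exponential equations.} Combining Step 1 with $\ruleCCEqCosAndSub$ reversed gives $\ibott{\tm}{\tmthree}\cceq\tm'\sub{\lvar}{\tmthree}$ whenever $\tm=\tm'\cos{\lvar}{\ione}$. For each equation I exhibit such a $\tm'$.
\begin{itemize}
\item For $\ibott{\pair{\tmthree}{\tm}}{\tmtwo}\cceq\ibott{\tmthree}{(\invap{\tmtwo}{\tm})}$: with $\tm'=\pair{\tmthree}{\tm}$ and a fresh $\lvar$ substituted for $\tmtwo$, compute both sides as $\cos{\lvar}{\cdot}$ of the same term, using the application/contra-application clauses together with commutativity of $\lightning$. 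The two other pair/application equations are analogous.
\item For $\ibott{(\ipar{\lvar}{\lvartwo}{\tmtwo})}{\tm}\cceq\tmtwo\epair{\lvar}{\lvartwo}{\tm}$: use the clause $(\ipar{\lvar}{\lvartwo}{\tmtwo})\cos{\lvarthree}{\tm}=\tmtwo\cos{\lvarthree}{\ione}\epair{\lvar}{\lvartwo}{\tm}$, applied to a variable, via $\ruleCCEqCosAndSub$.
\item The $\why{}$ equation follows in the same way from the $\iwhy{\uvar}{\cdot}$ clause.
\item The $\ofc{}$ equations follow from the $\ewhy{}{}{}$ clause, which produces $\iofctwo{\lvartwo}{\tm_1}\eone{\tmtwo}$.
\end{itemize}

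\textbf{Step 3: the remaining equations.}
\begin{itemize}
\item Associativity $\ibott{(\ibott{\tmtwo}{\tm})}{\tmthree}\cceq\ibott{\tmtwo}{\tm\eone{\tmthree}}$ comes from the $\lightning$ clause with $\tm_2:=\lvar$.
\item $\ione\eone{\tm}\cceq\tm$ is $\ruleCCEqEOneSymm$ followed by $\ruleCCEqBetaOne$.
\item $\tm\eone{\tmtwo}\cceq\ibott{\tm}{\tmtwo}$ for $\tm:\bott$ is $\ruleCCEqCosAndSub$ applied to $\lvar\eone{\tmtwo}$, using the $\eone{}$ clause.
\item Finally, $\tm\cos{\lvar}{\tmthree}\cceq\tm\cos{\lvar}{\ione}\eone{\tmthree}\cceq\ibott{\tm\cos{\lvar}{\ione}}{\tmthree}\cceq\tm\sub{\lvar}{\tmthree}$ for $\tm:\bott$, by the Step 1 result, the previous item, and $\ruleCCEqCosAndSub$.
\end{itemize}

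The main obstacle is the inductive lemma in Step 1, specifically checking the side conditions of $\ruleCCEqCtx$ (freeness and non-binding) in every clause. In particular, the $\eofc{\uvar}{\cdot}$ and $\epair{}{}{}$ clauses must ensure that $\tmthree$ is not captured, which $\alpha$-renaming handles.
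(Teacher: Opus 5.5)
Your method is sound. With the witnesses below corrected, it proves every equation; it differs from the paper mainly in organisation. The paper does not derive this list directly in $\cceq$. It introduces the alternative axiomatisation $\cceqalt$ (\rdef{alt_equivalence}), in which most of these equations, or close variants, are axioms. It derives the rest (e.g.\ symmetry of $\ibott{\cdot}{\cdot}$) inside $\cceqalt$ (\rlem{cceq_derived}), and transfers everything through $\cceq=\cceqalt$.

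Its proof of $\cceqalt\subseteq\cceq$ consists of $\ruleCCEqCosAndSub$-computations essentially the same as yours. Your third equation is the paper's \rlem{simple_admissible}, proved as you suggest via eliminator flotation (\rlem{elim_cos_cceq}). Both your specialisation to $\ione$ and the inductive steps also need compatibility of $\tm\cos{\lvar}{\cdot}$ with $\cceq$ in its argument (\rlem{cos_compatible_with_equivalence:right}), which you use without stating. It is easy, since the argument is plugged into a context fixed by $\tm$.

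The gain of your direct route is an independent proof of symmetry: splitting $\ibott{\lvar}{\tmtwo}$ at $\lvar$ gives it in one step. The paper's $\cceqalt\subseteq\cceq$ argument instead obtains symmetry by citing \rlem{cceq_derived}, a result about $\cceqalt$, which is circular as written.

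Some witnesses need repair.
\begin{itemize}
\item For $\ibott{\pair{\tmthree}{\tm}}{\tmtwo}\cceq\ibott{\tmthree}{(\invap{\tmtwo}{\tm})}$, the split term cannot be $\pair{\tmthree}{\tm}$: it contains no $\lvar$ and is not of type $\bott$. Take $T=\ibott{\tmthree}{(\invap{\lvar}{\tm})}$. Then $T\cos{\lvar}{\ione}=\pair{\tmthree}{\tm}\eone{\ione}$ and $T\sub{\lvar}{\tmtwo}$ is the right-hand side. Your Step~2 criterion should therefore read $\tm\cceq\tm'\cos{\lvar}{\ione}$ rather than equality.
\item For the $\parr$ and $\why{}$ equations, no variable already in the body can be contra-substituted to the desired effect, so you must plant a fresh $\lvarthree:\one$. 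Split $\ibott{(\ipar{\lvar}{\lvartwo}{(\ibott{\tmtwo}{\lvarthree})})}{\tm}$, resp.\ $\ibott{(\iwhy{\uvar}{(\ibott{\tm}{\lvarthree})})}{\tmthree}$, at $\lvarthree$ with $\ione$ substituted. Then clean up with $\ruleCCEqBetaOne$ and $\ibott{X}{\ione}\cceq X$. This is exactly the paper's computation, and ``applied to a variable'' should say so explicitly.
\item Splitting $\ibott{(\ibott{\tmtwo}{\tm})}{\lvar}$ at $\lvar$ only returns an instance of $\ruleCCEqBetaOne$. Associativity follows instead from your item $X\eone{Y}\cceq\ibott{X}{Y}$ with $X=\ibott{\tmtwo}{\tm}$, plus one $\ruleCCEqCtx$ step. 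Move that item earlier; it does not depend on associativity.
\end{itemize}
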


We now define reduction for the $\CalcMELL$-calculus.

\begin{definition}[$\CalcMELL$-Reduction]
  \ldef{calcMELL_reduction}
  Reduction in $\CalcMELL$ is defined as pre-reduction modulo structural equivalence: $\tm \tome{} \tmtwo$, iff there exist terms $\tm'$ and $\tmtwo'$ such that $\tm\cceq\tm'\pretome\tmtwo'\cceq\tmtwo$.
\end{definition}

Reduction in $\CalcMELL$ enjoys the \textbf{subject reduction} property:

\begin{lemma}[{name=\proofnote{Proof on pg.~\pageref{subject_reduction:proof}}, restate=[name=Subject Reduction]SubjectReduction}]
\llem{subject_reduction}
$\djum{\utenv}{\tenv}{\tm}{\typ} $ and $\tm  \tome \tmtwo$ implies $\djum{\utenv}{\tenv}{\tmtwo}{\typ}$.
\end{lemma}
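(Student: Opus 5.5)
Since $\tm \tome \tmtwo$ means $\tm \cceq \tm' \pretome \tmtwo' \cceq \tmtwo$, I will split subject reduction into two independent facts and then chain them: (i) $\cceq$ preserves typing, i.e.\ if $\djum{\utenv}{\tenv}{\tm}{\typ}$ and $\tm\cceq\tm'$ then $\djum{\utenv}{\tenv}{\tm'}{\typ}$; and (ii) pre-reduction preserves typing, i.e.\ if $\djum{\utenv}{\tenv}{\tm}{\typ}$ and $\tm\pretome\tm'$ then $\djum{\utenv}{\tenv}{\tm'}{\typ}$. Applying (i), then (ii), then (i) again gives the lemma.

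For (i), I proceed by induction on the derivation of $\tm\cceq\tm'$. Reflexivity, symmetry and transitivity are immediate, because the invariant ``same type in the same contexts'' is itself symmetric and transitive. The axioms are stipulated to hold only between terms typed with the same types in the same environments, so at the root they preserve typing by definition. The only real work is contextual closure. Here I use induction on the one-hole context surrounding the rewritten subterm, together with generation (inversion) lemmas for each typing rule of Fig.~\ref{typing:rules:CalcMELL}. Inverting the typing of $\ctx\langle\tmthree\rangle$ yields a typing $\djum{\utenv'}{\tenv'}{\tmthree}{\typtwo}$ of the hole's content, where $\utenv'$ may have been extended by binders such as $\eofc{\uvar}{}$ or $\iwhy{\uvar}{}$. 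The induction hypothesis then gives the same judgement for the replacement, and re-applying the same rules rebuilds the derivation. One point needs attention: the linear environment of the replaced subterm must be the same. This holds because the axioms preserve linear environments, and in particular $\fv{\tm}=\fv{\tm'}$. Inside $\iofctwo{\lvar}{\Box}$ or the left argument of $\ewhy{}{}{}$, the contexts are restricted, but they are still typed by fixed rules, so closure under them is unproblematic.

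For (ii), I again induct on the context, with the same inversion argument, and then treat each of the nine reduction axioms at the root. Again, the definition stipulates that both sides are typed identically. If one prefers not to rely on that stipulation, each case is checked directly. The tools are the linear substitution lemma, an unrestricted substitution lemma (if $\djum{\utenv,\uvar:\typ}{\tenv}{\tm}{\typtwo}$ and $\djum{\utenv}{\emptytenv}{\tmtwo}{\typ}$, then $\djum{\utenv}{\tenv}{\tm\sub{\uvar}{\tmtwo}}{\typtwo}$, proved by routine induction using weakening of $\utenv$), and Lemma~\ref{lemma:contrasubstitution_lemma} for contra-substitution. A further auxiliary lemma handles the positive elimination contexts $\cctx,\cctxtwo$: if $\of{\cctx}{\tm}$ is typed at $\typ$, then $\tm$ is typed at some $\typtwo$ under extended environments, and any other term of type $\typtwo$ in those environments can be plugged back into $\cctx$. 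This follows by induction on $\cctx$.

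As a representative case, take $\raxparL$. Inversion gives $\djum{\utenv}{\tenv,\lvar:\lneg{\typ},\lvartwo:\lneg{\typtwo}}{\tm}{\bott}$ with $\tmthree:\typ$. Substitution yields $\tm\sub{\lvar}{\tmthree}$ of type $\bott$. Lemma~\ref{lemma:contrasubstitution_lemma} with $\ione:\one=\lneg{\bott}$ gives $\tm\sub{\lvar}{\tmthree}\cos{\lvartwo}{\ione}:\lneg{\lneg{\typtwo}}=\typtwo$, which is the required type. For $\raxofc$ and $\raxwhy$, contra-substituting $\ione$ turns $\djum{\utenv}{\lvar:\typ}{\tmtwo}{\bott}$ into a closed-linear term of type $\lneg{\typ}$ under $\utenv$, so the unrestricted substitution lemma applies.

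I expect the main obstacle to be the context-closure step of (i) for the axioms $\ruleCCEqCtx$ and $\ruleCCEqCosAndSub$. Conceptually it is trivial, but the environments have to be tracked carefully through binders and through the non-linear positions.
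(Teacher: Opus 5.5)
Your proposal is correct and follows the paper's route. The paper's proof also notes that $\cceq$ relates terms of the same type in the same contexts by definition, and then proves type preservation for $\pretome$ by induction on the pre-reduction step; this is your (i)+(ii) decomposition, with the contextual-closure and axiom cases written out in more detail than the paper gives.

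One small slip: in your $\raxparL$ case the argument must be typed $\tmthree:\lneg{\typ}$, not $\tmthree:\typ$. \rulmEParOne{} applied to a term of type $\typ\parr\typtwo$ requires an argument of type $\lneg{\typ}$, and the substitution for $\lvar:\lneg{\typ}$ needs exactly that.
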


%%% Local Variables:
%%% mode: latex
%%% TeX-master: "main"
%%% End:
 
\subsection{Strong Normalization and Confluence}

We prove strong normalization for $\CalcMELL$ using the technique of reducibility candidates.

\subparagraph*{Preliminaries.}
%\begin{definition}
    Let
    $
     \typtms{\typ} \eqdef \set{ \tm \ST \exists\utenv,\tenv.\,\,\jum{\utenv;  \tenv}{\tm}{\typ} }
    $
    denote the set of terms of type $\typ$, and let
    $\SNtms{\typ} \eqdef \set{ \tm \in \typtms{\typ} \ST \tm \ \text{is SN}}$
    be the subset of those terms that are strongly normalizing
    (with respect to \emph{pre-reduction} $\pretome$).
    If a term $\tm$ is SN,
    let $\tml{\tm} \in \Nat$ be the length of the longest reduction sequence starting from $\tm$
    (which is well-defined due to K\"onig's Lemma, because $\tome$ is finitely branching).
    If $\typ$ is a type and $\tmsone \subseteq \typtms{\typ}$,
    we write $\tm \iftyp{\typ} \tmsone$
    to mean that $\tm\in\typtms{\typ}$ implies $\tm \in \tmsone$.
    If $\tmsone \subseteq \typtms{\typ}$,
    we define its \emph{orthogonal} by
      $\lneg{\tmsone} = \set{  \tm \in \typtms{\lneg{\typ}} \ST \forall \tmtwo \in \tmsone.\ \ibott{\tm}{\tmtwo} \iftyp{\bot} \SNtms{\bot}  }$.
%\end{definition}
    The following (typical) properties of $\lneg{(\cdot)}$ hold.
    If $\tmsone, \tmstwo \subseteq \typtms{\typ}$ then: 1) $\tmsone \subseteq \tmstwo$ implies $\lneg{\tmstwo} \subseteq \lneg{\tmsone}$; 2) $\tmsone \subseteq \llneg{\tmsone}$; and 3) $\lneg{\tmsone} = \lllneg{\tmsone}$.
Given sets of typed terms $\tmsone \subseteq \typtms{\typ}$ and $\tmstwo \subseteq \typtms{\typtwo}$ we define:
\[
  \begin{array}{lll}
    \tmsone \tensor \tmstwo & \eqdef & \set{ \pair{\tm}{\tmtwo} \in \typtms{\typ \tensor \typtwo} \ST \tm \in \tmsone \land \tmtwo \in \tmstwo   } \\
% \]
% Given a set of typed terms $\tmsone \subseteq \typtms{\typ}$ we define 
% \[
    \why{\tmsone} & \eqdef & \set{ \iwhy{\uvar}{\tm} \in \typtms{\why{\lneg{\typ}}} \ST
    \forall \tmtwo \in \tmsone, \ \tm\sub{\uvar}{\tmtwo} \iftyp{\bot} \SNtms{\bot}   }
    \end{array}
\]
%We are now ready to define the reducibility candidates.
%\begin{definition}
\subparagraph*{Reducibility candidates.}
    To each type $\typ$ we associate a set of terms $\redcand{\typ} \subseteq \typtms{\typ}$ called the \emph{reducibility candidates} of the type $\typ$:
%    We construct them inductively on the type.
\[
  \begin{array}{r@{\,\,}c@{\,\,}l@{\HS}r@{\,\,}c@{\,\,}l@{\HS}r@{\,\,}c@{\,\,}l@{\HS}r@{\,\,}c@{\,\,}l}
    \redcand{\btyp} & \eqdef & \SNtms{\btyp}
    &
    \redcand{\lneg{\btyp}} & \eqdef & \SNtms{\nbtyp}
    &
    \redcand{\typ \tensor \typtwo}  & \eqdef & \llneg{(\redcand{\typ} \tensor \redcand{\typtwo})}
    &
    \redcand{\typ \parr  \typtwo} & \eqdef & \lneg{(\redcand{\lneg{\typ}} \tensor \redcand{\lneg{\typtwo}})}
  \\
    \redcand{\bot} & \eqdef & \SNtms{\bot}
    &
    \redcand{\one} & \eqdef & \SNtms{\one}
    &
    \redcand{\why{\typ}} & \eqdef & \llneg{(\why{\redcand{\lneg{\typ}}})}
    &
    \redcand{\ofc{\typ}} & \eqdef & \lneg{(\why{\redcand{\typ}})}
  \end{array}
\]
Given a linear typing environment $\tenv$, an unrestricted typing environment $\utenv$, and a substitution $\subone$, we say $\subone$ is \emph{compatible} with $\tenv,\utenv$,  written $\subone \subsj{\utenv}{\tenv}$, if:
1) for all $\lvar:\typ \in \Gamma$ we have that $\subone(\lvar) \in \redcand{\typ}$; and
2) for all $\uvar:\typtwo \in \Delta$ we have that $\subone(\uvar) \in \redcand{\typtwo}$. 
  
\begin{theorem}[{name=Adequacy~\proofnote{Proof on pg.~\pageref{adequacy:proof}}, restate=[name=Adequacy]Adequacy}]
%\begin{theorem}[Adequacy]
\lthm{adequacy}
For every valid judgment $\jum{\utenv;\tenv}{\tm}{\typ}$
and every $\subone$ such that $\subone \subsj{\utenv}{\tenv}$
we have that $\tm^{\subone} \iftyp{\typ} \redcand{\typ}$.    
\end{theorem}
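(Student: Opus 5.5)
The proof goes by induction on the derivation of $\jum{\utenv;\tenv}{\tm}{\typ}$, after first establishing the standard properties of the candidates $\redcand{\typ}$. Everything is read modulo the typing side condition $\iftyp{}$.

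\emph{Step 1: properties of candidates.} By induction on $\typ$ we show:
\begin{enumerate}
\item[(CR1)] $\redcand{\typ}\subseteq\SNtms{\typ}$.
\item[(CR0)] Every variable (linear or unrestricted) lies in $\redcand{\typ}$.
\item[(CR2)] $\redcand{\typ}$ is closed under pre-reduction.
\item[(Dual)] $\redcand{\lneg{\typ}}=\lneg{\redcand{\typ}}$, using $\lneg{\tmsone}=\lllneg{\tmsone}$ and the symmetric shape of the definitions (for instance $\redcand{\typ\parr\typtwo}=\lneg{\redcand{\lneg{\typ}\tensor\lneg{\typtwo}}}$).
\end{enumerate}
With (Dual) in hand, membership in any $\redcand{\typ}$ reduces to a single test: $\tm\in\redcand{\typ}$ iff $\ibott{\tm}{\tmtwo}\in\SNtms{\bot}$ for all $\tmtwo\in\redcand{\lneg\typ}$. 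For orthogonal types one may moreover test only against the generators, i.e.\ pairs $\pair{\tmtwo}{\tmthree}$ or $\iwhy{\uvar}{\tmtwo}$. (CR1) needs some witness in $\redcand{\lneg\typ}$, and a variable supplies one, so the proofs of (CR0) and (CR1) go together. We also need that $\SNtms{\bot}$ is stable under $\cceq$, and more generally that SN is preserved by $\cceq$. This follows from the strong bisimulation theorem (\rthm{cceqalt_is_a_strong_bisimulation}), since $\cceq$-related terms have equal longest reduction lengths $\tml{\cdot}$.

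\emph{Step 2: key redex lemmas (weak head expansion).} For each reduction axiom we prove that if all subterms are SN and the contractum is SN, then the redex is SN. For example, if $\tm\sub{\lvar}{\tmtwo}\sub{\lvartwo}{\tmthree}\cctx$ is SN and $\tmtwo,\tmthree$ and the body of $\cctx$ are SN, then $\ibott{(\ipar{\lvar}{\lvartwo}{\tm})\cctx}{\pair{\tmtwo}{\tmthree}\cctxtwo}$ is SN, and analogously for $\raxofcwhy$ and $\raxwhyofc$. The proof is by induction on the sum of $\tml{\cdot}$ of the components, with a case analysis on the one-step reducts. The delicate cases are the internal reductions that occur inside the positive contexts $\cctx,\cctxtwo$, and the overlaps with other axioms; there we use that substitution and contra-substitution commute with reduction.

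\emph{Step 3: the induction on the derivation.} We treat the rules as follows.
\begin{itemize}
\item Axioms are immediate from compatibility of $\subone$.
\item For introductions, take \rulmIPar{} as typical. To show $\ipar{\lvar}{\lvartwo}{\tm^\subone}\in\lneg{(\redcand{\lneg\typ}\tensor\redcand{\lneg\typtwo})}$, cut it against $\pair{\tmtwo}{\tmthree}$. By Step 2 it suffices that $\tm^{\subone,\lvar\mapsto\tmtwo,\lvartwo\mapsto\tmthree}\in\SNtms{\bot}$, which is the IH. \rulmIWhy{} and \rulmIOfc{} are handled the same way against $\why{\redcand{\cdot}}$. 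For \rulmIOfc{} we use that $\tm\cos{\lvar}{\ione}$ paired with a candidate is $\cceq$-related to $\tm\sub{\lvar}{\cdot}$ via $\ruleCCEqCosAndSub$, together with the preservation of SN under $\cceq$.
\item \rulmIBott{} follows directly from (Dual).
\item For eliminations, take \rulmEParOne{} as typical. To get $\ap{\tm}{\tmtwo}\in\redcand{\typtwo}$, take $\tmthree\in\redcand{\lneg\typtwo}$. By the derived equation $\ibott{(\ap{\tm}{\tmtwo})}{\tmthree}\cceq\ibott{\tmtwo}{(\invap{\tm}{\tmthree})}$ and related equations, we reduce this to cutting $\tm$ against $\pair{\tmtwo}{\tmthree}\in\redcand{\lneg\typ\tensor\lneg\typtwo}\subseteq\lneg{\redcand{\typ\parr\typtwo}}$. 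The other eliminators ($\eofc{}{}$, $\eone{}$, $\epair{}{}{}$, the $\why{}$-elimination) are handled similarly using the derived equations. Positive eliminators are absorbed by $\ruleCCEqCtx$, and we rely again on invariance of SN under $\cceq$.
\end{itemize}

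\emph{Main obstacle.} The hardest part is Step 2 together with the $\cceq$-invariance. Reduction sequences from a cut can fire through the positive elimination contexts $\cctx,\cctxtwo$ or interact with equations such as $\ruleCCEqCosAndSub$. I would first try to settle the whole issue once, by proving that a term is SN iff some (equivalently every) $\cceq$-representative is SN, via the bisimulation. After that, the expansion lemmas only require a standard lexicographic measure argument over pre-reduction.
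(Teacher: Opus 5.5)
Your proposal is correct and takes essentially the paper's route: orthogonality-based candidates with the Dual lemma and the CR properties, SN invariance under $\cceq$ from the strong bisimulation, and induction on the derivation, where each elimination is turned into a cut via the derived equations. The only variation is in the key lemmas for $\parr$ and $\ofc{}$. You state them as syntactic head-expansion lemmas, which requires contra-substitution to commute with reduction and needs contexts $\cctx,\cctxtwo$ that never arise, since the tests are only against bare pairs and bare $\iwhy{\uvar}{\tmthree}$. The paper instead carries the semantic hypothesis on the body through an induction on $\tml{\tm}+\tml{\tmthree}$ and handles the $\raxofcwhy$ contractum directly via $\ruleCCEqCosAndSub$.
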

Taking in particular the identity substitution, we 
obtain \textbf{strong normalization} as a corollary.

%%% Local Variables:
%%% mode: latex
%%% TeX-master: "main"
%%% End:

%\subsection{Confluence} 

\subparagraph*{Confluence.} Given that $\CalcMELL$ enjoys strong normalization, it suffices to prove that it is weak Church--Rosser to obtain confluence. This can be proved through a tedious case analysis.
%An immediate corollary of strong normalization of $\tome$  and weak Church-Rosser of $\tome$ is confluence.

\begin{proposition}
$\tome$ is confluent.
\end{proposition}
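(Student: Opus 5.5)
The plan is to use Newman's Lemma on the quotient of typed terms by $\cceq$. By \rthm{cceqalt_is_a_strong_bisimulation}, $\cceq$ is a strong bisimulation for $\pretome$. Therefore a step $\tm \tome \tmtwo$ lifts to a step between $\cceq$-classes, and the classes of $\tome$ are exactly the images of $\pretome$ steps taken from any representative. Strong normalization of $\tome$ on the quotient follows from \rthm{adequacy} together with the bisimulation. An infinite $\tome$ sequence $\tm_0 \tome \tm_1 \tome \cdots$ can be turned, by repeatedly applying the bisimulation to transport each pre-step onto the current representative, into an infinite $\pretome$ sequence from $\tm_0$. This contradicts $\tm_0 \in \SNtms{\typ}$, which we get from Adequacy with the identity substitution. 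Subject reduction (\rlem{subject_reduction}) keeps every term typed, so the typed hypotheses of the bisimulation theorem remain available throughout. It then suffices to show weak Church--Rosser modulo $\cceq$: if $\tm \pretome \tm_1$ and $\tm \pretome \tm_2$, then there exist $\tm_1', \tm_2'$ with $\tm_1 \rtpretome \tm_1' \cceq \tm_2' \,{}^{*}\!\!\leftarrow \tm_2$. Because of the bisimulation, it is enough to check this for coinitial pre-steps from a single representative.

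Local confluence is proved by case analysis on the relative positions of the two redexes. Disjoint redexes commute trivially. Nested redexes, where one sits inside a subterm of the other, are handled by showing that substitution, unrestricted substitution and contra-substitution are compatible with pre-reduction. Concretely: if $\tmtwo \pretome \tmtwo'$ then $\tm\sub{\lvar}{\tmtwo} \rtpretome \tm\sub{\lvar}{\tmtwo'}$ and $\tm\cos{\lvar}{\tmtwo} \pretome \tm\cos{\lvar}{\tmtwo'}$ (up to $\cceq$). Likewise, if $\tm \pretome \tm'$ then $\tm\cos{\lvar}{\tmtwo} \pretome \tm'\cos{\lvar}{\tmtwo}$ modulo $\cceq$. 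These are proved by induction on the definition of contra-substitution (Fig.~\ref{contrasubstitution:for:CalcMELL}). Residuals of a redex under unrestricted substitution may be duplicated or erased, which is why several steps may be needed, and this is harmless. Redexes placed inside a positive elimination context $\cctx$ of another redex, or inside $\cctx,\cctxtwo$ in the $\ibott{}{}$ rules, just travel along with the context.

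The genuine critical pairs are overlaps in which the same introduction form participates in two axioms. One example is $\ipar{\lvar}{\lvartwo}{\tm}$ inside a redex that is itself the body of another $\raxparL$ or $\raxparR$ redex, as in the example following \rdef{prereduction_in_calcMELL}. Another is an $\eone{}$ eliminator in $\cctx$ whose argument is itself reduced. For these overlaps I would close the diagram using the derived equations of the lemma on derived equations. In particular, $\tm\cos{\lvar}{\ione}\eone{\tmthree} \cceq \tm\cos{\lvar}{\tmthree}$, $\ibott{\ione}{\tm}\cceq\tm$, $\tm\cos{\lvar}{\tmthree}\cceq\tm\sub{\lvar}{\tmthree}$ for $\tm:\bott$, and the three interaction laws between $\sub{}{}$ and $\cos{}{}$ listed before \rthm{cceqalt_is_a_strong_bisimulation}. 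The example in the discussion closes precisely via $\ruleCCEqCosAndSub$ and $\ruleCCEqBetaOne$.

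The main obstacle is the critical pairs that involve contra-substitution applied to a term that itself contains a redex, particularly under the $\raxofc$ and $\raxwhy$ rules. There $\tmtwo\cos{\lvar}{\ione}$ is substituted for an unrestricted variable, so its copies must be reduced in parallel and the results reconciled with the interaction laws above. I would first establish the lemma that $\cos{}{}$ commutes with pre-reduction modulo $\cceq$, by induction on $\tm$ with the ``pulling'' clauses as the interesting cases. The remaining overlaps then reduce to instances of that lemma.
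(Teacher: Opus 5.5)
Your proposal is correct and follows essentially the paper's route. The paper obtains strong normalization from Adequacy, then proves weak Church--Rosser by using the strong bisimulation (\rthm{cceqalt_is_a_strong_bisimulation}) to reduce to coinitial pre-steps, closing them by induction on the term with the compatibility of linear and unrestricted substitution and of contra-substitution with reduction modulo $\cceq$ (\rlem{cosubst_compat_with_reduction}), and concludes by Newman's Lemma. Your explicit formulation on $\cceq$-classes is the right reading of the statement, since distinct $\cceq$-equivalent normal forms can only be joined up to $\cceq$.
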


Reduction on untyped terms is not confluent. Consider the term $\overline{\invap{(\lam{\lvar}{\underline{(\ap{(\lam{\lvartwo}{(\ewhy{\pair{\lvartwo}{\lvarthree}}{\lvarthree}{\lvar})})}{\tm})}})}{\tmthree}}$. It is not typable since $\lam{\lvartwo}{\pair{\lvartwo}{\lvarthree}}$ does not have type $\bott$. It may be reduced in two different ways, as indicated by the over and underlining. However, the reducts are not joinable.

%%%%%%%%%%%%%%%
\[
  \begin{tikzcd}
    \overline{\invap{(\lam{\lvar}{\underline{(\ap{(\lam{\lvartwo}{(\ewhy{\pair{\lvartwo}{\lvarthree}}{\lvarthree}{\lvar})})}{\tm})}})}{\tmthree}}
    \arrow[r, "\raxparL"]
    \arrow[d, "\raxparR"']
  &
    \linvap{\labtwo'}{\llam{\beta}{\lvar}{\ewhy{\pair{\tm}{\lvarthree}}{\lvarthree}{\lvar}}}{\tmthree}
    \arrow[d,dotted, "\raxparR"']
  \\
    \ap{(\lam{\lvartwo}{(\ewhy{\pair{\lvartwo}{\lvarthree}}{\lvarthree}{\lvar})})}{\tm})\cos{\lvar}{\tmthree} =
    (\iofctwo{\lvarthree}{(\llam{\lab}{\lvartwo}{\pair{\lvartwo}{\lvarthree}})})\eone{\pair{\tm}{\tmthree}}
    \arrow[r,dotted, "/" anchor=center] 
  &
    (\iofctwo{\lvarthree}{\pair{\tm}{\lvarthree}})\eone{\tmthree}
  \end{tikzcd}
\]

%%% Local Variables:
%%% mode: latex
%%% TeX-master: "main"
%%% End:

\section{Translations of classical $\lambda$-calculi into $\CalcMELL$}

The $\CalcMELL$-calculus is capable of simulating Parigot's
$\CalcParigot$ and Curien and Herbelin's $\overline{\lambda}\mu\tilde{\mu}$,
as well as Hasegawa's $\muDCLL$ (in the appendix due to lack of space).
We briefly recall the first two of these calculi below,
and introduce appropriate translations to the $\CalcMELL$-calculus.

\subsection{Simulation of Parigot's $\CalcParigot$ via the T and Q translations}

We briefly recall Parigot's $\CalcParigot$-calculus~\cite{DBLP:conf/lpar/Parigot92},
obtained via the propositions-as-types paradigm from the term assignment of classical
propositional logic in natural deduction with multiple conclusions.
Let $\var,\vartwo,\ldots$ (resp.  $\nvar,\nvartwo,\ldots$) range over a countable set of \defn{variables} (resp. \defn{continuation variables}). Types and terms of $\CalcParigot$ are defined as follows:
\[
  \begin{array}{l@{\,\,}r@{\hspace{.2cm}}c@{\hspace{.2cm}}l@{\hspace{1cm}}l@{\,\,}r@{\hspace{.2cm}}c@{\hspace{.2cm}}l}
    \text{(Types)} &
    \typ,\typtwo & ::= & \bott\,|\, \btyp\,|\,\typ\imp\typtwo &
    \text{(Terms)} &
    \ptm,\ptmtwo & ::= & \var\,|\,\lam{\var}{\ptm}\,|\,\ptm\,\ptmtwo\,|\, \pname{\nvar}{\ptm}\,|\, \pmu{\nvar}{\ptm}
  \end{array}
\]
\defn{Judgements} are of the form $ \pjum{\tenv}{\ptm}{\typ}{\nenv}$, where $\tenv$ is a partial function mapping variables to types, and $\nenv$ is a partial function mapping continuation
variables to types. Typing rules are given by:

\scalebox{\proofScaleFactor}{\begin{minipage}{\textwidth}
  \[
  \indrule{\rulpAx}
  {}
  {\pjum{\tenv,\var:\typ}{\var}{\typ}{\nenv}}
  \hspace{0.2em}
  \indrule{\rulpLam}{
    \pjum{\tenv,\var:\typ}{\ptm}{\typtwo}{\nenv}
  }{
    \pjum{\tenv}{\lam{\var}{\ptm}}{\typ\imp\typtwo}{\nenv}
  }
  \indrule{\rulpName}{
    \pjum{\tenv}{\ptm}{\typ}{\nenv,\nvar:\typ}
  }{
    \pjum{\tenv}{\pname{\nvar}{\ptm}}{\bott}{\nenv,\nvar:\typ}
  }
  %     \indrule{\rulpPA}{
  %   \pjum{\tenv}{\ptm}{\typtwo}{\nenv,\nvar:\typ}
  % }{
  %   \pjum{\tenv}{\mu{\nvar}{\pname{\nvartwo}{\ptm}}}{\typ}{(\nenv,\nvartwo:\typtwo)\setminus \nvar:\typ}
  %   }
\]
\[
  \indrule{\rulpApp}{
    \pjum{\tenv}{\ptm}{\typ\imp\typtwo}{\nenv}
    \HS
    \pjum{\tenv}{\ptmtwo}{\typ}{\nenv}
  }{
    \pjum{\tenv}{\pap{\ptm}{\ptmtwo}}{\typtwo}{\nenv}
  }
  \indrule{\rulpMu}{
    \pjum{\tenv}{\ptm}{\bott}{\nenv,\nvar:\typ}
  }{
    \pjum{\tenv}{\pmu{\nvar}{\ptm}}{\typ}{\nenv}
  }
  %   \\
  %   \\
  % \indrule{\rulpCL}{
  %   \pjum{\tenv,\var:\typ,\vartwo:\typ}{\ptm}{\typthree}{\nenv}
  % }{
  %   \pjum{\tenv, \varthree:\typ}{\ptm\sub{\var,\vartwo}{\varthree}}{\typthree}{\nenv}
  %   }
  %   \quad
  %    \indrule{\rulpCR}{
  %   \pjum{\tenv}{\ptm}{\typthree}{\nenv, \nvar:\typ,\nvartwo:\typ}
  % }{
  %   \pjum{\tenv}{\ptm\sub{\nvar,\nvartwo}{\nvarthree}}{\typthree}{\nenv, \nvarthree:\typ}
  %   }
  %   \\
  %   \\
  %  \indrule{\rulpWL}{
  %   \pjum{\tenv}{\ptm}{\typ}{\nenv}
  % }{
  %   \pjum{\tenv, \var:\typtwo}{\ptm}{\typ}{\nenv}
  %   }
  %   \quad
  %    \indrule{\rulpWR}{
  %   \pjum{\tenv}{\ptm}{\typ}{\nenv}
  % }{
  %   \pjum{\tenv}{\ptm}{\typ}{\nenv, \nvar:\typtwo}
  %   }   
\]
\end{minipage}}
\bigskip

\noindent We write $\ptm\psub{\nvar}{\ptmtwo}$ for \defn{structural substitution}, which replaces subterms of $\ptm$ of the form $\pname{\nvar}{\ptmthree}$ by $\pname{\nvar}{(\pap{\ptmthree}{\ptmtwo})}$; and $\ptm\rensub{\nvartwo}{\nvar}$ for the renaming of all free occurrences of $\nvartwo$ in $\ptm$ with $\nvar$.

%%%%%%%%%%%%%%%%%%%%%%%%%%%%%%
\subparagraph*{T-Translation.}
%%%%%%%%%%%%%%%%%%%%%%%%%%%%%%
  The T-translation, first introduced by Danos~\etal~\cite{Danos_Joinet_Schellinx_1995}, embeds classical
  logic into linear logic. It operates on both formulae and sequent-calculus proofs.
  Here we extend it to translate $\CalcParigot$-terms to $\CalcMELL$-terms,
  in such a way that it simulates reduction.
%\begin{definition}[Reduction for $\CalcParigot$]
  Reduction in $\CalcParigot$, denoted $\tomu$, is defined as the contextual closure of the following rules:
  \begin{center}
  \begin{tabular}{p{5cm}@{\hspace{1cm}}p{6.2cm}}
    $\begin{array}{llll}
      \pap{(\lam{\var}{\ptm})}{\ptmtwo} & \toax{\beta} & \ptm\sub{\var}{\ptmtwo} \\
      \pap{(\pmu{\nvar}{\ptm})}{\ptmtwo} & \toax{\mu} & \pmu{\nvar}{(\ptm\psub{\nvar}{\ptmtwo})} \\
      % \pname{\nvar}{(\pmu{\nvartwo}{\ptm})} & \toax{\rho} & \ptm \rensub{\nvartwo}{\nvar}\\
      % \pmu{\nvar}{\pname{\nvar}{\ptm}}  & \toax{\theta} & \ptm &   (\nvar\notin\fv{\ptm})
     \end{array}$
                                                      &
                                                        $\begin{array}{llll}
      % \pap{(\lam{\var}{\ptm})}{\ptmtwo} & \toax{\beta} & \ptm\sub{\var}{\ptmtwo} \\
      % \pap{(\pmu{\nvar}{\ptm})}{\ptmtwo} & \toax{\mu} & \pmu{\nvar}{(\ptm\psub{\nvar}{\ptmtwo})} \\
      \pname{\nvar}{(\pmu{\nvartwo}{\ptm})} & \toax{\rho} & \ptm \rensub{\nvartwo}{\nvar}\\
      \pmu{\nvar}{\pname{\nvar}{\ptm}}  & \toax{\theta} & \ptm &   (\nvar\notin\fv{\ptm})
     \end{array}$
  \end{tabular}
  \end{center}
%\end{definition}
%Recall from \rremark{derivable:rules:for:ofc:and:lam} that $\ofc{\tm}$ is a shorthand for $\iofctwo{\lvar}{(\ibott{\tm}{\lvar})}$.
Let us write $\ofc{\tm} \eqdef \iofctwo{\lvar}{(\ibott{\tm}{\lvar})}$.
The T-translation of a formula and a $\CalcParigot$-term are given by:
\begin{center}
  \begin{tabular}{p{5cm}p{6.2cm}}
    $\begin{array}{rcl}
    \ttra{\bott} & \eqdef & \bott\\
    \ttra{\btyp} & \eqdef & \btyp\\
    \ttra{(\typ\imp\typtwo)} &\eqdef & \why{\ofc{\lneg{(\ttra{\typ})}}}\parr\why{\ttra{\typtwo}} \\
%    \ttra{(\neg\typ)} & \eqdef & \why{\ofc{\lneg{(\ttra{\typ})}}}                                
     \end{array}$                                  
                                  &
  $\begin{array}{rcl}
    \ttrat{\var}{\vark} & \eqdef & \ibott{\var}{\ofc{\vark}} \\
    \ttrat{(\lam{\var}{\ptm})}{\vark} & \eqdef & \ibott{\ipar{\lvar}{\lvartwo}{\ttrat{\ptm}{\varktwo}\eofc{\var}{\lvar}\eofc{\varktwo}{\lvartwo}}}{\vark} \\
    \ttrat{(\ptm\,\ptmtwo)}{\vark} & \eqdef & \ttrat{\ptm}{\varktwo}\sub{\varktwo}{\pair{\ofc{\iwhy{\varkthree}{\ttrat{\ptmtwo}{\varkthree}}}}{\ofc{\vark}}} \\
    \ttrat{(\pname{\nvar}{\ptm})}{\vark} & \eqdef & \ibott{\ttrat{\ptm}{\varktwo}\sub{\varktwo}{\nvar}}{\vark}
    \\
    \ttrat{(\pmu{\nvar}{\ptm})}{\vark} & \eqdef & \ttrat{\ptm}{\varktwo}\sub{\varktwo}{\ione}\sub{\nvar}{\vark}
  \end{array}$
  \end{tabular}
\end{center}

\begin{lemma}[{name=Type preservation~\proofnote{Proof on pg.~\pageref{type_preservation_for_t_translation:proof}}, restate=[name=Type preservation]TypePreservationForTTranslation}]
%  \begin{lemma}
  \llem{type_preservation_for_t_translation}
  If $\pjum{\tenv}{\ptm}{\typ}{\nenv}$ holds in $\CalcParigot$, then $\djum{\why{\ttra{\tenv}},\lneg{(\ttra{\nenv})},\vark:\lneg{(\ttra{\typ})}}{\emptytenv}{\ttrat{\ptm}{\vark}}{\bott}$ holds in $\CalcMELL$, for $\vark$ a fresh linear variable.
  
\end{lemma}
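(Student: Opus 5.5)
The proof goes by induction on the derivation of $\pjum{\tenv}{\ptm}{\typ}{\nenv}$, with the statement generalized over the choice of the fresh variable $\vark$.

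\textbf{Reading the judgement.} All hypotheses of the target judgement sit in the unrestricted environment and the linear environment is empty. This is forced by the $\ofc{\vark}\eqdef\iofctwo{\lvar}{(\ibott{\vark}{\lvar})}$ occurring in $\ttrat{\var}{\vark}$: linearity condition 3 requires $\vark$ not to be linear there. So I read the environment as follows:
\begin{itemize}
\item each $\var:\typ$ of $\tenv$ becomes $\var:\why{\ttra{\typ}}$;
\item each $\nvar:\typ$ of $\nenv$ becomes $\nvar:\lneg{(\ttra{\typ})}$;
\item $\vark:\lneg{(\ttra{\typ})}$ is added.
\end{itemize}
All of these are unrestricted variables.

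\textbf{Auxiliary facts about \CalcMELL.} Besides the typing rules of Fig.~\ref{typing:rules:CalcMELL}, I need three standard facts, each proved by a routine induction on derivations.
\begin{itemize}
\item \emph{Weakening} of the unrestricted environment. This is needed because $\CalcParigot$ shares $\tenv,\nenv$ between premises, and so does \CalcMELL's $\utenv$.
\item \emph{Unrestricted substitution}: from $\djum{\utenv,\uvar:\typ}{\tenv}{\tm}{\typtwo}$ and $\djum{\utenv}{\emptytenv}{\tmtwo}{\typ}$, infer $\djum{\utenv}{\tenv}{\tm\sub{\uvar}{\tmtwo}}{\typtwo}$.
\item \emph{Derived typing of $\ofc{\tm}$}: if $\djum{\utenv}{\emptytenv}{\tm}{\typ}$ then $\djum{\utenv}{\emptytenv}{\ofc{\tm}}{\ofc{\typ}}$. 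This follows from \rulmIOfc{} and \rulmIBott{} with $\lvar:\lneg{\typ}$.
\end{itemize}

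\textbf{Cases.}
\begin{itemize}
\item \emph{Axiom.} By \rulmAxU{}, $\var:\why{\ttra{\typ}}$, and by the derived rule, $\ofc{\vark}:\ofc{\lneg{(\ttra{\typ})}}=\lneg{(\why{\ttra{\typ}})}$. Rule \rulmIBott{} then closes the case.

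\item \emph{Abstraction.} By the induction hypothesis with a fresh $\varktwo:\lneg{(\ttra{\typtwo})}$, the body $\ttrat{\ptm}{\varktwo}$ has type $\bott$ under $\var:\why{\ttra{\typ}}$.
  \begin{itemize}
  \item Two applications of \rulmEOfc{} discharge $\var$ and $\varktwo$, introducing linear hypotheses $\lvar:\ofc{\why{\ttra{\typ}}}$ and $\lvartwo:\ofc{\lneg{(\ttra{\typtwo})}}$.
  \item Rule \rulmIPar{} then gives type $\why{\ofc{\lneg{(\ttra{\typ})}}}\parr\why{\ttra{\typtwo}}=\ttra{(\typ\imp\typtwo)}$.
  \item Rule \rulmIBott{} against $\vark$ finishes the case.
  \end{itemize}

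\item \emph{Application.} The induction hypothesis for $\ptm$ uses $\varktwo:\lneg{(\ttra{(\typ\imp\typtwo)})}=\ofc{\why{\ttra{\typ}}}\tensor\ofc{\lneg{(\ttra{\typtwo})}}$.
  \begin{itemize}
  \item The induction hypothesis for $\ptmtwo$ with $\varkthree$, followed by \rulmIWhy{}, gives $\iwhy{\varkthree}{\ttrat{\ptmtwo}{\varkthree}}:\why{\ttra{\typ}}$.
  \item The derived $\ofc{}$-rule and \rulmITensor{} then give a term with no free linear variables, of type $\lneg{(\ttra{(\typ\imp\typtwo)})}$.
  \item Unrestricted substitution for $\varktwo$ closes the case.
  \end{itemize}

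\item \emph{Naming.} The induction hypothesis gives $\varktwo:\lneg{(\ttra{\typ})}$, and substituting the unrestricted variable $\nvar$ of the same type types $\ttrat{\ptm}{\varktwo}\sub{\varktwo}{\nvar}$ at $\ttra{\typ}$? Not quite: the induction hypothesis types $\ttrat{\ptm}{\varktwo}$ at $\bott$, so the substituted term has type $\bott$, not $\ttra{\typ}$. The statement's form $\ibott{\ttrat{\ptm}{\varktwo}\sub{\varktwo}{\nvar}}{\vark}$ therefore needs the left component at a type dual to $\vark$'s. Here $\vark:\lneg{(\ttra{\bott})}=\one$, and a $\bott$-typed term against $\vark:\one$ is well typed by \rulmIBott{}. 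So this case uses $\lneg{\bott}=\one$ explicitly.

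\item \emph{$\mu$.} The induction hypothesis gives type $\bott$ with $\varktwo:\lneg{\bott}=\one$ and $\nvar:\lneg{(\ttra{\typ})}$.
  \begin{itemize}
  \item Substitute $\ione:\one$ for $\varktwo$.
  \item Rename $\nvar$ to $\vark$. This is an unrestricted substitution of a variable of the same type, after which $\nvar$ is removed from the environment.
  \end{itemize}
\end{itemize}

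\textbf{Expected obstacle.} The delicate part is the bookkeeping of environments, not any single case.
\begin{itemize}
\item I must verify the linearity side-conditions, namely that the substituted terms $\pair{\ofc{\cdots}}{\ofc{\vark}}$ contain no free linear variables, so that unrestricted substitution applies.
\item I must handle the case where $\nvar$ already occurs in $\nenv$ (contraction of names in \rulpName{}). Since unrestricted variables may be shared, this only requires the environments to agree on types, which holds because $\nvar:\typ$ is recorded in $\nenv$.
\end{itemize}
I would first prove the unrestricted-substitution lemma, together with its interaction with \rulmIOfc{}, carefully, since every case except the axiom relies on it.
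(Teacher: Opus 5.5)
Your proposal is correct and follows the paper's proof essentially step for step. Both argue by induction on the $\CalcParigot$ derivation: the axiom case closes with \rulmIBott{} against $\ofc{\vark}$; the abstraction case uses two applications of \rulmEOfc{} (the paper's \rlem{unrestricted_eq_bang_linear}(\ref{to_lin})) followed by \rulmIPar{} and \rulmIBott{}; and the application, naming and $\mu$ cases substitute for the continuation variables, with $\lneg{\bott}=\one$ used in the naming case. Your reading of $\vark$ as an unrestricted variable, and hence of those substitutions as instances of \rulmUSub{} plus weakening, is exactly what the paper's derivations do, even though its statement calls $\vark$ ``linear'' and its proof labels those steps \rulmSub{}.
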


% Substitution, structural substitution and renaming commute with $\ttrat{\bullet}{}$:
% \begin{itemize}
%   \item $\ttrat{\ptm}{\vark}\sub{\var}{\iwhy{\varktwo}{\ttrat{\ptmtwo}{\varktwo}}}
% \rttome
% \ttrat{\ptm\sub{\var}{\ptmtwo}}{\vark}$.
% \item  $\ttrat{\ptm\psub{\nvar}{\ptmtwo}}{\vark} =
%   \ttrat{\ptm}{\vark}\sub{\nvar}{\pair{\ofc{\iwhy{\varktwo}{\ttrat{\ptmtwo}{\varktwo}}}}{\ofc{\nvar}}}$.
% \item $\ttrat{\ptm \rensub{\nvartwo}{\nvar}}{\vark}   = \ttrat{\ptm}{\vark} \sub{\nvartwo}{\nvar}$.
% \end{itemize}

\begin{proposition}[{name=Simulation ~\proofnote{Proof on pg.~\pageref{calcMELL_simulates_calcParigot_via_t_translation:proof}}, restate=[name=Simulation]CalcMELLSimulatesCalcParigotViaTTranslation}]
%  \begin{proposition}
  If $\ptm \tomu \ptmtwo$, then  $\ttrat{\ptm}{\vark} \rttome \ttrat{\ptmtwo}{\vark}$. 
\end{proposition}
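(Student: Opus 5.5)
The simulation will be proved by induction on the context in which the $\CalcParigot$-step $\ptm\tomu\ptmtwo$ takes place. Three substitution lemmas relating the T-translation to $\CalcMELL$'s substitutions must be established first, each by structural induction on $\ptm$:
\begin{enumerate}
\item $\ttrat{\ptm}{\vark}\sub{\var}{\iwhy{\varktwo}{\ttrat{\ptmtwo}{\varktwo}}} \rttome \ttrat{\ptm\sub{\var}{\ptmtwo}}{\vark}$, where we may reduce zero or more steps.
\item $\ttrat{\ptm\psub{\nvar}{\ptmtwo}}{\vark} = \ttrat{\ptm}{\vark}\sub{\nvar}{\pair{\ofc{\iwhy{\varktwo}{\ttrat{\ptmtwo}{\varktwo}}}}{\ofc{\nvar}}}$, possibly only up to $\cceq$.
\item $\ttrat{\ptm\rensub{\nvartwo}{\nvar}}{\vark} = \ttrat{\ptm}{\vark}\sub{\nvartwo}{\nvar}$, together with the fact that $\ttrat{\ptm}{\vark}$ depends on $\vark$ only through substitution, i.e.\ $\ttrat{\ptm}{\varktwo}\sub{\varktwo}{\tmtwo}$ agrees with the translation "at $\tmtwo$".
\end{enumerate}
The interesting case of (1) is the variable case. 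There
\[
\ibott{\var}{\ofc{\vark}}\sub{\var}{\iwhy{\varktwo}{\ttrat{\ptmtwo}{\varktwo}}} = \ibott{(\iwhy{\varktwo}{\ttrat{\ptmtwo}{\varktwo}})}{(\iofctwo{\lvar}{(\ibott{\vark}{\lvar})})},
\]
which $\raxwhyofc$-reduces to $\ttrat{\ptmtwo}{\varktwo}\sub{\varktwo}{(\ibott{\vark}{\lvar})\cos{\lvar}{\ione}}$. By the contra-substitution clause for $\lightning$ and $\lvar\cos{\lvar}{\ione}=\ione$, this is $\ttrat{\ptmtwo}{\varktwo}\sub{\varktwo}{\ione\eone{\vark}}$. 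The derived equation $\ione\eone{\tm}\cceq\tm$ then gives $\ttrat{\ptmtwo}{\vark}$ modulo $\cceq$, which is enough since $\tome$ works modulo $\cceq$.

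For the root steps, consider each rule in turn.
\begin{itemize}
\item $\beta$: $\ttrat{((\lam{\var}{\ptm})\ptmtwo)}{\vark}$ unfolds to
\[
\ibott{\ipar{\lvar}{\lvartwo}{\ttrat{\ptm}{\varktwo}\eofc{\var}{\lvar}\eofc{\varktwo}{\lvartwo}}}{\pair{\ofc{\iwhy{\varkthree}{\ttrat{\ptmtwo}{\varkthree}}}}{\ofc{\vark}}}.
\]
One $\raxpartensor$ step turns this into $\ttrat{\ptm}{\varktwo}\eofc{\var}{\ofc{\iwhy{\varkthree}{\dots}}}\eofc{\varktwo}{\ofc{\vark}}$. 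Two $\raxofc$ steps follow, where $(\ibott{\tmtwo}{\lvar})\cos{\lvar}{\ione}=\ione\eone{\tmtwo}\cceq\tmtwo$. Finally we conclude with (1) and (3).
\item $\mu$: since $\ttrat{(\pmu{\nvar}{\ptm})}{\varktwo}$ is a substitution instance, its application translation becomes $\ttrat{\ptm}{\varkthree}\sub{\varkthree}{\ione}\sub{\nvar}{\pair{\ofc{\dots}}{\ofc{\vark}}}$. Lemma (2) identifies this with the translation of the reduct, with $\nvar$ renamed appropriately. This should hold syntactically or up to $\cceq$, with no reduction steps.
\item $\rho$: the translation of the redex is $\ibott{\ttrat{\ptm}{\varkthree}\sub{\varkthree}{\ione}\sub{\nvartwo}{\varktwo}\sub{\varktwo}{\nvar}}{\vark}$. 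A term of type $\bott$ cut against $\vark$ is handled by $\ibott{\tm}{\vark}\cceq\tm\eone{\vark}$ together with $\ruleCCEqCtx$ and $\ruleCCEqBetaOne$, and the result is closed with (3).
\item $\theta$: this is similar, using $\ruleCCEqCosAndSub$ and the derived equation $\ibott{\ione}{\tm}\cceq\tm$.
\end{itemize}

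Contextual closure follows because the translation is compositional. The subterm translations occur under $\ipar{}{}{}$, substitution, $\lightning$, and $\iwhy{}{}$, and reduction and $\cceq$ are closed under contexts and under substitution. The substitution part needs the substitutivity of $\tome$, which follows from the compatibility of substitution with reduction and with $\cceq$ asserted earlier.

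I expect the main obstacle to be the $\mu$ and $\rho$ cases. There, the occurrences of $\nvar$ inside $\ttrat{\ptm}{}$ (as $\ofc{\nvar}$ in variable positions) must be shown to line up exactly with structural substitution. This requires carefully tracking the $\ione$ and $\eone{}$ artifacts produced by $\sub{\varktwo}{\ione}$ and absorbing them through $\ruleCCEqBetaOne$, $\ruleCCEqEOneSymm$ and $\ruleCCEqCtx$. I would first try to prove a general normal-form lemma stating $\ttrat{\ptm}{\vark}\cceq\ibott{\tm_\ptm}{\vark}$ for $\ptm$ of the form $\pname{\nvar}{\cdot}$ or $\pmu{}{}$, to streamline these computations.
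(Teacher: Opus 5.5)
Your decomposition is the same as the paper's: $\beta$ is simulated by actual steps, $\mu$ by syntactic identity of the translations, $\rho$ and $\theta$ by $\cceq$, and contextual closure by substitutivity of $\tome$ and $\cceq$. Your $\beta$, $\mu$ and $\theta$ computations are sound. The $\rho$ case, however, has a genuine gap.

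\textbf{The gap in $\rho$.} You need $\ibott{\ttrat{\ptm}{\varkthree}\sub{\varkthree}{\ione}}{\vark}\cceq\ttrat{\ptm}{\vark}$ for the body $\ptm:\bott$, so $\vark:\one$. Pushing $\eone{\vark}$ with $\ruleCCEqCtx$ onto the $\ione$ that replaced $\varkthree$ requires $\varkthree$ to occur exactly once, in surface position.
\begin{itemize}
\item That holds when $\ptm$ is a named term: $\ttrat{(\pname{\nvar'}{\ptm'})}{\varkthree}=\ibott{X}{\varkthree}$ with $X=\ttrat{\ptm'}{\varktwo}\sub{\varktwo}{\nvar'}$.
\item The typing rule for $\mu$ here admits any $\bott$-typed body. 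If $\ptm$ is a variable or an application, $\varkthree$ sits inside $\ofc{\varkthree}=\iofctwo{\lvar}{(\ibott{\varkthree}{\lvar})}$. Since $\iofctwo{\lvar}{\Box}$ is not a surface context, $\ruleCCEqCtx$ cannot reach it.
\item If $\ptm$ is itself a $\mu$-abstraction, $\varkthree$ may occur zero or several times. So the normal form $\ttrat{\ptm}{\vark}\cceq\ibott{\tm_\ptm}{\vark}$ you propose for $\mu$-terms is false.
\end{itemize}

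\textbf{Counterexample.} Take $\vartwo:\bott$, $\nvar':\bott$, and the step $\pname{\nvar}{(\pmu{\nvartwo}{(\pmu{\nvar'}{\vartwo})})}\tomu\pmu{\nvar'}{\vartwo}$. The two translations are $\ibott{(\ibott{\vartwo}{\ofc{\ione}})}{\vark}$ and $\ibott{\vartwo}{\ofc{\ione}}$.
\begin{itemize}
\item The first has no $\pretome$-redex. By the strong bisimulation theorem, neither does anything $\cceq$-equivalent to it, so it is $\tome$-normal.
\item The first contains $\vark$ free and the second does not. Since $\cceq$ preserves free variables, neither $\rttome$ nor $\cceq$ can relate them.
\end{itemize}

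\textbf{Relation to the paper and a fix.} This is a defect of the statement rather than of your route. The paper discharges $\rho$ by a lemma asserting $\ttrat{\ptm}{\vark}\cceq\ttrat{\ptmtwo}{\vark}$ ``by induction on $\ptm$'', and the same example refutes that lemma. The result becomes provable if $\mu$-bodies are restricted to named terms $\pname{\nvar'}{\ptm'}$, as in Parigot's original syntax. Then your own computation closes the case: $(\ibott{X}{\ione})\eone{\vark}\cceq\ibott{X}{\ione\eone{\vark}}\cceq\ibott{X}{\vark}$, using $\ruleCCEqCtx$, $\ruleCCEqEOneSymm$ and $\ruleCCEqBetaOne$.

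\textbf{Minor slip.} $(\ibott{\tmtwo}{\lvar})\cos{\lvar}{\ione}$ equals $\tmtwo\eone{\ione}$, not $\ione\eone{\tmtwo}$; the latter is ill-typed unless $\tmtwo:\one$. Your $\beta$ and variable-case steps then follow directly from $\ruleCCEqBetaOne$.
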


%%%%%%%%%%%%%%%%%%%%%%%%%%%
\subparagraph*{Q-Translation.}
%%%%%%%%%%%%%%%%%%%%%%%%%%%
The Q-translation, introduced also in~\cite{Danos_Joinet_Schellinx_1995},
is an alternative embedding of classical logic into linear logic.
While the T-translation is a \emph{call-by-name} translation,
it is well-known that the Q-translation is instead a \emph{call-by-value} translation.
To study the Q-translation, we need to first recall a notion of call-by-value
reduction for the $\CalcParigot$-calculus~\cite{Py1998}.

The types and terms of $\CalcParigotVal$ are the same as those of $\CalcParigot$.
We use the notations $\ptm\psub{\nvar}{\pval}$ for the usual
(right-)structural substitution, $\ptm\psubtwo{\nvar}{\pval}$ for
left-structural substitution.
%and $\ptm \rensub{\nvartwo}{\nvar}$ for renaming.
  Reduction in $\CalcParigotVal$, denoted $\to_{\muv}$, is defined as the contextual closure of the following reduction axioms:
 \begin{center}
  \begin{tabular}{p{5cm}@{\hspace{1cm}}p{6.2cm}}
    $\begin{array}{llll}
      \pap{(\lam{\var}{\ptm})}{\pval} & \toax{\betav} & \ptm\sub{\var}{\pval} \\
      \pap{(\pmu{\nvar}{\ptm})}{\pval} & \toax{\muv} & \pmu{\nvar}{(\ptm\psub{\nvar}{\pval})} \\
      \pap{\pval}{(\pmu{\nvar}{\ptm})} & \toax{\muvprime} &
                                                       \pmu{\nvar}{(\ptm\psubtwo{\nvar}{\pval})} \\
      % \pname{\nvar}{(\pmu{\nvartwo}{\ptm})} & \toax{\rho} & \ptm \rensub{\nvartwo}{\nvar}\\
      % \pmu{\nvar}{\pname{\nvar}{\ptm}}  & \toax{\theta} & \ptm &   (\nvar\notin\fv{\ptm})
     \end{array}$
                                                     &
                                                         $\begin{array}{llll}
      % \pap{(\lam{\var}{\ptm})}{\pval} & \toax{\betav} & \ptm\sub{\var}{\pval} \\
      % \pap{(\pmu{\nvar}{\ptm})}{\pval} & \toax{\muv} & \pmu{\nvar}{(\ptm\psub{\nvar}{\pval})} \\
      % \pap{\pval}{(\pmu{\nvar}{\ptm})} & \toax{\muvprime} &
      %                                                  \pmu{\nvar}{(\ptm\psubtwo{\nvar}{\pval})} \\
      \pname{\nvar}{(\pmu{\nvartwo}{\ptm})} & \toax{\rho} & \ptm \rensub{\nvartwo}{\nvar}\\
      \pmu{\nvar}{\pname{\nvar}{\ptm}}  & \toax{\theta} & \ptm &   (\nvar\notin\fv{\ptm})
     \end{array}$
  \end{tabular}
  \end{center}
The Q-translation of types is defined as follows:
  \[\begin{array}{r@{\hspace{.1cm}}c@{\hspace{.1cm}}l@{\hspace{.5cm}}r@{\hspace{.1cm}}c@{\hspace{.1cm}}l@{\hspace{.5cm}}r@{\hspace{.1cm}}c@{\hspace{.1cm}}l}
    \qtra{\bott} & \eqdef & \ofc{\bott} &
    \qtra{\btyp} & \eqdef & \ofc{\btyp} &
    \qtra{(\typ\imp\typtwo)}   &  \eqdef & \ofc{(\lneg{(\qtra{\typ})}\parr\why{\qtra{\typtwo}})} \\
    %\qtra{(\neg\typ)} & \eqdef & \ofc{\lneg{(\qtra{\typ})}}
  \end{array}\]
We define $\qtratwo{\typ}$ as the function such that $\qtra{\typ}=\ofc{\qtratwo{\typ}}$.   Values are defined as:
 % \[
 %  \begin{array}{rcl}
    $\pval  ::=  \var\,|\,\lam{\var}{\ptm}$.
%   \end{array}
% \]
The $\qtra{\arg}$-translation for values and terms, are given mutually recursively by
\begin{center}
  \begin{tabular}{p{6.2cm}p{5.9cm}}
  $\begin{array}{rcl}
          \qtratwo{\var} & \eqdef & \var \\
     \qtratwo{(\lam{\var}{\ptm})} & \eqdef & \ipar{\lvar}{\lvartwo}{\qtrat{\ptm}{\vark}\eofc{\var}{\lvar}\eofc{\vark}{\lvartwo}}
   \end{array}$
                                             &
                                               $\begin{array}{rcl}
                                                  \qtrat{\pval}{\vark} & \eqdef & \ibott{\ofc{\qtratwo{\pval}}}{\vark} \\
    % \qtra{(\lam{\var}{\ptm})} & \eqdef & (\ibott{(\ipar{\lvar}{\lvartwo}{\qtra{\ptmtwo}\eofc{\var}{\lvar}\eofc{\vark}{\lvartwo}})}{\lvarthree})\ewhy{\lvarthree}{\vark} \\

    %     \qtra{(\lam{\var}{\ptm})} & \eqdef &
    % \ibott{\ofc{(\ipar{\lvar}{\lvartwo}{\qtra{\ptm}\eofc{\var}{\lvar}\eofc{\vark}{\lvartwo}})}}{\vark}\\
    
    % \qtra{(\ptm\,\ptmtwo)} & \eqdef & \qtra{\ptm}\sub{\vark}{\iwhy{\uvartwo}{\qtra{\ptmtwo}\sub{\vark}{\invap{\uvartwo}{\ofc{\iwhy{\uvar}{(\ibott{\lvar}{\uvar})\ewhy{\lvar}{\vark}}}}}}}  \\
    \qtrat{(\ptm\,\ptmtwo)}{\vark} & \eqdef & \qtrat{\ptm}{\varkthree}\sub{\varkthree}{\iwhy{\uvartwo}{\qtrat{\ptmtwo}{\varktwo}\sub{\varktwo}{\invap{\uvartwo}{\ofc{\vark}}}}} \\
    \qtrat{(\pname{\nvar}{\ptm})}{\vark} & \eqdef & \ibott{\qtrat{\ptm}{\varktwo}\sub{\varktwo}{\nvar}}{\vark}
    \\
    \qtrat{(\pmu{\nvar}{\ptm})}{\vark} & \eqdef & \qtrat{\ptm}{\varktwo}\sub{\varktwo}{\ione}\sub{\nvar}{\vark}

                                                \end{array}$
  \end{tabular}
  \end{center}

\begin{lemma}[{name=Type preservation~\proofnote{Proof on pg.~\pageref{type_preservation_for_q_translation:proof}}, restate=[name=Type preservation]TypePreservationForQTranslation}]
%\begin{lemma}
\llem{type_preservation_for_q_translation}
  If $\pjum{\tenv}{\ptm}{\typ}{\nenv}$ holds in $\CalcParigotVal$, then $\djum{\qtratwo{\tenv},\lneg{(\qtra{\nenv})},\vark:\lneg{(\qtra{\typ})}}{\emptytenv}{\qtra{\ptm}}{\bott}$ holds in $\CalcMELL$.
\end{lemma}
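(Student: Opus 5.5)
The plan is to prove the statement by induction on the derivation of $\pjum{\tenv}{\ptm}{\typ}{\nenv}$, simultaneously with an auxiliary claim for values. For every value, if $\pjum{\tenv}{\pval}{\typ}{\nenv}$ then $\djum{\qtratwo{\tenv},\lneg{(\qtra{\nenv})}}{\emptytenv}{\qtratwo{\pval}}{\qtratwo{\typ}}$. As in the statement, $\vark$ and the continuation variables are treated as unrestricted variables of $\CalcMELL$, and the linear environment is always empty.

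I will use two admissible facts about $\CalcMELL$. The first is \emph{weakening} of the unrestricted environment. The second is \emph{unrestricted substitution}: from $\djum{\utenv,\uvar:\typ}{\tenv}{\tm}{\typtwo}$ and $\djum{\utenv}{\emptytenv}{\tmtwo}{\typ}$ we get $\djum{\utenv}{\tenv}{\tm\sub{\uvar}{\tmtwo}}{\typtwo}$. Both follow by a routine induction on typing derivations, and every substituted term in the translation has an empty linear environment, so the second fact is exactly what is needed.

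I also use the derived promotion $\ofc{\tm}=\iofctwo{\lvar}{(\ibott{\tm}{\lvar})}$. If $\djum{\utenv}{\emptytenv}{\tm}{\typ}$, then with $\lvar:\lneg{\typ}$ the rule \rulmIBott{} gives $\bott$, and \rulmIOfc{} gives $\djum{\utenv}{\emptytenv}{\ofc{\tm}}{\ofc{\typ}}$.

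The value cases go as follows.
\begin{itemize}
\item For a variable $\var$, $\qtratwo{\var}=\var$ is typed by \rulmAxU{} from $\var:\qtratwo{\typ}\in\qtratwo{\tenv}$.
\item For $\lam{\var}{\ptm}$ we have $\qtratwo{(\typ\imp\typtwo)}=\lneg{(\qtra{\typ})}\parr\why{\qtra{\typtwo}}$. I give the linear variables the types $\lvar:\qtra{\typ}=\ofc{\qtratwo{\typ}}$ and $\lvartwo:\ofc{\lneg{(\qtra{\typtwo})}}$. The IH on $\ptm$, with $\var:\qtratwo{\typ}$ and $\vark:\lneg{(\qtra{\typtwo})}$ unrestricted, yields $\qtrat{\ptm}{\vark}:\bott$. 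Two applications of \rulmEOfc{} bind $\var$ and $\vark$ against $\lvar$ and $\lvartwo$. Then \rulmIPar{} gives type $\lneg{(\qtra{\typ})}\parr\lneg{(\ofc{\lneg{(\qtra{\typtwo})}})}=\lneg{(\qtra{\typ})}\parr\why{\qtra{\typtwo}}$, as required.
\end{itemize}

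The term cases go as follows.
\begin{itemize}
\item For a value, $\qtrat{\pval}{\vark}=\ibott{\ofc{\qtratwo{\pval}}}{\vark}$ is typed by promotion and \rulmIBott{}, using $\vark:\lneg{(\qtra{\typ})}$.
\item For an application $\ptm\,\ptmtwo$, note that $\lneg{(\qtra{(\typ\imp\typtwo)})}=\why{(\qtra{\typ}\tensor\ofc{\lneg{(\qtra{\typtwo})}})}$.
  \begin{itemize}
  \item Under $\uvartwo:\lneg{(\qtra{\typ})}\parr\why{\qtra{\typtwo}}$, the term $\ofc{\vark}$ has type $\ofc{\lneg{(\qtra{\typtwo})}}$, which is $\lneg{(\why{\qtra{\typtwo}})}$.
  \item By \rulmEParTwo{}, $\invap{\uvartwo}{\ofc{\vark}}:\lneg{(\qtra{\typ})}$. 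This is the type expected for $\varktwo$ by the IH on $\ptmtwo$.
  \item Substituting gives a term of type $\bott$, and \rulmIWhy{} produces type $\why{(\qtra{\typ}\tensor\ofc{\lneg{(\qtra{\typtwo})}})}$. This matches the type of $\varkthree$ in the IH for $\ptm$.
  \item A final unrestricted substitution closes the case.
  \end{itemize}
\item For a naming $\pname{\nvar}{\ptm}$, $\nvar:\lneg{(\qtra{\typ})}$ is in $\lneg{(\qtra{\nenv})}$. Substituting it for $\varktwo$ and applying \rulmIBott{} against $\vark:\lneg{(\qtra{\bott})}$ works because $\qtra{\bott}$ and its negation are dual.
\end{itemize}

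The main obstacle is the $\pmu{\nvar}{\ptm}$ case. There the IH types $\qtrat{\ptm}{\varktwo}$ with $\varktwo:\lneg{(\qtra{\bott})}=\lneg{(\ofc{\bott})}=\why{\one}$. The substitution for $\nvar$ is just a renaming into $\vark$, so the only issue is what replaces $\varktwo$. The literal $\ione$ has type $\one$, not $\why{\one}$.

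I would first try reading the $\ione$ in the translation as the canonical inhabitant $\iwhy{\uvar}{\uvar}$ of $\why{\one}$. Its typing is: with $\uvar:\lneg{\one}=\bott$, \rulmAxU{} gives $\uvar:\bott$, and \rulmIWhy{} gives $\why{\one}$. Alternatively, I would check whether the intended typing replaces $\qtra{\bott}$ by $\bott$ at this point. Once this coercion is fixed, the remaining steps are mechanical type-checking with the De Morgan identities for $\lneg{(\cdot)}$.
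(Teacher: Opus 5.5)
\textbf{Gap in the naming case.} Your overall structure is the paper's: induction on the derivation, an auxiliary claim for $\qtratwo{\pval}$ (the paper inlines this into the $\lambda$ case), and weakening plus unrestricted substitution for continuation variables. Your value, $\lambda$ and application cases check. The naming case does not: you declare it fine, but it fails for exactly the reason you found in the $\mu$ case.

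The left operand $\qtrat{\ptm}{\varktwo}\sub{\varktwo}{\nvar}$ has type $\bott$, as every translated term does, not type $\qtra{\bott}$. So the duality you invoke does not apply. \rulmIBott{} forces $\vark:\lneg{\bott}=\one$, whereas the statement gives $\vark:\lneg{(\qtra{\bott})}=\lneg{(\ofc{\bott})}=\why{\one}$.

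Concretely, take $\pname{\nvar}{\var}$ with $\var:\btyp$ and $\nvar:\btyp$. It translates to $\ibott{(\ibott{\ofc{\var}}{\nvar})}{\vark}$, which is untypable with $\vark:\why{\one}$. So with the stated definitions the lemma already fails here. Your first repair, reading $\ione$ as $\iwhy{\uvar}{\uvar}$, fixes only $\mu$. And since you leave the choice between your two repairs open, neither case is actually closed.

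\textbf{What the paper does.} Your diagnosis of $\mu$ is correct, and the paper's proof does not really address it. In both the naming and $\mu$ cases it types the continuation of a $\bott$-typed term as $\one$: it writes $\vark:\one$, and respectively $\varktwo:\one$ in the induction hypothesis. That is, it silently uses your second option, copied from the T-translation proof, where $\ttra{\bott}=\bott$ justifies it.

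This reading clashes with $\qtra{\bott}=\ofc{\bott}$ for $\bott$-typed hypotheses. For example, $\pmu{\nvar}{\var}$ with $\var:\bott$ translates to $\ibott{\ofc{\var}}{\ione}$. This is untypable for any choice of $\qtratwo{\bott}$, because $\ofc{\var}$ always has an of-course type.

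\textbf{What you need to do.} To get a proof you must commit to one consistent convention and state it. Either option below works:
\begin{itemize}
\item Read $\qtra{\bott}$ as $\bott$ everywhere, both inside arrow types and for continuations. Then naming, $\mu$, $\lambda$ and application all check, provided you exclude term variables of type $\bott$, free or $\lambda$-bound, from the source.
\item Keep $\qtra{\bott}=\ofc{\bott}$ and change the translation. Naming becomes $\ibott{\ofc{(\qtrat{\ptm}{\varktwo}\sub{\varktwo}{\nvar})}}{\vark}$, and $\mu$ substitutes $\iwhy{\uvar}{\uvar}$ for $\varktwo$. Both then type-check, but you are now proving a statement about a modified translation.
\end{itemize}
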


% Substitution, structural substitution and renaming commute with $\ttrat{\bullet}{}$:
% \begin{itemize}
%   \item  $\qtrat{\pval}{\vark}\sub{\vark}{\iwhy{\var}{\tm}} \tome \tm\sub{\var}{\qtratwo{\pval}}$.
%   \item  $\qtrat{\ptm}{\vark}\sub{\var}{\qtratwo{\pval}} = \qtrat{\ptm\sub{\var}{\pval}}{\vark}$.
%     \item $\qtrat{\ptm \rensub{\nvartwo}{\nvar}}{\vark}   = \qtrat{\ptm}{\vark} \sub{\nvartwo}{\nvar}$.
% \end{itemize}

\begin{proposition}[{name=Simulation~\proofnote{Proof on pg.~\pageref{calcMELL_simulates_calcParigot_via_q_translation:proof}}, restate=[name=Simulation]CalcMELLSimulatesCalcParigotViaQTranslation}]
% \begin{proposition}
  If $\ptm \to_{\muv} \ptmtwo$, then $\qtrat{\ptm}{\vark} \eqme \qtrat{\ptmtwo}{\vark}$, where $\eqme$ denotes the reflexive, symmetric, and transitive closure of $\tome$.
\end{proposition}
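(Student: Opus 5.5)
We prove the statement by induction on the derivation of $\ptm \to_{\muv} \ptmtwo$. The contextual cases follow once we know that $\qtrat{\arg}{\vark}$ is compositional. The translation of each constructor is built from the translations of its immediate subterms by plugging them into a fixed $\CalcMELL$ context. We take this context to be a linear or unrestricted substitution, or a surrounding $\ibott{}{}$, $\ipar{\lvar}{\lvartwo}{}$ or $\eofc{}{}$. Since $\eqme$ is closed under these contexts, substitution lemmas for $\tome$ suffice: $\tm\tome\tm'$ implies $\tm\sub{\lvar}{\tmtwo}\tome\tm'\sub{\lvar}{\tmtwo}$, and $\tmtwo\tome\tmtwo'$ implies $\tm\sub{\lvar}{\tmtwo}\rttome\tm\sub{\lvar}{\tmtwo'}$. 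One subtlety is that the body of a $\lambda$ sits inside $\qtratwo{(\lam{\var}{\ptm})}$, which in turn sits under the promotion $\ofc{}$. We therefore need $\tome$ to be closed under $\iofctwo{\lvar}{\arg}$. Pre-reduction is a contextual closure, so this holds.

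We first establish three commutation lemmas, each by induction on $\ptm$.
\begin{enumerate}
\item \textbf{Value substitution.} $\qtrat{\ptm}{\vark}\sub{\var}{\qtratwo{\pval}} = \qtrat{\ptm\sub{\var}{\pval}}{\vark}$. This holds because variables are translated as $\qtratwo{\var}=\var$, with $\var$ unrestricted.
\item \textbf{Renaming.} $\qtrat{\ptm\rensub{\nvartwo}{\nvar}}{\vark}=\qtrat{\ptm}{\vark}\sub{\nvartwo}{\nvar}$.
\item \textbf{Structural substitution.} $\qtrat{\ptm\psub{\nvar}{\pval}}{\vark}$ is $\eqme$-equivalent to $\qtrat{\ptm}{\vark}\sub{\nvar}{\tmthree_\pval}$, where $\tmthree_\pval$ is the $\why{}$-term playing the role of a continuation that applies its argument to $\pval$. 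Concretely, $\tmthree_\pval$ is what $\qtrat{(\ptmthree\,\pval)}{\vark'}$ substitutes for $\varkthree$, namely $\iwhy{\uvartwo}{\qtrat{\pval}{\varktwo}\sub{\varktwo}{\invap{\uvartwo}{\ofc{\nvar}}}}$. The left-structural substitution $\psubtwo{\nvar}{\pval}$ is handled in the same way.
\end{enumerate}
In addition we need the administrative fact $\qtrat{\pval}{\vark}\sub{\vark}{\iwhy{\uvar}{\tm}} \tome \tm\sub{\uvar}{\qtratwo{\pval}}$. This is a $\raxofcwhy$ step on $\ibott{\ofc{\qtratwo{\pval}}}{\iwhy{\uvar}{\tm}}$, followed by the structural equation $\ibott{\tmthree}{\lvar}\cos{\lvar}{\ione}\cceq\tmthree$, which is a derived $\cceq$ equation since $\ibott{\tmthree}{\ione}\cceq\tmthree$ for $\tmthree:\bott$.

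The root cases then go as follows.
\begin{itemize}
\item \textbf{$\betav$.} $\qtrat{((\lam{\var}{\ptm})\,\pval)}{\vark}$ unfolds, via the administrative fact, to $\qtrat{\pval}{\varktwo}\sub{\varktwo}{\invap{\qtratwo{(\lam{\var}{\ptm})}}{\ofc{\vark}}}$. We then fire $\raxparR$ on the contra-application against $\ipar{\lvar}{\lvartwo}{\ldots}$. This produces $\qtrat{\ptm}{\vark'}\eofc{\var}{\lvar}\eofc{\vark'}{\ofc{\vark}}$ with $\lvar$ contra-substituted by $\ione$. After $\raxofc$ on $\eofc{\vark'}{\ofc{\vark}}$, we use value substitution for $\var$. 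Reducing $\qtrat{\pval}{}$ against the resulting $\why{}$-abstraction is again an instance of $\raxofcwhy$.
\item \textbf{$\rho$ and $\theta$.} These are essentially equations. For $\rho$, $\ibott{\qtrat{\ptm}{\varktwo}\sub{\varktwo}{\ione}\sub{\nvartwo}{\varktwo}\sub{\varktwo}{\nvar}}{\vark}$ becomes $\qtrat{\ptm}{}\sub{\nvartwo}{\nvar}$ by the equations for $\ione$ ($\ibott{\ione}{\tm}\cceq\tm$ and $\ruleCCEqBetaOne$) together with renaming. For $\theta$, we use $\ibott{\tm}{\ione}\cceq\tm$. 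These cases explain why only $\eqme$, rather than $\rttome$, is claimed.
\item \textbf{$\muv$ and $\muvprime$.} The left-hand side substitutes, for the $\mu$-bound name, a $\why{}$-continuation that, by structural substitution, is exactly the continuation $\tmthree_\pval$ appearing in the right-hand side.
\end{itemize}

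The main obstacle is the structural-substitution lemma together with the $\muv$ and $\muvprime$ cases. Each occurrence of $\pname{\nvar}{\ptmthree}$ is translated as $\ibott{\ldots}{\nvar}$ with $\nvar$ unrestricted. Replacing $\nvar$ by $\tmthree_\pval$ yields $\ibott{\qtrat{\ptmthree}{}}{\tmthree_\pval}$, whereas translating $\pname{\nvar}{(\ptmthree\,\pval)}$ yields $\qtrat{\ptmthree}{\varkthree}\sub{\varkthree}{\tmthree_\pval}$ directly. Relating these two forms is the key step, and it is where both reduction and expansion are genuinely needed. We would first try to bridge them with the derived equation $\ibott{\tm}{\lvar}\cos{\lvar}{\ione}\cceq\tm$ together with $\tm\cos{\lvar}{\tmthree}\cceq\tm\sub{\lvar}{\tmthree}$ for $\tm:\bott$. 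The value case of $\muvprime$ additionally requires the administrative $\raxofcwhy$ step in the reverse direction.
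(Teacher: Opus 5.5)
Your architecture matches the paper's. You argue by induction on the $\muv$-step, use compositionality of $\qtrat{\arg}{}$ for the contextual cases, prove value-substitution and renaming lemmas, and give one argument per root rule; the paper proves the proposition from exactly such per-rule lemmas. Your $\betav$ and $\theta$ cases are fine. Two things need correcting, and the second is a genuine gap.

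First, you misread the translation of names. By definition $\qtrat{(\pname{\nvar}{\ptmthree})}{\vark}=\ibott{\qtrat{\ptmthree}{\varktwo}\sub{\varktwo}{\nvar}}{\vark}$, so $\nvar$ becomes the continuation of $\ptmthree$, not the right argument of $\ibott{\cdot}{\cdot}$. Replacing $\nvar$ by $\tmthree_\pval$ therefore gives $\ibott{\qtrat{\ptmthree}{\varktwo}\sub{\varktwo}{\tmthree_\pval}}{\vark}$, which is literally what the application clause produces for $\qtrat{(\pname{\nvar}{(\pap{\ptmthree}{\pval})})}{\vark}$. Hence your lemma for $\psub{\nvar}{\pval}$ is a syntactic identity, and $\muv$ is simulated by equality, as the paper states.

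So the ``main obstacle'' you describe does not exist. The bridge you propose for it would not have applied anyway: $\tm\cos{\lvar}{\tmthree}\cceq\tm\sub{\lvar}{\tmthree}$ concerns linear variables, whereas names and continuations are unrestricted here.

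The case that genuinely needs reduction on both sides is $\muvprime$, and it is not handled ``the same way'' as $\muv$. On the left, the name is replaced by $\invap{\uvartwo}{\ofc{\vark}}$, and one administrative $\raxofcwhy$ step turns this into $\invap{\qtratwo{\pval}}{\ofc{\vark}}$. On the right, one $\raxofcwhy$ step is needed at each translated occurrence of $\pname{\nvar}{(\pap{\pval}{\ptmthree})}$. After these steps the two sides agree modulo $\ruleCCEqBetaOne$. This valley, not $\rho$ or $\theta$, is why only $\eqme$ is claimed; the paper's lemma for $\muvprime$ gives $\rttofrommod{\cceq}$. A minor point: $(\ibott{\tmthree}{\lvar})\cos{\lvar}{\ione}=\tmthree\eone{\ione}\cceq\tmthree$ is just $\ruleCCEqBetaOne$. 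No $\bott$-typing is involved, and $\qtratwo{\pval}$ is not of type $\bott$ in any case.

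Second, your $\rho$ case requires $\ibott{\qtrat{\ptm}{\varktwo}\sub{\varktwo}{\ione}}{\vark}\cceq\qtrat{\ptm}{\vark}$ for the $\bott$-typed body $\ptm$ of the $\mu$. The $\ione$-equations give this only when $\varktwo$ sits at the root of $\qtrat{\ptm}{\varktwo}$, i.e.\ when $\ptm$ is a named term. In the $\lambda\mu$ used here, however, a $\mu$-body can be any term of type $\bott$, say $\ptm=\pap{\var}{\vartwo}$. Then $\varktwo$ occurs inside the box $\ofc{\varktwo}$, and in general it may occur several times.

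After the administrative step the two sides of this instance are $\ibott{(\ibott{\ofc{\vartwo}}{(\invap{\var}{\ofc{\ione}})})}{\vark}$ and $\ibott{\ofc{\vartwo}}{(\invap{\var}{\ofc{\vark}})}$. No $\cceq$ axiom moves a term into the body of $\iofctwo{\lvar}{\cdot}$: $\ruleCCEqCtx$ excludes that position, and $\ruleCCEqCosAndSub$ only substitutes for linear variables. Both terms also appear to be normal. So this case cannot be closed as you propose. You need either to restrict to Parigot's original syntax, where every $\mu$-body is a named term, or to find a different argument. The paper's own lemma for $\rho$ asserts the same $\cceq$ ``by induction on $\ptm$'' and does not address this case either.
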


\subsection{Simulation of Curien and Herbelin's $\overline{\lambda}\mu\tilde{\mu}$ via the T-translation}

Curien and Herbelin~\cite{DBLP:conf/icfp/CurienH00} introduced a term assignment for classical logic in sequent calculus style that has since become a cornerstone in the study of computational interpretations of classical logic.  Its merit is that it reveals hidden dualities in the programming language from the natural symmetries of the sequent calculus (\cf~\cite{DOWNEN_ARIOLA_2018} for a tutorial). 

%\frametitle{Translation of $\bar{\lambda}\mu\tilde{\mu}$ into $\CalcMELL$ \hfill (1/2)}

We start by recalling the sets of types and terms of $\bar{\lambda}\mu\tilde{\mu}$,
which are are defined as follows.
Note that terms are split into three syntactic categories (terms, co--terms and commands):
 \begin{center}
  \begin{tabular}{p{5cm}p{6.2cm}}
  $\begin{array}{lrcl}
    \text{(Types)} &
    \typ,\typtwo & ::= & {\btyp} \mid {\typ\imp\typtwo}
   \end{array}$
    &
      $\begin{array}{lrcl}
      \text{(Terms)} &
    \mval,\mval' & ::= &
           \var
%      \mid \UNDER{\mmu{\nvar^{{\neg\typ}}}{\mcmd^{{\bot}}}}{{\typ}}
      \mid \mmu{\mvar}{\mcmd}
%      \mid \UNDER{\lam{\var^{{\typ}}}{\mval^{{\typtwo}}}}{{\typ\imp\typtwo}}
      \mid \lam{\var}{\mval}
    \\
      \text{(Co-terms)} &
    \menv,\menv' & ::= &
           \mvar
%      \mid \UNDER{\mapp{\mval^{{\typ}}}{\menv^{{\neg\typtwo}}}}{{\neg(\typ\imp\typtwo)}}
      \mid \mapp{\mval}{\menv}
    \\
      \text{(Commands)} &
    \mcmd,\mcmd' & ::= &
%           \UNDER{\mabs{\mval^{{\typ}}}{\menv^{{\neg\typ}}}}{{\bot}}
           \mabs{\mval}{\menv}
       \end{array}$
  \end{tabular}
\end{center}
The typing system consists of three judgements $\Gamma \vdash
\mval:\typ \mid \Delta$, $\Gamma \mid \menv:\typ \vdash \Delta$,
and $c: (\Gamma \vdash \Delta)$,
with the following typing rules:

\scalebox{\proofScaleFactor}{\begin{minipage}{\textwidth}
    \[
    \begin{array}{c}
      \indrule{\indrulename{VarR}}
      {}
      {\Gamma, \var:\typ\vdash \var:\typ\mid \Delta}
      \quad
      \indrule{\indrulename{VarL}}
      {}
      {\Gamma \mid \alpha:\typ \vdash \alpha:\typ, \Delta}
      \quad
    \indrule{\indrulename{AppL}}
      {\Gamma \vdash \mval:\typ \mid \Delta
      \quad
      \Gamma \mid \menv:\typtwo \vdash \Delta
      }
      {\Gamma \mid (\mapp{\mval}{\menv}):\typ\imp\typtwo \vdash \Delta}
      \\
      \\
    \indrule{\indrulename{ActivR}}
    {\mcmd:(\Gamma\vdash \mvar:\typtwo, \Delta)}
      {\Gamma \vdash \mmu{\mvar}{\mcmd}:\typtwo, \Delta}
      \quad
      \indrule{\indrulename{Abs}}
      {\Gamma,\var:\typ \vdash \mval:\typtwo \mid \Delta}
      {\Gamma \vdash \lam{\var}{\mval} :\typ\imp\typtwo \mid \Delta}
      \quad
      \indrule{\indrulename{cut}}
      {\Gamma \vdash \mval :\typ \mid \Delta
      \quad
      \Gamma \mid \menv:\typ \vdash \Delta
      }
      {\mabs{\mval}{\menv}:(\Gamma \vdash \Delta)}

\end{array}
\]
\end{minipage}}
\bigskip

\noindent
Reduction in $\bar{\lambda}\mu\tilde{\mu}$ is given by the contextual closure of
the following two reduction axioms, where $\mval_1\sub{\var}{\mval_2}$ and $ \mcmd\sub{\mvar}{\menv}$ are standard substitution:
  \[
    \begin{array}{c@{\hspace{1cm}}c}
  \begin{array}{llll}
    \mabs{\lam{\var}{\mval_1}}{\mapp{\mval_2}{\menv}}
    & \toax{} &
    \mabs{\mval_1\sub{\var}{\mval_2}}{\menv}
  % \\
  %   \mabs{\mmu{\mvar}{\mcmd}}{\menv}
  %   & \toax{}  &
  %   \mcmd\sub{\mvar}{\menv}
  \end{array}
      &
          \begin{array}{llll}
  %   \mabs{\lam{\var}{\mval_1}}{\mapp{\mval_2}{\menv}}
  %   & \toax{} &
  %   \mabs{\mval_1\sub{\var}{\mval_2}}{\menv}
  % \\
    \mabs{\mmu{\mvar}{\mcmd}}{\menv}
    & \toax{}  &
    \mcmd\sub{\mvar}{\menv}
  \end{array}
    \end{array}    
\]

%\begin{definition}{T-translation for $\bar{\lambda}\mu\tilde{\mu}$ (formulae, judgments and terms)}
%\ldef{t_translation_of_barLambdaMuTildeMu}  
The T-translation of Danos~\etal~\cite{Danos_Joinet_Schellinx_1995}
is defined for formulae, judgements and terms of $\bar{\lambda}\mu\tilde{\mu}$,
as follows:
\begin{center}
  \begin{tabular}{p{6.4cm}p{6.2cm}}
 $
  \begin{array}{rcl}
    \ttra{\btyp} & \eqdef & \btyp\\
    \ttra{(\typ\imp\typtwo)} &\eqdef & \why{\ofc{\lneg{(\ttra{\typ})}}}\parr\why{\ttra{\typtwo}} \\
  \end{array}$
    &
  $\begin{array}{rcl}
    \mcmd : (\mjcmd{\mctx}{\mctxtwo})
    & \mapsto &
    \djum{\why{\ttra{\mctx}},{\ttra{\mctxtwo}}^\bot}{\emptytenv}{\ttra{\mcmd}}{\bot}
  \\
    \mjenv{\mctx}{\menv :\typ}{\mctxtwo}
    & \mapsto &
    \djum{\why{\ttra{\mctx}},{\ttra{\mctxtwo}}^\bot}{\emptytenv}{\ttra{\menv}}{\ofc{(\lneg{{\ttra{\typ}}})}}
  \\
    \mjval{\mctx}{\mval : \typ}{\mctxtwo}
    & \mapsto &
    \djum{\why{\ttra{\mctx}},{\ttra{\mctxtwo}}^\bot}{\emptytenv}{\ttra{\mval}}{\why{\ttra{\typ}}}
   \end{array}$
    \\
  $\begin{array}{rcl}
    \ttra{\var} & \eqdef & \var
  \\
    \ttra{(\mmu{\mvar}{\mcmd})} & \eqdef & \iwhy{\mvar}{\ttra{\mcmd}}
  \\
    \ttra{(\lam{\var}{\mval})} & \eqdef &
      \iwhy{\uvar}{
        (\ibott{\uvar}{
          \ipar{\lvar}{\lvartwo}{
            (\ibott{\ttra{\mval}}{\lvartwo}\eofc{\var}{\lvar})
          }
                                          })}
   \end{array}$
                                &
                                  $\begin{array}{rcl}
  \\
    \ttra{\mvar} & \eqdef & \ofc{\mvar}
  \\
    \ttra{(\mapp{\mval}{\menv})} & \eqdef & \ofc{\pair{\ofc{\ttra{\mval}}}{\ttra{\menv}}}
  \\
    \ttra{\mabs{\mval}{\menv}} & \eqdef & \ibott{\ttra{\mval}}{\ttra{\menv}}
  \end{array}
                                          $
  \end{tabular}
\end{center}

\begin{lemma}[{name=Type preservation~\proofnote{Proof on pg.~\pageref{type_preservation_for_t_translation_for_barLambdaMuTildeMu:proof}}, restate=[name=Type preservation]TypePreservationForTTranslationForBarLambdaMuTildeMu}]
%  \begin{lemma}
\llem{type_preservation_for_t_translation_for_barLambdaMuTildeMu}
\quad
\begin{enumerate}
  \item  If $\mcmd : (\mjcmd{\mctx}{\mctxtwo})$ holds in $\bar{\lambda}\mu\tilde{\mu}$, then 
    $\djum{\why{\ttra{\mctx}},{\ttra{\mctxtwo}}^\bot}{\emptytenv}{\ttra{\mcmd}}{\bot}$ holds in $\CalcMELL$.
\item If $ \mjenv{\mctx}{\menv :\typ}{\mctxtwo}$ holds in $\bar{\lambda}\mu\tilde{\mu}$, then   
    $\djum{\why{\ttra{\mctx}},{\ttra{\mctxtwo}}^\bot}{\emptytenv}{\ttra{\menv}}{\ofc{(\lneg{{\ttra{\typ}}})}}$ holds in $\CalcMELL$.
\item  If $ \mjval{\mctx}{\mval:\typ}{\mctxtwo}$ holds in $\bar{\lambda}\mu\tilde{\mu}$, then 
    $\djum{\why{\ttra{\mctx}},{\ttra{\mctxtwo}}^\bot}{\emptytenv}{\ttra{\mval}}{\why{\ttra{\typ}}}$ holds in $\CalcMELL$.
\end{enumerate}
\end{lemma}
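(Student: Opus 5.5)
Since $\bar{\lambda}\mu\tilde{\mu}$ has three syntactic categories whose typing rules refer to one another (\indrulename{cut} mixes terms and co-terms, \indrulename{ActivR} turns a command into a term, \indrulename{AppL} builds a co-term from a term and a co-term), I would prove the three items simultaneously by mutual induction on the $\bar{\lambda}\mu\tilde{\mu}$ typing derivation. For each of the six rules I would exhibit an explicit $\CalcMELL$ derivation of the translated judgement from the induction hypotheses. Throughout, the unrestricted environment is $\utenv \eqdef \why{\ttra{\mctx}},{\ttra{\mctxtwo}}^\bot$ and the linear environment is empty; every rule of Fig.~\ref{typing:rules:CalcMELL} allows weakening of $\utenv$ at the axioms, so no separate weakening lemma is needed. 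I read $\why{\ttra{\mctx}}$ as assigning $\var:\why{\ttra{\typ}}$ to each $\var:\typ$, and $\mvar:\lneg{(\ttra{\typ})}$ to each $\mvar:\typ$ in $\mctxtwo$.

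The base and structural cases are short. \indrulename{VarR}: $\ttra{\var}=\var$ has type $\why{\ttra{\typ}}$ by \rulmAxU. \indrulename{VarL}: $\ttra{\mvar}=\ofc{\mvar}=\iofctwo{\lvar}{(\ibott{\mvar}{\lvar})}$. Here I type $\ibott{\mvar}{\lvar}$ with $\lvar:\ttra{\typ}$ by \rulmIBott, which is legitimate since $\mvar:\lneg{(\ttra{\typ})}$ is in $\utenv$. Then \rulmIOfc gives $\ofc{\lneg{(\ttra{\typ})}}$, as required. I would record this once as a derived rule: from $\djum{\utenv}{\emptytenv}{\tm}{\typ}$ infer $\djum{\utenv}{\emptytenv}{\ofc{\tm}}{\ofc{\typ}}$, using involutivity of $\lneg{\arg}$. \indrulename{cut}: the IH gives $\ttra{\mval}:\why{\ttra{\typ}}$ and $\ttra{\menv}:\ofc{\lneg{(\ttra{\typ})}}=\lneg{(\why{\ttra{\typ}})}$, so \rulmIBott yields type $\bott$. \indrulename{ActivR}: the IH for the command is stated in the environment extended with $\mvar:\lneg{(\ttra{\typtwo})}$, and \rulmIWhy gives $\iwhy{\mvar}{\ttra{\mcmd}}:\why{\ttra{\typtwo}}$. \indrulename{AppL}: the derived promotion rule applied to $\ttra{\mval}$ gives $\ofc{\why{\ttra{\typ}}}$, and \rulmITensor with $\ttra{\menv}:\ofc{\lneg{(\ttra{\typtwo})}}$ gives a term of type $\ofc{\why{\ttra{\typ}}}\tensor\ofc{\lneg{(\ttra{\typtwo})}}$. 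This is exactly $\lneg{(\ttra{(\typ\imp\typtwo)})}$, since $\lneg{(\why{\ofc{\lneg{\ttra{\typ}}}}\parr\why{\ttra{\typtwo}})}=\ofc{\why{\ttra{\typ}}}\tensor\ofc{\lneg{\ttra{\typtwo}}}$. Promoting once more gives the required $\ofc{(\lneg{(\ttra{(\typ\imp\typtwo)})})}$.

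The main obstacle is \indrulename{Abs}, where the variable roles must be checked carefully. The target type is $\why{\ttra{(\typ\imp\typtwo)}}$, so by \rulmIWhy I need $\ibott{\uvar}{\ipar{\lvar}{\lvartwo}{(\ldots)}}:\bott$ with $\uvar:\lneg{(\ttra{(\typ\imp\typtwo)})}=\ofc{\why{\ttra{\typ}}}\tensor\ofc{\lneg{\ttra{\typtwo}}}$ in $\utenv$. By \rulmIBott it then suffices to show that $\ipar{\lvar}{\lvartwo}{(\ibott{\ttra{\mval}}{\lvartwo}\eofc{\var}{\lvar})}$ has type $\ttra{(\typ\imp\typtwo)}=\why{\ofc{\lneg{\ttra{\typ}}}}\parr\why{\ttra{\typtwo}}$. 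By \rulmIPar this requires the body to have type $\bott$ under $\lvar:\ofc{\why{\ttra{\typ}}}$ and $\lvartwo:\ofc{\lneg{\ttra{\typtwo}}}$. For the body, \rulmEOfc eliminates $\lvar$, binding $\var:\why{\ttra{\typ}}$ unrestrictedly, which is exactly the environment the IH for $\mval$ expects. The IH gives $\ttra{\mval}:\why{\ttra{\typtwo}}$, and \rulmIBott against $\lvartwo:\ofc{\lneg{\ttra{\typtwo}}}$ closes the branch. Linear environments are split correctly ($\lvar$ goes to the eliminator's first premise, $\lvartwo$ to the $\ibott{}{}$), and the unrestricted $\uvar$ is used via \rulmAxU, so the derivation goes through.
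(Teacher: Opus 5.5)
Your proposal is correct and takes the same route as the paper, whose proof is just induction on the $\bar{\lambda}\mu\tilde{\mu}$ typing derivation. Your explicit derivations for all six rules check out, including the computation $\lneg{(\ttra{(\typ\imp\typtwo)})}=\ofc{\why{\ttra{\typ}}}\tensor\ofc{\lneg{(\ttra{\typtwo})}}$ and the \indrulename{Abs} case; the one imprecision is that in \indrulename{Abs} the environment is not ``exactly'' the one the induction hypothesis for $\ttra{\mval}$ provides, because the fresh unrestricted variable $\uvar$ is also present, so you must invoke weakening (\rlem{weakening_and_contraction}), which is admissible for precisely the reason you state.
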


\begin{proposition}[{name=Simulation~\proofnote{Proof on pg.~\pageref{calcMELL_simulates_calcParigot_via_t_translation_for_barLambdaMuTildeMu:proof}}, restate=[name=Simulation]CalcMELLSimulatesCalcParigotViaTTranslationForBarLambdaMuTildeMu}]
%  \begin{proposition}
\lprop{calcMELL_simulates_calcParigot_via_t_translation_for_barLambdaMuTildeMu}
  If $\tm \to_{\bar{\lambda}\mu\tilde{\mu}} \tmtwo$, then  $\ttrat{\tm}{\vark} \rttome \ttrat{\tmtwo}{\vark}$.
\end{proposition}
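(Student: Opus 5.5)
We proceed by induction on the derivation of $\tm \to_{\bar{\lambda}\mu\tilde{\mu}} \tmtwo$, i.e.\ on the contextual closure. The \textbf{congruence cases} are routine. The translation $\ttra{(\cdot)}$ is compositional: every subterm of a $\bar{\lambda}\mu\tilde{\mu}$ term is translated into a subterm of the translated term, possibly under $\iwhy{\uvar}{\Box}$, $\ofc{\Box}$, $\ipar{\lvar}{\lvartwo}{\Box}$, $\ibott{\Box}{\tm}$, $\Box\eofc{\var}{\lvar}$ or $\pair{\ofc{\Box}}{\tm}$. Since $\tome$ is closed under arbitrary contexts (pre-reduction is a contextual closure, and $\cceq$ is contextual), a step inside a subterm lifts directly to a sequence of steps in the translation. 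Note that $\ofc{\tm}$ is sugar for $\iofctwo{\lvar}{(\ibott{\tm}{\lvar})}$, and reduction is allowed under $\iofctwo{\lvar}{\Box}$. The subscript $\vark$ in the statement plays no role here, since commands translate to closed-in-$\vark$ terms of type $\bott$. So everything reduces to the two root axioms, and by \rlem{type_preservation_for_t_translation_for_barLambdaMuTildeMu} all terms involved are typed, so $\cceq$ may be used freely.

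\textbf{The $\mu$ case.} Here $\ttra{\mabs{\mmu{\mvar}{\mcmd}}{\menv}} = \ibott{(\iwhy{\mvar}{\ttra{\mcmd}})}{\ttra{\menv}}$. By \rlem{type_preservation_for_t_translation_for_barLambdaMuTildeMu}, $\ttra{\menv}$ has type $\ofc{(\lneg{\ttra{\typ}})}$ and is always of the form $\ofc{\tmthree} = \iofctwo{\lvar}{(\ibott{\tmthree}{\lvar})}$, because $\ttra{\mvar}=\ofc{\mvar}$ and $\ttra{(\mapp{\mval}{\menv})} = \ofc{\cdots}$. Hence $\raxwhyofc$ applies with empty $\cctx,\cctxtwo$, yielding $\ttra{\mcmd}\sub{\mvar}{(\ibott{\tmthree}{\lvar})\cos{\lvar}{\ione}}$. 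By the contra-substitution clauses, $(\ibott{\tmthree}{\lvar})\cos{\lvar}{\ione} = \lvar\cos{\lvar}{\tmthree}\eone{\ione} = \tmthree\eone{\ione} \cceq_{\ruleCCEqBetaOne} \tmthree$. It remains to check a substitution lemma, $\ttra{(\mcmd\sub{\mvar}{\menv})} = \ttra{\mcmd}\sub{\mvar}{\tmthree}$ where $\ttra{\menv}=\ofc{\tmthree}$. This is proved by a straightforward mutual induction on values, co-terms and commands, the key clause being $\ttra{\mvar}=\ofc{\mvar} \mapsto \ofc{\tmthree} = \ttra{\menv}$. Combining these with compatibility of $\cceq$ with substitution gives the required step.

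\textbf{The $\beta$ case.} This is the main obstacle. We have $\ttra{\mabs{\lam{\var}{\mval_1}}{\mapp{\mval_2}{\menv}}} = \ibott{\iwhy{\uvar}{(\ibott{\uvar}{\ipar{\lvar}{\lvartwo}{(\ibott{\ttra{\mval_1}}{\lvartwo}\eofc{\var}{\lvar})}})}}{\ofc{\pair{\ofc{\ttra{\mval_2}}}{\ttra{\menv}}}}$. I plan the following chain.
\begin{enumerate}
\item Fire $\raxwhyofc$, substituting $\uvar$ by $(\ibott{\pair{\cdots}{\cdots}}{\lvar'})\cos{\lvar'}{\ione}\cceq \pair{\ofc{\ttra{\mval_2}}}{\ttra{\menv}}$.
\item Obtain $\ibott{\pair{\ofc{\ttra{\mval_2}}}{\ttra{\menv}}}{\ipar{\lvar}{\lvartwo}{(\cdots)}}$, then fire $\raxtensorpar$. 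This gives $\ibott{\ttra{\mval_1}}{\ttra{\menv}}\eofc{\var}{\ofc{\ttra{\mval_2}}}$.
\item Unfold $\ofc{\ttra{\mval_2}} = \iofctwo{\lvar''}{(\ibott{\ttra{\mval_2}}{\lvar''})}$ and fire $\raxofc$. This substitutes $\var$ by $(\ibott{\ttra{\mval_2}}{\lvar''})\cos{\lvar''}{\ione}\cceq\ttra{\mval_2}$.
\end{enumerate}
The result is $\ibott{\ttra{\mval_1}\sub{\var}{\ttra{\mval_2}}}{\ttra{\menv}}$, which equals $\ttra{\mabs{\mval_1\sub{\var}{\mval_2}}{\menv}}$ by the analogous substitution lemma for $\ttra{\var}=\var$.

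The delicate points are all in this $\beta$ case. First, each intermediate contra-substitution must be genuinely justified up to $\cceq$, using $\ruleCCEqBetaOne$ and compatibility of substitution with $\cceq$; this is where I would spend care, checking that the $\eone{\ione}$ residues can be pushed through each context via $\ruleCCEqCtx$. Second, the $\ibott{}{}$ arguments must be in the orientation required by the axioms, using the derived symmetry $\ibott{\tm}{\tmtwo}\cceq\ibott{\tmtwo}{\tm}$. Since reduction is taken modulo $\cceq$, each of these steps counts as a genuine $\tome$ step, which gives $\rttome$.
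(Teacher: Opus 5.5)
Your proposal is correct and follows the same route as the paper, which argues by induction on the reduction step. You supply the root computations the paper omits, and they check out: one $\raxwhyofc$ step plus $\ruleCCEqBetaOne$ and a substitution lemma for $\mu$, and the chain $\raxwhyofc$, $\raxtensorpar$, $\raxofc$ for $\beta$.

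The two ``delicate points'' you flag are harmless. First, $\raxwhyofc$ and $\raxtensorpar$ already have the orientation you need, so no appeal to symmetry of $\ibott{}{}$ is required. Second, the $\eone{\ione}$ residues are either absorbed by the positive elimination contexts $\cctx$ in the axioms or removed directly by $\ruleCCEqBetaOne$ under the contextual closure of $\cceq$.
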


%%% Local Variables:
%%% mode: latex
%%% TeX-master: "main"
%%% End:

% \section{Intuitionistic Fragment}
% \input{06-intuitionistic-fragment}

% \section{Subformula Property}
% \input{07-subformula-property}

\section{Related Work and Conclusions}

\subparagraph*{Proofs-as-Processes.} Although not the focus of our work, there is a large body of work on interpreting cut-elimination in terms of concurrent or parallel programs. This program was perhaps kickstarted by Abramsky~\cite{DBLP:journals/tcs/Abramsky93} and likely influenced by Girard himself~\cite{DBLP:journals/tcs/Girard87} (having mentioned that ``\textit{connectives of linear logic have an obvious meaning in terms of parallel computation...}''). Since then many works have been presented in this area. Relatively recent highlights include the tight correspondence between intuitionistic linear logic and session types by Caires and Pfenning~\cite{10.1007/978-3-642-15375-4_16}, its extension to classical linear logic by Wadler~\cite{Wadler-PropositionsAsSessions,WADLER_2014} and modeling asynchronous communication via a term assignment for a multi-conclusion natural deduction presentation of \MLL~\cite{DBLP:journals/pacmpl/AschieriG20}. Further references are provided in~\cite[Sec.11]{DBLP:journals/pacmpl/AschieriG20}.

\subparagraph*{Proofs-as-Functions.} Troelstra~\cite{Troelstra1992-TROLOL} introduces a one-sided natural deduction presentation of \LL. However, par and why-not are not included, and normalization of proofs is not studied. Martini and Masini~\cite{DBLP:journals/tcs/MartiniM97} study the par connective, arguing that it had not received any attention from a natural deduction perspective because of the ``\textit{apparent impossibility to formulate a suitable introduction-elimination pair, when sticking to single conclusion systems}''. Our rules for par are similar to theirs (the elimination rules are exactly the same, the introduction rule is not but is provably equivalent). A term assignment is not considered in their work nor are the exponentials.
Ramalho Martins et al~\cite{DBLP:journals/igpl/MartinsM04} also introduce a natural deduction presentation for full \LL. All connectives are considered, including the additives. Only weak normalisation is proved and no term calculus or computational interpretation of proof normalisation is discussed.
Albrecht et al~\cite{DBLP:journals/tcs/AlbrechtCJ97} propose a term assignment for \LL. They include an inference rule called \emph{swap} that corresponds to our (derived) notion of contra-substitution but the similarities with our work stop there. There are several shortcomings with their proposed system ${\cal N}$ including a global restriction on well-formedness of derivations  (\cf Sec. 2.4 on pg 221 of \emph{op.cit.}) that requires a complicated notion of term reduction (parallel reduction) to avoid breaking it\footnote{Moreover, we believe ${\cal N}$ is incomplete with respect to linear logic. The issue is in the case that shows that linear logic contraction is captured in ${\cal N}$. Case (?C) of Lemma A.1 does not consider the situation where $n=0$ in the proof of $\vdash \why\alpha,\why\alpha,\beta_1\ldots,\beta_n$. This case does not seem to be provable in ${\cal N}$.}.
Bierman~\cite{DBLP:journals/tcs/Bierman99} presents a Parigot style calculus for full \LL which relies on a multiple-conclusion natural deduction. The term assignment makes use of Parigot's structural substitution. It includes many commuting conversions (most of which are not listed). It also mentions the possibility of defining a rule for par but no discussion is present in the paper (\cite[Sec.3.4.]{DBLP:journals/tcs/Bierman99}).
Mazurak and Zdancewic~\cite{DBLP:conf/icfp/MazurakZ10} propose a term assignment for \MLL. They introduce a functional language with concurrency primitives, the latter motivated by a parallel rendering of Felleisen's control and abort control operators. There are various items that set our work apart from this. We address \MELL which includes the exponentials. Our system $\CalcMELL$ is shown to be complete with respect to $\MELL$; in~\cite{DBLP:conf/icfp/MazurakZ10} this question is left open for their calculus. Finally, our operational semantics arises purely from proof normalization, whereas their calculus includes various term constructors that do not arise from purely proof theoretical means. Examples are the term constructors for the new binder and parallel composition of processes. Finally, there are works that provide a computational intepretation of \LL based on the proof-nets presentation. A recent contribution is Kesner's~\cite{DBLP:journals/pacmpl/Kesner22} where terms are associated to proof-nets. See \emph{op.cit} for further references on the proof-net approach.

\subparagraph*{Future Work.} Extending our approach to include the additives is not immediate since contra-substitution relies crucially on linear terms for its definition. This requires further investigation. Also, simulating our calculus, in particular structural equivalence, via proof-nets would be of interest. Extending $\CalcMELL$ with second-order propositional quantifiers should be unproblematic. However, adding fixed points in the style of Baelde et al~\cite{DBLP:journals/tocl/Baelde12,10.1109/LICS.2012.22} sounds more challenging.

\newpage
\appendix

\section{A Contraposition-Based Calculus for \MLL}
\lsec{app:calcMLL}

%% Appendix for MLL

\CompletenessSoundnessMLL*
  
\begin{proof}\label{compl_sound_MLL:proof}
\textbf{Soundness.}
  By induction on the derivation of the judgment:

    \begin{enumerate}
    \item \rulmAx:
      Then $t = a : A$ and $\tenv = (a : {\typ})$ and the derivation is of the form:
      \[
        \indrule{\rulmAx}{
          \emptyPremise
        }{
          \jum{\lvar:\typ}{\lvar}{\typ}
        }
      \]
      Hence we can take:
      \[\indrule{\rullAx}{
        \emptyPremise
      }{
        \jull{\typ,\lneg{\typ}}
      }\]

    \item \rulmITensor:
      Then $\tm = \pair{\tmtwo}{\tmthree}$ and $\typ = \typtwo\tensor\typthree$
      and the derivation is of the form:
      \[
        \indrule{\rulmITensor}{
          \jum{\tenv}{\tmtwo}{\typtwo}
          \HS
          \jum{\tenvtwo}{\tmthree}{\typthree}
        }{
          \jum{\tenv,\tenvtwo}{\pair{\tmtwo}{\tmthree}}{\typtwo\tensor\typthree}
        }
      \]
      Hence:
      \[
        \indrule{\rullTensor}{
          \jull{\lneg{\tenv},\typtwo}
          \HS
          \jull{\lneg{\tenvtwo},\typthree}
        }{
          \jull{\lneg{\tenvtwo},\lneg{\tenv},\typtwo\tensor\typthree}
        }
      \]
    \item \rulmETensor:
      The derivation is of the form
      \[
      \indrule{\rulmETensor}{
        \jum{\tenvtwo}{\tmtwo}{\typtwo\tensor\typthree}
        \HS
        \jum{\tenvthree,\lvartwo:\typtwo,\lvarthree:\typthree}{\tmthree}{\typ}
      }{
        \jum{\tenvtwo,\tenvthree}{\casepair{\tmtwo}{\lvartwo}{\lvarthree}{\tmthree}}{\typ}
      }
      \]
      Hence by IH:
      
      \[
      \indrule{\rullCut}{
        \indih{
          \jull{\lneg{\tenvtwo},\typtwo \tensor \typthree}
        }
        \indrule{\rullLimp}{
                \indih{
            \jull{\lneg{\tenvthree},\lneg \typtwo,\lneg \typthree,\typ}
          }
        }{
          \jull{
            \lneg{\tenvthree},
            \typtwo \limp \lneg \typthree,
            \typ
          }
        }
      }{
        \jull{
          \lneg{\tenvtwo},\lneg{\tenvthree},
          \typ
        }
      }
      \]

    \item \rulmILimp:
      Then $\tm = \lam{\lvartwo}{\tm'}$ and $\typ = \typtwo \limp \typthree$ and the derivation is of the form:
      \[
        \indrule{\rulmILimp}{
          \jum{\tenv,\lvartwo:\typtwo}{\tm'}{\typthree}
        }{
          \jum{\tenv}{\lam{\lvartwo}{\tm'}}{\typtwo\limp\typthree}
        }
      \]
      Therefore by IH:
      \[
        \indrule{\rullLimp}{
          \indih{
            \jull{\lneg{\tenv},\lneg \typtwo,\typthree}
          }
          
        }{
          \jull{
            \lneg{\tenv},
            \typtwo \limp \typthree
          }
        }
      \]
      
    \item \rulmELimpOne:
    Then $\tm =\ap { \tmtwo }{\tmthree}$ and the derivation is of the form:
    \[
    \indrule{\rulmELimpOne}{
      \jum{\tenvtwo}{\tmtwo}{\typtwo\limp\typ}
      \HS
      \jum{\tenvthree}{\tmthree}{\typtwo}
    }{
      \jum{\tenvtwo,\tenvthree}{\ap{\tmtwo}{\tmthree}}{\typ}
    }
    \]
    Hence by IH:
    \[
    \indrule{\rullCut}{
      \indih{
        \jull{\lneg{\tenvtwo},\typtwo \limp \typ}
      }
      \indrule{\rullTensor}{
        \indih{
          \jull{\lneg{\tenvthree},\typtwo}
        }
        \indrule{\rullAx}{
          \emptyPremise
        }{
          \jull{\typ,\lneg{\typ}}
        }
      }{
        \jull{
          \lneg{\tenvthree},
          \typ,
          \typtwo \tensor \lneg \typ
        }
      }
    }{
      \jull{
        \lneg{\tenvtwo},\lneg{\tenvthree},\typ
      }
    }
    \]
    \item \rulmELimpTwo:
    Then $\tm ={\invap{\tmtwo}{\tmthree}} $ and the derivation is of the form:
    \[
      \indrule{\rulmELimpTwo}{
        \jum{\tenvtwo}{\tmtwo}{\lneg \typ\limp\typtwo}
        \HS
        \jum{\tenvthree}{\tmthree}{\lneg{\typtwo}}
      }{
        \jum{\tenvtwo,\tenvthree}{\invap{\tmtwo}{\tmthree}}{{\typ}}
      }
    \]  
    Therefore by IH:
    \[
      \indrule{\rullCut}{
        \indih{
          \jull{\lneg{\tenvtwo},\lneg \typ \limp \typtwo}
        }
        \indrule{\rullTensor}{
          \indih{
            \jull{\lneg{\tenvthree},\typtwo}
          }
          \indrule{\rullAx}{
            \emptyPremise
          }{
            \jull{\typ,\lneg{\typ}}
          }
        }{
          \jull{
            \lneg{\tenvthree},
            \typ,
            \lneg \typ \tensor \lneg \typtwo
          }
        }
      }{
        \jull{
          \lneg{\tenvtwo},\lneg{\tenvthree},\typ
        }
      }
    \]
    
    \end{enumerate}

\noindent
\textbf{Completeness.}
  By hypothesis, $\jull{\tenv_0}$ holds in \MLL.
  We proceed by induction on a cut-free derivation of $\jull{\tenv_0}$,
  resorting to the (well-known) cut-elimination theorem for \MLL.
  \begin{enumerate}
  \item \rullAx:
  
    Then $\Gamma_{0} = A$ and the derivation is of the form:
    \[
      \indrule{\rullAx}{\emptyPremise}{\jull{\lneg{A}, A}}
    \]
    and we have 
    \[
      \indrule{\rulmAx}{\emptyPremise}{\jum{A}{a}{A}}
    \]

  \item \rullTensor:
    Then $\tenv_0$ is a permutation of
    $\tenvtwo,\tenvthree,\typtwo\tensor\typthree$
    and the derivation is of the form:
    \[
      \indrule{\rullTensor}{
        \jull{\tenvtwo,\typtwo}
        \HS
        \jull{\tenvthree,\typthree}
      }{
        \jull{\tenvtwo,\tenvthree,\typtwo\tensor\typthree}
      }
    \]
    Moreover, we know by hypothesis
    that $\tenvtwo,\tenvthree,\typtwo\tensor\typthree$
    is a permutation of $\tenv,\typ$.
    We consider three subcases,
    depending on whether
       there is an occurrence of $\typ$ in $\tenvtwo$,
    or there is an occurrence of $\typ$ in $\tenvthree$,
    or $\typ = \typtwo\tensor\typthree$:
    \begin{enumerate}
    \item
      If there is an occurrence of $\typ$ in $\tenvtwo$:
      then $\tenvtwo = \tenvtwo',\typ$ and the derivation is of the form:
      \[
        \indrule{\rullTensor}{
          \jull{\tenvtwo',\typ,\typtwo}
          \HS
          \jull{\tenvthree,\typthree}
        }{
          \jull{\tenvtwo',\typ,\tenvthree,\typtwo\tensor\typthree}
        }
      \]
      Hence by \ih there exist terms $\tmtwo,\tmthree$ such that:
      \[
        \indrule{\rulmSub}{
          \indih{
            \jum{\lneg{\tenvtwo'},\lvar:\lneg{\typtwo}}{\tmtwo}{\typ}
          }
          \indrule{\rulmELimpTwo}{
            \indrule{\rulmAx}{
              \emptyPremise
            }{
              \jum{
                \lvartwo:\typtwo\limp\lneg{\typthree}
              }{
                \lvartwo
              }{
                \typtwo\limp\lneg{\typthree}
              }
            }
            \indih{
              \jum{\lneg{\tenvthree}}{\tmthree}{\typthree}
            }
          }{
            \jum{
              \lvartwo:\typtwo\limp\lneg{\typthree},
              \lneg{\tenvthree}
            }{
              \invap{\lvartwo}{\tmthree}
            }{
              \lneg{\typtwo}
            }
          }
        }{
          \jum{
            \lneg{\tenvtwo'},\lneg{\tenvthree},\lvartwo:\typtwo\limp\lneg{\typthree}
          }{
            \tmtwo\sub{\lvar}{\invap{\lvartwo}{\tmthree}}
          }{\typ}
        }
      \]
      so it suffices to take $\tm := \tmtwo\sub{\lvar}{\invap{\lvartwo}{\tmthree}}$.
    \item

        If there is an occurrence of $\typ$ in $\tenvthree$:
        then $\tenvthree = \tenvthree', \typ$ and the derivation is of the form
        \[
          \indrule{\rullTensor}{
            \jull{\tenv, \typtwo} 
            \HS
            \jull{\tenvthree',\typ,\typthree}
            }
            {
              \jull{\tenv,\tenvthree',\typ, \typtwo \tensor \typthree}
            }
        \]

        Hence by IH there exist terms $\tmtwo, \tmthree $ such that:
        \[
          \indrule{\rulmSub}
          {
            \indrule{\rulmELimpOne}
              {
                \indrule{\rulmAx}{\emptyPremise}{\jum{\lvartwo : \typtwo \limp \lneg{\typthree}}{\lvartwo}{\typtwo \limp \lneg{\typthree}}}
                \indih{\jum{\lneg{\tenv}}{\tmtwo}{\typtwo}}
              }
              {
                \jum
                {\lneg{\tenv}, \lvartwo : \typtwo \limp \lneg{\typthree}
                }
                {\ap{\lvartwo}{\tmtwo}}
                {\lneg{\typthree}}
              }
            \HS
            \indih{
            \jum{\lneg{\tenvthree'},\lvarthree:\lneg{\typthree}}{\tmthree}{\typ}  
             
              }
          }
          {\jum{\lneg{\tenv}, \lneg{\tenvthree'}, \typtwo \limp \lneg{\typthree }}{\tmthree \sub{\lvarthree}{\ap{\lvartwo}{\tmtwo}}}{\typ}}
        \]

    \item
      
        If $\typ = \typtwo\tensor\typthree$ and the derivation is of the form:
        \[
          \indrule{\rullTensor}
          {
            \jull{\tenv, \typtwo} 
            \HS
            \jull{\tenvthree,\typthree}
          }
          {\jull{\tenv, \tenvthree, \typtwo \tensor \typthree}}
        \]
        Hence by IH there exists terms $\tmtwo, \tmthree$ such that:
      \[
        \indrule{\rulmITensor}
        {
          \indih{\jum{\lneg{\tenv}}{\tmtwo}{\typtwo}}
          \HS
          \indih{\jum{\lneg{\tenvthree}}{\tmthree}{\typthree}}
        }
        {
          \jum{\lneg{\tenv}, \lneg{\tenvthree}}{\pair{\tmtwo}{\tmthree}}{\typtwo \tensor \typthree}
        }
      \]

    \end{enumerate}

    \item \rullLimp:
      Then $\tenv_{0}$ is a permutation of $\tenv', \typtwo \limp \typthree$ and the derivation is of the form
      \[
        \indrule{\rullLimp}{\jull{\tenv', \lneg{\typtwo}, \typthree}}{\jull{\tenv', \typtwo \limp \typthree}}
      \]
      Moreover, we know by hypothesis that $\tenv', \typtwo \limp \typthree$ is a permutation of $\tenv, \typ$.
      We consider three subcases, depending on  wheter there is an occurrence of $\typ$ in $\tenv'$ or $\typ = \typtwo \limp \typthree$.
  
      \begin{enumerate}
        \item If there is an occurrence of $\typ$ in $\tenv'$: 
        then $\tenv' = \tenv'',\typ$ and the derivation is of the form:
        \[
          \indrule{\rullLimp}{
            \jull{\tenv'',\typ, \lneg{\typtwo},\typthree}
            }{\jull{\tenv'',\typ, \typtwo \limp \typthree}}
        \] 
        and by IH we have:  
  
        \[
          \indrule{\rulmETensor}
          {
            \indih{\jum{\tenv'', \lvartwo : \typtwo, \lvarthree : \lneg{\typthree}}{\tmtwo}{\typ}}
            \HS
            \indrule{\rulmAx}{\emptyPremise}
            {\jum{\tmthree : \typtwo \tensor \lneg{\typthree}}{\tmthree}{\typtwo \tensor \lneg{\typthree}}}
          }
          {
            \jum{\tenv'',\typtwo \tensor \lneg{\typthree}}
            {\casepair{\tmthree}{\lvartwo}{\lvarthree}{\tmtwo}}
            {\typ}
          }
        \]

        \item If $\typ = \typtwo \limp \lneg{\typthree}$ and the derivation is of the form:
        \[
          \indrule{\rullLimp}{
            \jull{\tenv',\lneg{\typtwo},\typthree}
            }{\jull{\tenv',\typ}}
        \] 
        Hence by IH we have that 
        \[
          \indrule{\rulmILimp}{\indih{\jum{\tenv', \lvartwo : \typtwo}{\tm'}{\typthree}}}
          {\jum{\tenv'}{\lam{\lvartwo}{\tm'}}{\typtwo \limp \typthree}}
        \]
      \end{enumerate}

  \end{enumerate}
\end{proof}  

\ContrasubstitutionLemmaMLL*  

\begin{proof}\label{contrasubstitution_lemma_MLL:proof}

We proceed by induction on the derivation of $\jum{\tenv,\lvar:\typ}{\tm}{\typtwo}$.
\begin{enumerate}
\item \rulmAx:
  
    Then $\tenv = \emptyset$, $\tm = \lvar$ and $\typ = \typtwo$.
    % The derivation is of the form:
    % \[
    %   \indrule{\rulmAx}{\emptyPremise}{\jum{\lvar : \typ}{\lvar}{\typ}}
    % \]
    Hence we can take
    \[
      \indrule{}{}{\jum{\tenv'}{\tmtwo}{\lneg{\typ}}}
    \]

\item \rulmITensor:
  Then $\tm = \pair{\tm_1}{\tm_2}$ and $\typtwo = \typtwo_1\tensor\typtwo_2$.
  We consider two subcases, depending on whether
  $\lvar \in \fv{\tm_1}$ or $\lvar \in \fv{\tm_2}$:
  \begin{enumerate}
  \item
    If $\lvar \in \fv{\tm_1}$:
    then the derivation is of the form:
    \[
      \indrule{\rulmITensor}{
        \jum{\tenv_1,\lvar:\typ}{\tm_1}{\typtwo_1}
        \HS
        \jum{\tenv_2}{\tm_2}{\typtwo_2}
      }{
        \jum{\tenv_1,\lvar:\typ,\tenv_2}{\pair{\tm_1}{\tm_2}}{\typtwo}
      }
    \]
    Furthermore, by hypothesis we have that
    $\jum{\tenvtwo}{\tmtwo}{\typtwo_1\limp\lneg{\typtwo_2}}$.
    Recall that
    $\pair{\tm_1}{\tm_2}\cos{\lvar}{\tmtwo} = \tm_1\cos{\lvar}{\invap{\tmtwo}{\tm_1}}$.
    Hence:
    \[
      \indrule{\rulmCos}{
        \jum{\tenv_1,\lvar:\typ}{\tm_1}{\typtwo_1}
        \HS
        \indrule{\rulmELimpTwo}{
          \jum{\tenvtwo}{\tmtwo}{\typtwo_1\limp\lneg{\typtwo_2}}
          \HS
          \jum{\tenv_2}{\tm_2}{\typtwo_2}
        }{
          \jum{\tenv_2,\tenvtwo}{\invap{\tmtwo}{\tm_2}}{\lneg{\typtwo_1}}
        }
      }{
        \jum{\tenv_1,\tenv_2,\tenvtwo}{\tm_1\cos{\lvar}{\invap{\tmtwo}{\tm_2}}}{\lneg{\typ}}
      }
    \]
  \item

      If $\lvar \in \fv{\tm_2}$:
      \[
        \indrule{\rulmITensor}{
          \jum{\tenv_1}{\tm_1}{\typtwo_1}
          \HS
          \jum{\tenv_2, \lvar:\typtwo_{2}}{\tm_2}{\typtwo_2}
        }{
          \jum{\tenv_1,\tenv_2,\lvar:\typ}{\pair{\tm_1}{\tm_2}}{\typtwo_2}
        }
      \]
      Furthermore, by hypothesis we have that $\jum{\tenv'}{\tmtwo}{\typtwo_{1} \limp \lneg{\typtwo_{2}}}$.
      Recall that 
      $\pair{\tm_1}{\tm_2}\cos{\lvar}{\tmtwo} = {\tm_{2} \cos{a}{\ap{\tmtwo}{\tm_{1}}}}$.
      Hence:
      \[
        \indrule{\rulmCos}
        {
          \indrule{\rulmELimpOne}
          {
            \jum{\tenv'}{\tmtwo}{\typtwo_{1}\limp \typtwo_{2}}  
            \HS
            \jum{\tenv_{1}}{\tm_{1}}{\typ_{1}}
          }
          {\jum{\tenv_{1},\tenv'}{\ap{\tmtwo}{\tm_{1}}}{\lneg{\typtwo}}
          }
          \HS
          \jum{\tenv_{2}, \lvar : \typ}{\tm_{2}}{\typtwo_{2}}
        }
        {\jum{\tenv_{1},\tenv_{2},\tenv'}{\tm_{2} \cos{a}{\ap{\tmtwo}{\tm_{1}}}}{\lneg{\typ}}}
      \]

  \end{enumerate}
\item \rulmETensor: Then $\tm=\casepair{\tm_1}{\lvartwo}{\lvarthree}{\tm_2}$. We consider two cases depending on whether $\lvar\in\fv{\tm_1}$ or $\lvar\in\fv{\tm_2$}.
\begin{enumerate}

  \item If $\lvar\in\fv{\tm_1}$, then the derivation is of the form:

  \[
      \indrule{\rulmETensor}{
    \jum{\tenv_1,\lvar:\typ}{\tm_1}{\typthree_1\tensor\typthree_2}
    \HS
    \jum{\tenv_2,\lvartwo:\typthree_1,\lvarthree:\typthree_2}{\tm_2}{\typtwo}
  }{
    \jum{\tenv_1, \lvar:\typ, \tenv_2}{\casepair{\tm_1}{\lvartwo}{\lvarthree}{\tm_2}}{\typtwo}
  }
\]

  Furthermore, by hypothesis we have that $\jum{\tenv'}{\tmtwo}{\lneg{\typtwo}}$.

    \[
    \indrule{\rulmCos}
    {
      \indrule{\rulmILimp}
      {
        \indrule{\rulmCos}
        {
          \jum{
            \tenv_2, \lvartwo:\typthree_1, \lvarthree : \typthree_2
          }
          {\tm_2}
          {\typtwo}
          \HS
          \jum{
            \tenv'
          }{\tmtwo}{\lneg{\typtwo}}
        }
        {
          \jum{\tenv_2, \lvartwo : \typthree_1}{\tm_2 \cos{\lvarthree}{\tmtwo}}{\lneg{\typthree_2}}
        }
      }{
        
        \jum{\tenv_2}{\lam{\lvartwo}{\tmtwo\cos{\lvarthree}{\typtwo}}}{\typthree_1 \limp \lneg{\typthree_2}}
        
      }
      \HS
      {\jum{\tenv_1, \lvar : \typ}{\tm_1}{\typthree_1 \tensor \typthree_2}}

    }
    {
      \jum{\tenv_1, \tenv_2}
      {
        \tm_2\cos{\lvar}{\lam{\lvartwo}{\tm_1\cos{\lvarthree}{\tmtwo}}}
      }
      {\lneg \typ}
    }
    \]

  \item If $\lvar\in\fv{\tm_2}$, then the derivation is of the form:

 \[
      \indrule{\rulmETensor}{
    \jum{\tenv_1}{\tm_1}{\typthree_1\tensor\typthree_2}
    \HS
    \jum{\tenv_2, \lvar:\typ, \lvartwo:\typthree_1,\lvarthree:\typthree_2}{\tm_2}{\typtwo}
  }{
    \jum{\tenv_1, \lvar:\typ, \tenv_2}{\casepair{\tm_1}{\lvartwo}{\lvarthree}{\tm_2}}{\typtwo}
  }
\]
Furthermore, by hypothesis we have that $\jum{\tenv'}{\tmtwo}{\lneg{\typtwo}}$.

\[  
  \indrule{\rulmETensor}
  {
    \jum{\tenv_1}{\tm_1}{\typthree_1 \tensor \typthree_2}
    \HS 
    \indrule{\rulmCos}
    {
      \jum{\tenv_2, \lvartwo : \typtwo, \lvarthree : \typthree, \lvar:\typ}{\tm_2}{\typtwo}
      \HS 
      \jum{\tenv'}{\tmtwo}{\lneg{\typtwo}}
    }
    {
      \jum{\tenv_2, \tenv', \lvartwo : \typthree_1, \lvarthree : \typthree_2 }{\tm_2 \cos{\lvar}{\tmtwo}}{\lneg{\typ}}
    }
  }
  {\jum{\tenv_1, \tenv_2, \tenvtwo}{\casepair{\tm_1}{\lvartwo}{\lvarthree}{\tm_2 \cos{\lvar}{\tmtwo}}}{\lneg{\typ}}}
\]

\end{enumerate}
\item \rulmILimp:

    Then $\tm = \lam{\lvartwo}{\tm'}$ and $\typtwo = \typtwo_1 \limp \typtwo_2$.
    The derivation is of the form:
    \[
      \indrule{\rulmILimp}
      {\jum{\tenv, \lvar: \typ, \lvartwo : \typtwo_1}{\tm'}{\typtwo_2}}
      {\jum{\tenv, \lvar:\typ}{\lam{\lvartwo}{\tm'}}{\typtwo_1 \limp \typtwo_2}}
    \]
    Furthermore by hypothesis we have that $\jum{\tenv'}{\tmtwo}{\lneg{\typtwo}}$.
    \[
      \indrule{\rulmETensor}
      {
        \jum{\tenv'}{\tmtwo}{\typtwo_1 \tensor \lneg{\typtwo_2}}  
        \HS
        \indrule{\rulmCos}
        {
          \jum{\tenv,\lvar:\typ,\lvartwo:\typtwo_1}{\tm'}{\typtwo_2}
          \HS
          \indrule{\rulmAx}{\emptyPremise}
          {          \jum{\lvarthree : \typtwo_2}{\lvarthree}{\typtwo_2}
          }
        }
        {\jum{\tenv, \lvartwo : \typtwo_1, \lvarthree : \lneg{\typtwo_2}}{\tm'\cos{\lvar}{\lvarthree}}{\lneg{\typ}}}

      }
      {
        \jum{\tenv, \tenv'}{\casepair{\tmtwo}{\lvartwo}{\lvarthree}{\tm' \cos{\lvar}{\lvarthree}}}{\lneg{\typ}}
      }
    \]

\item \rulmELimpOne.
  
    Then $\tm=\ap{\tm_1}{\tm_2}$.
    We consider two cases depending on whether $\lvar\in\fv{\tm_1}$ or $\lvar\in\fv{\tm_2$}. 
    Furthermore we know that $\jum{\tenvtwo}{\tmtwo}{\lneg{\typtwo}}$.

    \begin{enumerate}
      \item If $\lvar \in \fv{\tm_1}$ then the derivation is of the following form:
      \[
        \indrule{\rulmELimpOne}{
          \jum{\tenv_1, \lvar : \typ}{\tm_1}{\typ \limp \typtwo}
          \HS
          \jum{\tenv_2}{\tm_2}{\typ}
        }
        {
          \jum{\tenv_1, \tenv_2}{\ap{\tm_1}{\tm_2}}{\typtwo}
        }
      \]
      Hence:
      \[
        \indrule{\rulmCos}
        {
          \jum{\tenv_1, \lvar : \typ}{\tm_1}{\typ \limp \typtwo}
          \HS
          \indrule{\rulmITensor}
          {
            \jum{\tenv_2}{\tm_2}{\typ}
            \HS
            \jum{\tenv'}{\tmtwo}{\lneg{\typtwo}}
          }
          {\jum{\tenv_2, \tenv'}{\pair{\tm_2}{\tmtwo}}{\typ \tensor \lneg{\typtwo}}}
        }
        {
          \jum{\tenv_1,\tenv_2,\tenv'}{\tm_1\cos{\lvar}{\pair{\tm_2}{\tmtwo}}}{\lneg{\typ}}
        }
      \]
    
      \item If $\lvar \in \fv{\tm_2}$ then the derivation is of the following form:
      \[
        \indrule{\rulmELimpOne}{
          \jum{\tenv_1}{\tm_1}{\typ \limp \typtwo}
          \HS
          \jum{\tenv_2, \lvar : \typ}{\tm_2}{\typ}
        }
        {
          \jum{\tenv_1, \tenv_2}{\ap{\tm_1}{\tm_2}}{\typtwo}
        }
      \]
      Hence:

      \[
        \indrule{\rulmCos}
        {
          \indrule{\rulmELimpTwo}
          {
            \jum{\tenv_1}{\tm_1}{\typ \limp \typtwo}
            \HS
            \jum{\tenv'}{\tmtwo}{\lneg{\typtwo}}
          }
          {\jum{\tenv_1,\tenv'}{\invap{\tm_1}{\tmtwo}}{\lneg{\typ}}}
          \HS
          \jum{\tenv_2, \lvar:\typ}{\tm_2}{\typ}
        }
        {\jum{\tenv_1, \tenv_2,\tenv'}{\tm_2 \cos{\lvar}{\invap{\tm_1}{\tmtwo}}}{\lneg{\typ}}}
      \]
    \end{enumerate}

\item \rulmELimpTwo:

  Then $\tm=\invap{\tm_1}{\tm_2}$ and $\typtwo = \lneg{\typthree}$.
  We consider two cases depending on whether $\lvar\in\fv{\tm_1}$ or $\lvar\in\fv{\tm_2$}. 
  Furthermore we know that $\jum{\tenvtwo}{\tmtwo}{\typthree}$.

  \begin{enumerate}
    \item If $\lvar \in \fv{\tm_1}$ then the derivation is of the following form:
    \[
      \indrule{\rulmELimpTwo}{
        \jum{\tenv_1, \lvar : \typ}{\tm_1}{\typthree \limp \typ}
        \HS
        \jum{\tenv_2}{\tm_2}{\lneg{\typ}}
      }
      {
        \jum{\tenv_1, \tenv_2}{\invap{\tm_1}{\tm_2}}{\lneg{\typthree}}
      }
    \]
    Recall that $( \ap{\tm_1}{\tm_2})\cos{\lvar}{\tmtwo} = \tm_1\cos{\lvar}{\pair{\tm_2}{\tmtwo}} $ if $\lvar \in \fv{\tm_1}$.
    Hence:
    \[
      \indrule{\rulmCos}
      {
        \indrule{\rulmILimp}
        {
          \jum{\tenv_1, \lvar : \typ}{\tm_1}{\typthree \limp \typ}
          \HS
          \jum{\tenv'}{\tmtwo}{\typthree}
        }
        {\jum{\tenv_1, \tenv', \lvar : \typ}{\ap{\tm_1}{\tmtwo}}{\typ}}
        \HS
        \jum{\tenv_2}{\tm_2}{\lneg{\typ}}
      }
      {
        \jum{\tenv_1,\tenv_2,\tenv'}{\tm_1\cos{\lvar}{\pair{\tm_2}{\tmtwo}}}{\lneg{\typ}}
      }
    \]
  
     \item If $\lvar \in \fv{\tm_2}$ then the derivation is of the following form:
     \[
      \indrule{\rulmELimpTwo}{
        \jum{\tenv_1}{\tm_1}{\typthree \limp \typ}
        \HS
        \jum{\tenv_2, \lvar : \typ}{\tm_2}{\lneg{\typ}}
      }
      {
        \jum{\tenv_1, \tenv_2}{\invap{\tm_1}{\tm_2}}{\lneg{\typthree}}
      }
    \]
   Hence:

   \[
    \indrule{\rulmCos}
    {
      \indrule{\rulmILimp}
      {
        \jum{\tenv_1}{\tm_1}{\typthree \limp \typ}
        \HS
        \jum{\tenv'}{\tmtwo}{\typthree}
      }
      {\jum{\tenv_1, \tenv'}{\ap{\tm_1}{\tmtwo}}{\typ}}
      \HS
      \jum{\tenv_2, \lvar : \typ}{\tm_2}{\lneg{\typ}}
    }
    {
      \jum{\tenv_1,\tenv_2,\tenv'}{\tm_2\cos{\lvar}{\ap{\tm_1}{\tmtwo}}}{\lneg{\typ}}
    }
  \]
  \end{enumerate}

\end{enumerate}
\end{proof}  

  \begin{definition}[Size]
  The size of a linear term $\tm$, denoted $\size{\tm}$, is defined as
    \[
      \begin{array}{rclcrcl}
        \size{\lvar} & \eqdef &  0 &
        \size{\pair{\tm}{\tmtwo}} & \eqdef & 1 + \size{\tm} + \size{\tmtwo} \\
        \size{\casepair{\tm}{\lvar}{\lvartwo}{\tmtwo}} & \eqdef & 1 + \size{\tm} + \size{\tmtwo} &
        \size{\lam{\lvar}{\tm}} & \eqdef & 1 + \size{\tm} \\
        \size{\ap{\tm}{\tmtwo}} & \eqdef & 1 + \size{\tm} + \size{\tmtwo} &
        \size{\invap{\tm}{\tmtwo}} & \eqdef & 1 + \size{\tm} + \size{\tmtwo} \\
      \end{array}
    \]
    We extend this definition to case contexts as follows:
      \begin{align*}
        \size{\cctxhole}  \eqdef  0 & &  &
        \size{\casepair{\tm}{\lvar}{\lvartwo}{\cctx}}  \eqdef  \size{\cctx} + \size{\tm}
      \end{align*}
  \end{definition}
  
  \begin{lemma}\llem{substitution_size}
    Let $\tm$ be a linear term and $\lvar \in \fv{\tm}$, then 
    $\size{\tm \sub{\lvar}{\tmtwo}} = \size{\tm} + \size{\tmtwo}$.
  \end{lemma}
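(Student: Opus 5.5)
The plan is a structural induction on $\tm$, using the size function just defined, with $\size{\lvar}=0$ for variables. Linearity means $\lvar$ occurs exactly once in $\tm$. Hence in every compound case $\lvar$ is free in exactly one immediate subterm, and the substitution only descends into that subterm. The other immediate subterms are left unchanged.

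Base case: $\tm$ is a variable. Since $\lvar\in\fv{\tm}$, we have $\tm=\lvar$, so $\tm\sub{\lvar}{\tmtwo}=\tmtwo$. Then $\size{\tm\sub{\lvar}{\tmtwo}}=\size{\tmtwo}=0+\size{\tmtwo}=\size{\tm}+\size{\tmtwo}$.

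Binary constructors: $\pair{\tm_1}{\tm_2}$, $\ap{\tm_1}{\tm_2}$ and $\invap{\tm_1}{\tm_2}$. Suppose without loss of generality that $\lvar\in\fv{\tm_1}$, so $\lvar\notin\fv{\tm_2}$ by linearity. Then $\tm_2\sub{\lvar}{\tmtwo}=\tm_2$. The subterm $\tm_1$ is linear and contains $\lvar$, so the induction hypothesis applies to it. This gives $\size{\tm\sub{\lvar}{\tmtwo}}=1+\size{\tm_1}+\size{\tmtwo}+\size{\tm_2}=\size{\tm}+\size{\tmtwo}$. The case $\lvar\in\fv{\tm_2}$ is symmetric.

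Binders: $\lam{\lvartwo}{\tm'}$ and $\casepair{\tm_1}{\lvartwo}{\lvarthree}{\tm_2}$. Since terms are taken up to $\alpha$-renaming, I first rename the bound variables away from $\lvar$ and $\fv{\tmtwo}$; this is the only point needing care. Renaming does not change size, because size counts constructors only. After renaming, substitution commutes with the binder, and the computation is the same as for the binary constructors. If $\lvar$ lies in the bound body $\tm_2$, it is still a free occurrence of $\lvar$ in a linear subterm, so the induction hypothesis applies there.

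Finally, the fact that $\tmtwo$ is copied exactly once is what makes the size additive rather than multiplied. Linearity guarantees this.
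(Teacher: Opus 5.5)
Your proof is correct and matches the paper's argument, which is a straightforward structural induction on $\tm$ in which linearity guarantees that $\lvar$ occurs free in exactly one immediate subterm. Your handling of the binder cases via size-preserving $\alpha$-renaming makes explicit what the paper leaves implicit.
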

  \begin{proof}
    By a straight-forward induction on the structure of the term $\tm$. 
  \end{proof}

  \begin{lemma} \llem{contrasubstitution_size}
    Let $\tm$ be a linear term and $\lvar \in \fv{\tm}$, then 
    $\size{\tm \cos{\lvar}{\tmtwo}} = \size{\tm} + \size{\tmtwo}$. 
  \end{lemma}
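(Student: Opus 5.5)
The proof goes by induction on $\size{\tm}$, or equivalently on the structure of $\tm$, following the clauses of \rdef{contrasub:for:MLL}. The claim is that for any term $\tmtwo$, $\size{\tm\cos{\lvar}{\tmtwo}} = \size{\tm}+\size{\tmtwo}$. It is essential that the statement is quantified over \emph{all} $\tmtwo$, because every recursive clause calls contra-substitution on a proper subterm of $\tm$ with a \emph{different}, larger argument. So the induction hypothesis is: for every proper subterm $\tm'$ of $\tm$ with $\lvar\in\fv{\tm'}$, and every $\tmtwo'$, the equation holds. Throughout, linearity guarantees that exactly one of the side conditions $\bm{\ast}_1$ and $\bm{\ast}_2$ applies in each clause, so each case is determined.

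The base case is $\lvar\cos{\lvar}{\tmtwo}=\tmtwo$. Since $\size{\lvar}=0$, the equation holds.

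The binary constructors are handled uniformly. For $\pair{\tm_1}{\tm_2}$ with $\lvar\in\fv{\tm_1}$, the induction hypothesis gives
$\size{\tm_1\cos{\lvar}{\invap{\tmtwo}{\tm_2}}} = \size{\tm_1}+1+\size{\tmtwo}+\size{\tm_2}$.
This equals $\size{\pair{\tm_1}{\tm_2}}+\size{\tmtwo}$. The other subcase, and the clauses for $\ap{\tm_1}{\tm_2}$ and $\invap{\tm_1}{\tm_2}$, are identical in shape: one binary node is removed from $\tm$ and one binary node is added around $\tmtwo$, so the sizes balance.

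The binder cases need one extra ingredient. For $\lam{\lvartwo}{\tm'}$ we get $\casepair{\tmtwo}{\lvartwo}{\lvarthree}{\tm'\cos{\lvar}{\lvarthree}}$. Its size is $1+\size{\tmtwo}+\size{\tm'\cos{\lvar}{\lvarthree}} = 1+\size{\tmtwo}+\size{\tm'}$, using the induction hypothesis and $\size{\lvarthree}=0$. For $\tm_1\epair{\lvartwo}{\lvarthree}{\tm_2}$, the first subcase is immediate. The second subcase yields $\tm_2\cos{\lvar}{\lam{\lvartwo}{\tm_1\cos{\lvarthree}{\tmtwo}}}$, and here the induction hypothesis is applied twice. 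The inner application is to $\tm_1$, which is a proper subterm of $\tm$ containing $\lvarthree$ free. So the whole has size $\size{\tm_2}+1+\size{\tm_1}+\size{\tmtwo}$, as required.

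I expect this nested use to be the main obstacle. A plain structural induction on $\tm$ for a fixed variable $\lvar$ does not cover the inner call, because that call is on a different variable, $\lvarthree$, and a different argument. I will therefore state the induction hypothesis over all variables and all arguments, which works because the recursion is always on proper subterms of $\tm$. A second, minor point is $\alpha$-renaming: the bound variables $\lvartwo,\lvarthree$ may need to be chosen fresh with respect to $\tmtwo$. Renaming does not change size, so this causes no difficulty.
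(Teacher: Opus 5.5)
Your proof is correct and takes the same approach as the paper: structural induction on $t$ with a case analysis on the clauses of contra-substitution, where the paper writes out only the application case. Your version adds detail the paper leaves implicit, namely quantifying the induction hypothesis over all variables and all arguments, which the nested call in the second tensor-elimination clause requires.
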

  \begin{proof}
    By a straight-forward induction on the structure of the term $\tm$. 
    We show the case of the application $\tm = \ap{\tm_{1}}{\tm_{2}}$, where $\size{\tm} = \size{\tm_{1}} + \size{\tm_{2}}$.
    The remaining cases follow by a similar reasoning.
  
      \begin{enumerate}
        \item If $\lvar \in \fv{\tm_{1}}$ then $(\ap{\tm_{1}}{\tm_{2}}) \cos{\lvar}{\tmtwo} = \tm_1\cos{\lvar}{\pair{\tm_2}{\tmtwo}}$.
        By \ih, $\size{\tm_1\cos{\lvar}{\pair{\tm_2}{\tmtwo}}} = \size{\tm_{1}} + \size{\pair{\tm_2}{\tmtwo}}$.
        Since $\size{\pair{\tm_{2}}{\tmtwo}} = \size{\tm_{2}} + \size{\tmtwo} + 1$ then this implies that $\size{(\ap{\tm_{1}}{\tm_{2}}) \cos{\lvar}{\tmtwo}} = \size{\tm_{1}} + \size{\tm_{2}} + \size{\tmtwo} + 1 = \size{\ap{\tm_{1}}{\tm_{2}}} + \size{\tmtwo}$.

        \item If $\lvar \in \fv{\tm_{2}}$ then $(\ap{\tm_{1}}{\tm_{2}}) \cos{\lvar}{\tmtwo} = \tm_2\cos{\lvar}{\invap{\tm_1}{\tmtwo}}$.
        By \ih, $\size{\tm_2\cos{\lvar}{\invap{\tm_1}{\tmtwo}}} = \size{\tm_{2}} + \size{\invap{\tm_1}{\tmtwo}}$.
        Since $\size{\invap{\tm_{1}}{\tmtwo}} = \size{\tm_{2}} + \size{\tmtwo} + 1$ then this implies that $\size{(\ap{\tm_{1}}{\tm_{2}}) \cos{\lvar}{\tmtwo}} = \size{\tm_{2}} + \size{\tm_{1}} + \size{\tmtwo} + 1 = \size{\ap{\tm_{1}}{\tm_{2}}} + \size{\tmtwo}$.
      \end{enumerate}
  \end{proof}
  
  \begin{lemma}
    $\CalcMLL$ is strongly normalizing.
  \end{lemma}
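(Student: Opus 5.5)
The plan is to show that each reduction step strictly decreases $\size{\cdot}$. Since sizes are natural numbers, no infinite reduction sequence can then exist. Because $\to$ is the contextual closure of $\toax{\raxlamL}$, $\toax{\raxlamR}$ and $\toax{\raxtensor}$, it suffices to do two things: check that each axiom strictly decreases size, and check that size is strictly monotone under term constructors. The second point is immediate, since $\size{\cdot}$ is defined compositionally as $1$ plus the sum of the sizes of the immediate subterms. Note also that linearity is preserved by substitution and contra-substitution, so the relevant size lemmas apply along any reduction sequence.

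First I would establish a small auxiliary fact: $\size{\tm\cctx} = \size{\tm} + \size{\cctx}$ for every case context $\cctx$. Strictly speaking, with the definition given, $\size{\casepair{\tm}{\lvar}{\lvartwo}{\cctx}} = \size{\cctx}+\size{\tm}$ omits the $+1$ for the case constructor. To be safe I would instead define the size of a context as $1 + \size{\cctx} + \size{\tm}$, consistently with the term size, or simply argue on the term $\tm\cctx$ directly. Either way, $\size{\tm\cctx} = \size{\tm} + c_\cctx$ for a constant $c_\cctx$ that depends only on $\cctx$. The proof is an easy induction on $\cctx$.

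Then I would check the three axioms, using \rlem{substitution_size} and \rlem{contrasubstitution_size}, which apply because typable terms are linear and $\lvar\in\fv{\tm}$:
\begin{itemize}
\item For $\raxlamL$: the left side has size $1 + (1 + \size{\tm} + c_\cctx) + \size{\tmtwo}$, while the right side has size $\size{\tm}+\size{\tmtwo}+c_\cctx$. This is a decrease by $2$.
\item For $\raxlamR$: the computation is identical, now using \rlem{contrasubstitution_size}.
\item For $\raxtensor$: the left side has size $1 + (1+\size{\tmtwo}+\size{\tmthree}+c_\cctx) + \size{\tm}$, and the right side, applying \rlem{substitution_size} twice (valid since $\lvar,\lvartwo\in\fv{\tm}$ by linearity and $\lvartwo\notin\fv{\tmtwo}$), has size $\size{\tm}+\size{\tmtwo}+\size{\tmthree}+c_\cctx$.
\end{itemize}

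The only real subtlety, and the step I expect to be the main obstacle, is the bookkeeping around case contexts. This means fixing the size convention for $\cctx$ so that $\size{\tm\cctx}$ decomposes additively. It also means checking that variable capture by $\cctx$ does not break the linearity hypotheses needed for the substitution lemmas. The latter holds by $\alpha$-renaming, since $\cctx$ binds variables disjoint from those being substituted.
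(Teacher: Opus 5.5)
Your proposal is correct and is essentially the paper's proof: each of the three axioms strictly decreases $\size{\cdot}$ by \rlem{substitution_size} and \rlem{contrasubstitution_size}, with the case context $\cctx$ contributing the same amount on both sides. Your extra bookkeeping is accurate: you add the missing $+1$ per eliminator in $\size{\cctx}$, count the outer application, contra-application and case constructors (which the paper's arithmetic drops), and note monotonicity under contexts. These corrections only change the size drop from the paper's stated $1$ to $2$, and do not affect the conclusion.
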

  
  \begin{proof}
    We proceed by showing that each reduction rule strictly decreases the term's size \ie the size of the redex is strictly greater than the size of the contractum.
    Since the size of a term is a non-negative integer, no infinite reduction sequences can exist. 
  
    Let $\tmthree$ denote the redex and $\tmthree'$ the contractum; we proceed by case analysis of the three reductions rules.
    \begin{enumerate}
      \item 
        Case $\ap{(\lam{\lvar}{\tm})\cctx}{\tmtwo} \to \tm\sub{\lvar}{\tmtwo}\cctx$.
        The size of the redex is $\size{\tmthree} = \size{\ap{(\lam{\lvar}{\tm})\cctx}{\tmtwo}} = \size{\lam{\lvar}{\tm}} + \size{\cctx} + \size{\tmtwo} = \size{\tm} + 1 + \size{\cctx} + \size{\tmtwo}$.
        For the contractum we have $\size{\tmthree'}= \size{\tm\sub{\lvar}{\tmtwo}\cctx} = \size{\tm \sub{\lvar}{\tmtwo}} + \size{\cctx}$, by \rlem{substitution_size} we get $\size{\tm \sub{\lvar}{\tmtwo}} + \size{\cctx} = \size{\tm} + \size{\tmtwo} + \size{\cctx}$.
        It follows that $\size{\tmthree} = \size{\tmthree'} + 1$.
  
        \item 
        Case $\invap{(\lam{\lvar}{\tm})\cctx}{\tmtwo} \to \tm\cos{\lvar}{\tmtwo}\cctx$.
        The size of the redex is $\size{\tmthree} = \size{\invap{(\lam{\lvar}{\tm})\cctx}{\tmtwo}} = \size{\lam{\lvar}{\tm}} + \size{\cctx} + \size{\tmtwo} = \size{\tm} + 1 + \size{\cctx} + \size{\tmtwo}$.
        For the contractum we have $\size{\tmthree'}= \size{\tm\cos{\lvar}{\tmtwo}\cctx} = \size{\tm \cos{\lvar}{\tmtwo}} + \size{\cctx}$, by \rlem{contrasubstitution_size} we get $\size{\tm \cos{\lvar}{\tmtwo}} + \size{\cctx} = \size{\tm} + \size{\tmtwo} + \size{\cctx}$.
        It follows that $\size{\tmthree} = \size{\tmthree'} + 1$.
  
        \item 
        Case $\casepair{\pair{\tmtwo}{\tmfour}\cctx}{\lvar}{\lvartwo}{\tm} \to \tm\sub{\lvar}{\tmtwo}\sub{\lvartwo}{\tmfour}\cctx$.
        The size of the redex is $\size{\tmthree} = \size{\casepair{\pair{\tmtwo}{\tmfour}\cctx}{\lvar}{\lvartwo}{\tm}} = \size{\tm} + \size{\pair{\tmtwo}{\tmfour}\cctx} + 1 = \size{\tm} + \size{\pair{\tmtwo}{\tmfour}} + \size{\cctx} + 1  = \size{\tm} + \size{\tmtwo} + \size{\tmfour} + \size{\cctx} + 1$.
        For the contractum we have $\size{\tmthree'} = \size{\tm\sub{\lvar}{\tmtwo}\sub{\lvartwo}{\tmfour}\cctx} = \size{\tm\sub{\lvar}{\tmtwo}\sub{\lvartwo}{\tmfour}} + \size{\cctx}$, by \rlem{substitution_size} we get $\size{\tm\sub{\lvar}{\tmtwo}\sub{\lvartwo}{\tmfour}} + \size{\cctx} = \size{\tm} + \size{\tmtwo} + \size{\tmfour} + \size{\cctx}$.
        It follows that $\size{\tmthree} = \size{\tmthree'} + 1$.  
    \end{enumerate}
  \end{proof}
  
%%% Local Variables:
%%% mode: latex
%%% TeX-master: "../main"
%%% End:
% 
\section{A Contraposition-Based Calculus for \MELL}
\lsec{app:calcMELL}

\begin{lemma}[Weakening and Contraction]
  \llem{weakening_and_contraction}
  \begin{enumerate}

  \item $\djum{\utenv}{\tenv}{\tm}{\typ}$ implies $\djum{\utenv,\uvar:\typtwo}{\tenv}{\tm}{\typ}$.
  \item $\djum{\utenv, \uvar:\typtwo, \uvartwo:\typtwo}{\tenv}{\tm}{\typ}$ implies $\djum{\utenv, \uvar:\typtwo}{\tenv}{\tm\sub{\uvartwo}{\uvar}}{\typ}$.
  \end{enumerate}
  
\end{lemma}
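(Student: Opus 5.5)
Both items go by induction on the derivation of the given judgement in \CalcMELL, using the rules of Fig.~\ref{typing:rules:CalcMELL}. The one structural fact we exploit is that the unrestricted environment $\utenv$ is \emph{shared} among all premises of every rule: no rule splits it, and the only rules that touch it are \rulmAxU{} (which allows an arbitrary extra $\utenv$), \rulmEOfc{} and \rulmIWhy{} (which extend it by a bound variable). Linear environments are never modified by either statement.

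For weakening (item~1), we assume without loss of generality that $\uvar$ is fresh: not in $\dom{\utenv}$ and not bound in $\tm$, which we may arrange by $\alpha$-renaming. The axiom cases \rulmAx, \rulmAxU{} and \rulmIOne{} allow an arbitrary $\utenv$, so they remain valid with $\uvar:\typtwo$ added. For every other rule we apply the IH to each premise, all of which carry the same $\utenv$, and reapply the rule. For \rulmEOfc{} and \rulmIWhy{}, the premise has environment $\utenv,\uvartwo:\typthree$ with $\uvartwo$ bound. Freshness lets us apply the IH to get $\utenv,\uvartwo:\typthree,\uvar:\typtwo$ and then reorder, since environments are partial functions.

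For contraction (item~2), we again induct on the derivation of $\djum{\utenv,\uvar:\typtwo,\uvartwo:\typtwo}{\tenv}{\tm}{\typ}$, renaming bound unrestricted variables away from $\uvar,\uvartwo$ so that $\sub{\uvartwo}{\uvar}$ is capture-free.

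\begin{itemize}
\item The interesting base case is \rulmAxU{} with $\tm=\uvartwo$, so that $\typ=\typtwo$. Here $\tm\sub{\uvartwo}{\uvar}=\uvar$, and $\djum{\utenv,\uvar:\typtwo}{\emptytenv}{\uvar}{\typtwo}$ holds by \rulmAxU.
\item If $\tm$ is $\uvar$ or another unrestricted variable, the term is unchanged and \rulmAxU{} still applies.
\item The cases \rulmAx{} and \rulmIOne{} are immediate.
\item Every compound rule follows from the IH on the premises, because substitution commutes with each constructor. The binders of \rulmEOfc{} and \rulmIWhy{} are handled by the freshness convention.
\end{itemize}

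None of these cases needs linearity or contra-substitution, since $\sub{\uvartwo}{\uvar}$ is plain renaming. The main point requiring care is variable hygiene in the binding rules \rulmEOfc{} and \rulmIWhy{}, together with \rulmIOfc{} and \rulmEWhy{}, whose premises use $\utenv$ with a restricted linear context. It is settled by the $\alpha$-equivalence convention, so I expect no genuine obstacle.
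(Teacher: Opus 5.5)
Your proposal is correct and follows the paper's proof, which is induction on the typing derivation for each item. Your observation makes that routine induction go through: every rule shares $\utenv$ across its premises, and only \rulmEOfc{} and \rulmIWhy{} extend it with a bound variable, which $\alpha$-renaming handles.
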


\begin{proof}
By induction on $\djum{\utenv}{\tenv}{\tm}{\typ}$, for the first item, and induction on $\djum{\utenv, \uvar:\typtwo, \uvartwo:\typtwo}{\tenv}{\tm}{\typ}$, for the second.
\end{proof}

\begin{lemma}[Linear and unrestricted substitution are compatible with typing]
The following rules are admissible in \CalcMELL:
\[
  \begin{array}{cc}
  \indrule{\rulmSub}{
    \djum{\utenv}{\tenv,\lvar:\typ}{\tm}{\typtwo}
    \HS
    \djum{\utenv}{\tenvtwo}{\tmtwo}{\typ}
  }{
    \djum{\utenv}{\tenv,\tenvtwo}{\tm\sub{\lvar}{\tmtwo}}{\typtwo}
    }
    &
      \indrule{\rulmUSub}{
    \djum{\utenv, \uvar:\typ}{\tenv}{\tm}{\typtwo}
    \HS
    \djum{\utenv}{\emptytenv}{\tmtwo}{\typ}
  }{
    \djum{\utenv}{\tenv}{\tm\sub{\uvar}{\tmtwo}}{\typtwo}
  }
    \end{array}
\]
\end{lemma}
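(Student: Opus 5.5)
We prove the two rules separately, each by induction on the derivation of the left premise. For \rulmSub{} we induct on $\djum{\utenv}{\tenv,\lvar:\typ}{\tm}{\typtwo}$, keeping $\tmtwo$ and its derivation $\djum{\utenv}{\tenvtwo}{\tmtwo}{\typ}$ fixed. For \rulmUSub{} we induct on $\djum{\utenv,\uvar:\typ}{\tenv}{\tm}{\typtwo}$. In both cases we work up to $\alpha$-renaming, so that bound variables of $\tm$ avoid $\fv{\tmtwo}$ and substitution is capture-free.

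\textbf{Linear substitution.} The base case is \rulmAx{} with $\tm=\lvar$. Then $\tenv$ is empty and $\typtwo=\typ$, so $\tm\sub{\lvar}{\tmtwo}=\tmtwo$ and we are done. There is no \rulmAxU{} case, since $\lvar$ must occur in the linear environment.

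For every binary rule (\rulmITensor, \rulmETensor, \rulmEParOne, \rulmEParTwo, \rulmEOfc, \rulmIBott, \rulmEOne), the linear environment is split between the premises, so $\lvar$ lies in exactly one of them. We apply the \ih{} to that premise and rebuild with the same rule; the resulting environment is $\tenv,\tenvtwo$ up to multiset reordering.

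The unary rules \rulmIPar{} and \rulmIWhy{} are handled directly by the \ih. In \rulmIWhy{} the unrestricted environment grows to $\utenv,\uvar':\lneg{\typthree}$, so before applying the \ih{} we weaken the derivation of $\tmtwo$ by \rlem{weakening_and_contraction}(1).

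The linear variable cannot sit in the restricted positions. In \rulmIOfc{} the premise has linear context exactly $\lvartwo:\typthree$ with $\lvartwo$ bound. In \rulmEWhy{} the left subterm has only its bound variable free. So in \rulmEWhy{} the variable $\lvar$ must lie in the $\why{}$-typed premise, where the \ih{} applies.

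Whenever the unrestricted context is extended in \rulmEOfc{} or \rulmIWhy{}, we again weaken $\tmtwo$ via \rlem{weakening_and_contraction}.

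\textbf{Unrestricted substitution.} Here $\tmtwo$ has empty linear context, so every premise can receive a copy of it.

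In the \rulmAxU{} case with $\tm=\uvar$, the linear context is empty and $\typtwo=\typ$, so the result is $\tmtwo$ itself. For any other variable, and for \rulmAx{} and \rulmIOne, the term is unchanged and we simply drop $\uvar:\typ$ from the unrestricted context. This strengthening is legitimate because $\uvar$ does not occur in the term; we either state it as a small auxiliary lemma or just rederive the axiom directly.

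For the binary rules, all premises share $\utenv,\uvar:\typ$. We apply the \ih{} to each premise and reassemble; the linear context is unchanged because $\tmtwo$ contributes no linear variables. This includes \rulmIOfc{} and the left premise of \rulmEWhy{}. Those cases are exactly why \rulmUSub{} requires the linear context of $\tmtwo$ to be empty: otherwise we would break the side conditions that these premises have linear context $\lvartwo:\typthree$ only.

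For the binders $\eofc{\uvar'}{\cdot}$ and $\iwhy{\uvar'}{\cdot}$, we weaken $\tmtwo$ with $\uvar'$ before applying the \ih.

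\textbf{Main obstacle.} The proof is routine. The only point needing care is bookkeeping, and it has two parts. First, we must check that the restricted premises of \rulmIOfc{} and \rulmEWhy{} never receive the linear variable in the \rulmSub{} case; this follows from the shape of those rules. Second, we must handle unrestricted-context extension by weakening, and the one strengthening step in the axiom cases of \rulmUSub.
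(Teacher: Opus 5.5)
Your proposal is correct and follows the paper's proof, which is simply induction on the derivation of the left premise for each rule. Your case analysis (weakening $\tmtwo$ under the binders of \rulmEOfc{} and \rulmIWhy{}, the restricted premises of \rulmIOfc{} and \rulmEWhy{}, and the empty linear context of $\tmtwo$ for \rulmUSub{}) supplies exactly the routine details the paper omits. One small fix: you weaken the derivation of $\tmtwo$ before invoking the \ih, so the induction predicate must quantify over all $\tenvtwo$, $\tmtwo$ and their typing derivations in the current unrestricted environment, rather than keeping that derivation fixed.
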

\begin{proof}
By induction on the derivation of $\djum{\utenv}{\tenv,\lvar:\typ}{\tm}{\typtwo}$, for the first item, and $\djum{\utenv, \uvar:\typ}{\tenv}{\tm}{\typtwo}$ for the second.
\end{proof}

It can be noted that the following two judgments are logically equivalent,
in the sense that the first one is provable if and only if the second
one is provable:
\[
  \djumnt{\utenv,\uvar:\typ}{\tenv}{\typtwo}
  \HS
  \HS
  \djumnt{\utenv}{\lvar:\ofc{\typ},\tenv}{\typtwo}
\]
More precisely, we have the following lemma:

\begin{lemma}
\llem{unrestricted_eq_bang_linear}
\mbox{}
\begin{enumerate}
    \item\label{to_lin} If $\djum{\utenv,\uvar:\typ}{\tenv}{\tm}{\typtwo}$,
      then  $\djum{\utenv}{\lvar:\ofc{\typ},\tenv}{\tm\eofc{\uvar}{\lvar}}{\typtwo}$.

          \item\label{from_lin} If  $\djum{\utenv}{\lvar:\ofc{\typ},\tenv}{\tm}{\typtwo}$, then $\djum{\utenv,\uvar:\typ}{\tenv}{\tm\sub{\lvar}{\iofctwo{\lvartwo}{(\ibott{\uvar}{\lvartwo})}}}{\typtwo}$,

          \end{enumerate}
        \end{lemma}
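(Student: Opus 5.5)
Both items are obtained directly from the typing rules of Fig.~\ref{typing:rules:CalcMELL}, together with the admissible rules \rulmSub{} and \rulmUSub{} and Weakening (\rlem{weakening_and_contraction}). No induction on the derivation is needed.

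\textbf{Item~\ref{to_lin}.} Take the hypothesis $\djum{\utenv,\uvar:\typ}{\tenv}{\tm}{\typtwo}$ and the axiom $\djum{\utenv}{\lvar:\ofc{\typ}}{\lvar}{\ofc{\typ}}$ (rule \rulmAx). Apply \rulmEOfc{} with the axiom as the left premise and the hypothesis as the right premise. This gives $\djum{\utenv}{\lvar:\ofc{\typ},\tenv}{\tm\eofc{\uvar}{\lvar}}{\typtwo}$. We may take $\lvar$ fresh, so the linear contexts are disjoint.

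\textbf{Item~\ref{from_lin}.} The goal is to show that $\iofctwo{\lvartwo}{(\ibott{\uvar}{\lvartwo})}$ has type $\ofc{\typ}$ under the unrestricted context $\utenv,\uvar:\typ$ and the empty linear context; then \rulmSub{} finishes. The typing steps are:
\begin{itemize}
\item By \rulmAxU{}, $\djum{\utenv,\uvar:\typ}{\emptytenv}{\uvar}{\typ}$.
\item By \rulmAx{}, $\djum{\utenv,\uvar:\typ}{\lvartwo:\lneg{\typ}}{\lvartwo}{\lneg{\typ}}$.
\item By \rulmIBott{}, $\djum{\utenv,\uvar:\typ}{\lvartwo:\lneg{\typ}}{\ibott{\uvar}{\lvartwo}}{\bott}$.
\item By \rulmIOfc{}, $\djum{\utenv,\uvar:\typ}{\emptytenv}{\iofctwo{\lvartwo}{(\ibott{\uvar}{\lvartwo})}}{\ofc{\lneg{\lneg{\typ}}}}$. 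Since $\lneg{\cdot}$ is involutive, this type is $\ofc{\typ}$.
\end{itemize}
Next, weaken the hypothesis with Weakening (\rlem{weakening_and_contraction}, item~1) to $\djum{\utenv,\uvar:\typ}{\lvar:\ofc{\typ},\tenv}{\tm}{\typtwo}$, choosing $\uvar$ fresh. Finally, \rulmSub{} applied to this judgment and the derivation above yields $\djum{\utenv,\uvar:\typ}{\tenv}{\tm\sub{\lvar}{\iofctwo{\lvartwo}{(\ibott{\uvar}{\lvartwo})}}}{\typtwo}$.

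The only delicate point is bookkeeping. \rulmSub{} requires both premises to share the same unrestricted environment, which is why the weakening step comes first. The freshness of $\uvar$ and $\lvartwo$ is handled by $\alpha$-renaming. The type calculation $\lneg{\lneg{\typ}}=\typ$ is immediate from the definition of negation.
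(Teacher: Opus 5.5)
Your proof is correct and follows the paper's own argument. Item~1 is a single application of \rulmEOfc{} to the axiom $\djum{\utenv}{\lvar:\ofc{\typ}}{\lvar}{\ofc{\typ}}$ and the hypothesis; item~2 weakens the hypothesis, types $\iofctwo{\lvartwo}{(\ibott{\uvar}{\lvartwo})}$ via \rulmIOfc{}, and closes with \rulmSub{}, where you also spell out the intermediate \rulmIBott{} step with $\lvartwo:\lneg{\typ}$ and $\lneg{\lneg{\typ}}=\typ$ that the paper's derivation leaves implicit.
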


        \begin{proof}
For the first item we have:
\[
      \indrule{\rulmEOfc}
  {\djum{\utenv}{\lvar:\ofc{\typ}}{\lvar}{\ofc{\typ}}
    \HS
    \djum{\utenv,\uvar:\typ}{\tenv}{\tm}{\typtwo}
    }
    {\djum{\utenv}{\lvar:\ofc{\typ},\tenv}{\tm\eofc{\uvar}{\lvar}}{\typtwo}}
  \]
For the second we have:
  \[
    \indrule{\rulmSub}
    {
      \indruledbl{\rlem{weakening_and_contraction}}
      {\djum{\utenv}{\lvar:\ofc{\typ},\tenv}{\tm}{\typtwo}}
      {\djum{\utenv,\uvar:\typ}{\lvar:\ofc{\typ},\tenv}{\tm}{\typtwo}}
    \HS
    \indrule{\rulmIOfc}
      {\djum{\utenv,\uvar:\typ}{\emptytenv}{\uvar}{\typ}}
      {\djum{\utenv,\uvar:\typ}{\emptytenv}{\iofctwo{\lvartwo}{(\ibott{\uvar}{\lvartwo})}}{\ofc{\typ}}}
  }
  { \djum{\utenv,\uvar:\typ}{\tenv}{\tm\sub{\lvar}{\iofctwo{\lvartwo}{(\ibott{\uvar}{\lvartwo})}}}{\typtwo}}
  %     \indrule{\rulmEOfc}
  % {\djum{\utenv}{\lvar:\ofc{\typ}}{\lvar}{\ofc{\typ}}
  %   \HS
  %   \djum{\utenv,\uvar:\typ}{\tenv}{\tm}{\typtwo}
  %   }
  %   {\djum{\utenv}{\lvar:\ofc{\typ},\tenv}{\eofc{\tm}{\uvar}{\lvar}}{\typtwo}}
\]

\end{proof}

Notation: $\tmtwo\eofc{\seq{\uvar}}{\seq{\tm}}$ to abbreviate
$\tmtwo\eofc{\uvar_1}{\tm_1}\ldots\eofc{\uvar_n}{\tm_n}$. Also,
$\tm\sub{\seq{\lvar}}{\seq{\tmtwo}}$ to abbreviate
$\tm\sub{\lvar_1}{\tmtwo_1}\ldots\sub{\lvar_n}{\tmtwo_n}$.

\SoundnessAndCompletenessOfCalcMELL*

\begin{proof}\label{soundness_and_completeness_of_calc_mell:proof}
We first address soundness. By induction on the derivation of $\djum{\utenv}{\tenv}{\tm}{\typ}$.
  \begin{enumerate}

  \item $\rulmAx$. The derivation is:
    \[
  \indrule{\rulmAx}{
    \emptyPremise
  }{
    \djum{\utenv}{\lvar:\typ}{\lvar}{\typ}
  }
  \]

  We can take:
  \[
    \indruledbl{\rullW}{
    \indrule{\rullAx}{
    \emptyPremise
  }{
    \jull{\lneg{\typ},\typ}
  }
    }{
    \jull{\why{\lneg{\utenv}}, \lneg{\typ},\typ}
  }
  \]

  \item $\rulmAxU$. The derivation is:
  \[
  \indrule{\rulmAxU}{
    \emptyPremise
  }{
    \djum{\utenv, \uvar:\typ}{\emptytenv}{\uvar}{\typ}
  }
\]
We can take:
  \[
    \indruledbl{\rullW}{
        \indrule{\rullD}{
          \indrule{\rullAx}{
              \emptyPremise
            }{
              \jull{\lneg{\typ},\typ}
            }
          }{
            \jull{\why{\lneg{\typ}},\typ}
          }
        }{
          \jull{\why{\lneg{\utenv}}, \why{\lneg{\typ}},\typ}
        }
  \]

  \item $\rulmITensor$. The derivation is:
  \[
  \indrule{\rulmITensor}{
    \djum{\utenv}{\tenv}{\tm}{\typ}
    \HS
    \djum{\utenv}{\tenvtwo}{\tmtwo}{\typtwo}
  }{
    \djum{\utenv}{\tenv,\tenvtwo}{\pair{\tm}{\tmtwo}}{\typ\tensor\typtwo}
  }
\]

We can take:
\[
    \indruledbl{\rullC}{
      \indrule{\rullTensor}{
        \indih{\jull{\why{\lneg{\utenv}},\lneg{\tenv},\typ}}
        \HS
        \indih{\jull{\why{\lneg{\utenv}},\lneg{\tenvtwo},\typtwo}}
  }{
    \jull{\why{\lneg{\utenv}}, \lneg{\tenv}, \why{\lneg{\utenv}},\lneg{\tenvtwo},\typ\tensor\typtwo}
  }
    }{
    \jull{\why{\lneg{\utenv}}, \lneg{\tenv},\lneg{\tenvtwo},\typ\tensor\typtwo}
  }
  \]
  \item $\rulmETensor$. The derivation is:
  \[
  \indrule{\rulmETensor}{
    \djum{\utenv}{\tenv}{\tm}{\typ\tensor\typtwo}
    \HS
    \djum{\utenv}{\tenvtwo,\lvar:\typ,\lvartwo:\typtwo}{\tmtwo}{\typthree}
  }{
    \djum{\utenv}{\tenv,\tenvtwo}{\casepair{\tm}{\lvar}{\lvartwo}{\tmtwo}}{\typthree}
  }
\]
   We take:
   \[
       \indruledbl{\rullC}{
  \indrule{\rullCut}{
    \jull{\why{\lneg{\utenv}},\lneg{\tenv}, \typ\tensor\typtwo}
    \HS
      \indrule{\rullLimp}{
    \jull{\why{\lneg{\utenv}},\lneg{\tenvtwo},\lneg{\typ},\lneg{\typtwo},\typthree}
  }{
     \jull{\why{\lneg{\utenv}},\lneg{\tenvtwo},\typ\limp\lneg{\typtwo},\typthree}
  }
  }{
    \jull{\why{\lneg{\utenv}}, \lneg{\tenv},\why{\lneg{\utenv}},\lneg{\tenvtwo},\typthree}
  }
  }{
        \jull{\why{\lneg{\utenv}}, \lneg{\tenv},\lneg{\tenvtwo},\typthree}
  }
  \]

  \item $\rulmIPar$. The derivation is:

          \[\indrule{\rulmIPar}{
  \djum{\utenv}{\tenv,\lvar:\typ,\lvartwo:\typtwo}{\tm}{\bott}
  }{
    \djum{\utenv}{\tenv}{\ipar{\lvar}{\lvartwo}{\tm}}{\lneg{\typ}\parr\lneg{\typtwo}}
  }
  \]
 We take:
  
  \[
      \indrule{\rullPar}{
        \indrule{\rullCut}{
          \jull{\why{\lneg{\utenv}},\lneg{\tenv},\lneg{\typ},\lneg{\typtwo},\bott}
          \HS
          \jull{\one}
          }{
            \jull{\why{\lneg{\utenv}},\lneg{\tenv}, \lneg{\typ},\lneg{\typtwo}}
            }
          }
          {
            \jull{\why{\lneg{\utenv}}, \lneg{\tenv}, \lneg{\typ}\parr\lneg{\typtwo}}
          }        
\]

  \item $\rulmEParOne$. The derivation is:
  \[
  \indrule{\rulmEParOne}{
    \djum{\utenv}{\tenv}{\tm}{\typ\parr\typtwo}
    \HS
    \djum{\utenv}{\tenvtwo}{\tmtwo}{\lneg{\typ}}
  }{
    \djum{\utenv}{\tenv,\tenvtwo}{\ap{\tm}{\tmtwo}}{\typtwo}
  }
  \]

 We take:
  \[
      \indruledbl{\rullC}{
    \indrule{\rullCut}{
      \indrule{\rullTensor}{
        \jull{\why{\lneg{\utenv}}, \lneg{\tenvtwo},\lneg{\typ}}
        \HS
        \jull{\lneg{\typtwo},\typtwo}
      }{
        \jull{\why{\lneg{\utenv}}, \lneg{\tenvtwo},\typtwo,\lneg{\typ}\tensor\lneg{\typtwo}}
      }
      \HS
      \jull{\why{\lneg{\utenv}},\lneg{\tenv},\typ\parr\typtwo}
    }{
      \jull{\why{\lneg{\utenv}},\lneg{\tenvtwo}, \why{\lneg{\utenv}}, \lneg{\tenv}, \typtwo}
    }
      }{
     \jull{\why{\lneg{\utenv}},\lneg{\tenvtwo}, \lneg{\tenv}, \typtwo}
  }
  \]
  
  \item $\rulmEParTwo$. The derivation is:
  \[
  \indrule{\rulmEParTwo}{
    \djum{\utenv}{\tenv}{\tm}{\typ\parr\typtwo}
    \HS
    \djum{\utenv}{\tenvtwo}{\tmtwo}{\lneg{\typtwo}}
  }{
    \djum{\utenv}{\tenv,\tenvtwo}{\invap{\tm}{\tmtwo}}{\typ}
  }
\]
 We take:
\[
    \indruledbl{\rullC}{
    \indrule{\rullCut}{
    \indrule{\rullTensor}{
  \indrule{\rullAx}{
    \emptyPremise
  }{
    \jull{\typ,\lneg{\typ}}
  }
    \HS
    \jull{\why{\lneg{\utenv}},\lneg{\tenvtwo},\lneg{\typtwo}}
}{
     \jull{\typ,\why{\lneg{\utenv}},\lneg{\tenvtwo},\lneg{\typ}\tensor\lneg{\typtwo}}
  }
      \HS
    \jull{\why{\lneg{\utenv}},\lneg{\tenv},\typ\parr\typtwo}
  }{
    \jull{\typ,\why{\lneg{\utenv}},\lneg{\tenvtwo}, \why{\lneg{\utenv}},\lneg{\tenv}}
  }
    }{
     \jull{\typ,\why{\lneg{\utenv}},\lneg{\tenvtwo}, \lneg{\tenv}}
  }
\]

\item $\rulmIOfc$. The derivation is:
  \[
    \indrule{\rulmIOfc}
  {\djum{\utenv}{\lvar:\typ}{\tm}{\bott}}
  {\djum{\utenv}{\emptytenv}{\iofctwo{\lvar}{\tm}}{\ofc{\lneg{\typ}}}}
\]
 We take:
\[
    \indrule{\rullP}{
      \indrule{\rullCut}{
        \indih{\jull{\why{\lneg{\utenv}},\lneg{\typ},\bott}}
        \HS
        \jull{\one}
      }
      {
         \jull{\why{\lneg{\utenv}},\lneg{\typ}}
      }
  }{
    \jull{\why{\lneg{\utenv}},\ofc{\lneg{\typ}}}
  }
  \]
\item $\rulmEOfc$. The derivation is:
  \[
    \indrule{\rulmEOfc}
    {\djum{\utenv}{\tenv}{\tm}{\ofc{\typ}}
    \HS
    \djum{\utenv,\uvar:\typ}{\tenvtwo}{\tmtwo}{\typthree}
    }
  {\djum{\utenv}{\tenv,\tenvtwo}{\tmtwo\eofc{\uvar}{\tm}}{\typthree}}
\]
 We take:
\[
    \indruledbl{\rullC}{
  \indrule{\rullCut}{
    \jull{\why{\lneg{\utenv}},\lneg{\tenv},\ofc{\typ}}
    \HS
    \jull{\why{\lneg{\utenv}},\why{\lneg{\typ}},\lneg{\tenvtwo},\typthree}
  }{
    \jull{\why{\lneg{\utenv}},\lneg{\tenv},\why{\lneg{\utenv}},\lneg{\tenvtwo},\typthree}
  }
    }{
     \jull{\why{\lneg{\utenv}},\lneg{\tenv},\lneg{\tenvtwo},\typthree}
  }
  \]
\item $\rulmIWhy$.
  The derivation is:
  \[
    \indrule{\rulmIWhy}{
      \djum{\utenv,\uvar:\lneg{\typ}}{\tenv}{\tm}{\bott}
    }{
      \djum{\utenv}{\tenv}{\iwhy{\uvar}{\tm}}{\why{\typ}}
    }
  \]
  We take:
  \[
      \indruledbl{\rullCut}{
        \jull{\why{\lneg{\utenv}},\why{\typ},\lneg{\tenv},\bott}
        \HS
        \jull{\one}
      }{
        \jull{\why{\lneg{\utenv}},\why{\typ},\lneg{\tenv}}
      }
  \]
  
\item $\rulmEWhy$. There are two cases.

  \[
          \indrule{\rulmEWhy}{
    \djum{\utenv}{\tenv}{\tmtwo}{\why{\typ}}
    \HS
  \djum{\utenv}{\lvar:\typ}{\tm}{\bott}
  }{
    \djum{\utenv}{\tenv}{\ewhy{\tm}{\lvar}{\tmtwo}}{\bott}
  }
\]
 We take:
\[
  \indrule{\rullBott}{
  \indruledbl{\rullC}{
  \indrule{\rullCut}{
    \jull{\why{\lneg{\utenv}},\lneg{\tenv}, \why{\typ}}
    \HS
    \indrule{\rullP}{
               \indrule{\rullCut}{
                 \jull{\why{\lneg{\utenv}},\lneg{\typ}, \bott}
                 \HS
                 \jull{\one}
               }
               {
                       \jull{\why{\lneg{\utenv}},\lneg{\typ}}
                     }
                   }
                   {
                       \jull{\why{\lneg{\utenv}},\ofc{\lneg{\typ}}}
                     }
  }  
  {
      \jull{\why{\lneg{\utenv}}, \why{\lneg{\utenv}}, \lneg{\tenv}}
    }
  }
  {
     \jull{\why{\lneg{\utenv}}, \lneg{\tenv}}
   }
 }
 {
   \jull{\why{\lneg{\utenv}}, \lneg{\tenv}, \bott}
   }
  \]

\item $\rulmIBott$. The derivation is:
  \[
        \indrule{\rulmIBott}{
    \djum{\utenv}{\tenv}{\tm}{\typtwo}
    \HS
    \djum{\utenv}{\tenv'}{\tmtwo}{\lneg{\typtwo}}
  }{
    \djum{\utenv}{\tenv,\tenv'}{\ibott{\tm}{\tmtwo}}{\bott}
  }
\]
 We take:
\[
  \indruledbl{\rullC}{
     \indrule{\rullBott}{
      \indrule{\rullCut}{
        \jull{\why{\lneg{\utenv}},\lneg{\tenv},\typtwo}
        \HS
        \jull{\why{\lneg{\utenv}},\lneg{\tenvtwo},\lneg{\typtwo}}
      }{
        \jull{\why{\lneg{\utenv}},\why{\lneg{\utenv}}, \lneg{\tenv},\lneg{\tenvtwo}}
      }
    }
    {
       \jull{\why{\lneg{\utenv}},\why{\lneg{\utenv}}, \lneg{\tenv},\lneg{\tenvtwo},\bott}
      }
    }
    {
        \jull{\why{\lneg{\utenv}}, \lneg{\tenv},\lneg{\tenvtwo},\bott}
    }    
  \]

\item $\rulmIOne$. The derivation is:
  \[
      \indrule{\rulmIOne}{
    \emptyPremise
  }{
    \djum{\utenv}{\emptytenv}{\ione}{\one}
  }
\]
 We take:
\[
  \indruledbl{\rullW}
  {
    \indrule{\rullOne}
    { \emptyPremise}
    {\jull{\one}}
  }
  {
    \jull{\why{\lneg{\utenv}}, \one}
    }
  \]

\item $\rulmEOne$. The derivation is:
  \[
        \indrule{\rulmEOne}{
    \djum{\utenv}{\tenv}{\tm}{\one}
    \HS
    \djum{\utenv}{\tenvtwo}{\tmtwo}{\typ}
  }{
    \djum{\utenv}{\tenv,\tenvtwo}{\tmtwo\eone{\tm}}{\typ}
  }
\]
 We take:
\[
  \indruledbl{\rullC}
  {
  \indrule{\rullCut}
  {
    \jull{\why{\lneg{\utenv}}, \lneg{\tenv}, \one}
    \HS
    \indrule{\rullBott}
    {\jull{\why{\lneg{\utenv}}, \lneg{\tenvtwo},\typ}}
    {\jull{\why{\lneg{\utenv}}, \lneg{\tenvtwo}, \typ,\bott}}
  }
  {
    \jull{\why{\lneg{\utenv}}, \lneg{\tenv}, \why{\lneg{\utenv}}, \lneg{\tenvtwo}, \typ}
  }
}
{
    \jull{\why{\lneg{\utenv}}, \lneg{\tenv}, \lneg{\tenvtwo}, \typ}
  }
\]

  \end{enumerate}
% \end{proof}

% \CompletenessOfCalcMELL*

% \begin{proof}\label{completeness_of_calc_mell:proof}
Next we address completeness.  By induction on the size of the cut-free derivation of $\jull{\tenv_0}$. 
  \begin{enumerate}
  \item $\rullAx$: The derivation ends in
    \[
      \indrule{\rullAx}{
    \emptyPremise
  }{
    \jull{\typtwo,\lneg{\typtwo}}
  }
\]
There are two cases depending on whether $A=\typtwo$ or $\typ=\lneg{\typtwo}$.
\begin{enumerate}

\item  $A=\typtwo$.  Then $\tenv=\lneg{\typtwo}$ and
  \[
      \indrule{\rulmAxU}{
    \emptyPremise
  }{
    \djum{\emptyutenv}{\lvar:\typtwo}{\lvar}{\typtwo}
  }
    \]

  \item $\typ=\lneg{\typtwo}$. Then $\tenv=\typtwo$ and

      \[
      \indrule{\rulmAxU}{
    \emptyPremise
  }{
    \djum{\emptyutenv}{\lvar:\lneg{\typtwo}}{\lvar}{\lneg{\typtwo}}
  }
    \]
\end{enumerate}

\item $\rullTensor$: The derivation ends in
  \[
  \indrule{\rullTensor}{
    \jull{\tenv_1,\typtwo_1}
    \HS
    \jull{\tenv_2,\typtwo_2}
  }{
    \jull{\tenv_1,\tenv_2,\typtwo_1\tensor\typtwo_2}
  }
\]

There are three cases depending on whether $A=\typtwo_1\tensor\typtwo_2$ or $\typ\in \tenv_1$ or $\typ\in\tenv_2$.
\begin{enumerate}

\item  $A=\typtwo_1\tensor\typtwo_2$.  By the \ih twice, $\djum{\emptyutenv}{\lneg{\tenv_1}}{\tmtwo_1}{\typtwo_1}$ and  $\djum{\emptyutenv}{\lneg{\tenv_2}}{\tmtwo_2}{\typtwo_2}$. We conclude from the following derivation
  \[
      \indrule{\rulmITensor}{
   \indih{\djum{\emptyutenv}{\lneg{\tenv_1}}{\tmtwo_1}{\typtwo_1}}
    \HS
    \indih{\djum{\emptyutenv}{\lneg{\tenv_2}}{\tmtwo_2}{\typtwo_2}}
  }{
    \djum{\emptyutenv}{\lneg{\tenv_1},\lneg{\tenv_2}}{\pair{\tmtwo_1}{\tmtwo_2}}{\typtwo_1\tensor\typtwo_2}
  }
  \]  

  Thus it suffices to take $\tm\eqdef \pair{\tmtwo_1}{\tmtwo_2}$.
  
\item $\typ\in \tenv_1$. Then $\tenv_1=\tenv_3,\typ$.
    Hence by \ih there exist terms $\tmtwo,\tmthree$ such that:
      \[
        \indrule{\rulmSub}{
          \indih{
            \djum{\emptyutenv}{\lneg{\tenv_3},\lvar:\lneg{\typtwo_1}}{\tmtwo}{\typ}
          }
          \indrule{\rulmEParTwo}{
            \indrule{\rulmAx}{
              \emptyPremise
            }{
              \djum{\emptyutenv}{
                \lvartwo:\lneg{\typtwo_1}\parr\lneg{\typtwo_2}
              }{
                \lvartwo
              }{
                \lneg{\typtwo_1}\parr\lneg{\typtwo_2}
              }
            }
            \indih{
              \djum{\emptyutenv}{\lneg{\tenv_2}}{\tmthree}{\typtwo_2}
            }
          }{
            \djum{\emptyutenv}{
              \lvartwo:\lneg{\typtwo_1}\parr\lneg{\typtwo_2},
              \lneg{\tenv_2}
            }{
              \invap{\lvartwo}{\tmthree}
            }{
              \lneg{\typtwo_1}
            }
          }
        }{
          \djum{\emptyutenv}{
            \lneg{\tenv_3},\lneg{\tenv_2},\lvartwo:\lneg{\typtwo_1}\parr\lneg{\typtwo_2}
          }{
            \tmtwo\sub{\lvar}{\invap{\lvartwo}{\tmthree}}
          }{\typ}
        }
      \]
      so it suffices to take $\tm := \tmtwo\sub{\lvar}{\invap{\lvartwo}{\tmthree}}$.

    \item $\typ\in \tenv_2$.  Then $\tenv_2=\tenv_3,\typ$.  Hence by \ih there exist terms $\tmtwo,\tmthree$ such that:

      \[
          \indrule{\rulmSub}{
             \indrule{\rulmEParOne}{
    \djum{\emptyutenv}{\lvar: \lneg{\typtwo_1}\parr\lneg{\typtwo_2}}{\lvar}{\lneg{\typtwo_1}\parr\lneg{\typtwo_2}}
    \HS
    \indih{\djum{\emptyutenv}{\lneg{\tenv_1}}{\tmthree}{\typtwo_1}}
  }{
    \djum{\emptyutenv}{\lvar: \lneg{\typtwo_1}\parr\lneg{\typtwo_2}, \lneg{\tenv_1}}{\ap{\lvar}{\tmthree}}{\lneg{\typtwo_2}}
  }
    \HS
    \djum{\emptyutenv}{\lneg{\tenv_3},\lvartwo:\lneg{\typtwo_2}}{\tmtwo}{\typ}
  }{
    \djum{\emptyutenv}{\lneg{\tenv_1},\lneg{\tenv_3}, \lvar: \lneg{\typtwo_1}\parr\lneg{\typtwo_2}}{\tmtwo\sub{\lvartwo}{\ap{\lvar}{\tmthree}}}{\typ}
  }
  \]
\end{enumerate}

\item $\rullPar$: The derivation ends in

  \[
      \indrule{\rullPar}{
    \jull{\tenv_1,\typtwo_1,\typtwo_2}
  }{
    \jull{\tenv_1,\typtwo_1\parr\typtwo_2}
  }
\]

There are two cases depending on whether $\typ=\typtwo_1\parr\typtwo_2$ or $\typ \in \tenv_1$.
\begin{enumerate}

\item $\typ=\typtwo_1\parr\typtwo_2$. By the \ih, $\djum{\emptyutenv}{\seq{a}:\lneg{\tenv_1},\lvartwo:\typtwo_1}{\tmtwo}{\typtwo_2}$. We conclude from the derivation:
  \[
      \indrule{\rulmIPar}{
        \indrule{}{
          \indih{\djum{\emptyutenv}{\seq{a}:\lneg{\tenv_1},\lvartwo: \lneg{\typtwo_1}}{\tmtwo}{\typtwo_2}}
          \HS
          \djum{\emptyutenv}{\lvarthree:\lneg{\typtwo_2}}{\lvarthree}{\lneg{\typtwo_2}}
        }
        {
          \djum{\emptyutenv}{\seq{a}:\lneg{\tenv_1},\lvartwo: \lneg{\typtwo_1}, \lvarthree:\lneg{\typtwo_2}}{\ibott{\tmtwo}{\lvarthree}}{\bott}
        }
  }{
   \djum{\emptyutenv}{\seq{a}:\lneg{\tenv_1}}{\ipar{\lvartwo}{\lvarthree}{\ibott{\tmtwo}{\lvarthree}}}{\typtwo_1\parr\typtwo_2}
  }
\]

Thus it suffices to take $\tm\eqdef \ipar{\lvartwo}{\lvarthree}{\ibott{\tmtwo}{\lvarthree}}$.

\item $\typ \in \tenv_1$.  Thus $\tenv_1=\tenv_2,\typ$.  By the \ih $\djum{\emptyutenv}{\seq{a}:\lneg{\tenv_2},\lvartwo:\lneg{\typtwo_1},\lvarthree:\lneg{\typtwo_2}}{\tmtwo}{\typ}$. Then we construct the following derivation and conclude:
  \[
      \indrule{\rulmETensor}{
    \djum{\emptyutenv}{\lvarfour: \lneg{\typtwo_1}\tensor\lneg{\typtwo_2}}{\lvarfour}{\lneg{\typtwo_1}\tensor\lneg{\typtwo_2}}
    \HS
    \indih{\djum{\emptyutenv}{\seq{a}:\lneg{\tenv_2},\lvartwo:\lneg{\typtwo_1},\lvarthree:\lneg{\typtwo_2}}{\tmtwo}{\typ}}
  }{
    \djum{\emptyutenv}{\lvarfour: \lneg{\typtwo_1}\tensor\lneg{\typtwo_2}, \seq{a}:\lneg{\tenv_2}}{\casepair{\lvarfour}{\lvartwo}{\lvarthree}{\tmtwo}}{\typ}
  }
\]

Thus it suffices to take $\tm\eqdef \casepair{\lvarfour}{\lvartwo}{\lvarthree}{\tmtwo}$.
\end{enumerate}

 \item $\rullP$: The derivation ends in
\[
  \indrule{\rullP}{
    \jull{\why{\tenv_1},\typtwo}
  }{
    \jull{\why{\tenv_1},\ofc{\typtwo}}
  }
\]

There are two cases depending on whether $\typ=\ofc{\typtwo}$ or $\typ \in \why{\tenv_1}$.

\begin{enumerate}

\item $\typ=\ofc{\typtwo}$. By the \ih  $\djum{\emptyutenv}{\seq{\lvar}:\ofc{\lneg{\tenv_1}}}{\tmtwo}{\typtwo}$. By \rlem{unrestricted_eq_bang_linear}(\ref{from_lin}), we have $\djum{\seq{\uvar}:\lneg{\tenv_1}}{\emptytenv}{\tmtwo\sub{\seq{\lvar}}{\seq{\ofc{\uvar}}}}{\typtwo}$. We thus derive:
  \[
  \indrule{\rulmIOfc'}
  {\djum{\seq{\uvar}:\lneg{\tenv_1}}{\emptytenv}{\tmtwo\sub{\seq{\lvar}}{\seq{\ofc{\uvar}}}}{\typtwo}}
  {\djum{\seq{\uvar}:\lneg{\tenv_1}}{\emptytenv}{\ofc{(\tmtwo\sub{\seq{\lvar}}{\seq{\ofc{\uvar}}})}}{\ofc{\typtwo}}}
\]

Finally, from \rlem{unrestricted_eq_bang_linear}(\ref{to_lin}), we obtain $\djum{\emptyutenv}{\seq{\lvartwo}:\ofc{\lneg{\tenv_1}}}{\ofc{(\tmtwo\sub{\seq{\lvar}}{\seq{\ofc{\uvar}}})}\eofc{\seq{\uvar}}{\seq{\lvartwo}}}{\ofc{\typtwo}}$.  Thus it suffices to take $\tm\eqdef \ofc{(\tmtwo\sub{\seq{\lvar}}{\seq{\ofc{\uvar}}})}\eofc{\seq{\uvar}}{\seq{\lvartwo}}$.

\item $\typ \in \why{\tenv_1}$. Thus $\why{\tenv_1}=\why{\tenv_2},\why{\typ_1}$ and $\typ=\why{\typ_1}$.  By the \ih  $\djum{\emptyutenv}{\seq{\lvar}:\ofc{\lneg{\tenv_1}},\lvartwo:\lneg{\typtwo}}{\tmtwo}{\why{\typ_1}}$. By \rlem{unrestricted_eq_bang_linear}(\ref{from_lin}), we have $\djum{\seq{\uvar}:\lneg{\tenv_1}}{\lvartwo:\lneg{\typtwo}}{\tmtwo\sub{\seq{\lvar}}{\seq{\ofc{\uvar}}}}{\why{\typ_1}}$.

    By Weakening we have $\djum{\uvartwo:\lneg{\typ_1},\seq{\uvar}:\lneg{\tenv_1}}{\lvartwo:\lneg{\typtwo}}{\tmtwo\sub{\seq{\lvar}}{\seq{\ofc{\uvar}}}}{\why{\typ_1}}$. We reason as follows
    \[
\indrule{\rulmIWhy}{
      \indrule{\rulmEWhy}{
          \djum{\uvartwo:\lneg{\typ_1},\seq{\uvar}:\lneg{\tenv_1}}{\lvarthree:\why{\lneg{\typtwo}}}{\lvarthree}{\why{\lneg{\typtwo}}}
          \quad
          \indrule{}{
            \begin{array}{l}
              \djum{\uvartwo:\lneg{\typ_1},\seq{\uvar}:\lneg{\tenv_1}}{\lvartwo:\lneg{\typtwo}}{\tmtwo\sub{\seq{\lvar}}{\seq{\ofc{\uvar}}}}{\why{\typ_1}}
              \\
              \djum{\uvartwo:\lneg{\typ_1},\seq{\uvar}:\lneg{\tenv_1}}{\emptytenv}{\iofctwo{\lvarfour}{(\ibott{\lvarfour}{\uvartwo})}}{\ofc{\lneg{\typ_1}}}
             \end{array}
        }
        {
            \djum{\uvartwo:\lneg{\typ_1},\seq{\uvar}:\lneg{\tenv_1}}{\lvartwo:\lneg{\typtwo}}{\ibott{\tmtwo\sub{\seq{\lvar}}{\seq{\ofc{\uvar}}}}{\iofctwo{\lvarfour}{ (\ibott{\lvarfour}{\uvartwo})}}}{\bott}
        }
  }{
    \djum{\uvartwo:\lneg{\typ_1}, \seq{\uvar}:\lneg{\tenv_1}}{\lvarthree:\why{\lneg{\typtwo}}}{\ewhy{(\ibott{\tmtwo\sub{\seq{\lvar}}{\seq{\ofc{\uvar}}}}{\iofctwo{\lvarfour}{ (\ibott{\lvarfour}{\uvartwo})}})}{\lvartwo}{\lvarthree}}{\bott}
  }
}
{
  \djum{\seq{\uvar}:\lneg{\tenv_1}}{\lvarthree:\why{\lneg{\typtwo}}}{\iwhy{\uvartwo}{\ewhy{(\ibott{\tmtwo\sub{\seq{\lvar}}{\seq{\ofc{\uvar}}}}{\iofctwo{\lvarfour}{ (\ibott{\lvarfour}{\uvartwo})}})}{\lvartwo}{\lvarthree}}}{\why{\typ_1}}
  }
  \]

  Finally, from $\djum{\seq{\uvar}:\lneg{\tenv_1}}{\lvarthree:\why{\lneg{\typtwo}}}{\iwhy{\uvartwo}{\ewhy{(\ibott{\tmtwo\sub{\seq{\lvar}}{\seq{\ofc{\uvar}}}}{\iofctwo{\lvarfour}{ (\ibott{\lvarfour}{\uvartwo})}})}{\lvartwo}{\lvarthree}}}{\why{\typ_1}}$ and \rlem{unrestricted_eq_bang_linear}(\ref{from_lin}), we have $\djum{\emptyutenv}{\seq{\lvarfour}:\ofc{\lneg{\tenv_1}}, \lvarthree:\why{\lneg{\typtwo}}}{\iwhy{\uvartwo}{\ewhy{(\ibott{\tmtwo\sub{\seq{\lvar}}{\seq{\ofc{\uvar}}}}{\iofctwo{\lvarfour}{ (\ibott{\lvarfour}{\uvartwo})}})}{\lvartwo}{\lvarthree}}\eofc{\seq{\uvar}}{\seq{\lvarfour}}}{\why{\typ_1}}$. Thus it suffices to take $\tm\eqdef \iwhy{\uvartwo}{\ewhy{(\ibott{\tmtwo\sub{\seq{\lvar}}{\seq{\ofc{\uvar}}}}{\iofctwo{\lvarfour}{ (\ibott{\lvarfour}{\uvartwo})}})}{\lvartwo}{\lvarthree}}\eofc{\seq{\uvar}}{\seq{\lvarfour}}$.
\end{enumerate}

\item $\rullW$: The derivation ends in
\[
  \indrule{\rullW}{
    \jull{\tenv_1}
  }{
    \jull{\tenv_1,\why{\typtwo}}
  }
\]
There are two cases depending on whether $\typ=\why{\typtwo}$ or $\typ \in \tenv_1$.
\begin{enumerate}

\item  $\typ=\why{\typtwo}$. Note that $\tenv_1$ cannot be empty since $\MELL$ is consistent. Let $\tenv_1=\tenv_2,\typthree$. By the \ih $\djum{\emptyutenv}{\seq{\lvar}:\lneg{\tenv_2}}{\tmtwo}{\typthree}$. By weakening (\rlem{weakening_and_contraction}) $\djum{\uvar:\lneg{\typtwo}}{\seq{\lvar}:\lneg{\tenv_2}}{\tmtwo}{\typthree}$. Then we can construct the derivation below and conclude:
  \[
\indrule{\rulmIWhy}{
  \indrule{\rulmIBott}{
    \djum{\uvar:\lneg{\typtwo}}{\seq{\lvar}:\lneg{\tenv_2}}{\tmtwo}{\typthree}
    \HS
    \djum{\uvar:\lneg{\typtwo}}{\lvarthree:\lneg{\typthree}}{\lvarthree}{\lneg{\typthree}}
  }{
    \djum{\uvar:\lneg{\typtwo}}{\seq{\lvar}:\lneg{\tenv_2}, \lvarthree:\lneg{\typthree}}{\ibott{\tmtwo}{\lvarthree}}{\bott}
  }
}
{
      \djum{\emptyutenv}{\seq{\lvar}:\lneg{\tenv_2}, \lvarthree:\lneg{\typthree}}{\iwhy{\uvar}{\ibott{\tmtwo}{\lvarthree}}}{\why{\typtwo}}
  }
\]
  Thus it suffices to take $\tm\eqdef \iwhy{\uvar}{\ibott{\tmtwo}{\lvarthree}}$.

\item $\typ \in \tenv_1$. Then $\tenv_1=\tenv_2,\typ$.  By the \ih $\djum{\emptyutenv}{\seq{\lvar}:\lneg{\tenv_2}}{\tm}{\typ}$. By  weakening \rlem{weakening_and_contraction} $\djum{\uvar:\lneg{\typtwo}}{\seq{\lvar}:\lneg{\tenv_2}}{\tm}{\typ}$. By \rlem{unrestricted_eq_bang_linear}(\ref{to_lin}), $\djum{\emptyutenv}{\lvartwo: \ofc{\lneg{\typtwo}},\seq{\lvar}:\lneg{\tenv_2}}{\tm\eofc{\uvar}{\lvartwo}}{\typ}$. We thus conclude.
\end{enumerate}

\item $\rullD$: The derivation ends in
  \[
  \indrule{\rullD}{
    \jull{\tenv_1,\typtwo}
  }{
    \jull{\tenv_1,\why{\typtwo}}
  }
\]
There are two cases depending on whether $\typ=\why{\typtwo}$ or $\typ \in \tenv_1$.
\begin{enumerate}

\item  $\typ=\why{\typtwo}$. By the \ih $\djum{\emptyutenv}{\seq{\lvar}:\lneg{\tenv_1}}{\tmtwo}{\typtwo}$. By weakening $\djum{\uvar:\lneg{\typtwo}}{\seq{\lvar}:\lneg{\tenv_1}}{\tmtwo}{\typtwo} $. We conclude as follows:
  \[
    \indrule{\rulmIWhy}
    {
  \indrule{\rulmIBott}{
    \djum{\uvar:\lneg{\typtwo}}{\seq{\lvar}:\lneg{\tenv_1}}{\tmtwo}{\typtwo}
    \HS
    \indrule{\rulmAxU}{
    }{
      \djum{\uvar:\lneg{\typtwo}}{\emptytenv}{\uvar}{\lneg{\typtwo}}
    }
  }{
    \djum{\uvar:\lneg{\typtwo}}{\seq{\lvar}:\lneg{\tenv_1}}{\ibott{\tmtwo}{\uvar}}{\bott}
  }
}
{
    \djum{\emptyutenv}{\seq{\lvar}:\lneg{\tenv_1}}{\iwhy{\uvar}{\ibott{\tmtwo}{\uvar}}}{\why{\typtwo}}
 }
  \]
  Thus it suffices to take $\tm \eqdef \iwhy{\uvar}{\ibott{\tmtwo}{\uvar}}$.
  
\item $\typ \in \tenv_1$. Then $\tenv_1=\tenv_2,\typ$.  By the \ih $\djum{\emptyutenv}{\seq{\lvar}:\lneg{\tenv_2},\lvartwo:\lneg{\typtwo}}{\tmtwo}{\typ}$. By weakening \rlem{weakening_and_contraction}, $\djum{\uvar: \lneg{\typtwo}}{\seq{\lvar}:\lneg{\tenv_2},\lvartwo:\lneg{\typtwo}}{\tmtwo}{\typ}$. Then we obtain $\djum{\uvar:\lneg{\typtwo}}{\seq{\lvar}:\lneg{\tenv_2}}{\tmtwo\sub{\lvartwo}{\uvar}}{\typ}$ as follows:
  \[
  \indrule{\rulmSub}{
    \djum{\uvar: \lneg{\typtwo}}{\emptytenv}{\uvar}{\lneg{\typtwo}}
    \HS
   \djum{\uvar: \lneg{\typtwo}}{\seq{\lvar}:\lneg{\tenv_2},\lvartwo:\lneg{\typtwo}}{\tmtwo}{\typ}
  }{
    \djum{\uvar: \lneg{\typtwo}}{\seq{\lvar}:\lneg{\tenv_2},}{\tmtwo\sub{\lvartwo}{\uvar}}{\typ}
  }
\]
  By \rlem{unrestricted_eq_bang_linear}(\ref{to_lin}),  $\djum{\emptyutenv}{\lvartwo:\ofc{\lneg{\typtwo}},\seq{\lvar}:\lneg{\tenv_2}}{\tmtwo\sub{\lvartwo}{\uvar}\eofc{\uvar}{\lvartwo}}{\typ}$.
We thus conclude, taking $\tm\eqdef \tmtwo\sub{\lvartwo}{\uvar}\eofc{\uvar}{\lvartwo}$.

\end{enumerate}

\item $\rullC$: The derivation ends in
  \[
  \indrule{\rullC}{
    \jull{\tenv_1,\why{\typtwo},\why{\typtwo}}
  }{
    \jull{\tenv_1,\why{\typtwo}}
  }
\]
There are two cases depending on whether $\typ=\why{\typtwo}$ or $\typ \in \tenv_1$.
\begin{enumerate}

\item  $\typ=\why{\typtwo}$. By the \ih $\djum{\emptyutenv}{\seq{\lvar}:\lneg{\tenv_1},\lvartwo:\ofc{\lneg{\typtwo}}}{\tmtwo}{\why{\typtwo}}$. By \rlem{unrestricted_eq_bang_linear}(\ref{from_lin}), $\djum{\uvar:\lneg{\typtwo}}{\seq{\lvar}:\lneg{\tenv_1}}{\tmtwo\sub{\lvar}{\ofc{\uvar}}}{\why{\typtwo}}$. We then construct the following derivation and conclude:
  \[
    \indrule{\rulmIWhy}
    {
    \indrule{\rulmIBott}{
      \djum{\uvar:\lneg{\typtwo}}{\seq{\lvar}:\lneg{\tenv_1}}{\tmtwo\sub{\lvar}{\ofc{\uvar}}}{\why{\typtwo}}
      \HS
      \djum{\uvar:\lneg{\typtwo}}{\emptytenv}{\ofc{\uvar}}{\ofc{\lneg{\typtwo}}}
    }{
      \djum{\uvar:\lneg{\typtwo}}{\seq{\lvar}:\lneg{\tenv_1}}{\ibott{(\tmtwo\sub{\lvar}{\ofc{\uvar}})}{\ofc{\uvar}}}{\bott}
    }
  }
  {
          \djum{\emptyutenv}{\seq{\lvar}:\lneg{\tenv_1}}{\iwhy{\uvar}{\ibott{(\tmtwo\sub{\lvar}{\ofc{\uvar}})}{\ofc{\uvar}}}}{\why{\typtwo}}

    }
  \]
  Thus it suffices to take $\tm \eqdef \iwhy{\uvar}{\ibott{(\tmtwo\sub{\lvar}{\ofc{\uvar}})}{\ofc{\uvar}}}$.
    
\item $\typ \in \tenv_1$. Then $\tenv_1=\tenv_2,\typ$.  We reason as follows

  \[ \begin{array}{rl}
       \djum{\emptyutenv}{\seq{\lvar}:\lneg{\tenv_2},\lvartwo_1:\ofc{\lneg{\typtwo}}, \lvartwo_2:\ofc{\lneg{\typtwo}}}{\tmtwo}{\typ} & \ih \\

       \djum{\uvar_1:\lneg{\typtwo},\uvar_2:\lneg{\typtwo}}{\seq{\lvar}:\lneg{\tenv_2}}{\tmtwo\sub{\seq{\lvartwo}}{\seq{\ofc{\uvar}}}}{\typ} &  \rlem{unrestricted_eq_bang_linear}(\ref{from_lin}) \\

       \djum{\uvar_1:\lneg{\typtwo}}{\seq{\lvar}:\lneg{\tenv_2}}{(\tmtwo\sub{\seq{\lvartwo}}{\seq{\ofc{\uvar}}})\sub{\uvar_2}{\uvar_1}}{\typ} & \rlem{weakening_and_contraction} \\
       \djum{\emptyutenv}{\seq{\lvar}:\lneg{\tenv_2}, \lvartwo:\ofc{\lneg{\typtwo}}}{(\tmtwo\sub{\seq{\lvartwo}}{\seq{\ofc{\uvar}}})\sub{\uvar_2}{\uvar_1}\eofc{\uvar_1}{\lvartwo}}{\typ} &  \rlem{unrestricted_eq_bang_linear}(\ref{to_lin})
     \end{array}
     \]

\end{enumerate}

\item $\rullOne$. The derivation ends in:
  \[
       \indrule{\rullOne}{
     \emptyPremise
  }{
     \jull{\one}
   }
 \]
We then take:
 \[
    \indrule{\rulmIOne}{
    \emptyPremise
  }{
    \djum{\emptyutenv}{\emptytenv}{\ione}{\one}
  }
\]
  
\item $\rullBott$. The derivation ends in:
    \[
      \indrule{\rullBott}{
     \jull{\tenv}
  }{
     \jull{\tenv,\bott}
   }
 \]
 We consider two cases depending on whether $\typ=\bott$ or $\typ\in\tenv$.

  \begin{enumerate}

    \item $\typ=\bott$. Since $\MELL$ is consistent, $\tenv$ cannot be empty. Thus $\tenv=\tenv_1,\typtwo$.
 \[
     \indrule{\rulmIBott}{
    \indih{\djum{\emptyutenv}{\lneg{\tenv_1}}{\tmtwo}{\typtwo}}
    \HS
    \djum{\emptyutenv}{\lvar:\lneg{\typtwo}}{\lvar}{\lneg{\typtwo}}
  }{
    \djum{\emptyutenv}{\lneg{\tenv_1},\lvar:\lneg{\typtwo}}{\ibott{\tmtwo}{\lvar}}{\bott}
  }
\]
Thus it suffices to take $\tm\eqdef \ibott{\tmtwo}{\lvar}$.

\item  $\typ\in\tenv$. Then $\tenv=\tenv_1,\typ$.
  \[
    \indrule{\rulmEOne}{
    \djum{\emptyutenv}{\lvar:\one}{\lvar}{\one}
    \HS
     \indih{\djum{\emptyutenv}{\lneg{\tenv_1}}{\tmtwo}{\typ}}
  }{
    \djum{\emptyutenv}{\lneg{\tenv_1},\lvar:\one}{\tmtwo\eone{\lvar}}{\typ}
  }
\]
Thus it suffices to take $\tm\eqdef \tmtwo\eone{\lvar}$.

\end{enumerate}

  \end{enumerate}
  
\end{proof}

\ContrasubstitutionLemma*

\begin{proof}\label{contrasubstitution_lemma:proof}
By induction on the size of the derivation of $\djum{\utenv}{\tenv_1,\lvar:\typ}{\tm}{\typtwo}$. 
We omit the cases already addressed in 
\ref{contrasubstitution_lemma_MLL:proof}  and detail only the new cases below.

\begin{enumerate}

\item Case $\rulmAx$. 
\[
    \indrule{\rulmAx}{
    \emptyPremise
  }{
    \djum{\utenv}{\lvar:\typ}{\lvar}{\typ}
  }
\]
Then $\typtwo=\typ$ and we conclude from the hypothesis.

\item Case $\rulmAxU$. 
%   \[
%   \indrule{\rulmAxU}{
%     \emptyPremise
%   }{
%     \djum{\utenv, \uvar:\typ}{\emptytenv}{\uvar}{\typ}
%   }
% \]
This case is not possible since $\lvar$ does not occur in $\uvar$.

% \item Case $\rulmITensor$. Then $\tm=\pair{\tm_1}{\tm_2}$ and $\typtwo=\typtwo_1\tensor\typtwo_2$ and there are two cases.

%   \begin{enumerate}
%     \item $\lvar:\typ$ is used to type $\tm_1$. \TODO{TODO}
% \[
%     \indrule{\rulmITensor}{
%     \djum{\utenv}{\tenv_1,\lvar:\typ}{\tm_1}{\typtwo_1}
%     \HS
%     \djum{\utenv}{\tenv_2}{\tm_2}{\typtwo_2}
%   }{
%     \djum{\utenv}{\tenv_1,\lvar:\typ,\tenv_2}{\pair{\tm_1}{\tm_2}}{\typtwo_1\tensor\typtwo_2}
%   }
% \]
% \item $\lvar:\typ$ is used to type $\tm_2$. \TODO{TODO}
%   \[
%     \indrule{\rulmITensor}{
%     \djum{\utenv}{\tenv_1}{\tm_1}{\typtwo_1}
%     \HS
%     \djum{\utenv}{\tenv_2, \lvar:\typ}{\tm_2}{\typtwo_2}
%   }{
%     \djum{\utenv}{\tenv_1,\tenv_2,\lvar:\typ}{\pair{\tm_1}{\tm_2}}{\typtwo_1\tensor\typtwo_2}
%   }
% \]
% \end{enumerate}

\item Case $\rulmETensor$. There are two cases.

  \begin{enumerate}

     \item  $\lvar:\typ$ is used to type $\tm_2$. 

  \[
      \indrule{\rulmETensor}{
    \djum{\utenv}{\tenv_{11}, \lvar:\typ}{\tm_2}{\typthree\tensor\typfour}
    \HS
    \djum{\utenv}{\tenv_{12},\lvartwo:\typthree,\lvarthree:\typfour}{\tm_1}{\typtwo}
  }{
    \djum{\utenv}{\tenv_{11},\lvar:\typ,\tenv_{12}}{\tm_1\epair{\lvartwo}{\lvarthree}{\tm_2}}{\typtwo}
  }
\]

\[
  \indrule{}{
        \djum{\utenv}{\tenv_{11}, \lvar:\typ}{\tm_2}{\typthree\tensor\typfour}
        \quad
       \indrule{\rulmIPar}{
         \indrule{\rulmIBott}{
        \djum{\utenv}{\tenv_{12}, \lvartwo:\typthree,\lvarthree:\typfour}{\tm_1}{\typtwo}
        \HS
            \djum{\utenv}{\tenv_2}{\tmtwo}{\lneg{\typtwo}}

      }{
        \djum{\utenv}{\tenv_{12}, \lvartwo:\typthree,\lvarthree:\typfour,\tenv_2}{\ibott{\tm_1}{\tmtwo}}{\bott}
      }
     }{
        \djum{\utenv}{\tenv_{12},\tenv_2}{\ipar{\lvartwo}{\lvarthree}{\ibott{\tm_1}{\tmtwo}}}{\lneg{\typthree}\parr\lneg{\typfour}}
      }
  }
  {
      \djum{\utenv}{\tenv_{11}, \tenv_{12},\tenv_2}{\tm_2\cos{\lvar}{\ipar{\lvartwo}{\lvarthree}{\ibott{\tm_1}{\tmtwo}}}}{\lneg{\typ}}
    }
  \]

\item  $\lvar:\typ$ is used to type $\tm_1$.

    \[
      \indrule{\rulmETensor}{
    \djum{\utenv}{\tenv_{11}}{\tm_2}{\typthree\tensor\typfour}
    \HS
    \djum{\utenv}{\tenv_{12}, \lvar:\typ,\lvartwo:\typthree,\lvarthree:\typfour}{\tm_1}{\typtwo}
  }{
    \djum{\utenv}{\tenv_{11},\tenv_{12},\lvar:\typ}{\tm_1\epair{\lvartwo}{\lvarthree}{\tm_2}}{\typtwo}
  }
\]

    \[
      \indrule{\rulmETensor}{
    \djum{\utenv}{\tenv_{11}}{\tm_2}{\typthree\tensor\typfour}
    \HS
      \indrule{}{
    \djum{\utenv}{\tenv_{12}, \lvar:\typ,\lvartwo:\typthree,\lvarthree:\typfour}{\tm_1}{\typtwo}
    \HS
      \djum{\utenv}{\tenv_2}{\tmtwo}{\lneg{\typtwo}}
  }{
    \djum{\utenv}{\tenv_{12}, \lvartwo:\typthree,\lvarthree:\typfour,\tenv_2}{\tm_1\cos{\lvar}{\tmtwo}}{\lneg{\typ}}
  }
}
{
  \djum{\utenv}{\tenv_{11},\tenv_{12},\lvar:\typ}{\tm_1\cos{\lvar}{\tmtwo}\epair{\lvartwo}{\lvarthree}{\tm_2}}{\lneg{\typ}}
  }
\]

\end{enumerate}

\item Case $\rulmIPar$

   \[       \indrule{\rulmIPar}{
  \djum{\utenv}{\tenv_1,\lvar:\typ,\lvartwo:\typtwo_1,\lvarthree:\typtwo_2}{\tm}{\bott}
  }{
    \djum{\utenv}{\tenv_1,\lvar:\typ}{\ipar{\lvartwo}{\lvarthree}{\tm}}{\lneg{\typtwo_1}\parr\lneg{\typtwo_2}}
  }
\]

\[
  \indrule{\rulmETensor}{
    \djum{\utenv}{\tenv_2}{\tmtwo}{\typtwo_1\tensor \typtwo_2}
    \HS
    \indrule{}{
           \djum{\utenv}{\tenv_1,\lvar:\typ,\lvartwo:\typtwo_1,\lvarthree:\typtwo_2}{\tm}{\bott}
           \HS
           \djum{\utenv}{\emptytenv}{\ione}{\one}
           }{
           \djum{\utenv}{\tenv_1,\lvartwo:\typtwo_1,\lvarthree:\typtwo_2}{\tm\cos{\lvar}{\ione}}{\lneg{\typ}}
           }
  }{
    \djum{\utenv}{\tenv_1,\tenv_2}{\tm\cos{\lvar}{\ione}\epair{\lvartwo}{\lvarthree}{\tmtwo}}{\lneg{\typ}}
  }
  \]

% \item Case $\rulmEParOne$. \TODO{TODO}
%   \[
%     \indrule{\rulmEParOne}{
%     \djum{\utenv}{\tenv_1}{\tm_1}{\typthree\parr\typtwo}
%     \HS
%     \djum{\utenv}{\tenv_2}{\tm_2}{\lneg{\typthree}}
%   }{
%     \djum{\utenv}{\tenv_1,\tenv_2}{\ap{\tm_1}{\tm_2}}{\typtwo}
%   }
% \]

% \item Case $\rulmEParTwo$. \TODO{TODO}
%  \[
%   \indrule{\rulmEParTwo}{
%     \djum{\utenv}{\tenv_1}{\tm_1}{\typtwo\parr\typthree}
%     \HS
%     \djum{\utenv}{\tenv_2}{\tm_2}{\lneg{\typthree}}
%   }{
%     \djum{\utenv}{\tenv_1,\tenv_2}{\invap{\tm_1}{\tm_2}}{\typtwo}
%   }
% \]

  \item Case $\rulmIOfc$. This case is not possible since $\tenv_1,\lvar:\typ$ is not the empty context.

  \item Case $\rulmEOfc$. Then $\tm=\tm_1\eofc{\uvar}{\tm_2}$ for some $\tm_1,\tm_2$ and $\uvar$. There are two cases depending on whether $\lvar:\typ$ is used to type $\tm_1$ or $\tm_2$.
    \begin{enumerate}
      \item  $\lvar:\typ$ is used to type $\tm_1$.  Then $\tenv_1=\tenv_{11},\tenv_{12}$ and the derivation ends in

    \[
  \indrule{\rulmEOfc}
  {\djum{\utenv}{\tenv_{11}}{\tm_2}{\ofc{\typthree}}
    \HS
    \djum{\utenv,\uvar:\typthree}{\tenv_{12},\lvar:\typ}{\tm_1}{\typtwo}
    }
    {\djum{\utenv}{\tenv_{11},\tenv_{12},\lvar:\typ}{\tm_1\eofc{\uvar}{\tm_2}}{\typtwo}}
  \]

  By weakening (\rlem{weakening_and_contraction}), $\djum{\utenv, \uvar:\typthree}{\tenv_2}{\tmtwo}{\lneg{\typtwo}}$. Thus we use the \ih to construct the derivation:
  \[
      \indrule{\rulmEOfc}
      {
        \djum{\utenv}{\tenv_{11}}{\tm_2}{\ofc{\typthree}}
        \HS
  \indrule{\rulmCos}{
    \djum{\utenv,\uvar:\typthree}{\tenv_{12},\lvar:\typ}{\tm_1}{\typtwo}
    \HS
    \djum{\utenv, \uvar:\typthree}{\tenv_1}{\tmtwo}{\lneg{\typtwo}}
  }{
    \djum{\utenv, \uvar:\typthree}{\tenv_{12},\tenv_2}{\tm_1\cos{\lvar}{\tmtwo}}{\lneg{\typ}}
  }
}
  {\djum{\utenv}{\tenv_{11},\tenv_{12},\tenv_2}{\tm_1\cos{\lvar}{\tmtwo}\eofc{\uvar}{\tm_2}}{\lneg{\typ}}}
\]
\item  $\lvar:\typ$ is used to type $\tm_2$.

  \[
  \indrule{\rulmEOfc}
  {\djum{\utenv}{\tenv_{11},\lvar:\typ}{\tm_2}{\ofc{\typthree}}
    \HS
    \djum{\utenv,\uvar:\typthree}{\tenv_{12}}{\tm_1}{\typtwo}
    }
    {\djum{\utenv}{\tenv_{11}, \tenv_{12},\lvar:\typ}{\tm_1\eofc{\uvar}{\tm_2}}{\typtwo}}
  \]

By weakening (\rlem{weakening_and_contraction}), $\djum{\utenv, \uvar:\typthree}{\tenv_2}{\tmtwo}{\lneg{\typtwo}}$. Thus we use the \ih to construct the derivation:
\[
    \indrule{\rulmCos}{
    \djum{\utenv}{\tenv_{11},\lvar:\typ}{\tm_2}{\ofc{\typthree}}
    \HS  
  \indrule{\rulmIWhy}{
       \indrule{\rulmIBott}{
         \djum{\utenv,\uvar:\typthree}{\tenv_{12}}{\tm_1}{\typtwo}
         \HS
         \djum{\utenv, \uvar:\typthree}{\tenv_2}{\tmtwo}{\lneg{\typtwo}}
       }{
         \djum{\utenv, \uvar:\typthree}{\tenv_{12},\tenv_2}{\ibott{\tm_1}{\tmtwo}}{\bott}
       }
}
{
         \djum{\utenv}{\tenv_{12},\tenv_2}{\iwhy{\uvar}{\ibott{\tm_1}{\tmtwo}}}{\why{\lneg{\typthree}}}
  }
}{
      \djum{\utenv}{\tenv_{11},\tenv_{12},\tenv_2}{\tm_2\cos{\lvar}{\iwhy{\uvar}{\ibott{\tm_1}{\tmtwo}}}}{\lneg{\typ}}
  }
\]

\end{enumerate}

\item Case $\rulmIWhy$. Then $\typtwo= \why{\typtwo_1}$, for some type $\typtwo_1$,  and $\tm=\iwhy{\uvar}{\tm_1}$, for some $\tm_1$ and $\uvar$.
  
   \
  \[
  \indrule{\rulmIWhy}{
    \djum{\utenv,\uvar:\lneg{\typtwo_1}}{\tenv_{1},\lvar:\typ}{\tm_1}{\bott}
  }{
    \djum{\utenv}{\tenv_{1},\lvar:\typ}{\iwhy{\uvar}{\tm_1}}{\why{\typtwo_1}}
  }
\] 
We use the \ih and construct the derivation (note that $\lneg{\typtwo} = \ofc{\lneg{\typtwo_1}}$):
\[
    \indrule{\rulmEOfc}
  {\djum{\utenv}{\tenv_2}{\tmtwo}{\lneg{\typtwo}}
    \HS
       \indrule{\rulmIBott}{
        \djum{\utenv,\uvar:\lneg{\typtwo_1}}{\tenv_{1},\lvar:\typ}{\tm_1}{\bott}
         \HS
         \djum{\utenv,\uvar:\lneg{\typtwo_1}}{\emptytenv}{\ione}{\one}
       }{
         \djum{\utenv, \uvar:\lneg{\typtwo_1}}{\tenv_{1}}{\tm_1\cos{\lvar}{\ione}}{\lneg{\typ}}
       }
     }
    {\djum{\utenv}{\tenv_{1},\tenv_2}{\tm_1\cos{\lvar}{\ione}\eofc{\uvar}{\tmtwo}}{\lneg{\typ}}}
\]

\item Case $\rulmEWhy$. Then $\tm=\ewhy{\tm_1}{\lvar}{\tm_2}$ and $\typtwo=\bott$ and thus $\lneg{\typtwo}=\one$.

  \[
     \indrule{\rulmEWhy}{
    \djum{\utenv}{\tenv_1, \lvar:\typ}{\tm_3}{\why{\typthree}}
    \HS
  \djum{\utenv}{\lvarthree:\typthree}{\tm_2}{\bott}
  }{
    \djum{\utenv}{\tenv_1,\lvar:\typ}{\ewhy{\tm_2}{\lvarthree}{\tm_3}}{\bott}
  }
    \]

    \[
      \indrule{\rulmEOne}
      {\djum{\utenv}{\tenv_2}{\tmtwo}{\lneg{\typtwo}}
        \HS
        \indrule{\rulmCos}
        {
          \djum{\utenv}{\tenv_1, \lvar:\typ}{\tm_3}{\why{\typthree}}
          \HS
          \indrule{\rulmIOfc}{
            \djum{\utenv}{\lvarthree:\typthree}{\tm_2}{\bott}
          }        
        {
          \djum{\utenv}{\emptytenv}{\iofctwo{\lvarthree}{\tm_2}}{\ofc{\lneg{\typthree}}}
          }
        }
        {
          \djum{\utenv}{\tenv_1}{\tm_3\cos{\lvar}{\iofctwo{\lvarthree}{\tm_2}}}{\lneg{\typ}}
        }
      } 
      {
        \djum{\utenv}{\tenv_1,\tenv_2}{\tm_3\cos{\lvar}{\iofctwo{\lvarthree}{\tm_2}}\eone{\tmtwo}}{\lneg{\typ}}
      }
    \]

\item Case $\rulmIBott$. Then $\typtwo= \bott$, $\tm=\ibott{\tm_1}{\tm_2}$, and there are two cases depending on whether $\lvar\in\fv{\tm_1}$ or $\lvar\in\fv{\tm_2}$.

  \begin{enumerate}

    \item $\lvar\in\fv{\tm_1}$. Then the derivation ends in
  \[
 \indrule{\rulmIBott}{
    \djum{\utenv}{\tenv_{11},\lvar:\typ}{\tm_1}{\typthree}
    \HS
    \djum{\utenv}{\tenv_{12}}{\tm_2}{\lneg{\typthree}}
  }{
    \djum{\utenv}{\tenv_{11},\tenv_{12},\lvar:\typ}{\ibott{\tm_1}{\tm_2}}{\bott}
  }
\]

\[
  \indrule{}
  {
    \djum{\utenv}{\tenv_{2}}{\tmtwo}{\one}
    \HS
     \indrule{\rulmIBott}{
        \djum{\utenv}{\tenv_{11},\lvar:\typ}{\tm_1}{\typthree}
        \HS
        \djum{\utenv}{\tenv_{12}}{\tm_2}{\lneg{\typthree}}
      }{
        \djum{\utenv}{\tenv_{11},\tenv_{12}}{\tm_1\cos{\lvar}{\tm_2}}{\lneg{\typ}}
      }
}
{
   \djum{\utenv}{\tenv_{11},\tenv_{12},\tenv_2}{\tm_1\cos{\lvar}{\tm_2}\eone{\tmtwo}}{\lneg{\typ}}
  }
\]

    \item $\lvar\in\fv{\tm_2}$. Then the derivation ends in
  \[
 \indrule{\rulmIBott}{
    \djum{\utenv}{\tenv_{11}}{\tm_1}{\typthree}
    \HS
    \djum{\utenv}{\tenv_{12},\lvar:\typ}{\tm_2}{\lneg{\typthree}}
  }{
    \djum{\utenv}{\tenv_{11},\tenv_{12},\lvar:\typ}{\ibott{\tm_1}{\tm_2}}{\bott}
  }
\]

\[
  \indrule{}
  {
    \djum{\utenv}{\tenv_{2}}{\tmtwo}{\one}
    \HS
     \indrule{\rulmIBott}{
        \djum{\utenv}{\tenv_{12},\lvar:\typ}{\tm_2}{\typthree}
        \HS
        \djum{\utenv}{\tenv_{11}}{\tm_1}{\lneg{\typthree}}
      }{
        \djum{\utenv}{\tenv_{11},\tenv_{12}}{\tm_2\cos{\lvar}{\tm_1}}{\lneg{\typ}}
      }
}
{
   \djum{\utenv}{\tenv_{11},\tenv_{12},\tenv_2}{\tm_2\cos{\lvar}{\tm_1}\eone{\tmtwo}}{\lneg{\typ}}
  }
\]

\end{enumerate}

\item Case $\rulmIOne$. This case is not possible since then $\tm=\ione$ and $\fv{\ione}=\emptyset$.

  \[
    \indrule{\rulmIOne}{
    \emptyPremise
  }{
    \djum{\utenv}{\emptytenv}{\ione}{\one}
  }
\]

\item Case $\rulmEOne$. Then $\tm=\tm_1\eone{\tm_2}$, and there are two cases depending on whether $\lvar\in\fv{\tm_1}$ or $\lvar\in\fv{\tm_2}$.

  \begin{enumerate}
   \item $\lvar\in\fv{\tm_1}$. Then the derivation ends in

\[
       \indrule{\rulmEOne}{
    \djum{\utenv}{\tenv_{11}}{\tm_2}{\one}
    \HS
    \djum{\utenv}{\tenv_{12}, \lvar:\typ,}{\tm_1}{\typtwo}
  }{
    \djum{\utenv}{\tenv_{11},\tenv_{12}, \lvar:\typ,}{\tm_1\eone{\tm_2}}{\typtwo}
  }
\]

 \[
       \indrule{\rulmEOne}{
    \djum{\utenv}{\tenv_{11}}{\tm_2}{\one}
    \HS
    \indrule{}
    {
      \djum{\utenv}{\tenv_{12},\lvar:\typ}{\tm_1}{\typtwo}
      \HS
      \djum{\utenv}{\tenv_{2}}{\tmtwo}{\lneg{\typtwo}}
    }
    {
       \djum{\utenv}{\tenv_{12},\tenv_{2}}{\tm_1\cos{\lvar}{\tmtwo}}{\lneg{\typ}}
      }
  }{
    \djum{\utenv}{\tenv_{11},\tenv_{12},\tenv_2}{\tm_1\cos{\lvar}{\tmtwo}\eone{\tm_2}}{\lneg{\typ}}
  }
\]

    \item $\lvar\in\fv{\tm_2}$. Then the derivation ends in
\[
       \indrule{\rulmEOne}{
    \djum{\utenv}{\tenv_{11}, \lvar:\typ,}{\tm_2}{\one}
    \HS
    \djum{\utenv}{\tenv_{12}}{\tm_1}{\typtwo}
  }{
    \djum{\utenv}{\tenv_{11},\lvar:\typ,\tenv_{12}}{\tm_1\eone{\tm_2}}{\typtwo}
  }
\]

 \[
       \indrule{\rulmEOne}{
    \djum{\utenv}{\tenv_{11}, \lvar:\typ,}{\tm_2}{\one}
    \HS
    \indrule{}
    {
      \djum{\utenv}{\tenv_{12}}{\tm_1}{\typtwo}
      \HS
      \djum{\utenv}{\tenv_{2}}{\tmtwo}{\lneg{\typtwo}}
    }
    {
       \djum{\utenv}{\tenv_{12},\tenv_{2}}{\ibott{\tm_1}{\tmtwo}}{\bott}
      }
  }{
    \djum{\utenv}{\tenv_{11},\tenv_{12},\tenv_2}{\tm_2\cos{\lvar}{\ibott{\tm_1}{\tmtwo}}}{\lneg{\typ}}
  }
\]

\end{enumerate}

\end{enumerate}
  
\end{proof}

%   In the following result we assume that the variables in the patterns
% of eliminators are decorated with their types. For example,
% $\epair{\lvar^\typ}{\lvartwo^\typtwo}{\tm}$ rather than
% $\epair{\lvar}{\lvartwo}{\tm}$, so that $\flpv{\cdot}$ returns a typing
% environment. For example, $\flpv{\epair{\lvar^\typ}{\lvartwo^\typtwo}{\tm}}=\{\lvar:\typ,\lvartwo:\typtwo\}$.

% \begin{lemma}
% $\djum{\utenv}{\tenv}{\tm\cctx}{\typ} $ implies $\djum{\utenv\cup\fupv{\cctx}}{\tenv\mid_{\flv{\tm}}\cup\flpv{\cctx}}{\tm}{\typ}$
% \end{lemma}

% Pre-reduction in $\CalcMELL$ preserves types.

% \begin{lemma}[Subject Pre-reduction]
% \llem{subject_prereduction}
% $\djum{\utenv}{\tenv}{\tm}{\typ} $ and $\tm  \pretome \tmtwo$ implies $\djum{\utenv}{\tenv}{\tmtwo}{\typ}$.
% \end{lemma}

% \begin{proof}
%  \TODO{TODO}
  
% \end{proof}

%%% Local Variables:
%%% mode: latex
%%% TeX-master: "../main"
%%% End:

\section{Equivalence}
\lsec{app:calcMELL:equiv}
\subsection{Substitution and structural equivalence}

\begin{lemma}
\llem{equivalent_terms_have_same_free_variables}
  $\tm\cceq\tmtwo$ implies $\fv{\tm}=\fv{\tmtwo}$
\end{lemma}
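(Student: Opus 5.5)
Since $\cceq$ is defined as the reflexive, symmetric, transitive and contextual closure of the four axioms $\ruleCCEqCtx$, $\ruleCCEqBetaOne$, $\ruleCCEqEOneSymm$ and $\ruleCCEqCosAndSub$, I would prove the statement by induction on the derivation of $\tm\cceq\tmtwo$. Reflexivity, symmetry and transitivity are immediate, because equality of sets is itself reflexive, symmetric and transitive. For the contextual closure, suppose $\tm\cceq\tmtwo$ and consider $\of{\ctx}{\tm}$ and $\of{\ctx}{\tmtwo}$ for a one-hole term context $\ctx$. I would induct on $\ctx$, using that at each constructor the free variables are the union of those of the immediate subterms, minus the variables bound at that node. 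Since the bound variables of a node do not depend on the subterm filling the hole, equal sets for the hole give equal sets for the whole term. So the real content is checking the four axioms.

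For the axioms:
\begin{itemize}
\item $\ruleCCEqBetaOne$ holds because $\fv{\ione}=\emptyset$.
\item $\ruleCCEqEOneSymm$ holds because $\eone{\cdot}$ binds nothing, so both sides have free variables $\fv{\tm}\cup\fv{\tmtwo}$.
\item $\ruleCCEqCtx$ uses its side conditions. The condition $\fv{\patt}\cap\fv{\fctx}=\emptyset$ means the eliminator's pattern variables are not captured differently when moved. The condition that $\tmthree$ is free for $\fctx$ means $\tmthree$'s free variables are not captured by $\fctx$'s binder. A case analysis over the finitely many shapes of surface contexts then shows that both sides have free variables $(\fv{\fctx}\cup\fv{\tm}\setminus\mathrm{bv}(\fctx))\setminus\fv{\patt}\cup\fv{\tmthree}$, with the pattern variables removed either before or after the surface binder.
\end{itemize}

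The substantive case is $\ruleCCEqCosAndSub$. For it I first need an auxiliary lemma: if $\tm$ is linear and $\lvar\in\fv{\tm}$, then
\[\fv{\tm\cos{\lvar}{\tmtwo}}=(\fv{\tm}\setminus\set{\lvar})\cup\fv{\tmtwo},\]
and the analogous identity holds for linear substitution. I would prove the contra-substitution identity by induction on $\tm$, following the clauses of \rdef{contrasub:for:MLL} and Fig.~\ref{contrasubstitution:for:CalcMELL}. In each clause the argument passed down the recursion is built from $\tmtwo$ together with the sibling subterms, and any binders (for instance $\ipar{\lvartwo}{\lvarthree}{\cdot}$ or $\iwhy{\uvar}{\cdot}$) are re-placed so that they bind exactly the same variables as before.

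The delicate clauses are those where a binder moves. In the clause for $\ipar{\lvartwo}{\lvarthree}{\tm}$, the binder becomes an eliminator $\epair{\lvartwo}{\lvarthree}{\tmtwo}$ placed outside. In the clause for $\iwhy{\uvar}{\tm}$, it becomes $\eofc{\uvar}{\tmtwo}$. In the tensor-elimination and $\eofc{\uvar}{\cdot}$ clauses, the binder is pushed into the argument. In each of these one uses $\alpha$-renaming, so that $\lvartwo,\lvarthree,\uvar\notin\fv{\tmtwo}$, together with linearity, so that the bound variables occur only in the intended subterm. With these, the binder captures precisely the occurrences it captured originally.

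With the auxiliary lemma in hand, the axiom is checked directly. Linearity of typed terms gives $\lvar\in\fv{\tm}$, so both sides of $\ruleCCEqCosAndSub$ have free variables $(\fv{\tm}\setminus\set{\lvar})\cup\fv{\tmthree}$, since $\fv{\ione}=\emptyset$.

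I expect the main obstacle to be this auxiliary lemma about contra-substitution, specifically the bookkeeping of binders that change position in the $\ipar{}{}{}$, $\iwhy{}{}$ and $\epair{}{}{}$ clauses. The rest is routine.
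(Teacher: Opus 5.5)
Your proposal is correct and takes the same route as the paper, whose proof is simply induction on the derivation of $\tm\cceq\tmtwo$. The auxiliary identity $\fv{\tm\cos{\lvar}{\tmtwo}}=(\fv{\tm}\setminus\set{\lvar})\cup\fv{\tmtwo}$, which you isolate for the $\ruleCCEqCosAndSub$ case (choosing bound variables fresh via $\alpha$-renaming), is precisely the fact the paper leaves implicit there.
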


\begin{proof}
By induction on the derivation of  $\tm\cceq\tmtwo$.
\end{proof}

\begin{corollary}
$\cceq$ is consistent.
\end{corollary}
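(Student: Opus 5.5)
The corollary should follow almost immediately from \rlem{equivalent_terms_have_same_free_variables}. Here ``consistent'' means that $\cceq$ does not identify all terms, or more strongly that it does not equate terms we expect to be distinct. The plan is to exhibit two typed terms that cannot be related by $\cceq$.

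Take two distinct linear variables $\lvar$ and $\lvartwo$, both given the same type $\typ$, so that the judgements $\djum{\emptyutenv}{\lvar:\typ}{\lvar}{\typ}$ and $\djum{\emptyutenv}{\lvartwo:\typ}{\lvartwo}{\typ}$ hold by \rulmAx. Suppose for contradiction that $\lvar\cceq\lvartwo$. Then \rlem{equivalent_terms_have_same_free_variables} gives $\fv{\lvar}=\fv{\lvartwo}$, that is $\set{\lvar}=\set{\lvartwo}$. This contradicts $\lvar\neq\lvartwo$.

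Hence $\cceq$ is not the total relation on terms, even when restricted to terms of a fixed type. This is exactly the consistency claim made in the discussion preceding Thm.~\ref{thm:cceqalt_is_a_strong_bisimulation}.

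The only step with real content is the lemma being invoked. If it were not already available, I would prove it by induction on the derivation of $\tm\cceq\tmtwo$:
\begin{itemize}
\item Reflexivity, symmetry and transitivity preserve equality of free-variable sets trivially.
\item The contextual-closure step needs only the compositional definition of $\fv{\cdot}$.
\item Among the axioms, $\ruleCCEqCtx$ uses its side conditions ($\fv{\patt}\cap\fv{\fctx}=\emptyset$ and $\tmthree$ free for $\fctx$). $\ruleCCEqBetaOne$ uses $\fv{\ione}=\emptyset$. $\ruleCCEqEOneSymm$ holds because set union is commutative.
\item $\ruleCCEqCosAndSub$ is the main obstacle. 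It needs the auxiliary fact, proved by induction on $\tm$ following \rdef{cos}, that $\fv{\tm\cos{\lvar}{\tmtwo}}=(\fv{\tm}\setminus\set{\lvar})\cup\fv{\tmtwo}$ for linear $\tm$ with $\lvar\in\fv{\tm}$. The same equation holds for linear substitution.
\end{itemize}
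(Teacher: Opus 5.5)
Your proposal is correct and is exactly the paper's argument: the corollary is obtained from \rlem{equivalent_terms_have_same_free_variables} by noting that $\lvar\cceq\lvartwo$ cannot be derived for distinct linear variables $\lvar\neq\lvartwo$. Your sketch of that lemma, by induction on the derivation of $\tm\cceq\tmtwo$, also matches the paper's stated method. In that sketch, the axiom $\ruleCCEqCosAndSub$ is handled by the free-variable identity for contra-substitution, which holds once the binders that \rdef{cos} wraps around $\tmtwo$ are chosen fresh, as $\alpha$-renaming allows.
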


\begin{proof}
From~\rlem{equivalent_terms_have_same_free_variables}, we deduce that $\lvar\cceq\lvartwo$ is not derivable, for arbitrary linear variables $\lvar$ and $\lvartwo$ such that $\lvar\neq\lvartwo$.
\end{proof}

\begin{lemma}[Contrasubstitution is compatible with Equivalence (Right Case)]
\llem{cos_compatible_with_equivalence:right}
Let $\tm$ be a linear term and $\lvar\in\fv{\tm}$. Suppose $\tmtwo\cceq\tmthree$. Then
$\tm\cos{\lvar}{\tmtwo}\cceq\tm\cos{\lvar}{\tmthree}$.
\end{lemma}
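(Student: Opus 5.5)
We proceed by induction on the structure of the linear term $\tm$ (equivalently, on the number of unfolding steps of $\tm\cos{\lvar}{\cdot}$), keeping $\tmtwo$ and $\tmthree$ arbitrary. The key observation is that in every clause of the definition of contra-substitution (\rdef{contrasub:for:MLL} together with Fig.~\ref{contrasubstitution:for:CalcMELL}), the argument $\tmtwo$ occurs in the result only in one of two ways. Either it is passed unchanged to a recursive call on a strictly smaller subterm, or it is wrapped in a fixed surrounding term constructor and then passed to a recursive call. In the base case $\lvar\cos{\lvar}{\tmtwo}=\tmtwo$ we conclude directly from the hypothesis $\tmtwo\cceq\tmthree$.

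For the inductive cases we use that $\cceq$ is by definition closed under arbitrary contexts, not merely surface ones. This means that, for instance, $\tmtwo\cceq\tmthree$ implies the following:
\[
\pair{\tm_2}{\tmtwo}\cceq\pair{\tm_2}{\tmthree},
\qquad
\invap{\tmtwo}{\tm_2}\cceq\invap{\tmthree}{\tm_2},
\qquad
\ipar{\lvartwo}{\lvarthree}{(\ibott{\tm_1}{\tmtwo})}\cceq\ipar{\lvartwo}{\lvarthree}{(\ibott{\tm_1}{\tmthree})},
\qquad
\iwhy{\uvar}{\ibott{\tm_1}{\tmtwo}}\cceq\iwhy{\uvar}{\ibott{\tm_1}{\tmthree}}.
\]
In each such clause we apply the induction hypothesis to the smaller subterm (for example $\tm_1$ in $(\ap{\tm_1}{\tm_2})\cos{\lvar}{\tmtwo}=\tm_1\cos{\lvar}{\pair{\tm_2}{\tmtwo}}$), instantiated with the two equivalent wrapped arguments. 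The remaining clauses have one of the following forms:
\[
\tm_1\cos{\lvar}{\tmtwo}\pelim{\patt}{\tm_2},
\qquad
\tm\cos{\lvar}{\ione}\epair{\lvartwo}{\lvarthree}{\tmtwo},
\qquad
\tm\cos{\lvar}{\ione}\eofc{\uvar}{\tmtwo},
\qquad
\tm_2\cos{\lvar}{\iofctwo{\lvartwo}{\tm_1}}\eone{\tmtwo},
\qquad
\tm_1\cos{\lvar}{\tm_2}\eone{\tmtwo}.
\]
In these, $\tmtwo$ either appears inside a recursive call, which we handle by the induction hypothesis followed by contextual closure under the eliminator, or it sits directly in an eliminator position, which we handle by contextual closure alone. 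In every clause the result is a single context applied to a term, so we can chain the two closures by transitivity.

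The main point needing care is that the axioms of $\cceq$ are stated only for typed terms, with both sides typed identically. We must therefore check that wrapping is legitimate: if $\tmtwo\cceq\tmthree$ at some type, then the wrapped terms are equivalent at the corresponding type, and the recursive calls stay within typed terms. Here we rely on the Type Compatibility of Contra-substitution lemma, applied with $\tmtwo$ and with $\tmthree$. We also use that equivalent terms share free variables (\rlem{equivalent_terms_have_same_free_variables}), which ensures that linearity side conditions such as ``$\lvar\in\fv{\tm_i}$'' and freshness of bound variables are unaffected by replacing $\tmtwo$ with $\tmthree$. We may also need to choose bound variables by $\alpha$-renaming so that they avoid $\fv{\tmtwo}=\fv{\tmthree}$.

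We expect no genuine obstacle: the argument is a routine induction whose only subtleties are this typing bookkeeping and the observation that the definition never inspects the shape of $\tmtwo$.
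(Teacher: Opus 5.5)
Your proposal is correct and follows the same route as the paper, whose proof is simply a structural induction on $\tm$ with the argument generalized. Your clause-by-clause split covers the cases: either apply the induction hypothesis to a wrapped pair of equivalent arguments, or use contextual closure under an eliminator. The care you take with typing and with $\alpha$-renaming, using $\fv{\tmtwo}=\fv{\tmthree}$, is extra bookkeeping the paper leaves implicit, but it is sound and introduces no circularity.
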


%\ContrasubstitutionCompatibleWithEquivalenceRightCase*

\begin{proof}%\label{cos_compatible_with_equivalence:right:proof}
  By induction on $\tm$.

  \end{proof}

  We say that  $\tmthree$ is \emph{contra-free for $\lvar$ in $\tm$}, if the
  free variables of $\tmthree$ are not bound in
  $\tm\cos{\lvar}{\tmthree}$.  
  
% \begin{lemma}[{name=Eliminator flotation for
%     contrasubstitution~\proofnote{Proof on
%       pg.~\pageref{elim_cos_cceq:proof}}, restate=[name=Eliminator
%     flotation for contrasubstitution]EliminatorFlotation}]

  \begin{lemma}[Eliminator flotation]
  \llem{elim_cos_cceq}
  Let $\tm$ be a linear term and $\lvar\in\fv{\tm}$. Suppose $\fv{\patt}\cap\fv{\tm}=\emptyset$ and $\tmthree$ is contra-free for $\lvar$ in $\tm$. Then
  \[\tm\cos{\lvar}{\tmtwo}\pelim{\patt}{\tmthree} \cceq   \tm\cos{\lvar}{\tmtwo\pelim{\patt}{\tmthree}}
    \]
\end{lemma}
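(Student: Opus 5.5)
We argue by induction on the structure of the linear term $\tm$, following the clauses of \rdef{contrasub:for:MLL} and Fig.~\ref{contrasubstitution:for:CalcMELL}. At each step the eliminator $\pelim{\patt}{\tmthree}$ is pushed inward along the same path that contra-substitution follows. Each clause has the form $\tm\cos{\lvar}{\tmtwo} = \tm_i\cos{\lvar}{\tmtwo'}\,\cctx$, where $\cctx$ is either empty or a single positive eliminator, and $\tmtwo'$ is built from $\tmtwo$ by placing it in a surface position. Examples are $\invap{\tmtwo}{\tm_2}$, $\pair{\tm_2}{\tmtwo}$, $\ipar{\lvartwo}{\lvarthree}{(\ibott{\tm_1}{\tmtwo})}$, $\iwhy{\uvar}{\ibott{\tm_1}{\tmtwo}}$, $\ibott{\tm_1}{\tmtwo}$, or $\tmtwo$ itself.

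\textbf{Base case.} For $\tm=\lvar$ both sides are literally $\tmtwo\pelim{\patt}{\tmthree}$.

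\textbf{Recursive clauses without a trailing eliminator.} Take for instance $\pair{\tm_1}{\tm_2}$ with $\lvar\in\fv{\tm_1}$. Here $\tm\cos{\lvar}{\tmtwo}\pelim{\patt}{\tmthree} = \tm_1\cos{\lvar}{\invap{\tmtwo}{\tm_2}}\pelim{\patt}{\tmthree}$. By the induction hypothesis this is $\cceq\tm_1\cos{\lvar}{(\invap{\tmtwo}{\tm_2})\pelim{\patt}{\tmthree}}$. Next we apply $\ruleCCEqCtx$ with the surface context $\invap{\Box}{\tm_2}$ to obtain $(\invap{\tmtwo}{\tm_2})\pelim{\patt}{\tmthree}\cceq\invap{(\tmtwo\pelim{\patt}{\tmthree})}{\tm_2}$. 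Finally, \rlem{cos_compatible_with_equivalence:right} transports this equivalence inside the contra-substitution.

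The side conditions of $\ruleCCEqCtx$ are discharged from the hypotheses. The condition $\fv{\patt}\cap\fv{\tm}=\emptyset$ gives $\fv{\patt}\cap\fv{\tm_2}=\emptyset$. Contra-freeness of $\tmthree$ for $\lvar$ in $\tm$ ensures that $\tmthree$ is free for the surface context. It also ensures that $\tmthree$ stays contra-free in the recursive call, since the binders traversed are a subset of those of $\tm\cos{\lvar}{\tmthree}$.

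The clauses $\ap{}{}$, $\invap{}{}$, the second case of $\epair{}{}$ (via the surface context $\ipar{\lvartwo}{\lvarthree}{\Box}$ followed by $\ibott{\tm_1}{\Box}$), and the second case of $\eofc{}{}$ (via $\iwhy{\uvar}{\Box}$ followed by $\ibott{\tm_1}{\Box}$) are handled identically. Note that no excluded context, namely $\iofctwo{}{\Box}$ or the left argument of $\ewhy{}{}{}$, ever contains $\tmtwo$.

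\textbf{Clauses producing a trailing eliminator.} Some clauses produce an eliminator $\cctx$ after the recursive call. These are: the first cases of $\epair{}{}$, $\eofc{}{}$ and $\eone{}$; the $\ipar{}{}{}$ and $\iwhy{}{}$ clauses; the $\ewhy{}{}{}$ clause; and the $\ibott{}{}$ clauses. Two sub-cases arise.

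\begin{itemize}
\item When $\tmtwo$ sits inside the recursive call, as in the first cases of $\epair{}{}$, $\eofc{}{}$ and $\eone{}$, we first commute $\pelim{\patt}{\tmthree}$ past $\cctx$. Both are positive eliminators, so this is an instance of $\ruleCCEqCtx$ with the context $\Box\cctx$. We then apply the induction hypothesis and reattach $\cctx$ by contextuality of $\cceq$.
\item When $\tmtwo$ itself sits in the trailing eliminator, as in $\tm\cos{\lvar}{\ione}\epair{\lvartwo}{\lvarthree}{\tmtwo}$ or $\tm_2\cos{\lvar}{\iofctwo{}{\tm_1}}\eone{\tmtwo}$, no induction is needed. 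On the right we have $\epair{\lvartwo}{\lvarthree}{\tmtwo\pelim{\patt}{\tmthree}}$ and on the left $\epair{\lvartwo}{\lvarthree}{\tmtwo}\pelim{\patt}{\tmthree}$. These are related by $\ruleCCEqCtx$ applied to the surface context $\tm'\pelim{\patt'}{\Box}$, read right-to-left, followed by the commutation of the two eliminators described above.
\end{itemize}

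\textbf{Main obstacle.} We expect the hardest part to be the bookkeeping of side conditions. One must check that $\fv{\patt}$ avoids all free variables in the surface context at each step, and that variables bound by $\patt$ are not captured. In particular, at an $\epair{\lvartwo}{\lvarthree}{}$ or $\iwhy{\uvar}{}$ node, $\tmthree$ must not mention $\lvartwo,\lvarthree,\uvar$. This is where the contra-freeness hypothesis is used, and $\alpha$-renaming may be needed before applying $\ruleCCEqCtx$. Typing is also required, because $\ruleCCEqCtx$ is stated for typed terms. We would therefore carry the typing of \rlem{contrasubstitution_lemma} through the induction.
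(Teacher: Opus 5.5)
Your proof is correct and is the same structural induction on $\tm$ that the paper uses. At each clause you push $\pelim{\patt}{\tmthree}$ along the path of the contra-substitution using $\ruleCCEqCtx$ and \rlem{cos_compatible_with_equivalence:right}. One small fix: when $\tmtwo$ sits in the trailing eliminator, the single instance of $\ruleCCEqCtx$ with $\fctx = \tm'\epair{\lvartwo}{\lvarthree}{\Box}$ (resp.\ $\tm'\eofc{\uvar}{\Box}$, $\tm'\eone{\Box}$) already gives the equation, so drop the extra ``commutation of the two eliminators''. That commutation would need $\fv{\patt}\cap\fv{\tmtwo}=\emptyset$, which fails in general because $\patt$ binds variables of $\tmtwo$.
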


%\EliminatorFlotation*

  \begin{proof}%\label{elim_cos_cceq:proof}
  By induction on $\tm$. 

\end{proof}

% \begin{lemma}[{name=Eliminator flotation for
%     substitution~\proofnote{Proof on
%       pg.~\pageref{elim_sub_cceq:proof}}, restate=[name=Eliminator
%     flotation for substitution]EliminatorFlotationForSubstitution}]
\begin{lemma}[Eliminator flotation]
  \llem{elim_sub_cceq}
 Let $\tm$ be a linear term and $\lvar\in\fv{\tm}$.  Suppose $\fv{\patt}\cap\fv{\tm}=\emptyset$ and $\tmthree$ is free for $\lvar$ in $\tm$. Then
  \[\tm\sub{\lvar}{\tmtwo}\pelim{\patt}{\tmthree} \cceq   \tm\sub{\lvar}{\tmtwo\pelim{\patt}{\tmthree}}
    \]
  \end{lemma}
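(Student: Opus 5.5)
We proceed by induction on the structure of $\tm$, mirroring the proof of \rlem{elim_cos_cceq} but for linear substitution, which is simpler because $\sub{\lvar}{\cdot}$ does not invert the term. The base case is $\tm = \lvar$. Here both sides are literally $\tmtwo\pelim{\patt}{\tmthree}$, so the claim holds by reflexivity. The case $\tm = \uvar$ is impossible, since $\lvar\in\fv{\tm}$, and likewise $\tm=\ione$ is excluded.

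For the inductive cases we write $\tm = \of{\fctx}{\tm_1}$, or more generally as a constructor applied to subterms, and locate the unique occurrence of $\lvar$. Linearity guarantees it lies in exactly one immediate subterm $\tm_i$. The argument splits according to where that subterm sits.

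\textbf{Surface position.} Suppose $\lvar$ lies in a subterm that occupies a surface position, i.e.\ $\tm = \of{\fctx}{\tm_1}$ with $\fctx$ one of the surface contexts of \rdef{cceq} and $\lvar\in\fv{\tm_1}$. Then $\tm\sub{\lvar}{\tmtwo} = \of{\fctx\sub{\lvar}{\tmtwo}}{\tm_1\sub{\lvar}{\tmtwo}}$, and since $\lvar\notin\fv{\fctx}$ this is $\of{\fctx}{\tm_1\sub{\lvar}{\tmtwo}}$. The chain is
\[
  \of{\fctx}{\tm_1\sub{\lvar}{\tmtwo}}\pelim{\patt}{\tmthree}
  \cceq_{\ruleCCEqCtx}
  \of{\fctx}{\tm_1\sub{\lvar}{\tmtwo}\pelim{\patt}{\tmthree}}
  \cceq
  \of{\fctx}{\tm_1\sub{\lvar}{\tmtwo\pelim{\patt}{\tmthree}}}.
\]
The first step uses axiom $\ruleCCEqCtx$. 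The second step uses the induction hypothesis together with contextual closure of $\cceq$.

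The side conditions of $\ruleCCEqCtx$ must be checked. First, $\fv{\patt}\cap\fv{\fctx}=\emptyset$ follows from $\fv{\patt}\cap\fv{\tm}=\emptyset$. Second, $\tmthree$ must be free for $\fctx$. This follows from the hypothesis that $\tmthree$ is free for $\lvar$ in $\tm$, read as: no binder of $\tm$ above the occurrence of $\lvar$ captures free variables of $\tmthree$. The binders of $\fctx$ are among those above $\lvar$. The hypotheses of the induction hypothesis for $\tm_1$ are inherited in the same way.

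\textbf{Non-surface position.} The non-surface positions are $\iofctwo{\lvartwo}{\Box}$ and the left argument of $\ewhy{\Box}{\lvartwo}{\tm'}$. These cannot contain $\lvar$: by linearity conditions 3) and 4), $\iofctwo{\lvartwo}{\tm'}$ has no free linear variables, and the left argument of a why-not eliminator has only its bound variable free.

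\textbf{Expected obstacle.} The remaining concern is the bookkeeping for binders in surface contexts, namely $\ipar{\lvar_1}{\lvar_2}{\Box}$, $\iwhy{\uvar}{\Box}$, the body of $\tm'\pelim{\patt'}{\Box}$ or $\Box\pelim{\patt'}{\tm'}$, and $\ewhy{\tm'}{\lvartwo}{\Box}$. Here we rely on $\alpha$-conversion to keep the bound names disjoint from $\fv{\tmtwo}$, $\fv{\tmthree}$ and $\fv{\patt}$. This ensures the substitution commutes with the context and that the $\ruleCCEqCtx$ side conditions hold.

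We also need to confirm that both sides of each $\ruleCCEqCtx$ instance are typable with the same type, since the axioms apply only to typed terms. The $\pelim{\patt}{\tmthree}$ eliminator does not change the type of the term it wraps, so moving it inward preserves typing. This can be verified using the typing rules for positive eliminators from Fig.~\ref{typing:rules:CalcMELL} in the same way as in \rlem{elim_cos_cceq}.
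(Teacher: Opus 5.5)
Your proposal is correct and is essentially the paper's proof, which consists only of ``by induction on $\tm$''. Your argument fills in exactly that induction: one application of $\ruleCCEqCtx$ at the constructor whose surface position contains $\lvar$, then the induction hypothesis under contextual closure, with linearity conditions 3) and 4) excluding the two non-surface positions. Your remark on typing is well placed, since the lemma silently presupposes that the terms involved are typed (the axioms of $\cceq$ apply only to typed terms), and the paper's one-line proof does not address this either.
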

  
%  \EliminatorFlotationForSubstitution*

  \begin{proof}%\label{elim_sub_cceq:proof}
  By induction on $\tm$.  

\end{proof}

% \TODO{En \rlem{simple_admissible}, hay casos que no son tipables pero que se consideran igual. Ej. $\tm-=\pair{\tm_1}{\tm_2}$. Puede que suceda lo mismo con otros lemas}

\begin{lemma}
\llem{sub_compatible_with_equivalence}
Suppose $\tmtwo\cceq\tmthree$. Then
\begin{enumerate}
\item $\tm\sub{\var}{\tmtwo}\cceq\tmtwo\sub{\var}{\tmthree}$; and
\item $\tmtwo\sub{\var}{\tm}\cceq\tmthree\sub{\var}{\tm}$; and
  
  \end{enumerate}
\end{lemma}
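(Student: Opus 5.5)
The statement is intended as compatibility of substitution with $\cceq$ in each argument. I read item 1 as $\tm\sub{\var}{\tmtwo}\cceq\tm\sub{\var}{\tmthree}$; the printed right-hand side $\tmtwo\sub{\var}{\tmthree}$ looks like a typo. The variable $\var$ may be linear or unrestricted. Throughout, I assume the side conditions that make the substitutions well typed: by the admissible rules \rulmSub{} and \rulmUSub{}, both sides remain typed in the same contexts, so the typed axioms of $\cceq$ may be applied.

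\textbf{Item 1.} I write $\tm\sub{\var}{\arg}$ as filling a multi-hole context. There is one hole if $\var$ is linear, and finitely many if $\var$ is unrestricted, possibly under $\iofctwo{\lvar}{\arg}$ or to the left of a why-not eliminator. The key point is that $\cceq$ is closed under \emph{arbitrary} contexts, not only surface contexts; the surface-context restriction only concerns the axiom $\ruleCCEqCtx$. So I replace the occurrences of $\tmtwo$ by $\tmthree$ one at a time. Each single replacement is an instance of contextual closure applied to $\tmtwo\cceq\tmthree$, and transitivity chains them together. Capture is avoided by $\alpha$-renaming the bound variables of $\tm$ away from $\fv{\tmtwo}=\fv{\tmthree}$, which holds by \rlem{equivalent_terms_have_same_free_variables}. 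Alternatively, a direct induction on $\tm$ gives the same result.

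\textbf{Item 2.} I would proceed by induction on the derivation of $\tmtwo\cceq\tmthree$. Reflexivity, symmetry and transitivity are immediate. For contextual closure, substitution distributes over the context after renaming its binders away from $\fv{\tm}\cup\set{\var}$, and then the induction hypothesis applies. For the axioms:
\begin{itemize}
\item For $\ruleCCEqBetaOne$ and $\ruleCCEqEOneSymm$, the substituted instance is again an instance of the same axiom.
\item For $\ruleCCEqCtx$, substitution maps $\of{\fctx}{\tm'}\pelim{\patt}{\tmfour}$ to $\of{\fctx'}{\tm'\sub{\var}{\tm}}\pelim{\patt}{\tmfour\sub{\var}{\tm}}$, where $\fctx'$ is $\fctx$ with substitution applied to its side argument. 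After choosing the bound pattern variables fresh, the conditions $\fv{\patt}\cap\fv{\fctx'}=\emptyset$ and ``free for $\fctx'$'' still hold.
\item For $\ruleCCEqCosAndSub$, I need an auxiliary commutation lemma:
\[\tm'\cos{\lvar}{\tmfour}\sub{\var}{\tm}=\tm'\sub{\var}{\tm}\cos{\lvar}{\tmfour\sub{\var}{\tm}},\]
for $\var\neq\lvar$ and $\lvar\notin\fv{\tm}$. Its companion is $\tm'\sub{\lvar}{\tmfour}\sub{\var}{\tm}=\tm'\sub{\var}{\tm}\sub{\lvar}{\tmfour\sub{\var}{\tm}}$, which is standard. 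With both in hand, the instance $(\ibott{\tm'\cos{\lvar}{\ione}}{\tmfour})\sub{\var}{\tm}$ becomes $\ibott{\tm'\sub{\var}{\tm}\cos{\lvar}{\ione}}{\tmfour\sub{\var}{\tm}}$, and this is $\cceq$ to $\tm'\sub{\var}{\tm}\sub{\lvar}{\tmfour\sub{\var}{\tm}}=\tm'\sub{\lvar}{\tmfour}\sub{\var}{\tm}$.
\end{itemize}
In every case I check that $\tm'\sub{\var}{\tm}$ is still linear with $\lvar$ free. When $\var$ is unrestricted this holds because $\tm$ is typed in an empty linear context, and so it does not disturb linearity conditions 3) and 4).

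\textbf{Main obstacle.} I expect the main difficulty to be the commutation of substitution with contra-substitution. I would prove it by induction on $\tm'$, following the clauses of \rdef{contrasub:for:MLL} and Fig.~\ref{contrasubstitution:for:CalcMELL}. Each clause moves sibling subterms of the path to $\lvar$ into the argument of the recursive contra-substitution. Examples are $\invap{\tmfour}{\tm_2}$, $\ipar{\lvartwo}{\lvarthree}{(\ibott{\tm_1}{\tmfour})}$, $\iwhy{\uvar}{\ibott{\tm_1}{\tmfour}}$, $\iofctwo{\lvartwo}{\tm_1}$, and the trailing $\eone{\tmfour}$. Since $\sub{\var}{\tm}$ is a homomorphism on all these constructors, pushing it through the clause on one side yields exactly the clause applied to the substituted pieces on the other side. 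The induction hypothesis then closes each case, together with the choice of fresh binders and the fact that $\lvar\notin\fv{\tm}$, which keeps the case split on ``$\lvar\in\fv{\tm_i}$'' unchanged. So the argument is a routine but long case analysis, and equality, not merely $\cceq$, holds throughout.
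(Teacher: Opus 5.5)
Your proof is correct, and you are right to read item 1 as $\tm\sub{\var}{\tmtwo}\cceq\tm\sub{\var}{\tmthree}$.

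\textbf{Item 1.} Your argument is essentially the paper's: its entire proof is ``by induction on $\tm$'', and your one-occurrence-at-a-time replacement via contextual closure is the same reasoning.

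\textbf{Item 2.} Here your route genuinely differs from what the paper writes. Induction on $\tm$ cannot carry item 2, because $\tm$ is the substituted term and its structure plays no role. What is needed is your induction on the derivation of $\tmtwo\cceq\tmthree$. This is also how the paper treats the analogous second item of \rlem{cos_compatible_with_equivalence}.

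The only non-routine case is the axiom $\ruleCCEqCosAndSub$. The commutation of substitution with contra-substitution, which you flag as the main obstacle, is already in the paper as \rlem{mell_sub_contra} (Contrasubstitution Lemma I). It covers both linear and unrestricted variables and is proved by an independent induction on the term, so you can cite it instead of redoing the case analysis. Your sketch of that induction is sound in any case: it generalises over the argument $\tmfour$, uses fresh binders, and pushes $\sub{\var}{\tm}$ through each clause.

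One small step deserves to be made explicit. In a $\ruleCCEqCosAndSub$ instance, the consumed variable $\lvar$ is not bound by any constructor. So to assume $\lvar\neq\var$ and $\lvar\notin\fv{\tm}$, you must first rewrite the instance with a fresh $\lvar'$. This uses two facts, both trivial inductions:
\begin{itemize}
\item $\tm'\sub{\lvar}{\lvar'}\cos{\lvar'}{\ione}=\tm'\cos{\lvar}{\ione}$;
\item $\tm'\sub{\lvar}{\lvar'}\sub{\lvar'}{\tmfour}=\tm'\sub{\lvar}{\tmfour}$.
\end{itemize}
When $\var$ is unrestricted, $\lvar\notin\fv{\tm}$ holds automatically, since $\tm$ has no free linear variables.
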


\begin{proof}
  By induction on $\tm$.
\end{proof}

\begin{remark}
  \lremark{eone_propagation}
  Since $\fv{\ione}=\emptyset$ we have:
  \begin{enumerate}
  \item\label{eone_propagation:pair} $\pair{\tm}{\tmtwo}\eone{\tmthree} \cceq \pair{\tm\eone{\tmthree}}{\tmtwo}\cceq \pair{\tm}{\tmtwo\eone{\tmthree}}$.
  \item\label{eone_propagation:ap} $(\ap{\tm}{\tmtwo})\eone{\tmthree} \cceq \ap{\tm\eone{\tmthree}}{\tmtwo}\cceq \ap{\tm}{\tmtwo\eone{\tmthree}}$.
  \item\label{eone_propagation:invap} $(\invap{\tm}{\tmtwo})\eone{\tmthree} \cceq \invap{\tm\eone{\tmthree}}{\tmtwo}\cceq \invap{\tm}{\tmtwo\eone{\tmthree}}$.
      \item\label{eone_propagation:ibott} $(\ibott{\tm}{\tmtwo})\eone{\tmthree} \cceq \ibott{\tm\eone{\tmthree}}{\tmtwo}\cceq \ibott{\tm}{\tmtwo\eone{\tmthree}}$.
      \item\label{eone_propagation:par} $(\ipar{\lvar}{\lvartwo}{\tm})\eone{\tmthree}  \cceq \ipar{\lvar}{\lvartwo}{\tm\eone{\tmthree}}$, if $\lvar,\lvartwo\notin\flv{\tmthree}$.
      \item\label{eone_propagation:pelim} $\tm\pelim{\patt}{\tmtwo}\eone{\tmthree} \cceq \tm\eone{\tmthree}\pelim{\patt}{\tmtwo} \cceq \tm\pelim{\patt}{\tmtwo\eone{\tmthree}}$.
      \item\label{eone_propagation:iwhy} $(\iwhy{\uvar}{\tm})\eone{\tmthree}  \cceq \iwhy{\uvar}{\tm\eone{\tmthree}}$, if $\uvar\notin\fv{\tmthree}$.
  \end{enumerate}
  
\end{remark}

%\GeneralizedSimpl*

\begin{lemma}[Generalized Simpl]
  \llem{simple_admissible}
Let $\tm$ be a linear term and $\lvar\in\fv{\tm}$. Then $\tm\cos{\lvar}{\ione}\eone{\tmthree} \cceq   \tm\cos{\lvar}{\tmthree}$
\end{lemma}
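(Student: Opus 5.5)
The plan is to avoid a fresh structural induction on $\tm$ and to derive the statement from \rlem{elim_cos_cceq} (eliminator flotation for contra-substitution), together with the compatibility of contra-substitution with $\cceq$ in its right argument (\rlem{cos_compatible_with_equivalence:right}).

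\textbf{Step 1 (flotation).} Instantiate \rlem{elim_cos_cceq} with $\tmtwo := \ione$, pattern $\patt := \ione$ and $\tmthree$ as in the statement. Its hypotheses are met as follows.
\begin{itemize}
\item Since $\fv{\ione}=\emptyset$, the condition $\fv{\patt}\cap\fv{\tm}=\emptyset$ holds trivially.
\item The condition that $\tmthree$ is contra-free for $\lvar$ in $\tm$ is obtained by $\alpha$-renaming the bound variables of $\tm$ away from $\fv{\tmthree}$. This is legitimate because terms are taken up to $\alpha$-equivalence, and $\cceq$, substitution and contra-substitution all respect it.
\end{itemize}
This yields
\[
\tm\cos{\lvar}{\ione}\eone{\tmthree} \;\cceq\; \tm\cos{\lvar}{\ione\eone{\tmthree}}.
\]

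\textbf{Step 2 (the argument).} Show $\ione\eone{\tmthree}\cceq\tmthree$. This follows from $\ruleCCEqEOneSymm$ followed by $\ruleCCEqBetaOne$:
\[
\ione\eone{\tmthree}\cceq\tmthree\eone{\ione}\cceq\tmthree.
\]
This is also one of the derived equations listed in the main text. Here $\tmthree:\one$, which is forced by the typing of $\tm\cos{\lvar}{\ione}\eone{\tmthree}$ under $\rulmEOne$, and the typing side conditions of both axioms are satisfied.

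\textbf{Step 3 (closing).} Apply \rlem{cos_compatible_with_equivalence:right} to the equivalence of Step 2 to get $\tm\cos{\lvar}{\ione\eone{\tmthree}}\cceq\tm\cos{\lvar}{\tmthree}$. Chaining this with Step 1 by transitivity gives the claim.

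The main obstacle I expect is \rlem{elim_cos_cceq} itself, if it has to be proved rather than cited, together with its side conditions. Its proof is an induction on $\tm$ following the clauses of contra-substitution. In each clause the eliminator $\eone{\tmthree}$ must be moved, via $\ruleCCEqCtx$ (and Remark~\ref{remark:eone_propagation} when the pattern is $\ione$), from outside the term into the argument of the recursive call. Examples are $\pair{\tm_1}{\tm_2}$ with argument $\invap{\tmtwo}{\tm_2}$, and the binder clauses $\ipar{\lvartwo}{\lvarthree}{\tm}$ and $\iwhy{\uvar}{\tm}$, whose results carry an outer eliminator $\epair{\lvartwo}{\lvarthree}{\tmtwo}$ or $\eofc{\uvar}{\tmtwo}$.

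The delicate cases are these binder clauses and the $\ewhy{\tm_1}{\lvartwo}{\tm_2}$ clause. In them one must check two things. First, the eliminator is never pushed under an $\iofctwo{\lvartwo}{\cdot}$ or into the left argument of a why-not eliminator, positions which $\ruleCCEqCtx$ forbids. Second, the freshness condition still holds, which is exactly what contra-freeness guarantees.

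If citing \rlem{elim_cos_cceq} turns out to be insufficient for some typing subtlety, the fallback is a direct induction on $\tm$. In that induction, the base case $\tm=\lvar$ is Step 2, and each inductive clause is handled by pushing $\eone{\tmthree}$ into the argument as described above.
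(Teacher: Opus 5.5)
Your argument is correct, but it is organized differently from the paper's. The paper proves the lemma by structural induction on $\tm$, using \rlem{elim_cos_cceq}, \rlem{cos_compatible_with_equivalence:right} and the symmetry axiom along the way. You notice that no new induction is needed, because \rlem{elim_cos_cceq} holds for an arbitrary contra-substitution argument. The single instance $\tmtwo:=\ione$, $\patt:=\ione$ already yields $\tm\cos{\lvar}{\ione}\eone{\tmthree}\cceq\tm\cos{\lvar}{\ione\eone{\tmthree}}$, and right-compatibility then pushes the base-case computation $\ione\eone{\tmthree}\cceq\tmthree\eone{\ione}\cceq\tmthree$ inside. This is not circular: both cited lemmas come before this one and are proved by their own inductions on $\tm$ without using it. Your handling of the side conditions is also fine.

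Your route gives economy and a clearer dependency structure, since the recursion through the clauses of contra-substitution is done once, inside \rlem{elim_cos_cceq}. In the paper's induction, several clauses change the argument of the recursive call. For $\ap{\tm_1}{\tm_2}$ one gets $\tm_1\cos{\lvar}{\pair{\tm_2}{\ione}}$ or $\tm_2\cos{\lvar}{\invap{\tm_1}{\ione}}$, and the induction hypothesis, which concerns the argument $\ione$, does not apply. Those clauses must therefore be closed by flotation plus compatibility anyway, which is exactly your argument. Accordingly, the delicate cases you anticipate belong to the proof of \rlem{elim_cos_cceq}, not to this lemma, and your fallback induction is unnecessary.
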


\begin{proof}
  By induction on $\tm$, using \rlem{elim_cos_cceq}  and \rlem{cos_compatible_with_equivalence:right} and $\ruleCCCosOneSimplLeft$.

\end{proof}

% \begin{lemma}[{name=Contrasubstitution Lemma I~\proofnote{Proof on pg.~\pageref{mell_sub_contra:proof}}, restate=[name=Contrasubstitution Lemma I]LinearSubContraLemma}]
\begin{lemma}[Contrasubstitution Lemma I]
  \llem{mell_sub_contra}
  Let $\set{\tm,\tmtwo,\tmthree}$ be a set of linear terms
such that $\lvar \in \fv{\tm}$ and let $\lvartwo \neq \lvar$.
Then
\begin{enumerate}
  \item $
  \tm\cos{\lvar}{\tmtwo}\sub{\lvartwo}{\tmthree}
  =
  \tm\sub{\lvartwo}{\tmthree}\cos{\lvar}{\tmtwo\sub{\lvartwo}{\tmthree}}
$. 
Note that $\lvartwo$ occurs at most once in $\tm,\tmtwo$,
so the right-hand side
is either of the form
  $\tm\sub{\lvartwo}{\tmthree}\cos{\lvar}{\tmtwo}$
or of the form
$\tm\cos{\lvar}{\tmtwo\sub{\lvartwo}{\tmthree}}$.
  \item $
  \tm\cos{\lvar}{\tmtwo}\sub{\uvar}{\tmthree}
  =
  \tm\sub{\uvar}{\tmthree}\cos{\lvar}{\tmtwo\sub{\uvar}{\tmthree}}
$. 
\end{enumerate}
\end{lemma}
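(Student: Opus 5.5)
We prove both items by induction on the structure of $\tm$, following the clauses of \rdef{contrasub:for:MLL} and Fig.~\ref{contrasubstitution:for:CalcMELL}. The statement is a syntactic equality, not merely $\cceq$, so no appeal to structural equivalence is needed. Throughout we assume the usual Barendregt convention: bound variables of $\tm$ and $\tmtwo$ are chosen distinct from $\lvartwo$ (resp.\ $\uvar$) and from $\fv{\tmthree}$. Then substitution commutes with every binder, and contra-substitution is defined on $\tm\sub{\lvartwo}{\tmthree}$ by the same clause as on $\tm$. We also use the fact that substitution preserves linearity and that $\lvar \in \fv{\tm\sub{\lvartwo}{\tmthree}}$. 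For this, note that $\lvar\notin\fv{\tmthree}$ may be assumed when $\lvartwo\in\fv{\tm}$, since otherwise the left-hand side would not be linear. In particular $\lvar$ still lies in the same immediate subterm $\tm_i$ after substitution, so the side condition $\bm{\ast}_i$ selects the same branch on both sides.

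The base case is $\tm=\lvar$. For item 1, the left side is $\tmtwo\sub{\lvartwo}{\tmthree}$ and the right side is $\lvar\cos{\lvar}{\tmtwo\sub{\lvartwo}{\tmthree}}$, which are equal; item 2 is identical. For each inductive clause, the contra-substitution has the shape $\tm_i\cos{\lvar}{K[\tmtwo]}$, possibly followed by a positive eliminator $\pelim{\patt}{\tm_j}$ or wrapped as $\tm_i\cos{\lvar}{\tmtwo}\epair{\lvartwo'}{\lvarthree}{\tm_j}$. Here $K$ is a one-step constructor built from the sibling subterms, for instance $\invap{\tmtwo}{\tm_2}$, $\pair{\tm_2}{\tmtwo}$, $\ipar{\lvartwo'}{\lvarthree}{(\ibott{\tm_1}{\tmtwo})}$, $\iwhy{\uvartwo}{\ibott{\tm_1}{\tmtwo}}$, or $\ione$ together with an outer $\eone{\tmtwo}$. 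We push $\sub{\lvartwo}{\tmthree}$ inward. It distributes over the outer eliminator and over $K$ by the definition of substitution, and on the remaining $\tm_i\cos{\lvar}{K[\tmtwo]}$ we apply the induction hypothesis to $\tm_i$ with continuation $K[\tmtwo]$. This yields $\tm_i\sub{\lvartwo}{\tmthree}\cos{\lvar}{K[\tmtwo]\sub{\lvartwo}{\tmthree}}$, and $K[\tmtwo]\sub{\lvartwo}{\tmthree}=K\sub{\lvartwo}{\tmthree}[\tmtwo\sub{\lvartwo}{\tmthree}]$. Reading the defining clause backwards on $\tm\sub{\lvartwo}{\tmthree}$, whose constructor and branch are unchanged, gives the right-hand side.

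Two clauses need care. The first is the one for $\ipar{\lvartwo'}{\lvarthree}{\tm'}$ and $\iwhy{\uvartwo}{\tm'}$, where the recursive call is $\tm'\cos{\lvar}{\ione}$ and $\tmtwo$ reappears only in the outer eliminator. There we use the induction hypothesis with continuation $\ione$, which substitution leaves unchanged. The second is the clause for $\ewhy{\tm_1}{\lvartwo'}{\tm_2}$, where $\tm_1$ moves inside $\iofctwo{\lvartwo'}{\tm_1}$. For item 1, $\lvartwo\notin\fv{\tm_1}$ by condition 4 of linearity, so the substitution acts trivially there. For item 2, $\uvar$ may occur in $\tm_1$, and $\sub{\uvar}{\tmthree}$ simply commutes with $\iofctwo{\lvartwo'}{\cdot}$.

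The only genuine obstacle is bookkeeping: verifying for each clause that the branch chosen by the condition $\bm{\ast}_i$ is invariant under the substitution, and that $\alpha$-renaming of the patterns $\pair{\lvartwo'}{\lvarthree}$, $\ofc{\uvartwo}$ and of the $\lambda$/$\parr$ binders avoids capture of $\fv{\tmthree}$. Item 2 is entirely analogous, and slightly simpler, since $\uvar$ may occur any number of times and no linearity case split on the location of $\uvar$ is needed.
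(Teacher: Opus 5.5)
Your proof is correct and follows the paper's approach: the paper's proof is the one line ``by induction on $\tm$'', and your argument (generalizing over the second argument of the contra-substitution, and checking that the branch selected by the side conditions does not change under the substitution) spells out what that line leaves implicit. One correction concerns the side condition $\lvar\notin\fv{\tmthree}$. It is not forced by linearity of the left-hand side: with $\tm=\pair{\lvar}{\lvartwo}$ and $\tmthree=\lvar$, the term $\tm\cos{\lvar}{\tmtwo}\sub{\lvartwo}{\tmthree}$ can still be linear. It is the right-hand side that fails, since $\tm\sub{\lvartwo}{\tmthree}$ then contains $\lvar$ twice and the contra-substitution is undefined, so this condition should be read as part of the hypothesis that $\set{\tm,\tmtwo,\tmthree}$ is jointly linear (pairwise disjoint free linear variables), and the same assumption is implicitly needed for $\uvar$ in item 2.
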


%\LinearSubContraLemma*

\begin{proof}%\label{mell_sub_contra:proof}
By induction on $\tm$.
\end{proof}

% \begin{lemma}[{name=Contra/contra lemma~\proofnote{Proof on pg.~\pageref{mell_contra_contra:proof}}, restate=[name=Contra/contra lemma]ContraContraLemma}]
\begin{lemma}[Contra/contra lemma]
\llem{contra_contra_lemma} 
Let $\set{\tm,\tmtwo,\tmthree}$ be a linear set of terms
such that $\lvar \in \fv{\tm}$
and let $\lvartwo \neq \lvar$
be such that $\lvartwo \in \fv{\tm}\cup\fv{\tmtwo}$.
Then:
\begin{enumerate}
\item
  If $\lvartwo \in \fv{\tm}$, then:
  \[
    \tm\cos{\lvar}{\tmtwo}\cos{\lvartwo}{\tmthree}
    \cceq
    \tm\cos{\lvartwo}{\tmtwo}\sub{\lvar}{\tmthree}
  \]
\item
  If $\lvartwo \in \fv{\tmtwo}$, then:
  \[
    \tm\cos{\lvar}{\tmtwo}\cos{\lvartwo}{\tmthree}
    \cceq
    \tmtwo\cos{\lvartwo}{\tm\sub{\lvar}{\tmthree}}
  \]
\end{enumerate}
\end{lemma}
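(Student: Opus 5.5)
We prove both items simultaneously by induction on the structure of $\tm$; the measure is the size of $\tm$, and item 2 is allowed to call item 1 on strictly smaller terms and vice versa. The case split is dictated by the clauses of \rdef{cos} and \rdef{contrasub:for:MLL}. In each case, $\tm\cos{\lvar}{\tmtwo}$ unfolds to $\tm_i\cos{\lvar}{\tmtwo'}$, possibly wrapped in a positive eliminator. Here $\tm_i$ is the immediate subterm containing $\lvar$, and $\tmtwo'$ is built from $\tmtwo$ and the sibling subterm, for instance $\invap{\tmtwo}{\tm_2}$, $\pair{\tm_2}{\tmtwo}$, $\ipar{\lvartwo}{\lvarthree}{(\ibott{\tm_1}{\tmtwo})}$, or $\ione$ with an eliminator $\eone{\tmtwo}$ or $\epair{\cdot}{\cdot}{\tmtwo}$ outside.

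\textbf{Base case.} If $\tm=\lvar$, then necessarily $\lvartwo\in\fv{\tmtwo}$, and item 2 is an equality:
\[
  \lvar\cos{\lvar}{\tmtwo}\cos{\lvartwo}{\tmthree}
  = \tmtwo\cos{\lvartwo}{\tmthree}
  = \tmtwo\cos{\lvartwo}{\lvar\sub{\lvar}{\tmthree}}.
\]

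\textbf{Item 2 (} $\lvartwo\in\fv{\tmtwo}$ \textbf{).} Write $\tm\cos{\lvar}{\tmtwo}=\tm_i\cos{\lvar}{\tmtwo'}$ with $\lvartwo\in\fv{\tmtwo'}$, and apply the induction hypothesis (item 2) to $\tm_i$. This yields
\[
  \tmtwo'\cos{\lvartwo}{\tm_i\sub{\lvar}{\tmthree}}.
\]
We then unfold this one more step according to the constructor of $\tmtwo'$ that surrounds $\tmtwo$. For example, for $\tmtwo'=\invap{\tmtwo}{\tm_2}$ the clause for contra-application gives
\[
  \tmtwo\cos{\lvartwo}{\pair{\tm_i\sub{\lvar}{\tmthree}}{\tm_2}},
\]
which is exactly $\tmtwo\cos{\lvartwo}{\tm\sub{\lvar}{\tmthree}}$. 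Wrapping eliminators are handled as follows:
\begin{itemize}
\item eliminators of the form $\eone{\tmtwo}$ or $\epair{\cdot}{\cdot}{\tmtwo}$, i.e. those arising from $\ipar{}{}{}$, $\iwhy{}{}$, $\ewhy{}{}{}$ and $\ibott{}{}$, are pushed inside with \rlem{elim_cos_cceq} and \rlem{elim_sub_cceq};
\item a contra-substitution $\cos{\lvartwo}{\cdot}$ acting on $\ione$ is collapsed using \rlem{simple_admissible};
\item the resulting subterm equivalences are propagated using \rlem{cos_compatible_with_equivalence:right}.
\end{itemize}
The order of substitutions whose domains are disjoint is swapped by \rlem{mell_sub_contra}.

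\textbf{Item 1 (} $\lvartwo\in\fv{\tm}$ \textbf{).} There are two subcases.
\begin{itemize}
\item If $\lvar$ and $\lvartwo$ lie in the same immediate subterm $\tm_i$, we use the induction hypothesis (item 1) on $\tm_i$ with the modified argument $\tmtwo'$, then refold, using \rlem{mell_sub_contra} to commute the linear substitution $\sub{\lvar}{\tmthree}$ past the wrapper.
\item If $\lvar$ and $\lvartwo$ lie in different subterms $\tm_i$ and $\tm_j$, then after unfolding the left-hand side reads $\tm_i\cos{\lvar}{\tmtwo'}\cos{\lvartwo}{\tmthree}$ with $\lvartwo\in\fv{\tmtwo'}$. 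We apply item 2 of the induction hypothesis to $\tm_i$, obtaining
\[
  \tmtwo'\cos{\lvartwo}{\tm_i\sub{\lvar}{\tmthree}},
\]
and then unfold $\tmtwo'$ to reach $\tm_j\cos{\lvartwo}{\cdots}$. Unfolding the right-hand side $\tm\cos{\lvartwo}{\tmtwo}\sub{\lvar}{\tmthree}$ on the $\tm_j$ side gives the same shape, since $\tm_i\sub{\lvar}{\tmthree}$ now appears as the sibling.
\end{itemize}

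\textbf{Main obstacle.} The hard cases are the non-multiplicative constructors, namely $\ibott{}{}$, $\eone{}$, $\ipar{}{}{}$, $\iwhy{}{}$ and $\ewhy{}{}{}$. Their clauses introduce $\ione$ together with an outer eliminator, so the two sides differ by placement of $\eone{\cdot}$ and $\epair{\cdot}{\cdot}{\cdot}$, and equality fails there. Only $\cceq$ can be expected, using the axioms $\ruleCCEqCtx$, $\ruleCCEqEOneSymm$ and $\ruleCCEqCosAndSub$. The $\ibott{\tm_1}{\tm_2}$ case with $\lvar\in\fv{\tm_1}$ and $\lvartwo\in\fv{\tm_2}$ is the worst, because it needs the symmetry $\ibott{\tm}{\tmtwo}\cceq\ibott{\tmtwo}{\tm}$ and the interplay with $\tm\cos{\lvar}{\ione}\eone{\tmthree}\cceq\tm\cos{\lvar}{\tmthree}$. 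I would do that case first, and fall back on the derived equations listed above (which the paper proves via the alternative characterization of $\cceq$) if direct rewriting stalls. Typing of all terms is kept implicit, since $\cceq$'s axioms require typed instances; this follows from \rlem{contrasubstitution_lemma}.
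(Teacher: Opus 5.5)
Your skeleton (simultaneous induction on $\tm$, with item~2 feeding the cross case of item~1) is the paper's, and your multiplicative cases go through as described. The gap is a family of cases where the part containing $\lvartwo$ does not land in the new argument $\tmtwo'$ but in an outer positive eliminator:
\begin{itemize}
\item item~2 for $\ipar{\lvarthree}{\lvarfour}{\tm'}$, $\iwhy{\uvar}{\tm'}$, $\ewhy{\tm_1}{\lvarthree}{\tm_2}$ and $\ibott{\tm_1}{\tm_2}$, whose clauses put $\tmtwo$ into $\epair{\lvarthree}{\lvarfour}{\tmtwo}$, $\eofc{\uvar}{\tmtwo}$ or $\eone{\tmtwo}$;
\item item~1 for $\tm_1\pelim{\patt}{\tm_2}$ with $\lvar\in\fv{\tm_1}$ and $\lvartwo\in\fv{\tm_2}$, where the sibling $\tm_2$ stays in the wrapper.
\end{itemize}
There your premise $\lvartwo\in\fv{\tmtwo'}$ is false, so the induction hypothesis does not apply as you set it up, and the proposed remedy fails:
\begin{itemize}
\item For tensor and bang patterns, floating the eliminator inward with \rlem{elim_cos_cceq} is blocked by its side condition $\fv{\patt}\cap\fv{\tm}=\emptyset$. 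The pattern variables ($\lvarthree,\lvarfour$ or $\uvar$) in general occur free in the body being contra-substituted ($\tm'$ or $\tm_1$).
\item For $\eone{\cdot}$ the flotation is allowed, but it rewrites a term that is afterwards hit by $\cos{\lvartwo}{\tmthree}$. That needs compatibility of contra-substitution with $\cceq$ on the left (second item of \rlem{cos_compatible_with_equivalence}), which comes later and whose proof depends on the present lemma.
\item Your fallback on the derived equations has the same problem: the paper obtains $\ruleCCEqIParDualToEPair$ in $\cceq$ by citing this very lemma.
\end{itemize}
Flotation is only safe on the outside, after the induction hypothesis has been used, as in $\tmtwo\cos{\lvartwo}{\tm_1\sub{\lvar}{\tmthree}}\eone{\tm_2}\cceq\tmtwo\cos{\lvartwo}{\tm_1\sub{\lvar}{\tmthree}\eone{\tm_2}}$.

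The missing idea is that these cases are not about eliminator placement. Each is closed by one instance of $\ruleCCEqCosAndSub$ on a $\bott$-typed term, without the induction hypothesis, then propagated by contextual closure and \rlem{cos_compatible_with_equivalence:right}, which is already available.

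For item~2, unfold the outer eliminator's own clause, since it is the eliminator's argument that contains $\lvartwo$. For $\ipar{\lvarthree}{\lvarfour}{\tm'}$ the left-hand side unfolds to
\[
  \tmtwo\cos{\lvartwo}{\ipar{\lvarthree}{\lvarfour}{(\ibott{\tm'\cos{\lvar}{\ione}}{\tmthree})}}
  \cceq
  \tmtwo\cos{\lvartwo}{\ipar{\lvarthree}{\lvarfour}{\tm'\sub{\lvar}{\tmthree}}},
\]
which is the right-hand side, by $\ruleCCEqCosAndSub$ on $\tm':\bott$. The case $\iwhy{\uvar}{\tm'}$ is identical. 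For $\ewhy{\tm_1}{\lvarthree}{\tm_2}$ and $\ibott{\tm_1}{\tm_2}$, apply $\ruleCCEqCosAndSub$ to $\tm$ itself and remove the trailing $\eone{\ione}$ with $\ruleCCEqBetaOne$.

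For item~1 with $\tm=\tm_1\epair{\lvarthree}{\lvarfour}{\tm_2}$, both sides unfold (using \rlem{mell_sub_contra} on the right) to $\tm_2\cos{\lvartwo}{\ipar{\lvarthree}{\lvarfour}{\cdots}}$. The left contains $\ibott{\tm_1\cos{\lvar}{\tmtwo}}{\tmthree}$ and the right contains $\ibott{\tm_1\sub{\lvar}{\tmthree}}{\tmtwo}$. These are related by $\ruleCCEqCosAndSub$ applied to $\ibott{\tm_1}{\tmtwo}$, whose contra-substitution $(\ibott{\tm_1}{\tmtwo})\cos{\lvar}{\ione}$ is $\tm_1\cos{\lvar}{\tmtwo}\eone{\ione}$, followed by $\ruleCCEqBetaOne$. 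The bang and unit eliminators work the same way, with $\iwhy{\uvar}{\cdot}$ or nothing in place of $\ipar{\lvarthree}{\lvarfour}{\cdot}$.

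By contrast, the case you single out as hardest, $\ibott{\tm_1}{\tm_2}$ with $\lvar\in\fv{\tm_1}$ and $\lvartwo\in\fv{\tm_2}$ under item~1, is routine. Item~2 of the hypothesis on $\tm_1$ with argument $\tm_2$, followed by \rlem{mell_sub_contra}, gives a syntactic match, and no symmetry is needed.
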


% \ContraContraLemma*

\begin{proof}%\label{mell_contra_contra:proof}
By simulateneous induction on $\tm$.
\end{proof}

% \begin{lemma}[{name=Nested Contra/contra lemma~\proofnote{Proof on pg.~\pageref{mell_nested_contra_contra:proof}}, restate=[name=Nested Contra/contra lemma]NestedContraContraLemma}]
\begin{lemma}[Nested Contra/contra lemma]
\llem{nested_contra_contra}
Let $\set{\tm,\tmtwo,\tmthree}$ be a linear set of terms
such that $\lvar \in \fv{\tm}$
and let $\lvartwo \neq \lvar$
be such that $\lvartwo \in \fv{\tmtwo}$.
Then:
  \[
    \tmtwo\sub{\lvartwo}{\tm}\cos{\lvar}{\tmthree}
    \cceq \tm\cos{\lvar}{\tmtwo\cos{\lvartwo}{\tmthree}}
  \]
\end{lemma}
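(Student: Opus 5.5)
The plan is to prove the statement by induction on the structure of $\tmtwo$, following the path from the root of $\tmtwo$ down to the unique occurrence of $\lvartwo$. The base case $\tmtwo = \lvartwo$ is immediate: the left-hand side is $\lvartwo\sub{\lvartwo}{\tm}\cos{\lvar}{\tmthree} = \tm\cos{\lvar}{\tmthree}$, and the right-hand side is $\tm\cos{\lvar}{\lvartwo\cos{\lvartwo}{\tmthree}} = \tm\cos{\lvar}{\tmthree}$, so they are syntactically equal.

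For the inductive cases, I would take each constructor of $\tmtwo$ in turn, say $\tmtwo = \of{\fctx}{\tmtwo_i}$ with $\lvartwo \in \fv{\tmtwo_i}$. Since $\lvar \in \fv{\tm}$ and $\lvar$ does not occur in $\tmtwo$ by linearity, the occurrence of $\lvar$ in $\tmtwo\sub{\lvartwo}{\tm}$ lies inside the subterm $\tmtwo_i\sub{\lvartwo}{\tm}$. The defining clause of contra-substitution (\rdef{contrasub:for:MLL} together with Fig.~\ref{contrasubstitution:for:CalcMELL}) for that constructor therefore selects the same branch on both sides. In that branch, $\tmtwo\sub{\lvartwo}{\tm}\cos{\lvar}{\tmthree}$ unfolds to either
\begin{itemize}
\item $\tmtwo_i\sub{\lvartwo}{\tm}\cos{\lvar}{\tmthree'}$, where $\tmthree'$ is built from $\tmthree$ and the sibling subterms (e.g.\ $\pair{\tmtwo_2}{\tmthree}$ or $\ipar{\lvarthree}{\lvarfour}{(\ibott{\tmtwo_1}{\tmthree})}$), or
\item $\tmtwo_i\sub{\lvartwo}{\tm}\cos{\lvar}{\tmthree'}$ wrapped in a positive eliminator (the $\epair{}{}{}$, $\eofc{}{}$, $\eone{}$ branches, and the $\ipar{}{}{}$, $\iwhy{}{}$, $\ewhy{}{}{}$ clauses, which produce $\cos{\lvar}{\ione}$ followed by an eliminator).
\end{itemize}
The right-hand side $\tmtwo\cos{\lvartwo}{\tmthree}$ unfolds by the same clause to the same shape with $\lvartwo$ in place of $\lvar$. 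Applying the induction hypothesis to $\tmtwo_i$ with $\tmthree'$ then gives $\tmtwo_i\sub{\lvartwo}{\tm}\cos{\lvar}{\tmthree'} \cceq \tm\cos{\lvar}{\tmtwo_i\cos{\lvartwo}{\tmthree'}}$, and the right-hand side of this is $\tm\cos{\lvar}{\tmtwo\cos{\lvartwo}{\tmthree}}$ in the unwrapped cases.

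For the wrapped cases, the left-hand side has the form $(\tm\cos{\lvar}{X})\pelim{\patt}{\tmfour}$ and has to be compared with $\tm\cos{\lvar}{X\pelim{\patt}{\tmfour}}$. This is exactly \rlem{elim_cos_cceq} (eliminator flotation). Its hypotheses hold because $\fv{\patt}$ is disjoint from $\fv{\tm}$ after $\alpha$-renaming, since the pattern variables are bound in $\tmtwo$, and $\tmfour$ is contra-free. Where the IH is used under an $\cceq$ rather than an equality, I close the gap using \rlem{cos_compatible_with_equivalence:right}. The $\iofctwo{}{}$ and $\ione$ cases are vacuous, and the left argument of $\ewhy{}{}{}$ cannot contain $\lvartwo$ unless $\lvartwo$ is the bound variable.

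The main obstacle is bookkeeping in the clauses where the contra-substitution reassociates, most notably $\tmtwo = \tmtwo_1\epair{\lvarthree}{\lvarfour}{\tmtwo_2}$ with $\lvartwo \in \fv{\tmtwo_2}$, and the $\eofc{}{}$ analogue. In these cases the continuation $\tmthree'$ captures variables, so I will check the freshness side conditions carefully when invoking \rlem{elim_cos_cceq}. If a case does not match syntactically, I would fall back on \rlem{contra_contra_lemma} to reorder the two contra-substitutions.
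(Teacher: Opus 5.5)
Your proposal is correct and follows the paper's route, which is simply induction on $\tmtwo$ with the statement generalized over $\tmthree$. After unfolding the same clause on both sides, the IH alone closes the cases that emit no eliminator, and contextual closure of $\cceq$ plus \rlem{elim_cos_cceq} closes those that do, including the $\ibott{}{}$ clause, where the continuation is the sibling $\tmtwo_j$ and $\tmthree$ goes into $\eone{\tmthree}$; your fallbacks to \rlem{cos_compatible_with_equivalence:right} and \rlem{contra_contra_lemma} are never needed.
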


% \NestedContraContraLemma*

\begin{proof}%\label{mell_nested_contra_contra:proof}
  By induction  on $\tmtwo$.
  \end{proof}

%%%%%%%%%%%%
%\subsection{Compatibility of contrasubstitution with equivalence}
%%%%%%%%%%%%

% The next result relies on the following lemmas: 
% \rlem{cos_compatible_with_equivalence:right},
% \rlem{elim_cos_cceq},
% \rlem{cos_with_target_under_context},
%  \rlem{mell_sub_contra},
%  \rlem{contra_contra_lemma}, and
%   \rlem{nested_contra_contra}.

  % \begin{lemma}[{name=Contrasubstitution is compatible with equivalence~\proofnote{Proof on pg.~\pageref{cos_compatible_with_equivalence:proof}}, restate=[name=Contrasubstitution is compatible with equivalence]ContrasubstitutionCompatibleWithEquivalence}]
\begin{lemma}[Contrasubstitution is compatible with equivalence]
\llem{cos_compatible_with_equivalence}
Suppose $\tmtwo\cceq\tmthree$. Then both $\tm\cos{\lvar}{\tmtwo}\cceq\tm\cos{\lvar}{\tmthree}$ and
$\tmtwo\cos{\lvar}{\tm}\cceq\tmthree\cos{\lvar}{\tm}$.
\end{lemma}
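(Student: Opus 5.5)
The first claim, $\tm\cos{\lvar}{\tmtwo}\cceq\tm\cos{\lvar}{\tmthree}$, is exactly \rlem{cos_compatible_with_equivalence:right}, so nothing remains to do there. The real content is the second claim, $\tmtwo\cos{\lvar}{\tm}\cceq\tmthree\cos{\lvar}{\tm}$. Here the equivalence acts on the term being turned inside out, not on the argument.

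I will prove it by induction on the derivation of $\tmtwo\cceq\tmthree$. Reflexivity, symmetry and transitivity are immediate. By \rlem{equivalent_terms_have_same_free_variables}, $\lvar$ is free in both sides, so both contra-substitutions are defined.

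\textbf{Contextual closure.} Suppose $\tmtwo=\of{\ctx}{\tmtwo_0}$ and $\tmthree=\of{\ctx}{\tmthree_0}$ with $\tmtwo_0\cceq\tmthree_0$. I proceed by an inner induction on the one-hole context $\ctx$, splitting on whether $\lvar$ lies on the hole's branch.
\begin{itemize}
\item If $\lvar$ is not under the hole, the contra-substitution clause for the outermost constructor of $\ctx$ puts the subterm containing the hole into the argument position. The hypothesis is then discharged by \rlem{cos_compatible_with_equivalence:right}, or by \rlem{sub_compatible_with_equivalence} when the clause produces a substitution.
\item If $\lvar$ is under the hole, the clause recurses into the hole's branch, and the inner induction hypothesis applies, possibly inside a positive eliminator.
\end{itemize}
Holes under $\iofctwo{\lvartwo}{\cdot}$, or in the left argument of $\ewhy{\cdot}{\lvartwo}{\cdot}$, contain no free linear variable other than the bound one. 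In those positions the contra-substitution just moves the subterm unchanged, so ordinary contextual closure suffices.

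\textbf{Axiom cases.}
\begin{itemize}
\item $\ruleCCEqBetaOne$: we have $(\tm'\eone{\ione})\cos{\lvar}{\tm}=\tm'\cos{\lvar}{\tm}\eone{\ione}$, using that $\fv{\ione}=\emptyset$ forces $\lvar\in\fv{\tm'}$. Then $\ruleCCEqBetaOne$ applies again.
\item $\ruleCCEqEOneSymm$: assume $\lvar\in\fv{\tmtwo_1}$. One side unfolds to $\tmtwo_1\cos{\lvar}{\tm}\eone{\tmtwo_2}$ and the other to $\tmtwo_1\cos{\lvar}{\ibott{\tmtwo_2}{\tm}}$. These are related by \rlem{elim_cos_cceq} followed by the derived equation $\tm\eone{\tmtwo}\cceq\ibott{\tm}{\tmtwo}$ for $\tm:\bott$, which holds because $\tm:\lneg{\one}$. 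The argument is shuffled under \rlem{cos_compatible_with_equivalence:right}.
\item $\ruleCCEqCtx$: this splits by the shape of $\fctx$ and by whether $\lvar$ lies in $\tmthree$, in the hole term, or in another argument of $\fctx$.
  \begin{itemize}
  \item If $\lvar$ lies in the hole term, both sides reduce by the defining clauses to the same shape up to eliminator flotation. This is \rlem{elim_cos_cceq}, with its side conditions ensured by the freeness hypotheses of $\ruleCCEqCtx$.
  \item If $\lvar$ lies in $\tmthree$ (the eliminated term), one side yields $\tmthree\cos{\lvar}{\ldots\of{\fctx}{\tm_0}\ldots}$ and the other yields a nested contra-substitution. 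These are related by \rlem{nested_contra_contra} or \rlem{contra_contra_lemma}.
  \end{itemize}
\end{itemize}

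\textbf{The main obstacle: $\ruleCCEqCosAndSub$.} We must show $(\ibott{\tm'\cos{\lvartwo}{\ione}}{\tmthree})\cos{\lvar}{\tm}\cceq\tm'\sub{\lvartwo}{\tmthree}\cos{\lvar}{\tm}$.
\begin{itemize}
\item If $\lvar\in\fv{\tmthree}$, the left side is $\tmthree\cos{\lvar}{\tm'\cos{\lvartwo}{\ione}}\eone{\tm}$. By \rlem{simple_admissible} and \rlem{elim_cos_cceq} this becomes $\tmthree\cos{\lvar}{\tm'\cos{\lvartwo}{\tm}}$. \rlem{nested_contra_contra} then identifies it with $\tm'\sub{\lvartwo}{\tmthree}\cos{\lvar}{\tm}$.
\item If $\lvar\in\fv{\tm'}$, the left side is $\tm'\cos{\lvartwo}{\ione}\cos{\lvar}{\tmthree}\eone{\tm}$. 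Item~1 of \rlem{contra_contra_lemma} turns the inner part into $\tm'\cos{\lvar}{\ione}\sub{\lvartwo}{\tmthree}$. \rlem{mell_sub_contra} commutes the substitution past the contra-substitution, and \rlem{simple_admissible} absorbs $\eone{\tm}$.
\end{itemize}
The delicate parts are tracking the variable-freeness side conditions and ensuring typing is preserved. If direct unfolding stalls, I would first try using the alternative substitution-free characterisation of $\cceq$ from the appendix.
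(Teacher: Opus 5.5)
Your proposal follows the paper's proof: the first item is \rlem{cos_compatible_with_equivalence:right}, the second goes by induction on the derivation of $\tmtwo\cceq\tmthree$, and your two $\ruleCCEqCosAndSub$ computations (via \rlem{simple_admissible}, \rlem{elim_cos_cceq}, \rlem{nested_contra_contra}, \rlem{contra_contra_lemma} and \rlem{mell_sub_contra}) are correct. One small correction in the $\ruleCCEqCtx$ case: when $\lvar$ lies in the hole term, both sides agree up to permuting two positive eliminators (itself an instance of $\ruleCCEqCtx$), and it is the subcase you list but never treat, $\lvar$ in the other argument of $\fctx$, that needs \rlem{elim_cos_cceq}, followed by further $\ruleCCEqCtx$ steps inside the argument.
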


%\ContrasubstitutionCompatibleWithEquivalence*

  \begin{proof}\label{cos_compatible_with_equivalence:proof}
  The first item was proved in \rlem{cos_compatible_with_equivalence:right}. The second one can be proved by induction on the derivation of  $\tmtwo\cceq\tmthree$. 

\end{proof}

  %%%%%%%%%%%%
\subsection{An alternative presentation of $\cceq$}
%%%%%%%%%%%%

This subsection provides an equivalent characterization of the structural equivalence, established in order to facilitate the proof of the strong bisimulation theorem (\rthm{cceqalt_is_a_strong_bisimulation}).

\begin{definition}[Alternative-structural equivalence]
  \ldef{alt_equivalence}
 The structural equivalence relation $\cceqalt$ is defined as the reflexive, symmetric, transitive and contextual closure of the following axioms and assumes that both sides are typed:
  \[
    \begin{array}{llll}
            \of{\fctx}{\tm}\pelim{\patt}{\tmthree} & \cceqalt_{\ruleCCEqCtx} & \of{\fctx}{\tm \pelim{\patt}{\tmthree}} & (\fv{\patt}\cap\fv{\fctx}=\emptyset \mbox{ and } \tmthree\mbox{ free for }\fctx)
      \\
      \tm\eone{\ione} & \cceqalt_{\ruleCCEqBetaOne} & \tm
      \\      
      \tm\eone{\tmtwo}  & \cceqalt_{\ruleCCEqEOneSymm} & \tmtwo\eone{\tm}
      \\
      \ibott{\ione}{\tm} &
                           \cceqalt_{\ruleCCEqOneLUnitForBott} & \tm  &
      \\
      \ibott{(\invap{\tmtwo}{\tm})}{\tmthree} & \cceqalt_{\ruleCCEqInvapDualToAp} &
                                                            \ibott{\tm}{\ap{\tmtwo}{\tmthree}} &
      \\  
      \ibott{\pair{\tmthree}{\tm}}{\tmtwo} & \cceqalt_{\ruleCCEqPairDualToParElimTwo} & \ibott{\tmthree}{(\invap{\tmtwo}{\tm})}
     %  \\
     %  % \ibott{\lam{\lvartwo}{\tmtwo\cos{\lvarthree}{\tmthree}}}{\tm} & \cceqalt_{\ruleCCEqLamDualToPairElim} & \ibott {\tmthree}{\tmtwo \epair{\lvartwo}{\lvarthree}{\tm}} & 
     % \cancel{\ibott{\ipar{\lvar}{\lvartwo}{(\ibott{\tmtwo}{\tmthree})}}{\tm}}
     %                  & \cceqalt_{\ruleCCEqIParDualToEPair} & \ibott
     %                                                       {\tmthree}{\tmtwo
     %                                                       \epair{\lvar}{\lvartwo}{\tm}}
     %                    & \lvar,\lvartwo\notin\flv{\tmthree}
      \\
      \ibott{\ipar{\lvar}{\lvartwo}{\tmtwo}}{\tm}
                      & \cceqalt_{\ruleCCEqIParDualToEPair} & \tmtwo
                                                           \epair{\lvar}{\lvartwo}{\tm}
                        &
      % \\
      % \cancel{\ibott{(\iwhy{\uvar}{(\ibott{\tmtwo}{\tmthree})})}{\tm}} & \cceqalt_{\ruleCCEqIwhyDualToEofc}
      % & \ibott{\tmthree}{\tmtwo\eofc{\uvar}{\tm}}   &  \uvar\notin\fv{\tmthree}
      \\
      \ibott{(\iwhy{\uvar}{\tmtwo})}{\tm} & \cceqalt_{\ruleCCEqIwhyDualToEofc}
                                                                             & \tmtwo\eofc{\uvar}{\tm}
      % \\
      % \cancel{\ibott{(\iofctwo{\lvar}{\tm})\eone{\tmtwo}}{\tmthree}} & \cceqalt_{\ruleCCEqIofcDualToEwhy} & \ibott{\tmtwo}{\tm\ewhynew{\lvar}{\tmthree}{\tmtwo}} &  
      \\
     \ibott{(\iofctwo{\lvar}{\tm})}{\tmtwo} & \cceqalt_{\ruleCCEqIofcDualToEwhy} & \ewhy{\tm}{\lvar}{\tmtwo} &  
      \\
      % \ibott{(\ibott{\tmthree}{\tm})}{\tmtwo} & \cceqalt_{\ruleCCEqBottDualToEone}  & \ibott{\tmthree}{\tm\eone{\tmtwo}} & 
      \ibott{\tm\eone{\tmtwo}}{\tmthree} & \cceqalt_{\ruleCCEqBottDualToEone}  & \ibott{\tmtwo}{(\ibott{\tmthree}{\tm})} & 
    
    \end{array}
  \]
\end{definition}

\begin{lemma}
  \llem{cceq_derived}
  The following equations are in $\cceqalt$
  \[
    \begin{array}{llll}
      \ibott{\ione}{\tm} &
                           \cceqalt_{\ruleCCEqOneLUnitForBott}  & \tm  & 
      \\
      \of{\fctx}{\tm}\pelim{\patt}{\tmthree} & \cceqalt_{\ruleCCEqCtx} & \of{\fctx}{\tm \pelim{\patt}{\tmthree}} & \mbox{ if }\fv{\patt}\cap\fv{\fctx}=\emptyset \mbox{ and } \tmthree \mbox{ free for }\fctx
      \\
      \ibott{\tm}{\tmtwo} & \cceqalt_{\ruleCCEqiBottSym}    & \ibott{\tmtwo}{\tm} & 
      \\
      \ibott{(\invap{\tmtwo}{\tm})}{\tmthree} & \cceqalt_{\ruleCCEqInvapDualToPair} & \ibott{\tmtwo}{\pair{\tmthree}{\tm}} & 
      \\
      \ibott{(\invap{\tmtwo}{\tm})}{\tmthree} & \cceqalt_{\ruleCCEqInvapDualToAp} & \ibott{\tm}{\ap{\tmtwo}{\tmthree}} & 
      \\
      \ibott{(\ap{\tmtwo}{\tm})}{\tmthree} & \cceqalt_{\ruleCCEqApDualToParElimTwo} & \ibott{\tm}{\invap{\tmtwo}{\tmthree}} & 
      \\
      \ibott{\pair{\tm}{\tmtwo}}{\tmthree} & \cceqalt_{\ruleCCEqEPairDualToAp} & \ibott{\tmtwo}{(\ap{\tmthree}{\tm})} & 
      \\
      \ibott{\pair{\tm}{\tmtwo}}{\tmthree} & \cceqalt_{\ruleCCEqPairDualToParElimTwo} & \ibott{\tm}{(\invap{\tmthree}{\tmtwo})} & 
      \\
      \ibott{\ipar{\lvar}{\lvartwo}{\tmtwo}}{\tm}  & \cceq_{\ruleCCEqIParDualToEPair}  &  \tmtwo \epair{\lvar}{\lvartwo}{\tm} & 
      \\
      \ibott{(\iwhy{\uvar}{\tmtwo})}{\tm} & \cceqalt_{\ruleCCEqIwhyDualToEofc}
      & \tmtwo\eofc{\uvar}{\tm}   & 
      \\
       \ibott{(\iofctwo{\lvar}{\tm})\eone{\tmfour}}{\tmthree} & \cceqalt_{\ruleCCEqIofcDualToEwhy} & \ibott{\tmfour}{\ewhy{\tm}{\lvar}{\tmthree}} & 
      \\
     \ibott{\tm\eone{\tmtwo}}{\tmthree} & \cceqalt_{\ruleCCEqBottDualToEone}  & \ibott{\tmtwo}{(\ibott{\tmthree}{\tm})} & 
    \end{array}
  \]
\end{lemma}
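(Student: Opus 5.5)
We must show each listed equation is derivable in $\cceqalt$, using only the axioms of \rdef{alt_equivalence} (closed under reflexivity, symmetry, transitivity and contexts). Several are literally axioms: $\ruleCCEqOneLUnitForBott$, $\ruleCCEqCtx$, $\ruleCCEqInvapDualToAp$, $\ruleCCEqIParDualToEPair$, $\ruleCCEqIwhyDualToEofc$ and $\ruleCCEqBottDualToEone$ appear verbatim. The remaining ones are derived from a single key lemma, symmetry of $\ibott{}{}$, followed by short chains.

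\textbf{Symmetry of $\ibott{}{}$ first.} I would derive $\ibott{\tm}{\tmtwo}\cceqalt\ibott{\tmtwo}{\tm}$ from $\ruleCCEqBottDualToEone$ and unit laws. Take $\ibott{\tm}{\tmtwo}$. By $\ruleCCEqBetaOne$ (backwards) it is $\cceqalt \ibott{\tm\eone{\ione}}{\tmtwo}$. By $\ruleCCEqBottDualToEone$ this becomes $\ibott{\ione}{(\ibott{\tmtwo}{\tm})}$. By $\ruleCCEqOneLUnitForBott$ it becomes $\ibott{\tmtwo}{\tm}$.

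Typing needs a check here: $\tm\eone{\ione}$ has the type of $\tm$, and $\ibott{\tmtwo}{\tm}:\bott$ as required by $\ruleCCEqOneLUnitForBott$. So every side stays typed. This is the step I expect to be the main obstacle. It is the only place where a non-obvious detour is needed, and every later derivation needs symmetry in order to move $\ibott{}{}$ arguments into the position the axioms demand. If the unit route failed, I would instead try to obtain symmetry directly by case analysis on the shape of $\tm$ via the duality axioms, but I expect the unit route to work.

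\textbf{Multiplicative dualities.} With symmetry available, the remaining items are chains.
\begin{itemize}
\item $\ruleCCEqInvapDualToPair$: from $\ibott{(\invap{\tmtwo}{\tm})}{\tmthree}$, apply symmetry, then $\ruleCCEqPairDualToParElimTwo$ read right-to-left (instantiated as $\ibott{\pair{\tmthree}{\tm}}{\tmtwo}\cceqalt\ibott{\tmthree}{(\invap{\tmtwo}{\tm})}$), then symmetry again.
\item $\ruleCCEqApDualToParElimTwo$: $\ibott{(\ap{\tmtwo}{\tm})}{\tmthree}\cceqalt\ibott{\tm}{\invap{\tmtwo}{\tmthree}}$ is obtained from $\ruleCCEqInvapDualToAp$ (with roles renamed) plus two uses of symmetry.
\item $\ruleCCEqPairDualToParElimTwo$: the listed form is the axiom with its components renamed.
\item $\ruleCCEqEPairDualToAp$: $\ibott{\pair{\tm}{\tmtwo}}{\tmthree}\cceqalt\ibott{\tm}{(\invap{\tmthree}{\tmtwo})}$ by the axiom; symmetry gives $\ibott{(\invap{\tmthree}{\tmtwo})}{\tm}$; then $\ruleCCEqInvapDualToAp$ gives $\ibott{\tmtwo}{(\ap{\tmthree}{\tm})}$.
\end{itemize}

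\textbf{The exponential item.} For $\ibott{(\iofctwo{\lvar}{\tm})\eone{\tmfour}}{\tmthree}$, apply $\ruleCCEqBottDualToEone$ to get $\ibott{\tmfour}{(\ibott{\tmthree}{\iofctwo{\lvar}{\tm}})}$. Then apply, under the context $\ibott{\tmfour}{\Box}$, symmetry followed by the axiom $\ruleCCEqIofcDualToEwhy$, which yields $\ibott{\tmfour}{\ewhy{\tm}{\lvar}{\tmthree}}$. Contextual closure is legitimate here because $\ibott{\tmfour}{\Box}$ is an ordinary term context.

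Throughout, I would check that the freshness side conditions hold, which is immediate since no binders are crossed, and that all intermediate terms are typed.
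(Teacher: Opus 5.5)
Your proposal is correct and follows the paper's route. The one derivation the paper actually displays is exactly your symmetry chain $\ibott{\tm}{\tmtwo}\cceqalt\ibott{\tm\eone{\ione}}{\tmtwo}\cceqalt\ibott{\ione}{(\ibott{\tmtwo}{\tm})}\cceqalt\ibott{\tmtwo}{\tm}$ via $\ruleCCEqBetaOne$, $\ruleCCEqBottDualToEone$ and $\ruleCCEqOneLUnitForBott$; the paper mislabels it as the proof of $\ruleCCEqOneLUnitForBott$, which is itself an axiom. The paper leaves the remaining items as ``similar reasonings,'' and your short chains through symmetry, $\ruleCCEqPairDualToParElimTwo$, $\ruleCCEqInvapDualToAp$ and $\ruleCCEqIofcDualToEwhy$ correctly fill them in.
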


\begin{proof}
  We give the proof for the rule $ruleCCEqOneLUnitForBott$, the remaining rules follow from similar reasonings.
      \[\begin{array}{llll}
          & \ibott{\tm}{\tmtwo} \\
          \cceqalt & \ibott{\tm\eone{\ione}}{\tmtwo}  & \ruleCCEqBetaOne\\
          \cceqalt & \ibott{\ione}{(\ibott{\tmtwo}{\tm})} & \ruleCCEqBottDualToEone \\
          \cceqalt & \ibott{\tmtwo}{\tm} & \ruleCCEqOneLUnitForBott\\          
        \end{array}\]

\end{proof}

We establish several auxiliar results to show that the relation $\cceqalt$ coincides with $\cceq$.
The only challenge lies in proving that $\cceqalt$ captures $\ruleCCEqCosAndSub$, the fourth axiom of $\cceq$ (\rdef{cceq}).
The remaining axioms $\ruleCCEqCtx$, $\ruleCCEqBetaOne$, and $\ruleCCEqEOneSymm$ are common to both relations.
To this end, we begin by defining the notions of a deep context and the dual of a deep context.

\begin{definition}
\emph{Deep contexts} $\dctx$ are defined by the grammar:
\[
  \begin{array}{lll}
    \dctx & ::= & \ctxhole\,|\,\ipar{\lvar}{\lvartwo}{\dctx}\,|\, \invap{\dctx}{\tm} \,|\,\invap{\tm}{\dctx}\,|\,\ap{\dctx}{\tm} \,|\,\ap{\tm}{\dctx}\,|\, \pair{\dctx}{\tm} \,\\
    &  & |\,\pair{\tm}{\dctx}\,|\,
                  \ibott{\dctx}{\tm} \,|\,\ibott{\tm}{\dctx}\,|\, \iwhy{\uvar}{\dctx}\,|\,
                  \dctx\pelim{\patt}{\tm} \,|\,\tm\pelim{\patt}{\dctx}\,|\,\ewhy{\tm}{\lvar}{\dctx} 
   % \dctx\pelim{\patt}{\tm} \,|\,\tm\pelim{\patt}{\dctx}\\
%         & & 
  \end{array}
\]
\end{definition}

We say $\tm$ is free for $\dctx$ if the free variables of $\tm$ are not captured in $\of{\dctx}{\tm}$.
Similarly, a \emph{context $\dctx$ is free for a deep context $\dctxtwo$} if $\of{\dctxtwo}{\dctx}$ does not bind free occurrences of variables of $\dctx$.

% \begin{lemma}
% Let $\dctx$ and $\dctxtwo$ be deep contexts. Then $\dctx\cos{\ctxhole}{\dctxtwo}$ is a deep context. 
% \end{lemma}

% \begin{proof}
% By induction on $\dctx$. 
% \end{proof}

% \begin{lemma}[Deep $\ruleCCEqCtx$]
% $      \of{\dctx}{\tm}\pelim{\patt}{\tmthree} \cceq  \of{\dctx}{\tm \pelim{\patt}{\tmthree}}$, if $\fv{\patt}\cap\fv{\dctx}=\emptyset$ and $\dctx$ is free for $\tmthree$.
% \end{lemma}

% \begin{proof}
% By induction  on $\dctx$ using $\ruleCCEqCtx$. 
% \end{proof}

\begin{definition}[Dual of a deep context]
 \ldef{dual_of_a_context}
 The dual of a deep context $\dctx$, denoted $\lneg{\dctx}$, is defined as the deep context $\dctx\cos{\ctxhole}{\ctxhole}$. 
\end{definition}

%\begin{lemma}[{name=Contrasubstitution and Dual of a Deep Context~\proofnote{Proof on pg.~\pageref{cos_as_negation_of_context:proof}}, restate=[name=Contrasubstitution and Dual of a Context]ContrasubstitutionAndDualOfAContext}]
\begin{lemma}[Contrasubstitution lemma]
\llem{cos_as_negation_of_context}
Suppose $\tm$ is free for $\dctx$. Then $\of{\dctx}{\lvar}\cos{\lvar}{\tm}\cceq\of{\lneg{\dctx}}{\tm}$.
\end{lemma}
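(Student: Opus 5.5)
We argue by structural induction on the deep context $\dctx$, following the clauses of contra-substitution in \rdef{contrasub:for:MLL} and Fig.~\ref{contrasubstitution:for:CalcMELL}. Since $\lneg{\dctx}$ is defined as $\dctx\cos{\ctxhole}{\ctxhole}$, the hole plays the role of the contra-substituted variable. Hence $\lneg{\dctx}$ is computed by exactly the same recursion as $\of{\dctx}{\lvar}\cos{\lvar}{\tm}$, with $\ctxhole$ standing in for $\tm$ at the final step. The plan is to check that each defining clause commutes with plugging. In other words, we show that $\of{\dctx}{\lvar}\cos{\lvar}{\tm}$ and $\of{\dctx\cos{\ctxhole}{\ctxhole}}{\tm}$ unfold along the same path. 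Syntactic equality should hold in the purely multiplicative cases, and $\cceq$ is needed only where side conditions on bound variables intervene.

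\textbf{Base case.} For $\dctx=\ctxhole$ we have $\lvar\cos{\lvar}{\tm}=\tm=\of{\ctxhole}{\tm}$.

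\textbf{Inductive step.} We generalize the statement to handle accumulated arguments: for every linear $\tmtwo$ (possibly with holes), we show $\of{\dctx}{\lvar}\cos{\lvar}{\tmtwo}\cceq\of{\dctx\cos{\ctxhole}{\tmtwo'}}{\tm}$, where $\tmtwo$ is obtained from $\tmtwo'$ by plugging $\tm$. The reason for this generalization is that the clauses for $\pair{\dctx}{\tmthree}$, $\ap{\dctx}{\tmthree}$, $\invap{\tmthree}{\dctx}$, $\ibott{\dctx}{\tmthree}$, $\dctx\pelim{\patt}{\tmthree}$, etc.\ each push a new constructor into the second argument. For instance, $\pair{\of{\dctx_0}{\lvar}}{\tmthree}\cos{\lvar}{\tmtwo}=\of{\dctx_0}{\lvar}\cos{\lvar}{\invap{\tmtwo}{\tmthree}}$ continues the recursion with a larger argument. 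The induction hypothesis then applies directly, because the hole of $\dctx_0$ is unique (linearity) and the clause selected depends only on where $\lvar$ (equivalently, the hole) sits. The cases $\ipar{\lvar'}{\lvartwo}{\dctx_0}$, $\iwhy{\uvar}{\dctx_0}$, $\ewhy{\tmthree}{\lvar'}{\dctx_0}$, and $\tmthree\pelim{\patt}{\dctx_0}$ leave a trailing positive eliminator outside the recursive call, e.g.\ $\of{\dctx_0}{\lvar}\cos{\lvar}{\ione}\epair{\lvar'}{\lvartwo}{\tmtwo}$. For these we use the IH on $\dctx_0$ with argument $\ione$ and then reattach the eliminator by contextual closure.

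\textbf{Main obstacle.} The delicate part is variable capture. In $\of{\lneg{\dctx}}{\tm}$, the term $\tm$ ends up under binders that are not its binders in $\of{\dctx}{\lvar}$. This happens, for example, under $\ipar{\lvar'}{\lvartwo}{\cdot}$, which becomes an outer $\epair{\lvar'}{\lvartwo}{\cdot}$, or under $\iwhy{\uvar}{\cdot}$, which becomes $\eofc{\uvar}{\cdot}$. The hypothesis that $\tm$ is free for $\dctx$ is exactly what prevents capture; we will $\alpha$-rename bound variables of $\dctx$ away from $\fv{\tm}$ before unfolding. Where the recursion on the concrete term places the eliminator differently from the recursion on the context (in particular, when $\tmtwo$ is not plugged at the same depth), we move eliminators using $\ruleCCEqCtx$, via \rlem{elim_cos_cceq}. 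If the unit argument $\ione$ and a trailing $\eone{\cdot}$ appear, as in the $\ewhy{}{}{}$ and $\ibott{}{}$ clauses, we rewrite them using \rlem{simple_admissible}. Finally, we close the argument with \rlem{cos_compatible_with_equivalence:right} to transport the $\cceq$ obtained from the IH through the outer contra-substitution.
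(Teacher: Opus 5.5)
Your plan (induction on $\dctx$, with the contra-substitution argument generalized to an arbitrary term with at most one hole, and the bound variables of $\dctx$ renamed away from $\fv{\tm}$) is the paper's proof, which is stated there only as ``by induction on $\dctx$''; it goes through, and since every clause of $\cos{\cdot}{\cdot}$ treats its argument as a black box and branches only on where the variable sits, the two sides in fact coincide up to $\alpha$-renaming, so none of \rlem{elim_cos_cceq}, \rlem{simple_admissible} or \rlem{cos_compatible_with_equivalence:right} is needed, no typing assumptions arise, and the lemma is safe to use, as the paper does, to derive $\ruleCCEqCosAndSub$ inside $\cceqalt$. When you write the cases out, fix the bookkeeping: the clauses leaving a trailing eliminator are $\ipar{\lvartwo}{\lvarthree}{\dctx}$ and $\iwhy{\uvar}{\dctx}$ (recursive argument $\ione$), $\ewhy{\tmthree}{\lvartwo}{\dctx}$ (argument $\iofctwo{\lvartwo}{\tmthree}$), $\ibott{\dctx}{\tmthree}$ and $\ibott{\tmthree}{\dctx}$ (argument $\tmthree$), and $\dctx\pelim{\patt}{\tmthree}$ (argument unchanged), whereas $\tmthree\pelim{\patt}{\dctx}$ pushes $\ipar{\lvartwo}{\lvarthree}{(\ibott{\tmthree}{\cdot})}$, $\iwhy{\uvar}{(\ibott{\tmthree}{\cdot})}$ or $\ibott{\tmthree}{\cdot}$ into the argument; it is these cases, not $\ipar{\lvartwo}{\lvarthree}{\dctx}$ or $\iwhy{\uvar}{\dctx}$ (where $\tm$ lands outside the binder), in which free variables of $\tm$ could be captured, which the freeness hypothesis on $\dctx$ does not exclude, so your $\alpha$-renaming is what actually does the work.
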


% \ContrasubstitutionAndDualOfAContext*

\begin{proof}%\label{cos_as_negation_of_context:proof}
  By induction on $\dctx$.
\end{proof}

% \begin{lemma}[{name=Dual Contravariant on Deep Context Composition~\proofnote{Proof on pg.~\pageref{dual_of_nested_contexts:proof}}, restate=[name=Dual Contravariant on Context Composition]DualContravariantOnContextComposition}]
\begin{lemma}[Dual Contravariant on Deep Context Composition]
  \llem{dual_of_nested_contexts}
Suppose $\dctxtwo$ is free for $\dctx$. Then $\lneg{\of{\dctx}{\dctxtwo}}\cceq \of{\lneg{\dctxtwo}}{\lneg{\dctx}}$
\end{lemma}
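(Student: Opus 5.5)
I would prove the statement by induction on the outer deep context $\dctx$, keeping $\dctxtwo$ arbitrary. Throughout, I view a deep context as a linear term in which the hole $\ctxhole$ is a distinguished linear variable. Then $\lneg{\dctx} = \dctx\cos{\ctxhole}{\ctxhole}$ is an instance of contra-substitution, and $\of{\dctx}{\dctxtwo}$ is an instance of linear substitution $\dctx\sub{\ctxhole}{\dctxtwo}$. The freeness hypothesis on $\dctxtwo$ guarantees that this substitution is capture-avoiding, so all lemmas of the previous subsection apply.

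The cleanest route is to reduce the statement to \rlem{nested_contra_contra}. Rename the hole of $\dctx$ to a fresh variable $\lvartwo$ and the hole of $\dctxtwo$ to $\lvar$. Then
\[
\lneg{\of{\dctx}{\dctxtwo}} = \dctx\sub{\lvartwo}{\dctxtwo}\cos{\lvar}{\ctxhole}.
\]
By \rlem{nested_contra_contra}, with $\tmtwo := \dctx$, $\tm := \dctxtwo$ and $\tmthree := \ctxhole$, this is
\[
\cceq\ \dctxtwo\cos{\lvar}{\dctx\cos{\lvartwo}{\ctxhole}} = \dctxtwo\cos{\lvar}{\lneg{\dctx}}.
\]
It remains to identify $\dctxtwo\cos{\lvar}{\lneg{\dctx}}$ with $\of{\lneg{\dctxtwo}}{\lneg{\dctx}}$. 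This is exactly \rlem{cos_as_negation_of_context} applied with $\dctx := \dctxtwo$ and $\tm := \lneg{\dctx}$, which gives $\of{\dctxtwo}{\lvar}\cos{\lvar}{\lneg{\dctx}} \cceq \of{\lneg{\dctxtwo}}{\lneg{\dctx}}$. Chaining the two equivalences by transitivity of $\cceq$ yields the claim.

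The main obstacle is the legitimacy of applying these lemmas to contexts rather than terms. Three points need checking.
\begin{enumerate}
\item Linearity must hold: the hole must occur exactly once, and the set $\set{\dctx,\dctxtwo,\ctxhole}$ must be a linear set. This holds because deep contexts never place the hole under $\iofctwo{\lvar}{\cdot}$ or in the left argument of $\ewhy{\cdot}{\lvar}{\cdot}$, so linear-term conditions 3) and 4) are not violated.
\item The side condition of \rlem{cos_as_negation_of_context}, that $\lneg{\dctx}$ is free for $\dctxtwo$, must follow from the hypothesis that $\dctxtwo$ is free for $\dctx$, after $\alpha$-renaming bound variables of $\dctxtwo$ away from $\fv{\dctx}$.
\item The equations of $\cceq$ are stated for typed terms, while contexts are open. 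I would handle this by treating the hole as a typed variable of the appropriate type, so that every intermediate expression is typable by \rlem{contrasubstitution_lemma} and the substitution lemma.
\end{enumerate}

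If the reduction to \rlem{nested_contra_contra} fails because of the typing subtlety, the fallback is a direct induction on $\dctx$, case by case on its outermost constructor. For example, take $\dctx = \pair{\dctx'}{\tm}$. Then $\lneg{\of{\dctx}{\dctxtwo}} = \lneg{\of{\dctx'}{\dctxtwo}}$ with the hole filled by $\invap{\ctxhole}{\tm}$. Applying the induction hypothesis gives $\of{\lneg{\dctxtwo}}{\lneg{\dctx'}}$ filled likewise, and unfolding the definition of $\lneg{\dctx}$ shows this agrees. Each remaining constructor case follows the same pattern, using the corresponding clause of contra-substitution. The $\ruleCCEqCtx$-flotation lemma (\rlem{elim_cos_cceq}) would be needed in the eliminator cases where $\cceq$ rather than syntactic equality arises.
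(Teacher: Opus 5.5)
Your main argument is correct, but it takes a different route from the paper, which proves the lemma by a direct induction on $\dctx$ (essentially your fallback). You read the holes as variables $\lvartwo$ (for $\dctx$) and $\lvar$ (for $\dctxtwo$). The statement then becomes the instance $\tmtwo:=\dctx$, $\tm:=\dctxtwo$, $\tmthree:=\ctxhole$ of \rlem{nested_contra_contra}, followed by \rlem{cos_as_negation_of_context}. Since \rlem{nested_contra_contra} is itself proved by induction on $\tmtwo$, you reuse exactly the induction the paper redoes. You also get a clear reading of the hypothesis: ``$\dctxtwo$ free for $\dctx$'' says that the plug $\of{\dctx}{\dctxtwo}$ coincides with the capture-avoiding substitution $\dctx\sub{\lvartwo}{\dctxtwo}$. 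The paper's counterexample with $\ipar{\lvarthree}{\lvarfour}{\ctxhole}$ is exactly a capturing plug. The direct induction, for its part, works on contexts natively and makes visible that each case needs only $\ruleCCEqCtx$-flotation. This matters because the lemma is invoked inside $\cceqalt$, in \rlem{cceqalt_derives_ruleCCEqCosAndSub}, on the way to $\cceq\subseteq\cceqalt$. On your route you should therefore check that the proof of \rlem{nested_contra_contra} never uses $\ruleCCEqCosAndSub$. It does not: each inductive case is either a syntactic identity or an eliminator flotation as in \rlem{elim_cos_cceq}, under contextual closure. Finally, your point 1 is wrong as stated. A deep context read as a term need not be linear, because condition 2) can fail: $\ipar{\lvar}{\lvartwo}{\ctxhole}$ and $\ctxhole\epair{\lvar}{\lvartwo}{\tm}$ bind linear variables that do not occur. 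For the same reason such contexts are not typable with the hole as a variable, so point 3 fails in general too. The transfer of \rlem{nested_contra_contra} to contexts is justified instead by assuming, as in the paper's use, that $\of{\dctx}{\dctxtwo}$ with its hole read as a variable is a linear typed term. Freeness then forces every variable bound by $\dctx$ to occur in $\dctx$ itself, so both pieces are linear and typable.
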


  The following example illustrates why the condition that $\dctxtwo$ be free for $\dctx$ is required.
Suppose $\dctx= \ipar{\lvarthree}{\lvarfour}{\ctxhole}$ and $\dctxtwo=\ctxhole\epair{\lvar}{\lvartwo}{\pair{\lvarthree}{\lvarfour}}$. Then
\[
  \begin{array}{l}
    \of{\dctx}{\dctxtwo}=\ipar{\lvarthree}{\lvarfour}{\ctxhole\epair{\lvar}{\lvartwo}{\pair{\lvarthree}{\lvarfour}}} \\ 
    \lneg{\of{\dctx}{\dctxtwo}} = \ione\epair{\lvar}{\lvartwo}{\pair{\lvarthree}{\lvarfour}} \epair{\lvarthree}{\lvarfour}{\ctxhole}\\
    \lneg{\dctx} = \ione\epair{\lvarthree}{\lvarfour}{\ctxhole}\\
    \lneg{\dctxtwo} = \ctxhole\epair{\lvar}{\lvartwo}{\pair{\lvarthree}{\lvarfour}}\\
    \of{\lneg{\dctxtwo}}{\lneg{\dctx}} = \ione\epair{\lvarthree}{\lvarfour}{\ctxhole}\epair{\lvar}{\lvartwo}{\pair{\lvarthree}{\lvarfour}}\\
  \end{array}
\]
Notice that
\[
  \begin{array}{l}
    \lneg{\of{\dctx}{\dctxtwo}} = \ione\epair{\lvar}{\lvartwo}{\pair{\lvarthree}{\lvarfour}} \epair{\lvarthree}{\lvarfour}{\ctxhole}  \not\cceq
    \ione\epair{\lvarthree}{\lvarfour}{\ctxhole}\epair{\lvar}{\lvartwo}{\pair{\lvarthree}{\lvarfour}}
    =\of{\lneg{\dctxtwo}}{\lneg{\dctx}}
  \end{array}
\]

% \DualContravariantOnContextComposition*

\begin{proof}%\label{dual_of_nested_contexts:proof}
  By induction on $\dctx$.
\end{proof}

We are ready to prove that $\ruleCCEqCosAndSub$ is captured by $\cceqalt$ (\rlem{cceqalt_derives_ruleCCEqCosAndSub}). 

% \begin{remark}
%   The following lemma (\rlem{cceqalt_derives_ruleCCEqCosAndSub}) relies on \rlem{dual_of_nested_contexts}, \rlem{elim_cos_cceq} and \rlem{cos_compatible_with_equivalence:right}. All these results may be proved in $\cceqalt$ too. 
% \end{remark}

% \begin{lemma}[{name=Derivability of $\ruleCCEqCosAndSub$~\proofnote{Proof on pg.~\pageref{cceqalt_derives_ruleCCEqCosAndSub:proof}}, restate=[name=Derivability of $\ruleCCEqCosAndSub$]DerivabilityOfSplit}]
\begin{lemma}[Derivability of $\ruleCCEqCosAndSub$]
  \llem{cceqalt_derives_ruleCCEqCosAndSub}
$\ibott{\of{\lneg{\dctx}}{\ione}}{\tmthree}\cceqalt \of{\dctx}{\tmthree}$, if $\tmthree$ is free for $\dctx$.
\end{lemma}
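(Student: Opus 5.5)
I would prove the statement by induction on the deep context $\dctx$. At each step I would unfold $\lneg{\dctx}=\dctx\cos{\ctxhole}{\ctxhole}$ using one clause of the contra-substitution definition (\rdef{contrasub:for:MLL} and Fig.~\ref{contrasubstitution:for:CalcMELL}). The resulting $\cceqalt$-step would then be justified by the matching duality axiom of $\cceqalt$ or by a derived equation of \rlem{cceq_derived}.

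Rather than peeling the outermost constructor, I would phrase the induction as peeling the constructor nearest the hole: write $\dctx=\of{\dctx'}{\fctx}$ with $\fctx$ a one-layer deep context. Then \rlem{dual_of_nested_contexts}, which applies because $\tmthree$ and hence $\fctx$ is free for $\dctx'$, gives
\[
\lneg{\dctx}\cceqalt\of{\lneg{\fctx}}{\lneg{\dctx'}} .
\]
Hence $\ibott{\of{\lneg{\dctx}}{\ione}}{\tmthree}\cceqalt\ibott{\of{\lneg{\fctx}}{\of{\lneg{\dctx'}}{\ione}}}{\tmthree}$, by contextuality. It then suffices to prove a one-layer lemma: for every single-constructor deep context $\fctx$ and every term $\tmfour$,
\[
\ibott{\of{\lneg{\fctx}}{\tmfour}}{\tmthree}\cceqalt\ibott{\tmfour}{\of{\fctx}{\tmthree}} .
\]
Applying this with $\tmfour=\of{\lneg{\dctx'}}{\ione}$, then $\ruleCCEqiBottSym$, and then the induction hypothesis applied to $\dctx'$ with argument $\of{\fctx}{\tmthree}$ closes the inductive step. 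The base case $\dctx=\ctxhole$ is $\ibott{\ione}{\tmthree}\cceqalt\tmthree$, which is exactly $\ruleCCEqOneLUnitForBott$.

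The one-layer lemma is a case analysis on $\fctx$, and each case reads off an axiom. For $\fctx=\ap{\tm}{\ctxhole}$ we have $\lneg{\fctx}=\invap{\tm}{\ctxhole}$, and $\ruleCCEqInvapDualToAp$ (up to symmetry) gives the claim. The other MLL cases $\ap{\ctxhole}{\tm}$, $\invap{\ctxhole}{\tm}$, $\invap{\tm}{\ctxhole}$, $\pair{\ctxhole}{\tm}$ and $\pair{\tm}{\ctxhole}$ follow from $\ruleCCEqApDualToParElimTwo$, $\ruleCCEqEPairDualToAp$ and $\ruleCCEqPairDualToParElimTwo$ together with $\ruleCCEqiBottSym$. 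The case $\fctx=\ipar{\lvar}{\lvartwo}{\ctxhole}$ has $\lneg{\fctx}=\ione\epair{\lvar}{\lvartwo}{\ctxhole}$; I would move the eliminator out with $\ruleCCEqCtx$ and then use $\ruleCCEqIParDualToEPair$. The case $\iwhy{\uvar}{\ctxhole}$ is analogous with $\ruleCCEqIwhyDualToEofc$. The cases $\ibott{\tm}{\ctxhole}$ and $\ctxhole\eone{\tm}$ use $\ruleCCEqBottDualToEone$. The case $\ewhy{\tm}{\lvar}{\ctxhole}$ has dual $\iofctwo{\lvar}{\tm}\eone{\ctxhole}$ and uses $\ruleCCEqIofcDualToEwhy$ together with $\ruleCCEqEOneSymm$. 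The remaining positive-eliminator cases $\ctxhole\pelim{\patt}{\tm}$ and $\tm\pelim{\patt}{\ctxhole}$ reduce to the tensor, of-course and one cases, using the clauses of Fig.~\ref{contrasubstitution:for:CalcMELL} and $\ruleCCEqCtx$.

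I expect the main obstacle to be the binding cases, namely $\ipar{\lvar}{\lvartwo}{\ctxhole}$, $\iwhy{\uvar}{\ctxhole}$ and the eliminators binding in the hole position. There the freeness side conditions must be tracked: $\tmthree$ must be free for $\fctx$, and $\fv{\patt}$ must avoid the term being floated when applying $\ruleCCEqCtx$. The dual of such a context places the eliminator outside, so I would first $\alpha$-rename so that the bound variables do not occur in $\tmfour=\of{\lneg{\dctx'}}{\ione}$. Linearity of the term and freeness of $\tmthree$ for $\dctx$ should make this renaming harmless. I would also invoke \rlem{cos_compatible_with_equivalence:right} and \rlem{elim_cos_cceq} whenever the contra-substitution clause leaves an eliminator attached at a different position than the one the axiom expects.
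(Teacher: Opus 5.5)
Your plan is the inside-out induction the paper sketches: peel the layer nearest the hole, split $\lneg{\dctx}$ with \rlem{dual_of_nested_contexts}, then do a one-layer case analysis. Your base case and one-layer lemma are fine. The gap is the inference ``$\tmthree$ and hence $\fctx$ is free for $\dctx'$''. Freeness of $\fctx$ for $\dctx'$ concerns the free variables of the \emph{side term} of $\fctx$, and the hypothesis on $\tmthree$ says nothing about them. The same condition is needed again when you apply the induction hypothesis to $\dctx'$ with argument $\of{\fctx}{\tmthree}$.

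This is the generic case, not a corner case. Since $\of{\dctx}{\tmthree}$ is typed and $\tmthree$ is free for $\dctx$, every linear variable bound above the hole (by $\ipar{\lvar}{\lvartwo}{\cdot}$ or $\cdot\epair{\lvar}{\lvartwo}{\tm}$) must occur in a side term of some layer between its binder and the hole. For example, take $\dctx=\ibott{\tm}{(\ipar{\lvar}{\lvartwo}{\ibott{\pair{\lvar}{\lvartwo}}{\ctxhole}})}$ with $\lvar:\typ$, $\lvartwo:\typtwo$, $\tm:\typ\tensor\typtwo$ and $\tmthree:\lneg{\typ}\parr\lneg{\typtwo}$. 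Then $\fctx=\ibott{\pair{\lvar}{\lvartwo}}{\ctxhole}$ and $\dctx'=\ibott{\tm}{(\ipar{\lvar}{\lvartwo}{\ctxhole})}$.
\begin{itemize}
\item $\lneg{\dctx}=\pair{\lvar}{\lvartwo}\eone{\ione}\epair{\lvar}{\lvartwo}{\tm}\eone{\ctxhole}$, whereas $\of{\lneg{\fctx}}{\lneg{\dctx'}}=\pair{\lvar}{\lvartwo}\eone{\ione\epair{\lvar}{\lvartwo}{\tm}\eone{\ctxhole}}$ has $\lvar,\lvartwo$ free. Since the axioms of $\cceqalt$ preserve free variables, your first step fails.
\item The instance of the induction hypothesis you need fails for the same reason. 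Its left-hand side $\ibott{\of{\lneg{\dctx'}}{\ione}}{(\ibott{\pair{\lvar}{\lvartwo}}{\tmthree})}$ has $\lvar,\lvartwo$ free, while $\of{\dctx}{\tmthree}$ does not.
\end{itemize}
$\alpha$-renaming cannot fix this: it is a genuine binding, not a name clash. It is exactly the situation of the counterexample stated after \rlem{dual_of_nested_contexts}.

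One repair is to stop decomposing $\lneg{\dctx}$ and generalize $\ione$ to an arbitrary term. Prove
\[
\ibott{\of{\dctx}{\lvarfour}\cos{\lvarfour}{\tmtwo}}{\tmthree}\cceqalt\ibott{\tmtwo}{\of{\dctx}{\tmthree}}
\]
for all $\tmtwo$, with $\tmthree$ free for $\dctx$, by induction on $\dctx$ peeling the \emph{outermost} layer and unfolding its contra-substitution clause directly. The induction hypothesis is then always used with the same $\tmthree$, so side terms may mention variables bound higher up.
\begin{itemize}
\item Binder layers become easy. For instance $(\ipar{\lvar}{\lvartwo}{\of{\dctx'}{\lvarfour}})\cos{\lvarfour}{\tmtwo}=\of{\dctx'}{\lvarfour}\cos{\lvarfour}{\ione}\epair{\lvar}{\lvartwo}{\tmtwo}$. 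You conclude by $\ruleCCEqCtx$ (which needs only $\lvar,\lvartwo\notin\fv{\tmthree}$), the hypothesis at $\ione$, $\ruleCCEqOneLUnitForBott$, $\ruleCCEqIParDualToEPair$ and $\ruleCCEqiBottSym$.
\item Non-binding layers reduce to your one-layer lemma. Where a clause leaves an eliminator outside, first float it with $\ruleCCEqCtx$ or \rlem{elim_cos_cceq}.
\end{itemize}
Finally, instantiate $\tmtwo:=\ione$ and note that $\of{\dctx}{\lvarfour}\cos{\lvarfour}{\ione}=\of{\lneg{\dctx}}{\ione}$ (\rlem{cos_as_negation_of_context}). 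One more use of $\ruleCCEqOneLUnitForBott$ then gives the lemma.
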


% \DerivabilityOfSplit*

\begin{proof}%\label{cceqalt_derives_ruleCCEqCosAndSub:proof}

    By induction  on $\dctx$, ``inside out'' using \rlem{dual_of_nested_contexts}, \rlem{elim_cos_cceq}, and \rlem{cceq_derived}. In other words, we consider each possible case for the innermost term constructor used in constructing $\dctx$. 
\end{proof}

  %%%% \cceq is a strong bisumulation

%   \begin{lemma}
% Let $\ibott{\tmtwo}{\tm}$ be a linear term and $\lvar\in\fv{\tmtwo}$. Then $\ibott{\tmtwo\cos{\lvar}{\tm}}{\tmthree} \cceqalt \ibott{\tmtwo\sub{\lvar}{\tmthree}}{\tm}$
% \end{lemma}

% \begin{proof}
%   \[\begin{array}{llll}
%  & \ibott{\tmtwo\sub{\lvar}{\tmthree}}{\tm}  \\
%       = & (\ibott{\tmtwo}{\tm})\sub{\lvar}{\tmthree} \\
%       \cceqalt & \ibott{(\ibott{\tmtwo}{\tm})\cos{\lvar}{\ione}}{\tmthree} & \rlem{cceqalt_derives_ruleCCEqCosAndSub}\\
%       = & \ibott{\tmtwo\cos{\lvar}{\tm\eone{\ione}}}{\tmthree} \\
%       \cceqalt & \ibott{\tmtwo\cos{\lvar}{\tm}}{\tmthree} & \ruleCCEqBetaOne, \rlem{cos_compatible_with_equivalence:right}
      
%     \end{array}\]
% \end{proof}

% \begin{proposition}[{name=$\cceq$ and $\cceqalt$ Coincide~\proofnote{Proof on pg.~\pageref{cceqalt_coincides_with_cceq:proof}}, restate=[name=$\cceq$ and $\cceqalt$ Coincide]CceqAndCceqaltCoincide}]
\begin{proposition}
\lprop{cceq_equals_cceqalt}
$\cceq = \cceqalt$.
\end{proposition}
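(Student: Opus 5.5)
We prove the two inclusions separately. Both relations are the reflexive, symmetric, transitive and contextual closure of a set of typed axioms. It therefore suffices to show that each axiom of one relation is derivable in the other; closure then follows, since each relation is itself closed under those operations.

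\textbf{Inclusion $\cceq\subseteq\cceqalt$.} The axioms $\ruleCCEqCtx$, $\ruleCCEqBetaOne$ and $\ruleCCEqEOneSymm$ are shared, so only $\ruleCCEqCosAndSub$ needs work: $\ibott{\tm\cos{\lvar}{\ione}}{\tmthree}\cceqalt\tm\sub{\lvar}{\tmthree}$ for typed linear $\tm$ with $\lvar\in\fv{\tm}$. First, decompose $\tm$ as $\of{\dctx}{\lvar}$ for the deep context $\dctx$ obtained by replacing the unique occurrence of $\lvar$ by the hole. This is possible because, by linearity, $\lvar$ cannot occur under $\iofctwo{\lvartwo}{\cdot}$, nor in the left argument of $\ewhy{\cdot}{\lvartwo}{\cdot}$, and these are exactly the positions excluded from deep contexts. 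After $\alpha$-renaming the binders of $\dctx$ away from $\fv{\tmthree}$, the term $\tmthree$ is free for $\dctx$, and $\tm\sub{\lvar}{\tmthree}=\of{\dctx}{\tmthree}$. Next, \rlem{cos_as_negation_of_context} gives $\tm\cos{\lvar}{\ione}\cceq\of{\lneg{\dctx}}{\ione}$. We need this equivalence in $\cceqalt$: its proof (and those of \rlem{elim_cos_cceq} and \rlem{cos_compatible_with_equivalence:right}) uses only $\ruleCCEqCtx$-style propagation and the defining clauses of contra-substitution, so it goes through in $\cceqalt$ as well. Alternatively, we observe that $\tm\cos{\lvar}{\ione}$ and $\of{\lneg{\dctx}}{\ione}$ are syntactically equal, up to eliminator floating, by unfolding the definition $\lneg{\dctx}=\dctx\cos{\ctxhole}{\ctxhole}$. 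Closing under the context $\ibott{\ctxhole}{\tmthree}$ and applying \rlem{cceqalt_derives_ruleCCEqCosAndSub} then yields
$\ibott{\tm\cos{\lvar}{\ione}}{\tmthree}\cceqalt\ibott{\of{\lneg{\dctx}}{\ione}}{\tmthree}\cceqalt\of{\dctx}{\tmthree}=\tm\sub{\lvar}{\tmthree}$.

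\textbf{Inclusion $\cceqalt\subseteq\cceq$.} Again the shared axioms are immediate. The new axioms are $\ruleCCEqOneLUnitForBott$, $\ruleCCEqInvapDualToAp$, $\ruleCCEqPairDualToParElimTwo$, $\ruleCCEqIParDualToEPair$, $\ruleCCEqIwhyDualToEofc$, $\ruleCCEqIofcDualToEwhy$ and $\ruleCCEqBottDualToEone$. Each one is an instance of $\ruleCCEqCosAndSub$ read in reverse, possibly combined with $\ruleCCEqBetaOne$ or $\ruleCCEqEOneSymm$. Take the left-hand side $\ibott{\tmtwo}{\tm}$ and pick a fresh linear variable $\lvar$ of type $\lneg{\typtwo}$, where $\typtwo$ is the type of $\tmtwo$. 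Then $\ibott{\tmtwo}{\tm}=(\ibott{\tmtwo}{\lvar})\sub{\lvar}{\tm}\cceq\ibott{(\ibott{\tmtwo}{\lvar})\cos{\lvar}{\ione}}{\tm}$. We compute the contra-substitution via its defining clause: $(\ibott{\tmtwo}{\lvar})\cos{\lvar}{\ione}=\lvar\cos{\lvar}{\tmtwo}\eone{\ione}=\tmtwo\eone{\ione}\cceq\tmtwo$. Using instead the decomposition dictated by the shape of the right-hand side produces the required term. For example, for $\ruleCCEqIParDualToEPair$ we write $\tmtwo\epair{\lvar}{\lvartwo}{\tm}=\tmtwo\epair{\lvar}{\lvartwo}{\lvarthree}\sub{\lvarthree}{\tm}\cceq\ibott{(\tmtwo\epair{\lvar}{\lvartwo}{\lvarthree})\cos{\lvarthree}{\ione}}{\tm}$. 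The $\epair{}{}{}$ clause then reduces this to $\ibott{\ipar{\lvar}{\lvartwo}{(\ibott{\tmtwo}{\ione})}}{\tm}$, and $\ibott{\tmtwo}{\ione}\cceq\tmtwo\eone{\ione}\cceq\tmtwo$ by the derived equation $\tm\eone{\tmtwo}\cceq\ibott{\tm}{\tmtwo}$ for $\tm:\bott$. The cases for $\ewhy{}{}{}$, $\eofc{}{}$, $\ap{}{}$, $\invap{}{}$ and $\eone{}$ follow the same pattern using the corresponding clauses of Fig.~\ref{contrasubstitution:for:CalcMELL}. Case $\ruleCCEqOneLUnitForBott$ uses $\ione\eone{\tm}\cceq\tm$ together with $\ruleCCEqEOneSymm$.

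The main obstacle is the first inclusion: we must ensure that the auxiliary lemmas used there (\rlem{cos_as_negation_of_context}, \rlem{dual_of_nested_contexts} and \rlem{elim_cos_cceq}) hold for $\cceqalt$ and not merely for $\cceq$, to avoid circularity. I would check that their inductive proofs invoke only $\ruleCCEqCtx$, $\ruleCCEqBetaOne$, $\ruleCCEqEOneSymm$ and the derived equations of \rlem{cceq_derived}, and never $\ruleCCEqCosAndSub$.
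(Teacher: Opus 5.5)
Your proposal is correct and follows the paper's proof. The inclusion $\cceq\subseteq\cceqalt$ reduces $\ruleCCEqCosAndSub$ to \rlem{cos_as_negation_of_context} and \rlem{cceqalt_derives_ruleCCEqCosAndSub}, exactly as the paper does, and the converse derives each new axiom by reading $\ruleCCEqCosAndSub$ backwards and unfolding the contra-substitution clauses (your choices of subterm to abstract for $\ruleCCEqIParDualToEPair$ and $\ruleCCEqIwhyDualToEofc$ give shorter derivations than the paper's). One small slip: for $\ruleCCEqOneLUnitForBott$, the term $\ione\eone{\tm}$ is ill-typed when $\tm:\bott$, but that case is simply the instance of $\ruleCCEqCosAndSub$ whose body is a bare variable $\lvar:\bott$, since $\lvar\cos{\lvar}{\ione}=\ione$ and $\lvar\sub{\lvar}{\tm}=\tm$.
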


% The proof proceeds by verifying the two inclusions. For  $\cceq\subseteq\cceqalt$, since all three axioms $\ruleCCEqCtx$,  $\ruleCCEqBetaOne$ and $\ruleCCEqEOneSymm$ are included in $\cceqalt$, we are just left to address $\ruleCCEqCosAndSub$.  In order to do so, we first extend surface context to deep contexts:
% \[
%   \begin{array}{lll}
%     \dctx & ::= & \ctxhole\,|\,\ipar{\lvar}{\lvartwo}{\dctx}\,|\, \invap{\dctx}{\tm} \,|\,\invap{\tm}{\dctx}\,|\,\ap{\dctx}{\tm} \,|\,\ap{\tm}{\dctx}\,|\, \pair{\dctx}{\tm} \,|\,\pair{\tm}{\dctx}\,|\,
%                   \ibott{\dctx}{\tm} \,|\,\ibott{\tm}{\dctx}\,|\, \iwhy{\uvar}{\dctx}\,|\,
%                   \dctx\pelim{\patt}{\tm} \,|\,\tm\pelim{\patt}{\dctx}\,|\,\ewhy{\tm}{\lvar}{\dctx}\\
%   \end{array}
%   \]
%   Then we define the dual of a deep context $\dctx$, denoted $\lneg{\dctx}$, as the deep context $\dctx\cos{\ctxhole}{\ctxhole}$, and observe that $\of{\dctx}{\lvar}\cos{\lvar}{\tm}\cceq\of{\lneg{\dctx}}{\tm}$. Finally we prove that $\ibott{\of{\lneg{\dctx}}{\ione}}{\tmthree}\cceqalt \of{\dctx}{\tmthree}$, if $\tmthree$ is free for $\dctx$ (\cf~\rlem{cceqalt_derives_ruleCCEqCosAndSub}).

%\CceqAndCceqaltCoincide*

\begin{proof}\label{cceqalt_coincides_with_cceq:proof}
  We prove that $\cceq\subseteq \cceqalt$ and then that $\cceqalt\subseteq\cceq$. We do so by showing that the axioms of
  $\cceq$ are provable in $\cceqalt$, for the first item, and then that the axioms of $\cceqalt$ are provable in $\cceq$, for the second one.
  \begin{xenumerate}

  \item $\cceq\subseteq\cceqalt$.

    For the first item we note the following:
  \begin{xenumerate}
    
  \item $\ruleCCEqCtx, \ruleCCEqBetaOne$, and $\ruleCCEqEOneSymm$.
    \[\begin{array}{llll}
        \of{\fctx}{\tm}\pelim{\patt}{\tmthree} & \cceq_{\ruleCCEqCtx} & \of{\fctx}{\tm \pelim{\patt}{\tmthree}}
        \\
        \tm\eone{\ione} & \cceq_{\ruleCCEqBetaOne}  & \tm
        \\
         \tm\eone{\tmtwo}  & \cceq_{\ruleCCEqEOneSymm} & \tmtwo\eone{\tm}
      \end{array}
    \]
    where, in axiom $\ruleCCEqCtx$,  $\fv{\patt}\cap\fv{\fctx}=\emptyset$ and $\tmthree$ free for $\fctx$. All three axioms are included in $\cceqalt$, so we conclude immediately.
    
   \item $\ruleCCEqCosAndSub$.
     \[
       \ibott{\tm\cos{\lvar}{\ione}}{\tmthree}  
      \cceq_{\ruleCCEqCosAndSub} \tm\sub{\lvar}{\tmthree}                                                          
   \]
    This axiom is provable in $\cceqalt$, as follows from \rlem{cos_as_negation_of_context} and \rlem{cceqalt_derives_ruleCCEqCosAndSub}.
  
  \end{xenumerate}                                        

\item $\cceqalt\subseteq\cceq$.
  
  We now consider the second item. Once again, axioms  $\ruleCCEqCtx, \ruleCCEqBetaOne$, and $\ruleCCEqCosAndSub$ are all included in $\cceq$. So it remains to show that the others are provable in $\cceq$. 

  \begin{xenumerate}

  \item $\tm\cceq \ibott{\ione}{\tm}$
    \[\begin{array}{llll}
        & \tm \\
       \cceq & \tm\eone{\ione} & \ruleCCEqBetaOne\\
       = & \lvar\eone{\ione}\sub{\lvar}{\tm} \\
       \cceq & \ibott{\lvar\eone{\ione}\cos{\lvar}{\ione}}{\tm} & \ruleCCEqCosAndSub\\
      = & \ibott{\ione\eone{\ione}}{\tm} & \rdef{cos}\\
      \cceq  & \ibott{\ione}{\tm} &  \ruleCCEqBetaOne
      \end{array}\]

        \item $\ibott{(\invap{\tmtwo}{\tm})}{\tmthree}  \cceq \ibott{\tm}{(\ap{\tmtwo}{\tmthree})} $

\[\begin{array}{llll}
    & \ibott{(\invap{\tmtwo}{\tm})}{\tmthree} \\
   =  & (\ibott{(\invap{\tmtwo}{\lvar})}{\tmthree})\sub{\lvar}{\tm} \\
   \cceq  & \ibott{(\ibott{(\invap{\tmtwo}{\lvar})}{\tmthree})\cos{\lvar}{\ione}}{\tm} & \ruleCCEqCosAndSub\\
   =  & \ibott{(\invap{\tmtwo}{\lvar})\cos{\lvar}{\tmthree\eone{\ione}}}{\tm} \\
   =  & \ibott{(\ap{\tmtwo}{\tmthree\eone{\ione}})}{\tm} \\
   \cceq  &  \ibott{(\ap{\tmtwo}{\tmthree})}{\tm} &  \ruleCCEqBetaOne\\
   \cceq  &\ibott{\tm}{(\ap{\tmtwo}{\tmthree})} & \ruleCCEqiBottSym (\rlem{cceq_derived})\\
  \end{array}
\]

  \item $\ibott{\pair{\tmthree}{\tm}}{\tmtwo} \cceq \ibott {\tmthree}{(\invap{\tmtwo}{\tm})}$
    
\[\begin{array}{llll}      
    & \ibott{\pair{\tmthree}{\tm}}{\tmtwo} & \\
  =  & \ibott{\pair{\tmthree\eone{\ione}}{\tm}}{\tmtwo} & \ruleCCEqBetaOne\\
   =  & \ibott{(\invap{\lvar}{\tm})\cos{\lvar}{\tmthree\eone{\ione}}}{\tmtwo} & \rdef{cos}\\
   \cceq  & \ibott{(\ibott{(\invap{\lvar}{\tm})}{\tmthree})\cos{\lvar}{\ione}}{\tmtwo} & \rdef{cos}\\
   =  & (\ibott{(\invap{\lvar}{\tm})}{\tmthree})\sub{\lvar}{\tmtwo} & \ruleCCEqCosAndSub\\
    = & \ibott{(\invap{\tmtwo}{\tm})}{\tmthree} \\
\end{array}
\]

\item  $      \ibott{(\ipar{\lvar}{\lvartwo}{\tmtwo})}{\tm}  \cceq_{\ruleCCEqIParDualToEPair}  \tmtwo \epair{\lvar}{\lvartwo}{\tm}
$

  \[\begin{array}{llll}
    & \ibott{(\ipar{\lvar}{\lvartwo}{\tmtwo})}{\tm} \\
   =  & \ibott{(\ipar{\lvar}{\lvartwo}{(\ibott{\tmtwo}{\ione})})}{\tm} &  \ruleCCEqOneLUnitForBott\\
   =  & (\ibott{(\ipar{\lvar}{\lvartwo}{(\ibott{\tmtwo}{\lvarthree})})}{\tm})\sub{\lvarthree}{\ione} \\
      \cceq  & \ibott{(\ibott{(\ipar{\lvar}{\lvartwo}{(\ibott{\tmtwo}{\lvarthree})})}{\tm})\cos{\lvarthree}{\ione}}{\ione} & \ruleCCEqCosAndSub\\
     \cceq  & (\ibott{(\ipar{\lvar}{\lvartwo}{(\ibott{\tmtwo}{\lvarthree})})}{\tm})\cos{\lvarthree}{\ione} & \ruleCCEqOneLUnitForBott\\ 
      =  & (\ipar{\lvar}{\lvartwo}{(\ibott{\tmtwo}{\lvarthree})})\cos{\lvarthree}{\tm}\eone{\ione} & \rdef{cos}\\
      \cceq  & (\ipar{\lvar}{\lvartwo}{(\ibott{\tmtwo}{\lvarthree})})\cos{\lvarthree}{\tm} & \ruleCCEqBetaOne\\
      =  & (\ibott{\tmtwo}{\lvarthree})\cos{\lvarthree}{\ione}\epair{\lvar}{\lvartwo}{\tm} & \rdef{cos}\\
   \cceq  &
            \lvarthree\cos{\lvarthree}{\tmtwo}\eone{\ione}\epair{\lvar}{\lvartwo}{\tm}{\tmthree}
      & \rlem{contra_contra_lemma} \\
   =  &
        \tmtwo\eone{\ione}\epair{\lvar}{\lvartwo}{\tm} & \rdef{cos}
        \\
   =  &
        \tmtwo\epair{\lvartwo}{\lvarthree}{\tm} & \ruleCCEqBetaOne
        \\
  \end{array}
\]

    \item $\ibott{(\iwhy{\uvar}{\tmtwo})}{\tm} \cceq \tmtwo\eofc{\uvar}{\tm} $

  \[\begin{array}{llll}
    & \ibott{(\iwhy{\uvar}{\tmtwo})}{\tm} \\
  \cceq  & \ibott{(\iwhy{\uvar}{\ibott{\tmtwo}{\ione}})}{\tm} \\
   =  & (\ibott{(\iwhy{\uvar}{\ibott{\tmtwo}{\lvar}})}{\tm})\sub{\lvar}{\tmthree} \\
   \cceq  & \ibott{(\ibott{(\iwhy{\uvar}{\ibott{\tmtwo}{\lvar}})}{\tm})\cos{\lvar}{\ione}}{\ione} &  \ruleCCEqCosAndSub\\
   \cceq  & (\ibott{(\iwhy{\uvar}{\ibott{\tmtwo}{\lvar}})}{\tm})\cos{\lvar}{\ione} &  \\
   =  &  (\iwhy{\uvar}{\ibott{\tmtwo}{\lvar}})\cos{\lvar}{\tm}\eone{\ione} & \rdef{cos} \\
   =  &  (\iwhy{\uvar}{\ibott{\tmtwo}{\lvar}})\cos{\lvar}{\tm} \\
   =  & (\ibott{\tmtwo}{\lvar})\cos{\lvar}{\ione}\eofc{\uvar}{\tm} & \rdef{cos} \\
   =  &
           \lvar\cos{\lvar}{\tmtwo}\eone{\ione}\eofc{\uvar}{\tm} & \rdef{cos} \\
   \cceq  &
           \tmtwo\eofc{\uvar}{\tm} & \ruleCCEqBetaOne
        \\
  \end{array}
\]
                                                            
    \item $\ibott{(\iofctwo{\lvar}{\tm})}{\tmtwo} \cceq \ewhy{\tm}{\lvar}{\tmtwo}$

      \[\begin{array}{llll}
         &     \ewhy{\tm}{\lvar}{\tmtwo}\\
        = &    (\ewhy{\tm}{\lvar}{\lvartwo})\sub{\lvartwo}{\tmtwo} \\
        \cceq &    \ibott{(\ewhy{\tm}{\lvar}{\lvartwo})\cos{\lvartwo}{\ione}}{\tmtwo}  & \ewhy{\tm}{\lvar}{\lvartwo}:\bott, \ruleCCEqCosAndSub\\
        = &        \ibott{\lvartwo\cos{\lvartwo}{\iofctwo{\lvar}{\tm}}\eone{\ione}}{\tmtwo}\\ 
        = &        \ibott{(\iofctwo{\lvar}{\tm})\eone{\ione}}{\tmtwo}\\ 
        \cceq &        \ibott{(\iofctwo{\lvar}{\tm})}{\tmtwo}\\ 
  \end{array}
\]
    \item $\ibott {\tm\eone{\tmtwo}}{\tmthree} \cceq \ibott{\tmtwo}{(\ibott{\tmthree}{\tm})}$ 
       
      \[\begin{array}{llll}
          & \ibott{\tmtwo}{(\ibott{\tmthree}{\tm})} \\
   \cceq  &
          \ibott{(\ibott{\tm}{\tmthree})}{\tmtwo} & \ruleCCEqiBottSym (\rlem{cceq_derived})\\
   \cceq  &
          \ibott{(\ibott{\tm}{\tmthree\eone{\ione}})}{\tmtwo} & \ruleCCEqBetaOne\\
     =  & \ibott{\lvar\cos{\lvar}{\ibott{\tm}{\tmthree\eone{\ione}}}}{\tmtwo} & \rdef{cos} \\
   =  &  \ibott{\tm\eone{\lvar}\cos{\lvar}{\tmthree\eone{\ione}}}{\tmtwo} & \rdef{cos}\\
  =  & \ibott{(\ibott{\tm\eone{\lvar}}{\tmthree})\cos{\lvar}{\ione}}{\tmtwo} & \rdef{cos}\\
   \cceq  & (\ibott{\tm\eone{\lvar}}{\tmthree})\sub{\lvar}{\tmtwo} & \ruleCCEqCosAndSub\\
   =   & \ibott{\tm\eone{\tmtwo}}{\tmthree}\\
  \end{array}
\]
  \end{xenumerate}

\end{xenumerate}

\end{proof}

\StrongBisimulation*

  \begin{proof}\label{cceqalt_is_a_strong_bisimulation:proof}
We proceed by induction on the derivation of $\tm \cceqalt \tmtwo$.
   We first demonstrate the result for the 10 axioms by performing a case analysis on the positions of the redexes within the terms.
   We then show that strong bisimulation is preserved under the inference rules.

  \end{proof}
  
  % \end{leoenv}

\SubjectReduction*

\begin{proof}\label{subject_reduction:proof}
Note that structural equivalence relates terms of the same type and under the same contexts,  by definition. Thus we are left to verify that $\djum{\utenv}{\tenv}{\tm}{\typ} $ and $\tm  \pretome \tmtwo$ implies $\djum{\utenv}{\tenv}{\tmtwo}{\typ}$. This is proved by induction on  $\tm  \pretome \tmtwo$.
%
%  \TODO{TODO}
\end{proof}

%%% Local Variables:
%%% mode: latex
%%% TeX-master: "../main"
%%% End:

\section{Strong Normalization}
\lsec{app:calcMELL:sn} 

%\begin{lemma}[{name=Dual Properties~\proofnote{Proof on pg.~\pageref{lemma:dual_properties:proof}}, restate=[name=Dual Properties]DualProperties}]
\begin{lemma}[Dual Properties]\label{lemma:dual_properties}
  Let $\typ$ be a type and $\tmsone, \tmstwo \subseteq \typtms{\typ}$ then the following properties hold.
    \begin{enumerate}
        \item If $\tmsone \subseteq \tmstwo$ then $\lneg{\tmstwo} \subseteq \lneg{\tmsone}$.
        \item $\tmsone \subseteq \llneg{\tmsone}$.
        \item $\lneg{\tmsone} = \lllneg{\tmsone}$.
    \end{enumerate}
\end{lemma}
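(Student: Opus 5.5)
These are the standard Galois-connection properties of orthogonality, and the plan is to derive all three directly from the definition of $\lneg{(\cdot)}$ by unfolding membership. No induction on types and no facts about reduction are needed. Throughout, $\tmsone,\tmstwo \subseteq \typtms{\typ}$, and $\lneg{\tmsone}$ is the set of $\tm \in \typtms{\lneg{\typ}}$ such that $\ibott{\tm}{\tmtwo} \iftyp{\bot} \SNtms{\bot}$ for every $\tmtwo\in\tmsone$.

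For item 1 (antitonicity), take $\tm \in \lneg{\tmstwo}$. Then $\tm\in\typtms{\lneg{\typ}}$, and the condition $\ibott{\tm}{\tmtwo}\iftyp{\bot}\SNtms{\bot}$ holds for all $\tmtwo\in\tmstwo$. Since $\tmsone\subseteq\tmstwo$, it holds in particular for all $\tmtwo\in\tmsone$, so $\tm\in\lneg{\tmsone}$.

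For item 2, take $\tmtwo\in\tmsone$. To show $\tmtwo\in\llneg{\tmsone}$, note first that $\tmtwo\in\typtms{\typ}=\typtms{\lneg{\lneg{\typ}}}$, using involutivity of linear negation on formulae. Then we must check that $\ibott{\tmtwo}{\tm}\iftyp{\bot}\SNtms{\bot}$ for every $\tm\in\lneg{\tmsone}$. By hypothesis we only have $\ibott{\tm}{\tmtwo}\iftyp{\bot}\SNtms{\bot}$, with the arguments in the opposite order. This asymmetry is the one genuinely non-trivial point of the proof. I would handle it with a small auxiliary observation: $\ibott{\tm}{\tmtwo}$ is SN iff $\ibott{\tmtwo}{\tm}$ is SN, and the first is typable at $\bot$ iff the second is (by \rulmIBott{} with $\lneg{\lneg{\typ}}=\typ$ and the linearity splitting of environments). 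For the SN part, inspect the pre-reduction axioms: every axiom whose root is a $\ibott{}{}$ comes in a mirrored pair ($\raxpartensor$/$\raxtensorpar$, $\raxofcwhy$/$\raxwhyofc$) with identical contracta, and contextual steps inside either argument correspond. Hence swapping the arguments of the root $\ibott{}{}$ gives a bijection between reduction sequences that preserves their length. Alternatively, one could invoke the derived symmetry $\ibott{\tm}{\tmtwo}\cceq\ibott{\tmtwo}{\tm}$ together with \rthm{cceqalt_is_a_strong_bisimulation}, since a strong bisimulation preserves SN. I prefer the direct syntactic argument, because SN here is measured for $\pretome$ and the mirrored-rule check is immediate.

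For item 3, the inclusion $\lneg{\tmsone}\subseteq\lllneg{\tmsone}$ is item 2 applied to $\lneg{\tmsone}$. The converse follows by applying item 1 to the inclusion $\tmsone\subseteq\llneg{\tmsone}$ from item 2, which yields $\lneg{(\llneg{\tmsone})}\subseteq\lneg{\tmsone}$. The main obstacle is therefore the symmetry of $\ibott{}{}$ with respect to strong normalization in item 2; everything else is set-theoretic.
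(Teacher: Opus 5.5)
Items 1 and 3 match the paper's proof. The fallback you mention for item 2 is exactly the paper's argument: the paper only notes $\ibott{\tmtwo}{\tm}\cceq\ibott{\tm}{\tmtwo}$, leaving implicit that strong normalization for $\pretome$ transfers along $\cceq$ by \rthm{cceqalt_is_a_strong_bisimulation}. If you take that route, the proposal is complete.

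The route you say you prefer, however, rests on a false claim: the mirrored $\ibott{}{}$-axioms do not have identical contracta.
\begin{itemize}
\item The term $\ibott{(\ipar{\lvar}{\lvartwo}{\tm})\cctx}{\pair{\tmtwo}{\tmthree}\cctxtwo}$ contracts by $\raxpartensor$ to $\tm\sub{\lvar}{\tmtwo}\sub{\lvartwo}{\tmthree}\cctx\cctxtwo$.
\item Its swap $\ibott{\pair{\tmtwo}{\tmthree}\cctxtwo}{(\ipar{\lvar}{\lvartwo}{\tm})\cctx}$ contracts by $\raxtensorpar$ to $\tm\sub{\lvar}{\tmtwo}\sub{\lvartwo}{\tmthree}\cctxtwo\cctx$, because the two positive eliminator contexts exchange roles.
\item $\raxofcwhy$ and $\raxwhyofc$ behave the same way.
\end{itemize}
When both contexts are non-empty these are different terms. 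After the root step there is no root $\ibott{}{}$ left to swap, so your correspondence between reduction sequences is undefined from that point on.

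The direct route can be repaired with an extra lemma. The two eliminator groups are independent: binders can be chosen fresh, and neither group's arguments mention the other's bound variables. In a term $\tmfour\cctx\cctxtwo$ at the root, the only steps involving the top-level chain are $\raxtensor$ or $\raxofc$ fired at one of its eliminator nodes. Such a step substitutes only into $\tmfour$ and the arguments of its own group, and splices its residual context back into that same group, so it commutes with exchanging the groups. Hence ``identical up to exchanging the two top-level eliminator groups'' is a length-preserving bisimulation for $\pretome$, and with it your bijection goes through.

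Repaired this way, your route is self-contained: it bypasses the strong bisimulation theorem and the $\cceq=\cceqalt$ machinery, whose proofs the paper only sketches. But the check is not \emph{immediate} as you claim.
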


\begin{proof}%\label{lemma:dual_properties:proof}
    Let's prove each of the properties.
    \begin{enumerate}
        \item If $\tm \in \lneg{\tmstwo}$ then $\forall \tmtwo \in \tmstwo$ we have that $\ibott{\tm}{\tmtwo} \iftyp{\bot} \SNtms{\bot}$.
            In particular since $\tmsone \subseteq \tmstwo$ then this implies that $\forall \tmtwo \in X$ we have that $\ibott{\tm}{\tmtwo} \in \SNtms{\bot}$.

        \item By the definition we have that
            \[
              \llneg{\tmsone} = \{ \tmthree \in \typtms{\typ} : \forall \tmtwo \in \lneg{\tmsone}, \ \ibott{\tmthree}{\tmtwo} \iftyp{\bot} \SNtms{\bot}.   \}  
            \] 
            Let $\tm \in \tmsone$ we have that $\forall \tmtwo \in \lneg{\tmsone}$ this implies that $\ibott{\tmtwo}{\tm} \iftyp{\bot} \SNtms{\bot}$ by the definition of the dual set $\lneg{\tmsone}$.
            Since  $\ibott{\tmtwo}{\tm}\cceq \ibott{\tm}{\tmtwo}$, this implies that $\tm \in \llneg{\tmsone}$.
        
        \item This follows from the previous properties.
            By the property \textbf{2} we have that $\tmsone \subseteq \llneg{\tmsone}$ and by the property \textbf{1} we have that $\lllneg{\tmsone} \subseteq \lneg{\tmsone}$. 
            Then by the property \textbf{2} we have that $\lneg{\tmsone} \subseteq \llneg{(\lneg{\tmsone})} = \lllneg{\tmsone}$ thus proving that $\lneg{\tmsone} = \lllneg{\tmsone}$.
    \end{enumerate}
\end{proof}

% \begin{proposition}[{name=Reducibility Candidates Contain Only SN Terms~\proofnote{Proof on pg.~\pageref{cr1:proof}}, restate=[name=Reducibility Candidates Contain Only SN Terms]ReducibilityCandidatesContainOnlySNTerms}]
\begin{proposition}
\lprop{cr1}
    Let $\typ$ be a type then:
    \begin{enumerate}
        \item For all $\uvar$ unrestricted variables we have that $\uvar \in \redcand{\typ}$.
        \item $\redcand{\typ} \subseteq \SNtms{\typ}$.
    \end{enumerate}
\end{proposition}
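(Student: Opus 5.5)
We prove both items simultaneously by induction on the type $\typ$, since the orthogonal-based clauses for $\typ\tensor\typtwo$, $\typ\parr\typtwo$, $\why{\typ}$, $\ofc{\typ}$ need the statement for subformulae (and for $\lneg{\typ}$, which has the same size). Note that $\lneg{\tmsone}$ for $\tmsone\subseteq\typtms{\typ}$ lives in $\typtms{\lneg{\typ}}$, so we must be careful that each orthogonal is taken in the right type.

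\emph{Base cases.} For $\btyp$, $\nbtyp$, $\bot$, $\one$ the candidate is $\SNtms{\cdot}$. Item 2 is then immediate. Item 1 holds because an unrestricted variable (typable by \rulmAxU) is a normal form; we only need to check that it is not $\cceq$-related to a redex, which is harmless since pre-reduction is what we measure.

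\emph{Item 2 for orthogonal cases.} The key lemma is the following. If $\tmsone\subseteq\typtms{\typ}$ contains some variable $\uvar$ of type $\typ$, then every $\tm\in\lneg{\tmsone}$ is SN. Indeed, $\ibott{\tm}{\uvar}\in\SNtms{\bot}$, and any reduction sequence from $\tm$ lifts to one from $\ibott{\tm}{\uvar}$ by contextual closure. So $\tm$ is SN.

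Thus $\redcand{\typ\parr\typtwo}=\lneg{(\redcand{\lneg\typ}\tensor\redcand{\lneg\typtwo})}$ is SN provided the tensor set contains a variable. It does not literally contain one, since it consists only of pairs. Instead we use that it contains the pair $\pair{\uvar}{\uvartwo}$, by item 1 for $\lneg\typ$ and $\lneg\typtwo$. Then $\ibott{\tm}{\pair{\uvar}{\uvartwo}}$ is SN and again contains $\tm$ as a subterm in reduction-stable position. The same applies to $\redcand{\ofc\typ}=\lneg{(\why{\redcand{\typ}})}$. Here we exhibit the element $\iwhy{\uvar}{\ibott{\uvar}{\uvartwo}}$, say, or more simply $\iwhy{\uvar}{\tmthree}$ with $\tmthree\sub{\uvar}{\tmtwo}$ SN for all $\tmtwo\in\redcand{\typ}$. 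We need to pick $\tmthree$ of type $\bot$ using $\uvar:\lneg{\lneg\typ}$, e.g.\ $\ibott{\uvar}{\uvartwo}$. Its substitution instances are SN using item 2 for $\typ$ plus the (standard) fact that $\ibott{\tmtwo}{\uvartwo}$ with $\uvartwo$ a variable is SN when $\tmtwo$ is, since no $\bot$-redex can fire at the root against a variable.

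For the double-orthogonal cases $\redcand{\typ\tensor\typtwo}$ and $\redcand{\why\typ}$, we apply the same argument one level up. We need a variable-like element in the single orthogonal, e.g.\ $\uvar\in\lneg{(\redcand\typ\tensor\redcand\typtwo)}$. This means $\ibott{\uvar}{\pair{\tm}{\tmtwo}}$ is SN for SN $\tm,\tmtwo$, which holds since a variable cannot form a root redex.

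\emph{Item 1 for orthogonal cases.} We must show $\uvar\in\lneg{\tmsone}$ style memberships. For $\typ\parr\typtwo$: $\ibott{\uvar}{\pair{\tm}{\tmtwo}}$ is SN whenever $\tm,\tmtwo$ are in the candidates, which are SN by IH item 2. For double orthogonals, $\uvar\in\llneg{\tmstwo}$ requires $\ibott{\uvar}{\tmthree}$ SN for all $\tmthree\in\lneg{\tmstwo}$. These $\tmthree$ are SN by the lemma above.

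\emph{Main obstacle.} The recurring step is \textbf{``$\ibott{\uvar}{\tm}$ is SN whenever $\tm$ is SN''}, together with its variants for pairs and the lifting of reductions through $\ibott{\cdot}{\cdot}$. The difficulty is that reductions at a distance use positive elimination contexts $\cctx$. If $\tm=\tmfour\cctx$, a root step of $\raxpartensor$/$\raxofcwhy$ type can only fire when both sides have introduction forms up to $\cctx$. A bare variable never does, so every step of $\ibott{\uvar}{\tm}$ is a step of $\tm$, and $\tml{\ibott{\uvar}{\tm}}=\tml{\tm}$.

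I would prove this by inspecting the pre-reduction axioms, since SN is defined with respect to $\pretome$, not modulo $\cceq$. This avoids needing the strong bisimulation here.
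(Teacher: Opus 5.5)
Your proposal is correct and is essentially the paper's proof. Both argue by simultaneous induction on the type and rest on two facts. First, an unrestricted variable lies in the orthogonal of any set of SN terms, because a bare variable never forms a root redex. Second, the orthogonal of a set containing a witness such as $\uvar$, $\pair{\uvar}{\uvartwo}$ or a $\why{}$-term consists of SN terms, because reductions lift through $\ibott{\cdot}{\cdot}$. These are exactly the paper's auxiliary lemmas on duals, tensors and why-sets.

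There are two minor differences. You build the witnesses directly from the IH at $\lneg{\typ}$, $\lneg{\typtwo}$, $\typ$, whereas the paper invokes the IH at the same-size types $\lneg{\typ}\tensor\lneg{\typtwo}$ and $\why{\lneg{\typ}}$. The paper also uses the simpler witness $\iwhy{\uvarthree}{\uvartwo}$ with $\uvartwo:\bot$.

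One step you leave implicit is that every $\iwhy{\uvar}{\tmthree}$ in a set $\why{\redcand{\cdot}}$ is SN. You need this for item 1 at $\ofc{\typ}$ and for $\uvar\in\lneg{(\why{\redcand{\lneg{\typ}}})}$ in the $\why{\typ}$ case. It follows by instantiating the defining condition with the bound variable itself, which lies in the candidate by item 1 of the IH.

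Also, pick witness variables fresh for $\tm$. The orthogonal condition is guarded by $\iftyp{\bot}$, so it says nothing when $\ibott{\tm}{\cdot}$ is untypable.
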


\begin{proof}%\label{cr1:proof}
    We prove this by mutual induction on the type $\typ$.

    \begin{itemize}
        \item If $\typ = \btyp, \lneg{\btyp}, \bot, \one$ then for any unrestricted variable we have that $\uvar \in \SNtms{\typ}$.
        By definition of the reducibility candidates we have in this particular case that $\redcand{\typ} = \SNtms{\typ}$.

        \item If $\typ = \typ_{1} \tensor \typ_{2}$ then by the inductive hypothesis we have that there are unrestricted variables $\uvar_{1}, \uvar_{2}$ such that $u_{1} \in \redcand{\typ_{1}}, \uvar_{2} \in \redcand{\typ_{2}}$ then by the lemma \ref{lemma:tensor_notempty} we have that the set $\redcand{\typ_{1}} \tensor \redcand{\typ_{2}} \subseteq \SNtms{\typ_{1} \tensor \typ_{2}}$ and that $\redcand{\typ_{1}} \tensor \redcand{\typ_{2}} \neq \emptyset$.
        Thus since $\redcand{\typ_{1} \tensor \typ_{2}} = \llneg{(\redcand{\typ_{1}} \tensor \redcand{\typ_{2}})}$ by the lemma \ref{lemma:dual_notempty} applied twice we conclude what we wanted to prove.

        \item If $\typ = \typ_{1} \parr \typ_{2}$ then by the inductive hypothesis, for all unrestricted variables $\uvar$ we have that $\uvar \in \redcand{\lneg{\typ_{1}} \tensor \lneg{\typ_{2}}}$.
        Thus since $\redcand{\typ_{1} \parr \typ_{2}}=\lneg{\redcand{\lneg{\typ_{1}} \tensor \lneg{\typ_{2}}}}$ then by the lemma \ref{lemma:dual_notempty} we get what we wanted to prove.

        \item If $\typ = \why{\typ_{1}}$ then by the inductive hypothesis, for all unrestricted variables $\uvar$ we have that $\uvar \in \redcand{\lneg{\typ}}$.
        This implies that by the lemma \ref{lemma:why_notempty} the following holds $\why \redcand{\lneg{\typ_{1}}} \subseteq \SNtms{\why{\typ_{1}}}$ and $\why \redcand{\lneg{\typ_{1}}} \neq \emptyset$.
        Thus since $\redcand{\why{\typ_{1}}}   = \llneg{(\why{\redcand{\lneg{\typ_{1}}}})}$ then by the lemma \ref{lemma:dual_notempty} applied twice we conclude what we wanted to prove.

        \item If $\typ = \ofc{\typ_{1}}$ then by the inductive hypothesis, for all unrestricted variables $\uvar$ we have that $\uvar \in \redcand{\why{\lneg{\typ_{1}}}}$.
        Thus since $\redcand{\ofc{\typ_{1}}} = \lneg{\redcand{\why{\lneg{\typ_{1}}}}}$ by the lemma \ref{lemma:dual_notempty} we get what we wanted to prove.
    \end{itemize}
\end{proof}

\begin{definition}
    Given $\typ$ a type, $\tmsone \subseteq \SNtms{\typ}$ we say that $\tmsone$ is \emph{reduction closed} if for all $\tm \in \tmsone$ and $\tm'$ such that $\tm \pretome \tm'$ we have that $\tm' \in \tmsone$.
\end{definition}

\begin{lemma}\label{lemma:sn_reduction_closed}
    If $\typ$ is a type then $\SNtms{\typ}$ is reduction closed.
\end{lemma}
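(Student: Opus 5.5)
The claim is that $\SNtms{\typ}$ is reduction closed: if $\tm\in\SNtms{\typ}$ and $\tm\pretome\tm'$, then $\tm'\in\SNtms{\typ}$. I would verify the two defining conditions of $\SNtms{\typ}$ for $\tm'$ separately: that $\tm'$ is strongly normalizing, and that $\tm'\in\typtms{\typ}$.

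\textbf{Strong normalization.} This part is purely combinatorial. Suppose $\tm'$ admitted an infinite pre-reduction sequence $\tm'\pretome\tm_1\pretome\tm_2\pretome\cdots$. Prefixing the step $\tm\pretome\tm'$ would give an infinite sequence starting from $\tm$, contradicting $\tm\in\SNtms{\typ}$. Hence $\tm'$ is SN. Moreover, the length of its longest reduction sequence satisfies $\tml{\tm'}<\tml{\tm}$, which will be useful in later inductions on $\tml{\cdot}$.

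\textbf{Typing.} Since $\tm\in\typtms{\typ}$, there are $\utenv,\tenv$ with $\djum{\utenv}{\tenv}{\tm}{\typ}$. I would then invoke subject reduction for pre-reduction. The proof of \rlem{subject_reduction} isolates exactly this statement: $\djum{\utenv}{\tenv}{\tm}{\typ}$ and $\tm\pretome\tmtwo$ imply $\djum{\utenv}{\tenv}{\tmtwo}{\typ}$. It also follows from \rlem{subject_reduction} itself, because $\pretome\subseteq\tome$ by reflexivity of $\cceq$. Either way we get $\djum{\utenv}{\tenv}{\tm'}{\typ}$, so $\tm'\in\typtms{\typ}$.

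\textbf{Main obstacle.} The only step with real content is the typing one. Within this paper it reduces to the subject reduction result already established, which in turn rests on \rlem{contrasubstitution_lemma} and the admissibility of \rulmSub{} and \rulmUSub{} for the right-hand sides of the reduction axioms. I would therefore just cite it. If a self-contained check were needed, it would be a case analysis over the nine reduction axioms of \rdef{prereduction_in_calcMELL} together with contextual closure. That check is routine, since the axioms are stated under the assumption that both sides are typed with the same types in the same contexts.
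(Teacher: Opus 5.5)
Your proposal is correct and follows essentially the same argument as the paper: reducts of an SN term are SN, and they stay in $\typtms{\typ}$ by subject reduction. Your observation that $\pretome\subseteq\tome$ holds by reflexivity of $\cceq$ justifies applying \rlem{subject_reduction} directly. The paper leaves this step implicit.
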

\begin{proof}
    If $\tm \in \SNtms{\typ}$ then for all reducts $\tm'$ we have that $\tm' \in \typtms{\typ}$ by subject reduction and since $\tm$ is strongly normalizing we get that $\tm'$ is strongly normalizing too. 
    We conclude that $\tm' \in \SNtms{\typ}$.
\end{proof}

\begin{lemma}\label{lemma:dual_reduction_closed}
    If $\typ$ is a type and $\tmsone \subseteq \SNtms{\typ}$ then $\lneg{\tmsone}$ is reduction closed.
\end{lemma}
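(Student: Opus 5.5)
Let $\tm\in\lneg{\tmsone}$ and $\tm\pretome\tm'$. We must show $\tm'\in\lneg{\tmsone}$, i.e.\ $\tm'\in\typtms{\lneg{\typ}}$ and, for every $\tmtwo\in\tmsone$, $\ibott{\tm'}{\tmtwo}\iftyp{\bot}\SNtms{\bot}$.

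The first point is immediate from subject reduction (\rlem{subject_reduction}). Pre-reduction is included in $\tome$, since $\cceq$ is reflexive, and types are preserved under the same environments; so $\tm'\in\typtms{\lneg{\typ}}$.

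For the second point, fix $\tmtwo\in\tmsone$ and assume $\ibott{\tm'}{\tmtwo}$ is typable with type $\bot$. Since $\tm$ and $\tm'$ have the same type and the same linear environment, the environment of $\tm$ also fits with that of $\tmtwo$, so $\ibott{\tm}{\tmtwo}$ is typable of type $\bot$. The variable-disjointness side conditions are the only subtlety, and they hold because $\fv{\tm'}\subseteq\fv{\tm}$ for pre-reduction. Hence, as $\tm\in\lneg{\tmsone}$, we get $\ibott{\tm}{\tmtwo}\in\SNtms{\bot}$.

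Pre-reduction is the contextual closure of its axioms, and $\ibott{\Box}{\tmtwo}$ is a context. So $\ibott{\tm}{\tmtwo}\pretome\ibott{\tm'}{\tmtwo}$. The set $\SNtms{\bot}$ is reduction closed by Lemma~\ref{lemma:sn_reduction_closed}, so $\ibott{\tm'}{\tmtwo}\in\SNtms{\bot}$.

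The only mildly delicate step is the typability transfer from $\ibott{\tm'}{\tmtwo}$ back to $\ibott{\tm}{\tmtwo}$. I would argue it directly: every pre-reduction axiom is required to have both sides typed in the same contexts, so free variables and their types coincide. Everything else is a routine unfolding of the definitions. Note also that the hypothesis $\tmsone\subseteq\SNtms{\typ}$ is not actually needed for this argument.
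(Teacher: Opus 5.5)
Your route is the paper's: subject reduction gives $\tm'\in\typtms{\lneg{\typ}}$, then $\ibott{\tm}{\tmtwo}\pretome\ibott{\tm'}{\tmtwo}$ through the context $\ibott{\Box}{\tmtwo}$, and $\SNtms{\bot}$ is closed under reducts. The paper reads $\ibott{\tm}{\tmtwo}\iftyp{\bot}\SNtms{\bot}$ as plain membership and gets typability of $\ibott{\tm'}{\tmtwo}$ from that of $\ibott{\tm}{\tmtwo}$ by subject reduction. It therefore never meets the case you raise. You are right that transferring typability from $\ibott{\tm'}{\tmtwo}$ back to $\ibott{\tm}{\tmtwo}$ is the delicate step. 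However, your justification does not work, and the claim is false in general.

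The side condition on the axioms only says that \emph{some} typing of the redex is also a typing of the contractum; it is not subject expansion. Typing is Curry-style (binders carry no types) and $\typtms{\cdot}$ quantifies existentially over environments. So the typing of $\tm'$ used inside $\ibott{\tm'}{\tmtwo}$ need not be the one subject reduction gives you. Moreover, \rulmIBott{} shares the unrestricted environment between its premises. A step can erase subterms of $\tm$ that constrain the types of unrestricted variables (e.g.\ a $\raxofc$ step whose bound variable does not occur), and duplication can give $\tm'$ typings that $\tm$ lacks. Your inclusion $\fv{\tm'}\subseteq\fv{\tm}$ points the wrong way: it is exactly the extra variables of $\tm$ that can clash with $\tmtwo$.

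Here is a concrete counterexample. Take $\typ=\bott$ and $\tmsone=\set{\tmtwo}$ with $\tmtwo=\ibott{\uvartwo}{\ione}$, a normal form typable only with $\uvartwo:\bott$. Let $\tm=\ione\eofc{\uvar}{\iofctwo{\lvar}{(\ibott{\lvar}{(\ap{\uvartwo}{\ione})})}}$, which has type $\one$ but is typable only when $\uvartwo:\bott\parr\typthree$ for some $\typthree$. An erasing $\raxofc$ step gives $\tm\pretome\ione=\tm'$. Now $\ibott{\tm'}{\tmtwo}$ is typable but $\ibott{\tm}{\tmtwo}$ is not. So $\tm\in\lneg{\tmsone}$ holds vacuously and tells you nothing about $\ibott{\tm'}{\tmtwo}$.

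The same thing happens if you replace the scope $\ione$ of $\uvar$ by an arbitrary typed term $\tm_0$ of type $\one$ not mentioning $\uvar,\uvartwo$. What you would then need is strong normalization of $\ibott{\tm_0}{\tmtwo}$, hence of $\tm_0$. That is exactly what the reducibility argument is meant to deliver, so this case cannot be closed by unfolding definitions.

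A repair has to change the setup, e.g.\ by fixing once and for all a type for each variable (Church style). Then every term has at most one type, and typability of $\ibott{\tm}{\tmtwo}$ depends only on the types and on disjointness of free linear variables. Since $\flv{\tm}=\flv{\tm'}$, your transfer then goes through.

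Separately, the paper defines reduction closed only for subsets of $\SNtms{\cdot}$, and its proof also asserts $\lneg{\tmsone}\subseteq\SNtms{\lneg{\typ}}$. Your remark that the hypothesis is unused concerns only the closure property. That inclusion is not automatic either: for $\tmsone=\emptyset$ it is strong normalization at type $\lneg{\typ}$.
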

\begin{proof}
    If $\tm \in \lneg{\tmsone}$ then for all $\tmtwo \in \tmsone$ we have that $\ibott{\tm}{\tmtwo} \in \SNtms{\bot}$, in particular this implies that $\tm \in \SNtms{\lneg{\typ}}$.
    If $\tm \pretome \tm'$ then by subject reduction $\ibott{\tm'}{\tmtwo} \in \typtms{\bot}$ and moreover $\ibott{\tm'}{\tmtwo} \in \SNtms{\bot}$ since it is a reduct of $\ibott{\tm}{\tmtwo}$.
\end{proof}

\begin{proposition}
    For all types $\typ$ we have that $\redcand{\typ}$ is reduction closed.  
\end{proposition}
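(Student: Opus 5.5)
The claim is that $\redcand{\typ}$ is reduction closed for every type $\typ$. I would prove it by case analysis on the shape of $\typ$, using only the two closure lemmas just established; no induction on types is needed.

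The reducibility candidates fall into two groups.
\begin{itemize}
\item For the base cases $\btyp$, $\nbtyp$, $\bot$ and $\one$, the candidate $\redcand{\typ}$ is literally $\SNtms{\typ}$. Reduction closure is then exactly Lemma~\ref{lemma:sn_reduction_closed}.
\item For the remaining types, $\redcand{\typ}$ is by definition an orthogonal $\lneg{\tmstwo}$ for a suitable set $\tmstwo$:
  \begin{itemize}
  \item $\redcand{\typ\tensor\typtwo} = \lneg{(\lneg{(\redcand{\typ}\tensor\redcand{\typtwo})})}$, the orthogonal of $\lneg{(\redcand{\typ}\tensor\redcand{\typtwo})}$;
  \item $\redcand{\typ\parr\typtwo}$ is the orthogonal of $\redcand{\lneg{\typ}}\tensor\redcand{\lneg{\typtwo}}$;
  \item $\redcand{\why{\typ}}$ is the orthogonal of $\lneg{(\why{\redcand{\lneg{\typ}}})}$;
  \item $\redcand{\ofc{\typ}}$ is the orthogonal of $\why{\redcand{\typ}}$.
  \end{itemize}
\end{itemize}

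For the second group I would apply the argument of Lemma~\ref{lemma:dual_reduction_closed}. Its hypothesis $\tmsone\subseteq\SNtms{\typ}$ is not needed for closure itself. It was only used to get strong normalization of $\tm$ itself from that of $\ibott{\tm}{\tmtwo}$. Closure uses only two facts: subject reduction, and that $\ibott{\tm}{\tmtwo}\pretome\ibott{\tm'}{\tmtwo}$ whenever $\tm\pretome\tm'$, by contextual closure.

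So the cleanest route is to restate the lemma as: the orthogonal of any $\tmsone\subseteq\typtms{\typ}$ is closed under $\pretome$. The argument is short. Take $\tm\in\lneg{\tmsone}$, $\tm\pretome\tm'$ and $\tmtwo\in\tmsone$. Then $\ibott{\tm'}{\tmtwo}$ is a one-step reduct of $\ibott{\tm}{\tmtwo}$, which lies in $\SNtms{\bot}$ whenever typed. Hence $\ibott{\tm'}{\tmtwo}$ is SN as well. It is typed with type $\bot$ by subject reduction, together with typability of $\tm'$ at $\lneg{\typ}$. Therefore $\tm'\in\lneg{\tmsone}$.

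Alternatively, Lemma~\ref{lemma:dual_reduction_closed} can be invoked as stated, provided the sets involved lie in SN. For the inner sets this holds:
\begin{itemize}
\item $\redcand{\typ}\tensor\redcand{\typtwo}$ and $\why{\redcand{\typ}}$ consist of SN terms, by Proposition~\ref{prop:cr1} and the non-emptiness/SN lemmas cited there;
\item orthogonals of nonempty sets are SN.
\end{itemize}
The main subtlety lies here. The definition of $\redcand{\ofc{\typ}}$ uses $\why{\redcand{\typ}}$, which depends on $\redcand{\typ}$, so an inductive appeal to Proposition~\ref{prop:cr1} would be needed. I would therefore prefer the generalized version, which avoids SN hypotheses altogether.

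One technical point must be checked: membership in an orthogonal is stated with $\iftyp{\bot}$, i.e.\ only typed compositions are required to be SN. I need that typedness of $\ibott{\tm'}{\tmtwo}$ implies typedness of $\ibott{\tm}{\tmtwo}$, so that the hypothesis on $\tm$ applies. This holds because $\tm$ and $\tm'$ have the same type $\lneg{\typ}$ by subject reduction, and $\tmtwo\in\typtms{\typ}$. The linear environments of $\tm$ and $\tm'$ coincide, so disjointness with $\tmtwo$'s environment is unchanged.
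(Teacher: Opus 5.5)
Your decomposition is the paper's: base types by Lemma~\ref{lemma:sn_reduction_closed}, compound types by closure of orthogonals as in Lemma~\ref{lemma:dual_reduction_closed}. Your observation that this closure uses no property of the inner set is correct, so the paper's induction on types is unnecessary, and your version is slightly cleaner.

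Keep in mind that ``reduction closed'' is only defined for subsets of $\SNtms{\typ}$. The statement therefore includes $\redcand{\typ}\subseteq\SNtms{\typ}$, which your generalized lemma does not provide. This is exactly Proposition~\ref{prop:cr1}(2), already established for all types before this proposition, so cite it directly. The same citation discharges the hypotheses of your alternative route, with no further induction involved.

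One step, however, fails as you justify it: the claim that if $\ibott{\tm'}{\tmtwo}$ is typed then so is $\ibott{\tm}{\tmtwo}$. From $\tm:\lneg{\typ}$ and $\tmtwo:\typ$ you only get typings in possibly different unrestricted environments. Typing $\ibott{\tm}{\tmtwo}$ requires a common one, and reduction can erase free unrestricted variables. For instance:
\begin{itemize}
\item $\tm=\lvarthree\eofc{\uvartwo}{\iofctwo{\lvar}{(\ibott{(\ap{\uvar}{\ione})}{\lvar})}}$ forces $\uvar$ to have a type of the form $\bott\parr\typfour$, and $\tm\pretome\lvarthree$ by $\raxofc$.
\item $\tmtwo=\lvarfour\eone{\uvar}$ forces $\uvar:\one$.
\item Hence $\ibott{\lvarthree}{\tmtwo}$ is typable but $\ibott{\tm}{\tmtwo}$ is not.
\end{itemize}
In such a case the hypothesis $\tm\in\lneg{\tmsone}$ gives no information about $\ibott{\tm'}{\tmtwo}$.

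The paper's proof of Lemma~\ref{lemma:dual_reduction_closed} has the same hole. It only uses subject reduction in the harmless direction, from $\ibott{\tm}{\tmtwo}$ to $\ibott{\tm'}{\tmtwo}$, and tacitly assumes the former is typed. So your route is no weaker than the paper's. To make the argument sound, though, the set-up has to change. One option is to define orthogonals over terms typed in a fixed context; then whenever $\ibott{\tm'}{\tmtwo}$ is considered, $\ibott{\tm}{\tmtwo}$ is typed in that same context.
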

\begin{proof}
 We proceed by induction on the type $\typ$.

    If $\typ = \btyp, \lneg{\btyp}, \bot, \one$ then this follows  from the lemma \ref{lemma:sn_reduction_closed}.

    If $\typ = \typ_{1} \tensor \typ_{2}, \typ_{1} \parr \typ_{2}, \why{\typ_{1}}, \ofc{\typ_{1}}$ then this follows from the induction hypothesis and from the application of the lemma \ref{lemma:dual_reduction_closed}
  \end{proof}
  
\begin{lemma}[Adequacy for $\parr$]
    \label{lemma:adequacy_ipar}
    If $\djum{\utenv}{\tenv, \lvar_{1} : \typ_{1}, \lvar_{2} : \typ_{2}}{\tm}{\bot}$ such that
    $\tm \in \redcand{\bot}$ and for all $\tmtwo_{1} \in \redcand{\typ_{1}},\tmtwo_{2} \in \redcand{\typ_{2}}$ we get that $\tm \sub{\lvar_{1}}{\tmtwo_{1}}\sub{\lvar_{2}}{\tmtwo_{2}} \in \redcand{\bot}$ then this implies that $\ipar{\lvar_{1}}{\lvar_{2}}{\tm} \in \redcand{\lneg{\typ_{1}} \parr \lneg{\typ_{2}}}$.
\end{lemma}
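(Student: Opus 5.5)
Unfolding the definitions, $\redcand{\lneg{\typ_1}\parr\lneg{\typ_2}} = \lneg{(\redcand{\typ_1}\tensor\redcand{\typ_2})}$. So the goal is: for every pair $\pair{\tmtwo_1}{\tmtwo_2}\in\redcand{\typ_1}\tensor\redcand{\typ_2}$, the term $\ibott{\ipar{\lvar_1}{\lvar_2}{\tm}}{\pair{\tmtwo_1}{\tmtwo_2}}$ (which is well typed at $\bot$) lies in $\SNtms{\bot}$. I would prove this by a standard ``head expansion'' argument: induction on the lexicographically ordered measure $(\tml{\tm}+\tml{\tmtwo_1}+\tml{\tmtwo_2},\ \text{size})$. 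All three quantities are finite: $\tm\in\redcand{\bot}=\SNtms{\bot}$, and $\tmtwo_i\in\redcand{\typ_i}\subseteq\SNtms{\typ_i}$ by \rprop{cr1}.

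Because $\tm$, $\tmtwo_1$ and $\tmtwo_2$ are all SN, it suffices to show that every one-step pre-reduct of $T:=\ibott{\ipar{\lvar_1}{\lvar_2}{\tm}}{\pair{\tmtwo_1}{\tmtwo_2}}$ is SN. There are three kinds of step.

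\textbf{Internal steps.} A step may occur inside $\tm$, $\tmtwo_1$ or $\tmtwo_2$. The hypotheses are stable under such steps, so the induction hypothesis applies to the reduct:
\begin{itemize}
\item $\redcand{\typ_i}$ is reduction closed, so the reduct of $\tmtwo_i$ stays in $\redcand{\typ_i}$.
\item $\SNtms{\bot}$ is reduction closed, so the reduct of $\tm$ stays in $\redcand{\bot}$.
\item If $\tm\pretome\tm'$, then $\tm'\sub{\lvar_1}{\tmtwo_1}\sub{\lvar_2}{\tmtwo_2}$ is a reduct of $\tm\sub{\lvar_1}{\tmtwo_1}\sub{\lvar_2}{\tmtwo_2}$, because substitution commutes with pre-reduction. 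It therefore remains in $\SNtms{\bot}$.
\end{itemize}

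\textbf{The head step.} The root redex fires $\raxpartensor$ with empty $\cctx,\cctxtwo$, giving $\tm\sub{\lvar_1}{\tmtwo_1}\sub{\lvar_2}{\tmtwo_2}$. This term is in $\redcand{\bot}$ directly by hypothesis.

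\textbf{Critical steps.} This is the part I expect to be the main obstacle: steps that use the at-a-distance contexts and so are neither internal nor the plain head step. Since $\cctx$ ranges over positive eliminators appended to $\ipar{\lvar_1}{\lvar_2}{\tm}$ or to the pair, the only such overlaps arise when $\tmtwo_1$ or $\tmtwo_2$, or the pair itself, already carries eliminators. The pair itself carries none. Redexes inside the components are internal steps. So on raw terms no extra case should occur.

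Pre-reduction is, however, used up to the axioms only in the final relation. Orthogonality is stated for $\pretome$, which is a plain contextual closure, so no $\cceq$-rearrangements have to be considered at this level. I would nonetheless check carefully that the pattern condition $\lvar_2\notin\fv{\tmtwo_1}$ holds by $\alpha$-renaming, and that a reduct of a typed $T$ is again typed, by subject reduction, so that $\iftyp{\bot}$ is respected.

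Concluding, every reduct of $T$ is SN, hence $T\in\SNtms{\bot}$. Since the pair in $\redcand{\typ_1}\tensor\redcand{\typ_2}$ was arbitrary, $\ipar{\lvar_1}{\lvar_2}{\tm}\in\redcand{\lneg{\typ_1}\parr\lneg{\typ_2}}$.
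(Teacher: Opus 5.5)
Your proposal is correct and follows the paper's proof. You reduce membership in $\redcand{\lneg{\typ_1}\parr\lneg{\typ_2}}=\lneg{(\redcand{\typ_1}\tensor\redcand{\typ_2})}$ to strong normalization of $\ibott{\ipar{\lvar_1}{\lvar_2}{\tm}}{\pair{\tmtwo_1}{\tmtwo_2}}$, then induct on $\tml{\tm}+\tml{\tmtwo_1}+\tml{\tmtwo_2}$ with three cases: a step inside $\tm$, a step inside some $\tmtwo_i$, and the root $\raxpartensor$ step. You also make explicit the uses of substitution/reduction compatibility and of reduction closure that the paper leaves implicit.

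Two small simplifications. The size component of your measure is superfluous. The ``critical steps'' concern also closes more directly than you argue: both arguments of the cut are bare introductions, and neither a $\parr$-introduction nor a pair is the left-hand side of any axiom. So the only redex not inside $\tm,\tmtwo_1,\tmtwo_2$ is $\raxpartensor$ with $\cctx=\cctxtwo=\cctxhole$, and eliminators carried by $\tmtwo_1$ or $\tmtwo_2$ play no role.
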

\begin{proof}
    Since $\redcand{\lneg{\typ_{1}} \parr \lneg{\typ_{2}}} = \lneg{\redcand{\lneg{\typ_{1}} \tensor \lneg{\typ_{2}}}} = \lneg{(\redcand{\lneg{\typ_{1}}} \tensor \redcand{\lneg{\typ_{2}}})}$ it suffices to prove that for any $\pair{\tmthree_{1}}{\tmthree_{2}} \in \lneg{(\redcand{\lneg{\typ_{1}}} \tensor \redcand{\lneg{\typ_{2}}})}$ we have that $\ibott{(\ipar{\lvar_{1}}{\lvar_{2}}{\tm})}{\pair{\tmthree_{1}}{\tmthree_{2}} } \in \redcand{\bot}$.
    Observe that the terms $\tm$, $\tmthree_{1}$ and $\tmthree_{2}$ are strongly normalizing by the proposition \ref{prop:cr1}.
    This means that the natural number $\tml{\tm} + \tml{\tmthree_{1}} + \tml{\tmthree_{2}}$ is well defined.
    
    We prove this by induction on $\tml{\tm} + \tml{\tmthree_{1}} + \tml{\tmthree_{2}}$.
    The base case is straightforward.
    For the induction step we consider three different cases depending on the reduction step.
    \begin{enumerate}
        \item If $\tm \to \tm'$,
            this follows straight from the inductive hypothesis.
        \item If $\tmthree_{1} \to \tmthree_{1}'$ or $\tmthree_{2} \to \tmthree_{2}'$,
            this follows straight from the inductive hypothesis.
        \item If $\tmthree = \pair{\tmthree_{1}}{\tmthree_{2}}$ and $\ibott{\ipar{\lvar_{1}}{\lvar_{2}}{\tm}}{\pair{\tmthree_{1}}{\tmthree_{2}}} \to \tm \sub{\lvar_{1}}{\tmthree_{1}}\sub{\lvar_{2}}{\tmthree_{2}}$ then this follows from the hypothesis.
    \end{enumerate}
\end{proof}

The following result is proved by induction on $\typ$. 
              
% \begin{lemma}[{name=Dual of a Candidate is Candidate of Dual Type~\proofnote{Proof on pg.~\pageref{redcand_dual:proof}}, restate=[name=Dual of a Candidate is Candidate of Dual Type]DualOfCandidateIsCandidateOfDual}]
\begin{lemma}
\label{lemma:redcand_dual}
    For all types $\typ$ the following equality holds 
        $\redcand{\lneg{\typ}} = \lneg{\redcand{\typ}}$
\end{lemma}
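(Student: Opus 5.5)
The proof goes by induction on $\typ$, in two parts. First we show the statement for the four \emph{positive} type constructors (atoms, $\one$, $\tensor$, $\ofc{}$) together with their duals. Then we observe that the statement for a type and for its dual are interderivable, using $\lneg{(\lneg{\typ})} = \typ$ and Lemma~\ref{lemma:dual_properties}(3).

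The composite cases unfold directly from the definitions of $\redcand{\cdot}$. For $\typ = \typ_1\parr\typ_2$ we have $\lneg{\typ} = \lneg{\typ_1}\tensor\lneg{\typ_2}$, so
\[\redcand{\lneg{\typ}} = \llneg{(\redcand{\lneg{\typ_1}}\tensor\redcand{\lneg{\typ_2}})}\]
while $\lneg{\redcand{\typ}} = \llneg{(\redcand{\lneg{\typ_1}}\tensor\redcand{\lneg{\typ_2}})}$ directly from $\redcand{\typ_1\parr\typ_2} = \lneg{(\redcand{\lneg{\typ_1}}\tensor\redcand{\lneg{\typ_2}})}$. The two coincide and no induction hypothesis is needed. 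Symmetrically, for $\typ = \typ_1\tensor\typ_2$ we get
\[\lneg{\redcand{\typ}} = \lllneg{(\redcand{\typ_1}\tensor\redcand{\typ_2})} = \lneg{(\redcand{\typ_1}\tensor\redcand{\typ_2})}\]
by Lemma~\ref{lemma:dual_properties}(3). This matches $\redcand{\lneg{\typ_1}\parr\lneg{\typ_2}} = \lneg{(\redcand{\typ_1}\tensor\redcand{\typ_2})}$, using the involutivity $\lneg{(\lneg{\typ_i})} = \typ_i$ syntactically. The exponentials go the same way. For $\typ=\why{\typ_1}$ we have $\redcand{\ofc{\lneg{\typ_1}}} = \lneg{(\why{\redcand{\lneg{\typ_1}}})}$, and $\lneg{\redcand{\why{\typ_1}}} = \lllneg{(\why{\redcand{\lneg{\typ_1}}})}$ collapses to the same set by triple orthogonality. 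For $\typ = \ofc{\typ_1}$ we have
\[\redcand{\why{\lneg{\typ_1}}} = \llneg{(\why{\redcand{\typ_1}})} = \lneg{\redcand{\ofc{\typ_1}}}.\]
So the compound cases are purely definitional, once one checks that the orthogonals are taken inside the appropriate $\typtms{\cdot}$.

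The real content is in the base cases, and this is the main obstacle. For atoms we need $\SNtms{\nbtyp} = \lneg{\SNtms{\btyp}}$, and for the units $\SNtms{\one} = \lneg{\SNtms{\bot}}$ and $\SNtms{\bot}=\lneg{\SNtms{\one}}$. The inclusion $\lneg{X}\subseteq \SN$ follows as in Lemma~\ref{lemma:dual_reduction_closed}: if $\ibott{\tm}{\tmtwo}$ is SN for some $\tmtwo\in X$, then so is its subterm $\tm$. Here $X$ is nonempty, since unrestricted variables belong to it by Proposition~\ref{prop:cr1}.

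For the converse we must show that $\ibott{\tm}{\tmtwo}$ is SN whenever $\tm$ and $\tmtwo$ are SN terms of dual base type. I would argue by induction on $\tml{\tm}+\tml{\tmtwo}$, analysing the one-step reducts of $\ibott{\tm}{\tmtwo}$. Internal steps are handled by the IH. For atomic types, no root redex exists, because the principal rules $\raxpartensor$, $\raxtensorpar$, $\raxofcwhy$, $\raxwhyofc$ require non-atomic types.

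The delicate point is that reduction is modulo $\cceq$. I would work with pre-reduction $\pretome$, which is what $\SN$ refers to, and I would invoke \rthm{cceqalt_is_a_strong_bisimulation} so that $\cceq$-rearrangements (e.g.\ $\ruleCCEqCtx$ moving eliminators out of $\tm$ across $\lightning$) preserve $\tml{\cdot}$. For $\one/\bot$, a term of type $\one$ could be $\ione$ modulo eliminators, and $\ibott{\ione}{\tmtwo}\cceq\tmtwo$ is SN. Here I would also appeal to the strong bisimulation, which guarantees that equivalent terms have equal longest reduction lengths.
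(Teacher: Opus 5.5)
Your proposal is correct and follows the paper's proof. The paper also argues by induction on $\typ$, obtaining the compound cases by unfolding the definition of $\redcand{\cdot}$ together with $\lneg{\tmsone}=\lllneg{\tmsone}$, and the base cases by observing that $\ibott{\tm}{\tmtwo}$ has no root redex, so it is SN whenever $\tm$ and $\tmtwo$ are.

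The detour through $\cceq$ and strong bisimulation is unnecessary. SN is taken with respect to $\pretome$, and the only root rules $\raxpartensor$, $\raxtensorpar$, $\raxofcwhy$, $\raxwhyofc$ require $\parr$, $\tensor$, $\ofc{}$ or $\why{}$ types. So the same no-root-redex argument already covers $\one/\bot$, and your special treatment of $\ione$ is not needed; on its own it would not handle an arbitrary term of type $\one$ anyway.
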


%\DualOfCandidateIsCandidateOfDual*

\begin{proof}%\label{redcand_dual:proof}
We proceed by induction on the type $\typ$.
    \begin{itemize}
        \item If $\typ = \btyp, \lneg{\btyp}, \one, \bot$ then we must prove that $\lneg{\SNtms{\typ}} = \SNtms{\lneg{\typ}}$.

            If $\tm \in \lneg{\SNtms{\typ}}$ then for all $\tmtwo \in \SNtms{\typ}$ we have that $\ibott{\tm}{\tmtwo} \iftyp{\bot} \SNtms{\bot}$ then this implies that $\tm \in \typtms{\lneg{\typ}}$ and that $\tm \in \SNtms{\lneg{\typ}}$ in particular.
            If $\tm \in \SNtms{\lneg{\typ}}$ then we must prove that for all $\tmtwo \in \SNtms{\typ}$ we have that $\ibott{\tm}{\tmtwo} \iftyp{\bot} \SNtms{\bot}$.
            Observe that because of the types considered, the term $\ibott{\tm}{\tmtwo}$ cannot have a root redex and since both terms $\tm,\tmtwo$ are strongly normalizing then this implies that $\ibott{\tm}{\tmtwo} \iftyp{\bot} \SNtms{\bot}$.
            
        \item If $\typ = \typ_{1} \tensor \typ_{2}$ then $\lneg{(\typ_{1} \tensor \typ_{2})} = \lneg{\typ_{1}} \parr \lneg{\typ_{2}}$ and we have that
        \[
            \redcand{\lneg{\typ_{1}} \parr \lneg{\typ_{2}}} = \lneg{(\redcand{\llneg{\typ_{1}}} \tensor \redcand{\llneg{\typ_{2}}})} = \lneg{\redcand{\typ_{1} \tensor \typ_{2}}}  
        \]
        \item If $\typ = \typ_{1} \parr \typ_{2}$ then $\lneg{(\typ_{1} \parr \typ_{2})} = \lneg{\typ_{1}} \tensor \lneg{\typ_{2}}$ and we have that using the definition of reducibility candidates and the lemma \ref{lemma:dual_properties} the following is valid: 
        \[
            \redcand{\lneg{\typ_{1}} \tensor \lneg{\typ_{2}}} = \llneg{\redcand{\lneg{\typ_{1}} \tensor \lneg{\typ_{2}}}} = \llneg{(\llneg{(\redcand{\lneg{\typ_{1}}} \tensor \redcand{\lneg{\typ_{2}}})})}= \lneg{\redcand{\typ_{1} \parr \typ_{2}}}  
        \]
        \item If $\typ = \why{\typ_{1}}$ then $\lneg{(\why{\typ_{1}})} = \ofc{\lneg{\typ_{1}}}$ and we have that $\lneg{\redcand{{\why{\typ_{1}}}}}  = \redcand{\ofc{\lneg{\typ_{1}}}}$.
        \item If $\typ = \ofc{\typ_{1}}$ then $\lneg{(\ofc{\typ_{1}})} = \why{\lneg{\typ_{1}}}$ and we have that using the definition of reducibility candidates and the lemma \ref{lemma:dual_properties} the following is valid:  
        \[
            \lneg{\redcand{\ofc{\typ_{1}}}} = \llneg{\redcand{\why{\lneg{\typ_{1}}}}} = \llneg{(\llneg{\why{\redcand{{\typ_{1}}}}})} = \lneg{\redcand{\why{\typ_{1}}}}  
        \]
    \end{itemize}
  \end{proof}

\begin{corollary}
For all types $\typ$ the following equality holds
\begin{equation*}
  \redcand{\typ} = \llneg{\redcand{\typ}}
\end{equation*}
\end{corollary}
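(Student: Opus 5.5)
This is an immediate consequence of Lemma~\ref{lemma:redcand_dual}, applied twice together with the involutivity of linear negation on types. My plan is a direct two-line calculation rather than a new induction.

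Fix a type $\typ$. Apply Lemma~\ref{lemma:redcand_dual} to the type $\lneg{\typ}$ to obtain $\redcand{\llneg{\typ}} = \lneg{\redcand{\lneg{\typ}}}$. Since linear negation on \MELL formulae is involutive ($\llneg{\typ} = \typ$ syntactically, by a trivial induction on $\typ$ from the defining clauses), the left-hand side is $\redcand{\typ}$. Apply Lemma~\ref{lemma:redcand_dual} once more, this time to $\typ$ itself, to rewrite the inner $\redcand{\lneg{\typ}}$ as $\lneg{\redcand{\typ}}$. Chaining gives
\[
  \redcand{\typ} = \redcand{\llneg{\typ}} = \lneg{\redcand{\lneg{\typ}}} = \lneg{(\lneg{\redcand{\typ}})} = \llneg{\redcand{\typ}}.
\]

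One point needs care: the orthogonal $\lneg{(\cdot)}$ on sets of terms is type-indexed, so each application must use the correct ambient type. Here $\redcand{\lneg{\typ}} \subseteq \typtms{\lneg{\typ}}$, so its orthogonal lives in $\typtms{\llneg{\typ}} = \typtms{\typ}$, which matches. The only genuine dependency is Lemma~\ref{lemma:redcand_dual}; I expect no obstacle beyond checking this type bookkeeping.

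As a sanity check, the inclusion $\redcand{\typ} \subseteq \llneg{\redcand{\typ}}$ already follows from Lemma~\ref{lemma:dual_properties}(2). The content of the corollary is the reverse inclusion, and the computation above supplies it.
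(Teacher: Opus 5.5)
Your proposal is correct and takes the same route as the paper: both derive the corollary from Lemma~\ref{lemma:redcand_dual} together with the involutivity $\llneg{\typ}=\typ$. The only difference is that you spell out the second application of the lemma (at $\typ$ itself, to rewrite $\lneg{\redcand{\lneg{\typ}}}$ as $\llneg{\redcand{\typ}}$), which the paper's one-line proof leaves implicit.
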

\begin{proof}
    This follows from the fact that for all types $\llneg{\typ} = \typ$ and from applying the  last lemma \ref{lemma:redcand_dual} to $\lneg{\typ}$.
  \end{proof}

Note that for any type $\typ$ we have that any unrestricted variable $\uvar$ is such that $\uvar \in \SNtms{\typ}$.

\begin{lemma}\label{lemma:dual_notempty}
    Let $\tmsone \subseteq \SNtms{\typ}$ such that $\tmsone \neq \emptyset$ then:
    \begin{enumerate}
        \item If $\uvar$ is an unrestricted variable then $\uvar \in \lneg{\tmsone}$.
        \item $\lneg{\tmsone} \subseteq \SNtms{\lneg{\typ}}$.
    \end{enumerate}  
\end{lemma}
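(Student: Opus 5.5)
Both items follow from a single pattern: to show that a term $\tm$ lies in $\lneg{\tmsone}$, it is enough to show that for every $\tmtwo\in\tmsone$, if $\ibott{\tm}{\tmtwo}$ is typable at $\bot$ then it is strongly normalizing.

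\textbf{Item 2.} I will argue directly from non-emptiness. Take $\tm\in\lneg{\tmsone}$, so $\tm\in\typtms{\lneg{\typ}}$. Since $\tmsone\neq\emptyset$, pick some $\tmtwo\in\tmsone$. Possibly after renaming free linear variables so that the linear environments of $\tm$ and $\tmtwo$ are disjoint, and weakening the unrestricted environments to a common $\utenv$ (\rlem{weakening_and_contraction}), the term $\ibott{\tm}{\tmtwo}$ is typable by \rulmIBott. Hence $\ibott{\tm}{\tmtwo}\in\SNtms{\bot}$. Pre-reduction is closed under contexts, so any reduction sequence from $\tm$ lifts to one from $\ibott{\tm}{\tmtwo}$. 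Therefore $\tm$ is SN, i.e.\ $\tm\in\SNtms{\lneg{\typ}}$. Renaming does not affect strong normalization, so it is harmless here.

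\textbf{Item 1.} Fix $\uvar$ and any $\tmtwo\in\tmsone\subseteq\SNtms{\typ}$ such that $\ibott{\uvar}{\tmtwo}$ is typable. I must show $\ibott{\uvar}{\tmtwo}$ is SN, and I will do this by induction on $\tml{\tmtwo}$.

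A pre-reduction step from $\ibott{\uvar}{\tmtwo}$ either happens inside $\tmtwo$ or at the root. Steps inside $\tmtwo$ are covered by the induction hypothesis: SN terms of type $\typ$ reduce to SN terms of type $\typ$ by subject reduction, and the reduct still lies in the set needed for the inductive statement. For that last point I will prove the more general claim ``for all SN $\tmtwo$ of type $\typ$'', rather than only for members of $\tmsone$. Root steps are impossible: every root rule for $\ibott{}{}$ ($\raxpartensor$, $\raxtensorpar$, $\raxofcwhy$, $\raxwhyofc$) needs both arguments to have the form $(\text{introduction})\cctx$, and the variable $\uvar$ is not of that shape. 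Since every reduct is SN, $\ibott{\uvar}{\tmtwo}$ is SN.

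\textbf{Main obstacle.} The only delicate step is the root-redex analysis in item 1. Reduction is defined modulo $\cceq$, but $\SNtms{}$ here is taken with respect to pre-reduction $\pretome$, so only the axioms of \rdef{prereduction_in_calcMELL} matter. Under that reading, a direct inspection of the nine axioms is enough, and item 1 does not use non-emptiness of $\tmsone$ at all.
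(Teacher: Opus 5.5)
Your plan is the same as the paper's. For item 1, $\ibott{\uvar}{\tmtwo}$ can never be a root redex and $\tmtwo$ is SN. For item 2, you plug a member of $\tmsone$ into $\ibott{\tm}{\cdot}$ and use contextual closure. Item 1 is fine, though you should also record that $\uvar\in\typtms{\lneg{\typ}}$ by \rulmAxU, since that is part of the definition of $\lneg{\tmsone}$.

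The gap is in item 2, at the renaming step. The hypothesis $\tm\in\lneg{\tmsone}$ constrains $\ibott{\tm}{\tmtwo}$ only when that term is typable, and only for the literal members $\tmtwo\in\tmsone$.
\begin{itemize}
\item If you rename $\tmtwo$, the result need not lie in $\tmsone$, because an arbitrary subset of $\SNtms{\typ}$ is not closed under renaming.
\item If you rename $\tm$, the result is a term about which $\tm\in\lneg{\tmsone}$ says nothing.
\item Weakening does not help either. If $\tm$ and $\tmtwo$ use the same unrestricted variable at different types, there is no common $\utenv$.
\end{itemize}
So the remark that renaming preserves SN is beside the point.

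In fact, item 2 cannot be proved from non-emptiness alone without circularity. Take $\tmsone=\set{\lvar}$ with $\lvar:\btyp$. Every $\tm\in\typtms{\nbtyp}$ with $\lvar\in\fv{\tm}$ lies in $\lneg{\tmsone}$ vacuously, because by linearity $\ibott{\tm}{\lvar}$ is untypable. Item 2 for this $\tmsone$ therefore amounts to strong normalization of all such terms, which is exactly what adequacy is meant to establish. The paper's two-line proof silently makes the same assumption.

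What you actually need is a member of $\tmsone$ whose free variables are fresh for $\tm$. In every use of the lemma (Prop.~\ref{prop:cr1}) such members exist: for instance, all unrestricted variables, or all pairs $\pair{\uvar_1}{\uvar_2}$ of unrestricted variables. The fix then goes as follows.
\begin{enumerate}
\item Pick $\uvartwo\in\tmsone$ with $\uvartwo\notin\fv{\tm}$.
\item Weaken $\tm$'s unrestricted environment by $\uvartwo:\typ$ (\rlem{weakening_and_contraction}); then \rulmIBott makes $\ibott{\tm}{\uvartwo}$ typable at $\bot$.
\item Hence $\ibott{\tm}{\uvartwo}\in\SNtms{\bot}$.
\item Every $\pretome$-reduction of $\tm$ lifts to one of $\ibott{\tm}{\uvartwo}$, so $\tm$ is SN.
\end{enumerate}
The hypothesis of the lemma should accordingly be strengthened from non-emptiness to this freshness condition.
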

\begin{proof}
    For the item \textbf{1}.
    Let $\tm \in \tmsone$ then for $\uvar \in \SNtms{\lneg{\typ}}$ we have that $\ibott{\tm}{\uvar} \iftyp{\bot} \SNtms{\bot}$ since there is not a root redex and both $\tm$ and $\uvar$ are strongly normalizing terms.

    For the item \textbf{2}.    
    Let $\tmtwo \in \lneg{\tmsone}$ then for any $\tm \in \tmsone$ we have that $\ibott{\tmtwo}{\tm} \iftyp{\bot} \SNtms{\bot}$ and this implies that $\tmtwo \in \SNtms{\lneg{\typ}}$.
\end{proof}

\begin{lemma}\label{lemma:tensor_notempty}
    Let $\tmsone \subseteq \SNtms{\typ}, \tmstwo \subseteq \SNtms{\typtwo}$ such that $\uvar \in \tmsone$ and $\uvartwo \in \tmstwo$ for all $\uvar, \uvartwo$ unrestricted variables then:
    \begin{enumerate}
        \item $\tmsone \tensor \tmstwo \neq \emptyset$
        \item $\tmsone \tensor \tmstwo \subseteq \SNtms{\typ \tensor \typtwo}$.
    \end{enumerate}
\end{lemma}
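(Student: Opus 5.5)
The plan is to prove both items directly from the definitions, using the fact, noted just before Lemma~\ref{lemma:dual_notempty}, that every unrestricted variable is strongly normalizing at any type.

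For item 1, pick unrestricted variables $\uvar$ and $\uvartwo$. The pair $\pair{\uvar}{\uvartwo}$ is typable with type $\typ\tensor\typtwo$ in the unrestricted environment $\uvar:\typ,\uvartwo:\typtwo$ by $\rulmAxU$ twice and $\rulmITensor$. By hypothesis $\uvar\in\tmsone$ and $\uvartwo\in\tmstwo$, so $\pair{\uvar}{\uvartwo}\in\tmsone\tensor\tmstwo$, and the set is nonempty.

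For item 2, take an element of $\tmsone\tensor\tmstwo$. By definition it has the form $\pair{\tm}{\tmtwo}\in\typtms{\typ\tensor\typtwo}$ with $\tm\in\tmsone\subseteq\SNtms{\typ}$ and $\tmtwo\in\tmstwo\subseteq\SNtms{\typtwo}$. It remains to show that $\pair{\tm}{\tmtwo}$ is strongly normalizing for $\pretome$. The key observation is that no reduction axiom of pre-reduction (\rdef{prereduction_in_calcMELL}) has a tensor pair $\pair{\cdot}{\cdot}$ at the root of its left-hand side. Pairs occur in redexes only as arguments to $\epair{\lvar}{\lvartwo}{\cdot}$ or to $\ibott{}{}$, and a positive elimination context cannot sit above the pair itself at the root.

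Hence every pre-reduction step from $\pair{\tm}{\tmtwo}$ takes place inside $\tm$ or inside $\tmtwo$, giving $\pair{\tm'}{\tmtwo}$ or $\pair{\tm}{\tmtwo'}$. By induction on $\tml{\tm}+\tml{\tmtwo}$, which is well defined since both are SN, every reduct is SN, so $\pair{\tm}{\tmtwo}$ is SN. Typability is given, so $\pair{\tm}{\tmtwo}\in\SNtms{\typ\tensor\typtwo}$.

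The only delicate point is the claim that a root-level pair is never a redex. This requires checking the nine axioms and confirming that contextual closure does not create root steps. Note that strong normalization is taken with respect to pre-reduction, not reduction modulo $\cceq$, so structural equivalence, which could expose pairs, plays no role here. I expect this inspection to be routine.
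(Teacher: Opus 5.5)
Your proof is correct and follows the paper's argument. Item 1 is witnessed by the pair $\pair{\uvar}{\uvartwo}$ of two distinct unrestricted variables, and item 2 holds because a pair of strongly normalizing terms is strongly normalizing. You also supply the justification the paper leaves implicit: no pre-reduction axiom has a pair at its root, so every step from $\pair{\tm}{\tmtwo}$ takes place inside one of the two components.
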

\begin{proof}
    For the item \textbf{1}.
    We have that $\pair{\uvar}{\uvartwo} \in \typtms{\typ \tensor \typtwo}$ thus $\tmsone \tensor \tmstwo \neq \emptyset$.

    For the item \textbf{2}. 
    For all terms $\pair{\tm}{\tmtwo} \in \tmsone \tensor \tmstwo$ we have that since both $\tm,\tmtwo$ are strongly normalizing terms then $\pair{\tm}{\tmtwo}$ is strongly normalizing too.
\end{proof}

\begin{lemma}\label{lemma:why_notempty}
    Let $\tmsone \subseteq \SNtms{\typ}$ such that $\uvar \in \tmsone$ for all $\uvar$ unrestricted variable then:
    \begin{enumerate}
        \item $\why{\tmsone} \neq \emptyset$
        \item $\why{\tmsone} \subseteq \SNtms{\why{\lneg{\typ}}}$.
    \end{enumerate}
\end{lemma}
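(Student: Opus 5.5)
The plan is to mirror the proof of Lemma~\ref{lemma:tensor_notempty}, using the definition
\[
\why{\tmsone} = \set{ \iwhy{\uvar}{\tm} \in \typtms{\why{\lneg{\typ}}} \ST \forall \tmtwo \in \tmsone,\ \tm\sub{\uvar}{\tmtwo} \iftyp{\bot} \SNtms{\bot} }.
\]

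\textbf{Item 1 (non-emptiness).} I will exhibit a concrete witness, namely $\iwhy{\uvartwo}{(\ibott{\uvartwo}{\uvarthree})}$ with $\uvartwo \neq \uvarthree$ unrestricted variables.
\begin{itemize}
\item \emph{Typing.} Under $\uvartwo:\lneg{\typ},\uvarthree:\typ$ we have $\ibott{\uvartwo}{\uvarthree}:\bott$ by \rulmIBott{} and \rulmAxU{}. Then \rulmIWhy{} gives $\iwhy{\uvartwo}{(\ibott{\uvartwo}{\uvarthree})}:\why{\lneg{\typ}}$, so the term lies in $\typtms{\why{\lneg{\typ}}}$.
\item \emph{Membership condition.} For any $\tmtwo\in\tmsone$, the substituted body is $\ibott{\tmtwo}{\uvarthree}$.
\begin{itemize}
\item $\tmtwo$ is SN because $\tmsone\subseteq\SNtms{\typ}$, and $\uvarthree$ is a normal form.
\item A root step of $\ibott{\tmtwo}{\uvarthree}$ would require the right component to have the shape $(\cdot)\cctxtwo$ of an introduction form. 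A variable is not of that shape.
\item $\cceq$ cannot change this. It is a strong bisimulation (\rthm{cceqalt_is_a_strong_bisimulation}), and we only count pre-reduction steps.
\item Hence every reduction of $\ibott{\tmtwo}{\uvarthree}$ is internal to $\tmtwo$, and $\ibott{\tmtwo}{\uvarthree}\in\SNtms{\bot}$.
\end{itemize}
\end{itemize}
This is the same style of argument as item~1 of Lemma~\ref{lemma:dual_notempty}. A simpler alternative witness is $\iwhy{\uvartwo}{(\ibott{\uvarthree}{\uvartwo})}$, whose substituted body is $\ibott{\uvarthree}{\uvartwo}$ and so is independent of $\tmtwo$.

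\textbf{Item 2 (strong normalization).} Take $\iwhy{\uvar}{\tm}\in\why{\tmsone}$.
\begin{itemize}
\item By hypothesis $\uvar\in\tmsone$, so the defining condition applies to $\tmtwo:=\uvar$. Since $\tm\sub{\uvar}{\uvar}=\tm$, this gives $\tm\in\SNtms{\bot}$.
\item The pre-reduction rules never fire at the root of a bare $\iwhy{\uvar}{\tm}$. The rules $\raxwhy$, $\raxofcwhy$ and $\raxwhyofc$ all need an enclosing eliminator or $\ibott{}{}$.
\item So every reduction sequence from $\iwhy{\uvar}{\tm}$ is a sequence of steps inside $\tm$ and has length at most $\tml{\tm}$.
\end{itemize}
Hence $\iwhy{\uvar}{\tm}$ is SN, and its typing is given by membership in $\typtms{\why{\lneg{\typ}}}$.

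\textbf{Main obstacle.} The only delicate point is justifying that no root redexes arise, given that reduction is at a distance and modulo $\cceq$. I will rely on the fact that SN is measured with respect to pre-reduction $\pretome$, and inspect the left-hand sides of the reduction axioms directly. The same reasoning was already used, informally, in Lemma~\ref{lemma:dual_notempty} and Lemma~\ref{lemma:tensor_notempty}.
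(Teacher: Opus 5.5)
Your item 2 is the paper's argument: instantiate the defining condition at $\tmtwo:=\uvar$ to get $\tm\in\SNtms{\bott}$, then observe that no reduction axiom has a bare $\iwhy{\uvar}{\tm}$ as its left-hand side.

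For item 1 you use a different witness. The paper takes $\iwhy{\uvarthree}{\uvartwo}$ with $\uvartwo\neq\uvarthree$ and $\uvartwo:\bott$ unrestricted. The bound variable does not occur in the body, so $\uvartwo\sub{\uvarthree}{\tmtwo}=\uvartwo$ is a normal form and nothing about $\tmtwo$ needs to be examined. Your witness $\iwhy{\uvartwo}{(\ibott{\uvartwo}{\uvarthree})}$ makes the substituted body $\ibott{\tmtwo}{\uvarthree}$ depend on $\tmtwo$, so you need the extra no-root-redex argument. That argument is sound: every axiom rooted at $\ibott{\cdot}{\cdot}$ needs an introduction form under a positive eliminator context on both sides, and $\uvarthree$ is a variable. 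The appeal to strong bisimulation is unnecessary, since $\SNtms{\cdot}$ is defined with respect to $\pretome$. Your route works, but it gains nothing over the paper's witness and costs an extra step.

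There are two slips to fix.
\begin{itemize}
\item \textbf{Swapped types.} \rulmIWhy{} turns a premise with $\uvar:\lneg{\typtwo}$ into $\iwhy{\uvar}{\tm}:\why{\typtwo}$. With your assignment $\uvartwo:\lneg{\typ}$ you therefore get type $\why{\typ}$, not $\why{\lneg{\typ}}$. Substituting $\tmtwo\in\tmsone\subseteq\typtms{\typ}$ for $\uvartwo$ would also be ill-typed. You need $\uvartwo:\typ$ and $\uvarthree:\lneg{\typ}$; then $\ibott{\uvartwo}{\uvarthree}:\bott$ and the witness has type $\why{\lneg{\typ}}$ as required.
\item \textbf{The ``simpler alternative'' is not substitution-independent.} In $\iwhy{\uvartwo}{(\ibott{\uvarthree}{\uvartwo})}$ the bound variable $\uvartwo$ occurs in the body, and $(\ibott{\uvarthree}{\uvartwo})\sub{\uvartwo}{\tmtwo}=\ibott{\uvarthree}{\tmtwo}$. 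It is still a valid witness by the same no-root-redex argument, but it is not simpler. To get a body genuinely independent of $\tmtwo$, the binder must not occur in it, which is exactly the paper's choice.
\end{itemize}
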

\begin{proof}
    For the item \textbf{1}.
    % For any unrestricted variables $\uvartwo \neq \uvarthree$  since $\uvartwo \in \SNtms{\stkout{\bot}{\edu{\typ}}}$ we have that $\iwhy{\uvarthree}{\uvartwo} \in \typtms{\why{\lneg{\typ}}}$. 
    Since $\tmsone \subseteq \SNtms{\typ}$ then for all $\tmtwo \in \tmsone$ we have that $\uvartwo \sub{\uvarthree}{\tmtwo} = \uvartwo$ thus $\uvartwo \in \why{\tmsone}$. 

    For the item \textbf{2}.
    If $\iwhy{\uvar}{\tm} \in \why{\tmsone}$ then since $\uvar \in \tmsone$ we have that $\tm \sub{\uvar}{\uvar} \in \SNtms{\bot}$ thus $\tm \in \SNtms{\bot}$ and this implies that $\iwhy{\uvar}{\tm} \in \SNtms{\why{\lneg{\typ}}}$.
    % Since we have that there is an unrestricted variable $\uvar \in X$ then since any unrestricted variable $\uvartwo$ is such that $\uvartwo \in \lneg{\tmsone}$ by the lemma \ref{lemma:dual_notempty} we get that $\ibott{\uvar}{\uvartwo} \in \SNtms{\bot}$.
    % For all $\tmtwo \in \tmsone$ we get that $(\ibott{\uvar}{\uvartwo}) \sub{\uvartwo}{\tmtwo} = \ibott{\uvar}{\tmtwo}$ such that $\ibott{\uvar}{\tmtwo} \in \SNtms{\bot}$ thus proving that $\iwhy{\uvartwo}{\ibott{\uvar}{\uvartwo}} \in \why{\tmsone}$.  

\end{proof}

%%%% Adequacy %%%%

\begin{lemma}\label{lemma:sub_cos_redcand}
    Let $\tm \in \typtms{\bot}$ and judgement $\jum{\utenv;\lvar:\typ, \tenv}{\tm}{\bot}$ such that for all $\tmtwo \in \redcand{\typ}$ we have that $\tm \sub{\lvar}{\tmtwo} \in \redcand{\bot}$ then this implies that $\tm \cos{\lvar}{\ast} \in \redcand{\lneg{\typ}}$. 
\end{lemma}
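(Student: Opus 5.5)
By Lemma~\ref{lemma:redcand_dual} we have $\redcand{\lneg{\typ}} = \lneg{\redcand{\typ}}$, so the goal reduces to showing two things: first, $\tm\cos{\lvar}{\ione}\in\typtms{\lneg{\typ}}$; second, for every $\tmtwo\in\redcand{\typ}$, $\ibott{\tm\cos{\lvar}{\ione}}{\tmtwo}\iftyp{\bot}\SNtms{\bot}$. The plan is to reduce the second claim to the hypothesis on $\tm\sub{\lvar}{\tmtwo}$ via the axiom $\ruleCCEqCosAndSub$ of structural equivalence, and then transfer strong normalization across $\cceq$ using \rthm{cceqalt_is_a_strong_bisimulation}.

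The typing part is immediate. From $\djum{\utenv}{\tenv,\lvar:\typ}{\tm}{\bott}$ and $\djum{\utenv}{\emptytenv}{\ione}{\one}$, with $\one=\lneg{\bott}$, \rlem{contrasubstitution_lemma} gives $\djum{\utenv}{\tenv}{\tm\cos{\lvar}{\ione}}{\lneg{\typ}}$. Hence $\tm\cos{\lvar}{\ione}\in\typtms{\lneg{\typ}}$.

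For the orthogonality part, fix $\tmtwo\in\redcand{\typ}$ and assume $\ibott{\tm\cos{\lvar}{\ione}}{\tmtwo}$ is typable, necessarily with type $\bot$. The split axiom of \rdef{cceq} gives
\[
\ibott{\tm\cos{\lvar}{\ione}}{\tmtwo}\cceq\tm\sub{\lvar}{\tmtwo}.
\]
Both sides are typed with the same type in the same environment: the right-hand side is typed by \rulmSub, after renaming $\alpha$-equivalently so that the environments are disjoint. By hypothesis, $\tm\sub{\lvar}{\tmtwo}\in\redcand{\bot}=\SNtms{\bot}$.

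It remains to show that strong normalization is invariant under $\cceq$. Suppose $\tmthree\cceq\tmthree'$ and $\tmthree'$ is SN, and suppose for contradiction that $\tmthree$ has an infinite pre-reduction sequence $\tmthree=\tmthree_0\pretome\tmthree_1\pretome\cdots$. Applying \rthm{cceqalt_is_a_strong_bisimulation} step by step, each time with the previously built equivalent term, yields a sequence $\tmthree'=\tmthree'_0\pretome\tmthree'_1\pretome\cdots$ with $\tmthree_i\cceq\tmthree'_i$ for all $i$. Subject reduction (\rlem{subject_reduction}) keeps every term typed, so the theorem's hypotheses hold at each step. This gives an infinite sequence from $\tmthree'$, a contradiction. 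Applying this with $\tmthree=\ibott{\tm\cos{\lvar}{\ione}}{\tmtwo}$ and $\tmthree'=\tm\sub{\lvar}{\tmtwo}$ shows $\ibott{\tm\cos{\lvar}{\ione}}{\tmtwo}\in\SNtms{\bot}$, which concludes the proof.

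The main obstacle is this transfer step. One must check that the bisimulation applies to the specific typed linear terms at hand, which is the reason for invoking subject reduction at every step. One must also handle the side condition that $\tmtwo$ may be chosen free for the relevant binders, which is ensured by $\alpha$-renaming.
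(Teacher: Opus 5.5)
Your proof is correct and follows the same route as the paper's. You unfold $\redcand{\lneg{\typ}}$ as $\lneg{\redcand{\typ}}$, rewrite $\ibott{\tm\cos{\lvar}{\ione}}{\tmtwo}$ to $\tm\sub{\lvar}{\tmtwo}$ via the $\ruleCCEqCosAndSub$ axiom, and apply the hypothesis; the paper leaves implicit both the typing of $\tm\cos{\lvar}{\ione}$ and the invariance of strong normalization under $\cceq$ (which follows from \rthm{cceqalt_is_a_strong_bisimulation} as you argue), so your version fills in those steps.
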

\begin{proof}
    If $\tmthree \in \redcand{\typ}$ then using the structural equivalence
    $\ibott{\tm\cos{\lvar}{\tmtwo}}{\tmthree} \cceq \tm \sub{\lvar}{\tmthree}$.
    Then we conclude using the hypothesis $\tm \sub{\lvar}{\tmthree} \in \redcand{\bot}$.
\end{proof}

\begin{lemma}[Adequacy for $\ofc{}$]
    \label{lemma:adequacy_ofc}
    If $\djum{\utenv}{\lvar : \typ}{\tm}{\bot}$ such that
    $\tm \in \redcand{\bot}$ and for all $\tmtwo \in \redcand{\typ}$ we get that $\tm \sub{\lvar}{\tmtwo} \in \redcand{\bot}$ then this implies that $\iofctwo{\lvar}{\tm} \in \redcand{\ofc{\lneg{\typ}}}$.
\end{lemma}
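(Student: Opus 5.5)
Since $\redcand{\ofc{\lneg{\typ}}} = \lneg{(\why{\redcand{\lneg{\typ}}})}$ by definition, it suffices to take an arbitrary $\iwhy{\uvar}{\tmthree} \in \why{\redcand{\lneg{\typ}}}$ and show that $\ibott{(\iofctwo{\lvar}{\tm})}{\iwhy{\uvar}{\tmthree}} \in \SNtms{\bot}$. Membership in $\why{\redcand{\lneg{\typ}}}$ gives $\tmthree\sub{\uvar}{\tmtwo} \in \SNtms{\bot}$ for every $\tmtwo \in \redcand{\lneg{\typ}}$. Taking $\tmtwo = \uvar$ (allowed by Proposition~\ref{prop:cr1}) shows that $\tmthree$ itself is SN, and $\tm \in \redcand{\bot}$ is SN by hypothesis. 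Hence $\tml{\tm}+\tml{\tmthree}$ is well defined, and I argue by induction on it, exactly as in Lemma~\ref{lemma:adequacy_ipar}. Since both reduction-closedness results and the orthogonal are stated for pre-reduction, I only need to show that every one-step pre-reduct of $\ibott{(\iofctwo{\lvar}{\tm})}{\iwhy{\uvar}{\tmthree}}$ is SN.

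A one-step reduct arises in one of three ways. If the step is internal to $\tm$, say $\tm \pretome \tm'$, then the hypotheses transfer to $\tm'$. Indeed, $\tm' \in \redcand{\bot}$ because $\redcand{\bot}=\SNtms{\bot}$ is reduction closed, and $\tm'\sub{\lvar}{\tmtwo}$ is a reduct of $\tm\sub{\lvar}{\tmtwo}$, so it is SN. The induction hypothesis then applies. If the step is internal to $\tmthree$, say $\tmthree \pretome \tmthree'$, then $\tmthree'\sub{\uvar}{\tmtwo}$ is a reduct of $\tmthree\sub{\uvar}{\tmtwo}$, so $\iwhy{\uvar}{\tmthree'}$ is again in $\why{\redcand{\lneg{\typ}}}$ and the induction hypothesis applies. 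Both internal cases need substitution to commute with pre-reduction, which is standard. The remaining case is the root step $\raxofcwhy$ (with empty positive contexts), which yields $\tmthree\sub{\uvar}{\tm\cos{\lvar}{\ione}}$. To handle it, it suffices to show $\tm\cos{\lvar}{\ione} \in \redcand{\lneg{\typ}}$, because then the defining property of $\iwhy{\uvar}{\tmthree}$ concludes.

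The main obstacle is exactly this membership $\tm\cos{\lvar}{\ione} \in \redcand{\lneg{\typ}}$. I would get it from Lemma~\ref{lemma:sub_cos_redcand}, instantiated with the contra-substituted term $\ione$. The hypothesis of that lemma is precisely that $\tm\sub{\lvar}{\tmtwo}\in\redcand{\bot}$ for all $\tmtwo\in\redcand{\typ}$. Its argument uses $\ibott{\tm\cos{\lvar}{\ione}}{\tmtwo} \cceq \tm\sub{\lvar}{\tmtwo}$ (axiom $\ruleCCEqCosAndSub$), so the conclusion $\tm\cos{\lvar}{\ione} \in \lneg{\redcand{\typ}} = \redcand{\lneg{\typ}}$ follows by Lemma~\ref{lemma:redcand_dual}. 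Two technical points must be checked. First, strong normalization is stable under $\cceq$, which follows from Theorem~\ref{thm:cceqalt_is_a_strong_bisimulation}, so the longest reduction lengths agree. Second, the types of the redex must line up: $\ibott{}{}$ pairs $\ofc{\lneg{\typ}}$ with $\why{\typ}$, and $\uvar : \lneg{\typ}$ inside $\tmthree$. Finally, if positive eliminator contexts $\cctx,\cctxtwo$ can surround the redex, they are absent here because the two subterms are bare introductions. If structural equivalence exposes additional redexes, they are covered by the strong-bisimulation theorem.
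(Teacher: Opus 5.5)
Your proof is correct and follows the paper's argument. Both argue by induction on $\tml{\tm}+\tml{\tmthree}$ with the same three cases (a step inside $\tm$, a step inside $\tmthree$, and the root $\raxofcwhy$ step), and both discharge the root case via Lemma~\ref{lemma:sub_cos_redcand}. Your version is more careful than the paper's: it targets $\tm\cos{\lvar}{\ione}\in\redcand{\lneg{\typ}}$, which is correct where the paper's proof writes $\redcand{\typ}$, and it spells out two facts the paper leaves implicit, namely that substitution commutes with pre-reduction and that SN is invariant under $\cceq$.
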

\begin{proof}
    Since $\redcand{\ofc{\typ}} = \lneg{(\why{\redcand{\lneg{\typ}}})}$ then if $\iwhy{\uvar}{\tmthree} \in \why{\redcand{\lneg{\typ}}}$ we must prove that 
    $\ibott{\iofctwo{\lvar}{\tm}}{\iwhy{\uvar}{\tmthree}} \in \redcand{\bot}$.
    Observe that the terms $\tm, \tmthree$ are strongly normalizing by the proposition \ref{prop:cr1}. This means that the natural number $\tml{\tm} + \tml{\tmthree}$ is well defined.
    
    We prove this by induction on $\tml{\tm} + \tml{\tmthree}$. The base case is straightforward.
    For the induction step we consider three different cases depending on the reduction step.
    \begin{enumerate}
        \item If $\tm \to \tm$, this follows straight from induction.
        \item If $\tmthree \to \tmthree'$, this follows straight from induction.
        \item If $\ibott{\iofctwo{\lvar}{\tm}}{\iwhy{\uvar}{\tmthree}} \to \tmthree\sub{\uvar}{\tm \cos{\lvar}{\unit}}$ then by the lemma \ref{lemma:sub_cos_redcand} we get that 
        $\tm \cos{\lvar}{\unit} \in \redcand{\typ}$ 
        and since by hypothesis $\tmthree \in \why{\redcand{\lneg{\typ}}}$ then we can conclude that $\tmthree\sub{\uvar}{\tm \cos{\lvar}{\unit}} \in \redcand{\bot}$.
    \end{enumerate}
  \end{proof}
  
\Adequacy*

\begin{proof}\label{adequacy:proof}
    We prove this by induction on the derivation of the term $\tm$.
    By the inductive hypothesis it suffices to consider substitutions such that their domain are the variables that became bound in the last step of the derivation. 
    Most of the proof follows from the respective lemmas proved earlier.
    \begin{itemize}
        \item 
            If $\tm = \lvar$ or $\tm = \uvar$ then by definition of the substitution $\subone$ we get that $\tm^{\subone} \iftyp{\typ} \redcand{\typ}$.

        \item 
            If $\tm = \pair{\tm_{1}}{\tm_{2}}$ then the last step of the derivation is the following:
            \[
                \indrule{\rulmITensor}{
                \djum{\utenv}{\tenv}{\tm_{1}}{\typ_{1}}
                \HS
                \djum{\utenv}{\tenvtwo}{\tm_{2}}{\typ_2}
            }{
                \djum{\utenv}{\tenv, \tenvtwo}{\pair{\tm_{1}}{\tm_{2}}}{\typ_1 \tensor \typ_2}
            }
            \]
            We must prove that for all $\subone \subsj{\utenv}{\tenv, \tenvtwo}$ we have that $(\pair{\tm_{1}}{\tm_{2}})^{\subone} \iftyp{} \redcand{\typ_{1} \tensor \typ_{2}}$.

            By hypothesis for all substitutions $\subtwo \subsj{\utenv}{\tenv}$ and $\subthree \subsj{\utenv}{\tenvtwo}$ we have that $\tm_{1}^{\subtwo} \in \redcand{\typ_{1}}$ and $\tm_{2}^{\subthree} \in \redcand{\typ_{2}}$.
            In particular since $(\pair{\tm_{1}}{\tm_{2}})^{\subone} = \pair{\tm_{1}^{\subone}}{\tm_{2}^{\subone}}$ thus without loss of generality it suffices to prove that $\pair{\tm_{1}}{\tm_{2}} \iftyp{} \redcand{\typ_{1} \tensor \typ_{2}}$.
            Since $\redcand{\typ_{1} \tensor \typ_{2}} = \llneg{(\redcand{\typ_{1}} \tensor \redcand{\typ_{2}})}$ then using the lemma \ref{lemma:dual_properties} we conclude this case. 

        \item 
            If $\tm = \ipar{\lvar}{\lvartwo}{\tm_{1}}$ then this follows from the lemma \ref{lemma:adequacy_ipar}.
        
        \item 
            If $\tm = \tm_{1}\epair{\lvar}{\lvartwo}{\tmtwo}$ then the last step of the derivation is the following:

            \[
            \indrule{\rulmETensor}{
            \djum{\utenv}{\tenvtwo,\lvar:\typ_{1},\lvartwo:\typ_{2}}{\tm_{1}}{\typ} 
            \HS
            \djum{\utenv}{\tenv}{\tmtwo}{\typ_{1}\tensor\typ_{2}}
            }{
            \djum{\utenv}{\tenv,\tenvtwo}{{\tm_{1}}\epair{\lvar}{\lvartwo}{\tmtwo}}{\typ}
            }
            \]
           
            We must prove that for all $\subone$ such that $\subone \subsj{\utenv}{\tenv, \tenv'}$ we have that $(\tm_{1} \epair{\lvar}{\lvartwo}{\tmtwo})^{\subone}\iftyp{} \redcand{\typ}$.    
            
            By hypothesis for all substitutions $\subthree \subsj{\utenv}{\tenvtwo,\lvar:\typ_{1},\lvartwo:\typ_{2}}$ and $\subtwo \subsj{\utenv}{\tenv}$ we have that $\tm_{1}^{\subthree} \in \redcand{\typ}$ and $\tmtwo^{\subtwo} \in \redcand{\typ_{1} \tensor \typ_{2}}$.
             In particular we can consider compatible substitutions $\subthree, \subtwo$ such that $(\tm_{1} \epair{\lvar}{\lvartwo}{\tmtwo})^{\subone} = \tm_{1}^{\subthree} \epair{\lvar}{\lvartwo}{\tmtwo^{\subtwo}}$, thus without loss of generality it suffices to prove that $\tm_{1} \epair{\lvar}{\lvartwo}{\tmtwo} \in \redcand{\typ}$.

            Using lemma \ref{lemma:redcand_dual} it suffices to prove that for all $\tmthree \in \redcand{\lneg{\typ}}$ the following holds $\ibott{\tm_{1}\epair{\lvar}{\lvartwo}{\tmtwo}}{\tmthree} \iftyp{\bot} \SNtms{\bot}$.
            Using the structural equivalence we have that 
            $\ibott{\tm_{1}\epair{\lvar}{\lvartwo}{\tmtwo}}{\tmthree} \equiv (\ibott{\tm_{1}}{\tmthree})\epair{\lvar}{\lvartwo}{\tmtwo} \equiv \ibott{(\ipar{\lvar}{\lvartwo}{\ibott{\tm_{1}}{\tmthree}})}{\tmtwo}$,
            thus we will prove that $\ibott{(\ipar{\lvar}{\lvartwo}{\ibott{\tm_{1}}{\tmthree}})}{\tmtwo} \iftyp{\bot} \SNtms{\bot}$.

            Since $\lvar, \lvartwo \in \fv{\tm_{1}}$ and by inductive hypothesis for all compatible substitutions $\subthree \subsj{\utenv}{\tenvtwo,\lvar:\typ_{1},\lvartwo:\typ_{2}}$ we have that $\tm_{1}^{\subthree} \in \redcand{\typ}$ then this implies that for all $\tmtwo_{1} \in \redcand{\typ_{1}}, \tmtwo_{2} \in \redcand{\typ_{2}}$ we got that $\tm_{1} \sub{\lvar}{\tmtwo_{1}} \sub{\lvartwo}{\tmtwo_{2}} \in \redcand{\typ}$.
            Using the lemma \ref{lemma:redcand_dual} we have that since $\tm_{1} \in \redcand{\typ}$ and $\tmthree \in \redcand{\lneg{\typ}}$ then $\ibott{\tm_{1}}{\tmthree} \iftyp{\bot} \SNtms{\bot}$.
            Then we can apply the lemma \ref{lemma:adequacy_ipar} to conclude that 
            $\ipar{\lvar}{\lvartwo}{\ibott{\tm_{1}}{\tmthree}} \in \redcand{\lneg{\typ_{1}} \parr \lneg{\typ_{2}}}$ and this implies that $\ibott{(\ipar{\lvar}{\lvartwo}{\ibott{\tm_{1}}{\tmthree}})}{\tmtwo} \iftyp{\bot} \SNtms{\bot}$.

        \item 
            If $\tm = \invap{\tm_{1}}{\tm_{2}}$ then the last step of the derivation is the following: 
            \[
                \indrule{\rulmEParTwo}{
                \djum{\utenv}{\tenv}{\tm_{1}}{\typ_{1}\parr\typ_{2}}
                \HS
                \djum{\utenv}{\tenvtwo}{\tm_{2}}{\lneg{\typ_{2}}}
                }{
                \djum{\utenv}{\tenv,\tenvtwo}{\invap{\tm_{1}}{\tm_{2}}}{\typ_{1}}
                }
            \]
            We must prove that for all $\subone$ such that $\subone \subsj{\utenv}{\tenv, \tenv'}$ we have that $(\invap{\tm_{1}}{\tm_{2}})^{\subone}\iftyp{} \redcand{\typ}$.    
            
            By hypothesis for all substitutions $\subthree \subsj{\utenv}{\tenvtwo}$ and $\subtwo \subsj{\utenv}{\tenv}$ we have that $\tm_{1}^{\subthree} \in \redcand{\typ_{1} \parr \typ_{2}}$ and $\tm_{2}^{\subtwo} \in \redcand{\lneg{\typ_{2}}}$.
            In particular we can consider compatible substitutions $\subthree, \subtwo$ such that $(\invap{\tm_{1}}{\tm_{2}})^{\subone} = \invap{\tm_{1}^{\subthree}}{\tm_{2}^{\subtwo}}$, thus without loss of generality it suffices to prove that $\invap{\tm_{1}}{\tm_{2}} \in \redcand{\typ}$.

            Using lemma \ref{lemma:redcand_dual} it suffices to prove that for all $\tmtwo \in \redcand{\lneg{\typ_{1}}}$ that $\ibott{(\invap{\tm_{1}}{\tm_{2}})}{\tmtwo} \iftyp{\bot} \SNtms{\bot}$.
            Using the structural equivalence we have that $\ibott{(\invap{\tm_{1}}{\tm_{2}})}{\tmtwo} \cceq \ibott{\tm_{1}}{(\pair{\tmtwo}{\tm_{2}})}$, thus it suffices to prove that $\ibott{\tm_{1}}{(\pair{\tmtwo}{\tm_{2}})} \iftyp{\bot} \SNtms{\bot}$.
            Since by hypothesis $\tm_{2} \in \redcand{\lneg{\typ_{2}}}$ and $\tmtwo \in \redcand{\lneg{\typ_{1}}}$ then this implies that $\pair{\tmtwo}{\tm_{2}} \iftyp{} \redcand{\lneg{\typ_{1}}} \tensor \redcand{\lneg{\typ_{2}}}$ and by the lemma \ref{lemma:dual_properties} that $\pair{\tmtwo}{\tm_{2}} \iftyp{}  \redcand{\lneg{\typ_{1}} \tensor \lneg{\typ_{2}}}$. 
            Since by hypothesis $\tm_{1} \in \redcand{\typ_{1}\parr \typ_{2}}$ then we can conclude that $\ibott{\tm_{1}}{(\pair{\tmtwo}{\tm_{2}})} \iftyp{\bot} \SNtms{\bot}$.

        \item 
            If $\tm = \ap{\tm_{1}}{\tm_{2}}$ then the last step of the derivation is the following:
            \[
                \indrule{\rulmEParOne}{
                \djum{\utenv}{\tenv}{\tm_{1}}{\typ_{1}\parr\typ_{2}}
                \HS
                \djum{\utenv}{\tenvtwo}{\tm_{2}}{\lneg{\typ_{1}}}
                }{
                \djum{\utenv}{\tenv,\tenvtwo}{\ap{\tm_{1}}{\tm_{2}}}{\typ_2}}
            \]
            We must prove that for all $\subone$ such that $\subone \subsj{\utenv}{\tenv, \tenv'}$ we have that $(\ap{\tm_{1}}{\tm_{2}})^{\subone}\iftyp{} \redcand{\typ}$.    
            
            By hypothesis for all substitutions $\subthree \subsj{\utenv}{\tenvtwo}$ and $\subtwo \subsj{\utenv}{\tenv}$ we have that $\tm_{1}^{\subthree} \in \redcand{\typ_{1} \parr \typ_{2}}$ and $\tm_{2}^{\subtwo} \in \redcand{\lneg{\typ_{1}}}$.
            In particular we can consider compatible substitutions $\subthree, \subtwo$ such that $(\ap{\tm_{1}}{\tm_{2}})^{\subone} = \ap{\tm_{1}^{\subthree}}{\tm_{2}^{\subtwo}}$, thus without loss of generality it suffices to prove that $\ap{\tm_{1}}{\tm_{2}} \in \redcand{\typ}$.

            Using lemma \ref{lemma:redcand_dual} it suffices to prove that for all $\tmtwo \in \redcand{\lneg{\typ_{1}}}$ that $\ibott{(\ap{\tm_{1}}{\tm_{2}})}{\tmtwo} \iftyp{\bot} \SNtms{\bot}$ 
            Using the structural equivalence we have that $\ibott{(\ap{\tm_{1}}{\tm_{2}})}{\tmtwo} \cceq \ibott{(\invap{\tm_{1}}{\tmtwo})}{\tm_{2}}$, so that this case follows from the last one. 
        
        \item 
            If $\tm = \iwhy{\uvar}{\tm_{1}}$ then the last step of the derivation is:
            \[
            \indrule{\rulmIWhy}{
            \djum{\utenv,\uvar:\lneg{\typ}}{\tenv}{\tm_{1}}{\bott}
            }{
            \djum{\utenv}{\tenv}{\iwhy{\uvar}{\tm_{1}}}{\why{\typ}}
            }
            \]
            We must prove that for all $\subone$ such that $\subone \subsj{\utenv}{\tenv}$ we have that $(\iwhy{\uvar}{\tm})^{\subone} \iftyp{} \redcand{\why{\typ}}$.

            By hypothesis for all substitutions $\subtwo \subsj{\utenv}{\tenv, \uvar:\lneg{\typ}}$ we have that $ \tm_{1}^{\subtwo} \iftyp{\bot}  \SNtms{\bot}$, this implies that $(\iwhy{\uvar}{\tm_{1}})^{\subone} \in \why{\redcand{\lneg{\typ}}}$.
            Using the lemma \ref{lemma:dual_properties} we conclude that $(\iwhy{\uvar}{\tm_{1}})^{\subone} \in \redcand{\why{\typ}}$.

        \item 
            If $\tm = \iofctwo{\lvar}{\tm_{1}}$ then this follows from the lemma \ref{lemma:adequacy_ofc}.
            % \[
            %     \indrule{\rulmIOfc}
            %     {\djum{\utenv}{\lvar:\lneg{\typ}}{\tm_{1}}{\bott}}
            %     {\djum{\utenv}{\emptytenv}{\iofctwo{\lvar}{\tm_{1}}}{\ofc{{\typ}}}}
            % \]
            % Since $\redcand{\ofc{\typ}} = \lneg{(\why{\redcand{\lneg{\typ}}})}$ then if $\iwhy{\uvar}{\tmthree} \in \why{\redcand{\lneg{\typ}}}$ we must prove that 
            % $\ibott{\iofctwo{\lvar}{\tm_{1}}}{\iwhy{\uvar}{\tmthree}} \in \SNtms{\bot}$.
            
            % We prove this by induction on $\tml{\tm_{1}} + \tml{\tmthree}$.
            % For the induction step we consider three different cases depending on the reduction step.
            % \begin{enumerate}
            %     \item If $\tm_{1} \to \tm_{1}$, this follows straight from induction.
            %     \item If $\tmthree \to \tmthree'$, this follows straight from induction.
            %     \item If $\ibott{\iofctwo{\lvar}{\tm_{1}}}{\iwhy{\uvar}{\tmthree}} \to \tmthree\sub{\uvar}{\tm_{1} \cos{\lvar}{\unit}}$ then by the lemma \ref{lemma:sub_cos_redcand} we get that 
            %     $\tm_{1} \cos{\lvar}{\unit} \in \redcand{\typ}$ 
            %     and since by hypothesis $\tmthree \in \why{\redcand{\lneg{\typ}}}$ then we can conclude that $\tmthree\sub{\uvar}{\tm_{1} \cos{\lvar}{\unit}} \in \SNtms{\bot}$.
            % \end{enumerate}
        
        \item 
            If $\tm = \ibott{\tm_{1}}{\tm_{2}}$ then the last step of the derivation is the following:
            \[
            \indrule{\rulmIBott}{
                \djum{\utenv}{\tenv}{\tm_{1}}{\typ}
                \HS
              \djum{\utenv}{\tenv'}{\tm_{2}}{\lneg{\typ}}
              }{
                \djum{\utenv}{\tenv,\tenv'}{\ibott{\tm_{1}}{\tm_{2}}}{\bott}
            }
            \]
            We must prove that for all $\subone \subsj{\utenv}{\tenv, \tenvtwo}$ we have that $(\ibott{\tm_{1}}{\tm_{2}})^{\subone} \iftyp{\bot} \redcand{\bot} = \SNtms{\bot}$.

            By hypothesis for all substitutions $\subtwo \subsj{\utenv}{\tenv}$ and $\subthree \subsj{\utenv}{\tenvtwo}$ we have that $\tm_{1}^{\subtwo} \in \redcand{\typ}$ and $\tm_{2}^{\subthree} \in \redcand{\lneg{\typ}}$.
            In particular we can consider compatible substitutions such that $(\ibott{\tm_{1}}{\tm_{2}})^{\subone} = \ibott{\tm_{1}^{\subtwo}}{\tm_{2}^{\subthree}}$ thus using the lemma \ref{lemma:redcand_dual} we can conclude that $(\ibott{\tm_{1}}{\tm_{2}})^{\subone} \iftyp{\bot} \SNtms{\bot}$.
            
        \item 
            If $\tm = \ewhy{\tm_{1}}{\lvar}{\tm_{2}}$ then the last step of the derivation is the following:
            \[
                \indrule{\rulmEWhy}{
                \djum{\utenv}{\tenv}{\tm_{2}}{\why{\typ}}
                \HS
                \djum{\utenv}{\lvar:\typ}{\tm_{1}}{\bott}
                }{
                \djum{\utenv}{\tenv}{\ewhy{\tm_{1}}{\lvar}{\tm_{2}}}{\bott}
                }
            \]
            We must prove that for all $\subone \subsj{\utenv}{\tenv,\tenv'}$ we have that $(\ewhy{\tm_{1}}{\lvar}{\tm_{2}})^{\subone} \in \SNtms{\bot}$.

            By hypothesis for all substitutions $\subtwo \subsj{\utenv}{\tenv}$ and $\subthree \subsj{\utenv}{\lvar:\typ}$ we have that $\tm_{2}^{\subtwo} \iftyp{} \redcand{\why{\typ}}$ and $\tm_{1}^{\subthree} \iftyp{} \redcand{\bot}$. In particular we can consider compatible substitutions such that $(\ewhy{\tm_{1}}{\lvar}{\tm_{2}})^{\subone} = \ewhy{\tm_{1}^{\subthree}}{\lvar}{\tm_{2}^{\subtwo}}$ thus without loss of generality it suffices to prove that $\ewhy{\tm_{1}}{\lvar}{\tm_{2}} \in \redcand{\why{\typ}}$.

            Using lemma \ref{lemma:redcand_dual} it suffices to prove that for all $\tmthree \in \redcand{\one}$ that $\ibott{\ewhy{\tm_{1}}{\lvar}{\tm_{2}}}{\tmthree} \iftyp{\bot} \SNtms{\bot}$.
            By the structural equivalence we get $\ibott{\ewhy{\tm_{1}}{\lvar}{\tm_{2}}}{\tmthree} \cceq (\ibott{\iofctwo{\lvar}{\tm_{1}}}{\tmtwo})\eone{\tmthree}$ and since $\tmthree \in \redcand{\one} = \SNtms{\one}$ then $\tmthree$ is strongly normalizing, 
            thus it suffices to prove that $\ibott{\iofctwo{\lvar}{\tm_{1}}}{\tmtwo} \iftyp{\bot} \SNtms{\bot}$ and this follows from the lemma \ref{lemma:adequacy_ofc}.
            
        \item 
            If $\tm = \tm_{1}\eofc{\uvar}{\tm_{2}}$ then the last step of the derivation is the following:
            \[
                \indrule{\rulmEOfc} 
                {\djum{\utenv}{\tenv}{\tm_{2}}{\ofc{\typ}}   
                \HS 
                \djum{\utenv,\uvar:\typ}{\tenvtwo}{\tm_{1}}{\typthree}  
                }   
                {\djum{\utenv}{\tenv,\tenvtwo}{\tm_{1}\eofc{\uvar}{\tm_{2}}}{\typthree}} 
            \]
            We must prove that for all $\subone \subsj{\utenv}{\tenv, \tenvtwo}$ we have that $(\tm_{1}\eofc{\uvar}{\tm_{2}})^{\subone} \in \redcand{\typthree}$.

            By hypothesis for all substitutions $\subtwo \subsj{\utenv}{\tenv}$ and $\subthree \subsj{\utenv, \uvar:\typ}{\tenvtwo}$ we have that $\tm_{2}^{\subtwo} \in \redcand{\ofc{\typ}}$ and $\tm_{1}^{\subthree} \in \redcand{\typthree}$.
            In particular we can consider compatible substitutions such that $(\tm_{1} \eofc{\uvar}{\tm_{2}})^{\subone} = \tm_{1}^{\subthree} \eofc{\uvar}{\tm_{2}^{\subtwo}}$ thus it suffices to prove that $\tm_{1}\eofc{\uvar}{\tm_{2}} \in \redcand{\typthree}$.

            Using lemma \ref{lemma:redcand_dual} it suffices to prove that for all $\tmthree \in \redcand{\lneg{\typthree}}$ that $\ibott{\tm_{1}\eofc{\uvar}{\tm_{2}}}{\tmthree} \iftyp{\bot} \SNtms{\bot}$.
            Using the structural equivalence we get that $\ibott{\tm_{1}\eofc{\uvar}{\tm_{2}}}{\tmthree} \cceq \ibott{\iwhy{\uvar}{(\ibott{\tm_{1}}{\tmthree})}}{\tm_{2}}$ thus it suffices to prove that $\ibott{\iwhy{\uvar}{(\ibott{\tm_{1}}{\tmthree})}}{\tm_{2}} \iftyp{\bot} \SNtms{\bot}$.

            Since $\tmthree \in \redcand{\lneg{\typthree}}$ then we get that $\ibott{\tm_{1}}{\tmthree} \iftyp{\bot} \SNtms{\bot}$ and by hypothesis for all $\tmtwo \in \redcand{\typ}$ we have that $(\ibott{\tm_{1}}{\tmthree} )\sub{\lvar}{\tmtwo} \iftyp{\bot} \SNtms{\bot}$.
            This implies that $\iwhy{\uvar}{(\ibott{\tm_{1}}{\tmthree})} \in \why{\redcand{\lneg{\typ}}}$ and by the lemma \ref{lemma:dual_properties} then $\iwhy{\uvar}{(\ibott{\tm_{1}}{\tmthree})} \in \redcand{\why{\lneg{\typ}}}$.
            Since $\tm_{2} \in \redcand{\ofc{\typ}}$ then we conclude that $\ibott{\iwhy{\uvar}{(\ibott{\tm_{1}}{\tmthree})}}{\tm_{2}} \in \SNtms{\bot}$.
        
        \item 
            If $\tm = \tm_{1}\eone{\tm_{2}}$ then the last step of the derivation is the following:
            \[
                \indrule{\rulmEOne}{
                \djum{\utenv}{\tenv}{\tm_{2}}{\one}
                \HS
                \djum{\utenv}{\tenvtwo}{\tm_{1}}{\typ}
                }{
                \djum{\utenv}{\tenv,\tenvtwo}{\tm_{1}\eone{\tm_{2}}}{\typ}
                }
            \]
            We must prove that for all $\subone \subsj{\utenv}{\tenv, \tenvtwo}$ we have that $(\tm_{1} \eone{\tm_{2}})^{\subone} \in \redcand{\typ}$.

            By hypothesis for all substitutions $\subtwo \subsj{\utenv}{\tenv}$ and $\subthree \subsj{\utenv}{\tenvtwo}$ we have that $\tm_{2}^{\subtwo} \in \redcand{\one}$ and $\tm_{1}^{\subthree} \in \redcand{\typ}$.
            In particular we can consider compatible substitutions such that $(\tm_{1} \eofc{\uvar}{\tm_{2}})^{\subone} = \tm_{1}^{\subthree} \eone{\tm_{2}^{\subtwo}}$ thus it suffices to prove that $\tm_{1}\eone{\tm_{2}} \in \redcand{\typ}$.

            Using lemma \ref{lemma:redcand_dual} it suffices to prove that for all $\tmthree \in \redcand{\lneg{\typ}}$ that $\ibott{\tm_{1}\eone{\tm_{2}}}{\tmthree} \iftyp{\bot} \SNtms{\bot}$.
            Using the structural equivalence we get that $\ibott{\tm_{1}\eone{\tm_{2}}}{\tmthree} \cceq (\ibott{\tm_{1}}{\tmthree}) \eone{\tm_{2}}$, thus it suffices to prove that $(\ibott{\tm_{1}}{\tmthree}) \eone{\tm_{2}} \iftyp{\bot} \SNtms{\bot}$.
            Since by the inductive hypothesis $\ibott{\tm_{1}}{\tmthree} \iftyp{\bot} \SNtms{\bot}$ and since $\tm_{2} \in \redcand{\one} = \SNtms{\one}$ then we conclude that $ (\ibott{\tm_{1}}{\tmthree}) \eone{\tm_{2}} \iftyp{\bot} \SNtms{\bot}$.

    \end{itemize}
\end{proof}

%%% Local Variables:
%%% mode: latex
%%% TeX-master: "../main"
%%% End:

\section{Weak Church Rosser}
\lsec{app:calcMELL:wcr}

\begin{lemma}
Let $\cctx$ be a positive eliminator context. Then the following items hold:
  
  \begin{enumerate}

  \item$(\ipar{\lvar}{\lvartwo}{\tm})\cctx$ is well-typed iff $\ipar{\lvar}{\lvartwo}{\tm\cctx}$ is well-typed.
  \item $(\iwhy{\uvar}{\tm})\cctx$ is well-typed iff $\iwhy{\uvar}{\tm\cctx}$ is well-typed.
  \item If $\dom{\cctx}\cap \fv{\tmtwo}=\emptyset$, then  $(\ibott{\tm}{\tmtwo})\cctx$ is well-typed iff $ \ibott{\tm\cctx}{\tmtwo}$ is well-typed. Similarly, $(\ibott{\tmtwo}{\tm})\cctx$ is well-typed iff $ \ibott{\tmtwo}{\tm\cctx}$ is well-typed.
  \end{enumerate}
  
\end{lemma}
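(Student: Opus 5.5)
The plan is to prove all three items by induction on the length of the positive eliminator context $\cctx$, so that it suffices to treat a single positive eliminator $\pelim{\patt}{\tmthree}$ and then iterate. For $\cctx=\ctxhole$ every item is trivial. For $\cctx=\cctx'\pelim{\patt}{\tmthree}$, we have $(\ipar{\lvar}{\lvartwo}{\tm})\cctx = ((\ipar{\lvar}{\lvartwo}{\tm})\cctx')\pelim{\patt}{\tmthree}$. We apply the one-step result to the outer eliminator (with body $\tm\cctx'$) and the induction hypothesis to $\cctx'$. This requires a small auxiliary fact: whether a term of the form $\tmfour\pelim{\patt}{\tmthree}$ is well-typed depends only on the typability of $\tmfour$ (with the pattern variables added to the environment) and of $\tmthree$. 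This follows directly from inverting the rules $\rulmETensor$, $\rulmEOfc$ and $\rulmEOne$, since these are the only rules that can conclude such a term.

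For the one-step cases, we perform typing inversion in both directions. Take $(\ipar{\lvar}{\lvartwo}{\tm})\epair{\lvarthree}{\lvarfour}{\tmthree}$.
\begin{itemize}
\item Inverting $\rulmETensor$ gives $\djum{\utenv}{\tenv_1}{\tmthree}{\typthree\tensor\typfour}$ and $\djum{\utenv}{\tenv_2,\lvarthree:\typthree,\lvarfour:\typfour}{\ipar{\lvar}{\lvartwo}{\tm}}{\lneg{\typ}\parr\lneg{\typtwo}}$.
\item Inverting $\rulmIPar$ then gives $\djum{\utenv}{\tenv_2,\lvarthree:\typthree,\lvarfour:\typfour,\lvar:\typ,\lvartwo:\typtwo}{\tm}{\bott}$.
\item Reapplying $\rulmETensor$ (with result type $\bott$) and then $\rulmIPar$ yields $\ipar{\lvar}{\lvartwo}{\tm\epair{\lvarthree}{\lvarfour}{\tmthree}}$ of the same type $\lneg{\typ}\parr\lneg{\typtwo}$ in the same environment.
\end{itemize}
The converse direction is symmetric. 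The only side conditions are variable-capture ones: $\lvar,\lvartwo\notin\fv{\tmthree}$ and $\lvarthree,\lvarfour$ distinct from $\lvar,\lvartwo$. These hold up to $\alpha$-renaming, and the linear environments split in the same way because the linear variables of $\tmthree$ are disjoint from $\lvar,\lvartwo$. The cases $\eofc{\uvar'}{\tmthree}$ and $\eone{\tmthree}$ go the same way, using $\rulmEOfc$ and $\rulmEOne$. For $\eofc{}{}$, the unrestricted variable $\uvar'$ is simply carried in $\utenv$ through $\rulmIPar$. Item 2 is identical, with $\rulmIWhy$ replacing $\rulmIPar$. The only extra point there is that $\uvar$ is added to $\utenv$, so weakening (\rlem{weakening_and_contraction}) is needed to type $\tmthree$ under $\utenv,\uvar:\lneg{\typ}$, and strengthening is needed in the other direction. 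Strengthening holds because $\uvar\notin\fv{\tmthree}$, which follows by an easy induction.

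Item 3 uses $\rulmIBott$. Typing $(\ibott{\tm}{\tmtwo})\pelim{\patt}{\tmthree}$ inverts to a typing of $\ibott{\tm}{\tmtwo}$ with the pattern variables in the environment. Since $\dom{\cctx}\cap\fv{\tmtwo}=\emptyset$, and typable terms have exactly their free linear variables in the environment, the pattern variables all fall into the premise for $\tm$. We can therefore type $\tm\pelim{\patt}{\tmthree}$ at the same type and reapply $\rulmIBott$. The converse is the same argument, and the symmetric variant is handled identically.

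The main obstacle is bookkeeping rather than any conceptual step. Specifically, we must check that the linear environment splits are forced: a linear variable can occur in only one premise, which is justified by the fact that typable terms are linear with environment equal to their free variables. We must also check that the unrestricted environment, which is shared across premises, is handled by weakening and strengthening. Stating and using a strengthening lemma for unrestricted variables that do not occur is the one auxiliary fact not already recorded in the paper; I would prove it first, by induction on derivations.
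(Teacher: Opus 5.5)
Your proposal is correct and is essentially the paper's argument spelled out. The paper only observes that $\rulmETensor$, $\rulmEOfc$ and $\rulmEOne$ place no constraint on the type of the body and allow arbitrary variables in it; your inversion-and-reassembly makes this explicit, along with the capture conditions and the weakening/strengthening of the shared unrestricted environment that the paper leaves implicit.

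One point to tighten: the induction on $\cctx$ should carry the stronger invariant that both terms admit exactly the same judgments (same $\utenv$, $\tenv$ and type), since bare well-typedness does not compose through the outer eliminator. Your one-step case already establishes this stronger invariant.
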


\begin{proof}
This follows from the fact that in terms of the form $\tm\epair{\lvar}{\lvartwo}{\tmtwo}$, $\tm\eofc{\uvar}{\tmtwo}$ and $\tm\eone{\tmtwo}$, the typing rules  $\rulmETensor$, $\rulmEOfc$, and $\rulmEOne$, resp., place no requirement on the type of $\tm$ and allow any number of linear and unrestricted variables.
\end{proof}

\begin{lemma}
\llem{ctx_propagation}\mbox{}
Let $\cctx$ be a positive eliminator context. Then the following items hold:
  \begin{enumerate}

  \item\label{ctx_propagation:par} $(\ipar{\lvar}{\lvartwo}{\tm})\cctx \cceq \ipar{\lvar}{\lvartwo}{\tm\cctx}$.
  \item\label{ctx_propagation:iwhy} $(\iwhy{\uvar}{\tm})\cctx \cceq \iwhy{\uvar}{\tm\cctx}$.
  \item\label{ctx_propagation:ibott} $(\ibott{\tm}{\tmtwo})\cctx \cceq \ibott{\tm\cctx}{\tmtwo}$ and $(\ibott{\tmtwo}{\tm})\cctx \cceq \ibott{\tmtwo}{\tm\cctx}$, if $\dom{\cctx}\cap \fv{\tmtwo}=\emptyset$.
  % \item\label{ctx_propagation:ibott:why} $(\ibott{\tm}{\iwhy{\uvar}{\tmtwo}})\cctx \cceq \ibott{\tm}{(\iwhy{\uvar}{\tmtwo})\cctx}$ and $(\ibott{\iwhy{\uvar}{\tmtwo}}{\tm})\cctx \cceq \ibott{(\iwhy{\uvar}{\tmtwo})\cctx}{\tm}$, if $\dom{\cctx}\cap \fv{\tm}=\emptyset$.

    \item\label{ctx_propagation:epair} $\tm\epair{\lvar}{\lvartwo}{\tmtwo}\cctx \cceq \tm\epair{\lvar}{\lvartwo}{\tmtwo\cctx}$, if $\dom{\cctx}\cap\fv{\tmtwo}\emptyset$.

          \item\label{ctx_propagation:eofc} $\tm\eofc{\uvar}{\tmtwo}\cctx \cceq \tm\eofc{\uvar}{\tmtwo\cctx}$,  if $\dom{\cctx}\cap\fv{\tmtwo}=\emptyset$.

          \item\label{ctx_propagation:el_ctx} $ \tm\pelim{\patt}{\tmtwo}\cctx\cceq \tm\cctx\pelim{\patt}{\tmtwo}$,  if $\fv{\cctx}\cap\fv{\patt}=\emptyset$.

  \end{enumerate}
  
\end{lemma}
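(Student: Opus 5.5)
Each item will be derived by induction on the length of the positive eliminator context $\cctx$, using the axiom $\ruleCCEqCtx$ of \rdef{cceq} once per eliminator. For $\cctx=\ctxhole$ every item is trivial. For $\cctx=\cctxtwo\pelim{\patt}{\tmthree}$, where $\cctxtwo$ is a shorter positive eliminator context, we have $X\cctx=(X\cctxtwo)\pelim{\patt}{\tmthree}$ for any term $X$. Each item then follows from the induction hypothesis for $\cctxtwo$, closure of $\cceq$ under contexts, and one instance of $\ruleCCEqCtx$.

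Take item~\ref{ctx_propagation:par} as the model case. The induction hypothesis and closure under the context $\ctxhole\pelim{\patt}{\tmthree}$ give
\[(\ipar{\lvar}{\lvartwo}{\tm})\cctxtwo\pelim{\patt}{\tmthree}\cceq(\ipar{\lvar}{\lvartwo}{\tm\cctxtwo})\pelim{\patt}{\tmthree}.\]
We then apply $\ruleCCEqCtx$ with the surface context $\fctx=\ipar{\lvar}{\lvartwo}{\Box}$ to reach $\ipar{\lvar}{\lvartwo}{\tm\cctxtwo\pelim{\patt}{\tmthree}}=\ipar{\lvar}{\lvartwo}{\tm\cctx}$. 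The side conditions of $\ruleCCEqCtx$ ask that $\fv{\patt}\cap\fv{\fctx}=\emptyset$ and that $\tmthree$ be free for $\fctx$, i.e.\ $\lvar,\lvartwo\notin\fv{\tmthree}$. By the usual Barendregt convention we may $\alpha$-rename the pattern variables of $\cctx$ and the binders $\lvar,\lvartwo$ so that these hold, since terms are taken up to $\alpha$-equivalence. The typing side condition that both sides are typed with the same types is supplied by the preceding lemma, which says that $(\ipar{\lvar}{\lvartwo}{\tm})\cctx$ is typable iff $\ipar{\lvar}{\lvartwo}{\tm\cctx}$ is.

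Items~\ref{ctx_propagation:iwhy}, \ref{ctx_propagation:ibott}, \ref{ctx_propagation:epair} and~\ref{ctx_propagation:eofc} use the same argument with the surface contexts $\iwhy{\uvar}{\Box}$, $\ibott{\Box}{\tmtwo}$ or $\ibott{\tmtwo}{\Box}$, and $\tm\pelim{\patt'}{\Box}$ respectively. The hypothesis $\dom{\cctx}\cap\fv{\tmtwo}=\emptyset$ is exactly what gives $\fv{\patt}\cap\fv{\fctx}=\emptyset$ at every step, and the typing equivalence is again taken from the preceding lemma. Item~\ref{ctx_propagation:el_ctx} is proved by the same induction on $\cctx$, now with the surface context $\Box\pelim{\patt}{\tmtwo}$. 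At each step we use $\ruleCCEqCtx$ twice to move the outer eliminator $\pelim{\patt_i}{\tmthree_i}$ of $\cctx$ past $\pelim{\patt}{\tmtwo}$: once from right to left to pull it out, and once from left to right to push it inside. The hypothesis $\fv{\cctx}\cap\fv{\patt}=\emptyset$ guarantees that $\tmthree_i$ is free for $\Box\pelim{\patt}{\tmtwo}$.

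The main obstacle is bookkeeping rather than a deep argument. Every application of $\ruleCCEqCtx$ must satisfy both its variable side conditions and its typing side condition. The variable conditions need some care with the intermediate contexts: after $\cctxtwo$ has been pushed inside, the pattern variables of the outer eliminator must still not be captured. I would discharge this uniformly by assuming all bound variables distinct from one another and from the free variables, which $\alpha$-conversion permits. The typing conditions I would handle by invoking the preceding lemma at each intermediate stage.
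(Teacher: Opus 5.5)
Your strategy is the same as the paper's, whose proof is only ``by induction on $\cctx$'', and your steps for items 1--3 are fine. The gap is in items 4 and 5. There you claim that the hypothesis $\dom{\cctx}\cap\fv{\tmtwo}=\emptyset$ discharges the condition $\fv{\patt}\cap\fv{\fctx}=\emptyset$. That holds in item 3, where $\fctx$ is $\ibott{\Box}{\tmtwo}$ or $\ibott{\tmtwo}{\Box}$. In items 4--5, however, the surface context is $\fctx=\tm\pelim{\patt'}{\Box}$, so $\fv{\fctx}=\fv{\tm}\setminus\fv{\patt'}$. What you need at each step is therefore $\fv{\patt}\cap\fv{\tm}=\emptyset$, and the stated hypothesis says nothing about $\tm$. $\alpha$-renaming cannot help. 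On the left the binders of $\cctx$ scope over $\tm$, so an occurrence in $\tm$ of a pattern variable of $\cctx$ is a genuinely bound occurrence, and it becomes free on the right. For example, take $\cctx=\ctxhole\eofc{\uvartwo}{\lvartwo}$, $\tm=\uvartwo$ and $\tmtwo=\lvar$, with $\lvar:\ofc{\typ}$ and $\lvartwo:\ofc{\typtwo}$; this meets the hypothesis of item 5. The term $\uvartwo\eofc{\uvar}{\lvar}\eofc{\uvartwo}{\lvartwo}$ is typable with free variables $\lvar,\lvartwo$ only, whereas $\uvartwo\eofc{\uvar}{\lvar\eofc{\uvartwo}{\lvartwo}}$ also has $\uvartwo$ free. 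Since $\cceq$ preserves free variables, the two are not equivalent. So items 4--5 are false as written; the side condition was presumably copied from item 3. The correct hypothesis is $\dom{\cctx}\cap\fv{\tm}=\emptyset$, and with it your induction goes through. The other condition, that $\tmthree$ be free for $\tm\pelim{\patt'}{\Box}$, is automatic because the hole is not under the binder.

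Item 6 has the same kind of problem. The swap from $\tm\cctxtwo\pelim{\patt}{\tmtwo}\pelim{\patt_i}{\tmthree_i}$ to $\tm\cctxtwo\pelim{\patt_i}{\tmthree_i}\pelim{\patt}{\tmtwo}$ is a single left-to-right instance of $\ruleCCEqCtx$ with $\fctx=\Box\pelim{\patt}{\tmtwo}$; one application suffices, not two. It has two side conditions. The first, that $\tmthree_i$ be free for $\fctx$, does follow from $\fv{\cctx}\cap\fv{\patt}=\emptyset$. The second, $\fv{\patt_i}\cap\fv{\tmtwo}=\emptyset$, you never check, and it does not follow from the hypotheses or from renaming. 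Take $\tm=\lvartwo$, the eliminator $\eone{\lvar}$ (so $\patt$ is the pattern $\ione$ and the hypothesis holds trivially), and $\cctx=\ctxhole\epair{\lvar}{\lvartwo}{\tmthree}$ with $\tmthree:\one\tensor\typtwo$. Then $\lvar$ is bound in $\lvartwo\eone{\lvar}\epair{\lvar}{\lvartwo}{\tmthree}$ but free in $\lvartwo\epair{\lvar}{\lvartwo}{\tmthree}\eone{\lvar}$, so item 6 also needs $\dom{\cctx}\cap\fv{\tmtwo}=\emptyset$. A smaller point: the preceding typability lemma only covers the contexts of items 1--3. For items 4--6 you must check the typing side condition of $\ruleCCEqCtx$ directly; this is routine, since positive eliminators impose no constraint on the type of their body.
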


\begin{proof}
  All the items can be proved by induction on $\cctx$.

\end{proof}

\begin{lemma}[Reduction is compatible with substitution of linear variables]
\llem{linear_subst_compat_with_reduction}
Suppose $\tm\pretome \tmtwo$. Then 
  \begin{xenumerate}
  \item\label{linear_subst_compat_with_reduction:left} $\ltm\sub{\lvar}{\ltmthree}\pretome \ltmtwo\sub{\lvar}{\ltmthree}$.
  \item\label{linear_subst_compat_with_reduction:right} $\ltmthree\sub{\lvar}{\ltm}\pretome \ltmthree\sub{\lvar}{\ltmtwo}$. 

  \end{xenumerate}
\end{lemma}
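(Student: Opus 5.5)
The plan is to prove both items simultaneously by induction on the context in which the step $\tm\pretome\tmtwo$ takes place, splitting at the top level into the case where the step is a root step (an instance of one of the nine reduction axioms) and the case where it occurs inside a proper subterm. Throughout we work up to $\alpha$-renaming, so bound variables of $\tm$, $\tmtwo$ and of the patterns in $\cctx$ can be assumed distinct from $\lvar$ and from $\fv{\ltmthree}$; this is what makes all substitutions capture-avoiding. We also assume, as in the statement of pre-reduction, that the terms involved are linear and typed, so that $\lvar$ occurs at most once.

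For item~\ref{linear_subst_compat_with_reduction:right}, $\ltmthree\sub{\lvar}{\ltm}\pretome\ltmthree\sub{\lvar}{\ltmtwo}$ is immediate if $\lvar\notin\fv{\ltmthree}$, since both sides equal $\ltmthree$ and reflexivity is not an issue: in this case the statement degenerates, and I would note that it should be read under the hypothesis $\lvar\in\fv{\ltmthree}$. Otherwise $\ltmthree=\of{\ctx}{\lvar}$ for a unique context $\ctx$ with a single occurrence of $\lvar$, and $\ltmthree\sub{\lvar}{\ltm}=\of{\ctx}{\ltm}$. The claim then follows directly because $\pretome$ is defined as a contextual closure, provided $\ctx$ does not capture free variables of $\ltm$ or $\ltmtwo$, which is handled by renaming.

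For item~\ref{linear_subst_compat_with_reduction:left}, the contextual cases are routine. Substitution commutes with every term constructor, so we apply the induction hypothesis to the subterm that reduces, and use the fact that $\lvar$ lies in at most one immediate subterm by linearity. The root cases require checking each axiom. Consider, for example, $\ap{(\ipar{\lvarthree}{\lvarfour}{\tm'})\cctx}{\tmfour}\toax{\raxparL}\tm'\sub{\lvarthree}{\tmfour}\cos{\lvarfour}{\ione}\cctx$. Applying $\sub{\lvar}{\ltmthree}$ to the redex yields a redex of the same shape. Here $\cctx\sub{\lvar}{\ltmthree}$ is again a positive elimination context, and the substitution lands in exactly one of $\tm'$, $\tmfour$ or an argument of $\cctx$. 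We then need the contractum equalities. When $\lvar\in\fv{\tm'}$ we need
\[
\tm'\sub{\lvarthree}{\tmfour}\cos{\lvarfour}{\ione}\sub{\lvar}{\ltmthree}=\tm'\sub{\lvar}{\ltmthree}\sub{\lvarthree}{\tmfour}\cos{\lvarfour}{\ione},
\]
which follows from \rlem{mell_sub_contra}(1). Since $\lvar\notin\fv{\ione}$, the substitution simply passes through the contra-substitution, and the usual substitution lemma lets us commute the two substitutions. The cases $\raxparR$, $\raxtensor$, $\raxpartensor$ and $\raxtensorpar$ are analogous. The cases $\raxofc$, $\raxwhy$, $\raxofcwhy$ and $\raxwhyofc$ additionally use linearity: $\lvar$ cannot occur inside $\iofctwo{\lvarthree}{\tmtwo}$ or in the left argument of $\ewhy{}{}{}$, because those positions contain no free linear variables other than the bound one. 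So $\lvar$ sits either in the body under the unrestricted binder, or in $\cctx$, or in $\cctxtwo$, and \rlem{mell_sub_contra} together with the substitution lemma for unrestricted variables closes each case.

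I expect the main obstacle to be the bookkeeping with the positive elimination contexts $\cctx,\cctxtwo$ in the reduction at a distance. We must check that substituting into an argument of an eliminator of $\cctx$ commutes with the reconstruction $\ldots\cctx$ in the contractum, and that the side conditions (for instance $\lvartwo\notin\fv{\tmtwo}$) are preserved. This is where freshness of the pattern variables of $\cctx$ with respect to $\lvar$ and $\fv{\ltmthree}$ must be invoked explicitly. I would handle it with a small auxiliary observation that $(\tm\cctx)\sub{\lvar}{\ltmthree}$ equals either $\tm\sub{\lvar}{\ltmthree}\cctx$ or $\tm(\cctx\sub{\lvar}{\ltmthree})$, proved by induction on $\cctx$.
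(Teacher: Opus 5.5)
Your proposal is correct and follows essentially the same route as the paper, which proves this by induction on $\tm$: the root cases go axiom by axiom, and \rlem{mell_sub_contra}(1) commutes the substitution past the contra-substitutions, with the pattern variables of $\cctx,\cctxtwo$ taken fresh for $\lvar$ and the free variables of the substituted term. Your remark on item 2 is also right, though the case $\lvar\notin\fv{\ltmthree}$ is not ``immediate'': $\pretome$ is a single, non-reflexive step, so that case is simply excluded by the implicit hypothesis $\lvar\in\fv{\ltmthree}$.
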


\begin{proof}
By induction  on $\tm$.
\end{proof}

\begin{lemma}[Reduction is compatible with substitution of unrestricted variables]
\llem{subst_compat_with_reduction}
Suppose  $\tm\pretome \tmtwo$. Then
  \begin{xenumerate}
  \item\label{subst_compat_with_reduction:left} $\ltm\sub{\uvar}{\ltmthree}\pretome \ltmtwo\sub{\uvar}{\ltmthree}$.
  \item\label{subst_compat_with_reduction:right} $\ltmthree\sub{\uvar}{\ltm}\rtpretome \ltmthree\sub{\uvar}{\ltmtwo}$.  

  \end{xenumerate}
\end{lemma}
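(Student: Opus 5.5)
The plan is to prove both items by induction on the derivation of $\tm\pretome\tmtwo$, i.e.\ on the contextual closure, treating the root axioms and the congruence steps separately.

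\textbf{Item 1.} Here $\ltm\sub{\uvar}{\ltmthree}\pretome\ltmtwo\sub{\uvar}{\ltmthree}$.

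For the contextual cases, unrestricted substitution distributes homomorphically over every constructor, including $\iofctwo{\lvar}{\cdot}$ and the left argument of $\ewhy{\cdot}{\lvar}{\cdot}$. We may assume bound variables are chosen fresh for $\ltmthree$ and distinct from $\uvar$. The IH therefore applies directly under the same context.

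For the root axioms there are two points to check.
\begin{itemize}
\item A positive elimination context $\cctx$ is mapped to a positive elimination context $\cctx\sub{\uvar}{\ltmthree}$. Likewise $(\ipar{\lvar}{\lvartwo}{\tm})\sub{\uvar}{\ltmthree}$ and similar terms keep their head shape. So the substituted redex is again a redex of the same axiom.
\item The contractum must commute with the substitution.
  \begin{itemize}
  \item For linear substitution this is the standard substitution lemma, giving $\tm\sub{\lvar}{\tmtwo}\sub{\uvar}{\ltmthree}=\tm\sub{\uvar}{\ltmthree}\sub{\lvar}{\tmtwo\sub{\uvar}{\ltmthree}}$.
  \item For contra-substitution we invoke \rlem{mell_sub_contra}(2). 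Since $\ione\sub{\uvar}{\ltmthree}=\ione$, it gives $\tm\cos{\lvar}{\ione}\sub{\uvar}{\ltmthree}=\tm\sub{\uvar}{\ltmthree}\cos{\lvar}{\ione}$.
  \item For $\raxofc$, $\raxwhy$, $\raxofcwhy$ and $\raxwhyofc$ we also need the composition law for unrestricted substitutions, $\tmthree\sub{\uvartwo}{\tmfour}\sub{\uvar}{\ltmthree}=\tmthree\sub{\uvar}{\ltmthree}\sub{\uvartwo}{\tmfour\sub{\uvar}{\ltmthree}}$, with $\uvartwo\notin\fv{\ltmthree}$.
  \end{itemize}
\end{itemize}
Side conditions such as $\lvartwo\notin\fv{\tmthree}$ are preserved because $\ltmthree$ has no free variables bound in the redex.

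Typedness also needs care, since axioms are restricted to typed terms. By the unrestricted substitution rule $\rulmUSub$, if the original redex and contractum are typed in $\utenv,\uvar:\typ$ and $\ltmthree$ is typed in $\utenv$ with an empty linear context, then both substituted terms are typed with the same type. Linearity conditions 3) and 4) are preserved because $\ltmthree$ has no free linear variables, as typing forces.

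\textbf{Item 2.} Here $\ltmthree\sub{\uvar}{\ltm}\rtpretome\ltmthree\sub{\uvar}{\ltmtwo}$, and the proof goes by induction on $\ltmthree$.

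\begin{itemize}
\item If $\ltmthree=\uvar$, we perform the single step $\ltm\pretome\ltmtwo$.
\item If $\ltmthree$ is another variable or $\ione$, we take zero steps.
\item If $\ltmthree$ is a compound term, the IH gives multistep reductions in each immediate subterm. We compose them sequentially, using the contextual closure of $\pretome$. This holds under every constructor, including $\iofctwo{\lvar}{\cdot}$ and binders, where we rename to avoid capture.
\end{itemize}

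Because $\uvar$ may occur zero or several times, this case needs the reflexive–transitive closure. That is exactly why the statement uses $\rtpretome$ rather than a single step.

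\textbf{Main obstacle.} The delicate point is the root case of item 1 for the axioms involving $\cos{\lvar}{\ione}$ inside an unrestricted substitution, such as $\raxofc$ and $\raxofcwhy$. There the contractum $\tmthree\sub{\uvartwo}{\tm\cos{\lvar}{\ione}}$ must commute with $\sub{\uvar}{\ltmthree}$. This requires combining \rlem{mell_sub_contra}(2) with the substitution composition lemma, and checking that $\ltmthree$ is contra-free for $\lvar$ in $\tm$. The latter follows from the freshness of bound variables.
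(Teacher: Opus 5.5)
Your proposal is correct and matches the paper's argument, which is only stated as ``by induction on $\tm$''. Your case analysis is what that induction amounts to. For item 1 you induct on the step and handle the root cases with \rlem{mell_sub_contra}(2) plus the composition law for substitutions. For item 2 you induct on $\ltmthree$, where zero or multiple occurrences of $\uvar$ force $\rtpretome$. You also make explicit a typing assumption the paper leaves implicit: the substituted term must have empty linear context, via $\rulmUSub$, so that the substituted redex stays a typed linear redex.
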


\begin{proof}
By induction  on $\tm$.
\end{proof}

% \begin{lemma}[{name=Reduction is compatible with contra-substitution~\proofnote{Proof on pg.~\pageref{cosubst_compat_with_reduction:proof}}, restate=[name=Reduction is compatible with contra-substitution]ReductionCompatibleWithContraSubstitution}]
\begin{lemma}
\llem{cosubst_compat_with_reduction}
Let $\tm$ and  $\ltm\cos{\lvar}{\ltmthree}$ be typable terms.
  \begin{xenumerate}
  \item\label{cosubst_compat_with_reduction:left} $\ltm\pretome \ltmtwo$ implies $\ltm\cos{\lvar}{\ltmthree}\tome \ltmtwo\cos{\lvar}{\ltmthree}$.
  \item\label{cosubst_compat_with_reduction:right} $\ltmthree\pretome \ltmtwo$ implies $\ltm\cos{\lvar}{\ltmthree}\tome \ltm\cos{\lvar}{\ltmtwo}$. 

  \end{xenumerate}
\end{lemma}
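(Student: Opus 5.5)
We prove the two items separately, each by induction on the structure of the linear term $\ltm$ into which we contra-substitute, that is, following the clauses of \rdef{contrasub:for:MLL} and Fig.~\ref{contrasubstitution:for:CalcMELL}. We use $\tome$ rather than $\pretome$ in the conclusion, because in several clauses the reduct is only equal to the expected term up to $\cceq$. Typability of $\ltm$ and $\ltm\cos{\lvar}{\ltmthree}$ is preserved along the induction by \rlem{contrasubstitution_lemma} and subject reduction, so the axioms of $\cceq$, which presuppose typing, may be used freely.

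\textbf{Item~\ref{cosubst_compat_with_reduction:right}} (reduction in the argument $\ltmthree$) is the easier one. Here $\ltm\cos{\lvar}{\ltmthree}$ unfolds to $\of{\dctx}{\ltmthree}$ for some deep context $\dctx$. Indeed, every clause of the definition places the argument $\tmtwo$ inside a constructor that is again a deep-context position: $\pair{\cdot}{\cdot}$, $\ap{}{}$, $\invap{}{}$, $\ibott{}{}$, $\iwhy{\uvar}{}$, $\ipar{}{}{}$, or the right-hand side of an eliminator. In particular it is never placed under $\iofctwo{\lvar}{}$ or in the left argument of $\ewhy{}{}{}$. Therefore a step $\ltmthree\pretome\ltmtwo$ lifts by contextual closure to $\of{\dctx}{\ltmthree}\pretome\of{\dctx}{\ltmtwo}$. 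Formally, we induct on $\ltm$; in each clause we apply the \ih with the new argument (for example $\pair{\tm_2}{\ltmthree}\pretome\pair{\tm_2}{\ltmtwo}$) and then use contextual closure. No structural equivalence is needed, so this item in fact yields a single $\pretome$ step.

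\textbf{Item~\ref{cosubst_compat_with_reduction:left}} is proved by induction on the step $\ltm\pretome\ltmtwo$, with a nested case analysis on where the redex lies relative to the unique occurrence of $\lvar$.

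\emph{(a) The redex is in a subterm not containing $\lvar$.} The operation $\cos{\lvar}{\ltmthree}$ simply relocates that subterm, either into the argument or leaving it in an eliminator. The step is therefore reproduced by item~\ref{cosubst_compat_with_reduction:right} (for example, in $\ltm=\ap{\tm_1}{\tm_2}$ with $\lvar\in\fv{\tm_1}$ and $\tm_2\pretome\tm_2'$) or by plain contextual closure.

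\emph{(b) The redex lies strictly inside the subterm containing $\lvar$.} We apply the \ih to that subterm with the appropriate modified argument.

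\emph{(c) The root of the redex is an ancestor of $\lvar$.} This is the essential case, split by reduction axiom and by which side of the redex contains $\lvar$.

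In case (c), consider $\raxparL$: $\ap{(\ipar{\lvarthree}{\lvarfour}{\tm})\cctx}{\tmthree}\pretome\tm\sub{\lvarthree}{\tmthree}\cos{\lvarfour}{\ione}\cctx$. If $\lvar\in\fv{\tmthree}$, the left side contra-substitutes to $\tmthree\cos{\lvar}{\invap{(\ipar{\lvarthree}{\lvarfour}{\tm})\cctx}{\ltmthree}}$. This reduces by $\raxparR$ in the argument to $\tmthree\cos{\lvar}{\tm\sub{\lvarfour}{\ltmthree}\cos{\lvarthree}{\ione}\cctx}$. The right side is $\tm\sub{\lvarthree}{\tmthree}\cos{\lvarfour}{\ione}\cctx\cos{\lvar}{\ltmthree}$, which we rewrite using \rlem{elim_cos_cceq} (eliminator flotation), \rlem{nested_contra_contra} (to turn $\tm\sub{\lvarthree}{\tmthree}\cos{\lvar}{\cdot}$ into $\tmthree\cos{\lvar}{\tm\cos{\lvarthree}{\cdot}}$), \rlem{contra_contra_lemma}, and \rlem{simple_admissible}, until the two terms agree modulo $\cceq$. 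If instead $\lvar\in\fv{\tm}$, we use \rlem{mell_sub_contra} and the first item of \rlem{contra_contra_lemma}; if $\lvar$ lies in $\cctx$, we use \rlem{elim_cos_cceq}.

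The exponential axioms $\raxofc$, $\raxwhy$, $\raxofcwhy$, $\raxwhyofc$ involve unrestricted substitution $\sub{\uvar}{\cdot}$, and they are handled with item~2 of \rlem{mell_sub_contra}. The axioms $\raxpartensor$ and $\raxtensorpar$ are treated like $\raxtensor$, using the $\ibott{}{}$-clauses of contra-substitution together with the $\ruleCCEqBottDualToEone$-style equations of \rlem{cceq_derived}.

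\textbf{Main obstacle.} The main difficulty is case (c): identifying, for each of the nine axioms and each possible position of $\lvar$, the redex created on the contra-substituted side, and then closing the gap with the $\cceq$ lemmas listed above. The cases where $\lvar$ sits in the argument of the redex, so that the whole term is turned inside out, are the hardest. For these we would first try to reduce to \rlem{cos_as_negation_of_context} by writing $\ltm=\of{\dctx}{\lvar}$, so that $\ltm\cos{\lvar}{\ltmthree}\cceq\of{\lneg{\dctx}}{\ltmthree}$, and then invoke \rlem{dual_of_nested_contexts} to separate the redex context from the surrounding context. Because \rthm{cceqalt_is_a_strong_bisimulation} makes $\tome$ well defined on $\cceq$-classes, it is enough to exhibit a $\pretome$ step up to $\cceq$ on each side.
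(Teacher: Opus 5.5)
Your proposal is correct and follows essentially the paper's route, since the paper proves both items simultaneously by induction on $\ltm$. Proving item~2 first is fine: it only uses that $\ltm\cos{\lvar}{\cdot}$ places its argument at a single context position, and item~1 can then invoke it for redexes away from $\lvar$.

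One small correction concerns the root-redex subcases where $\lvar$ lies in an eliminator of $\cctx$. There the matching step is the principal-cut redex that contra-substitution creates (e.g.\ $\raxpartensor$ for $\raxparL$, or $\raxofcwhy$/$\raxwhyofc$ for the exponential axioms). The gap is then closed by $\ruleCCEqCosAndSub$ (or the derived $\tm\eone{\tmtwo}\cceq\ibott{\tm}{\tmtwo}$ for $\tm:\bott$), with eliminator flotation serving only to rearrange the remaining eliminators.
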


%\ReductionCompatibleWithContraSubstitution*

\begin{proof}%\label{cosubst_compat_with_reduction:proof}
  Both items are proved simultaneously by induction on $\ltm$.
\end{proof}

% \begin{lemma}[{name=Weak Church-Rosser~\proofnote{Proof on pg.~\pageref{wcr:proof}}, restate=[name=Weak Church-Rosser]WeakChurchRosser}]
\begin{lemma}[name=Weak Church-Rosser]
\llem{wcr}
Let $\tm$ be typed. Then $\ltm\tome{}\ltmtwo$ and $\ltm\tome{}\ltmthree$, implies there exists $\ltmfour$ such that $\ltmtwo\rttome\ltmfour$ and $\ltmthree\rttome\ltmfour$.
\end{lemma}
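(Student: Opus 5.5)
We reduce the statement to a local confluence property of pre-reduction, and let \rthm{cceqalt_is_a_strong_bisimulation} absorb all uses of $\cceq$. Suppose $\tm \cceq \tm_1 \pretome \tmtwo_1 \cceq \tmtwo$ and $\tm \cceq \tm_2 \pretome \tmthree_2 \cceq \tmthree$. Since $\tm_1 \cceq \tm_2$, strong bisimulation gives some $\tmthree_1$ with $\tm_1 \pretome \tmthree_1 \cceq \tmthree_2$. It therefore suffices to prove the following \emph{local confluence up to $\cceq$} for pre-reduction: whenever $\tm_1 \pretome \tmtwo_1$ and $\tm_1 \pretome \tmthree_1$ with $\tm_1$ typed, there are $\tmfour_1, \tmfour_2$ with $\tmtwo_1 \rtpretome \tmfour_1$, $\tmthree_1 \rtpretome \tmfour_2$, and $\tmfour_1 \cceq \tmfour_2$. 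Every $\rtpretome$ step is a $\tome$ step, and $\cceq$ can be absorbed at either end of a $\tome$ sequence, so this yields $\tmtwo \rttome \tmfour_1$ and $\tmthree \rttome \tmfour_1$. Subject reduction (\rlem{subject_reduction}) keeps every intermediate term typed, which the contra-substitution lemmas require.

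The local confluence proof is the usual critical-pair analysis on the two redex positions in $\tm_1$.

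\textbf{Disjoint redexes.} They commute directly.

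\textbf{Nested redexes where one lies inside a subterm that the outer rule substitutes or contra-substitutes.} Several lemmas apply here:
\begin{itemize}
\item \rlem{linear_subst_compat_with_reduction} handles linear substitution.
\item \rlem{subst_compat_with_reduction} handles unrestricted substitution. Its right-hand version gives $\rtpretome$, covering the case where $\tm\cos{\lvar}{\ione}$ is duplicated in $\raxofc$, $\raxwhy$, $\raxofcwhy$, and $\raxwhyofc$.
\item \rlem{cosubst_compat_with_reduction} handles contra-substitution. It already yields $\tome$, that is, reduction up to $\cceq$, which is enough because we only need joinability modulo $\cceq$.
\end{itemize}
A further point concerns the positive eliminator contexts $\cctx,\cctxtwo$ in the rules. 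An inner redex inside $\cctx$ is still inside $\cctx$ after the outer contraction, so it commutes. An inner redex whose root is exactly the outer redex's $\cctx$-decorated subterm needs \rlem{ctx_propagation} to re-associate eliminators before the steps match.

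\textbf{Genuine critical pairs.} These occur where both steps use the same $\parr$-, $\tensor$-, $\ofc{}$- or $\why{}$-introduction, or where a $\ibott{}{}$ root overlaps with an elimination inside $\cctx$. The example in the Discussion, $\raxparL$ against $\raxparR$ through $\eone{\lvar}$, is typical. For each such pair we compute both reducts and close them using the following tools:
\begin{itemize}
\item \rlem{mell_sub_contra} (substitution commuting with contra-substitution);
\item \rlem{contra_contra_lemma} and \rlem{nested_contra_contra} (two contra-substitutions, composed or nested);
\item \rlem{simple_admissible} ($\tm\cos{\lvar}{\ione}\eone{\tmthree}\cceq\tm\cos{\lvar}{\tmthree}$);
\item the derived equations of \rlem{cceq_derived}, notably $\ibott{\tm}{\ione}\cceq\tm$ for $\tm:\bott$ and $\tm\cos{\lvar}{\tmthree}\cceq\tm\sub{\lvar}{\tmthree}$ for $\tm:\bott$, which close exactly the Discussion example.
\end{itemize}
Working with $\cceqalt$ (Prop.~\ref{prop:cceq_equals_cceqalt}) may make these verifications more syntactic.

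The main obstacle is the family of critical pairs in which a redex sits inside a term that is \emph{contra-substituted} by the other step, for example $\raxparR$ applied to $\ipar{\lvar}{\lvartwo}{\tm}$ where $\tm$ contains a redex that mentions $\lvar$. Contra-substitution turns $\tm$ inside out, so the inner redex can be split across the result. Closing such a case requires \rlem{cosubst_compat_with_reduction}, and that lemma may itself need the present result for smaller terms. If it does, I would prove both lemmas by a simultaneous induction on the size of $\tm$, recovering the redex through \rlem{cos_as_negation_of_context} by writing the term as $\of{\dctx}{\lvar}$ and reasoning on the dual context $\lneg{\dctx}$.
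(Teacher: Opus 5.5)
Your proposal follows the paper's proof. You use \rthm{cceqalt_is_a_strong_bisimulation} to reduce to two pre-reduction steps out of the same term, then do the case analysis on redex positions, which the paper phrases as induction on $\tm$, with the same compatibility lemmas you cite. Since \rlem{cosubst_compat_with_reduction} is proved independently there, no simultaneous induction is needed.

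One caveat, shared with the paper's own formulation: $\cceq$ can only be absorbed into a nonempty $\tome$ sequence. When both joining sequences are empty you only get $\tmtwo\cceq\tmthree$, so the conclusion has to be read modulo $\cceq$. For example, with $\lvarthree:\btyp$ and $\lvarfour:\nbtyp$, the term $\ibott{(\ipar{\lvar}{\lvartwo}{(\ibott{\lvar}{\lvartwo})})}{\pair{\lvarthree}{\lvarfour}}$ reduces in one $\tome$ step to both $\ibott{\lvarthree}{\lvarfour}$ and $\ibott{\lvarfour}{\lvarthree}$, which are distinct $\tome$-normal forms. The modulo-$\cceq$ reading is in any case what confluence on $\cceq$-classes needs.
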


%\WeakChurchRosser*

\begin{proof}%\label{wcr:proof}
Since $\cceq$ is a strong-bisimulation, it suffices to consider the case where  $\ltm\pretome{}\ltmtwo$ through a step $\step$ and $\ltm\pretome{}\ltmthree$ through a step $\steptwo$.  By induction on $\ltm$.

\end{proof}

%%% Local Variables:
%%% mode: latex
%%% TeX-master: "../main"
%%% End:

\section{Translations}
\lsec{app:translations}

%%%%%%%%%%%%%%%%%%%%%%%%%%%%%%%%%%
%        T-Translation for Parigot
%%%%%%%%%%%%%%%%%%%%%%%%%%%%%%%%%%

\subsection{Parigot's $\CalcParigot$}

\subsubsection{T-Translation}
\TypePreservationForTTranslation*

\begin{proof}\label{type_preservation_for_t_translation:proof}
  By induction on the derivation. 
  \begin{enumerate}
  \item $\rulpAx$. The derivation ends in
\[
  \indrule{\rulpAx}
  {}
  {\pjum{\tenv,\var:\typ}{\var}{\typ}{\nenv}}
  \]

Note that $\ttrat{\var}{\vark}=\ibott{\var}{\ofc{\vark}}$ and thus we conclude from:
\[
        \indrule{\rulpAx}
        {}
        {\djum{\why{\ttra{\tenv}},\var:\why{\ttra{\typ}},
            \lneg{(\ttra{\nenv})}, \vark:\lneg{(\ttra{\typ})}}{\emptytenv}{\ibott{\var}{\ofc{\vark}}}{\bott}
      }
\]

\item $\rulpLam$. The derivation ends in
  \[
  \indrule{\rulpLam}{
    \pjum{\tenv,\var:\typ}{\ptmtwo}{\typtwo}{\nenv}
  }{
    \pjum{\tenv}{\lam{\var}{\ptmtwo}}{\typ\imp\typtwo}{\nenv}
  }
\]

\[
  \indrule{\rulmIBott}
  {
  \indrule{\rulmIPar}{
    \indrule{\rlem{unrestricted_eq_bang_linear}(\ref{to_lin})}{
         \indrule{\rlem{unrestricted_eq_bang_linear}(\ref{to_lin})}
         {
           \indih{
             \djum{\why{\ttra{\tenv}},\var:\why{\ttra{\typ}},\lneg{(\ttra{\nenv})},\varktwo:\lneg{(\ttra{\typtwo})}}{\emptytenv}{\ttrat{\ptmtwo}{\varktwo}}{\bott}}
           }
         {\djum{\why{\ttra{\tenv}},\lneg{(\ttra{\nenv})}, \varktwo:\lneg{(\ttra{\typtwo})}}{\lvar:\ofc{\why{\ttra{\typ}}}}{\ttrat{\ptmtwo}{\varktwo}\eofc{\var}{\lvar}}{\bott}
         }
  }
  {
    \djum{\why{\ttra{\tenv}},\lneg{(\ttra{\nenv})}}{\lvar:\ofc{\why{\ttra{\typ}}},\lvartwo:\ofc{\lneg{(\ttra{\typtwo})}}}{\ttrat{\ptmtwo}{\varktwo}\eofc{\var}{\lvar}\eofc{\varktwo}{\lvartwo}}{\bott}
  }
    }{
        \djum{\why{\ttra{\tenv}},\lneg{(\ttra{\nenv})}}{\emptytenv}{\ipar{\lvar}{\lvartwo}{\ttrat{\ptmtwo}{\varktwo}\eofc{\var}{\lvar}\eofc{\varktwo}{\lvartwo}}}{\why{\ofc{\lneg{(\ttra{\typ})}}}\parr \why{\ttra{\typtwo}}}
      } 
    }
          {
                          \djum{\why{\ttra{\tenv}},\lneg{(\ttra{\nenv})}, \vark: \ofc{\why{\ttra{\typ}}}\tensor \ofc{\lneg{(\ttra{\typtwo})}}}{\emptytenv}{\ibott{\ipar{\lvar}{\lvartwo}{\ttrat{\ptmtwo}{\varktwo}\eofc{\var}{\lvar}\eofc{\varktwo}{\lvartwo}}}{\vark}}{\bott}
            }
  \]

\item $\rulpApp$. The derivation ends in
  \[
  \indrule{\rulpApp}{
    \pjum{\tenv}{\ptmfour}{\typ\imp\typtwo}{\nenv}
    \HS
    \pjum{\tenv}{\ptmfive}{\typ}{\nenv}
  }{
    \pjum{\tenv}{\pap{\ptmfour}{\ptmfive}}{\typtwo}{\nenv}
  }
\]

Let $\pi$ be the derivation:
\[
  \indrule{\rulmITensor}
  {
       \indrule{\rulmIOfc}
       {
             \indrule{\rulmIWhy}
             {
               \indih{
                 \djum{\why{\ttra{\tenv}},\lneg{(\ttra{\nenv})},\varkthree:\lneg{(\ttra{\typ})}}{\emptytenv}{\ttrat{\ptmfive}{\varkthree}}{\bott}
                 }
             }
             {
                   \djum{\why{\ttra{\tenv}},\lneg{(\ttra{\nenv})}}{\emptytenv}{\iwhy{\varkthree}{\ttrat{\ptmfive}{\varkthree}}}{\why{\ttra{\typ}}}
                 }
         }
         {
                         \djum{\why{\ttra{\tenv}},\lneg{(\ttra{\nenv})}}{\emptytenv}{\ofc{\iwhy{\varkthree}{\ttrat{\ptmfive}{\varkthree}}}}{\ofc{\why{\ttra{\typ}}}}
                       }
    }
    {
       \djum{\why{\ttra{\tenv}},\lneg{(\ttra{\nenv})},\vark:\lneg{(\ttra{\typtwo})}}{\emptytenv}{\pair{\ofc{\iwhy{\varkthree}{\ttrat{\ptmfive}{\varkthree}}}}{\ofc{\vark}}}{\ofc{\why{\ttra{\typ}}}\tensor\ofc{\lneg{(\ttra{\typtwo})}}}
    }
    \]

    \[
\indrule{\rulmSub}
             {
               \indih{
                 \djum{\why{\ttra{\tenv}},\lneg{(\ttra{\nenv})},\varktwo:\ofc{\why{\ttra{\typ}}}\tensor\ofc{\lneg{(\ttra{\typtwo})}}}{\emptytenv}{\ttrat{\ptmfour}{\varktwo}}{\bott}
                 }
             }
             {
                    \djum{\why{\ttra{\tenv}},\lneg{(\ttra{\nenv})},\varktwo:\ofc{\lneg{(\ttra{\typtwo})}}}{\emptytenv}{\ttrat{\ptmfour}{\varktwo}\sub{\varktwo}{\pair{\ofc{\iwhy{\varkthree}{\ttrat{\ptmfive}{\varkthree}}}}{\ofc{\vark}}}}{\bott}
               }
      \]
\item $\rulpName$. The derivation ends in
  \[
      \indrule{\rulpName}{
    \pjum{\tenv}{\ptmtwo}{\typ}{\nenv,\nvar:\typ}
  }{
    \pjum{\tenv}{\pname{\nvar}{\ptmtwo}}{\bott}{\nenv,\nvar:\typ}
    }
\]

\[
  \indrule{\rulmIBott}{
      \indrule{\rulmSub}{
        \indih{
                \djum{\why{\ttra{\tenv}},\lneg{(\ttra{\nenv})},\nvar:
          \lneg{(\ttra{\typ})},\varktwo:\lneg{(\ttra{\typ})}}{\emptytenv}{\ttrat{\ptmtwo}{\varktwo}}{\bott}
           }
        }
        {
                      \djum{\why{\ttra{\tenv}},\lneg{(\ttra{\nenv})},\nvar:
          \lneg{(\ttra{\typ})}}{\emptytenv}{\ttrat{\ptmtwo}{\varktwo}\sub{\varktwo}{\nvar}}{\bott}
      }
}
{
                        \djum{\why{\ttra{\tenv}},\lneg{(\ttra{\nenv})},\nvar:
          \lneg{(\ttra{\typ})},\vark:\one}{\emptytenv}{\ibott{\ttrat{\ptmtwo}{\varktwo}\sub{\varktwo}{\nvar}}{\vark}}{\bott}

  }
\]

\item $\rulpMu$. The derivation ends in
  \[
          \indrule{\rulpMu}{
    \pjum{\tenv}{\ptmtwo}{\bott}{\nenv,\nvar:\typ}
  }{
    \pjum{\tenv}{\pmu{\nvar}{\ptmtwo}}{\typ}{\nenv}
  }
\]

\[
  \indrule{\rulmSub}{
      \indrule{\rulmSub}{
        \indih{
                \djum{\why{\ttra{\tenv}},\lneg{(\ttra{\nenv})},\nvar:
          \lneg{(\ttra{\typ})},\varktwo:\one}{\emptytenv}{\ttrat{\ptmtwo}{\varktwo}}{\bott}
           }
        }
        {
                      \djum{\why{\ttra{\tenv}},\lneg{(\ttra{\nenv})},\nvar:
          \lneg{(\ttra{\typ})}}{\emptytenv}{\ttrat{\ptmtwo}{\varktwo}\sub{\varktwo}{\ione}}{\bott}
      }
}
{
                        \djum{\why{\ttra{\tenv}},\lneg{(\ttra{\nenv})},\vark:\lneg{(\ttra{\typ})}}{\emptytenv}{\ttrat{\ptmtwo}{\varktwo}\sub{\varktwo}{\ione}\sub{\nvar}{\vark}}{\bott}
  }
\]

% \item $\rulpCL$. The derivation ends in
%   \[
%   \indrule{\rulpCL}{
%     \pjum{\tenv,\var:\typ,\vartwo:\typ}{\ptm}{\typthree}{\nenv}
%   }{
%     \pjum{\tenv, \varthree:\typ}{\ptm\sub{\var,\vartwo}{\varthree}}{\typthree}{\nenv}
%   }
% \]

% \item $\rulpCR$. The derivation ends in
%   \[
%      \indrule{\rulpCR}{
%     \pjum{\tenv}{\ptm}{\typthree}{\nenv, \nvar:\typ,\nvartwo:\typ}
%   }{
%     \pjum{\tenv}{\ptm\sub{\nvar,\nvartwo}{\nvarthree}}{\typthree}{\nenv, \nvarthree:\typ}
%   }
% \]

% \item $\rulpWL$. The derivation ends in
%   \[
%    \indrule{\rulpWL}{
%     \pjum{\tenv}{\ptm}{\typ}{\nenv}
%   }{
%     \pjum{\tenv, \var:\typtwo}{\ptm}{\typ}{\nenv}
%   }
% \]

% \item $\rulpWR$. The derivation ends in
%   \[
%      \indrule{\rulpWR}{
%     \pjum{\tenv}{\ptm}{\typ}{\nenv}
%   }{
%     \pjum{\tenv}{\ptm}{\typ}{\nenv, \nvar:\typtwo}
%   }
%   \]
  \end{enumerate}
  
\end{proof}

\begin{lemma}
\llem{calcMELL_simulates_beta_via_t_translation}
  Suppose $\ptm \to_{\beta} \ptmtwo$, then $\ttrat{\ptm}{\vark} \tome \ttrat{\ptmtwo}{\vark}$.
% $\ttra{(\pap{(\lam{\var}{\ptm})}{\ptmtwo})} \rtto \ttra{\ptm\sub{\var}{\ptmtwo}}$
\end{lemma}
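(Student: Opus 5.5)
The plan is to exhibit one pre-reduction step sandwiched between two structural equivalences, as \rdef{calcMELL_reduction} allows. That is, for $\ptm = \pap{(\lam{\var}{\ptmthree})}{\ptmfour}$ and $\ptmtwo = \ptmthree\sub{\var}{\ptmfour}$, I would find terms $\tm',\tmtwo'$ with $\ttrat{\ptm}{\vark}\cceq\tm'\pretome\tmtwo'\cceq\ttrat{\ptmtwo}{\vark}$. Since the T-translation is compositional, a $\beta$-step inside a context $\ctx$ of $\CalcParigot$ becomes a step inside the corresponding $\CalcMELL$ context (the translation of the surrounding context, with a substitution for the continuation variable). Pre-reduction is contextually closed, and $\cceq$ is a congruence. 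So I first treat a root redex, and then lift through contexts by induction on the $\CalcParigot$ context. In the lifting step I would use \rlem{linear_subst_compat_with_reduction} and \rlem{sub_compatible_with_equivalence} to push the step and the equivalences under the substitutions $\sub{\varktwo}{\cdot}$ and $\sub{\nvar}{\cdot}$ that appear in the clauses for application, naming and $\mu$.

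For the root case, unfold the definitions:
$\ttrat{\ptm}{\vark} = \ibott{\ipar{\lvar}{\lvartwo}{\ttrat{\ptmthree}{\varktwo}\eofc{\var}{\lvar}\eofc{\varktwo}{\lvartwo}}}{\pair{\ofc{\iwhy{\varkthree}{\ttrat{\ptmfour}{\varkthree}}}}{\ofc{\vark}}}$.
This is literally a $\raxpartensor$ redex with empty contexts $\cctx,\cctxtwo$. Firing it is the single pre-reduction step, and it yields
$\tmtwo' = \ttrat{\ptmthree}{\varktwo}\eofc{\var}{\ofc{\iwhy{\varkthree}{\ttrat{\ptmfour}{\varkthree}}}}\eofc{\varktwo}{\ofc{\vark}}$.
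It remains to show $\tmtwo'\cceq\ttrat{\ptmtwo}{\vark}$. For this I would prove two auxiliary facts. The first is an equivalence $\tm\eofc{\uvar}{\ofc{\tmtwo}}\cceq\tm\sub{\uvar}{\tmtwo}$ for $\ofc{\tmtwo}=\iofctwo{\lvarthree}{(\ibott{\tmtwo}{\lvarthree})}$ typed with no linear variables. The second is a substitution lemma: $\ttrat{\ptmthree}{\vark}\sub{\var}{\iwhy{\varkthree}{\ttrat{\ptmfour}{\varkthree}}}\cceq\ttrat{\ptmthree\sub{\var}{\ptmfour}}{\vark}$, proved by induction on $\ptmthree$. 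Its only nontrivial case is $\ptmthree=\var$, which requires
$\ibott{(\iwhy{\varkthree}{\ttrat{\ptmfour}{\varkthree}})}{\ofc{\vark}}\cceq\ttrat{\ptmfour}{\vark}$.
Via the derived equation $\ibott{(\iwhy{\uvar}{\tmtwo})}{\tm}\cceq\tmtwo\eofc{\uvar}{\tm}$ from \rlem{cceq_derived}, this again reduces to the first auxiliary fact. A renaming lemma, $\ttrat{\ptmthree}{\varktwo}\sub{\varktwo}{\vark}=\ttrat{\ptmthree}{\vark}$, handles the continuation variable.

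The main obstacle is the first auxiliary fact. Read as a rewrite, $\tm\eofc{\uvar}{\iofctwo{\lvarthree}{(\ibott{\tmtwo}{\lvarthree})}}$ is a $\raxofc$ redex whose contractum is $\tm\sub{\uvar}{\tmtwo\eone{\ione}}$, which is $\cceq$ to $\tm\sub{\uvar}{\tmtwo}$ by $\ruleCCEqBetaOne$. To keep the overall count at one step, I must instead derive it inside $\cceq$. I would first try the chain through the alternative presentation $\cceqalt$ (\rprop{cceq_equals_cceqalt}): rewrite $\tm\eofc{\uvar}{\ofc{\tmtwo}}$ as $\ibott{(\iwhy{\uvar}{\tm})}{\ofc{\tmtwo}}$ using $\ruleCCEqIwhyDualToEofc$. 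Then flip the cut with $\ruleCCEqiBottSym$ and use $\ruleCCEqIofcDualToEwhy$ to obtain $\ewhy{(\ibott{\tmtwo}{\lvarthree})}{\lvarthree}{\iwhy{\uvar}{\tm}}$. The remaining point is to relate this why-not elimination to the substitution using $\ruleCCEqCosAndSub$ together with \rlem{cos_as_negation_of_context}, since $\ibott{\tmtwo}{\lvarthree}$ has type $\bott$ and so contra-substitution into it agrees with substitution up to $\cceq$.

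If this route fails because the equation genuinely requires duplication, which $\cceq$ cannot perform, the fallback is to change where the single pre-reduction step is placed. I would then take $\tm'$ to be the $\cceq$-rearrangement of $\ttrat{\ptm}{\vark}$ in which the $\raxpartensor$ cut and the outer $\eofc{\varktwo}{\ofc{\vark}}$ are merged by $\ruleCCEqCtx$, so that a single $\raxofc$ or $\raxofcwhy$ step performs the essential substitution. This step is where I expect most of the work.
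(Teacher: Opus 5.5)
\textbf{The step that fails.} Your first auxiliary fact is false, and so is the variable case $\ibott{(\iwhy{\varkthree}{\ttrat{\ptmfour}{\varkthree}})}{\ofc{\vark}}\cceq\ttrat{\ptmfour}{\vark}$. No rearrangement modulo $\cceq$ can rescue either.

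Iterating \rthm{cceqalt_is_a_strong_bisimulation} shows that $\cceq$-equivalent typed terms admit pre-reduction sequences of exactly the same lengths. In particular they have the same $\tml{\cdot}$, and anything $\cceq$-equivalent to a normal form is normal. But $\tm\eofc{\uvar}{\ofc{\tmtwo}}\pretome\tm\sub{\uvar}{\tmtwo\eone{\ione}}\cceq\tm\sub{\uvar}{\tmtwo}$ by $\raxofc$ and $\ruleCCEqBetaOne$. Hence $\tml{\tm\eofc{\uvar}{\ofc{\tmtwo}}}>\tml{\tm\sub{\uvar}{\tmtwo}}$, so the two sides cannot be $\cceq$-equivalent. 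The same argument with $\raxwhyofc$ refutes the variable case. In the erasing case $\uvar\notin\fv{\tm}$, the fact already contradicts \rlem{equivalent_terms_have_same_free_variables} as soon as $\fv{\tmtwo}\not\subseteq\fv{\tm}$. So $\cceq$ never performs an exponential cut. Binding $\var$, binding the continuation, and each use of $\var$ in $\ptmthree$ all cost genuine pre-reduction steps.

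\textbf{The one-step form is false.} Your fallback cannot work either, because a single $\tome$ step is not achievable. Take $\pap{(\lam{\var}{\var})}{\vartwo}\to_{\beta}\vartwo$. The source is
$S=\ibott{\ipar{\lvar}{\lvartwo}{(\ibott{\var}{\ofc{\varktwo}})\eofc{\var}{\lvar}\eofc{\varktwo}{\lvartwo}}}{\pair{\ofc{\iwhy{\varkthree}{(\ibott{\vartwo}{\ofc{\varkthree}})}}}{\ofc{\vark}}}$.
Its only redex is the root $\raxpartensor$, and the contractum still contains two $\raxofc$ redexes. The target $\ibott{\vartwo}{\ofc{\vark}}$ is normal. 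Suppose $S\cceq\tm'\pretome\tmtwo'\cceq\ibott{\vartwo}{\ofc{\vark}}$. Bisimulation would then give a one-step reduct of $S$ that is $\cceq$-equivalent to a normal form, hence normal, which is impossible; four pre-reduction steps are needed here.

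Lifting through contexts has the same defect. Continuation variables such as $\varktwo$ are unrestricted, since they occur under $\iofctwo{\lvar}{\cdot}$. They are duplicated or erased through $\mu$. So the relevant lemma is \rlem{subst_compat_with_reduction}, not the linear one, and it only gives $\rtpretome$.

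\textbf{Repair.} The lemma has to be read with $\rttome$, which is all the Simulation proposition uses. With that reading your outline goes through. Replace the first auxiliary fact by the step $\tm\eofc{\uvar}{\ofc{\tmtwo}}\tome\tm\sub{\uvar}{\tmtwo}$ shown above. Then prove $\ttrat{\ptmthree}{\vark}\sub{\var}{\iwhy{\varkthree}{\ttrat{\ptmfour}{\varkthree}}}\rttome\ttrat{\ptmthree\sub{\var}{\ptmfour}}{\vark}$ by induction on $\ptmthree$. When $\ptmthree=\var$, fire $\raxwhyofc$ and then use your renaming lemma for the continuation.
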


\begin{proof}
By induction on $\ptm$.

\end{proof}

\begin{lemma}
\llem{calcMELL_simulates_mu_via_t_translation}  Suppose $\ptm \to_{\mu} \ptmtwo$, then $\ttrat{\ptm}{\vark} = \ttrat{\ptmtwo}{\vark}$.
%  $\ttra{(\pap{(\pmu{\nvar}{\ptm})}{\ptmtwo})} =\ttra{(\pmu{\nvar}{(\ptm\psub{\nvar}{\ptmtwo})})}$
\end{lemma}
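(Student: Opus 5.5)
The plan is to prove the statement by induction on the derivation of $\ptm \tomu \ptmtwo$: first the root case of the axiom $\toax{\mu}$, then the contextual closure. The root case rests on an auxiliary identity describing how the T-translation commutes with structural substitution:
\[
  \ttrat{(\ptm\psub{\nvar}{\ptmtwo})}{\vark} \;=\; \ttrat{\ptm}{\vark}\sub{\nvar}{\pair{\ofc{\iwhy{\varkthree}{\ttrat{\ptmtwo}{\varkthree}}}}{\ofc{\nvar}}}
\]
This is a \emph{syntactic} equality, not merely $\cceq$. Here $\nvar$ is treated as an unrestricted variable of $\CalcMELL$, as in \rlem{type_preservation_for_t_translation} where $\lneg{(\ttra{\nenv})}$ sits in the unrestricted environment, so $\sub{\nvar}{\cdot}$ is unrestricted substitution and may hit several occurrences.

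The auxiliary identity is proved by induction on $\ptm$, with bound variables chosen fresh (in particular $\varkthree$, and the names bound in $\ptmtwo$ avoid capture).
\begin{itemize}
\item For $\var$, $\lam{\var}{\ptm'}$, application, $\pname{\nvartwo}{\ptm'}$ with $\nvartwo\neq\nvar$, and $\pmu{\nvartwo}{\ptm'}$, the translation is built from the translations of the immediate subterms by linear substitutions on fresh continuation variables $\varktwo$. Unrestricted substitution for $\nvar$ commutes with these by the usual substitution lemma, so the IH closes the case.
\item The interesting case is $\ptm = \pname{\nvar}{\ptm'}$, where $\ptm\psub{\nvar}{\ptmtwo} = \pname{\nvar}{(\pap{(\ptm'\psub{\nvar}{\ptmtwo})}{\ptmtwo})}$. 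Unfolding the definition of the translation gives
\[
  \ibott{\ttrat{(\ptm'\psub{\nvar}{\ptmtwo})}{\varkthree}\sub{\varkthree}{\pair{\ofc{\iwhy{\varkthree}{\ttrat{\ptmtwo}{\varkthree}}}}{\ofc{\varktwo}}}\sub{\varktwo}{\nvar}}{\vark}.
\]
Composing the two linear substitutions turns $\ofc{\varktwo}$ into $\ofc{\nvar}$. Applying the IH to $\ptm'$ then reproduces exactly $\big(\ibott{\ttrat{\ptm'}{\varktwo}\sub{\varktwo}{\nvar}}{\vark}\big)\sub{\nvar}{\pair{\ldots}{\ofc{\nvar}}}$. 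The occurrence of $\nvar$ introduced by $\sub{\varktwo}{\nvar}$ is precisely the one that gets replaced by the pair.
\end{itemize}

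For the root case $\pap{(\pmu{\nvar}{\ptm})}{\ptmtwo} \toax{\mu} \pmu{\nvar}{(\ptm\psub{\nvar}{\ptmtwo})}$, write $P_{\vark}$ for $\pair{\ofc{\iwhy{\varkthree}{\ttrat{\ptmtwo}{\varkthree}}}}{\ofc{\vark}}$. Unfolding the translation of application and then of $\mu$, the left-hand side becomes
\[
  \ttrat{\ptm}{\varktwo}\sub{\varktwo}{\ione}\sub{\nvar}{\varkthree}\sub{\varkthree}{P_{\vark}} \;=\; \ttrat{\ptm}{\varktwo}\sub{\varktwo}{\ione}\sub{\nvar}{P_{\vark}}.
\]
By the auxiliary identity, the right-hand side is
\[
  \ttrat{\ptm}{\varktwo}\sub{\nvar}{P_{\nvar}}\sub{\varktwo}{\ione}\sub{\nvar}{\vark}.
\]
The two expressions coincide once we commute $\sub{\varktwo}{\ione}$ past the $\nvar$-substitutions (the variables are distinct and $\varktwo$ is fresh) and merge $\sub{\nvar}{P_{\nvar}}\sub{\nvar}{\vark}$ into $\sub{\nvar}{P_{\vark}}$. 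That merge is valid because $\nvar\notin\fv{\ptmtwo}$, which holds after $\alpha$-renaming.

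For the contextual closure, each clause of the translation expresses $\ttrat{C[\ptm]}{\vark}$ as a fixed term in which $\ttrat{\ptm}{\varktwo}$ appears via substitutions and constructors. Syntactic equality of the translations of the subterms therefore propagates directly. This needs only the routine observation that the translation is invariant under the choice of the fresh name $\varktwo$.

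The main obstacle I expect is the bookkeeping in the $\pname{\nvar}{\ptm'}$ case of the auxiliary identity, together with the variable hygiene in the final merge of substitutions. One has to check that the linear variable $\varktwo$ and the unrestricted variable $\nvar$ commute as claimed, and that no capture occurs under the binders $\iwhy{\varkthree}{\cdot}$ and $\ipar{\lvar}{\lvartwo}{\cdot}$.
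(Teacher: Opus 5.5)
Your proposal is correct and takes essentially the same route as the paper. The paper's proof is just an induction on $\ptm$, and it rests on the same commutation $\ttrat{(\ptm\psub{\nvar}{\ptmtwo})}{\vark}=\ttrat{\ptm}{\vark}\sub{\nvar}{\pair{\ofc{\iwhy{\varkthree}{\ttrat{\ptmtwo}{\varkthree}}}}{\ofc{\nvar}}}$ that you isolate as your auxiliary identity.

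One wording fix: the continuation variables are unrestricted, not linear. For example, $\vark$ replaces every free $\nvar$ in $\ttrat{(\pmu{\nvar}{\ptm})}{\vark}$, so it may occur many times or not at all. Your substitution-composition steps never use linearity, so nothing breaks.
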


\begin{proof}
By induction on $\ptm$.

\end{proof}

\begin{lemma}
  \llem{calcMELL_simulates_rho_via_t_translation}
  Suppose $\ptm \to_{\rho} \ptmtwo$, then $\ttrat{\ptm}{\vark} \cceq \ttrat{\ptmtwo}{\vark}$.
%  $ \ttra{(\pname{\nvar}{(\pmu{\nvartwo}{\ptm})})} \cceq\ttra{(\ptm \rensub{\nvartwo}{\nvar})}$
\end{lemma}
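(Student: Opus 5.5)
The proof goes by induction on the context in which the $\rho$-redex occurs inside $\ptm$, with the root case doing all the real work. As a preliminary, I will establish the renaming lemma $\ttrat{\ptm\rensub{\nvartwo}{\nvar}}{\vark} = \ttrat{\ptm}{\vark}\sub{\nvartwo}{\nvar}$. This follows by a routine induction on $\ptm$, since every clause of the T-translation treats continuation variables only as linear variables and only places them inside substitutions. I will also use that $\ttrat{\ptm}{\vark}$ depends on the bound name $\vark$ only up to renaming, \ie $\ttrat{\ptm}{\vark}\sub{\vark}{\varktwo} = \ttrat{\ptm}{\varktwo}$ for $\varktwo$ fresh.

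\textbf{Root case.} Here $\ptm = \pname{\nvar}{(\pmu{\nvartwo}{\ptmthree})}$ with $\ptmthree : \bott$. Unfolding the definitions gives
\[
  \ttrat{\ptm}{\vark} = \ibott{\ttrat{\ptmthree}{\varkthree}\sub{\varkthree}{\ione}\sub{\nvartwo}{\varktwo}\sub{\varktwo}{\nvar}}{\vark}
  = \ibott{\tmthree\sub{\varkthree}{\ione}}{\vark},
\]
where I write $\tmthree \eqdef \ttrat{\ptmthree}{\varkthree}\sub{\nvartwo}{\nvar}$. By type preservation (\rlem{type_preservation_for_t_translation}), $\tmthree$ has type $\bott$ and the variable $\varkthree$ has type $\lneg{\bott}=\one$. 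By the renaming lemma, the target is $\ttrat{\ptmthree\rensub{\nvartwo}{\nvar}}{\vark} = \tmthree\sub{\varkthree}{\vark}$. The chain of equivalences is then:
\begin{enumerate}
\item Since $\tmthree : \bott$, the derived equation $\tm\cos{\lvar}{\tmthree}\cceq\tm\sub{\lvar}{\tmthree}$ (valid for $\tm:\bott$) gives $\tmthree\sub{\varkthree}{\ione}\cceq\tmthree\cos{\varkthree}{\ione}$.
\item By contextual closure of $\cceq$ under $\ibott{\ctxhole}{\vark}$, we get $\ibott{\tmthree\sub{\varkthree}{\ione}}{\vark}\cceq\ibott{\tmthree\cos{\varkthree}{\ione}}{\vark}$.
\item The axiom $\ruleCCEqCosAndSub$ rewrites the right-hand side as $\tmthree\sub{\varkthree}{\vark}$.
\end{enumerate}
Along the way I only need to check that the linear variables used are fresh, so that all terms involved are linear and typed.

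\textbf{Contextual cases.} Each clause of the T-translation builds $\ttrat{\ptm}{\vark}$ from the translations of the immediate subterms by term constructors and by linear substitutions $\sub{\cdot}{\cdot}$. The one exception is the argument position of an application, where the subterm appears under $\iwhy{\varkthree}{\cdot}$ and $\ofc{\cdot}$. In every case the induction hypothesis applied to the subterm, together with \rlem{sub_compatible_with_equivalence} (both the substituted-into and the substituted positions) and the contextual closure of $\cceq$, lifts the equivalence to the whole term. Contextual closure of $\cceq$ is over all term positions, including under $\iofctwo{\lvar}{\cdot}$, since it is only the $\ruleCCEqCtx$ axiom that is restricted to surface contexts. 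For the clauses involving $\pmu{\nvar}{\cdot}$ and $\pname{\nvar}{\cdot}$, the substitutions $\sub{\varktwo}{\ione}$, $\sub{\nvar}{\vark}$ and $\sub{\varktwo}{\nvar}$ are handled by the same lemma.

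\textbf{Main obstacle.} I expect the main difficulty to be side conditions rather than the equational chain. Two things must be verified: that each intermediate term is typed with the same types and contexts (as the $\cceq$ axioms require), and that the step from $\sub{}{}$ to $\cos{}{}$ really applies. The latter needs the variable $\varkthree$ to occur in $\ttrat{\ptmthree}{\varkthree}$, which holds because the T-translation is linear in $\vark$. If the derived equation for $\bott$-typed terms turns out not to be directly usable, I would fall back on a direct argument: rewrite $\ibott{\tmthree\sub{\varkthree}{\ione}}{\vark}$ using $\ruleCCEqCosAndSub$ backwards on $\tmthree\sub{\varkthree}{\vark}$, and compare the two forms via \rlem{simple_admissible} together with $\ruleCCEqBetaOne$.
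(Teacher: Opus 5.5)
\textbf{Gap in the root case.} Your root case treats $\varkthree$ as a linear variable, but in the T-translation every continuation variable and every name is \emph{unrestricted}. The evidence is in the definitions themselves:
\begin{itemize}
\item \rlem{type_preservation_for_t_translation} puts $\vark$ and $\lneg{(\ttra{\nenv})}$ in the unrestricted environment, with empty linear environment.
\item $\ttrat{\var}{\vark}=\ibott{\var}{\ofc{\vark}}$ places $\vark$ inside $\iofctwo{\lvar}{(\ibott{\vark}{\lvar})}$, and such a box may contain no free linear variable.
\item The $\lambda$-clause binds $\varktwo$ through $\eofc{\varktwo}{\lvartwo}$.
\item The $\mu$-clause $\sub{\nvar}{\vark}$ lets $\vark$ occur zero or several times. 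For instance, $\ttrat{(\pmu{\nvar}{\pname{\nvartwo}{\var}})}{\vark}=\ibott{(\ibott{\var}{\ofc{\nvartwo}})}{\ione}$ contains no $\vark$ at all.
\end{itemize}
So your claim that the translation is linear in $\vark$ is false, and $\tmthree\cos{\varkthree}{\ione}$ is undefined. Neither the derived equation $\tm\cos{\lvar}{\tmthree}\cceq\tm\sub{\lvar}{\tmthree}$, nor $\ruleCCEqCosAndSub$, nor \rlem{simple_admissible} applies to $\varkthree$, and your fallback uses the same tools. The renaming lemma and the contextual cases are fine once the substitutions are read as unrestricted ones, using closure of $\cceq$ under unrestricted substitution.

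\textbf{The obligation itself fails.} This is not a side condition you can patch. What you must show is $\ibott{\tmthree\sub{\varkthree}{\ione}}{\vark}\cceq\tmthree\sub{\varkthree}{\vark}$ with $\varkthree,\vark:\one$ unrestricted. Take $\var:\bott$ and the step $\pname{\nvar}{(\pmu{\nvartwo}{\var})}\to_\rho\var$. The two translations are $\ibott{(\ibott{\var}{\ofc{\ione}})}{\vark}$ and $\ibott{\var}{\ofc{\vark}}$. Getting from one to the other requires moving $\vark$ into an $\iofctwo{\lvar}{\cdot}$ box, and no axiom of $\cceq$ does that:
\begin{itemize}
\item $\ruleCCEqCtx$ excludes the contexts $\iofctwo{\lvar}{\Box}$ and $\ewhy{\Box}{\lvar}{\tm}$.
\item $\ruleCCEqBetaOne$ and $\ruleCCEqEOneSymm$ move nothing across a box.
\item In $\ruleCCEqCosAndSub$, the linear variable $\lvar$ of a linear term never lies inside a box. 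Contra-substitution sends boxes to boxes: its $\ewhy{}{}{}$-clause turns $\ewhy{\tm_1}{\lvartwo}{\cdot}$ into $\iofctwo{\lvartwo}{\tm_1}$.
\end{itemize}
Hence the number of free occurrences of a fixed unrestricted variable lying outside boxes is a $\cceq$-invariant. It is $1$ for the first term and $0$ for the second, so the two are not equivalent.

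With the definitions exactly as given, this instance appears to be a counterexample to the stated $\cceq$. The paper's proof is only ``by induction on $\ptm$'' and does not address it. Closing the root case needs an extra principle, for example letting an eliminator with no free linear variables, such as $\eone{\vark}$, float into boxes, or else a modified translation.
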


\begin{proof}
By induction on $\ptm$.

\end{proof}

% \begin{lemma}
%   \llem{t_translation_and_free_variables}
%   $\nvar\notin\fv{\ptm}$ implies $\nvar\notin\fv{\ttrat{\ptm}{\vark}}$.
% \end{lemma}

% \begin{proof}
% By induction on $\ptm$.
% \end{proof}

\begin{lemma}
  \llem{calcMELL_simulates_theta_via_t_translation}
  Suppose $\ptm \to_{\theta} \ptmtwo$, then $\ttrat{\ptm}{\vark} \cceq \ttrat{\ptmtwo}{\vark}$.
%  $\ttra{(\pmu{\nvar}{\pname{\nvar}{\ptm}})}\cceq \ttra{\ptm}$, if $\nvar\notin\fv{\ptm}$.
\end{lemma}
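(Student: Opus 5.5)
The plan is to prove the statement by induction on $\ptm$, following the contextual closure that defines $\tomu$. The real content is the root case $\ptm = \pmu{\nvar}{\pname{\nvar}{\ptmfour}}$ and $\ptmtwo = \ptmfour$, with $\nvar\notin\fv{\ptmfour}$. The inductive cases should be routine, provided the translation is compositional and $\cceq$ is preserved by the constructions it uses.

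For the root case, I would first unfold the definition of the T-translation:
\[
\ttrat{(\pmu{\nvar}{\pname{\nvar}{\ptmfour}})}{\vark}
= (\ibott{\ttrat{\ptmfour}{\varkthree}\sub{\varkthree}{\nvar}}{\varktwo})\sub{\varktwo}{\ione}\sub{\nvar}{\vark}
= \ibott{\ttrat{\ptmfour}{\varkthree}\sub{\varkthree}{\nvar}\sub{\nvar}{\vark}}{\ione}.
\]
I then need an auxiliary lemma, proved by a straightforward induction on $\ptmfour$: $\nvar\notin\fv{\ptmfour}$ implies $\nvar\notin\fv{\ttrat{\ptmfour}{\varkthree}}$. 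With it, the composite substitution $\sub{\varkthree}{\nvar}\sub{\nvar}{\vark}$ reduces to the renaming $\sub{\varkthree}{\vark}$. Since the index variable of the translation is generic (again by induction, or by $\alpha$-equivalence), $\ttrat{\ptmfour}{\varkthree}\sub{\varkthree}{\vark} = \ttrat{\ptmfour}{\vark}$.

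It remains to show $\ibott{\ttrat{\ptmfour}{\vark}}{\ione} \cceq \ttrat{\ptmfour}{\vark}$. By the Type Preservation Lemma for the T-translation, $\ttrat{\ptmfour}{\vark}$ has type $\bott$. The derived equations then give $\ibott{\tm}{\ione}\cceq\ibott{\ione}{\tm}\cceq\tm$ for $\tm:\bott$ (symmetry of $\lightning$, then the left-unit law). This closes the root case.

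For the contextual cases, I would use that each clause of the T-translation builds $\ttrat{\ptm}{\vark}$ from the translations of the immediate subterms. It does so by term constructors (which $\cceq$ is closed under by definition) and by substitutions $\sub{\varktwo}{\cdot}$ of terms into translations. The induction hypothesis for the subterm then propagates via \rlem{sub_compatible_with_equivalence}, which gives compatibility of substitution with $\cceq$ in both arguments; for the argument clause $\pap{\ptmfour}{\ptmfive}$, the hypothesis sits under $\ofc{}$ and $\iwhy{}{}$, which is still a term context.

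I expect the main obstacle to be the bookkeeping of variables. Concretely, this means verifying that the translated free continuation variables are exactly the original ones, so that the renaming collapses as claimed. It also means checking that the typing side-conditions attached to $\cceq$ (both sides typed in the same contexts) hold; I would discharge these through the Type Preservation Lemma.
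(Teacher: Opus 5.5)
Your proposal is correct and follows the paper's route. The paper's proof is just ``by induction on $\ptm$'', and your steps supply what that induction leaves implicit: you reduce the root case to $\ibott{\ttrat{\ptmfour}{\vark}}{\ione}$, close it with the derived laws $\ibott{\tm}{\tmtwo}\cceq\ibott{\tmtwo}{\tm}$ and $\ibott{\ione}{\tm}\cceq\tm$ for $\tm:\bott$, and handle the contextual cases by congruence plus compatibility of (unrestricted) substitution with $\cceq$. As you note, this relies on $\ptm$ being typable so that the typed side conditions of $\cceq$ apply, and the induction must be run with $\vark$ universally quantified, since the clauses invoke the translation at fresh $\varktwo,\varkthree$.
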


\begin{proof}
By induction on $\ptm$.

\end{proof}

\CalcMELLSimulatesCalcParigotViaTTranslation*

\begin{proof}\label{calcMELL_simulates_calcParigot_via_t_translation:proof}
By induction on $\ptm \tomu \ptmtwo$ using~\rlem{calcMELL_simulates_beta_via_t_translation},~\rlem{calcMELL_simulates_mu_via_t_translation},~\rlem{calcMELL_simulates_rho_via_t_translation} and~\rlem{calcMELL_simulates_theta_via_t_translation}. 
% In summary,
% \begin{enumerate}
% \item  If $\ptm \tome^{\beta} \ptmtwo$, then $\ttrat{\ptm}{\vark} \rttome \ttrat{\ptmtwo}{\vark}$.
% \item  If $\ptm \tome^{\mu} \ptmtwo$, then $\ttrat{\ptm}{\vark} = \ttrat{\ptmtwo}{\vark}$.
% \item  If $\ptm \tome^{\rho} \ptmtwo$, then $\ttrat{\ptm}{\vark} \cceq \ttrat{\ptmtwo}{\vark}$.  
% \item If $\ptm \tome^{\theta} \ptmtwo$, then $\ttrat{\ptm}{\vark} \cceq \ttrat{\ptmtwo}{\vark}$.
%   \end{enumerate}
\end{proof}

% \edu{Completeness de la traducci'on en el rango de la traducci'on?  $\{ \tm\in \CalcMELL\,|\,\ttra{\ptm}\tome \tm, \mbox{ for some }\ptm\in\lambda\mu\}$}

%%%%%%%%%%%%%%%%%%%%%%%%%%%%%%%%%%
%        Q-Translation for Parigot
%%%%%%%%%%%%%%%%%%%%%%%%%%%%%%%%%%

% \begin{lemma}
% \llem{qtra_cont_var_renaming}
% $\qtrat{\ptm}{\vark}\sub{\vark}{\varktwo} = 
%   \qtrat{\ptm}{\varktwo}$.
% \end{lemma}

% \begin{proof}
% By induction on $\ptm$.
% \end{proof}

% \begin{remark}
% \lremark{qtra_and_qtratwo}
% $ \ibott{\ofc{\qtratwo{\pval}}}{\vark} = \qtrat{\pval}{\vark} $.
% \end{remark}
\subsubsection{Q-Translation}

\TypePreservationForQTranslation*

\begin{proof}\label{type_preservation_for_q_translation:proof}
  By induction on the derivation. 
  \begin{enumerate}
  \item $\rulpAx$. The derivation ends in
\[
  \indrule{\rulpAx}
  {}
  {\pjum{\tenv,\var:\typ}{\var}{\typ}{\nenv}}
  \]

\[
        \indrule{\rulpAx}
        {}
        {\djum{\qtratwo{\tenv},\var:\qtratwo{\typ},
            \lneg{(\qtra{\nenv})}, \vark:\lneg{(\qtra{\typ})}}{\emptytenv}{\ibott{\ofc{\var}}{\vark}}{\bott}
      }
\]

    \item $\rulpLam$. The derivation ends in
  \[
  \indrule{\rulpLam}{
    \pjum{\tenv,\var:\typ}{\ptmtwo}{\typtwo}{\nenv}
  }{
    \pjum{\tenv}{\lam{\var}{\ptmtwo}}{\typ\imp\typtwo}{\nenv}
  }
\]

\[
  \indrule{\rulmIBott}
  {
  \indrule{\rulmIOfc}
  {
  \indrule{\rulmIPar}{
    \indrule{\rlem{unrestricted_eq_bang_linear}(\ref{to_lin})}{
         \indrule{\rlem{unrestricted_eq_bang_linear}(\ref{to_lin})}
         {
           \indih{
             \djum{\qtratwo{\tenv},\var:\qtratwo{\typ},\lneg{(\qtra{\nenv})},\varktwo:\lneg{(\qtra{\typtwo})}}{\emptytenv}{\qtrat{\ptmtwo}{\varktwo}}{\bott}}
           }
         {\djum{\qtratwo{\tenv},\lneg{(\qtra{\nenv})}, \varktwo:\lneg{(\qtra{\typtwo})}}{\lvar:\ofc{\qtratwo{\typ}}}{\qtrat{\ptmtwo}{\varktwo}\eofc{\var}{\lvar}}{\bott}
         }
  }
  {
    \djum{\qtratwo{\tenv},\lneg{(\qtra{\nenv})}}{\lvar:\ofc{\qtratwo{\typ}},\lvartwo:\ofc{\lneg{(\qtrat{\typtwo}{\varktwo})}}}{\qtrat{\ptmtwo}{\varktwo}\eofc{\var}{\lvar}\eofc{\varktwo}{\lvartwo}}{\bott}
  }
    }{
        \djum{\qtratwo{\tenv},\lneg{(\qtra{\nenv})}}{\emptytenv}{\ipar{\lvar}{\lvartwo}{\qtrat{\ptmtwo}{\varktwo}\eofc{\var}{\lvar}\eofc{\varktwo}{\lvartwo}}}{\lneg{(\qtra{\typ})}\parr \why{\qtra{\typtwo}}}
      } 
    }
          {
                          \djum{\qtratwo{\tenv},\lneg{(\qtra{\nenv})}}{\emptytenv}{\ofc{(\ipar{\lvar}{\lvartwo}{\qtrat{\ptmtwo}{\varktwo}\eofc{\var}{\lvar}\eofc{\varktwo}{\lvartwo}})}}{\ofc{(\lneg{(\qtra{\typ})}\parr \why{\qtra{\typtwo}})}}
                        }
   }
   {
            \djum{\qtratwo{\tenv},\lneg{(\qtra{\nenv})}, \vark: \why{(\qtra{\typ}\tensor \ofc{\lneg{(\qtrat{\typtwo}{\varktwo})}})}}{\emptytenv}{\ibott{\ofc{(\ipar{\lvar}{\lvartwo}{\qtra{\ptmtwo}\eofc{\var}{\lvar}\eofc{\varktwo}{\lvartwo}})}}{\vark}}{\bott}
     }
  \]

\item $\rulpApp$. The derivation ends in
  \[
  \indrule{\rulpApp}{
    \pjum{\tenv}{\ptmtwo}{\typ\imp\typtwo}{\nenv}
    \HS
    \pjum{\tenv}{\ptmthree}{\typ}{\nenv}
  }{
    \pjum{\tenv}{\pap{\ptmtwo}{\ptmthree}}{\typtwo}{\nenv}
  }
\]

Let $\pi$ be the derivation:
    \[
      \indrule{\rulmEParTwo}
      {
      \indrule{\rulmIOfc}
    {
      \djum{\uvartwo:\lneg{(\qtra{\typ})}\parr\why{\qtra{\typtwo}}, \qtratwo{\tenv}, \lneg{(\qtra{\nenv})},\vark:\lneg{(\qtra{\typtwo})}}{\emptytenv}{\vark}{\lneg{(\qtra{\typtwo})}}
            }          
          {
              \djum{\uvartwo:\lneg{(\qtra{\typ})}\parr\why{\qtra{\typtwo}}, \qtratwo{\tenv}, \lneg{(\qtra{\nenv})},\vark:\lneg{(\qtra{\typtwo})}}{\emptytenv}{\ofc{\vark}}{\ofc{\lneg{(\qtra{\typtwo})}}}
          }
        }
        {
            \djum{\uvartwo:\lneg{(\qtra{\typ})}\parr\why{\qtra{\typtwo}}, \qtratwo{\tenv}, \lneg{(\qtra{\nenv})},\vark:\lneg{(\qtra{\typtwo})}}{\emptytenv}{\invap{\uvartwo}{\ofc{\vark}}}{\lneg{(\qtra{\typ})}}
          }
    \]

    \[
      \indrule{\rulmIWhy}
      {
             \indrule{\rulmSub}
             {               
                 \begin{array}{l}
                 \djum{\uvartwo:\lneg{(\qtra{\typ})}\parr\why{\qtra{\typtwo}}, \qtratwo{\tenv}, \lneg{(\qtra{\nenv})},\vark:\lneg{(\qtra{\typtwo})}}{\emptytenv}{\invap{\uvartwo}{\ofc{\vark}}}{\lneg{(\qtra{\typ})}}
                 \\
                   \djum{\uvartwo:\lneg{(\qtra{\typ})}\parr\why{\qtra{\typtwo}},\qtratwo{\tenv},\lneg{(\qtra{\nenv})}, \vark:\lneg{(\qtra{\typtwo})}, \varktwo:\lneg{(\qtra{\typ})}}{\emptytenv}{\qtrat{\ptmthree}{\varktwo}}{\bott}
                   \end{array}                                      
             }
             {
                   \djum{\uvartwo:\lneg{(\qtra{\typ})}\parr\why{\qtra{\typtwo}},\qtratwo{\tenv},\lneg{(\qtra{\nenv})}, \vark:\lneg{(\qtra{\typtwo})}}{\emptytenv}{\qtrat{\ptmthree}{\varktwo}\sub{\varktwo}{\invap{\uvartwo}{\ofc{\vark}}}}{\bott}
                 }
               }
               {
                 \djum{\qtratwo{\tenv},\lneg{(\qtra{\nenv})}, \vark:\lneg{(\qtra{\typtwo})}}{\emptytenv}{\iwhy{\uvartwo}{\qtrat{\ptmthree}{\varktwo}\sub{\varktwo}{\invap{\uvartwo}{\ofc{\vark}}}}}{\why{(\qtra{\typ}\tensor\ofc{\lneg{(\qtra{\typtwo})}})}}
                 }
               \]
               
    \[
\indrule{\rulmSub}
             {
                 \begin{array}{l}
                   \djum{\qtratwo{\tenv},\lneg{(\qtra{\nenv})}, \vark:\lneg{(\qtra{\typtwo})}}{\emptytenv}{\iwhy{\uvartwo}{\qtrat{\ptmthree}{\varktwo}\sub{\varktwo}{\invap{\uvartwo}{\ofc{\vark}}}}}{\why{(\qtra{\typ}\tensor\ofc{\lneg{(\qtra{\typtwo})}})}}
                   \\
                 \djum{\qtratwo{\tenv},\lneg{(\qtra{\nenv})}, \varkthree:\lneg{(\qtra{\typtwo})}, \vark':\why{(\qtra{\typ}\tensor\ofc{\lneg{(\qtra{\typtwo})}})}}{\emptytenv}{\qtrat{\ptmtwo}{\varkthree}}{\bott}                   
                   \end{array}
             }
             {
                           \djum{\qtratwo{\tenv},\lneg{(\qtra{\nenv})}, \vark:\lneg{(\qtra{\typtwo})}}{\emptytenv}{\qtrat{\ptmtwo}{\varkthree}\sub{\varkthree}{\iwhy{\uvartwo}{\qtrat{\ptmthree}{\varktwo}\sub{\varktwo}{\invap{\uvartwo}{\ofc{\vark}}}}}}{\bott}
               }
             \]

\item $\rulpName$. The derivation ends in
  \[
      \indrule{\rulpName}{
    \pjum{\tenv}{\ptmtwo}{\typ}{\nenv,\nvar:\typ}
  }{
    \pjum{\tenv}{\pname{\nvar}{\ptmtwo}}{\bott}{\nenv,\nvar:\typ}
    }
\]

\[
  \indrule{\rulmIBott}{
      \indrule{\rulmSub}{
        \indih{
                \djum{\qtratwo{\tenv},\lneg{(\qtra{\nenv})},\nvar:
          \lneg{(\qtra{\typ})},\varktwo:\lneg{(\qtra{\typ})}}{\emptytenv}{\qtrat{\ptmtwo}{\varktwo}}{\bott}
           }
        }
        {
                      \djum{\qtratwo{\tenv},\lneg{(\qtra{\nenv})},\nvar:
          \lneg{(\qtra{\typ})}}{\emptytenv}{\qtrat{\ptmtwo}{\varktwo}\sub{\varktwo}{\nvar}}{\bott}
      }
}
{
                        \djum{\qtratwo{\tenv},\lneg{(\qtra{\nenv})},\nvar:
          \lneg{(\qtra{\typ})},\vark:\one}{\emptytenv}{\ibott{\qtrat{\ptmtwo}{\varktwo}\sub{\varktwo}{\nvar}}{\vark}}{\bott}

  }
\]

\item $\rulpMu$. The derivation ends in
  \[
          \indrule{\rulpMu}{
    \pjum{\tenv}{\ptmtwo}{\bott}{\nenv,\nvar:\typ}
  }{
    \pjum{\tenv}{\pmu{\nvar}{\ptmtwo}}{\typ}{\nenv}
  }
\]

\[
  \indrule{\rulmSub}{
      \indrule{\rulmSub}{
        \indih{
                \djum{\qtratwo{\tenv},\lneg{(\qtra{\nenv})},\nvar:
          \lneg{(\qtra{\typ})},\varktwo:\one}{\emptytenv}{\qtrat{\ptmtwo}{\varktwo}}{\bott}
           }
        }
        {
                      \djum{\qtratwo{\tenv},\lneg{(\qtra{\nenv})},\nvar:
          \lneg{(\qtra{\typ})}}{\emptytenv}{\qtrat{\ptmtwo}{\varktwo}\sub{\varktwo}{\ione}}{\bott}
      }
}
{
                        \djum{\qtratwo{\tenv},\lneg{(\qtra{\nenv})},\vark:\lneg{(\qtra{\typ})}}{\emptytenv}{\qtrat{\ptmtwo}{\varktwo}\sub{\varktwo}{\ione}\sub{\nvar}{\vark}}{\bott}
  }
\]

% \item $\rulpCL$. The derivation ends in
%   \[
%   \indrule{\rulpCL}{
%     \pjum{\tenv,\var:\typ,\vartwo:\typ}{\ptm}{\typthree}{\nenv}
%   }{
%     \pjum{\tenv, \varthree:\typ}{\ptm\sub{\var,\vartwo}{\varthree}}{\typthree}{\nenv}
%   }
% \]

% \item $\rulpCR$. The derivation ends in
%   \[
%      \indrule{\rulpCR}{
%     \pjum{\tenv}{\ptm}{\typthree}{\nenv, \nvar:\typ,\nvartwo:\typ}
%   }{
%     \pjum{\tenv}{\ptm\sub{\nvar,\nvartwo}{\nvarthree}}{\typthree}{\nenv, \nvarthree:\typ}
%   }
% \]

% \item $\rulpWL$. The derivation ends in
%   \[
%    \indrule{\rulpWL}{
%     \pjum{\tenv}{\ptm}{\typ}{\nenv}
%   }{
%     \pjum{\tenv, \var:\typtwo}{\ptm}{\typ}{\nenv}
%   }
% \]

% \item $\rulpWR$. The derivation ends in
%   \[
%      \indrule{\rulpWR}{
%     \pjum{\tenv}{\ptm}{\typ}{\nenv}
%   }{
%     \pjum{\tenv}{\ptm}{\typ}{\nenv, \nvar:\typtwo}
%   }
%   \]
  \end{enumerate}
\end{proof}

\begin{lemma}
\llem{calcMELL_simulates_betav_via_q_translation}
  Suppose $\ptm \to_{\betav} \ptmtwo$, then $\qtrat{\ptm}{\vark} \tome \qtrat{\ptmtwo}{\vark}$.
% $\ttra{(\pap{(\lam{\var}{\ptm})}{\ptmtwo})} \rtto \ttra{\ptm\sub{\var}{\ptmtwo}}$
\end{lemma}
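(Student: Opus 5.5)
The proof goes by induction on the derivation of $\ptm \to_{\betav} \ptmtwo$, that is, on the position of the $\betav$-redex in $\ptm$.

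\textbf{Contextual cases.} The translation is compositional: $\qtrat{(\ptm\,\ptmtwo)}{\vark}$ and $\qtrat{(\pname{\nvar}{\ptm})}{\vark}$ are obtained by plugging the translations of the immediate subterms into linear holes $\varkthree,\varktwo$ via linear substitution. Similarly, $\qtratwo{(\lam{\var}{\ptm})}$ places $\qtrat{\ptm}{\vark}$ under $\ipar{\lvar}{\lvartwo}{\cdot}$ and positive eliminators, which are ordinary term contexts. So a step inside a subterm lifts in three ways:
\begin{itemize}
\item directly by contextual closure of $\pretome$;
\item through \rlem{linear_subst_compat_with_reduction} (both items give a \emph{single} $\pretome$ step);
\item through \rlem{subst_compat_with_reduction}(\ref{subst_compat_with_reduction:left}), which also gives a single step.
\end{itemize}
Since $\pretome\subseteq\tome$, these cases yield exactly one $\tome$ step. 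The one point to check is that values are never substituted \emph{into} a position where the redex lives, so the multi-step item \rlem{subst_compat_with_reduction}(\ref{subst_compat_with_reduction:right}) is never needed. Here I would use that $\betav$ only fires on a value argument, and values translate to $\qtratwo{\pval}$ placed under $\ofc{}$.

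\textbf{Root case.} Here $\ptm=\pap{(\lam{\var}{\ptm_0})}{\pval}$. Unfolding the definitions gives
\[
\qtrat{\ptm}{\vark} = \ibott{\ofc{\tmfour}}{\iwhy{\uvartwo}{(\ibott{\ofc{\qtratwo{\pval}}}{(\invap{\uvartwo}{\ofc{\vark}})})}},
\]
where $\tmfour=\ipar{\lvar}{\lvartwo}{\qtrat{\ptm_0}{\vark}\eofc{\var}{\lvar}\eofc{\vark}{\lvartwo}}$ and $\ofc{\tmfour}=\iofctwo{\lvarthree}{(\ibott{\tmfour}{\lvarthree})}$. I would fire the single $\raxofcwhy$ step. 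Its contractum is
\[
(\ibott{\ofc{\qtratwo{\pval}}}{(\invap{\uvartwo}{\ofc{\vark}})})\sub{\uvartwo}{(\ibott{\tmfour}{\lvarthree})\cos{\lvarthree}{\ione}}.
\]
By \rdef{cos} we have $(\ibott{\tmfour}{\lvarthree})\cos{\lvarthree}{\ione}=\tmfour\eone{\ione}$, and this is $\cceq\tmfour$ by $\ruleCCEqBetaOne$. So the contractum is $\cceq \ibott{\ofc{\qtratwo{\pval}}}{(\invap{\tmfour}{\ofc{\vark}})}$.

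It then remains to show the structural equivalence
\[
\ibott{\ofc{\qtratwo{\pval}}}{(\invap{\tmfour}{\ofc{\vark}})} \cceq \qtrat{(\ptm_0\sub{\var}{\pval})}{\vark}.
\]
For this I would use the following derived equations of \rlem{cceq_derived}:
\begin{itemize}
\item $\ruleCCEqiBottSym$;
\item $\ibott{(\invap{\tmtwo}{\tm})}{\tmthree}\cceq\ibott{\tmtwo}{\pair{\tmthree}{\tm}}$;
\item $\ibott{\ipar{\lvar}{\lvartwo}{\tmtwo}}{\tm}\cceq\tmtwo\epair{\lvar}{\lvartwo}{\tm}$.
\end{itemize}
These turn the expression into $\qtrat{\ptm_0}{\vark}\eofc{\var}{\lvar}\eofc{\vark}{\lvartwo}\epair{\lvar}{\lvartwo}{\pair{\ofc{\qtratwo{\pval}}}{\ofc{\vark}}}$. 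The aim is then to dissolve the remaining pattern-matching on the explicit pair $\pair{\ofc{\qtratwo{\pval}}}{\ofc{\vark}}$ and on the two promotions using the axiom $\ruleCCEqCosAndSub$. Concretely, I would rewrite $\ofc{\tmthree}$ as $\iofctwo{\lvarthree}{(\ibott{\tmthree}{\lvarthree})}$ and use $\ruleCCEqIofcDualToEwhy$ / $\ruleCCEqIwhyDualToEofc$ in reverse, together with \rlem{cos_as_negation_of_context} and \rlem{contra_contra_lemma}, to trade these administrative interactions for the substitutions $\sub{\var}{\qtratwo{\pval}}$ and $\sub{\vark}{\vark}$. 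Finally I would close with an auxiliary substitution lemma, proved by induction on $\ptm_0$ following the mutual definition of $\qtra{\arg}$ and $\qtratwo{\arg}$:
\[
\qtrat{\ptm_0}{\vark}\sub{\var}{\qtratwo{\pval}}=\qtrat{(\ptm_0\sub{\var}{\pval})}{\vark}.
\]

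\textbf{Main obstacle.} I expect the hard step to be the $\cceq$-absorption of the administrative $\tensor/\parr$ and $\ofc{}/\eofc{\cdot}{\cdot}$ interactions left after the single $\raxofcwhy$ step. These look like genuine $\raxtensor$ and $\raxofc$ redexes, and the statement asserts one $\tome$ step, so they must be shown to be \emph{equations} rather than reductions. My first attempt is to express each as an instance of $\ruleCCEqCosAndSub$, namely $\tm\sub{\lvar}{\tmthree}\cceq\ibott{\tm\cos{\lvar}{\ione}}{\tmthree}$, read right-to-left. This should work because every term involved has type $\bott$, and for such terms contra-substitution and substitution agree up to $\cceq$. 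If this fails for one of them, the fallback is to choose the single pre-step to be that interaction, for instance $\raxtensor$, and absorb the $\raxofcwhy$ interaction by $\ruleCCEqIofcDualToEwhy$ instead.
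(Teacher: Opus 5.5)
Your skeleton (induction on the step, unfolding the root redex, and closing with $\qtrat{\ptm_0}{\vark}\sub{\var}{\qtratwo{\pval}}=\qtrat{(\ptm_0\sub{\var}{\pval})}{\vark}$) matches what the paper's one-line ``by induction on $\ptm$'' has in mind. The problem is the insistence on exactly one $\tome$ step, which cannot be met. The equivalence you identify as the main obstacle, $\ibott{\ofc{\qtratwo{\pval}}}{(\invap{\tmfour}{\ofc{\vark}})}\cceq\qtrat{(\ptm_0\sub{\var}{\pval})}{\vark}$, is false. By \rthm{cceqalt_is_a_strong_bisimulation} and strong normalization, $\cceq$-related terms are both pre-normal or both not, and have the same $\tml{\cdot}$. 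Write $\varktwo$ for the bound continuation variable of $\tmfour$. Before entering the class of the right-hand side, the left-hand side makes three genuine pre-steps:
\begin{enumerate}
\item $\raxparR$ on $\invap{\tmfour}{\ofc{\vark}}$, giving $(\iwhy{\var}{(\ibott{\qtrat{\ptm_0}{\varktwo}}{\ione})})\eofc{\varktwo}{\ofc{\vark}}$;
\item $\raxofcwhy$ against $\ofc{\qtratwo{\pval}}$, at a distance;
\item $\raxofc$ on $\eofc{\varktwo}{\ofc{\vark}}$.
\end{enumerate}
So its $\tml{\cdot}$ is at least three larger, and the two sides cannot be $\cceq$. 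Your pair form $\ldots\epair{\lvar}{\lvartwo}{\pair{\ofc{\qtratwo{\pval}}}{\ofc{\vark}}}$ shows the same thing: it is a $\raxtensor$ redex followed by two $\raxofc$ redexes. $\ruleCCEqCosAndSub$ relates the meta-operations and cannot erase an eliminator facing an explicit introduction. The fallback via $\ruleCCEqIofcDualToEwhy$ only turns the $\raxofcwhy$ redex into a $\raxwhy$ one.

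In fact the single-step reading of the lemma is refuted by $\ptm=\pap{(\lam{\var}{\var})}{\vartwo}$. There the root $\raxofcwhy$ is the only pre-redex of $\qtrat{\ptm}{\vark}$, and its contractum still contains the $\raxparR$ redex $\invap{(\tmfour\eone{\ione})}{\ofc{\vark}}$ (with $\tmfour$ in your notation). But $\qtrat{\vartwo}{\vark}=\ibott{\ofc{\vartwo}}{\vark}$ is pre-normal, so by bisimulation no $\cceq\,\pretome\,\cceq$ step links the two translations.

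The contextual case has a separate flaw. In the application clause, $\varkthree$ and $\varktwo$ are unrestricted, not linear: a continuation variable can occur inside $\ofc{\vark}$, and type preservation places it in the unrestricted environment. Moreover $\varkthree$ may occur zero or several times in $\qtrat{\ptm_1}{\varkthree}$, for instance when $\ptm_1=\pmu{\nvar}{\ptm'}$ with $\nvar$ occurring zero or several times in $\ptm'$. A $\betav$-step inside the argument $\ptm_2$ is then erased or duplicated, so \rlem{subst_compat_with_reduction}(2), which only gives $\rtpretome$, is unavoidable. Your remark about values sitting under $\ofc{}$ does not help, since what is substituted is the translation of an arbitrary argument.

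What the induction actually proves, and all that the simulation proposition (stated with $\eqme$) needs, is $\qtrat{\ptm}{\vark}\rttome\qtrat{\ptmtwo}{\vark}$:
\begin{itemize}
\item In the root case, fire $\raxofcwhy$, $\raxparR$, $\raxofcwhy$, $\raxofc$ to reach $(\ibott{\qtrat{\ptm_0}{\varktwo}}{\ione})\sub{\var}{\qtratwo{\pval}\eone{\ione}}\sub{\varktwo}{\vark\eone{\ione}}$. Then clean up with $\ruleCCEqBetaOne$, with $\ibott{\tm}{\ione}\cceq\tm$ for $\tm:\bott$, and with your substitution lemma.
\item In the contextual cases, combine both compatibility lemmas with the compatibility of $\cceq$ with substitution.
\end{itemize}
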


\begin{proof}
By induction on $\ptm$.

\end{proof}

\begin{lemma}
\llem{calcMELL_simulates_muv_via_q_translation}
  Suppose $\ptm \to_{\muv} \ptmtwo$, then $\qtrat{\ptm}{\vark} = \qtrat{\ptmtwo}{\vark}$.
%  $\ttra{(\pap{(\pmu{\nvar}{\ptm})}{\ptmtwo})} =\ttra{(\pmu{\nvar}{(\ptm\psub{\nvar}{\ptmtwo})})}$
\end{lemma}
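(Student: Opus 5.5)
The plan is to compute both sides of the $\muv$-step explicitly and reduce the claim to a commutation lemma between structural substitution and the $\qtra{\arg}$-translation. The $\muvprime$, $\rho$, $\theta$ and $\betav$ axioms are not the point here: the statement concerns only $\to_{\muv}$. Closure under contexts will follow by a routine induction on $\ptm$, because $\qtrat{\arg}{\vark}$ is defined compositionally through substitutions into the translations of immediate subterms.

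\textbf{Both sides of the root step.} The step is $\pap{(\pmu{\nvar}{\ptm})}{\pval}\to_{\muv}\pmu{\nvar}{(\ptm\psub{\nvar}{\pval})}$. Unfolding the application and $\mu$ clauses, the left-hand side becomes
\[
\qtrat{\ptm}{\varktwo}\sub{\varktwo}{\ione}\sub{\nvar}{W_{\vark}},
\qquad\text{where } W_{\vark}\eqdef\iwhy{\uvartwo}{\qtrat{\pval}{\varktwo}\sub{\varktwo}{\invap{\uvartwo}{\ofc{\vark}}}} .
\]
This uses that the intermediate $\varkthree$ occurs only as the renamed $\nvar$. The right-hand side is
\[
\qtrat{\ptm\psub{\nvar}{\pval}}{\varktwo}\sub{\varktwo}{\ione}\sub{\nvar}{\vark}.
\]

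\textbf{Key lemma.} For any continuation variable $\lvartwo$ I will prove
\[
\qtrat{\ptm\psub{\nvar}{\pval}}{\vark}=\qtrat{\ptm}{\vark}\sub{\nvar}{W_{\nvar}} ,
\]
where $W_{\nvar}$ is $W_{\vark}$ with $\vark$ replaced by $\nvar$, and every free occurrence of $\nvar$ is replaced simultaneously. The proof is by induction on $\ptm$.
\begin{itemize}
\item The variable case is immediate.
\item For the $\lambda$ case I use the mutual definition with $\qtratwo{\arg}$, noting that $\ptm\psub{\nvar}{\pval}$ preserves being a value.
\item The application, $\pmu{\nvartwo}{}$ and $\pname{\nvartwo}{}$ cases with $\nvartwo\neq\nvar$ follow from the IH, because substitutions for distinct fresh variables commute.
\item The essential case is $\pname{\nvar}{\ptmthree}$. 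There $\qtrat{\pname{\nvar}{(\pap{\ptmthree'}{\pval})}}{\vark}=\ibott{\qtrat{\ptmthree'}{\varkthree}\sub{\varkthree}{W_{\nvar}}}{\vark}$, where $\ptmthree'=\ptmthree\psub{\nvar}{\pval}$. The translated $\nvar$ inside $W_{\nvar}$ is precisely the residual $\pname{\nvar}{}$ kept by structural substitution. Applying the IH to $\ptmthree$ then yields $\ibott{\qtrat{\ptmthree}{\varkthree}\sub{\varkthree}{\nvar}}{\vark}\sub{\nvar}{W_{\nvar}}$, as required.
\end{itemize}

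\textbf{Conclusion.} With the lemma, the right-hand side becomes $\qtrat{\ptm}{\varktwo}\sub{\nvar}{W_{\nvar}}\sub{\varktwo}{\ione}\sub{\nvar}{\vark}$. Since $W_{\nvar}\sub{\nvar}{\vark}=W_{\vark}$ and $\varktwo\notin\fv{W_{\nvar}}$, this equals the left-hand side, giving syntactic equality.

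\textbf{Main obstacle.} I expect the difficulty to be variable hygiene. Two points need care.
\begin{itemize}
\item $\nvar$ must be reintroduced inside $W_{\nvar}$, so the substitution is not capture-free in the naive sense. It has to be read as a one-shot simultaneous replacement.
\item Within the $\pname{\nvar}{}$ case, the IH must be combined with the inner substitution $\sub{\varkthree}{\nvar}$.
\end{itemize}
I would first fix the convention that all continuation indices $\varktwo,\varkthree,\uvartwo$ are chosen fresh. I would then state an auxiliary fact: $\qtrat{\ptm}{\vark}\sub{\vark}{\varktwo}=\qtrat{\ptm}{\varktwo}$, with substitutions on disjoint fresh variables commuting. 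This fact is proved by a trivial induction and is used repeatedly above.
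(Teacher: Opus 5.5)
Your proposal is correct and follows essentially the same route as the paper, whose proof is simply an induction on $\ptm$. The substance of that induction is exactly your commutation lemma $\qtrat{\ptm\psub{\nvar}{\pval}}{\vark}=\qtrat{\ptm}{\vark}\sub{\nvar}{W_{\nvar}}$, with $\pname{\nvar}{\ptmthree}$ as the only non-trivial case, followed by the root computation and a compositional closure under contexts.

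Two small remarks. First, $\sub{\nvar}{W_{\nvar}}$ is just ordinary capture-avoiding substitution, so a replacement term containing $\nvar$ free needs no special one-shot reading. Second, the final identity $W_{\nvar}\sub{\nvar}{\vark}=W_{\vark}$ uses $\nvar\notin\fv{\pval}$, which the variable convention on the redex $\pap{(\pmu{\nvar}{\ptm})}{\pval}$ supplies.
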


\begin{proof}
By induction on $\ptm$.

\end{proof}

\begin{lemma}
\llem{calcMELL_simulates_muvprime_via_q_translation}
Suppose $\ptm \to_{\muvprime} \ptmtwo$, then $\qtrat{\ptm}{\vark} \rttofrommod{\cceq} \qtrat{\ptmtwo}{\vark}$.
%  $\ttra{(\pap{(\pmu{\nvar}{\ptm})}{\ptmtwo})} =\ttra{(\pmu{\nvar}{(\ptm\psub{\nvar}{\ptmtwo})})}$
\end{lemma}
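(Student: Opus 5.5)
The plan is to argue by induction on the derivation of $\ptm \to_{\muvprime} \ptmtwo$. The contextual cases follow from the induction hypothesis, because the $\qtra{\arg}$-translation is compositional. Each term constructor of $\CalcParigotVal$ is translated by plugging the translations of its subterms into a fixed $\CalcMELL$ context, possibly under linear or unrestricted substitutions. Hence it suffices to know that $\tome$ and $\cceq$ are closed under contexts and under the substitutions involved, which is \rlem{linear_subst_compat_with_reduction}, \rlem{subst_compat_with_reduction} and \rlem{sub_compatible_with_equivalence}. The whole difficulty therefore lies in the root case $\pap{\pval}{(\pmu{\nvar}{\ptm})} \toax{\muvprime} \pmu{\nvar}{(\ptm\psubtwo{\nvar}{\pval})}$. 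Since we only need $\rttofrommod{\cceq}$, it is enough to reduce both sides to a common term modulo $\cceq$.

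\textbf{Left-hand side.} Unfolding the definitions, $\qtrat{(\pap{\pval}{(\pmu{\nvar}{\ptm})})}{\vark}$ is
\[
\ibott{\ofc{\qtratwo{\pval}}}{(\iwhy{\uvartwo}{\qtrat{\ptm}{\varktwo'}\sub{\varktwo'}{\ione}\sub{\nvar}{\invap{\uvartwo}{\ofc{\vark}}}})},
\]
where the substitution for $\varktwo$ has already been collapsed. Recall that $\ofc{\tm}$ abbreviates $\iofctwo{\lvar}{(\ibott{\tm}{\lvar})}$. One $\raxofcwhy$ step therefore gives
\[
\qtrat{\ptm}{\varktwo'}\sub{\varktwo'}{\ione}\sub{\nvar}{\invap{\uvartwo}{\ofc{\vark}}}\sub{\uvartwo}{(\ibott{\qtratwo{\pval}}{\lvar})\cos{\lvar}{\ione}}.
\]
The definition of contra-substitution yields $(\ibott{\qtratwo{\pval}}{\lvar})\cos{\lvar}{\ione} = \qtratwo{\pval}\eone{\ione}$, and $\ruleCCEqBetaOne$ turns this into $\qtratwo{\pval}$. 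So the left-hand side reduces, modulo $\cceq$, to
\[
\qtrat{\ptm}{\varktwo'}\sub{\varktwo'}{\ione}\sub{\nvar}{\invap{\qtratwo{\pval}}{\ofc{\vark}}}.
\]

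\textbf{Right-hand side.} Here $\qtrat{(\pmu{\nvar}{(\ptm\psubtwo{\nvar}{\pval})})}{\vark} = \qtrat{(\ptm\psubtwo{\nvar}{\pval})}{\varktwo'}\sub{\varktwo'}{\ione}\sub{\nvar}{\vark}$. The key auxiliary lemma, proved by induction on $\ptm$ with $\nvarthree$ fresh, is
\[
\qtrat{(\ptm\psubtwo{\nvar}{\pval})}{\varktwo} \;\rttofrommod{\cceq}\; \qtrat{\ptm}{\varktwo}\sub{\nvar}{\invap{\qtratwo{\pval}}{\ofc{\nvarthree}}}\sub{\nvarthree}{\nvar}.
\]
In words, left structural substitution is simulated by replacing the continuation $\nvar$ with $\invap{\qtratwo{\pval}}{\ofc{\nvar}}$.

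The interesting case of this lemma is $\pname{\nvar}{\ptmthree}$, whose image is $\pname{\nvar}{(\pap{\pval}{\ptmthree'})}$ with $\ptmthree' = \ptmthree\psubtwo{\nvar}{\pval}$. Its translation is $\ibott{\qtrat{(\pap{\pval}{\ptmthree'})}{\varktwo'}\sub{\varktwo'}{\nvar}}{\varktwo}$. By the same $\raxofcwhy$ computation as above, $\qtrat{(\pap{\pval}{\ptmthree'})}{\varktwo'}$ reduces modulo $\cceq$ to $\qtrat{\ptmthree'}{\varktwo''}\sub{\varktwo''}{\invap{\qtratwo{\pval}}{\ofc{\varktwo'}}}$. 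Applying the induction hypothesis to $\ptmthree$ then closes this case. The $\lambda$-case needs $\pval$ to be closed for the bound variable, which holds by $\alpha$-conversion. The remaining cases commute with the substitution directly.

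\textbf{Combining.} Instantiating the lemma with $\varktwo := \varktwo'$ and then applying $\sub{\varktwo'}{\ione}\sub{\nvar}{\vark}$ to both sides, the right-hand side becomes $\qtrat{\ptm}{\varktwo'}\sub{\varktwo'}{\ione}\sub{\nvar}{\invap{\qtratwo{\pval}}{\ofc{\vark}}}$. This matches the reduct obtained from the left-hand side.

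\textbf{Main obstacle.} The hardest step is the auxiliary lemma, in particular transporting conversions through the outer substitutions. $\nvar$ lives in the unrestricted environment, so it may occur many times. \rlem{subst_compat_with_reduction} only gives $\rtpretome$, not a single step, when the reduct sits inside the substituted term. The plan is to state the lemma with $\rttofrommod{\cceq}$, which is closed under both kinds of substitution by these compatibility lemmas and \rlem{sub_compatible_with_equivalence}. The second point to check is that each $\raxofcwhy$ step is legitimate: the redex must appear literally, or after using $\ruleCCEqCtx$ to float away the surrounding positive eliminators. This should be routine given \rlem{ctx_propagation}.
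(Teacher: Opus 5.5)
Your proposal is correct and follows essentially the paper's route, which is an induction on $\ptm$. The only non-compositional case is the root $\muvprime$ step: you handle it with one literal $\raxofcwhy$ step, the computation $(\ibott{\qtratwo{\pval}}{\lvar})\cos{\lvar}{\ione}=\qtratwo{\pval}\eone{\ione}$ followed by $\ruleCCEqBetaOne$, and a commutation lemma for left structural substitution proved by induction on $\ptm$, and you transport everything through contexts and the outer substitutions by closure of conversion (\rlem{subst_compat_with_reduction}, \rlem{sub_compatible_with_equivalence}).
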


\begin{proof}
By induction on $\ptm$.

\end{proof}

\begin{lemma}
\llem{calcMELL_simulates_rho_via_q_translation}
  Suppose $\ptm \to_{\rho} \ptmtwo$, then $\qtrat{\ptm}{\vark} \cceq \qtrat{\ptmtwo}{\vark}$.
%  $ \qtra{(\pname{\nvar}{(\pmu{\nvartwo}{\ptm})})} \cceq\qtra{(\ptm \rensub{\nvartwo}{\nvar})}$
\end{lemma}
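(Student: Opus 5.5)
We prove the statement by induction on $\ptm$, i.e.\ on the position of the $\rho$-redex. The root case carries the content. For the inductive cases, $\qtrat{\cdot}{\vark}$ is defined compositionally: each clause places the translations of the immediate subterms either inside a term constructor or inside a linear substitution $\sub{\varktwo}{\cdot}$. So these cases follow from the induction hypothesis together with contextuality of $\cceq$ and \rlem{sub_compatible_with_equivalence}, which gives compatibility of substitution with $\cceq$ on both sides. For a $\lambda$-abstraction the step happens under $\qtratwo{(\lam{\var}{\ptm})}$; this lies under $\ipar{\lvar}{\lvartwo}{\cdot}$ and positive eliminators, and then under $\ofc{(\cdot)} = \iofctwo{\lvar}{(\ibott{\cdot}{\lvar})}$. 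Since $\cceq$ is a full contextual closure (not merely a surface closure), this is unproblematic.

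\textbf{Auxiliary renaming lemma.} First I would prove $\qtrat{\ptm\rensub{\nvartwo}{\nvar}}{\vark} = \qtrat{\ptm}{\vark}\sub{\nvartwo}{\nvar}$, by a routine induction on $\ptm$. Here continuation variables are translated to linear variables of the environment, and the translation is uniform in them.

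\textbf{Root case.} Unfolding the definitions, with $\varktwo,\varkthree$ fresh,
\[
\qtrat{(\pname{\nvar}{(\pmu{\nvartwo}{\ptm})})}{\vark}
= \ibott{\qtrat{\ptm}{\varkthree}\sub{\varkthree}{\ione}\sub{\nvartwo}{\varktwo}\sub{\varktwo}{\nvar}}{\vark}
= \ibott{\qtrat{\ptm}{\varkthree}\sub{\varkthree}{\ione}\sub{\nvartwo}{\nvar}}{\vark}.
\]
Since $\ptm : \bott$, type preservation (\rlem{type_preservation_for_q_translation}) gives that $\vark$ and $\varkthree$ both have type $\one$. Set $\tm \eqdef \qtrat{\ptm}{\vark}\sub{\nvartwo}{\nvar}$, which has type $\bott$. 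Renaming $\varkthree$ to $\vark$ and commuting the two independent substitutions, the left-hand side becomes $\ibott{\tm\sub{\vark}{\ione}}{\vark}$, where the bound use of $\vark$ is renamed apart.

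By the derived equation $\tm\cos{\lvar}{\tmthree}\cceq\tm\sub{\lvar}{\tmthree}$ (valid for $\tm:\bott$), we get $\tm\sub{\vark}{\ione} \cceq \tm\cos{\vark}{\ione}$. Contextuality then gives
\[
\ibott{\tm\sub{\vark}{\ione}}{\vark} \cceq \ibott{\tm\cos{\vark}{\ione}}{\vark}.
\]
Applying axiom $\ruleCCEqCosAndSub$ turns the right-hand side into $\tm\sub{\vark}{\vark} = \tm$. By the renaming lemma, $\tm = \qtrat{\ptm\rensub{\nvartwo}{\nvar}}{\vark}$, which is the translation of the reduct.

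\textbf{Main obstacle.} The delicate point is the variable bookkeeping. We must reuse $\vark$ as both the hole of $\ruleCCEqCosAndSub$ and its argument, which requires first renaming $\varkthree$ to a copy of $\vark$. We must also check that the typing side-conditions of the $\cceq$ axioms (both sides typed in the same context) hold; they follow from $\vark:\one$. If reusing $\vark$ in this way is awkward, I would instead apply $\ruleCCEqCosAndSub$ with a fresh variable and then conclude by linear substitution.
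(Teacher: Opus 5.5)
The gap is in the root case. You treat the continuation $\varkthree$ of the $\mu$-body (renamed to $\vark$) as a \emph{linear} variable with exactly one free occurrence in $\tm$. That is what makes $\tm\cos{\vark}{\ione}$ defined, and what lets you use both the derived equation $\tm\cos{\lvar}{\tmthree}\cceq\tm\sub{\lvar}{\tmthree}$ and $\ruleCCEqCosAndSub$.

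In the Q-translation, however, continuation variables are unrestricted:
\rlem{type_preservation_for_q_translation} places $\vark$ and $\lneg{(\qtra{\nenv})}$ in the unrestricted environment, with an empty linear one. Moreover, the application clause contains $\ofc{\vark}=\iofctwo{\lvar}{(\ibott{\vark}{\lvar})}$, which linearity condition 3) and rule \rulmIOfc{} only allow when $\vark$ is unrestricted.

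The $\mu$-clause $\qtrat{\ptm}{\varktwo}\sub{\varktwo}{\ione}\sub{\nvar}{\vark}$ substitutes $\vark$ for every occurrence of the bound name. So the continuation of a $\bott$-typed body can occur zero times, several times, or inside a box (for instance when the body is an application). In none of these cases is $\tm\cos{\vark}{\ione}$ defined, because \rdef{cos} requires a linear variable occurring exactly once and never under $\iofctwo{\lvar}{\cdot}$. Hence neither your derived equation nor $\ruleCCEqCosAndSub$ can be invoked.

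This is not just bookkeeping: without an extra hypothesis the root-case equation, and hence the statement, fails. Take the body $\ptm\eqdef\pmu{c}{\pname{d}{\var}}$, with $c:\bott$ not occurring and $d\neq\nvartwo$.
\begin{itemize}
\item Then $\qtrat{\ptm}{\varkthree}=\ibott{(\ibott{\ofc{\var}}{d})}{\ione}$ does not mention $\varkthree$.
\item So $\pname{\nvar}{(\pmu{\nvartwo}{\ptm})}$ translates to $\ibott{(\ibott{(\ibott{\ofc{\var}}{d})}{\ione})}{\vark}$.
\item Its $\rho$-reduct is $\ptm\rensub{\nvartwo}{\nvar}=\ptm$, which translates to $\ibott{(\ibott{\ofc{\var}}{d})}{\ione}$.
\end{itemize}
Only the first of these has $\vark$ free, so by \rlem{equivalent_terms_have_same_free_variables} the two are not equivalent.

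A correct proof therefore needs a restriction, for example that $\mu$-bodies are named terms $\pname{c}{\ptmthree}$, as in Parigot's original syntax. Under that restriction no contra-substitution on the continuation is needed. Set $Y'\eqdef\qtrat{\ptmthree}{\varktwo}\sub{\varktwo}{c}\sub{\nvartwo}{\nvar}$. The left-hand side is then $\ibott{(\ibott{Y'}{\ione})}{\vark}$. By the derived equations of \rlem{cceq_derived}, which hold however $\vark$ is used, $\ibott{Y'}{\ione}\cceq\ibott{\ione}{Y'}\cceq Y'$. Your renaming lemma then identifies $\ibott{Y'}{\vark}$ with $\qtrat{((\pname{c}{\ptmthree})\rensub{\nvartwo}{\nvar})}{\vark}$.

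Your treatment of the inductive cases and the renaming lemma is fine.
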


\begin{proof}
By induction on $\ptm$.

\end{proof}

% \begin{lemma}
%   \llem{q_translation_and_free_variables}
%   $\nvar\notin\fv{\ptm}$ implies $\nvar\notin\fv{\qtrat{\ptm}{\vark}}$.
% \end{lemma}

% \begin{proof}
% By induction on $\ptm$.
% \end{proof}

\begin{lemma}
\llem{calcMELL_simulates_theta_via_q_translation}
  Suppose $\ptm \to_{\theta} \ptmtwo$, then $\qtrat{\ptm}{\vark} \cceq \qtrat{\ptmtwo}{\vark}$.
%  $\qtra{(\pmu{\nvar}{\pname{\nvar}{\ptm}})}\cceq \qtra{\ptm}$, if $\nvar\notin\fv{\ptm}$.
\end{lemma}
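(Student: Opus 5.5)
The plan is a direct computation at the root redex, followed by an induction on $\ptm$ to handle the contextual closure of $\to_\theta$. Two auxiliary facts about the $\qtra{\arg}$-translation come first, each proved by a routine induction on $\ptm$, exactly as for the T-translation.
\begin{enumerate}
\item If $\nvar\notin\fv{\ptm}$ then $\nvar\notin\fv{\qtrat{\ptm}{\vark}}$.
\item The continuation variable is a mere name: $\qtrat{\ptm}{\varktwo}\sub{\varktwo}{\vark} = \qtrat{\ptm}{\vark}$ for $\varktwo$ fresh.
\end{enumerate}
The second holds because every clause of the translation uses $\vark$ linearly and uniformly. In the application clause the inner continuation is $\varkthree$, and in the $\pname{\nvar}{}$ and $\pmu{\nvar}{}$ clauses it is $\varktwo$. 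These inner variables are bound by substitution, so renaming the outer one commutes with them.

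For the root case, suppose $\ptm = \pmu{\nvar}{\pname{\nvar}{\ptmthree}}$ with $\nvar\notin\fv{\ptmthree}$. Unfolding the definitions gives
\[
\qtrat{(\pmu{\nvar}{\pname{\nvar}{\ptmthree}})}{\vark}
= (\ibott{\qtrat{\ptmthree}{\varkthree}\sub{\varkthree}{\nvar}}{\varktwo})\sub{\varktwo}{\ione}\sub{\nvar}{\vark}
= \ibott{\qtrat{\ptmthree}{\varkthree}\sub{\varkthree}{\nvar}\sub{\nvar}{\vark}}{\ione}.
\]
By fact 1, $\nvar$ occurs in $\qtrat{\ptmthree}{\varkthree}\sub{\varkthree}{\nvar}$ only at the position introduced for $\varkthree$. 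Hence the composite substitution equals $\qtrat{\ptmthree}{\varkthree}\sub{\varkthree}{\vark}$, which is $\qtrat{\ptmthree}{\vark}$ by fact 2.

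It then remains to show $\ibott{\qtrat{\ptmthree}{\vark}}{\ione}\cceq\qtrat{\ptmthree}{\vark}$. This follows from two derived equations of Lemma~\ref{lemma:cceq_derived}: first $\ruleCCEqiBottSym$ gives $\ibott{\qtrat{\ptmthree}{\vark}}{\ione}\cceq\ibott{\ione}{\qtrat{\ptmthree}{\vark}}$, and then $\ruleCCEqOneLUnitForBott$ applies because $\qtrat{\ptmthree}{\vark}:\bott$ by Lemma~\ref{lemma:type_preservation_for_q_translation}. Since $\cceq=\cceqalt$ (Proposition~\ref{prop:cceq_equals_cceqalt}), the chain lives in $\cceq$.

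For the contextual cases, the induction on $\ptm$ goes through one clause at a time. Each clause of $\qtra{\arg}$ builds $\qtrat{\ptm}{\vark}$ from the translations of immediate subterms using term constructors and linear substitutions. Closure of $\cceq$ under contexts, together with Lemma~\ref{lemma:sub_compatible_with_equivalence} for substitution on either side, propagates the inductive hypothesis.

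The one delicate clause is $\lam{\var}{\ptm}$, whose translation goes through $\qtratwo{\arg}$. The reduct of a value is again a value, so the $\theta$-step lies under the binder, in the body. It is therefore covered by the inductive hypothesis on the body $\qtrat{\ptm'}{\vark}$ placed under $\ipar{\lvar}{\lvartwo}{\cdot}$ and the $\eofc{}{}$ eliminators, and both of these are contexts for $\cceq$.

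I expect the main obstacle to be the bookkeeping in fact 1 and fact 2 in the presence of the $\iwhy{\uvartwo}{}$ binder of the application clause, specifically making sure no capture occurs. The rest is a direct application of the derived equations.
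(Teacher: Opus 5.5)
Your proposal is correct and follows the paper's route: an induction on $\ptm$ whose only substantive case is the root redex, where $\qtrat{(\pmu{\nvar}{\pname{\nvar}{\ptmthree}})}{\vark}=\ibott{\qtrat{\ptmthree}{\vark}}{\ione}\cceq\qtrat{\ptmthree}{\vark}$ via the derived equations $\ibott{\tm}{\tmtwo}\cceq\ibott{\tmtwo}{\tm}$ and $\ibott{\ione}{\tm}\cceq\tm$, using exactly the free-variable and renaming facts you isolate. Two small imprecisions, neither affecting validity: the continuation variables are unrestricted and $\vark$ may occur several times or not at all (for instance after a $\mu$-clause), so fact 2 should be justified as a plain renaming lemma rather than by linearity; and in the $\lam{\var}{\ptm}$ case the body also sits under $\iofctwo{\lvarthree}{(\ibott{\cdot}{\lvarthree})}$ and $\ibott{\cdot}{\vark}$, which is harmless because $\cceq$ is closed under all contexts, not only surface ones.
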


\begin{proof}
By induction on $\ptm$.

\end{proof}

\CalcMELLSimulatesCalcParigotViaQTranslation*

\begin{proof}\label{calcMELL_simulates_calcParigot_via_q_translation:proof}
By induction on $\ptm \to_{\muv} \ptmtwo$ using~\rlem{calcMELL_simulates_betav_via_q_translation},~\rlem{calcMELL_simulates_muv_via_q_translation},~\rlem{calcMELL_simulates_muvprime_via_q_translation},~\rlem{calcMELL_simulates_rho_via_q_translation} and~\rlem{calcMELL_simulates_theta_via_q_translation}. 
% In summary,
%   \begin{enumerate}
%     \item 
%   Suppose $\ptm \to_{\betav} \ptmtwo$, then $\qtrat{\ptm}{\vark} \rttome \qtrat{\ptmtwo}{\vark}$.
% \item   Suppose $\ptm \to_{\muv} \ptmtwo$, then $\qtrat{\ptm}{\vark} = \qtrat{\ptmtwo}{\vark}$.
% \item   Suppose $\ptm \to_{\muvprime} \ptmtwo$, then $\qtrat{\ptm}{\vark} \eqme \qtrat{\ptmtwo}{\vark}$.
%  \item   Suppose $\ptm \to_{\rho} \ptmtwo$, then $\qtrat{\ptm}{\vark} \cceq \qtrat{\ptmtwo}{\vark}$.
% \item   Suppose $\ptm \to_{\theta} \ptmtwo$, then $\qtrat{\ptm}{\vark} \cceq \qtrat{\ptmtwo}{\vark}$.
%   \end{enumerate}
\end{proof}

%%%%%%%%%%%%%%%%%%%%%%%%%%%%%%%%%%
%        T-Translation for barLambdaMuTildeMu
%%%%%%%%%%%%%%%%%%%%%%%%%%%%%%%%%%

\subsection{Curien and Herbelin's $\overline{\lambda}\mu\tilde{\mu}$}

\TypePreservationForTTranslationForBarLambdaMuTildeMu*

\begin{proof}\label{type_preservation_for_t_translation_for_barLambdaMuTildeMu:proof}
  By induction on the derivation.
\end{proof}

\CalcMELLSimulatesCalcParigotViaTTranslationForBarLambdaMuTildeMu*

\begin{proof}\label{calcMELL_simulates_calcParigot_via_t_translation_for_barLambdaMuTildeMu:proof}
  By induction on the reduction step $\tm \to_{\bar{\lambda}\mu\tilde{\mu}} \tmtwo$.
\end{proof}

%%%%%%%%%%%%%%%%%%%%%%%%%%%%%%%%%%
%    Hasegawa's $\muDCLL$
%%%%%%%%%%%%%%%%%%%%%%%%%%%%%%%%%%

\subsection{Hasegawa's $\muDCLL$}

Hasegawa~\cite{DBLP:conf/csl/Hasegawa02,DBLP:journals/mscs/Hasegawa05} introduces $\muDCLL$, an extension of the Dual Intuitionistic Linear Logic of Barber and Plotkin~\cite{Barber:PhDThesis:1997,BarberPlotkin:1997} to \MELL. The system features both the intuitionistic (non-linear) arrow type $\imp$ and the linear arrow type $\limp$.

\begin{definition}[Types and terms of $\muDCLL$]
\ldef{types_and_terms_for_calc_parigot}
They are defined as follows:
\[
  \begin{array}{r@{\hspace{.1cm}}c@{\hspace{.1cm}}l@{\hspace{.5cm}}r@{\hspace{.1cm}}c@{\hspace{.1cm}}l}
    \typ,\typtwo & ::= & \bott\,|\, \btyp\,|\,\typ\imp\typtwo\,|\,\typ\limp\typtwo &
    \ptm,\ptmtwo & ::= & \var\,|\,\uvar\,|\,\ilam{\uvar}{\ptm}\,|\,\iap{\ptm}{\ptmtwo}\,|\,\lam{\var}{\ptm}\,|\,\ap{\ptm}{\ptmtwo}\,|\, \pname{\nvar}{\ptm}\,|\, \pmu{\nvar}{\ptm}
  \end{array}
\]
\end{definition}

Typing rules for $\muDCLL$ are given below, where in $\tenv,\tenv'$ one assumes that $\dom{\tenv}\cap\dom{\tenv'}=\emptyset$ and similarly for $\nenv,\nenvtwo$. 

\scalebox{\proofScaleFactor}{\begin{minipage}{\textwidth}
    \[
  \begin{array}{c}
  \indrule{\rulhiAx}
  {}
    {\hjum{\utenv,\uvar:\typ}{\emptytenv}{\uvar}{\typ}{\emptynenv}}
    \quad
    \indrule{\rulhlAx}
  {}
  {\hjum{\utenv}{\var:\typ}{\var}{\typ}{\emptynenv}}
  
  \\
  \\
  \indrule{\rulhiLam}{
    \hjum{\utenv,\uvar:\typ}{\tenv}{\ptm}{\typtwo}{\nenv}
  }{
    \hjum{\utenv}{\tenv}{\ilam{\uvar}{\ptm}}{\typ\imp\typtwo}{\nenv}
  }
  \indrule{\rulhiApp}{
    \hjum{\utenv}{\tenv}{\ptm}{\typ\imp\typtwo}{\nenv}
    \HS
    \hjum{\utenv}{\emptytenv}{\ptmtwo}{\typ}{\emptynenv}
  }{
    \hjum{\utenv}{\tenv}{\iap{\ptm}{\ptmtwo}}{\typtwo}{\nenv}
    }
    \\
    \\
      \indrule{\rulhlLam}{
    \hjum{\utenv}{\tenv,\var:\typ}{\ptm}{\typtwo}{\nenv}
  }{
    \hjum{\utenv}{\tenv}{\lam{\var}{\ptm}}{\typ\limp\typtwo}{\nenv}
  }
  \indrule{\rulhlApp}{
    \hjum{\utenv}{\tenv}{\ptm}{\typ\limp\typtwo}{\nenv}
    \HS
    \hjum{\utenv}{\tenv'}{\ptmtwo}{\typ}{\nenvtwo}
  }{
    \hjum{\utenv}{\tenv,\tenv'}{\ap{\ptm}{\ptmtwo}}{\typtwo}{\nenv,\nenvtwo}
  }
  \\
    \\
      \indrule{\rulhName}{
    \hjum{\utenv}{\tenv}{\ptm}{\typ}{\nenv}
  }{
    \hjum{\utenv}{\tenv}{\pname{\nvar}{\ptm}}{\bott}{\nenv,\nvar:\typ}
    }
    \quad
    \indrule{\rulhMu}{
    \hjum{\utenv}{\tenv}{\ptm}{\bott}{\nenv,\nvar:\typ}
  }{
    \hjum{\utenv}{\tenv}{\pmu{\nvar}{\ptm}}{\typ}{\nenv}
    }
    \end{array}
  \]
\end{minipage}}
\bigskip

Our presentation of equality in $\muDCLL$ differs slightly from that of~\cite{DBLP:conf/csl/Hasegawa02,DBLP:journals/mscs/Hasegawa05} in three ways. First, we distinguish linear variables (\eg $\var$) from non-linear variables (\eg $\uvar$). Second, the two extensionality axioms have been removed (\ie~$\ilam{\var}{\iap{\ptm}{\var}}\doteq \ptm$ and $\lam{\var}{\ap{\ptm}{\var}}\doteq \ptm$, with $\var\notin\fv{\ptm}$). In addition, \textit{op cit} uses a ``one-shot'' $\mu$-binder in their presentation  of ($\mu-R$).  That is, in the right-hand side of  ($\mu-R$),  \textit{op cit} removes both the occurrence of the $\mu$-binder and the (unique) occurrence of $\pname{\nvar}{}$ in $\ptm$, whereas we retain both\footnote{More precisely, the equation is $\pap{\ptmtwo}{(\pmu{\nvar}{\ptm})} \doteq \ptm[^{\ptmtwo(-)}/\pname{\nvar}{(-)}]$, where $\ptm[^{\ptmtwo(-)}/\pname{\nvar}{(-)}]$ is obtained by replacing the unique occurrence of $\pname{\nvar}{\ptmthree}$ by $\pap{\ptmtwo}{\ptmthree}$ in a capture-free way.}. Our approach is also the one taken in the classical linear $\lambda\mu$-calculus of Bierman~\cite{DBLP:journals/tcs/Bierman99}.

  \begin{definition}[Axioms for $\muDCLL$]
    \ldef{axioms_for_muDCLL}
    $\muDCLL$-equivalence is defined as the contextual closure of the following equivalence axioms:
   \begin{center}
  \begin{tabular}{p{5cm}@{\hspace{1cm}}p{6.2cm}}
    $\begin{array}{llll}
      \iap{(\ilam{\uvar}{\ptm})}{\ptmtwo} &\heq_{lin\beta} & \ptm\sub{\uvar}{\ptmtwo} \\
      \ap{(\lam{\var}{\ptm})}{\ptmtwo} &\heq_{\beta} & \ptm\sub{\var}{\ptmtwo} \\
      %       \pap{\ptmtwo}{(\pmu{\nvar}{\ptm})} & \heq_{\mu-R} &
      %                                                  \pmu{\nvar}{(\ptm\psubtwo{\nvar}{\ptmtwo})} \\
      % \pmu{\nvar}{\pname{\nvar}{\ptm}}  & \heq_{\theta} & \ptm &   (\nvar\notin\fv{\ptm})
     \end{array}$
                                                     &
                        $\begin{array}{llll}
      % \iap{(\ilam{\var}{\ptm})}{\ptmtwo} &\heq_{lin\beta} & \ptm\sub{\var}{\ptmtwo} \\
      % \ap{(\lam{\var}{\ptm})}{\ptmtwo} &\heq_{\beta} & \ptm\sub{\var}{\ptmtwo} \\
            \pap{\ptmtwo}{(\pmu{\nvar}{\ptm})} & \heq_{\mu-R} &
                                                       \pmu{\nvar}{(\ptm\psubtwo{\nvar}{\ptmtwo})} \\
      \pmu{\nvar}{\pname{\nvar}{\ptm}}  & \heq_{\theta} & \ptm &   (\nvar\notin\fv{\ptm})
    \end{array}$
  \end{tabular}
\end{center}
\end{definition}

\begin{definition}[Translation]
\ldef{hasegawa_translation}  
The translation of a formula and $\muDCLL$-term are given by
\begin{center}
  \begin{tabular}{p{5cm}p{6.2cm}}    
  $\begin{array}{llll}
    \htra{\bott} & \eqdef & \bott\\
    \htra{\btyp} & \eqdef & \btyp\\
    \htra{(\typ\imp\typtwo)} & \eqdef & \why{\lneg{(\htra{\typ})}}\parr\htra{\typtwo} \\
    \htra{(\typ\limp\typtwo)} & \eqdef & \lneg{(\htra{\typ})}\parr\htra{\typtwo} 
  \end{array}$
&
$\begin{array}{llll}
    \htra{\var} & \eqdef & \var\\
    \htra{(\ilam{\uvar}{\ptm})} & \eqdef & \ipar{\lvar}{\vark}{(\ibott{\htra{\ptm}\eofc{\uvar}{\lvar}}{\vark})}\\
    \htra{(\iap{\ptm}{\ptmtwo})} & \eqdef & \ap{\htra{\ptm}}{\ofc{\htra{\ptmtwo}}}\\
    \htra{(\lam{\var}{\ptm})} & \eqdef & \ipar{\var}{\vark}{(\ibott{\htra{\ptm}}{\vark})}\\
    \htra{(\ap{\ptm}{\ptmtwo})} & \eqdef & \ap{\htra{\ptm}}{\htra{\ptmtwo}}\\
    \htra{(\pname{\nvar}{\ptm})} & \eqdef & \ibott{\htra{\ptm}}{\nvar}\\
    \htra{(\pmu{\nvar}{\ptm})} & \eqdef & \htra{\ptm}\cos{\nvar}{\ione}

\end{array}$
  \end{tabular}
\end{center}
\end{definition}

% \begin{lemma}[{name=Type preservation~\proofnote{Proof on pg.~\pageref{type_preservation_for_hasegawa:proof}}, restate=[name=Type preservation]TypePreservationForHasegawa}]
\begin{lemma}
\llem{type_preservation_for_hasegawa} If $\hjum{\utenv}{\tenv}{\ptm}{\typ}{\nenv}$ holds in $\muDCLL$, then $\djum{\htra{\utenv}}{\htra{\tenv},\lneg{(\htra{\nenv})}}{\htra{\ptm}}{\htra{\typ}}$ holds in $\CalcMELL$.
\end{lemma}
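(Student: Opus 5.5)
The plan is to proceed by induction on the derivation of $\hjum{\utenv}{\tenv}{\ptm}{\typ}{\nenv}$ in $\muDCLL$. The translation of environments is read componentwise: $\htra{\utenv}$ is an unrestricted environment, while $\htra{\tenv}$ and $\lneg{(\htra{\nenv})}$ are linear environments. Each continuation variable $\nvar:\typ$ is thus treated as a linear variable of type $\lneg{(\htra{\typ})}$. The lemmas I will use are weakening (\rlem{weakening_and_contraction}), admissibility of \rulmSub{} and \rulmUSub, and type compatibility of contra-substitution (\rlem{contrasubstitution_lemma}).

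The axiom cases are immediate. For \rulhiAx{} the conclusion $\djum{\htra{\utenv},\uvar:\htra{\typ}}{\emptytenv}{\uvar}{\htra{\typ}}$ is an instance of \rulmAxU, and \rulhlAx{} is an instance of \rulmAx. For \rulhlLam{} the IH gives $\djum{\htra{\utenv}}{\htra{\tenv},\var:\htra{\typ},\lneg{(\htra{\nenv})}}{\htra{\ptm}}{\htra{\typtwo}}$. I then form $\ibott{\htra{\ptm}}{\vark}$ with $\vark:\lneg{(\htra{\typtwo})}$ using \rulmIBott, and close with \rulmIPar. This yields type $\lneg{(\htra{\typ})}\parr\lneg{\lneg{(\htra{\typtwo})}} = \lneg{(\htra{\typ})}\parr\htra{\typtwo}$, using that negation is involutive. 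The \rulhiLam{} case is the same, except that I first apply \rlem{unrestricted_eq_bang_linear}(\ref{to_lin}) to turn $\uvar:\htra{\typ}$ into a linear $\lvar:\ofc{\htra{\typ}}$ via $\htra{\ptm}\eofc{\uvar}{\lvar}$. The resulting par type is $\why{\lneg{(\htra{\typ})}}\parr\htra{\typtwo}$, since $\lneg{(\ofc{\htra{\typ}})}=\why{\lneg{(\htra{\typ})}}$.

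For \rulhlApp{} I apply \rulmEParOne{} to the two IH judgements, since $\lneg{\lneg{(\htra{\typ})}}=\htra{\typ}$. The environments concatenate and match $\htra{\tenv},\htra{\tenv'},\lneg{(\htra{\nenv})},\lneg{(\htra{\nenvtwo})}$. For \rulhiApp, the IH for the argument is in empty linear and continuation environments. It yields $\ofc{\htra{\ptmtwo}} = \iofctwo{\lvar}{(\ibott{\htra{\ptmtwo}}{\lvar})}$ of type $\ofc{\htra{\typ}}$ via \rulmIBott{} and \rulmIOfc, and then \rulmEParOne{} finishes. For \rulhName, the IH together with $\nvar:\lneg{(\htra{\typ})}$ and \rulmIBott{} gives $\ibott{\htra{\ptm}}{\nvar}:\bott$ in the extended continuation environment. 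Since the environments are required to be disjoint, no contraction is needed. (If $\nvar$ already occurs in $\nenv$, as in Parigot's rule, linearity would fail; the disjointness convention stated above rules this out.)

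The main obstacle is \rulhMu. The IH gives $\djum{\htra{\utenv}}{\htra{\tenv},\lneg{(\htra{\nenv})},\nvar:\lneg{(\htra{\typ})}}{\htra{\ptm}}{\bott}$, and I apply \rlem{contrasubstitution_lemma} with $\tmtwo=\ione:\one=\lneg{\bott}$. This produces $\htra{\ptm}\cos{\nvar}{\ione}$ of type $\lneg{\lneg{(\htra{\typ})}}=\htra{\typ}$ in the desired environment. The subtle point is that contra-substitution requires $\nvar\in\fv{\htra{\ptm}}$ and a linear term. I will handle this by a side invariant, proved alongside the induction: every variable in the linear and continuation environments occurs exactly once free in the translated term. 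This holds because typable $\CalcMELL$ terms are linear and their free linear variables are exactly the domain of the linear environment. If $\muDCLL$ allowed $\nvar$ to be absent (weakened continuation), I would instead first weaken via $\htra{\ptm}\eone{\nvar}$. As stated, however, the rules seem to enforce exact linear use, so the invariant suffices.
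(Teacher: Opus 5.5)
Your proposal is correct and follows the paper's proof essentially case by case: induction on the $\muDCLL$ derivation. The \rulhiLam{} case goes through \rlem{unrestricted_eq_bang_linear}(\ref{to_lin}) followed by \rulmIBott{} and \rulmIPar, the \rulhiApp{} case builds $\ofc{\htra{\ptmtwo}}$ with \rulmIBott{} and \rulmIOfc{} and closes with \rulmEParOne, and the \rulhMu{} case applies \rlem{contrasubstitution_lemma} with $\ione:\one$. Your ``side invariant'' is just the standard fact that typable $\CalcMELL$ terms are linear with free linear variables equal to the linear environment, so nothing extra is needed; your hypothetical fallback $\htra{\ptm}\eone{\nvar}$ would only typecheck when $\nvar:\one$, but since $\muDCLL$ never weakens continuation variables that case does not arise.
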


% Linear and unrestricted substitution commute with $\htra{\bullet}{}$:
% \begin{itemize}
%   \item
%     $\htra{\ptm\sub{\var}{\ptmtwo}}=\htra{\ptm}\sub{\lvar}{\htra{\ptmtwo}}$
%     with $\var$ unrestricted.
% \item
%   $\htra{\ptm\sub{\var}{\ptmtwo}}=\htra{\ptm}\sub{\uvar}{\htra{\ptmtwo}}$
%   with $\var$ linear.
% \end{itemize}

%\TypePreservationForHasegawa*

\begin{proof}%\label{type_preservation_for_hasegawa:proof}
  By induction on the derivation of $\hjum{\utenv}{\tenv}{\ptm}{\typ}{\nenv}$.
  \begin{enumerate}
  \item $\rulhiAx$
\[
  \indrule{\rulhiAx}
  {}
  {\hjum{\utenv,\uvar:\typ}{\emptytenv}{\uvar}{\typ}{\emptynenv}}
\]

\[
    \indrule{\rulmAxU}{
    \emptyPremise
  }{
   \djum{\htra{\utenv},\uvar:\htra{\typ}}{\emptytenv}{\uvar}{\htra{\typ}}
  }
\]

   \item $\rulhlAx$ 
     \[
       \indrule{\rulhlAx}
  {}
  {\hjum{\utenv}{\var:\typ}{\var}{\typ}{\emptynenv}}
\]

\[
    \indrule{\rulmAx}{
    \emptyPremise
  }{
   \djum{\htra{\utenv}}{\var:\htra{\typ}}{\var}{\htra{\typ}}
  }
\]

\item $\rulhiLam$

  \[
    \indrule{\rulhiLam}{
    \hjum{\utenv,\uvar:\typ}{\tenv}{\ptm}{\typtwo}{\nenv}
  }{
    \hjum{\utenv}{\tenv}{\ilam{\uvar}{\ptm}}{\typ\imp\typtwo}{\nenv}
  }
\]

\[
  \indrule{}
  {
  \indrule{}
  {
  \indrule{}
  {\djum{\htra{\utenv}, \uvar:\htra{\typ}}{\htra{\tenv},\lneg{(\htra{\nenv})}}{\htra{\ptm}}{\htra{\typtwo}}}
  {\djum{\htra{\utenv}}{\htra{\tenv}, \lvar:\ofc{\htra{\typ}}, \lneg{(\htra{\nenv})}}{\htra{\ptm}\eofc{\uvar}{\lvar}}{\htra{\typtwo}}}
  }
{
  \djum{\htra{\utenv}}{\htra{\tenv}, \lvar:\ofc{\htra{\typ}}, \lneg{(\htra{\nenv})}, \vark:\lneg{(\htra{\typtwo})}}{\ibott{\htra{\ptm}\eofc{\uvar}{\lvar}}{\vark}}{\bott}
}
}
{
  \djum{\htra{\utenv}}{\htra{\tenv}, \lneg{(\htra{\nenv})} }{\ipar{\lvar}{\vark}{\ibott{\htra{\ptm}\eofc{\uvar}{\lvar}}{\vark}}}{\why{\lneg{(\htra{\typ})}}\parr \htra{\typtwo}}
  }
\]

\item $\rulhiApp$
  \[
    \indrule{\rulhiApp}{
    \hjum{\utenv}{\tenv}{\ptm}{\typ\imp\typtwo}{\nenv}
    \HS
    \hjum{\utenv}{\emptytenv}{\ptmtwo}{\typ}{\emptynenv}
  }{
    \hjum{\utenv}{\tenv}{\iap{\ptm}{\ptmtwo}}{\typtwo}{\nenv}
  }
\]

\[
  \indrule{}
  {\djum{\htra{\utenv}}{\htra{\tenv},\lneg{(\htra{\nenv})}}{\htra{\ptm}}{\why{\lneg{(\htra{\typ})}}\parr\htra{\typtwo}}
    \quad
    \indrule{}
    {\djum{\htra{\utenv}}{\emptytenv}{\htra{\ptmtwo}}{\htra{\typ}}}
    {
      \djum{\htra{\utenv}}{\emptytenv}{\ofc{\htra{\ptmtwo}}}{\ofc{\htra{\typ}}}
    }
  }
  {
    \djum{\htra{\utenv}}{\htra{\tenv},\lneg{(\htra{\nenv})}}{\ap{\htra{\ptm}}{\ofc{\htra{\ptmtwo}}}}{\htra{\typtwo}}
  }
\]
\item $\rulhlLam$
  \[
      \indrule{\rulhlLam}{
    \hjum{\utenv}{\tenv,\var:\typ}{\ptm}{\typtwo}{\nenv}
  }{
    \hjum{\utenv}{\tenv}{\lam{\var}{\ptm}}{\typ\limp\typtwo}{\nenv}
  }
\]

\[
  \indrule{}
  {
  \indrule{}
  {\djum{\htra{\utenv}}{\htra{\tenv},\var:\htra{\typ},\lneg{(\htra{\nenv})}}{\htra{\ptm}}{\htra{\typtwo}}}
  {\djum{\htra{\utenv}}{\htra{\tenv}, \var:\htra{\typ}, \lneg{(\htra{\nenv})}, \vark:\lneg{(\htra{\typtwo})}}{\ibott{\htra{\ptm}}{\vark}}{\bott}}
}
{
  \djum{\htra{\utenv}}{\htra{\tenv}, \var:\htra{\typ}, \lneg{(\htra{\nenv})}, \vark:\lneg{(\htra{\typtwo})}}{\ipar{\var}{\vark}{\ibott{\htra{\ptm}}{\vark}}}{\lneg{(\htra{\typ})}\parr\htra{\typtwo}}
  }
\]

\item $\rulhlApp$
  \[
  \indrule{\rulhlApp}{
    \hjum{\utenv}{\tenv_1}{\ptm}{\typ\limp\typtwo}{\nenv_1}
    \HS
    \hjum{\utenv}{\tenv_2}{\ptmtwo}{\typ}{\nenv_2}
  }{
    \hjum{\utenv}{\tenv_1,\tenv_2}{\ap{\ptm}{\ptmtwo}}{\typtwo}{\nenv_1,\nenv_2}
  }
\]

\[
  \indrule{}
  {\djum{\htra{\utenv}}{\htra{\tenv_1}, \lneg{(\htra{\nenv_1})}}{\htra{\ptm}}{\lneg{(\htra{\typ})}\parr\htra{\typtwo}}
    \quad
    \djum{\htra{\utenv}}{\htra{\tenv_2}, \lneg{(\htra{\nenv_2})}}{\htra{\ptmtwo}}{\htra{\typ}}
  }
  {
    \djum{\htra{\utenv}}{\htra{\tenv_1},\htra{\tenv_2}, \lneg{(\htra{\nenv_1})}, \lneg{(\htra{\nenv_2})}}{\ap{\htra{\ptm}}{\htra{\ptmtwo}}}{\htra{\typtwo}}
  }
  \]
\item $\rulhName$
  \[
      \indrule{\rulhName}{
    \hjum{\utenv}{\tenv}{\ptm}{\typ}{\nenv}
  }{
    \hjum{\utenv}{\tenv}{\pname{\nvar}{\ptm}}{\bott}{\nenv,\nvar:\typ}
  }
\]

\[
  \indrule{}
  {
    \djum{\htra{\utenv}}{\htra{\tenv},\lneg{(\htra{\nenv})}}{\htra{\ptm}}{\htra{\typ}}
  }
  {
    \djum{\htra{\utenv}}{\htra{\tenv},\lneg{(\htra{\nenv})},\nvar:\lneg{(\htra{\typ})}}{\ibott{\htra{\ptm}}{\nvar}}{\bott}
  }
\]

\item $\rulhMu$
  \[    \indrule{\rulhMu}{
    \hjum{\utenv}{\tenv}{\ptm}{\bott}{\nenv,\nvar:\typ}
  }{
    \hjum{\utenv}{\tenv}{\pmu{\nvar}{\ptm}}{\typ}{\nenv}
    }
\]

\[
  \indrule{}
  {
    \djum{\htra{\utenv}}{\htra{\tenv}, \lneg{(\htra{\nenv})},\nvar:\lneg{(\htra{\typ})}}{\htra{\ptm}}{\bott}
  }
  {
    \djum{\htra{\utenv},\lneg{(\htra{\nenv})}}{\htra{\tenv}}{\htra{\ptm}\cos{\nvar}{\ione}}{\htra{\typ}}
  }
\]

  \end{enumerate}
  
\end{proof}

\begin{lemma}[$\htra{\arg}$ commutes with linear and unrestricted substitution]
  \llem{htra_commutes_with_substitution}
  $\htra{\ptm\sub{\var}{\ptmtwo}}=\htra{\ptm}\sub{\var}{\htra{\ptmtwo}}$ and  $\htra{\ptm\sub{\uvar}{\ptmtwo}}=\htra{\ptm}\sub{\uvar}{\htra{\ptmtwo}}$.
\end{lemma}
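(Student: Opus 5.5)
The plan is a structural induction on the $\muDCLL$-term $\ptm$. The variable clauses are immediate from the definition of $\htra{\arg}$: $\htra{\var}=\var$ and $\htra{\uvar}=\uvar$. Both claims are proved simultaneously, since the unrestricted case of $\iap{\ptm}{\ptmtwo}$ mentions $\ofc{\htra{\ptmtwo}}$, and that term is itself built by the translation. Throughout we assume bound variables are chosen fresh, which we may do since terms are taken up to $\alpha$-renaming. This includes the auxiliary binders $\lvar,\vark$ introduced by the translation in the clauses for $\ilam{\uvar}{\ptm}$ and $\lam{\var}{\ptm}$, and the binder hidden in $\ofc{\tm}\eqdef\iofctwo{\lvar}{(\ibott{\tm}{\lvar})}$. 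They must not occur in $\htra{\ptmtwo}$, and $\var,\uvar$ must not coincide with them.

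Most clauses are compositional: abstractions, both applications, and $\pname{\nvar}{\ptm}$. For these, substitution distributes over the term constructors of $\CalcMELL$. For $\ap{\ptm_1}{\ptm_2}$ in the linear case, the variable $\var$ occurs in exactly one of $\ptm_1,\ptm_2$, and the IH is applied to that one. For the unrestricted case of $\iap{\ptm_1}{\ptm_2}$, the IH is applied to both $\ptm_1$ and $\ptm_2$, and we use $(\ofc{\tm})\sub{\uvar}{\tmtwo}=\ofc{(\tm\sub{\uvar}{\tmtwo})}$. This equation holds because the bound $\lvar$ is fresh. For $\ilam{\uvartwo}{\ptm}$, the eliminator $\eofc{\uvartwo}{\lvar}$ commutes with the substitution since $\uvartwo$ is fresh.

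The only non-compositional clause, and the one I expect to be the main obstacle, is $\htra{(\pmu{\nvar}{\ptm})}=\htra{\ptm}\cos{\nvar}{\ione}$. Here we must show
\[
  (\htra{\ptm}\sub{\var}{\htra{\ptmtwo}})\cos{\nvar}{\ione} = (\htra{\ptm}\cos{\nvar}{\ione})\sub{\var}{\htra{\ptmtwo}}.
\]
This is exactly \rlem{mell_sub_contra}, with $\tm=\htra{\ptm}$, $\lvar=\nvar$, $\tmtwo=\ione$ and $\tmthree=\htra{\ptmtwo}$: item~1 for linear variables, item~2 for unrestricted ones. Since $\ione\sub{\var}{\htra{\ptmtwo}}=\ione$, the right-hand side of that lemma collapses to the desired form. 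The lemma requires linearity of the terms involved and that $\nvar$ be free in $\htra{\ptm}$ and distinct from $\var$. Linearity follows from typability via \rlem{type_preservation_for_hasegawa}, and we restrict the statement to typed terms, which is the setting in which it is used. Freeness follows from the typing of $\pmu{\nvar}{\ptm}$, and distinctness holds because continuation variables live in a separate namespace.

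A subtle point is that $\nvar$ may be free in $\htra{\ptmtwo}$ if names are shared between the two terms. This is ruled out by the disjointness of $\nenv,\nenvtwo$ in the typing rules (or by renaming the bound $\nvar$). We also need $\htra{\ptmtwo}$ not to be captured under binders produced by contra-substitution. That holds because contra-substitution only introduces binders $\lvartwo,\lvarthree$ or $\uvar$ that are already bound in $\htra{\ptm}$ and hence fresh for $\htra{\ptmtwo}$.
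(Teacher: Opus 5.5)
Your proposal is correct and follows the paper's proof: structural induction on $\ptm$, with the compositional clauses immediate and the $\pmu{\nvar}{\ptm}$ clause discharged by \rlem{mell_sub_contra} (item~1 for linear and item~2 for unrestricted substitution, using $\ione\sub{\var}{\htra{\ptmtwo}}=\ione$). Your restriction to typable terms, needed for linearity and for $\nvar$ to be free, together with your freshness and capture remarks, spells out side conditions that the paper leaves implicit.
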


\begin{proof}
By induction on $\ptm$ using~\rlem{mell_sub_contra}.
\end{proof}

\begin{lemma}
  \llem{calcMELL_simulates_beta_of_muDCLL}
  $\htra{(\iap{(\ilam{\var}{\ptm})}{\ptmtwo})} \rttome \htra{(\ptm\sub{\var}{\ptmtwo})}$
\end{lemma}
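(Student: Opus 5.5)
The plan is to compute both sides directly from the definition of the translation, using that $\ilam{\uvar}{\ptm}$ binds an unrestricted variable (the statement writes $\var$, but for $\imp$ it is the unrestricted one, so we work with $\uvar$). The left-hand side unfolds to
\[
\htra{(\iap{(\ilam{\uvar}{\ptm})}{\ptmtwo})} = \ap{(\ipar{\lvar}{\vark}{(\ibott{\htra{\ptm}\eofc{\uvar}{\lvar}}{\vark})})}{\ofc{\htra{\ptmtwo}}},
\]
where $\ofc{\htra{\ptmtwo}}$ abbreviates $\iofctwo{\lvarthree}{(\ibott{\htra{\ptmtwo}}{\lvarthree})}$. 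This term is a $\raxparL$ redex with empty positive eliminator context $\cctx=\ctxhole$, and $\vark\notin\fv{\ofc{\htra{\ptmtwo}}}$ since the latter has no free linear variables.

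The first pre-reduction step, by $\raxparL$, yields
\[
(\ibott{\htra{\ptm}\eofc{\uvar}{\lvar}}{\vark})\sub{\lvar}{\ofc{\htra{\ptmtwo}}}\cos{\vark}{\ione}
= (\ibott{\htra{\ptm}\eofc{\uvar}{\ofc{\htra{\ptmtwo}}}}{\vark})\cos{\vark}{\ione}.
\]
By the clause for $\ibott{}{}$ in Fig.~\ref{contrasubstitution:for:CalcMELL} (case $\vark\in\fv{\tm_2}$) together with $\vark\cos{\vark}{\tmtwo}=\tmtwo$, this equals $\htra{\ptm}\eofc{\uvar}{\ofc{\htra{\ptmtwo}}}\eone{\ione}$, which is $\cceq$-equivalent to $\htra{\ptm}\eofc{\uvar}{\ofc{\htra{\ptmtwo}}}$ by $\ruleCCEqBetaOne$. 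Since reduction $\tome$ is taken modulo $\cceq$, we may continue from this term.

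The second step is a $\raxofc$ step (again with empty $\cctx$), giving
\[
\htra{\ptm}\sub{\uvar}{(\ibott{\htra{\ptmtwo}}{\lvarthree})\cos{\lvarthree}{\ione}} = \htra{\ptm}\sub{\uvar}{\htra{\ptmtwo}\eone{\ione}}.
\]
It remains to reach $\htra{\ptm}\sub{\uvar}{\htra{\ptmtwo}}$. Since $\htra{\ptmtwo}\eone{\ione}\cceq\htra{\ptmtwo}$ by $\ruleCCEqBetaOne$, compatibility of substitution with $\cceq$ (\rlem{sub_compatible_with_equivalence}) gives the result up to $\cceq$, which is absorbed into the last $\tome$ step. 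Finally, \rlem{htra_commutes_with_substitution} identifies $\htra{\ptm}\sub{\uvar}{\htra{\ptmtwo}}$ with $\htra{(\ptm\sub{\uvar}{\ptmtwo})}$. Altogether this gives $\htra{(\iap{(\ilam{\uvar}{\ptm})}{\ptmtwo})}\rttome\htra{(\ptm\sub{\uvar}{\ptmtwo})}$ in two steps.

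The main obstacle is bookkeeping rather than ideas. First, the $\cceq$-rewrites have to be applied where they are legitimate, at the top level or under substitution, and typability of both sides must be maintained; this is given by the type preservation lemma \rlem{type_preservation_for_hasegawa}. Second, one must check the side conditions of the $\raxparL$ rule: freshness of $\vark$ and $\lvarthree$, and that $\ofc{\htra{\ptmtwo}}$ is closed for linear variables. The latter holds because $\ptmtwo$ is typed in the empty linear context.
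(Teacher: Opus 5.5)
Your proof is correct and follows the paper's argument step for step. It performs a $\raxparL$ step, evaluates the contra-substitution to $\htra{\ptm}\eofc{\uvar}{\ofc{\htra{\ptmtwo}}}\eone{\ione}$ and removes the $\eone{\ione}$ by $\ruleCCEqBetaOne$, then performs a $\raxofc$ step and applies \rlem{htra_commutes_with_substitution}. The one difference is that you explicitly discharge the residual $\eone{\ione}$ produced by the $\raxofc$ step, using $\ruleCCEqBetaOne$ and compatibility of substitution with $\cceq$, whereas the paper absorbs it silently into the last $\tome$ step.
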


\begin{proof}
\[\begin{array}{llll }
        & \htra{(\iap{(\ilam{\var}{\ptm})}{\ptmtwo})}  \\
        = &  \ap{\htra{(\ilam{\var}{\ptm})}}{\ofc{\htra{\ptmtwo}}}\\
        = &  \ap{(\ipar{\lvar}{\vark}{(\ibott{\htra{\ptm}\eofc{\var}{\lvar}}{\vark})})}{\ofc{\htra{\ptmtwo}}}\\
        \tome &  (\ibott{\htra{\ptm}\eofc{\var}{\lvar}}{\vark})\sub{\lvar}{\ofc{\htra{\ptmtwo}}}\cos{\vark}{\ione} \\
        = &  (\ibott{\htra{\ptm}\eofc{\var}{\ofc{\htra{\ptmtwo}}}}{\vark})\cos{\vark}{\ione} \\
    \cceq &  \htra{\ptm}\eofc{\var}{\ofc{\htra{\ptmtwo}}}\\
    \tome & \htra{\ptm}\sub{\var}{\htra{\ptmtwo}} \\
    = & \htra{(\ptm\sub{\var}{\ptmtwo})} & \rlem{htra_commutes_with_substitution}
      \end{array}
    \]
\end{proof}

\begin{lemma}
  \llem{calcMELL_simulates_linearBeta_of_muDCLL}
$\htra{(\ap{(\lam{\var}{\ptm})}{\ptmtwo})} \tome \htra{(\ptm\sub{\var}{\ptmtwo})}$
  \end{lemma}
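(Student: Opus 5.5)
We prove the statement by unfolding the translation and performing a single pre-reduction step, mirroring the computation just given in \rlem{calcMELL_simulates_beta_of_muDCLL} for the intuitionistic arrow; the linear case is simpler because no of-course eliminator is involved. By \rdef{hasegawa_translation},
\[
\htra{(\ap{(\lam{\var}{\ptm})}{\ptmtwo})} = \ap{(\ipar{\var}{\vark}{(\ibott{\htra{\ptm}}{\vark})})}{\htra{\ptmtwo}},
\]
which is an instance of the left-hand side of the reduction axiom $\raxparL$ with empty positive elimination context $\cctx=\ctxhole$, provided $\vark\notin\fv{\htra{\ptmtwo}}$. We ensure this by choosing $\vark$ fresh, since terms are considered up to $\alpha$-renaming. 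Firing $\raxparL$ yields
\[
(\ibott{\htra{\ptm}}{\vark})\sub{\var}{\htra{\ptmtwo}}\cos{\vark}{\ione}.
\]

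Next we simplify this contractum. Since $\vark$ is fresh for $\htra{\ptmtwo}$, the substitution commutes with the $\ibott{}{}$ constructor, giving $(\ibott{\htra{\ptm}\sub{\var}{\htra{\ptmtwo}}}{\vark})\cos{\vark}{\ione}$. By the contra-substitution clause for $\ibott{}{}$ with $\vark$ in the right argument (\rdef{cos}, Fig.~\ref{contrasubstitution:for:CalcMELL}), this term equals
\[
\vark\cos{\vark}{\htra{\ptm}\sub{\var}{\htra{\ptmtwo}}}\eone{\ione} = \htra{\ptm}\sub{\var}{\htra{\ptmtwo}}\eone{\ione}.
\]
The axiom $\ruleCCEqBetaOne$ then gives $\cceq\htra{\ptm}\sub{\var}{\htra{\ptmtwo}}$. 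Finally, \rlem{htra_commutes_with_substitution} identifies this with $\htra{(\ptm\sub{\var}{\ptmtwo})}$.

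Together these give $\htra{(\ap{(\lam{\var}{\ptm})}{\ptmtwo})}\pretome r\cceq\htra{(\ptm\sub{\var}{\ptmtwo})}$ for some term $r$. By \rdef{calcMELL_reduction}, reduction is pre-reduction modulo $\cceq$ on both sides, so this is exactly one $\tome$ step, as claimed.

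The main obstacle is the side condition that both sides of reduction axioms and equations be typed. We handle it with \rlem{type_preservation_for_hasegawa}, which types the translated redex, and with subject reduction for pre-reduction (\rlem{subject_reduction}), which types the contractum. We must also confirm that the typing derivation places $\vark:\lneg{(\htra{\typtwo})}$ so that $\ibott{\htra{\ptm}}{\vark}:\bott$, matching the premise of \rulmIPar. Everything else is definitional unfolding.
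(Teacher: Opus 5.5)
Your proposal is correct and follows the paper's proof step for step: you unfold the translation, fire $\raxparL$ with the empty context, push the substitution inside $\ibott{}{}$, and conclude by $\cceq$ and \rlem{htra_commutes_with_substitution}. The only difference is that you spell out the step the paper compresses into a single $\cceq$: you compute $(\ibott{\htra{\ptm}\sub{\var}{\htra{\ptmtwo}}}{\vark})\cos{\vark}{\ione}$ via the $\ibott{}{}$ clause of contra-substitution to get $\htra{\ptm}\sub{\var}{\htra{\ptmtwo}}\eone{\ione}$, and then apply $\ruleCCEqBetaOne$.
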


  \begin{proof}
\[\begin{array}{llll }
        & \htra{(\ap{(\lam{\var}{\ptm})}{\ptmtwo})}  \\
        = &  \ap{\htra{(\lam{\var}{\ptm})}}{\htra{\ptmtwo}}\\
        = &  \ap{(\ipar{\var}{\vark}{(\ibott{\htra{\ptm}}{\vark})})}{\htra{\ptmtwo}}\\
        \tome &  (\ibott{\htra{\ptm}}{\vark})\sub{\var}{\htra{\ptmtwo}}\cos{\vark}{\ione} \\
        = &  (\ibott{\htra{\ptm}\sub{\var}{\htra{\ptmtwo}}}{\vark})\cos{\vark}{\ione} \\
    \cceq &  \htra{\ptm}\sub{\var}{\htra{\ptmtwo}} \\
        = & \htra{(\ptm\sub{\var}{\ptmtwo})} & \rlem{htra_commutes_with_substitution}\\
     
      \end{array}
    \]
  \end{proof}

% \begin{lemma}[{name=Contrasubstitution and Refocusing~\proofnote{Proof on pg.~\ref{contrasubstitution_and_refocusing:proof}}, restate=[name=Contrasubstitution and Refocusing]ContrasubstitutionAndRefocusing}]
\begin{lemma}
\llem{contrasubstitution_and_refocussing}   $\of{\ctx}{\ibott{\tm}{\lvar}}\cos{\lvar}{\tmtwo}=\tm\cctx$, for some $\cctx$ which depends on $\ctx$ and $\tmtwo$
  \end{lemma}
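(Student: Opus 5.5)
The plan is a structural induction on the context $\ctx$, after strengthening the statement so that the argument of the contra-substitution is arbitrary. Concretely, I will prove: for every context $\ctx$ whose hole does not lie under an $\iofctwo{\lvartwo}{\cdot}$ or in the left argument of a $\ewhy{\cdot}{\lvartwo}{\cdot}$, every term $\tm$ with $\lvar\notin\fv{\tm}$, and \emph{every} term $\tmtwo$, we have $\of{\ctx}{\ibott{\tm}{\lvar}}\cos{\lvar}{\tmtwo}=\tm\cctx$ for some positive elimination context $\cctx$ determined by $\ctx$ and $\tmtwo$. The restriction on $\ctx$ costs nothing: linearity of $\of{\ctx}{\ibott{\tm}{\lvar}}$, together with $\lvar$ being free, already excludes those two positions, and these are exactly the "impossible" clauses of Fig.~\ref{contrasubstitution:for:CalcMELL}. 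The quantification over all $\tmtwo$ is essential, because every inductive step replaces $\tmtwo$ by a new, larger term.

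\textbf{Base case} $\ctx=\ctxhole$. Since $\lvar\in\fv{\lvar}$ and not in $\tm$, the $\ibott{}{}$ clause of Fig.~\ref{contrasubstitution:for:CalcMELL} gives
\[(\ibott{\tm}{\lvar})\cos{\lvar}{\tmtwo}=\lvar\cos{\lvar}{\tm}\eone{\tmtwo}=\tm\eone{\tmtwo},\]
so we take $\cctx=\ctxhole\eone{\tmtwo}$.

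\textbf{Inductive step.} Write $\ctx$ as an outermost constructor applied to $\ctx'$ (the sub-context holding the hole) and go through the clauses of \rdef{contrasub:for:MLL} and Fig.~\ref{contrasubstitution:for:CalcMELL}, always in the branch where $\lvar$ lies in the component containing $\ctx'$. Each clause has one of two shapes:
\begin{itemize}
\item the result is $\of{\ctx'}{\ibott{\tm}{\lvar}}\cos{\lvar}{\tmtwo'}$ for a new argument $\tmtwo'$. This covers the pair, application and contra-application clauses, the right-hand branches of the tensor-, $\ofc{}$- and $\ione$-eliminators, and the case where the hole sits on either side of another $\ibott{}{}$;
\item the result is $\of{\ctx'}{\ibott{\tm}{\lvar}}\cos{\lvar}{\tmtwo'}\pelim{\patt}{\tmthree}$, i.e.\ the same followed by a single positive eliminator. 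This covers the left branches of the eliminators, the $\ipar{}{}{}$ and $\iwhy{}{}$ clauses (with $\tmtwo'=\ione$), the right branch of $\ewhy{}{}{}$, and the $\ibott{}{}$ clauses.
\end{itemize}
In the first shape the induction hypothesis applied to $\tmtwo'$ gives $\tm\cctx'$ directly. In the second it gives $\tm\cctx'\pelim{\patt}{\tmthree}$, and we set $\cctx=\cctx'\pelim{\patt}{\tmthree}$, which is again a positive elimination context by the grammar. In both shapes the resulting $\cctx$ depends only on $\ctx$ and $\tmtwo$, as the statement requires.

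The main obstacle is bookkeeping, not any single case. First, the case analysis must be exhaustive over all term constructors, including checking that each impossible branch really is excluded by linearity. Second, there are binder issues. In clauses such as $(\ipar{\lvartwo}{\lvarthree}{\ctx'})\cos{\lvar}{\tmtwo}$ the variables $\lvartwo,\lvarthree$ move into the eliminator $\epair{\lvartwo}{\lvarthree}{\tmtwo}$ and may now bind free variables of $\tm$. This is intended: $\cctx$ is allowed to capture, since $\tm\cctx$ is defined as hole-filling with capture. Even so, I will state explicitly that the equality is syntactic up to $\alpha$-renaming of the binders of $\ctx$ that do not bind variables of $\tm$. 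I will also check that the side condition $\lvar\notin\set{\lvartwo,\lvarthree}$ in the eliminator clauses holds, by choosing bound names fresh for $\lvar$.
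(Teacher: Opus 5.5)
Your proposal is correct and follows the paper's argument, which is a structural induction on $\ctx$. Your version makes explicit what the paper leaves implicit: the generalization over $\tmtwo$, and the fact that $\cctx$ does not depend on $\tm$, which is what the later use in the Hasegawa simulation needs. One small slip: the $\ibott{}{}$ clause always produces $\tm_i\cos{\lvar}{\tm_j}\eone{\tmtwo}$ with $\{i,j\}=\{1,2\}$, so the hole-inside-$\ibott{}{}$ case belongs only to your second shape, not the first; this is harmless, since you also list it there.
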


    %\ContrasubstitutionAndRefocusing*

   \begin{proof}
     By induction on $\ctx$. 
   \end{proof}

\begin{lemma}
  \llem{calcMELL_simulates_mur_of_muDCLL}
  $\htra{(\pap{\ptmtwo}{(\pmu{\nvar}{\ptm})})} \cceq \htra{(\pmu{\nvar}{(\ptm\psubtwo{\nvar}{\ptmtwo})})}$
\end{lemma}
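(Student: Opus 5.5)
The plan is to unfold both translations and reduce the claim to \rlem{contrasubstitution_and_refocussing} together with the $\ruleCCEqCosAndSub$ axiom. On the left we have $\htra{(\pap{\ptmtwo}{(\pmu{\nvar}{\ptm})})} = \ap{\htra{\ptmtwo}}{\htra{\ptm}\cos{\nvar}{\ione}}$. On the right we have $\htra{(\ptm\psubtwo{\nvar}{\ptmtwo})}\cos{\nvar}{\ione}$. Write the unique occurrence of $\nvar$ in $\ptm$ as $\pname{\nvar}{\ptmthree}$ inside a $\muDCLL$ context, so $\ptm = \of{\ctx_0}{\pname{\nvar}{\ptmthree}}$. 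Then $\htra{\ptm} = \of{\ctx}{\ibott{\htra{\ptmthree}}{\nvar}}$ for the translated context $\ctx$. Linearity of the name $\nvar$ (it is linear in $\muDCLL$) guarantees that $\ctx$ does not mention $\nvar$.

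The first step is to compute the right-hand side. Since $\ptm\psubtwo{\nvar}{\ptmtwo}$ replaces $\pname{\nvar}{\ptmthree}$ with $\pname{\nvar}{(\ap{\ptmtwo}{\ptmthree})}$, and the translation is compositional, we get $\htra{(\ptm\psubtwo{\nvar}{\ptmtwo})} = \of{\ctx}{\ibott{(\ap{\htra{\ptmtwo}}{\htra{\ptmthree}})}{\nvar}}$. Applying \rlem{contrasubstitution_and_refocussing} twice, with the same $\ctx$ and $\ione$ and hence the same $\cctx$, gives two equations:
\[
\htra{\ptm}\cos{\nvar}{\ione} = \htra{\ptmthree}\cctx
\quad\text{and}\quad
\htra{(\ptm\psubtwo{\nvar}{\ptmtwo})}\cos{\nvar}{\ione} = (\ap{\htra{\ptmtwo}}{\htra{\ptmthree}})\cctx.
\]
To make this step rigorous, I will strengthen the statement of \rlem{contrasubstitution_and_refocussing} so that $\cctx$ depends only on $\ctx$ and the contra-substituted term, not on the term $\tm$ placed at the focus. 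This follows from its own induction, because the contra-substitution clauses never inspect $\tm$ until the final clause $\lvar\cos{\lvar}{\cdot}$ and the $\ibott{}{}$ clause, which yields $\tm\eone{\cdot}$.

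The goal therefore reduces to $\ap{\htra{\ptmtwo}}{(\htra{\ptmthree}\cctx)} \cceq (\ap{\htra{\ptmtwo}}{\htra{\ptmthree}})\cctx$. This follows by induction on the length of $\cctx$, using $\ruleCCEqCtx$ with the surface context $\ap{\htra{\ptmtwo}}{\Box}$, or more directly \rlem{ctx_propagation}(\ref{ctx_propagation:el_ctx})-style floating. For this we need the variables bound by the patterns of $\cctx$ not to occur free in $\htra{\ptmtwo}$, which is arranged by $\alpha$-renaming. We also need the eliminated terms in $\cctx$ to be free for $\ap{\htra{\ptmtwo}}{\Box}$, which is immediate since that surface context binds nothing.

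I expect the main obstacle to be the refocusing step. The concern is that the context $\cctx$ produced by contra-substitution may contain $\eone{\cdot}$ eliminators whose bodies capture parts of $\ctx$. When the focus changes from $\htra{\ptmthree}$ to $\ap{\htra{\ptmtwo}}{\htra{\ptmthree}}$, these bodies must be unaffected, and this needs the strengthened form of \rlem{contrasubstitution_and_refocussing}. I would verify it by going through the clauses of \rdef{cos} along the path to $\nvar$. I would also check typing, so that $\ruleCCEqCtx$ applies to typed terms, using \rlem{type_preservation_for_hasegawa} on both sides.
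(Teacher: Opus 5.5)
Your outline is the paper's own argument: unfold, refocus with \rlem{contrasubstitution_and_refocussing}, float $\cctx$ out of $\ap{\htra{\ptmtwo}}{\Box}$ by $\ruleCCEqCtx$, then refocus back. The floating step is fine. The gap is the step you share with the paper's written proof, namely the claim that $\ptm=\of{\ctx_0}{\pname{\nvar}{\ptmthree}}$ yields, by compositionality, both $\htra{\ptm}=\of{\ctx}{\ibott{\htra{\ptmthree}}{\nvar}}$ and $\htra{(\ptm\psubtwo{\nvar}{\ptmtwo})}=\of{\ctx}{\ibott{(\ap{\htra{\ptmtwo}}{\htra{\ptmthree}})}{\nvar}}$. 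The clause $\htra{(\pmu{\nvartwo}{\ptm'})}=\htra{\ptm'}\cos{\nvartwo}{\ione}$ is not compositional. Suppose a binder $\mu\nvartwo$ lies on the path to $\pname{\nvar}{\ptmthree}$ and $\nvartwo$ occurs in $\ptmthree$; linearity of names forces this whenever, e.g., $\pmu{\nvartwo}{\pname{\nvar}{\ptmthree}}$ is a subterm. Then the contra-substitution walks through $\ibott{\htra{\ptmthree}}{\nvar}$ into $\htra{\ptmthree}$ and takes it apart. Concretely, take $\ptm=\pname{\nvartwo'}{(\pmu{\nvartwo}{\pname{\nvar}{\pname{\nvartwo}{\var}}})}$. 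Then $\htra{(\pmu{\nvartwo}{\pname{\nvar}{\pname{\nvartwo}{\var}}})}=(\ibott{\var}{\nvartwo})\cos{\nvartwo}{\nvar}\eone{\ione}=\var\eone{\nvar}\eone{\ione}$, hence $\htra{\ptm}=\ibott{(\var\eone{\nvar}\eone{\ione})}{\nvartwo'}$. The name $\nvar$ now occurs only as $\eone{\nvar}$, and there is no subterm of the form $\ibott{T}{\nvar}$. So the refocusing lemma, strengthened or not, has nothing to apply to. The obstacle you anticipated, about the bodies of $\eone{\cdot}$ in $\cctx$, is harmless; this one is not.

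A repair that handles $\mu$-binders uniformly is to replace refocusing by substitution, which does commute with contra-substitution. First show by induction on $\ptm$ that, for a fresh $\lvar$, $\htra{(\ptm\psubtwo{\nvar}{\ptmtwo})}\sub{\nvar}{\lvar}\cceq\htra{\ptm}\sub{\nvar}{\invap{\htra{\ptmtwo}}{\lvar}}$:
\begin{itemize}
\item the base case $\pname{\nvar}{\ptmthree}$ is the derived equation $\ibott{(\ap{\htra{\ptmtwo}}{\htra{\ptmthree}})}{\lvar}\cceq\ibott{\htra{\ptmthree}}{(\invap{\htra{\ptmtwo}}{\lvar})}$;
\item the $\mu$ case uses \rlem{mell_sub_contra} together with \rlem{cos_compatible_with_equivalence};
\item the remaining clauses are homomorphic.
\end{itemize}
Then apply $\cos{\lvar}{\ione}$ to both sides (\rlem{cos_compatible_with_equivalence} again). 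Finish with \rlem{nested_contra_contra}, taking $\tmtwo:=\htra{\ptm}$, $\lvartwo:=\nvar$, $\tm:=\invap{\htra{\ptmtwo}}{\lvar}$, $\tmthree:=\ione$:
\[
\htra{(\pmu{\nvar}{(\ptm\psubtwo{\nvar}{\ptmtwo})})}\cceq\htra{\ptm}\sub{\nvar}{\invap{\htra{\ptmtwo}}{\lvar}}\cos{\lvar}{\ione}\cceq(\invap{\htra{\ptmtwo}}{\lvar})\cos{\lvar}{\htra{\ptm}\cos{\nvar}{\ione}}=\ap{\htra{\ptmtwo}}{(\htra{\ptm}\cos{\nvar}{\ione})}.
\]
The last term is $\htra{(\pap{\ptmtwo}{(\pmu{\nvar}{\ptm})})}$. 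The $\cceq$ axioms require typed terms; typability of all intermediate terms follows from \rlem{type_preservation_for_hasegawa} and \rlem{contrasubstitution_lemma}.
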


  \begin{proof}

    Since $\ptm$ has a unique occurrence of a subterm of the form $\pname{\nvar}{\ptmthree}$, it must have the form $\of{\ctx}{\pname{\nvar}{\ptmthree}}$. Thus $\htra{\ptm}=\of{\htra{\ctx}}{\ibott{\htra{\ptmthree}}{\nvar}}$.
    \[\begin{array}{llll }
        & \htra{(\pap{\ptmtwo}{(\pmu{\nvar}{\ptm})})}  \\
        = &  \ap{\htra{\ptmtwo}}{\htra{(\pmu{\nvar}{\ptm})}}\\
        = &  \ap{\htra{\ptmtwo}}{\htra{\ptm}\cos{\nvar}{\ione}}\\
        = &  \ap{\htra{\ptmtwo}}{\of{\htra{\ctx}}{\ibott{\htra{\ptmthree}}{\nvar}}\cos{\nvar}{\ione}}\\
        = &  \ap{\htra{\ptmtwo}}{\htra{\ptmthree}\cctx}  & \rlem{contrasubstitution_and_refocussing}\\
        \cceq &  (\ap{\htra{\ptmtwo}}{\htra{\ptmthree}}) \cctx  & \rlem{elim_cos_cceq}\\
        = & (\of{\htra{\ctx}}{\ibott{(\ap{\htra{\ptmtwo}}{\htra{\ptmthree}})}{\nvar}})\cos{\nvar}{\ione} & \rlem{contrasubstitution_and_refocussing}\\
        = & \htra{(\ptm\psubtwo{\nvar}{\ptmtwo})}\cos{\nvar}{\ione} & \\
        = & \htra{(\pmu{\nvar}{(\ptm\psubtwo{\nvar}{\ptmtwo})})}
      \end{array}
    \]
  \end{proof}

  \begin{lemma}
    \llem{calcMELL_simulates_theta_of_muDCLL}
    $\htra{(\pmu{\nvar}{\pname{\nvar}{\ptm}})} \cceq \htra{\ptm}$, if $\nvar\notin\fv{\ptm}$
  \end{lemma}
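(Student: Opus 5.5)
The plan is to compute the translation directly, using only the defining clauses of $\htra{\arg}$ (\rdef{hasegawa_translation}) and of contra-substitution (\rdef{cos} together with Fig.~\ref{contrasubstitution:for:CalcMELL}). Then I would close the remaining gap with the single axiom $\ruleCCEqBetaOne$ of structural equivalence.

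\textbf{Free variables.} First I would observe that $\nvar\notin\fv{\ptm}$ implies $\nvar\notin\fv{\htra{\ptm}}$. This is a routine induction on $\ptm$: each clause of $\htra{\arg}$ introduces only fresh bound variables ($\lvar,\vark$). The $\mu$ clause uses contra-substitution, which does not introduce free variables beyond those of its arguments (the free variables of $\tm\cos{\lvar}{\tmtwo}$ are $\fv{\tm}\setminus\set{\lvar}\cup\fv{\tmtwo}$). Consequently, in $\htra{(\pname{\nvar}{\ptm})}=\ibott{\htra{\ptm}}{\nvar}$ the unique occurrence of $\nvar$ lies in the right component.

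\textbf{Computing the contra-substitution.} Unfolding the translation gives
$\htra{(\pmu{\nvar}{\pname{\nvar}{\ptm}})} = (\ibott{\htra{\ptm}}{\nvar})\cos{\nvar}{\ione}$.
Because $\nvar$ occurs in the second component of $\ibott{}{}$, the second $\ibott{}{}$-clause of Fig.~\ref{contrasubstitution:for:CalcMELL} applies and yields $\nvar\cos{\nvar}{\htra{\ptm}}\eone{\ione}$. By the variable clause of \rdef{contrasub:for:MLL}, this equals $\htra{\ptm}\eone{\ione}$. Finally, $\ruleCCEqBetaOne$ gives $\htra{\ptm}\eone{\ione}\cceq\htra{\ptm}$.

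\textbf{Typing side conditions.} The only point needing care is that the equations of $\cceq$ are stated for typed terms. By \rlem{type_preservation_for_hasegawa}, $\htra{\ptm}$ is typable. Moreover, $\htra{\ptm}\eone{\ione}$ is typable with the same type in the same context via $\rulmEOne$ and $\rulmIOne$, so $\ruleCCEqBetaOne$ applies legitimately. I expect no real obstacle here. The only step requiring genuine checking is the free-variable lemma, which selects the correct case of the contra-substitution clause for $\ibott{}{}$.
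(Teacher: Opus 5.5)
Your proposal is correct and matches the paper's proof. Both unfold $\htra{(\pmu{\nvar}{\pname{\nvar}{\ptm}})}$ to $(\ibott{\htra{\ptm}}{\nvar})\cos{\nvar}{\ione}$, apply the $\ibott{}{}$ clause and then the variable clause of contra-substitution to reach $\htra{\ptm}\eone{\ione}$, and conclude with $\ruleCCEqBetaOne$; your free-variable check, which selects the second $\ibott{}{}$ clause, and your typing side condition are details the paper leaves implicit.
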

  \begin{proof}

    \[\begin{array}{lll}
        & \htra{(\pmu{\nvar}{\pname{\nvar}{\ptm}})}  \\
        = & \htra{(\pname{\nvar}{\ptm})}\cos{\nvar}{\ione} \\
        = & (\ibott{\htra{\ptm}}{\nvar})\cos{\nvar}{\ione} \\
        = & \nvar\cos{\nvar}{\htra{\ptm}\eone{\ione}} \\
        = & \htra{\ptm}\eone{\ione} \\
        \cceq & \htra{\ptm} \\
      \end{array}
    \]

  \end{proof}

  % \begin{proposition}[{name=Simulation~\proofnote{Proof on pg.~\pageref{calcMELL_simulates_muDCLL:proof}}, restate=[name=Simulation]CalcMELLSimulatesMuDCLL}]
\begin{proposition}
  \lprop{calcMELL_simulates_muDCLL}
  If $\ptm \heq \ptmtwo$, then $\qtrat{\ptm}{\vark} \eqme \qtrat{\ptmtwo}{\vark}$. 
\end{proposition}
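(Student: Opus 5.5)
The proof will be by induction on the derivation of $\ptm \heq \ptmtwo$. By Definition~\ref{def:axioms_for_muDCLL}, this relation is the contextual closure of four axioms. We read the $\qtrat{\cdot}{\vark}$ in the statement as the translation $\htra{\cdot}$ of Definition~\ref{def:hasegawa_translation}. The target relation $\eqme$ is the reflexive, symmetric, transitive closure of $\tome$. Since $\cceq \subseteq \eqme$ (take zero pre-reduction steps, or note $\tm\cceq\tmtwo$ gives $\tm\tome\tmtwo'$ only after a step), we read $\cceq$ as contributing to $\eqme$ via the definition of $\tome$ as pre-reduction modulo $\cceq$. Concretely, terms related by $\cceq$ are identified in $\eqme$ whenever one of them is part of a $\tome$ step. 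To avoid relying on that subtlety, we can instead interpret $\eqme$ as the equivalence generated by $\tome\cup\cceq$. This is the reading needed for the $\mu$-$R$ and $\theta$ cases.

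\textbf{Axiom cases.} Each of the four axioms is already handled by an earlier lemma.
\begin{itemize}
\item[] $lin\beta$ follows from \rlem{calcMELL_simulates_beta_of_muDCLL}, which gives $\htra{\ptm}\rttome\htra{\ptmtwo}$ and hence $\eqme$.
\item[] $\beta$ follows from \rlem{calcMELL_simulates_linearBeta_of_muDCLL}.
\item[] $\mu$-$R$ follows from \rlem{calcMELL_simulates_mur_of_muDCLL}, which yields $\cceq$.
\item[] $\theta$ follows from \rlem{calcMELL_simulates_theta_of_muDCLL}, which also yields $\cceq$.
\end{itemize}
The symmetric and transitive closure steps are immediate, since $\eqme$ is an equivalence.

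\textbf{Contextual closure (main obstacle).} We must show that if $\htra{\ptm}\eqme\htra{\ptmtwo}$, then $\htra{\of{\ctx}{\ptm}}\eqme\htra{\of{\ctx}{\ptmtwo}}$ for every single-constructor $\muDCLL$ context. The translation is compositional: $\htra{\of{\ctx}{\ptm}}$ is a $\CalcMELL$ context applied to $\htra{\ptm}$. The exceptions are the binder $\pmu{\nvar}{\cdot}$, which translates to $\htra{\ptm}\cos{\nvar}{\ione}$, and $\ilam{\uvar}{\cdot}$, which wraps its body in $\eofc{\uvar}{\lvar}$.

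For the compositional constructors, we use that $\tome$ is closed under contexts, which holds since pre-reduction is a contextual closure and $\cceq$ is contextual. For the $\pmu$ case, we use \rlem{cosubst_compat_with_reduction}(\ref{cosubst_compat_with_reduction:left}): if $\tm\pretome\tmtwo$, then $\tm\cos{\nvar}{\ione}\tome\tmtwo\cos{\nvar}{\ione}$. For $\cceq$ steps we use the second part of \rlem{cos_compatible_with_equivalence}. Both apply on typable terms, and typability is guaranteed by Lemma~\ref{lemma:type_preservation_for_hasegawa} together with subject reduction.

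The delicate point is that a $\tome$ step is $\tm\cceq\tm'\pretome\tmtwo'\cceq\tmtwo$, so we must push $\cos{\nvar}{\ione}$ through each of the three components separately. This is where we will spend the care: we must check that the intermediate terms still contain $\nvar$ free, which follows because $\cceq$ preserves free variables (\rlem{equivalent_terms_have_same_free_variables}) and reduction does too by linearity. Finally, each chain of $\tome$ and $\cceq$ steps is lifted step by step to obtain the $\eqme$ conclusion.
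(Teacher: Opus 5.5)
Your proposal is correct and follows the paper's own argument: an induction on $\ptm\heq\ptmtwo$ whose axiom cases are discharged exactly by \rlem{calcMELL_simulates_beta_of_muDCLL}, \rlem{calcMELL_simulates_linearBeta_of_muDCLL}, \rlem{calcMELL_simulates_mur_of_muDCLL} and \rlem{calcMELL_simulates_theta_of_muDCLL}, with your treatment of the contextual closure (notably the $\pmu{\nvar}{\cdot}$ case via \rlem{cosubst_compat_with_reduction} and \rlem{cos_compatible_with_equivalence}) spelling out what the paper leaves implicit. Incidentally, you do not need to reinterpret $\eqme$ for the $\mu$-$R$ and $\theta$ cases: for fresh $\uvar$ one has $\tm\eofc{\uvar}{\iofctwo{\lvar}{(\ibott{\ione}{\lvar})}}\pretome\tm$, so any two $\cceq$-related typed terms share a $\tome$-predecessor and are already related by $\eqme$ as literally defined.
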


%\CalcMELLSimulatesMuDCLL*

\begin{proof}%\label{calcMELL_simulates_muDCLL:proof}
  By induction on $\ptm \heq \ptmtwo$, using~\rlem{calcMELL_simulates_mur_of_muDCLL},~\rlem{calcMELL_simulates_linearBeta_of_muDCLL},~\rlem{calcMELL_simulates_mur_of_muDCLL}, and~\rlem{calcMELL_simulates_theta_of_muDCLL}. In summary,
  \begin{enumerate}
\item  $\htra{(\iap{(\ilam{\uvar}{\ptm})}{\ptmtwo})} \rttome \htra{(\ptm\sub{\uvar}{\ptmtwo})}$.
\item   $\htra{(\ap{(\lam{\var}{\ptm})}{\ptmtwo})} \tome \htra{(\ptm\sub{\var}{\ptmtwo})}$.
 \item     $\htra{(\pap{\ptmtwo}{(\pmu{\nvar}{\ptm})})} \cceq \htra{(\pmu{\nvar}{(\ptm\psubtwo{\nvar}{\ptmtwo})})}$.
\item  $\htra{(\pmu{\nvar}{\pname{\nvar}{\ptm}})} \cceq \htra{\ptm}$, if $\nvar\notin\fv{\ptm}$.
  \end{enumerate}
\end{proof}

%%% Local Variables:
%%% mode: latex
%%% TeX-master: "../main"
%%% End:

%
%\section{Intuitionistic Fragment} 
%\input{appendix/11a-intuitionistic-fragment}

\bibliography{main}

\end{document}

%%% Local Variables:
%%% mode: latex
%%% TeX-master: t
%%% End: